\newcommand{\lAngle}{\langle\!\langle}
\newcommand{\rAngle}{\rangle\!\rangle}
\title{A Diagrammatic Algebra for Program Logics}
\author[1]{Filippo Bonchi}
\author[2]{Alessandro Di Giorgio}
\author[1]{Elena Di Lavore}
\affil[1]{\small University of Pisa, Italy}
\affil[2]{\small University College London, United Kingdom}
\date{}
\begin{document}

\maketitle

\begin{abstract}
        Tape diagrams provide a convenient  notation for arrows of rig categories, i.e., categories equipped with two monoidal products, $\piu$ and $\per$, where $\per$ distributes over $\piu$.
        In this work, we extend  tape diagrams with traces over $\piu$ in order to deal with iteration in imperative programming languages.
        More precisely, we introduce Kleene-Cartesian bicategories, namely rig categories where the monoidal structure provided by $\per$ is a cartesian bicategory, while  the one provided by $\piu$ is  what we name a Kleene bicategory.
        We show that the associated language of tape diagrams is expressive enough to deal with imperative programs and the corresponding laws provide a proof system that is at least as powerful as the one of Hoare logic.        
\end{abstract}

\section{Introduction}\label{sec:intro}
In recent years, there has been a growing interest in using monoidal categories to model various types of systems \cite{coecke2011interacting,Backens-ZXcompleteness1,Fritz_stochasticmatrices,bruni2011connector,BaezCoya-propsnetworktheory,bonchi2015full,DBLP:journals/pacmpl/BonchiHPSZ19,BonchiPSZ19,Piedeleu2021}. However, rig categories \cite{laplaza_coherence_1972} --categories equipped with two monoidal products, $\piu$ and $\per$, where $\per$ distributes over $\piu$-- have been far less studied.

In this paper, we propose using rig categories as a foundation for programming languages, particularly for imperative programs and their associated program logics \cite{hoare1969axiomatic,kozen00hoarekleene,DBLP:conf/vmcai/CousotCFL13,o2019incorrectness,DBLP:journals/corr/abs-2310-18156}. The key insight is that $\per$ provides the necessary structure for \emph{data flow}, while $\piu$ is suited for \emph{control flow}.

This observation has been recognised at least since \cite{bainbridge1976feedback}, but the idea of capturing the interaction between data and control flow through the laws of rig categories has not been widely explored. This is likely because rig categories do not offer a straightforward framework as monoidal categories: coherence and strictification are far more complex \cite{laplaza_coherence_1972,johnson2021bimonoidal} and, unlike monoidal categories, which benefit from \emph{string diagrams} that completely embody their laws \cite{joyal1991geometry}, analogous diagrammatic notation for rig categories have been proposed only recently \cite{comfort2020sheet,bonchi2023deconstructing}.

In this paper, we adopt the diagrammatic notation introduced in \cite{bonchi2023deconstructing}: \emph{tape diagrams}. Unlike sheet diagrams \cite{comfort2020sheet}, which use three dimensions to represent the three compositions ($\piu$, $\per$, and $;$), tape diagrams are drawn in two dimensions, making them more intuitive and easier to visualise. This notation captures the laws of rig categories when $\piu$ represents a product, coproduct, or both, i.e., a \emph{biproduct}. Specifically, when $\piu$ is a biproduct, tape diagrams offer a universal language, meaning that the category of tape diagrams is the one freely generated from an \emph{arbitrary} rig signature.

\medskip

Our first contribution is the extension of tape diagrams with \emph{traces} \cite{Joyal_tracedcategories} over the monoidal product $\piu$. Such traces are essential for modelling \emph{iteration} in imperative programming languages.

Of particular interest is the fact that, to achieve this result, the trace must be assumed to be \emph{uniform} \cite{cuazuanescu1994feedback,hasegawa2003uniformity}, a property that we need for technical reasons but, as we will show in the rest of the paper, it is crucial for recovering the complete axiomatisation of Kleene algebras \cite{Kozen94acompleteness}, proving the induction law of Peano's axiomatisation of natural numbers and support proofs by invariants on imperative programs.

Once we have developed a comfortable diagrammatic notation, we move toward the modelling of imperative programming languages and their logics. Inspired by an early work on program logics \cite{pratt1976semantical} that exploits the calculus of relations \cite{tarski1941calculus}, we 
fund our approach on the rig category of sets and relations $\Rel$, where $\per$ is the cartesian product of sets and $\piu$ their disjoint union.  In other words, our effort can be described as identifying the categorical structure of $\Rel$ that is sufficient for dealing with program logics. Such structure can be succinctly described as a rig category where the monoidal category of $\per$ is a \emph{cartesian bicategory} \cite{Carboni1987}, while the monoidal structure of $\piu$ is what we name a \emph{Kleene bicategory}.

\medskip

In a nutshell, a Kleene bicategory is a poset-enriched traced monoidal category where the monoidal product $\piu$ is a biproduct, and the induced \cite{fox1976coalgebras} natural monoid is \emph{left adjoint} to the natural comonoid. As expected, the trace must be uniform, but uniformity now has to be strengthened to take care of the poset-enrichment. The name ``Kleene'' is justified by the fact that every Kleene bicategory is a (typed) Kleene algebra in the sense of Kozen \cite{Kozen94acompleteness,kozen98typedkleene}, while any Kleene algebra gives rise, through the matrix construction (also known as biproduct completion \cite{mclane}), to a Kleene bicategory. While uniform traces over biproduct categories have been widely studied (see e.g. \cite{cuazuanescu1994feedback}), to the best of our knowledge, the adjointness condition of (co)monoids in this contexts is novel, as well as the correspondence with Kozen's axiomatisation \cite{Kozen94acompleteness}. This is the second contribution of our work.

\medskip

We specialise tape diagrams with traces to Kleene-Cartesian rig categories ($\piu$ forms a Kleene bicategory, while $\per$ a cartesian bicategory) and we introduce the notion of \emph{Kleene-Cartesian theory}, shortly, a signature and a set of axioms amongst Kleene-Cartesian tapes. Analogously to Lawvere's functorial semantics \cite{LawvereOriginalPaper}, \emph{models} are morphisms from the corresponding category of tape diagrams to an arbitrary Kleene-Cartesian rig category. As an example, we illustrate the Kleene-Cartesian theory of Peano's natural numbers: all models in $\Rel$ are isomorphic to the one of natural numbers and, the associated tape diagrams are expressive enough to deal with Turing equivalent imperative programs. We conclude by illustrating how a simple imperative programming language can be encoded within tapes and that the laws of tapes provide a proof system that is at least as powerful as the rules of Hoare logic. This is our last contribution.

\paragraph{Structure of the paper.} We commence in Section \ref{sec:monoidal} by recalling monoidal categories and string diagrams; moreover, we show that $\Rel$ carries two monoidal categories, one  is a finite biproduct category (Definition \ref{def:biproduct category}) and one is a cartesian bicategory (Definition \ref{def:cartesian bicategory}). In Section \ref{sec:rigcategories}, we recall rig categories from \cite{bonchi2023deconstructing}, finite biproduct rig categories and the associated language of tape diagrams. In Section \ref{sec:traced}, we recall traced monoidal categories and the notion of uniform trace. We show that, from a monoidal category, one can always freely generate a uniformly traced one (Theorem \ref{th:free-uniform-trace}) and that such construction restricts to finite biproduct rig categories (Proposition \ref{prop:free-uniform-trace-fb-rig}). The latter result is crucial, in Section \ref{sec:tapes}, to extend tape diagrams with uniform traces and to prove that their category is a freely generated one (Theorem \ref{thm:freeut-fb}). Our second key contribution --Klenee bicategories-- is illustrated in  Section \ref{sec:kleene}: we show that any Kleene bicategory is a typed Kleene algebra (Corollary \ref{cor:kleeneareka}) and that, from a typed Kleene algebra, one can freely generate a Kleene bicategory by means of the finite biproduct construction (Corollary \ref{cor:adjKleene}). In Section \ref{sec:kleene-tapes}, we introduce Kleene tapes for rig categories, where $\piu$ carries the structure of a Kleene bicategory. In Section \ref{sec:cb}, we introduce Kleene-Cartesian rig categories (Definition \ref{def:kcrig}) and the corresponding tape diagrams, and we prove that they form a freely generated Kleene-Cartesian rig category (Theorem \ref{theorem:KTCB is kleene-cartesian}). We also introduce the notion of Kleene-Cartesian theory and we show that models are in one to one correspondence with functors (Proposition \ref{funct:sem}).
As an example, we introduce the Kleene-Cartesian theory of Peano: we show that three simple axioms amongst tapes are equivalent to Peano's axiomatisation of natural numbers (Theorem \ref{thm:induction} and Lemma \ref{lemma:Peano}). Finally, in Section \ref{sec:hoare} we show how to encode imperative programs into tape diagrams and that any Hoare triple provable through the usual proof system of Hoare logic is also provable by means of the laws of Kleene-Cartesian bicategories (Proposition \ref{prop:Hoare}). The appendices contain the missing proofs and some coherence conditions for the various encountered structures.

\paragraph{Acknowledgement} The authors would like to acknowledge Alessio Santamaria, Chad Nester and the students of the ACT school 2022 for several useful discussions at early stage of this project. Gheorghe Stefanescu and Dexter Kozen provided some wise feedback and offered some guidance through the rather wide literature.

\section{Monoidal Categories and String Diagrams}\label{sec:monoidal}
We begin our exposition by regarding string diagrams \cite{joyal1991geometry,selinger2010survey} as terms of a typed language. Given a set $\sort$ of basic \emph{sorts}, hereafter denoted by $A,B\dots$, types are elements of $\sort^\star$, i.e.\ words over $\sort$. Terms are defined by the following context free grammar
\begin{equation}\label{eq:syntaxsymmetricstrict}
\begin{array}{rcl}
f & ::=& \; \id{A} \; \mid \; \id{I} \; \mid \; \gen \; \mid \; \sigma_{A,B}^{\perG} \; \mid \;   f ; f   \; \mid \;  f \perG f \\
\end{array}
\end{equation}  
where $s$ belongs to a fixed set $\sign$ of \emph{generators} and $I$ is the empty word. Each $s\in \sign$ comes with two types: arity $\ar(s)$ and coarity $\coar(s)$. The tuple $(\sort, \sign, \ar, \coar)$, $\sign$ for short, forms a \emph{monoidal signature}. Amongst the terms generated by  \eqref{eq:syntaxsymmetricstrict}, we consider only those that can be typed according to the inference rules in Table \ref{fig:freestricmmoncatax}. String diagrams are such terms modulo the axioms in Table \ref{fig:freestricmmoncatax} where,  for any $X,Y\in \sort^\star$,  $\id{X}$ and $\sigma_{X,Y}^{\perG}$ can be easily built using $\id I$, $\id{A}$, $\sigma_{A,B}^{\perG}$, $\perG$ and $;$ (see e.g.~\cite{ZanasiThesis}).
\begin{table}[t]
\scriptsize{
\begin{center}
\begin{tabular}{c  c}
\begin{tabular}{c}
    \toprule
Objects ($A\in \sort$)\\
\midrule
$X \; ::=\; \; A \; \mid \; \unoG \; \mid \;  X \perG X \vphantom{\sigma_{A,B}^{\perG}}$\\
\midrule
\makecell{
    \\[-2pt] $(X\perG Y)\perG Z=X \perG (Y \perG Z)$ \\ $X \perG \unoG = X $ \\ $\unoG \perG X = X$ \\[2pt]
} \\[13pt]
\bottomrule
\end{tabular}
&
\begin{tabular}{cc}
\toprule
\multicolumn{2}{c}{Arrows ($A\in \sort$, $s\in \sign$)} \\
\midrule
\multicolumn{2}{c}{$f \; ::=\; \; \id{A} \; \mid \; \id{\unoG} \; \mid \; \gen  \; \mid \;   f ; f   \; \mid \;  f \perG f \; \mid \; \sigma_{A,B}^{\perG}$} \\
\midrule
$(f;g);h=f;(g;h)$ & $id_X;f=f=f;id_Y$\\

\multicolumn{2}{c}{$(f_1\perG f_2) ; (g_1 \perG g_2) = (f_1;g_1) \perG (f_2;g_2)$} \\

$id_{\unoG}\perG f = f = f \perG id_{\unoG}$ & $(f \perG g)\, \perG h = f \perG \,(g \perG h)$ \\

$\sigma_{A, B}^{\perG}; \sigma_{B, A}^{\perG}= id_{A \perG B}$ & $(\gen \perG id_Z) ; \sigma_{Y, Z}^{\perG} = \sigma_{X,Z}^{\perG} ; (id_Z \perG \gen)$ \\
\bottomrule
\end{tabular}
\end{tabular}

\vspace{1em}

\begin{tabular}{c}
    \toprule
    Typing rules \\
    \midrule
    $
    {id_A \colon A \!\to\! A} \qquad  {id_\unoG \colon \unoG \!\to\! \unoG} \qquad {\sigma_{A, B}^{\perG} \colon A \perG B \!\to\! B \perG A} \qquad 
        \inferrule{\gen \colon \ar(s) \!\to\! \coar(s) \in \sign}{\gen \colon \ar(s) \!\to\! \coar(s)} \qquad
        \inferrule{f \colon X_1 \!\to\! Y_1 \and g \colon X_2 \!\to\! Y_2}{f \perG g \colon X_1 \perG X_2 \!\to\! Y_1 \perG Y_2}  \qquad
        \inferrule{f \colon X \!\to\! Y \and g \colon Y \!\to\! Z}{f ; g \colon X \!\to\! Z}
    $\\    
\bottomrule  
\end{tabular}
\end{center}
 \caption{Axioms for $\CatString$}
    \label{fig:freestricmmoncatax}
    }
\end{table}

String diagrams enjoy an elegant graphical representation: a generator $\gen$ in $\sign$ with arity $X$ and coarity $Y$ is depicted as  a \emph{box} having \emph{labelled wires} on the left and on the right representing, respectively, the words $X$ and $Y$. For instance $\gen \colon AB \to C$ in $\sign$ is depicted as the leftmost diagram below. Moreover, $\id{A}$ is displayed as one wire,  $id_{\unoG} $ as the empty diagram and $\sigma_{A,B}^{\perG}$ as a crossing:
\[
    \InputIfFileExists{generator.tikz}{}{\input{./tikz/generator.tikz}}
 \qquad \qquad 
    \InputIfFileExists{id.tikz}{}{\input{./tikz/id.tikz}}
 \qquad  \qquad     
    \InputIfFileExists{empty.tikz}{}{\input{./tikz/empty.tikz}}
 \qquad  \qquad   
    \InputIfFileExists{symm.tikz}{}{\input{./tikz/symm.tikz}}
\]
Finally, composition $f;g$ is represented by connecting the right wires 
of $f$ with the left wires of $g$ when their labels match, 
while the monoidal product $f \perG g$ is depicted by stacking the corresponding 
diagrams on top of each other: \[
    \InputIfFileExists{seq_comp.tikz}{}{\input{./tikz/seq_comp.tikz}}
 \qquad \qquad \qquad  
    \InputIfFileExists{par_comp.tikz}{}{\input{./tikz/par_comp.tikz}}
 \]
The first three rows of axioms for arrows in Table~\ref{fig:freestricmmoncatax}
are implicit in the 
graphical representation while the axioms in the last row  are displayed as 
\[ 
    \InputIfFileExists{stringdiag_ax1_left.tikz}{}{\input{./tikz/stringdiag_ax1_left.tikz}}
 = 
    \begin{tikzpicture}
	\begin{pgfonlayer}{nodelayer}
		\node [style=label] (8) at (1.5, 0.5) {$A$};
		\node [style=label] (11) at (-1.5, 0.5) {$A$};
		\node [style=label] (12) at (1.5, -0.5) {$B$};
		\node [style=label] (13) at (-1.5, -0.5) {$B$};
	\end{pgfonlayer}
	\begin{pgfonlayer}{edgelayer}
		\draw (11) to (8);
		\draw (13) to (12);
	\end{pgfonlayer}
\end{tikzpicture}
}
 \quad\qquad 
    \InputIfFileExists{stringdiag_ax2_left.tikz}{}{\input{./tikz/stringdiag_ax2_left.tikz}}
 = 
    \InputIfFileExists{stringdiag_ax2_right.tikz}{}{\input{./tikz/stringdiag_ax2_right.tikz}}
 \]

Hereafter, we call  $\CatString$ the category having as objects words in $\sort^\star$ and as arrows string diagrams. Theorem 2.3 in~\cite{joyal1991geometry} states that  $\Cat{C}_\sign$ is a \emph{symmetric strict monoidal category freely generated} by $\sign$.
\begin{definition} A \emph{symmetric monoidal category}  consists 
of a category $\Cat{C}$, a functor $\perG \colon \Cat{C} \times \Cat{C} \to \Cat{C}$,
an object $\unoG$  and natural isomorphisms \[ \alpha_{X, Y, Z} \colon (X \perG Y) \perG Z \to X \perG (Y \perG Z) \qquad \lambda_X \colon \unoG \perG X \to X \qquad \rho_X \colon X \perG \unoG \to X \qquad \sigma_{X, Y}^{\perG} \colon X \perG Y \to Y \perG X \]
satisfying some coherence axioms (in Figures~\ref{fig:moncatax} and~\ref{fig:symmmoncatax}).
A monoidal category is said to be \emph{strict} when $\alpha$, $\lambda$ and $\rho$ are all identity natural isomorphisms. A \emph{strict symmetric monoidal functor} is a functor $F \colon \Cat{C} \to \Cat{D}$  preserving $\perG$, $\unoG$ and $\sigma^{\perG}$. We write $\SMC$ for the category of ssm categories and functors. 

\end{definition}
\begin{remark}\label{rmk:symstrict}
In \emph{strict}  symmetric monoidal (ssm) categories  the symmetry $\sigma$ is not forced to be the identity, since this would raise some problems: for instance, $(f_1;g_1) \perG (f_2;g_2) = (f_1;g_2) \perG (f_2;g_1)$ for all $f_1,f_2\colon A \to B$ and $g_1,g_2\colon B \to C$. As we will see in  Section~\ref{sec:rigcategories}, this fact will make the issue of strictness for rig categories rather subtle.
\end{remark}

To illustrate in which sense $\Cat{C}_\sign$ is freely generated, it is convenient to introduce \emph{interpretations} in a fashion similar to~\cite{selinger2010survey}: an interpretation $\interpretation$ of $\sign$ into an ssm category $\Cat{D}$ consists of two functions $\alpha_{\sort} \colon \sort \to Ob(\Cat{D})$ and $\alpha_{\sign}\colon \sign \to Ar(\Cat{D})$ such that, for all $s\in \sign$, $\alpha_{\sign}(s)$ is an arrow having as domain $\alpha_{\sort}^\sharp(\ar(s))$ and codomain $\alpha_{\sort}^\sharp(\coar(s))$, for $\alpha_{\sort}^\sharp\colon \sort^\star \to Ob(\Cat{D})$ the inductive extension of  $\alpha_{\sort}$.  $\CatString$ is freely generated by $\sign$ in the sense that, for all symmetric strict monoidal categories $\Cat{D}$ and  all interpretations $\interpretation$ of $\sign$ in $\Cat{D}$, there exists a unique ssm-functor $\dsem{-}_{\interpretation}\colon \Cat{C}_\sign \to \Cat{D}$ extending $\interpretation$ (i.e. $\dsem{s}_\interpretation=\alpha_{\sign}(s)$ for all $s\in \sign$).

One can easily extend the notion of interpretation of $\sign$ into a symmetric monoidal category $\Cat D$ that is not necessarily strict. In this case we set $\alpha^\sharp_\sort \colon \sort^\star \to Ob(\Cat D)$ to be the \emph{right bracketing} of the inductive extension of $\alpha_\sort$. For instance, $\alpha^\sharp_\sort(ABC) = \alpha_\sort(A) \perG (\alpha_\sort(B) \perG \alpha_\sort(C))$.

\subsection{The Two Monoidal Structures of $\Rel$}\label{sec:2monREL} It is often the case that the same category carries more than one monoidal product. An example relevant to this work is $\Rel$, which exhibits two monoidal structures:  $(\Rel, \per, \uno)$ and $(\Rel, \piu, \zero)$.
In the former, $\per$ is given by the cartesian product, i.e.  $R \per S \defeq \{(\,(x_1,x_2),\, (y_1,y_2)\,) \mid (x_1,y_1)\in R \text{ and } (x_2,y_2)\in S  \}$ for all relations $R,S$, and the monoidal unit is the singleton set $1=\{\bullet\}$.
In the latter, $\piu$ is given by disjoint union, i.e.\ $R \piu S \defeq \{(\,(x,0),\,(y,0)\,) \mid (x,y)\in R \} \cup \{(\,(x,1),\,(y,1)\,) \mid (x,y)\in S \}$, and the monoidal unit $0$ is the empty set. It is worth recalling that in
$\Rel$ the empty set is both an initial and final object, i.e.\ a zero object, and that the disjoint union is both a coproduct and product, i.e.\ a biproduct. Indeed, $(\Rel, \piu, \zero)$ is our first example of a finite biproduct category.

\begin{definition}\label{def:biproduct category}
    A \emph{finite biproduct category} is a symmetric monoidal category $(\Cat{C}, \perG, \unoG)$ where, for every object $X$, there are morphisms $\codiag{X} \colon X \perG X \!\to\! X, \;\; \cobang{X} \colon \unoG \!\to\! X, \;\; \diag{X} \colon X \!\to\!  X \perG X, \;\; \bang{X} \colon X \!\to\! \unoG$ such that
    \begin{enumerate}
\item $(\codiag{X}, \cobang{X})$ is a commutative monoid and $(\diag{X}, \bang{X})$ is a cocommutative comonoid, satisfying the coherence axioms in Figure~\ref{fig:fbcoherence},
        \item every arrow $f \colon X \to Y$ is both a monoid and a comonoid homomorphism, i.e.\
        \[
        (f\perG f) ; \codiag{Y} = \codiag{X}; f, \quad \cobang{X}; f = \cobang{Y}, \quad f; \diag{Y}= \diag{X}; (f\perG f) \quad \text{and} \quad f;\bang{Y}= \bang{X}.
        \] 
\end{enumerate}
    A \emph{morphism of finite biproduct categories} is a symmetric monoidal functor preserving  $\codiag{X},\cobang{X},\diag{X},\cobang{X}$. 
    We write $\FBC$ for the category of strict finite biproduct categories and their morphisms. 
    \end{definition}
Observe that the second condition simply amounts to naturality of monoids and comonoids. More generally, the reader who does not recognise the familiar definition of finite biproduct (fb) category may have a look at~\cite[Appendix D]{bonchi2023deconstructing}.  Monoids and comonoids in the monoidal category $(\Rel, \piu, \zero)$ are illustrated in the first column below: \begin{equation}\label{eq:comonoidsREL}
		\begin{tabular}{rcl c rcl}
			$\codiag{X}$ & $\!\!\defeq\!\!$ & $\{((x, 0), \; x) \mid x\in X\} \cup \{((x, 1), \; x) \mid x\in X\}$ && $\cocopier{X}$ & $\!\!\defeq\!\!$ & $\op{\copier{X}}$ \\[0.5em]
			$\cobang{X}$ & $\!\!\defeq\!\!$ & $\{\}$ && $\codischarger{X}$ & $\!\!\defeq\!\!$ &  $\op{\discharger{X}} $ \\[0.5em]
			$\diag{X}$ & $\!\!\defeq\!\!$ & $\op{\codiag{X}}$ && $\copier{X}$ & $\!\!\defeq\!\!$ & $\{(x, \; (x,x)) \mid x\in X\} $ \\[0.5em]
	     	$\bang{X}$ & $\!\!\defeq\!\!$ & $\op{\cobang{X}}$ && $\discharger{X}$ & $\!\!\defeq\!\!$ & $\{(x, \bullet) \mid x\in X\}\subseteq X \times 1$
		\end{tabular}
\end{equation}
Also $(\Rel, \per, \uno)$ has monoids and comonoids, illustrated in the second column above. However, they fail to be natural and, for this reason, $(\Rel, \per, \uno)$ is not an fb category. It is instead the archetypal example of a cartesian bicategory.

\begin{definition}\label{def:cartesian bicategory}
    A \emph{cartesian bicategory}, in the sense of~\cite{Carboni1987}, is a symmetric monoidal category $(\Cat{C}, \perG, \unoG)$ enriched over the category of posets where for every object $X$ there are morphisms $\cocopier{X} \colon X \perG X \to X, \;\; \codischarger{X} \colon \unoG \to X, \;\; \copier{X} \colon X \to X \perG X, \;\; \discharger{X} \colon X \to \unoG$ such that
    \begin{enumerate}
        \item $(\cocopier{X}, \codischarger{X})$ is a commutative monoid and $(\copier{X}, \discharger{X})$ is a cocommutative comonoid, satisfying the coherence axioms in Figure~\ref{fig:fbcoherence},
        \item every arrow $f \colon X \to Y$ is a lax comonoid homomorphism, i.e.\
        \[
        f;\copier{Y}\leq \copier{X}; (f \perG f) \quad \text{and} \quad f;\discharger{Y} \leq \discharger{X},
        \] 
\item monoids and comonoids form special Frobenius bimonoids (see e.g.\ \cite{Lack2004a}), 
        \item the comonoid $(\copier{X}, \discharger{X})$ is left adjoint to the monoid $(\cocopier{X}, \codischarger{X})$, i.e.\ :
        \[ \codischarger{X} ; \discharger{X} \leq \id{\unoG} \qquad \cocopier{X};\copier{X} \leq \id X \perG \id X \qquad \id X \leq \discharger{X} ; \codischarger{X} \qquad \id{X} \leq \copier{X};\cocopier{X} \]
    \end{enumerate}
A \emph{morphism of cartesian bicategories} is a poset enriched symmetric monoidal functor preserving monoids and comonoids.
\end{definition}

A string diagrammatic language, named $\CB$, expressing the cartesian bicategory structure of  $(\Rel, \per, \uno)$ is introduced in~\cite{GCQ}. One can similarly define a language for $(\Rel, \piu, \zero)$, but combining them would require a diagrammatic language that is able to express two different monoidal products at once. The appropriate categorical structure for this are rig categories, discussed in the next section.

\section{Rig Categories and Tape Diagrams}\label{sec:rigcategories}
Rig categories, also known as \emph{bimonoidal categories}, involve two (symmetric) monoidal structures where one distributes over the other. They were first studied by Laplaza~\cite{laplaza_coherence_1972}, who discovered two coherence results establishing which diagrams necessarily commute as a consequence of the axioms given in their definition. An extensive treatment was recently given by Johnson and Yau~\cite{johnson2021bimonoidal}, from which we borrow most of the notation in this paper.
\begin{definition}\label{def:rig}
    A \emph{rig category} is a category $\Cat{C}$ with 
    two symmetric monoidal structures $(\Cat{C}, \per, \uno, \symmt)$ and 
    $(\Cat{C}, \piu, \zero, \symmp)$ and natural isomorphisms 
    \[ \dl{X}{Y}{Z} \colon X \per (Y \piu Z) \to (X \per Y) \piu (X \per Z) \qquad  \annl{X} \colon \zero \per X \to \zero \]
    \[ \dr{X}{Y}{Z} \colon (X \piu Y) \per Z \to (X \per Z) \piu (Y \per Z) \qquad \annr{X} \colon X \per \zero \to \zero \]
satisfying the coherence axioms in Figure~\ref{fig:rigax}.    A rig category is said to be \emph{right} (respectively \emph{left}) \emph{strict} when both its monoidal structures are 
    strict and $\lambda^\bullet, \rho^\bullet$ and $\delta^r$ (respectively $\delta^l$) are all identity 
    natural isomorphisms. A \emph{right strict rig functor} is a strict symmetric monoidal functor for both $\per$ and $\piu$ preserving $\delta^l$. We write $\RIG$ for the category of right strict rig categories and functors.
\end{definition}

All rig categories considered in this paper are assumed to be right strict. This is harmless since any rig category is equivalent to a right strict one (see Theorem 5.4.6 in~\cite{johnson2021bimonoidal}).
The reader may wonder why only one of the two distributors is forced to be the identity within a strict rig category.
This can shortly be explained as follows: if both distributors would be identities then,  for all objects $A,B,C,D$, %
\begin{align*}
 ((A \per C)\piu (B\per C)) \piu ((A \per D)\piu (B \per D)) 
 = ((A \per C)\piu (A\per D)) \piu ((B \per C)\piu (B \per D))
\end{align*}
raising the same problems of strictification of symmetries (see Remark \ref{rmk:symstrict}). 

\subsection{Freely Generated Sesquistrict Rig Categories}

The traditional approach to strictness is however unsatisfactory when studying freely generated categories. To illustrate our concerns, consider a right strict rig category freely generated by a signature $\sign$ with sorts $\sort$. The objects of this category are terms generated by the grammar in Table~\ref{table:eq objects fsr} modulo the equations in the first three rows of the same table. These equivalence classes of terms do not come with a very handy form, unlike, for instance, the objects of a strict monoidal category, which are words. %

\begin{table}
	\begin{center}
	\footnotesize{	
		\begin{subtable}{0.52\textwidth}
            \begin{tabular}{c}
				\toprule
				$X \; ::=\; \; A \; \mid \; \uno \; \mid \; \zero \; \mid \;  X \per X \; \mid \;  X \piu X$\\
				\midrule
				\begin{tabular}{ccc}
					$ (X \per Y) \per Z = X \per (Y \per Z)$ & $\uno \per X = X$ &  $X \per \uno = X $\\
					$(X \piu Y) \piu Z = X \piu (Y \piu Z)$ & $\zero \piu X = X$  & $X \piu \zero =X  $\\
					$(X \piu Y) \per Z = (X \per Z) \piu (Y \per Z)$ &  $\!\zero \per X = \zero$ & $X \per \zero=X $\\
				\end{tabular}\\
				$A \per (Y \piu Z) = (A \per Y) \piu (A \per Z)$\\
				\bottomrule
			\end{tabular}
            \caption{}
            \label{table:eq objects fsr}
        \end{subtable}
		\begin{subtable}{0.45\textwidth}
            \begin{tabular}{c}
				\toprule
				$n$-ary sums and products $\vphantom{\mid}$\\
				\midrule
				\makecell{
					\\[-5pt]
					$\Piu[i=1][0]{X_i} = \zero \; \Piu[i=1][1]{X_i}= X_1 \; \Piu[i=1][n+1]{X_i} = X_1 \piu (\Piu[i=1][n]{X_{i+1}} )$ \\[1em]
					$\Per[i=1][0]{X_i} = \uno \; \Per[i=1][1]{X_i}= X_1 \; \Per[i=1][n+1]{X_i} = X_1 \per (\Per[i=1][n]{X_{i+1}} )$ \\[3pt]
				} \\
				\bottomrule
			\end{tabular}
            \caption{}
            \label{table:n-ary sums and prod}
        \end{subtable}
	}
	\end{center}
	\caption{Equations for the objects of a free sesquistrict rig category}\label{tab:equationsonobject}
\end{table}

An alternative solution is proposed in \cite{bonchi2023deconstructing}: the focus is  on  freely generated rig categories that are \emph{sesquistrict}, i.e.\ right strict but only partially left strict: namely the left distributor  $\dl{X}{Y}{Z} \colon X \per (Y \piu Z) \to (X \per Y) \piu (X \per Z)$ is the identity only when $X$ is a basic sort $A\in \sort$. In terms of the equations to impose on objects, this amounts to the one in the fourth row in Table~\ref{table:eq objects fsr} for each $A\in \sort$. It is useful to observe that the addition of these equations avoids the problem of using left and right strictness at the same time. Indeed $(A\piu B) \per (C \piu D)$ turns out to be equal to $(A \per C) \piu (A \per D) \piu (B \per C) \piu (B \per D)$ but not to $(A \per C) \piu (B \per C) \piu (A \per D)  \piu (B \per D)$.

\begin{definition}\label{def:sesquistrict rig category}
	A \emph{sesquistrict rig category} is a functor $H \colon \Cat S \to \Cat C$, where $\Cat S$ is a discrete category and $\Cat C$ is a strict rig category, such that for all $A \in \Cat S$
	\[
	\dl{H(A)}{X}{Y} \colon H(A) \per (X \piu Y) \to (H(A) \per X) \piu (H(A) \per Y)
	\]
	is an identity morphism. We will also say, in this case, that $\Cat C$ is a $\Cat S$-sesquistrict rig category.
	
	Given $H \!\colon\! \Cat S \!\to\! \Cat C$ and $H' \!\colon\! \Cat S' \!\to\! \Cat C'$ two sesquistrict rig categories, a \emph{sesquistrict rig functor} from $H$ to $H'$ is a pair $(\alpha \!\colon\! \Cat S \!\to\! \Cat S', \beta \!\colon\! \Cat C \!\to\! \Cat C')$, with $\alpha$ a functor and $\beta$ a strict rig functor, such that $\alpha; H' = H; \beta$.
\end{definition}
\begin{remark}
In \cite{bonchi2023deconstructing}, it was shown that for any rig category $\Cat{C}$, one can construct its strictification $\overline{\Cat{C}}$ as in \cite{johnson2021bimonoidal} and then consider the obvious embedding from  $ob(\Cat{C})$, the discrete category of the objects of $\Cat{C}$, into $\overline{\Cat{C}}$. The embedding $ob(\Cat{C}) \to \overline{\Cat{C}}$ forms a sesquistrict category and it is equivalent (as a rig category) to the original $\Cat{C}$ \cite[Corollary 4.5]{bonchi2023deconstructing}. Through the paper, when dealing with a rig category $\Cat{C}$, we will often implicitly refer to the equivalent sesquistrict $ob(\Cat{C}) \to \overline{\Cat{C}}$.
\end{remark}

Given a set of sorts $\sort$, a \emph{rig signature} is a tuple $(\sort,\sign,\ar,\coar)$ where $\ar$ and $\coar$ assign to each $\gen \in \sign$ an arity and a coarity respectively, which are terms in the grammar specified in Table~\ref{table:eq objects fsr} modulo the equations underneath it. (Notice that any monoidal signature is in particular a rig signature.) To define the notion of free sesquistrict rig category, we need to extend interpretations of monoidal signatures to the rig case. An \emph{interpretation} of a rig signature $(\sort,\sign,\ar,\coar)$ in a sesquistrict  rig category $H \colon \Cat M \to \Cat D$ is a pair of functions $(\alpha_{\sort} \colon \sort \to Ob(\Cat M), \alpha_\sign \colon \sign \to Ar(\Cat D))$ such that, for all $\gen \in \sign$, $\alpha_{\sign}(s)$ is an arrow having as domain and codomain $(\alpha_{\sort};H)^\sharp(\ar(s))$ and  $(\alpha_{\sort};H)^\sharp(\coar(s))$.

\begin{definition}\label{def:freesesqui}
Let $(\sort,\sign,\ar,\coar)$ (simply $\sign$ for short) be a rig signature. A sesquistrict  rig category $H \colon \Cat M \to \Cat D$ is said to be \emph{freely generated} by $\sign$ if there is an interpretation $(\alpha_S,\alpha_\sign)$ of $\sign$ in $H$ such that for every sesquistrict rig category $H' \colon \Cat M' \to \Cat D'$ and every interpretation $(\alpha_\sort' \colon \sort \to Ob(\Cat M'), \alpha_\sign' \colon \sign \to Ar(\Cat D'))$ there exists a unique sesquistrict rig functor $(\alpha \colon \Cat M \to \Cat M', \beta \colon \Cat D \to \Cat D')$ such that $\alpha_\sort ; \alpha = \alpha_\sort'$ and $\alpha_\sign ; \beta = \alpha_{\sign}'$. 
\end{definition}
This is the definition of free object on a generating one instantiated in the category of sesquistrict rig categories and the category of rig signatures.
Thus, sesquistrict rig categories generated by a given signature are isomorphic to each other and we may refer to ``the'' free sesquistrict rig category generated by a signature.

The objects of the free sesquistrict rig category generated by $(\sort, \sign)$ are the terms generated by the grammar in Table~\ref{table:eq objects fsr} modulo all the equations underneath it; by orienting the equations from left to right, one obtains a rewriting system that is confluent and terminating and, most importantly, the unique normal forms are exactly polynomials: a term $X$  is in \emph{polynomial} form if there exist $n$, $m_i$ and $A_{i,j}\in \sort$ for $i=1 \dots n$ and $j=1 \dots m_i$ such that $X=\Piu[i=1][n]{\Per[j=1][m_i]{A_{i,j}}}$ (for $n$-ary sums and products as in Table~\ref{table:n-ary sums and prod}).
We will always refer to terms in polynomial form as \emph{polynomials} and, for a polynomial like the aforementioned  $X$, we will call \emph{monomials} of $X$ the $n$ terms $\Per[j=1][m_i]{A_{i,j}}$. For instance the monomials of $(A \per B) \piu \uno$ are $A \per B$ and $1$. Note that, differently from the polynomials we are used to dealing with, here neither $\piu$ nor $\per$ is commutative so, for instance, $(A \per B) \piu \uno$ is different from both $\uno \piu (A \per B)$ and $(B \per A) \piu \uno$. Note that non-commutative polynomials are in one to one correspondence with \emph{words of words} over $\sort$, while monomials are words over $\sort$. 
\begin{notation}
Through the whole paper, we will denote by $A,B,C\dots$ the sorts in $\sort$, by $U,V,W \dots$ the words in $\sort^\star$ and by $P,Q,R,S \dots$ the words of words in $(\sort^\star)^\star$. Given two words $U,V\in \sort^\star$, we will write $UV$ for their concatenation and $1$ for the empty word. Given two words of words $P,Q\in (\sort^\star)^\star$, we will write $P\piu Q$ for their concatenation and $\zero$ for the empty word of words. Given a word of words $P$, we will write $\pi P$ for the corresponding term in polynomial form, for instance $\pi(A \piu BCD\piu 1 )$ is the term $A \piu ((B \per (C \per D)) \piu \uno)$. Throughout this paper  we  often omit $\pi$, thus we implicitly identify words of words with polynomials.
\end{notation}

Beyond concatenation ($\piu$), one can define a product operation $\per$ on $(\sort^\star)^\star$ by taking the unique normal form of $\pi(P) \per \pi(Q)$ for any $P,Q\in (\sort^\star)^\star$. More explicitly for
$P = \Piu[i]{U_i}$ and $Q = \Piu[j]{V_j}$, 
\begin{equation}\label{def:productPolynomials} P \per Q \defeq \Piu[i]{\Piu[j]{U_iV_j}}.
\end{equation}
Observe that, if both $P$ and $Q$ are monomials, namely, $P=U$ and $Q=V$ for some $U,V\in \sort^\star$, then $P\per Q = UV$. We can thus safely write $PQ$ in place of $P\per Q$ without the risk of any confusion.

\subsection{Finite Biproduct Rig Categories}
On many occasions, one is interested in rig categories where $\piu$ has some additional structure. For instance, distributive monoidal categories are rig categories where $\piu$ is a coproduct. In \cite{bonchi2023deconstructing}, the focus is on rig categories where $\piu$  is a biproduct, like the category of sets and relations $\Rel$ (see Section \ref{sec:2monREL}).

\begin{definition}
A \emph{finite biproduct (fb) rig category} is a rig category $(\Cat{C}, \piu, \zero, \per, \uno)$  such that $(\Cat{C}, \piu, \zero)$ is a finite biproduct category. A \emph{morphism of fb rig categories} is both a rig functor and a morphisms of fb categories. We write $\FBRIG$ for the category of fb rig categories and their morphisms.
\end{definition}
Sesquistrict finite biproduct rig categories and freely generated sesquistrig fb rig categories are defined analogously to the rig case. The interest in the finite biproducts is motivated by the following result (Theorem 4.9 in \cite{bonchi2023deconstructing}) stating that any rig signature can be safely reduced  to a monoidal one whenever $\piu$ is a biproduct.

\begin{theorem}
	For every rig signature $(\sort,\sign)$ there exists a monoidal signature $(\sort, \sign_M)$ such that the free sesquistrict fb rig categories generated by $(\sort,\sign)$ and by $(\sort, \sign_M)$ are isomorphic.
\end{theorem}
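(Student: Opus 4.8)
The plan is to build the monoidal signature $\sign_M$ by \emph{decomposing each rig generator into its matrix of components}, and then to exhibit the required isomorphism as a pair of mutually inverse morphisms of fb rig categories, obtained from the universal property of Definition~\ref{def:freesesqui}. Concretely, fix for each $\gen \in \sign$ the polynomial normal forms $\ar(\gen) = \Piu[i=1][n]{U_i}$ and $\coar(\gen) = \Piu[j=1][m]{V_j}$, where each $U_i, V_j \in \sort^\star$ is a monomial. Then let $\sign_M$ contain, for every such $\gen$ and every pair $(i,j)$, a monoidal generator $\gen_{ij} \colon U_i \to V_j$. Since monomials are exactly words over $\sort$, this is a legitimate monoidal (hence rig) signature over the same set of sorts $\sort$, so the free categories $\fssbRig$ and $\fssbRigM$ have the same objects, namely the polynomials over $\sort$.

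The key tool is the biproduct matrix calculus (the biproduct completion of~\cite{mclane}): in any fb category, for a polynomial $P = \Piu[i][n]{U_i}$ the injections $\inj{i}\colon U_i \to P$ and projections $\proj{i}\colon P \to U_i$ are definable from $\codiag{},\cobang{},\diag{},\bang{}$ and symmetries, they satisfy the delta relations $\inj{i};\proj{k} = \id{}$ if $i=k$ and $\cobang{};\bang{}$ otherwise, and a morphism $f\colon \Piu[i]{U_i} \to \Piu[j]{V_j}$ is \emph{uniquely determined by its components} $f_{ij} \defeq \inj{i}; f; \proj{j} \colon U_i \to V_j$. I would first record this fact, together with the observation that, because injections and projections are built from the monoid/comonoid structure that fb rig functors preserve by definition, component extraction commutes with any fb rig functor: $(\beta f)_{ij} = \beta(f_{ij})$.

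Next I would define the two interpretations. Sending each sort to itself and each $\gen$ to the morphism $\matr{\gen}$ of $\fssbRigM$ uniquely characterised by its components $(\matr{\gen})_{ij} = \gen_{ij}$ yields, by freeness, an fb rig functor $\beta \colon \fssbRig \to \fssbRigM$. In the other direction, sending each sort to itself and each $\gen_{ij}$ to the $(i,j)$-component $\inj{i};\gen;\proj{j}$ of $\gen$ in $\fssbRig$ yields an fb rig functor $\gamma \colon \fssbRigM \to \fssbRig$. Both fix all objects. To see they are mutually inverse it suffices, by the uniqueness clause of Definition~\ref{def:freesesqui}, to check they act as the identity on generators. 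For $\beta;\gamma$: $\gamma(\matr{\gen})$ has $(i,j)$-component $\gamma((\matr{\gen})_{ij}) = \gamma(\gen_{ij}) = \inj{i};\gen;\proj{j} = \gen_{ij}$, i.e.\ the same components as $\gen$, hence $\gamma(\matr{\gen}) = \gen$ by uniqueness. For $\gamma;\beta$: $\beta(\inj{i};\gen;\proj{j}) = \inj{i};\matr{\gen};\proj{j} = (\matr{\gen})_{ij} = \gen_{ij}$, using that $\beta$ preserves injections and projections.

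The main obstacle is not conceptual but bookkeeping: one must verify carefully that injections and projections for \emph{polynomial} objects are well typed and genuinely preserved by fb rig functors (this rests on item~2 of Definition~\ref{def:biproduct category}, naturality of the (co)monoids, which gives bilinearity of composition over the biproduct and thus the ``components determine the morphism'' principle), and that working in the sesquistrict setting does not disturb these equations. Once the matrix calculus is in place and shown to be functorial, the round-trip computations above close the argument and establish $\fssbRig \cong \fssbRigM$.
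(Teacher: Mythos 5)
Your proposal is correct and follows essentially the same route as the proof this paper imports from \cite{bonchi2023deconstructing} (Theorem 4.9 there): replace each rig generator $\gen\colon \Piu[i]{U_i}\to\Piu[j]{V_j}$ by its matrix of monomial-typed components $\gen_{ij}$, and use the biproduct matrix calculus together with the universal property of the free sesquistrict fb rig category to exhibit the two interpretations as mutually inverse functors. The only point to be pedantic about in a write-up is the one you already flag: the ``components determine the morphism'' principle and the preservation of $\inj{i},\proj{j}$ by fb rig functors both rest on naturality of the $(\codiag{},\cobang{},\diag{},\bang{})$ structure, which is exactly item 2 of Definition~\ref{def:biproduct category}.
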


\subsection{Tape Diagrams for Rig Categories with Finite Biproducts}\label{sc:tape}

\begin{table}[t]
{
\begin{center}
\begin{tabular}{cccc}
    \toprule
    \multicolumn{2}{c}{$\diag{ A}; (\id{ A}\perG \diag{ A}) = \diag{ A};(\diag{ A}\perG \id{ A})$} & $\diag{ A} ; (\bang{ A}\perG \id{ A}) = \id{ A} $ & $ \diag{ A};\sigma_{ A, A}=\diag{ A}$ \\
    \multicolumn{2}{c}{$(\id{ A}\perG \codiag{ A}) ; \codiag{ A} = (\codiag{ A}\perG \id{ A}) ; \codiag{ A}$} & $(\cobang{ A}\perG \id{ A}) ; \codiag{ A}  = \id{ A} $ & $ \sigma_{ A, A};\codiag{ A}=\codiag{ A}$ \\
    $\codiag{ A};\diag{ A} = \diag{ A\perG  A} ; (\codiag{ A} \perG \codiag{ A}) $ & $ \cobang{ A}; \bang{ A}= \id{\unoG} $ & $ \cobang{ A}; \diag{ A}= \cobang{ A\perG  A} $ & $ \diag{ A}; \bang{ A}= \bang{ A\perG  A}$\\
    $\tapeFunct{c}; \bang{B}=\bang{A} $ & $ \tapeFunct{c}; \diag{B}=\diag{A}; (\tapeFunct{c} \perG \tapeFunct{c}) $ & $ \cobang{A};\tapeFunct{c} =\cobang{B} $ & $ \codiag{A};\tapeFunct{c} =(\tapeFunct{c} \perG \tapeFunct{c}); \codiag{B}$ \\
    \multicolumn{2}{c}{$\tape{\id{A}} = \id{A}$} & \multicolumn{2}{c}{$\tape{c ; d} = \tape{c} ; \tape{d}$} \\
    \bottomrule
    \end{tabular}
\end{center}
}
\caption{Additional axioms for $F_2(\Cat{C})$. Above, $c\colon A \to B$ is an arbitrary arrow of $\Cat{C}$}\label{fig:freestrictfbcat}
\end{table}

We have seen in Section~\ref{sec:monoidal} that string diagrams provide a convenient graphical language for strict monoidal categories.
In this section, we recall tape diagrams, a sound and complete graphical formalism for sesquistrict rig categories introduced in~\cite{bonchi2023deconstructing}.

The construction of tape diagrams goes through the adjunction in~\eqref{eq:adjunction}, where $\CAT$ is the category  of categories and functors, $\SMC$ is the category 
 of ssm categories and functors, and $\FBC$ is the category of strict finite biproduct categories and their morphisms. 
\begin{equation}\label{eq:adjunction} \begin{tikzcd}
		\SMC \ar[r,"U_1"] & \CAT \ar[r,bend left,"F_2"] \ar[r,draw=none,"\bot"description] & \FBC \ar[l,bend left,"U_2"]
\end{tikzcd}
\end{equation}
The functors $U_1$ and $U_2$ are the obvious forgetful functors. The functor $F_2$ is the left adjoint to $U_2$, and can be described as follows.

\begin{definition}\label{def:strict fb freely generated by C}
Let $\Cat{C}$ be a category. The  strict fb category freely generated by $\Cat{C}$, hereafter denoted by $F_2(\Cat{C})$, has as objects words of objects of $\Cat{C}$. Arrows are terms inductively generated by the following grammar, where $A,B$ and $c$ range over arbitrary objects and arrows of $\Cat{C}$,\begin{equation}
    \begin{array}{rcl}
        f & ::=& \; \id{A} \; \mid \; \id{I} \; \mid \; \tapeFunct{c} \; \mid \; \sigma_{A,B}^{\perG} \; \mid \;   f ; f   \; \mid \;  f \perG f  \; \mid \; \bang{A} \; \mid \; \diag{A} \;  \mid \; \cobang{A}\; \mid \; \codiag{A}\\
        \end{array}
\end{equation}  
modulo the axioms in  Tables~\ref{fig:freestricmmoncatax} and~\ref{fig:freestrictfbcat}. Notice in particular the last two from Table~\ref{fig:freestrictfbcat}:
\begin{equation}\label{ax:tape}\tag{Tape}
\tape{\id{A}} = \id{A} \qquad \tape{c;d}= \tape{c} ; \tape{d}
\end{equation}
\end{definition}
The assignment $\Cat{C} \mapsto F_2({\Cat{C}})$ easily extends to functors $H\colon \Cat{C} \to \Cat{D}$.
The unit of the adjunction $\eta \colon Id_{\CAT} \Rightarrow F_2U_2$ is defined for each category $\Cat{C}$ as the functor ${\tapeFunct{\cdot} \; \colon \Cat{C} \to U_2F_2(\Cat{C})}$ which is the identity on objects and maps every arrow $c$ in $\Cat{C}$ into the arrow $\tapeFunct{c}$ of $U_2F_2(\Cat{C})$. Observe that $\tapeFunct{\cdot}$ is indeed a functor, namely an arrow in $\CAT$, thanks to the axioms \eqref{ax:tape}. We will refer hereafter to this functor as the \emph{taping functor}.

\medskip

The sesquistrict fb rig category freely generated by a monoidal signature $\sign$ is $F_2U_1(\CatString)$, hereafter referred to as $\CatTape$, and it is presented as follows.

Recall that the set of objects of 
$\CatString$ is $\sort^\star$, i.e.\ words of sorts in $\sort$. The set of objects of $\CatTape$ is thus $(\sort^\star)^\star$, namely words of words of sorts in $\sort$. For arrows, consider the following two-layer grammar where $s \in \sign$, $A,B \in \sort$ and $U,V \in \sort^\star$.
\begin{equation}\label{tapesGrammar}
    \begin{tabular}{rc ccccccccccccccccccc}\setlength{\tabcolsep}{0.0pt}
        $c$  & ::= & $\id{A}$ & $\!\!\! \mid \!\!\!$ & $ \id{\uno} $ & $\!\!\! \mid \!\!\!$ & $ \gen $ & $\!\!\! \mid \!\!\!$ & $ \sigma_{A,B} $ & $\!\!\! \mid \!\!\!$ & $   c ; c   $ & $\!\!\! \mid \!\!\!$ & $  c \per c$ & \multicolumn{8}{c}{\;} \\
        $\t$ & ::= & $\id{U}$ & $\!\!\! \mid \!\!\!$ & $ \id{\zero} $ & $\!\!\! \mid \!\!\!$ & $ \tapeFunct{c} $ & $\!\!\! \mid \!\!\!$ & $ \sigma_{U,V}^{\piu} $ & $\!\!\! \mid \!\!\!$ & $   \t ; \t   $ & $\!\!\! \mid \!\!\!$ & $  \t \piu \t  $ & $\!\!\! \mid \!\!\!$ & $ \bang{U} $ & $\!\!\! \mid \!\!\!$ & $\diag{U}$ & $\!\!\! \mid \!\!\!$ & $\cobang{U}$ & $\!\!\! \mid \!\!\!$ & $\codiag{U}$    
    \end{tabular}
\end{equation}  
The terms of the first row, denoted by $c$, are called \emph{circuits}. Modulo the axioms in Table~\ref{fig:freestricmmoncatax} (after replacing $\perG$ with $\per$), these are exactly the arrows of $\CatString$.
The terms of the second row, denoted by $\t$, are called \emph{tapes}. Modulo the axioms in Tables \ref{fig:freestricmmoncatax} and \ref{fig:freestrictfbcat} (after replacing $\perG$ with $\piu$ and $A,B$ with $U,V$), these are exactly the arrows of $F_2U_1(\CatString)$, i.e.\  $\CatTape$. 

Since circuits are arrows of $\CatString$, these can be graphically represented as string diagrams. Also tapes can be represented as string diagrams, since they satisfy all of the axioms of ssmc. Note however that \emph{inside} tapes, there are string diagrams: this justifies the motto \emph{tape diagrams are string diagrams of string diagrams}.
We can render graphically and formally\footnote{A formalisation of the graphical language in terms of bimodular profunctors can be found in~\cite{braithwaite2023collages}.} the grammar in~\eqref{tapesGrammar}:
\begin{equation*}\label{tapesDiagGrammar}
    \setlength{\tabcolsep}{2pt}
    \begin{tabular}{rc cccccccccccc}
        $c$  & ::= &  $\wire{A}$ & $\mid$ & $ 
    \InputIfFileExists{empty.tikz}{}{\input{./tikz/empty.tikz}}
 $ & $\mid$ & $ \Cgen{\gen}{A}{B}  $ & $\mid$ & $ \Csymm{A}{B} $ & $\mid$ & $ 
    \InputIfFileExists{seq_compC.tikz}{}{\input{./tikz/seq_compC.tikz}}
   $ & $\mid$ & $  
    \InputIfFileExists{par_compC.tikz}{}{\input{./tikz/par_compC.tikz}}
$ & \\
        $\t$ & ::= & $\Twire{U}$ & $\mid$ & $ 
    \InputIfFileExists{empty.tikz}{}{\input{./tikz/empty.tikz}}
 $ & $\mid$ & $ \Tcirc{c}{U}{V}  $ & $\mid$ & $ \Tsymmp{U}{V} $ & $\mid$ & $ 
    \InputIfFileExists{tapes/seq_comp.tikz}{}{\input{./tikz/tapes/seq_comp.tikz}}
  $ & $\mid$ & $  
    \InputIfFileExists{tapes/par_comp.tikz}{}{\input{./tikz/tapes/par_comp.tikz}}
$ & $\mid$ \\
             &     & \multicolumn{12}{l}{$\Tcounit{U} \; \mid \; \Tcomonoid{U} \; \mid \; \Tunit{U} \; \mid \; \Tmonoid{U}$}
    \end{tabular}
\end{equation*}  
The identity $\id\zero$ is rendered as the empty tape $
    \InputIfFileExists{empty.tikz}{}{\input{./tikz/empty.tikz}}
$, while $\id\uno$ is $
    \InputIfFileExists{tapes/empty.tikz}{}{\input{./tikz/tapes/empty.tikz}}
$: a tape filled with the empty circuit. 
For a monomial $U \!=\! A_1\dots A_n$, $\id U$ is depicted as a tape containing  $n$ wires labelled by $A_i$. For instance, $\id{AB}$ is rendered as $\TRwire{A}{B}$. When clear from the context, we will simply represent it as a single wire  $\Twire{U}$ with the appropriate label.
Similarly, for a polynomial $P = \Piu[i=1][n]{U_i}$, $\id{P}$ is obtained as a vertical composition of tapes, as illustrated below on the left. \[ \id{AB \piu \uno \piu C} = \begin{aligned}\begin{gathered} \TRwire{A}{B} \\[-1.5mm] \Twire{\uno} \\[-2mm] \Twire{C} \end{gathered}\end{aligned} \qquad \qquad \tapesymm{AB}{C} = 
    \InputIfFileExists{tapes/examples/tapesymmABxC.tikz}{}{\input{./tikz/tapes/examples/tapesymmABxC.tikz}}
 \qquad\qquad \symmp{AB \piu \uno}{C} = 
    \InputIfFileExists{tapes/examples/symmpABp1pC.tikz}{}{\input{./tikz/tapes/examples/symmpABp1pC.tikz}}
 \]
We can render graphically the symmetries $\tapesymm{U}{V} \colon UV \!\to\! VU$ and $\symmp{P}{Q} \colon P \piu Q \!\to\! Q \piu P$ as crossings of wires and crossings of tapes,  see the two rightmost diagrams above.
The diagonal $\diag{U} \colon U \!\to\! U \piu U$ is represented as a splitting of tapes, while the bang $\bang{U} \colon U \!\to\! \zero$ is a tape closed on its right boundary. 
Codiagonals and cobangs are represented in the same way but mirrored along the y-axis. Exploiting the coherence axioms in Figure~\ref{fig:fbcoherence}, we can construct (co)diagonals and (co)bangs for arbitrary polynomials $P$.
For example, $\diag{AB}$, $\bang{CD}$, $\codiag{A\piu B \piu C}$ and $\cobang{AB \piu B \piu C}$ are depicted as:
\[ 
    \diag{AB} = \TLcomonoid{B}{A} \qquad \bang{CD} = \TLcounit{D}{C} \qquad  \codiag{A\piu B \piu C} = 
    \InputIfFileExists{tapes/examples/codiagApBpC.tikz}{}{\input{./tikz/tapes/examples/codiagApBpC.tikz}}
 \qquad   \cobang{AB \piu B \piu C} = 
    \InputIfFileExists{tapes/examples/cobangABpBpC.tikz}{}{\input{./tikz/tapes/examples/cobangABpBpC.tikz}}
 
\]

When the structure inside a tape is not relevant the graphical language can be ``compressed'' in order to simplify the diagrammatic reasoning. For example, for arbitrary polynomials $P, Q$ we represent $\id{P}, \symmp{P}{Q}, \diag{P}, \bang{P}, \codiag{P}, \cobang{P}$ as follows:
\[ \TPolyWire{P} \qquad \TPolySymmp{P}{Q} \qquad \TPolyDiag{P} \qquad \TPolyCounit{P} \qquad \TPolyCodiag{P} \qquad \TPolyUnit{P} \]
Moreover, for an arbitrary tape diagram $\t \colon P \to Q$ we write $\Tbox{\t}{P}{Q}$.

It is important to observe that the graphical representation takes care of the two axioms in \eqref{ax:tape}: both sides of the leftmost axiom are depicted as $\Twire{A}$ while both sides of the rightmost axiom as
$
    \InputIfFileExists{tapes/ax/c_poi_d.tikz}{}{\input{./tikz/tapes/ax/c_poi_d.tikz}}
$. The axioms of monoidal categories are also implicit in the graphical representation, while those for symmetries and the fb-structure (in Table~\ref{fig:freestrictfbcat}) have to be depicted explicitly as in Figure~\ref{fig:tapesax}. In particular, the diagrams in the first row express the inverse law and naturality of $\symmp$. In the second group there are the (co)monoid axioms and in the third group the bialgebra ones. Finally, the last group depicts naturality of the (co)diagonals and (co)bangs.

\begin{figure}[ht!]
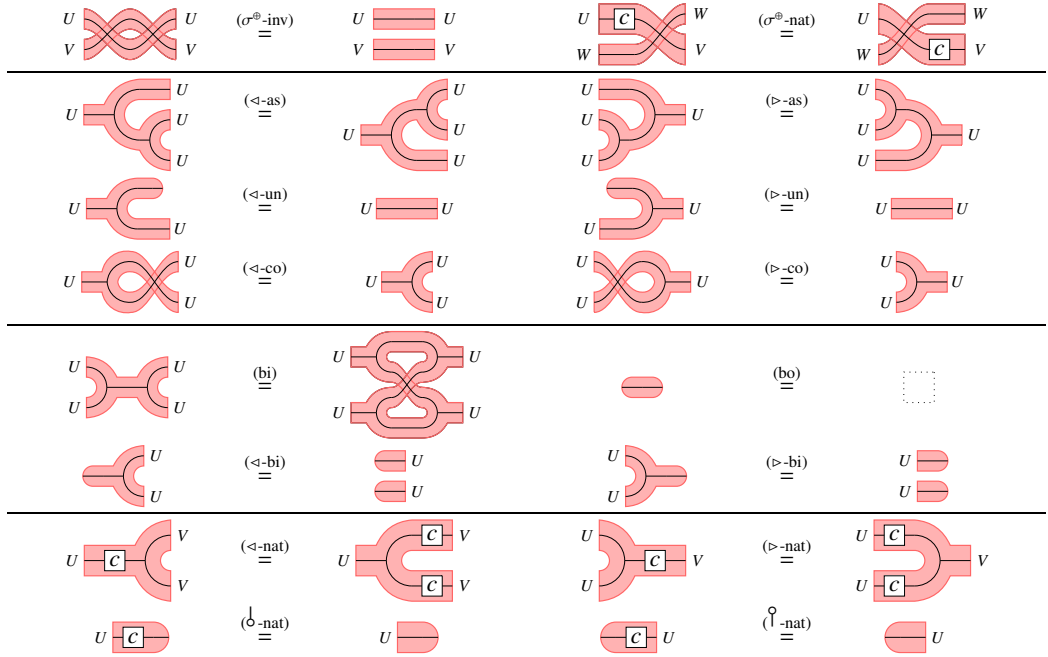

    \mylabel{ax:symmpinv}{$\symmp$-inv}
    \mylabel{ax:symmpnat}{$\symmp$-nat}
    \mylabel{ax:diagas}{$\diag{}$-as}
    \mylabel{ax:diagun}{$\diag{}$-un}
    \mylabel{ax:diagco}{$\diag{}$-co}
    \mylabel{ax:codiagas}{$\codiag{}$-as}
    \mylabel{ax:codiagun}{$\codiag{}$-un}
    \mylabel{ax:codiagco}{$\codiag{}$-co}
    \mylabel{ax:bi}{bi}
    \mylabel{ax:bo}{bo}
    \mylabel{ax:diagbi}{$\diag{}$-bi}
    \mylabel{ax:codiagbi}{$\codiag{}$-bi}
    \mylabel{ax:diagnat}{$\diag{}$-nat}
    \mylabel{ax:bangnat}{$\bang{}$-nat}
    \mylabel{ax:codiagnat}{$\codiag{}$-nat}
    \mylabel{ax:cobangnat}{$\cobang{}$-nat}
    \begin{center}
        \scalebox{0.9}{
    \begin{tabular}{C{1px} C{2.3cm} c C{2.5cm} C{5px} C{2.5cm} c C{2.3cm} C{1px}}
        &
    \InputIfFileExists{tapes/ax/symminv_left.tikz}{}{\input{./tikz/tapes/ax/symminv_left.tikz}}
 &$\stackrel{(\symmp\text{-inv})}{=}$& 
    \InputIfFileExists{tapes/ax/symminv_right.tikz}{}{\input{./tikz/tapes/ax/symminv_right.tikz}}
  && 
    \InputIfFileExists{tapes/ax/symmnat_left.tikz}{}{\input{./tikz/tapes/ax/symmnat_left.tikz}}
  &$\stackrel{(\symmp\text{-nat})}{=}$&  
    \InputIfFileExists{tapes/ax/symmnat_right.tikz}{}{\input{./tikz/tapes/ax/symmnat_right.tikz}}
 & \\[-1ex]
\hline
&
    \InputIfFileExists{tapes/whiskered_ax/comonoid_assoc_left.tikz}{}{\input{./tikz/tapes/whiskered_ax/comonoid_assoc_left.tikz}}
  &$\stackrel{(\diag{}\text{-as})}{=}$&  
    \InputIfFileExists{tapes/whiskered_ax/comonoid_assoc_right.tikz}{}{\input{./tikz/tapes/whiskered_ax/comonoid_assoc_right.tikz}}
 && 
    \InputIfFileExists{tapes/whiskered_ax/monoid_assoc_left.tikz}{}{\input{./tikz/tapes/whiskered_ax/monoid_assoc_left.tikz}}
 &$\stackrel{(\codiag{}\text{-as})}{=}$& 
    \InputIfFileExists{tapes/whiskered_ax/monoid_assoc_right.tikz}{}{\input{./tikz/tapes/whiskered_ax/monoid_assoc_right.tikz}}
 & \\
        &
    \InputIfFileExists{tapes/whiskered_ax/comonoid_unit_left.tikz}{}{\input{./tikz/tapes/whiskered_ax/comonoid_unit_left.tikz}}
 &$\stackrel{(\diag{}\text{-un})}{=}$& \Twire{U} && 
    \InputIfFileExists{tapes/whiskered_ax/monoid_unit_left.tikz}{}{\input{./tikz/tapes/whiskered_ax/monoid_unit_left.tikz}}
 &$\stackrel{(\codiag{}\text{-un})}{=}$& \Twire{U} & \\
        &
    \InputIfFileExists{tapes/whiskered_ax/comonoid_comm_left.tikz}{}{\input{./tikz/tapes/whiskered_ax/comonoid_comm_left.tikz}}
 &$\stackrel{(\diag{}\text{-co})}{=}$& \Tcomonoid{U} && 
    \InputIfFileExists{tapes/whiskered_ax/monoid_comm_left.tikz}{}{\input{./tikz/tapes/whiskered_ax/monoid_comm_left.tikz}}
 &$\stackrel{(\codiag{}\text{-co})}{=}$& \Tmonoid{U} &\\[-0.5ex]
\hline
        \\[-2.2ex]
        &
    \InputIfFileExists{tapes/whiskered_ax/bialg1_left.tikz}{}{\input{./tikz/tapes/whiskered_ax/bialg1_left.tikz}}
 &$\stackrel{(\text{bi})}{=}$& 
    \InputIfFileExists{tapes/whiskered_ax/bialg1_right.tikz}{}{\input{./tikz/tapes/whiskered_ax/bialg1_right.tikz}}
 && 
    \InputIfFileExists{tapes/whiskered_ax/bialg2_left.tikz}{}{\input{./tikz/tapes/whiskered_ax/bialg2_left.tikz}}
  &$\stackrel{(\text{bo})}{=}$&  
    \InputIfFileExists{empty.tikz}{}{\input{./tikz/empty.tikz}}
 & \\[-1ex]
        &
    \InputIfFileExists{tapes/whiskered_ax/bialg3_left.tikz}{}{\input{./tikz/tapes/whiskered_ax/bialg3_left.tikz}}
  &$\stackrel{(\diag{}\text{-bi})}{=}$&  
    \InputIfFileExists{tapes/whiskered_ax/bialg3_right.tikz}{}{\input{./tikz/tapes/whiskered_ax/bialg3_right.tikz}}
 && 
    \InputIfFileExists{tapes/whiskered_ax/bialg4_left.tikz}{}{\input{./tikz/tapes/whiskered_ax/bialg4_left.tikz}}
  &$\stackrel{(\codiag{}\text{-bi})}{=}$&  
    \InputIfFileExists{tapes/whiskered_ax/bialg4_right.tikz}{}{\input{./tikz/tapes/whiskered_ax/bialg4_right.tikz}}
 & \\
\hline
        \\[-2.2ex]
        &
    \InputIfFileExists{tapes/ax/comonoidnat_left.tikz}{}{\input{./tikz/tapes/ax/comonoidnat_left.tikz}}
 &$\stackrel{(\diag{}\text{-nat})}{=}$& 
    \InputIfFileExists{tapes/ax/comonoidnat_right.tikz}{}{\input{./tikz/tapes/ax/comonoidnat_right.tikz}}
 && 
    \InputIfFileExists{tapes/ax/monoidnat_left.tikz}{}{\input{./tikz/tapes/ax/monoidnat_left.tikz}}
 &$\stackrel{(\codiag{}\text{-nat})}{=}$& 
    \InputIfFileExists{tapes/ax/monoidnat_right.tikz}{}{\input{./tikz/tapes/ax/monoidnat_right.tikz}}
 & \\[-1ex]
        &
    \InputIfFileExists{tapes/ax/counitnat_left.tikz}{}{\input{./tikz/tapes/ax/counitnat_left.tikz}}
  &$\stackrel{(\bang{}\text{-nat})}{=}$& 
    \InputIfFileExists{tapes/ax/counitnat_right.tikz}{}{\input{./tikz/tapes/ax/counitnat_right.tikz}}
 && 
    \InputIfFileExists{tapes/ax/unitnat_left.tikz}{}{\input{./tikz/tapes/ax/unitnat_left.tikz}}
  &$\stackrel{(\cobang{}\text{-nat})}{=}$& 
    \InputIfFileExists{tapes/ax/unitnat_right.tikz}{}{\input{./tikz/tapes/ax/unitnat_right.tikz}}
 &
    \end{tabular}}
    \end{center}
    \caption{Axioms for tape diagrams}
    \label{fig:tapesax}
\end{figure}

\begin{theorem}\label{thm:Tapes is free sesquistrict generated by sigma}
	$\CatTape$ is the free sesquistrict fb rig category generated by the monoidal signature $(\sort, \sign)$.
\end{theorem}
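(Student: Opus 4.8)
The plan is to derive the universal property of $\CatTape = F_2 U_1(\CatString)$ by composing two freeness results: the freeness of $\CatString$ as a symmetric strict monoidal category (Theorem~2.3 in~\cite{joyal1991geometry}), which accounts for the $\per$-structure, and the adjunction $F_2 \dashv U_2$ of~\eqref{eq:adjunction}, which accounts for the biproduct structure of $\piu$. First I would make explicit the sesquistrict fb rig structure of $\CatTape$: the $\piu$-biproduct comes for free from $F_2$; the $\per$-monoidal product is obtained by extending the $\per$ of $\CatString$ over $\piu$ via $P \per Q = \Piu[i]{\Piu[j]{U_iV_j}}$ on objects (as in~\eqref{def:productPolynomials}) and the corresponding distributed definition on tapes; and the left distributor $\dl{P}{Q}{R}$ is canonically induced. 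Sesquistrictness on $\sort$ is immediate, since $\per$ distributes over $\piu$ by construction; one checks that the identity law for $\dl{A}{Q}{R}$ with $A\in\sort$ propagates to all monomials $U\in\sort^\star$ by a short induction using associativity of $\per$, so that taking $H\colon\sort\to\CatTape$ to be the inclusion of sorts yields a $\sort$-sesquistrict rig category.

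For the universal property, fix a sesquistrict fb rig category $H'\colon \Cat{M}'\to\Cat{D}'$ and an interpretation $(\alpha'_\sort,\alpha'_\sign)$. Since $\Cat{D}'$ is strict, $(\Cat{D}',\per,\uno)$ is an ssm category, and $(\alpha'_\sort; H',\alpha'_\sign)$ is an interpretation of the monoidal signature $(\sort,\sign)$ into it: the typing condition holds because for a generator $s$ with monomial arity the right-bracketing $\sharp$-extension with respect to $\per$ agrees with the one used in the rig interpretation. By freeness of $\CatString$ there is a unique ssm functor $G=\dsem{-}_{\interpretation}\colon \CatString\to(\Cat{D}',\per,\uno)$ extending it. Forgetting the $\per$-structure, $G$ is a plain functor $U_1\CatString\to U_2(\Cat{D}',\piu,\zero)$, so by $F_2\dashv U_2$ it corresponds to a unique fb functor $\beta\colon\CatTape\to(\Cat{D}',\piu,\zero)$ with $\beta\circ\tapeFunct{\cdot}=G$, where $\tapeFunct{\cdot}$ is the taping functor (the unit of the adjunction). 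Setting $\alpha = \alpha'_\sort$, which is forced by $\alpha_\sort;\alpha=\alpha'_\sort$ together with $\alpha_\sort=\id{\sort}$, gives the candidate sesquistrict rig functor $(\alpha,\beta)$.

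The substantial step is to show that $\beta$, which is fb by construction, is in fact a strict rig functor, i.e.\ that it also strictly preserves $\per$, $\uno$ and the distributor $\delta^l$. On objects this is a direct computation from $P\per Q=\Piu[i]{\Piu[j]{U_iV_j}}$, using that $\beta$ preserves $\piu$ and coincides with the $\per$-strict functor $G$ on monomials. On arrows I would argue by induction on the two-layer grammar~\eqref{tapesGrammar}: the circuit layer is handled by $\beta\circ\tapeFunct{\cdot}=G$ together with $G$ being $\per$-monoidal, while the tape layer (sequential and parallel composition, symmetries, the (co)monoids and (co)bangs) is handled by $\beta$ preserving the fb structure and by the naturality of $\delta^l$; the key point is that $\per$ of two arbitrary tapes is reconstructed from $\per$ of circuits by distributing over the biproduct, a pattern preserved by $\beta$. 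Preservation of $\delta^l$ then follows, since on both sides it is the canonical isomorphism assembled from the sorts, where it is the identity. Finally, for uniqueness, any sesquistrict rig functor $(\alpha,\beta'')$ extending the interpretation restricts along $\tapeFunct{\cdot}$ to an ssm functor extending $(\alpha'_\sort;H',\alpha'_\sign)$, hence equals $G$ by freeness of $\CatString$; being an fb functor that agrees with $\beta$ on the image of $\tapeFunct{\cdot}$, it equals $\beta$ by the uniqueness clause of $F_2\dashv U_2$, while $\alpha=\alpha'_\sort$ is forced.

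I expect the main obstacle to be the inductive verification that $\beta$ strictly preserves $\per$ on arrows together with the distributor: one must make precise how $\per$ acts on general tapes (through the distributive, matrix-like presentation) and carefully track the distributor isomorphisms, so that the fb functor produced by the adjunction is genuinely a rig functor rather than merely an fb functor that happens to be $\per$-monoidal on the circuit sublayer.
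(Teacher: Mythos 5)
Your proposal is correct and follows essentially the same route as the paper's proof (given in~\cite{bonchi2023deconstructing}): equip $F_2U_1(\CatString)$ with the whiskering-defined $\per$, $\symmt$ and $\delta^l$, then obtain existence and uniqueness of the extending functor by combining the freeness of $\CatString$ for the circuit layer with the adjunction $F_2\dashv U_2$ for the tape layer. The only mismatch is one of emphasis: the paper locates the bulk of the work not in showing that $\beta$ preserves $\per$ and $\delta^l$, but in establishing beforehand that $\CatTape$ is a rig category at all --- i.e.\ the whiskering algebra of Table~\ref{table:whisk}, in particular the exchange law \eqref{eq:tape:LexchangeR} giving functoriality of $\per$, together with naturality of $\symmt$ and $\delta^l$ and the Laplaza coherences --- which you dispatch as ``canonically induced.''
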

The proof~\cite{bonchi2023deconstructing} of the above theorem mostly consists in illustrating that $\CatTape$ carries the structure of a rig category: $\per$ is defined on objects as in \eqref{def:productPolynomials}; the definition of symmetries for $\per$ and left distributors is given inductively in Table \ref{table:def dl symmt}; the definition of  $\per$ on tapes relies on the definition of \emph{left and right whiskerings}: see Table \ref{tab:producttape}. We will come back to whiskerings in Section \ref{ssec:tracerig}. For the time being, the reader can have a concrete grasp by means of the following example borrowed from \cite{bonchi2023deconstructing}.

\begin{example}\label{ex:diagPer}
    Consider $\t\colon U\piu V \to W\piu Z$ and $\s\colon U'\piu V' \to W'\piu Z'$ illustrated below on the left. Then $\t \per \s$ is simply the sequential composition of $\LW{U\piu V}{\s}$ and $\RW{W' \piu Z'}{\t}$:
    \[  
    \InputIfFileExists{tapes/examples/t1.tikz}{}{\input{./tikz/tapes/examples/t1.tikz}}
 \per 
    \InputIfFileExists{tapes/examples/t2.tikz}{}{\input{./tikz/tapes/examples/t2.tikz}}
 = 
    \InputIfFileExists{tapes/examples/t1xt2_line.tikz}{}{\input{./tikz/tapes/examples/t1xt2_line.tikz}}
 \]
    The dashed line highlights the boundary between left and right polynomial whiskerings: $\LW{U \piu V}{\s}$, on the left, is simply the vertical composition of the monomial whiskerings $\LW{U}{\s}$ and $\LW{V}{\s}$ while, on the right, $\RW{W' \piu Z'}{\t}$ is rendered as the vertical composition of $\RW{W'}{\t}$ and $\RW{Z'}{\t}$, precomposed and postcomposed with left distributors.
\end{example}

\begin{table}[ht!]
    \begin{center}
        \begin{subtable}{0.60\textwidth}
            \begin{tabular}{l}
                \toprule
                $\dl{P}{Q}{R} \colon P \per (Q\piu R)  \to (P \per Q) \piu (P\per R) \vphantom{\symmt{P}{Q}}$ \\
                \midrule
                $\dl{\zero}{Q}{R} \defeq \id{\zero} \vphantom{\symmt{P}{\zero} \defeq \id{\zero}}$ \\
                $\dl{U \piu P'}{Q}{R} \defeq (\id{U\per (Q \piu R)} \piu \dl{P'}{Q}{R});(\id{U\per Q} \piu \symmp{U\per R}{P'\per Q} \piu \id{P'\per R}) \vphantom{\symmt{P}{V \piu Q'} \defeq \dl{P}{V}{Q'} ; (\Piu[i]{\tapesymm{U_i}{V}} \piu \symmt{P}{Q'})}$ \\
                \bottomrule
            \end{tabular}
            \caption{}
            \label{table:def dl}
        \end{subtable}
        \;
        \begin{subtable}{0.35\textwidth}
            \begin{tabular}{l}
                \toprule
                $\symmt{P}{Q} \colon P\per Q \to Q \per P$, with $P = \Piu[i]{U_i}$ \\
                \midrule
                $\symmt{P}{\zero} \defeq \id{\zero}$ \\
                $\symmt{P}{V \piu Q'} \defeq \dl{P}{V}{Q'} ; (\Piu[i]{\tapesymm{U_i}{V}} \piu \symmt{P}{Q'})$ \\
                \bottomrule
            \end{tabular}
            \caption{}
            \label{table:def symmt}
        \end{subtable}
        \caption{Inductive definition of $\delta^l$ and $\symmt$}
        \label{table:def dl symmt}
    \end{center}
\end{table}

\begin{table}[ht!]
    \begin{center}
    {
        \hfill
  \[\begin{array}{c}
        \def\arraystretch{1.5}
        \begin{array}{rcl|rcl}
            \toprule
            \LW U {\id\zero} &\defeq& \id\zero& \RW U {\id\zero} &\defeq& \id\zero \\
            \LW U {\tape{c}} &\defeq& \tape{\id U \per c}& \RW U {\tape{c}} &\defeq& \tape{c \per \id U} \\
            \LW U {\symmp{V}{W}} &\defeq& \symmp{UV}{UW}& \RW U {\symmp{V}{W}} &\defeq& \symmp{VU}{WU} \\
            \LW U {\diag V} &\defeq& \diag{UV}& \RW U {\diag V} &\defeq& \diag{VU} \\
            \LW U {\bang V} &\defeq& \bang{UV}& \RW U {\bang V} &\defeq& \bang{VU} \\
            \LW U {\codiag V} &\defeq& \codiag{UV}& \RW U {\codiag V} &\defeq& \codiag{VU} \\
            \LW U {\cobang V} &\defeq& \cobang{UV}& \RW U {\cobang V} &\defeq& \cobang{VU} \\
            \LW U {\t_1 ; \t_2} &\defeq& \LW U {\t_1} ; \LW U {\t_2}& \RW U {\t_1 ; \t_2} &\defeq& \RW U {\t_1} ; \RW U {\t_2} \\
            \LW U {\t_1 \piu \t_2} &\defeq& \LW U {\t_1} \piu \LW U {\t_2}& \RW U {\t_1 \piu \t_2} &\defeq& \RW U {\t_1} \piu \RW U {\t_2} \\
            \hline \hline
            \LW{\zero}{\t} &\defeq& \id{\zero} & \RW{\zero}{\t} &\defeq& \id{\zero} \\
            \LW{W\piu S'}{\t} &\defeq& \LW{W}{\t} \piu \LW{S'}{\t} & \RW{W \piu S'}{\t} &\defeq& \dl{P}{W}{S'} ; (\RW{W}{\t} \piu \RW{S'}{\t}) ; \Idl{Q}{W}{S'} \\
            \hline \hline
            \multicolumn{6}{c}{
                \t_1 \per \t_2 \defeq \LW{P}{\t_2} ; \RW{S}{\t_1}   \quad \text{ ( for }\t_1 \colon P \to Q, \t_2 \colon R \to S   \text{ )}
            }
            \\
            \bottomrule
        \end{array}
    \end{array}        
      \]
      \hfill
      \caption{Inductive definition of left and right monomial whiskerings (top); inductive definition of polynomial whiskerings (center); definition of $\per$ (bottom).}\label{tab:producttape}
       }
    \end{center}
\end{table}

\section{Uniformity in Traced Monoidal Categories}\label{sec:traced}
Tape diagrams add expressivity to languages of monoidal categories: the results in \cite{bonchi2023deconstructing} extend languages of quantum circuits \cite{coecke2011interacting} to express control gates, and provide a complete axiomatisation for the positive fragment of the calculus of relations \cite{tarski1941calculus,DBLP:conf/stacs/Pous18}.
Their expressivity, however, does not allow one to deal with the so called Kleene star, namely, the reflexive and transitive closure, that is often used to give semantics to while loops in imperative programming languages. To overcome this problem, we propose in this work to extend tape diagrams with \emph{traces} \cite{joyal1996traced}.

It turns out that the laws of traced monoidal categories are not sufficient to reuse the construction of tapes from \cite{bonchi2023deconstructing}, but one needs an additional condition on traces that is known as \emph{uniformity} \cite{cuazuanescu1994feedback}. In this section we recall uniformly traced monoidal categories and several adjunctions that will be crucial in the next section to introduce tape diagrams with traces. For the sake of brevity, hereafter all monoidal categories and functors are implicitly assumed to be symmetric and strict.

\begin{definition}\label{def:traced-category}
A  monoidal category $(\Cat{C}, \perG, \unoG)$  is \emph{traced} if it is endowed with an operator \(\trace_{S} \colon \Cat{C}(S \perG X, S \perG Y) \to \Cat{C}(X,Y)\), for all objects \(S\), \(X\) and \(Y\) of \(\Cat{C}\), that satisfies the axioms in \Cref{tab:trace-axioms} for all suitably typed \(f\), \(g\), \(u\) and \(v\).
 A \emph{morphism of traced  monoidal categories} is a  monoidal functor \(\fun{F} \colon \Cat{B} \to \Cat{C}\) 
  that preserves the trace, namely \(\fun{F}(\trace_{S}f) = \trace_{\fun{F}S}(\fun{F}f)\). We write $\TSMC$ for the category of traced monoidal categories and their morphisms.
\end{definition}

  \begin{table}[h!]
    \centering
    \mylabel{ax:trace:tightening}{tightening}
    \mylabel{ax:trace:strength}{strength}
    \mylabel{ax:trace:joining}{joining}
    \mylabel{ax:trace:vanishing}{vanishing}
    \mylabel{ax:trace:yanking}{yanking}
    \mylabel{ax:trace:sliding}{sliding}
    \begin{tabular}{l r}
      \toprule
      \(\trace_{S}((\id{} \perG u) \dcomp f \dcomp (\id{} \perG v)) = u \dcomp \trace_{S}f \dcomp v\) & (\ref*{ax:trace:tightening}) \\
      \(\trace_{S}(f \perG g) = \trace_{S}f \perG g\) & (\ref*{ax:trace:strength})\\
      \(\trace_{T} \trace_{S} f = \trace_{S \perG T} f\) & (\ref*{ax:trace:joining})\\
      \(\trace_{\unoG} f = f\) & (\ref*{ax:trace:vanishing})\\
      \(\trace_{T}(f \dcomp (u \perG \id{})) = \trace_{S}((u \perG \id{}) \dcomp f)\) & (\ref*{ax:trace:sliding})\\
      \(\trace_{S} \sigma^\perG_{S,S} = \id{S}\)& (\ref*{ax:trace:yanking})\\
      \bottomrule
    \end{tabular}
    \caption{Trace axioms.}\label{tab:trace-axioms}
  \end{table}
  
  String diagrams can be extended to deal with traces~\cite{joyal1996traced} (see e.g., \cite{selinger2010survey} for a survey). For a morphism $f\colon S \perG X \to S \perG Y$, we draw its trace as %
  \[ 
    \begin{tikzpicture}
      \begin{pgfonlayer}{nodelayer}
        \node [style=label] (105) at (-2.75, -0.625) {$X$};
        \node [style=none] (117) at (-2.25, -0.625) {};
        \node [style=none] (118) at (-1.75, 0.625) {};
        \node [style=label] (120) at (2.75, -0.625) {$Y$};
        \node [style=none] (125) at (2.25, -0.625) {};
        \node [style=none] (126) at (1.75, 0.625) {};
        \node [style=stringlongbox] (128) at (0, 0) {$f$};
        \node [style=none] (129) at (-1.75, 2.125) {};
        \node [style=none] (130) at (1.75, 2.125) {};
        \node [style=label] (131) at (-1.25, 1.125) {$S$};
        \node [style=label] (132) at (1.25, 1.125) {$S$};
      \end{pgfonlayer}
      \begin{pgfonlayer}{edgelayer}
        \draw (125.center) to (117.center);
        \draw (118.center) to (126.center);
        \draw (130.center) to (129.center);
        \draw [bend right=90, looseness=1.50] (129.center) to (118.center);
        \draw [bend left=90, looseness=1.50] (130.center) to (126.center);
      \end{pgfonlayer}
    \end{tikzpicture}.     
  \]
  Using this convention, the axioms in Table \ref{tab:trace-axioms} acquire a more intuitive flavour: see  \Cref{fig:trace-axioms}.

  \begin{figure}[h!]
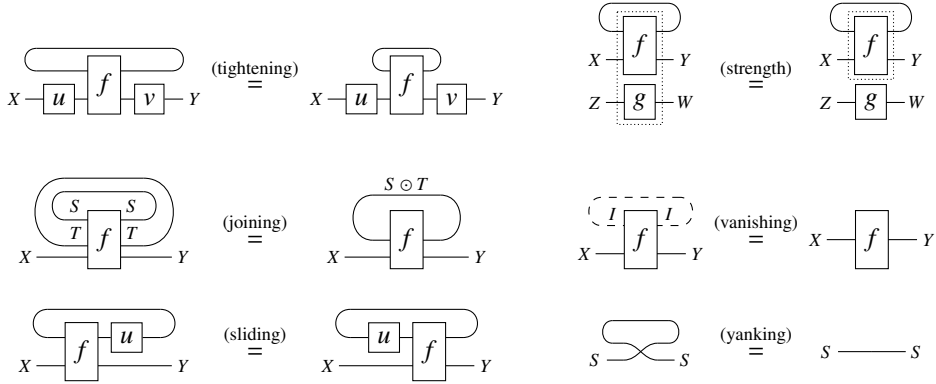

    \centering
    \[
      \begin{array}{c@{}c@{}c @{\qquad} c@{}c@{}c}
        
    \InputIfFileExists{traceAx/tight/lhs.tikz}{}{\input{./tikz/traceAx/tight/lhs.tikz}}
 &\stackrel{\text{(tightening)}}{=}& 
    \InputIfFileExists{traceAx/tight/rhs.tikz}{}{\input{./tikz/traceAx/tight/rhs.tikz}}

        &
        
    \InputIfFileExists{traceAx/strength/lhs.tikz}{}{\input{./tikz/traceAx/strength/lhs.tikz}}
 &\stackrel{\text{(strength)}}{=}& 
    \InputIfFileExists{traceAx/strength/rhs.tikz}{}{\input{./tikz/traceAx/strength/rhs.tikz}}

        \\[20pt]
        
    \InputIfFileExists{traceAx/join/lhs.tikz}{}{\input{./tikz/traceAx/join/lhs.tikz}}
 &\stackrel{\text{(joining)}}{=}& 
    \InputIfFileExists{traceAx/join/rhs.tikz}{}{\input{./tikz/traceAx/join/rhs.tikz}}

        &
        
    \InputIfFileExists{traceAx/vanish/lhs.tikz}{}{\input{./tikz/traceAx/vanish/lhs.tikz}}
 &\stackrel{\text{(vanishing)}}{=}& 
    \InputIfFileExists{traceAx/vanish/rhs.tikz}{}{\input{./tikz/traceAx/vanish/rhs.tikz}}

        \\[20pt]
        
    \InputIfFileExists{traceAx/slide/lhs.tikz}{}{\input{./tikz/traceAx/slide/lhs.tikz}}
 &\stackrel{\text{(sliding)}}{=}& 
    \InputIfFileExists{traceAx/slide/rhs.tikz}{}{\input{./tikz/traceAx/slide/rhs.tikz}}

        &
        
    \InputIfFileExists{traceAx/yank/lhs.tikz}{}{\input{./tikz/traceAx/yank/lhs.tikz}}
 &\stackrel{\text{(yanking)}}{=}& 
    \InputIfFileExists{traceAx/yank/rhs.tikz}{}{\input{./tikz/traceAx/yank/rhs.tikz}}

      \end{array}
    \]
    \caption{Trace axioms in string diagrams.\label{fig:trace-axioms}}
  \end{figure}

To extend tape diagrams with traces we need to require the trace to be uniform. This constraint, that arises from technical necessity, turns out to be the key to recover the axiomatisation of Kleene algebras in Section \ref{sec:kleene}, the induction proof principle in Section \ref{sec:peano} and the proof rules for while loops in Hoare logic in Section \ref{sec:hoare}.

\begin{definition}\label{def:utr-category}
  A traced monoidal category \(\Cat{C}\)
  is \emph{uniformly traced} if the trace operator satisfies the implication in \Cref{tab:uniformity} for all suitably typed \(f\), \(g\) and \(r\). A \emph{morphism of uniformly traced monoidal categories} is simply a morphism of traced monoidal categories. The category of uniformly traced monoidal categories and their morphisms is denoted by  \(\UTSMC\).
  \begin{table}[h!]
    \centering
    \mylabel{ax:trace:uniformity}{uniformity}
    \begin{tabular}{l r}
      \toprule
      if \(f \dcomp (r \perG \id{}) = (r \perG \id{}) \dcomp g\), then \(\trace_{S} f = \trace_{T} g\) & (\ref*{ax:trace:uniformity})\\
      \bottomrule
    \end{tabular}
    \caption{Uniformity axiom.}\label{tab:uniformity}
  \end{table}
\end{definition}

With string diagrams, the uniformity axiom is drawn as in Figure \ref{fig:uniformity}.
  \begin{figure}[h!]
    \centering
    \[
    \begin{array}{c}
      
    \InputIfFileExists{traceAx/uniformity/lhs.tikz}{}{\input{./tikz/traceAx/uniformity/lhs.tikz}}
 = 
    \InputIfFileExists{traceAx/uniformity/rhs.tikz}{}{\input{./tikz/traceAx/uniformity/rhs.tikz}}
 \stackrel{(\text{uniformity})}{\implies} 
    \InputIfFileExists{traceAx/uniformity/TR_lhs.tikz}{}{\input{./tikz/traceAx/uniformity/TR_lhs.tikz}}
 = 
    \InputIfFileExists{traceAx/uniformity/TR_rhs.tikz}{}{\input{./tikz/traceAx/uniformity/TR_rhs.tikz}}

    \end{array}
    \]
    \caption{Uniformity axiom in string diagrams.\label{fig:uniformity}}
  \end{figure}

\begin{remark}[Uniformity and sliding]\label{rem:uniformity-sliding}
  The sliding axiom is redundant as it follows from uniformity: 
    \[ 
    \InputIfFileExists{ufu.tikz}{}{\input{./tikz/ufu.tikz}}
 = 
    \InputIfFileExists{ufu.tikz}{}{\input{./tikz/ufu.tikz}}
 \implies 
    \begin{tikzpicture}
	\begin{pgfonlayer}{nodelayer}
		\node [style=label] (105) at (-2.5, -0.625) {$X$};
		\node [style=none] (117) at (-2, -0.625) {};
		\node [style=none] (118) at (-1.5, 0.625) {};
		\node [style=label] (120) at (4.5, -0.625) {$Y$};
		\node [style=none] (125) at (4, -0.625) {};
		\node [style=none] (126) at (3.5, 0.625) {};
		\node [style=stringlongbox] (128) at (0, 0) {$f$};
		\node [style=stringbox] (129) at (2, 0.625) {$u$};
		\node [style=none] (130) at (-1.5, 1.875) {};
		\node [style=none] (131) at (3.5, 1.875) {};
	\end{pgfonlayer}
	\begin{pgfonlayer}{edgelayer}
		\draw (118.center) to (129);
		\draw (129) to (126.center);
		\draw (125.center) to (117.center);
		\draw (131.center) to (130.center);
		\draw [bend right=90, looseness=1.75] (130.center) to (118.center);
		\draw [bend left=90, looseness=1.75] (131.center) to (126.center);
	\end{pgfonlayer}
\end{tikzpicture}
}
 = 
    \begin{tikzpicture}
	\begin{pgfonlayer}{nodelayer}
		\node [style=label] (105) at (4.5, -0.625) {$Y$};
		\node [style=none] (117) at (4, -0.625) {};
		\node [style=none] (118) at (3.5, 0.625) {};
		\node [style=label] (120) at (-2.5, -0.625) {$X$};
		\node [style=none] (125) at (-2, -0.625) {};
		\node [style=none] (126) at (-1.5, 0.625) {};
		\node [style=stringlongbox] (128) at (2, 0) {$f$};
		\node [style=stringbox] (129) at (0, 0.625) {$u$};
		\node [style=none] (130) at (3.5, 1.875) {};
		\node [style=none] (131) at (-1.5, 1.875) {};
	\end{pgfonlayer}
	\begin{pgfonlayer}{edgelayer}
		\draw (118.center) to (129);
		\draw (129) to (126.center);
		\draw (125.center) to (117.center);
		\draw (131.center) to (130.center);
		\draw [bend left=90, looseness=1.75] (130.center) to (118.center);
		\draw [bend right=90, looseness=1.75] (131.center) to (126.center);
	\end{pgfonlayer}
\end{tikzpicture}
}
 \]
 This fact will be useful for constructing the uniformly traced monoidal category freely generated by a monoidal category \(\Cat{C}\).
\end{remark}

\subsection{The Two Traced Monoidal Structures of $\Rel$}\label{sec:2trREL}
Recall from Section \ref{sec:2monREL} that the category of sets and relations $\Rel$ has two different monoidal structures:  $(\Rel, \per, \uno)$, intuitively representing data flow, and $(\Rel, \piu, \zero)$ representing control flow.
Since \cite{bainbridge1976feedback} (see also \cite{selinger1998note}), it is known that both monoidal categories are traced. For a relation $R\colon S \per X \to S \per Y$, its trace $\trace_{S}(R)$ in $(\Rel, \per, \uno)$ is defined as 
\[\trace_{S}(R) \defeq \{(x,y) \mid \exists s\in S. \; (\, (s,x) \,,\, (s,y) \, )\in R\}\subseteq X \times Y\]
To describe the trace in $(\Rel, \piu, \zero)$, we first need to observe that any relation $R\colon S \piu X \to S \piu Y$ can be decomposed as $R=R_{S,S} \cup R_{S,Y} \cup R_{X,S} \cup R_{X,Y}$ where 
\begin{equation}\label{eq:dec}
\begin{array}{rcl}
R_{S,S} &\defeq&\{(s,t)\mid (\,(s,0), (t,0) \,)\in R \} \subseteq S \times S \\ 
R_{S,Y} &\defeq&\{(s,y)\mid (\,(s,0), (y,1) \,)\in R \} \subseteq S \times Y \\ 
R_{X,S} &\defeq&\{(x,t)\mid (\,(x,1), (t,0) \,)\in R \} \subseteq X \times S \\ 
R_{X,Y} &\defeq&\{(x,y)\mid (\,(x,1), (y,1) \,)\in R \} \subseteq X \times Y 
\end{array}
\end{equation}
Then, the trace of $R\colon S \piu X \to S \piu Y$ in $(\Rel, \piu, \zero)$ is given by 
\begin{equation}%
  \trace_{S}(R) \defeq ( \, R_{X,S}; (R_{S,S})^\star ; R_{S,Y} \, ) \cup R_{X,Y}
\end{equation}
where, for any relation $T\subseteq S\times S$, $T^\star$ stands for the reflexive and transitive closure of $T$ (see e.g. \cite{joyal1996traced}).
The reader will see in Section \ref{sec:kleene}, that both the decomposition in \eqref{eq:dec} and the formula for the trace  in $(\Rel, \piu, \zero)$ come from its finite biproduct structure. In the same section, it will also become clear that such trace is uniform.
Instead, the trace in $(\Rel, \per, \uno)$ is not uniform: the readers can easily convince themselves by taking $r$ in Table \ref{tab:uniformity} to be the empty relation and observe that, for all arrows $f$ and $g$, the premise of the implication always holds. To properly tackle this kind of issues, in several works (see e.g. \cite{hasegawa2003uniformity}), uniformity is required on a restricted class of morphisms $r$, named \emph{strict} but we will not make similar restrictions here.

\subsection{The Free Uniform Trace}\label{sec:UTConstruction}
We now illustrate a construction that will play a key role in the rest of the paper: from a monoidal category, we freely generate a uniformly traced one.
Our first step consists in showing that, given a traced monoidal category $\Cat{C}$, one can always transform it into a uniformly traced one, named $\Unif(\Cat{C})$.

Let $\basicR$ be a set of pairs $(f, g)$ of arrows of $\Cat{C}$ with the same domain and codomain. We define $\congB$ to be the set generated by the following inference system (where $f \congB g$ is a shorthand for $(f,g)\in \congB$).
\begin{equation}\label{eq:uniformcong}
        \!\!\begin{array}{@{\qquad}c@{\qquad\qquad}c@{\qquad}c@{\qquad}}
            
            \inferrule*[right=($r$)]{-}{f \mathrel{\congB} f}
            &    
            \inferrule*[right=($t$)]{f \mathrel{\congB} g \quad g \mathrel{\congB} h}{f \mathrel{\congB} h}
            &
            \inferrule*[right=($s$)]{f \mathrel{\congB} g}{g \mathrel{\congB} f}
            \\[8pt]
                       \inferrule*[right=($\basicR$)]{f \mathbin{\basicR} g}{f \mathrel{\congB} g}
&
            \inferrule*[right=($;$)]{f \mathrel{\congB} f' \quad g \mathrel{\congB} g'}{f;g \mathrel{\congB} f';g'}
            &
            \inferrule*[right=($\perG$)]{f \mathrel{\congB} f' \quad g \mathrel{\congB} g'}{f\perG g \mathrel{\congB} f' \perG g'}
            \\[8pt]
            \multicolumn{3}{c}{
            \inferrule*[right=($ut$)]{u \mathrel{\congB} v \qquad f ; (u \piu \id{}) \mathrel{\congB} (v \piu \id{}) ; g}{\trace_{S}f \mathrel{\congB} \trace_{T}g}
            }
        \end{array}
\end{equation}
Observe that, for all $\basicR$, $\congB$ is an equivalence relation by (r), (t) and (s) and it is closed by composition $;$ and monoidal product $\perG$, thanks to the inference rules ($;$) and ($\perG$). By taking $u$ and $v$ in ($ut$) to be identities, $\congB$ is also closed by $\trace$: if $f\congB g$, then $\trace_{S}f \congB \trace_{S} g$.
When $\basicR$ is the empty set $\emptyset$, we just write $\approx$ in place of $\approx_{\emptyset}$. When we want to emphasise the underlying category $\Cat{C}$, we write $\approx_{\Cat{C}}$ in place of $\approx$.

We call $\Unif(\Cat{C})$ the quotient of $\Cat{C}$ by $\approx_{\Cat{C}}$. More explicitly, objects of $\Unif(\Cat{C})$ are the same as those of $\Cat{C}$ and arrows are $\approx_{\Cat{C}}$-equivalence classes $[f]\colon X \to Y$ of arrows $f\colon X\to Y$ in $\Cat{C}$.

\begin{proposition}\label{prop:uniformCat}
For a   traced monoidal category $\Cat{C}$, $\Unif(\Cat{C})$ is a uniformly traced monoidal category.
\end{proposition}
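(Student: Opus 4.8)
The plan is to verify directly that $\Unif(\Cat{C})$ inherits a well-defined traced monoidal structure from $\Cat{C}$ and then to check the uniformity axiom. The crux is that $\congB$ (taken with $\basicR = \emptyset$, so just $\approx_{\Cat{C}}$) is a congruence with respect to all the operations involved, which is already noted in the excerpt: by rules (r), (t), (s) it is an equivalence relation, by ($;$) and ($\perG$) it is compatible with composition and monoidal product, and by instantiating ($ut$) with $u = v = \id{}$ it is compatible with $\trace$. So first I would record that these closure properties make the quotient category $\Unif(\Cat{C})$ well-defined: objects are those of $\Cat{C}$, arrows are $\approx_{\Cat{C}}$-classes, and composition, $\perG$, and $\trace_S$ descend to representatives. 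Because the structural operations are defined on representatives and respect $\approx_{\Cat{C}}$, the monoidal and trace axioms of Table~\ref{tab:trace-axioms}, which hold in $\Cat{C}$, transport verbatim to $\Unif(\Cat{C})$: each is an equation between composite arrows, so it holds for the classes as soon as it holds for chosen representatives. This shows $\Unif(\Cat{C})$ is a traced monoidal category.

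The substantive part is uniformity. I must show: if $[f] \dcomp ([r] \perG \id{}) = ([r] \perG \id{}) \dcomp [g]$ in $\Unif(\Cat{C})$, then $\trace_S[f] = \trace_T[g]$. Unfolding the quotient, the hypothesis says exactly that $f \dcomp (r \perG \id{}) \approx_{\Cat{C}} (r \perG \id{}) \dcomp g$ for chosen representatives $f, g, r$. This is precisely the second premise of the inference rule ($ut$) with $u = v = r$; and the first premise $r \approx_{\Cat{C}} r$ holds by reflexivity (r). Therefore ($ut$) yields $\trace_S f \approx_{\Cat{C}} \trace_T g$, i.e.\ $\trace_S[f] = \trace_T[g]$ in $\Unif(\Cat{C})$. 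This is the whole point of how the relation $\congB$ was designed: the rule ($ut$) is a syntactic encoding of the uniformity implication, so uniformity in the quotient is immediate once one matches the premises correctly.

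The one point deserving care — and the place I expect the only real friction — is the quantifier structure of the uniformity hypothesis: the premise of ($ut$) refers to a specific mediating arrow, whereas uniformity in $\Unif(\Cat{C})$ is stated for an arrow-class $[r]$. I would make explicit that any representative $r$ of the class $[r]$ works, and that the equation between classes gives an $\approx_{\Cat{C}}$-relation between the corresponding composite representatives, so that ($ut$) applies with $u = v = r$. A secondary bookkeeping step is confirming that $\trace$ is genuinely well-defined on classes, i.e.\ independent of the chosen representative; this is exactly the $u = v = \id{}$ instance of ($ut$) noted above, so it requires no new argument. With these observations assembled, the proposition follows with essentially no computation: all the mathematical content was front-loaded into the definition of $\approx_{\Cat{C}}$ via the inference system in~\eqref{eq:uniformcong}.
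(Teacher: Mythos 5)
Your proposal is correct and follows essentially the same route as the paper: establish that $\approx_{\Cat{C}}$ is a congruence for $;$, $\perG$ and $\trace$ so the quotient inherits the traced monoidal structure, then derive uniformity by unfolding the class equation to an $\approx_{\Cat{C}}$-relation between representatives and applying the rule ($ut$). The only cosmetic difference is that the paper unfolds the hypothesis via possibly distinct representatives $r_1 \approx r_2$ and invokes ($ut$) in its general two-witness form, whereas you take $u = v = r$ with reflexivity supplying the first premise; both instantiations are valid.
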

The assignment $\Cat{C} \mapsto \Unif(\Cat{C})$ extends to morphisms of traced monoidal categories: for $F\colon \Cat{B} \to \Cat{C}$, the functor $\Unif(F) \colon \Unif(\Cat{B}) \to \Unif(\Cat{C})$ is defined as
\[\Unif(F)(X)\defeq X \text{ for all objects }X\text{, and } \Unif(F)[f]\defeq [Ff] \text{ for all arrows }[f]\text{.}\]

\begin{lemma}\label{lemma:uniformity-quotient-functor}
  There is a functor \(\Unif \colon \TSMC \to \UTSMC\) defined as above.
\end{lemma}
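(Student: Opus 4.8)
The plan is to verify that the assignment $\Unif$ described in the paragraph preceding the statement is indeed a functor between the two stated categories. Since the action of $\Unif$ on objects and on morphisms has already been spelled out ($\Unif(F)(X) \defeq X$ and $\Unif(F)[f] \defeq [Ff]$), the real content of the lemma is (i) that $\Unif(F)$ is a well-defined morphism of uniformly traced monoidal categories whenever $F$ is a morphism of traced monoidal categories, and (ii) that $\Unif$ respects identities and composition of such morphisms. I would organise the proof around these two obligations, having already recorded by Proposition \ref{prop:uniformCat} that the \emph{objects} $\Unif(\Cat B)$ and $\Unif(\Cat C)$ are genuine uniformly traced monoidal categories, so only the behaviour on arrows of $\TSMC$ needs attention.

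First I would establish well-definedness of $\Unif(F)$ on equivalence classes, which is the heart of the argument. Given a traced monoidal functor $F \colon \Cat B \to \Cat C$, I must show that $f \approx_{\Cat B} g$ implies $Ff \approx_{\Cat C} Fg$, so that $[Ff]$ does not depend on the chosen representative $f$. The natural approach is induction on the derivation of $f \approx_{\Cat B} g$ in the inference system \eqref{eq:uniformcong} (taking $\basicR = \emptyset$, so that the rule $(\basicR)$ never fires). The reflexivity, symmetry and transitivity rules $(r),(s),(t)$ are immediate, as $\approx_{\Cat C}$ is itself an equivalence relation. The congruence rules $(;)$ and $(\perG)$ go through because $F$ is a monoidal functor: $F(f;g) = Ff;Fg$ and $F(f \perG g) = Ff \perG Fg$, combined with closure of $\approx_{\Cat C}$ under $;$ and $\perG$. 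The one remaining case is the uniform-trace rule $(ut)$, and this is where I expect the only genuine work: from the inductive hypotheses $Fu \approx_{\Cat C} Fv$ and $F(f;(u\piu\id{})) \approx_{\Cat C} F((v\piu\id{});g)$ I must conclude $F(\trace_S f) \approx_{\Cat C} F(\trace_T g)$. Here I would use that $F$ preserves the trace, $F(\trace_S f) = \trace_{FS}(Ff)$ and $F(\trace_T g) = \trace_{FT}(Fg)$, that $F$ is strict monoidal so $F(u\piu\id{}) = Fu \piu \id{}$ and likewise for $v$, and that $F(f;(u\piu\id{})) = Ff;(Fu\piu\id{})$; applying the rule $(ut)$ in $\Cat C$ to $Fu \approx_{\Cat C} Fv$ and $Ff;(Fu\piu\id{}) \approx_{\Cat C} (Fv\piu\id{});Fg$ then yields the desired conclusion.

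With well-definedness in hand, I would check that $\Unif(F)$ is itself a morphism in $\UTSMC$, i.e.\ a trace-preserving strict monoidal functor on the quotients. Functoriality on the quotient ($\Unif(F)([f];[g]) = \Unif(F)[f];\Unif(F)[g]$ and preservation of identities) follows directly from functoriality of $F$ and the definition of composition of equivalence classes; preservation of $\perG$, $\unoG$ and the symmetries is inherited from $F$ being strict symmetric monoidal, since these operations on $\Unif(\Cat C)$ are defined representative-wise. Trace preservation reads $\Unif(F)[\trace_S f] = [F(\trace_S f)] = [\trace_{FS}(Ff)] = \trace_{FS}\Unif(F)[f]$, using again that $F$ preserves the trace; note the trace on $\Unif(\Cat C)$ is well-defined precisely by the closure of $\approx$ under $\trace$ noted just after \eqref{eq:uniformcong}.

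Finally I would verify the two functoriality conditions for $\Unif$ as an assignment on $\TSMC$. On objects both $\Unif(\mathrm{id}_{\Cat C})$ and $\mathrm{id}_{\Unif(\Cat C)}$ act as the identity, and on an arrow $\Unif(\mathrm{id}_{\Cat C})[f] = [\mathrm{id}_{\Cat C}f] = [f]$, so $\Unif$ preserves identities. For composition, given $F\colon \Cat A \to \Cat B$ and $G\colon \Cat B \to \Cat C$ one computes $\Unif(G\circ F)[f] = [(G\circ F)f] = [G(Ff)] = \Unif(G)[Ff] = \Unif(G)(\Unif(F)[f])$, so $\Unif(G\circ F) = \Unif(G)\circ\Unif(F)$; on objects this is trivial. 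This completes the verification that $\Unif \colon \TSMC \to \UTSMC$ is a functor. The only step requiring care is the $(ut)$ case of the well-definedness induction, and that reduces cleanly to the hypothesis that $F$ is strict monoidal and trace-preserving.
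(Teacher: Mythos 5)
Your proposal is correct and follows essentially the same route as the paper: the paper also reduces everything to showing that a traced monoidal functor preserves $\approx$ (its Lemma~\ref{lemma:traced-monoidal-functors-preserve-equivalence}), with the $(ut)$ case handled exactly as you describe via strict monoidality and trace preservation of $F$. The only cosmetic difference is that the paper phrases the well-definedness argument as ``the pullback of $\approx_{\Cat C}$ along $F$ is a uniform congruence, hence contains the smallest one $\approx_{\Cat B}$'' (via its fixed-point characterisation) rather than as an induction on derivations, but the case analysis is identical.
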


Recall that $\TSMC$ and $\UTSMC$ are the categories of, respectively, traced monoidal categories and uniformly traced monoidal ones.

\begin{proposition}\label{prop:free-uniform}
Let $\fun{U}\colon \UTSMC \to \TSMC$ be the obvious embedding. Then $\Unif$ is left adjoint to $\fun{U}$. %
  \[\begin{tikzcd}
        \TSMC
        \arrow[r, "\Unif"{name=F}, bend left] &
        \UTSMC
        \arrow[l, "\fun{U}"{name=U}, bend left]
        \arrow[phantom, from=F, to=U, "\vdash" rotate=90]
    \end{tikzcd}\]
\end{proposition}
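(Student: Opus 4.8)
The plan is to establish the adjunction $\Unif \dashv \fun{U}$ by exhibiting a universal arrow, which is the most economical route given the machinery already in place. Concretely, I would show that for every traced monoidal category $\Cat{C}$, the quotient functor $q_{\Cat{C}} \colon \Cat{C} \to \fun{U}\Unif(\Cat{C})$ sending each arrow $f$ to its equivalence class $[f]$ (and acting as the identity on objects) is the unit of the adjunction at $\Cat{C}$, i.e.\ a universal arrow from $\Cat{C}$ to $\fun{U}$. First I would check that $q_{\Cat{C}}$ is indeed a morphism of traced monoidal categories: it preserves $\perG$, $\unoG$, the symmetries, and the trace on the nose, since all of these are defined representative-wise on $\Unif(\Cat{C})$ and the relation $\approx_{\Cat{C}}$ is by construction a congruence closed under $;$, $\perG$ and $\trace$ (via rules $(;)$, $(\perG)$ and $(ut)$ with identities for $u,v$). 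This uses Proposition~\ref{prop:uniformCat} to know the target is genuinely uniformly traced.

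Next I would verify the universal property. Let $\Cat{D}$ be a uniformly traced monoidal category and let $F \colon \Cat{C} \to \fun{U}\Cat{D}$ be any morphism of traced monoidal categories. I must produce a unique morphism $\bar{F}\colon \Unif(\Cat{C}) \to \Cat{D}$ in $\UTSMC$ with $q_{\Cat{C}} \dcomp \fun{U}\bar{F} = F$. The only possible definition is $\bar{F}(X)\defeq F(X)$ on objects and $\bar{F}[f]\defeq F(f)$ on arrows, so uniqueness is immediate once well-definedness is shown; the factorisation equation then holds by construction. The crux is well-definedness: I must prove that $f \approx_{\Cat{C}} g$ implies $F(f) = F(g)$ in $\Cat{D}$. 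I would do this by induction on the derivation of $f \approx_{\Cat{C}} g$ in the inference system~\eqref{eq:uniformcong}. The rules $(r),(t),(s)$ are trivial; rule $(\basicR)$ does not arise here since we are quotienting by $\approx_{\Cat{C}} = \approx_{\emptyset}$, so its premise is never met; the rules $(;)$ and $(\perG)$ follow because $F$ is a monoidal functor; and the decisive case is $(ut)$.

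The main obstacle, and the only genuinely non-routine step, is the $(ut)$ case. Here the derivation concludes $\trace_{S} f \approx_{\Cat{C}} \trace_{T} g$ from premises $u \approx_{\Cat{C}} v$ and $f\dcomp(u \piu \id{}) \approx_{\Cat{C}} (v \piu \id{}) \dcomp g$. By the inductive hypothesis I obtain $F(u) = F(v)$ and $F(f)\dcomp(F(u)\perG \id{}) = (F(v)\perG \id{})\dcomp F(g)$ in $\Cat{D}$ (using that $F$ preserves $\perG$ and identities, and reading the $\piu$ of the inference system as the monoidal product $\perG$). Writing $r \defeq F(u) = F(v)$, these equations are precisely the hypothesis $F(f)\dcomp(r \perG \id{}) = (r \perG \id{})\dcomp F(g)$ of the uniformity axiom (uniformity) in $\Cat{D}$. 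Since $\Cat{D}$ is \emph{uniformly} traced, I may apply (uniformity) to conclude $\trace_{S}(F(f)) = \trace_{T}(F(g))$, and as $F$ preserves traces this is exactly $F(\trace_{S} f) = F(\trace_{T} g)$, completing the case. This is where uniformity of the target is essential: without it the $(ut)$ rule would have no counterpart in $\Cat{D}$ and well-definedness would fail. Finally, I would note that naturality of the bijection $\UTSMC(\Unif(\Cat{C}),\Cat{D}) \cong \TSMC(\Cat{C},\fun{U}\Cat{D})$ in both arguments follows routinely from the representative-wise definition of $\Unif$ on morphisms together with functoriality established in Lemma~\ref{lemma:uniformity-quotient-functor}, so the universal-arrow characterisation upgrades to the full adjunction.
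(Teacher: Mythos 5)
Your proposal is correct and follows essentially the same route as the paper: both establish the adjunction via the universal arrow given by the identity-on-objects quotient functor, with the uniformity axiom of the target category doing the decisive work in the $(ut)$ case of well-definedness. The only difference is organizational — where you run a single direct induction on derivations of $f \approx_{\Cat{C}} g$ to conclude $F(f)=F(g)$, the paper factors this into two reusable lemmas, namely Lemma~\ref{lemma:traced-monoidal-functors-preserve-equivalence} (any traced monoidal functor maps $\approx_{\Cat{B}}$ into $\approx_{\Cat{D}}$, proved by showing the pulled-back relation is a uniform congruence containing the generators) followed by Lemma~\ref{rem:uniformity-quotient-idempotent} ($\approx_{\Cat{D}}$ is contained in equality when $\Cat{D}$ is uniformly traced) — but the mathematical content is the same.
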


The next step consists in recalling from~\cite{katis2002feedback} the construction of the free traced monoidal category \(\freeTr (\Cat{C})\) on a symmetric monoidal category \(\Cat{C}\).
 \(\freeTr (\Cat{C})\) has  the same objects as \(\Cat{C}\), and morphisms \((f \mid S) \colon X \to Y\) are pairs of an object \(S\) and a morphism \(f \colon S \perG X \to S \perG Y\) of \(\Cat{C}\).  Morphisms are quotiented by \emph{yanking} and \emph{sliding}. Compositions, monoidal products and trace in \(\freeTr (\Cat{C})\) are recalled below.
  \begin{align}
    \st{f}{S} \dcomp \st{g}{T} &\defeq \st{\CStSeq{f}{S}{X}{g}{T}{Z}}{S \perG T} \label{eq:seq in Utr} \\
    \st{f}{S} \perG \st{g}{T} &\defeq \st{\CStPar{f}{S}{X}{Y}{g}{T}{Z}{W}}{S \perG T} \label{eq:per in Utr} \\
    \trace_T\st{f}{S} &\defeq \st{f}{S \perG T} \label{eq:trace in Utr}
  \end{align}
 The assignment $\Cat{C} \mapsto \freeTr(\Cat{C})$ extends to a functor $\freeTr \colon \SMC \to \TSMC$ which is the left adjoint to the obvious forgetful
 $\fun{U}\colon \TSMC \to \SMC$.
   \begin{equation}\label{eq:nonuniform}
   \begin{tikzcd}
        \SMC
        \arrow[r, "\freeTr"{name=F}, bend left] &
        \TSMC
        \arrow[l, "\fun{U}"{name=U}, bend left]
        \arrow[phantom, from=F, to=U, "\vdash" rotate=90]
    \end{tikzcd}\end{equation}
One can compose the  adjunction above with the one of Proposition \ref{prop:free-uniform}, to obtain the following result where  $\UTr \colon \SMC \to \UTSMC$ is the composition of $\freeTr \colon \SMC \to \TSMC$ with $\Unif \colon \TSMC \to \UTSMC$.

\begin{theorem}\label{th:free-uniform-trace}
Let $\fun{U}\colon \UTSMC \to \SMC$ be the obvious forgetful. Then $\UTr$ is left adjoint to $\fun{U}$:
   \[\begin{tikzcd}
        \SMC
        \arrow[r, "\UTr"{name=F}, bend left] &
        \UTSMC
        \arrow[l, "\fun{U}"{name=U}, bend left]
        \arrow[phantom, from=F, to=U, "\vdash" rotate=90]
    \end{tikzcd}\]
\end{theorem}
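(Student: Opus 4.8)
The statement to prove is \Cref{th:free-uniform-trace}: that $\UTr$ is left adjoint to the forgetful functor $\fun{U}\colon \UTSMC \to \SMC$.

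\medskip

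The plan is to obtain this adjunction purely formally, by composing the two adjunctions already established earlier in the excerpt, rather than verifying a universal property from scratch. Recall that $\UTr$ is \emph{defined} as the composite $\freeTr \dcomp \Unif$ (reading left-to-right as functors $\SMC \to \TSMC \to \UTSMC$), i.e.\ $\UTr = \Unif \circ \freeTr$ in the usual functional notation. First I would invoke the adjunction of \eqref{eq:nonuniform}, namely $\freeTr \dashv \fun{U}_1$ where $\fun{U}_1 \colon \TSMC \to \SMC$ is the forgetful functor that drops the trace. Second, I would invoke \Cref{prop:free-uniform}, which gives $\Unif \dashv \fun{U}_2$ where $\fun{U}_2 \colon \UTSMC \to \TSMC$ is the embedding of uniformly traced categories into traced ones. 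The key general fact is that adjoints compose: if $\freeTr \dashv \fun{U}_1$ and $\Unif \dashv \fun{U}_2$, then $\Unif \circ \freeTr \dashv \fun{U}_1 \circ \fun{U}_2$.

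\medskip

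The main thing to check is therefore the identification of the two forgetful functors. I would verify that the composite $\fun{U}_1 \circ \fun{U}_2 \colon \UTSMC \to \TSMC \to \SMC$ is precisely the forgetful functor $\fun{U}\colon \UTSMC \to \SMC$ named in the statement. This is immediate from the definitions: $\fun{U}_2$ regards a uniformly traced monoidal category as a traced monoidal category (forgetting that the trace happens to be uniform, which is a property and not extra structure, consistent with the remark that a morphism of uniformly traced categories is simply a morphism of traced ones), and $\fun{U}_1$ then forgets the trace operator entirely, leaving the underlying symmetric monoidal category. Their composite forgets the trace outright, which is exactly the $\fun{U}$ of the statement. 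With this identification in hand, the composite adjunction $(\Unif \circ \freeTr) \dashv (\fun{U}_1 \circ \fun{U}_2)$ becomes $\UTr \dashv \fun{U}$, which is the claim.

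\medskip

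I do not anticipate a genuine obstacle here, since the hard technical work has already been carried out in \Cref{prop:uniformCat}, \Cref{lemma:uniformity-quotient-functor}, \Cref{prop:free-uniform}, and in the cited construction of $\freeTr$ from \cite{katis2002feedback}. The only point deserving care is purely bookkeeping: one should confirm that the functoriality of $\UTr \colon \SMC \to \UTSMC$ used in the statement agrees with the composite $\freeTr \dcomp \Unif$, which is how \Cref{th:free-uniform-trace} introduces $\UTr$, so no separate functoriality argument is required. If a reader wanted the explicit unit and counit, I would note that they are the pasted composites of the units and counits of the two constituent adjunctions, obtained in the standard way by whiskering, but for the bare statement of the adjunction the composition principle suffices.
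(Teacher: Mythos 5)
Your proposal is correct and follows essentially the same route as the paper: the paper's proof also obtains the adjunction by composing the free-trace adjunction from \cite{katis2002feedback} with the quotienting adjunction of \Cref{prop:free-uniform}, noting that the unit and counit are the pasted composites. The paper additionally spells out the resulting unit and counit explicitly (mapping $f$ to $[(f\mid I)]$ and $[(f\mid S)]$ to $\trace_S f$), but this is exactly the whiskering you mention and adds nothing beyond bookkeeping.
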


Since, by Remark \ref{rem:uniformity-sliding}, uniformity entails sliding, one can conveniently rephrase the construction of the freely generated uniformly traced monoidal category \(\UTr (\Cat{C})\) as follows.

\begin{definition}\label{def:uniform-state-construction}
  For a symmetric monoidal category \(\Cat{C}\), define \(\UTr (\Cat{C})\) with the same objects as \(\Cat{C}\), morphisms \((f \mid S) \colon X \to Y\) are pairs of an object \(S\) and a morphism \(f \colon S \perG X \to S \perG Y\) of \(\Cat{C}\).
  The morphisms are quotiented by the \emph{yanking}  and $\approx_{\Cat{C}}$.
\end{definition}

\subsection{The Free Uniform Trace preserves Products, Coproducts and Biproducts}\label{sec:UTConstruction-products}

A very important class of traced monoidal categories are those where the monoidal product is a product, a coproduct or a biproduct (see e.g. \cite{cstef?anescu1987flowchart,cstef?nescu1987flowchartII,esik1980identities}).
Recall that the latter have been introduced in Definition \ref{def:biproduct category}: finite products categories are defined similarly but without the monoids, finite coproducts categories without the comonoids.

\begin{proposition}\label{prop:free-uniform-trace-cartesian}
  The adjunction in \Cref{th:free-uniform-trace} restricts to finite product categories.
  In particular, if \(\Cat{C}\) is a finite product category, then so is \(\UTr(\Cat{C})\).
\end{proposition}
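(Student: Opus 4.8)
The plan is to exhibit the finite-product structure on $\UTr(\Cat C)$ directly and then check that both $\UTr$ and the forgetful functor respect it. Recall that a finite product category is precisely a symmetric monoidal category carrying a natural cocommutative comonoid $(\diag X, \bang X)$ subject to the coherence axioms of Figure~\ref{fig:fbcoherence}, i.e.\ the comonoid-only fragment of Definition~\ref{def:biproduct category} (Fox's theorem). I would equip $\UTr(\Cat C)$ with the comonoid transported along the canonical identity-on-objects embedding $\Cat C \to \UTr(\Cat C)$, $f \mapsto \st{f}{\unoG}$. Since this embedding is a strict symmetric monoidal functor, the images $\st{\diag X}{\unoG}$ and $\st{\bang X}{\unoG}$ automatically form a cocommutative comonoid satisfying the coherence axioms, because all those equations already hold in $\Cat C$ and are preserved by a monoidal functor. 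The only substantial point is therefore naturality: every morphism $\st{f}{S}\colon X \to Y$ of $\UTr(\Cat C)$, i.e.\ $\trace_S(f)$ for $f\colon S\perG X \to S \perG Y$ in $\Cat C$, must be a comonoid homomorphism.

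This naturality is exactly where uniformity (\ref{ax:trace:uniformity}) is needed. For the counit, tightening (\ref{ax:trace:tightening}) gives $\trace_S(f)\dcomp \bang Y = \trace_S(f \dcomp (\id S \perG \bang Y))$, and I would apply uniformity with $r = \bang S \colon S \to \unoG$ to compare this with $\trace_\unoG(\bang X) = \bang X$ (using vanishing (\ref{ax:trace:vanishing})). The premise of uniformity unfolds, via the coherence $\bang S \perG \bang Y = \bang{S\perG Y}$, to $f\dcomp \bang{S\perG Y} = \bang{S \perG X}$, which is just counit-naturality of $f$ in $\Cat C$ and hence holds. For the comultiplication I would prove
\[
\trace_S(f)\dcomp \diag Y \;=\; \diag X \dcomp (\trace_S(f) \perG \trace_S(f))
\]
by rewriting both sides as traces: the left one over $S$ via tightening, and the right one over $S\perG S$ by first expressing the tensor $\trace_S(f)\perG\trace_S(f)$ as a single trace using strength (\ref{ax:trace:strength}) and a symmetry conjugation $\hat f$, then sliding $\diag X$ inside by tightening. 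Uniformity with $r = \diag S \colon S \to S \perG S$ then reduces the goal to the equation in $\Cat C$ whose two sides are $f \dcomp (\id S \perG \diag Y)\dcomp (\diag S \perG \id{})$ and $(\diag S \perG \id{}) \dcomp (\id{S\perG S} \perG \diag X) \dcomp \hat f$; this is precisely the comonoid-homomorphism law $f \dcomp \diag{S\perG Y} = \diag{S\perG X}\dcomp (f\perG f)$ of $\Cat C$, modulo the coherence isomorphisms of Figure~\ref{fig:fbcoherence} that decompose $\diag{S\perG X}$ and $\diag{S\perG Y}$ into $\diag S$, $\diag X$, $\diag Y$ and interleaving symmetries.

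This establishes that $\UTr(\Cat C)$ is a finite product category. To see that the adjunction of Theorem~\ref{th:free-uniform-trace} restricts, I would observe that the embedding $\Cat C \to \UTr(\Cat C)$ is by construction a morphism of finite product categories (it preserves the comonoid on the nose), and that for a product-preserving $F$ the functor $\UTr(F)$ again preserves the comonoid, since on generating morphisms it is computed through the same embedding. Hence $\UTr$ corestricts to a functor from finite product categories into uniformly traced finite product categories, and the forgetful functor corestricts as well. The required universal property is inherited from Theorem~\ref{th:free-uniform-trace}: given a uniformly traced finite product category $\Cat D$ and a product-preserving functor $\Cat C \to \fun U(\Cat D)$, the unique traced extension provided there automatically preserves the comonoid, because it does so on the image of $\Cat C$ and every morphism of $\UTr(\Cat C)$ is built from that image by operations ($\perG$, $\dcomp$, $\trace$) under which comonoid preservation is closed.

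The main obstacle is the premise of the uniformity step for comultiplication: matching the interleaving symmetries that decompose $\diag{S\perG X}$ and $\diag{S\perG Y}$ with the symmetry conjugation $\hat f$ used to present $\trace_S(f)\perG \trace_S(f)$ as a single trace over $S\perG S$. This is a computation entirely internal to $\Cat C$ — no trace reasoning beyond tightening, strength, vanishing and uniformity is involved — but the bookkeeping of the coherence isomorphisms is delicate and is essentially the only place where something could go subtly wrong; once it is set up correctly, the comonoid-homomorphism law of $\Cat C$ closes the argument.
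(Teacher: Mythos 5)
Your proposal is correct and follows essentially the same route as the paper: transport the comonoid along the unit $f \mapsto \st{f}{\unoG}$, observe that the coherence equations are inherited, and establish naturality of $\bang{}$ and $\diag{}$ at a traced morphism $\st{g}{S}$ by applying uniformity with $r = \bang{S}$ and $r = \diag{S}$ respectively, the premises reducing to the comonoid-naturality of $g$ in $\Cat{C}$. The only cosmetic difference is that the paper first splits $g$ into its product components $g = \diag{S\perG X}\dcomp(g_1\perG g_2)$ before invoking uniformity, whereas you apply uniformity to $g$ directly; both close the argument the same way, and the concluding restriction of the adjunction is argued identically.
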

\begin{proof}
  Let \(f \colon X \to Y\) be a morphism in \(\UTr(\Cat{C})\).
  By the definition of \(\UTr\), there is a morphism \(g \colon S \perG X \to S \perG Y\) in \(\Cat{C}\) whose trace is \(f\), \(f = (g \mid S)\).
  By the universal property of products, \(g\) has two components: \(g = \diag{S \perG X} \dcomp (g_{1} \perG g_{2})\).
  The natural comonoid structure \((\diag{},\bang{})\) of \(\Cat{C}\) gives a comonoid structure \(((\diag{} \mid \unoG), (\bang{} \mid \unoG))\) in \(\UTr(\Cat{C})\) via the unit of the adjunction, \(\eta_{\Cat{C}}\).
  We show that this comonoid structure is natural in \(\UTr(\Cat{C})\).
  We can rewrite \(f \dcomp (\bang{Y} \mid \unoG)\) using naturality of \(\bang{Y}\) in \(\Cat{C}\).
  \[
    
    % [inline block 0: 21 envs, 39515 chars -> data_tex | \begin{tikzpicture} 	\begin{pgfonlayer}{nodelayer}...]

}
 \]
  This shows that \(\UTr(\Cat{C})\) is a finite product category.
  Since a finite product functor is just a symmetric monoidal functor between finite product categories and a traced finite product functor is just a traced monoidal functor between traced finite product categories, the functor \(\UTr \colon \SMC \to \UTSMC\) restricts to a functor \(\UTr \colon \CMC \to \UTCMC\) from finite product categories to uniformly traced finite product categories.
  For the same reason, the unit and counit of the adjunction also restrict.
\end{proof}

\begin{remark}
Note how the above proof exploits the assumption of uniformity. In fact, it might be the case that $\Cat{C}$ has finite products but $\freeTr (\Cat{C})$ does not.
\end{remark}

\begin{definition}\label{def:utfbcategory}
A monoidal category $\Cat{C}$ is a \emph{uniformly traced finite biproduct} category (shortly ut-fb category) if it is uniformly traced and has finite biproducts.
Morphisms of ut-fb categories are monoidal functors that are both morphisms of traced monoidal categories and of finite biproduct categories.
We write $\FBUTr$  for the category of ut-fb categories and their morphisms.
\end{definition}

\begin{corollary}\label{cor:free-uniform-trace-fb}
  The adjunction in \Cref{th:free-uniform-trace} restricts to finite biproduct categories.
  \[\begin{tikzcd}
        \FBC
        \arrow[r, "\UTr"{name=F}, bend left] &
        \FBUTr
        \arrow[l, "\fun{U}"{name=U}, bend left]
        \arrow[phantom, from=F, to=U, "\vdash" rotate=90]
    \end{tikzcd}\]
\end{corollary}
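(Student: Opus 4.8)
The plan is to obtain the finite biproduct structure on $\UTr(\Cat{C})$ by combining Proposition \ref{prop:free-uniform-trace-cartesian} with its dual, and then to check that the resulting product and coproduct structures assemble into a genuine biproduct. Recall that a finite biproduct category is at once a finite product category, through the cocommutative comonoid $(\diag{},\bang{})$, and a finite coproduct category, through the commutative monoid $(\codiag{},\cobang{})$, with the two interacting structures satisfying the coherence axioms of Figure \ref{fig:fbcoherence}.

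First I would record the coproduct analogue of Proposition \ref{prop:free-uniform-trace-cartesian}. Since the axioms of (uniformly) traced monoidal categories are stable under reversal of arrows, and since the state construction commutes with taking opposites, $\UTr(\opcat{\Cat{C}}) \cong \opcat{(\UTr(\Cat{C}))}$ as uniformly traced monoidal categories. Applying Proposition \ref{prop:free-uniform-trace-cartesian} to $\opcat{\Cat{C}}$ and dualising back therefore shows that the adjunction of Theorem \ref{th:free-uniform-trace} also restricts to finite coproduct categories: if $\Cat{C}$ is a finite coproduct category, then so is $\UTr(\Cat{C})$, with natural monoid $((\codiag{} \mid \unoG),(\cobang{} \mid \unoG))$ obtained through the unit $\eta_{\Cat{C}}$. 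As in the product case, uniformity is exactly what makes the naturality argument go through.

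Now let $\Cat{C}$ be a finite biproduct category. By Proposition \ref{prop:free-uniform-trace-cartesian} and its dual, $\UTr(\Cat{C})$ carries a \emph{natural} comonoid $((\diag{} \mid \unoG),(\bang{} \mid \unoG))$ and a \emph{natural} monoid $((\codiag{} \mid \unoG),(\cobang{} \mid \unoG))$, both transported from $\Cat{C}$ along the unit $\eta_{\Cat{C}} \colon \Cat{C} \to \UTr(\Cat{C})$. This settles condition (2) of Definition \ref{def:biproduct category} (naturality) for both structures simultaneously. It remains to verify condition (1), the coherence axioms of Figure \ref{fig:fbcoherence}. Here I would use that $\eta_{\Cat{C}}$ is an identity-on-objects, strict symmetric monoidal functor: each coherence axiom is an equation between monoidal composites of $\diag{},\bang{},\codiag{},\cobang{}$ and symmetries, every such equation already holds in $\Cat{C}$, and hence is transported verbatim to $\UTr(\Cat{C})$ by $\eta_{\Cat{C}}$. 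Thus $\UTr(\Cat{C})$ is a ut-fb category (Definition \ref{def:utfbcategory}) with no additional computation.

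Finally I would check that the whole adjunction restricts. The forgetful $\fun{U}$ evidently lands in $\FBC$, and $\UTr$ in $\FBUTr$ by the above; a morphism of ut-fb categories is just a monoidal functor preserving the trace and the (co)monoid structure, so these functors are well defined on morphisms. The unit $\eta_{\Cat{C}}$ preserves the biproduct structure by construction, while the counit preserves it because it does so separately for the product and the coproduct part (Proposition \ref{prop:free-uniform-trace-cartesian} and its dual); the triangle identities are then inherited from Theorem \ref{th:free-uniform-trace}. The only genuinely delicate point is the dualisation in the second paragraph: one must confirm that the uniformity-based argument of Proposition \ref{prop:free-uniform-trace-cartesian} is truly symmetric under arrow reversal, which is where I would spend the most care, since bridging the asymmetry between products and coproducts is precisely the role played by uniformity.
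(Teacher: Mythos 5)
Your proof is correct and follows essentially the same route as the paper's: apply Proposition~\ref{prop:free-uniform-trace-cartesian} for the product part, dualise (using that the uniformity relation is symmetric, so $\opcat{\UTr(\opcat{\Cat{B}})} = \UTr(\Cat{B})$) for the coproduct part, and combine, with the unit and counit restricting for the same reasons as in the product case. The extra care you take over the coherence axioms and the triangle identities is harmless but not needed beyond what the paper already records.
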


\subsection{The Free Uniform Trace preserves the Rig Structure}\label{ssec:tracerig}

So far, we have illustrated that  $\UTr(\cdot)$ (\Cref{def:uniform-state-construction})  gives the free uniformly traced category over a symmetric monoidal one (Theorem \ref{th:free-uniform-trace}) and that such construction preserves the structure of finite biproduct category (\Cref{cor:free-uniform-trace-fb}). Here, we illustrate that the same construction additionally preserves the structure of rig categories, namely that the adjunction in  Theorem \ref{th:free-uniform-trace} restricts to an adjunction between the categories $\RIG$ and $\RIGUTr$, defined as follows.

\begin{definition}
A \emph{uniformly traced (ut) rig category} is a rig category $(\Cat{C}, \piu, \zero, \per, \uno)$  such that $(\Cat{C}, \piu, \zero)$ is a uniformly traced monoidal category. A \emph{morphism of uniformly traced rig categories} is both a rig functor and a morphisms of uniformly traced categories. We write $\RIGUTr$ for the category of uniformly traced rig categories and their morphisms.
\end{definition}

Our proof exploits the notion of whiskering that, as stated by the following proposition, enjoys useful properties in any rig category.

\begin{proposition}\label{prop:whisk}
    Let $\Cat{C}$ be a rig category, $X$ an object in $\Cat{C}$ and $\fun{L}_X, \fun{R}_X \colon \Cat{C} \to \Cat{C}$ two functors defined on objects as
    $\LW{X}{Y} \defeq X \per Y$ and $\RW{X}{Y} \defeq Y \per X$,
    and on arrows $f \colon Y \to Z$ as 
    \[ \LW{X}{f} \defeq \id{X} \per f \qquad \text{ and } \qquad   \RW{X}{f} \defeq f \per \id{X}. \]
    $\fun{L}_X$ and $\fun{R}_X$ are called, respectively, \emph{left} and \emph{right whiskering} and they satisfy the laws in Table~\ref{table:whisk}.
\end{proposition}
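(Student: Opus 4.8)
The plan is to recognise $\fun{L}_X$ and $\fun{R}_X$ as the two partial applications of the bifunctor $\per \colon \Cat{C}\times\Cat{C} \to \Cat{C}$: setting $\fun{L}_X = \per(X,-)$ and $\fun{R}_X = \per(-,X)$, with the identity $\id{X}$ plugged into the fixed slot. Functoriality is then immediate and requires no rig-specific reasoning. Preservation of identities is the equation $\id{X}\per\id{Y} = \id{X\per Y}$, and preservation of composition, $\LW{X}{f};\LW{X}{g} = \LW{X}{f}$ applied to $f;g$, is exactly the interchange law $(\id{X}\per f);(\id{X}\per g) = \id{X}\per(f;g)$; both hold because $\per$ is a functor on the product category. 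The argument for $\fun{R}_X$ is identical.

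For the remaining equations of Table~\ref{table:whisk} I would sort them into three groups by the structure they test. The first, purely multiplicative group---$\fun{L}_X\fun{L}_{X'} = \fun{L}_{X\per X'}$, $\fun{R}_X\fun{R}_{X'} = \fun{R}_{X'\per X}$, the commutation $\LW{X}{\RW{X'}{f}} = \RW{X'}{\LW{X}{f}}$, and the unit laws $\fun{L}_{\uno} = \fun{R}_{\uno} = \mathrm{Id}$---reduces entirely to strictness. Since every rig category here is right strict, $(\Cat{C},\per,\uno)$ is a strict monoidal category, so associativity and unitality of $\per$ are literal equalities; each of these laws is then just associativity or unitality of $\per$ instantiated at $\id{X}$, $\id{X'}$ and $f$, verified by unfolding the definitions.

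The second group records the interaction with the additive product $\piu$ and its annihilators, and the third the interaction with the two symmetries $\symmt$ and $\symmp$. These are where the actual work lies, since---unlike the multiplicative laws---they are not literal equalities but consequences of \emph{naturality}. Right strictness makes $\delta^r$, $\lambda^\bullet$ and $\rho^\bullet$ identities, so right whiskering commutes with $\piu$ on the nose ($\RW{X}{f}\piu\RW{X}{g} = \RW{X}{f\piu g}$) and both $\zero$-whiskerings collapse to $\id{\zero}$ through the annihilator naturality squares. Left whiskering, however, must thread through the genuine (non-identity) distributor $\delta^l$: the law relating $\LW{X}{f\piu g}$ to $\LW{X}{f}\piu\LW{X}{g}$ is precisely the naturality square of $\delta^l$ at $f$ and $g$. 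For the symmetry laws, the naturality squares of $\symmt$ and $\symmp$ are what exchange $\fun{L}_X$ with $\fun{R}_X$; for instance $\symmt{X}{Y};\RW{X}{f} = \LW{X}{f};\symmt{X}{Z}$ is the naturality of $\symmt$ read at $f$. My plan is, for each such law, to write down the single naturality instance of the relevant structural isomorphism and read off the equation.

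The main obstacle will be the coherence bookkeeping in the $\delta^l$- and symmetry-mediated laws: $\delta^l$ does not vanish in the right-strict setting, so composites of left whiskerings with $\piu$ accumulate $\delta^l$-conjugations that must be shown to cancel, and this cancellation is governed by Laplaza's coherence axioms (Figure~\ref{fig:rigax}). I expect no conceptual difficulty---every such law is forced by naturality together with coherence---but because Laplaza's coherence diagrams are large, the care lies in selecting, for each law, the one coherence instance that settles it, rather than embarking on a long diagram chase.
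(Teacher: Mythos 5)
Your proposal is correct and is essentially the argument the paper relies on: Proposition~\ref{prop:whisk} is stated here without proof (it is imported from the earlier tape-diagrams paper), and the verification there is exactly your case split — functoriality of $\per$ for \eqref{eq:whisk:id}, \eqref{eq:whisk:funct}, \eqref{eq:tape:LexchangeR}; right strictness for \eqref{eq:whisk:uno}, \eqref{eq:tape:LR}, \eqref{eq:tape:LL}, \eqref{eq:tape:RR}; naturality of the annihilators and of $\delta^l$, $\delta^r$ for \eqref{eq:whisk:zero}, \eqref{eq:whisk:funct piu}, \eqref{eq:whisk:sum}; and naturality of the symmetries together with the specific Laplaza axioms of Figure~\ref{fig:rigax} (notably \eqref{eq:rigax2} for \eqref{eq:whisk:symmp} and \eqref{eq:rigax4}, \eqref{eq:dl3} for \eqref{eq:whisk:dl}, \eqref{eq:whisk:Ldl}). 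Your identification of the $\delta^l$-bookkeeping as the only nontrivial part is accurate.
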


\begin{table}[t]
    \centering
    {\begin{tabular}{lc lc}
        \toprule
        \multicolumn{2}{l}{$1. \ \ \LW X {\id{Y}} = \id{X \per Y}$} & $2. \ \ \RW X {\id{Y}} = \id{Y \per X}$ & (\newtag{W1}{eq:whisk:id})\\[0.3em]
        \multicolumn{2}{l}{$1. \ \ \LW{X}{f ; g} = \LW{X}{f} ; \LW{X}{g}$} & $2. \ \ \RW{X}{f ; g} = \RW{X}{f} ; \RW{X}{g}$ & (\newtag{W2}{eq:whisk:funct})\\[0.3em]
        \multicolumn{2}{l}{$1. \ \ \LW{\uno}{f} = f$} & $2. \ \ \RW{\uno}{f} = f$ & (\newtag{W3}{eq:whisk:uno}) \\[0.3em]
        \multicolumn{2}{l}{$1. \ \ \LW{\zero}{f} = \id{\zero}$} & $2. \ \ \RW{\zero}{f} = \id{\zero}$ & (\newtag{W4}{eq:whisk:zero}) \\[0.3em]
        \multicolumn{2}{l}{$1. \ \ \LW{X}{f_1 \piu f_2} = \dl{X}{X_1}{X_2} ; (\LW{X}{f_1} \piu \LW{X}{f_2}) ; \Idl{X}{Y_1}{Y_2}$}  & $2. \ \ \RW{X}{f_1 \piu f_2} = \RW{X}{f_1} \piu \RW{X}{f_2}$ & (\newtag{W5}{eq:whisk:funct piu}) \\[0.3em]
        \multicolumn{2}{l}{$1. \ \ \LW{X \piu Y}{f} = \LW{X}{f} \piu \LW{Y}{f}$}  & $2. \ \ \RW{X \piu Y}{f} = \dl{Z}{X}{Y} ; ( \RW{X}{f} \piu \RW{Y}{f} ) ; \Idl{W}{X}{Y}$ & (\newtag{W6}{eq:whisk:sum}) \\[0.3em]
        \multicolumn{3}{c}{$\LW{X_1}{f_2} ; \RW{Y_2}{f_1} = \RW{X_2}{f_1} ; \LW{Y_1}{f_2}$} & (\newtag{W7}{eq:tape:LexchangeR}) \\[0.3em]
        $\RW X {\symmp{Y}{Z}} = \symmp{Y \per X}{Z \per X}$ & (\newtag{W8}{eq:whisk:symmp}) & $\symmt{X \per Y}{Z} = \LW{X}{ \symmt{Y}{Z}} ; \RW{Y}{\symmt{X}{Z}}$ & (\newtag{W9}{eq:symmper}) \\[0.3em]
        $\RW{X}{f} ; \symmt{Z}{X} = \symmt{Y}{X} ; \LW{X}{f}$ & (\newtag{W10}{eq:LRnatsym}) & $\LW{X}{\RW{Y}{f}} = \RW{Y}{\LW{X}{f}}$ & (\newtag{W11}{eq:tape:LR}) \\[0.3em]
        $\LW{X \per Y}{f} = \LW{X}{\LW{Y}{f}}$ & (\newtag{W12}{eq:tape:LL}) & $\RW{Y \per X}{f} = \RW{X}{\RW{Y}{f}}$ & (\newtag{W13}{eq:tape:RR}) \\[0.3em]
        $\RW X {\dl{Y}{Z}{W}} = \dl{Y}{Z \per X}{W \per X}$ & (\newtag{W14}{eq:whisk:dl}) & $\LW X {\dl{Y}{Z}{W}} = \dl{X \per Y}{Z}{W} ; \Idl{X}{Y \per Z}{Y \per W}$ & (\newtag{W15}{eq:whisk:Ldl}) \\
        \bottomrule
    \end{tabular}}
    \caption{The algebra of whiskerings}  
    \label{table:whisk}
\end{table}

In order to prove that $\UTr(\cdot)$ (Definition \ref{def:uniform-state-construction}) preserves the rig structure, we define below left and right whiskerings on $\UTr(\Cat{C})$ for an arbitrary rig category $\Cat{C}$.

\begin{definition}\label{def:utr-whisk}
    Let $(\Cat{C}, \piu, \zero, \per, \uno)$ be a rig category, $\UTr(\Cat{C})$ the uniformly traced category freely generated from $(\Cat{C}, \oplus, \uno)$, and $X$ an object of $\UTr(\Cat C)$. Then $\fun{L}_X, \fun{R}_X \colon \UTr(\Cat C) \to \UTr(\Cat C)$ are defined on objects as 
    $\LW{X}{Y} \defeq X \per Y$ and $\RW{X}{Y} \defeq Y \per X$,
    and on arrows $\st{f}{S} \colon Y \to Z$ as
    \[ \LW{X}{f \mid S} \defeq \st{\symmt{X}{Y}}{\zero} ; \RW{X}{f \mid S} ; \st{\symmt{Z}{X}}{\zero} \qquad \text{ and } \qquad \RW{X}{f \mid S} \defeq \st{\RW{X}{f}}{S \per X}.\] 
\end{definition}

\begin{proposition}\label{lemma:utr-whisk-algebra}
    $\fun{L}_X, \fun{R}_X \colon \UTr(\Cat{C}) \to \UTr(\Cat{C})$ satisfy the laws in Table~\ref{table:whisk}.
\end{proposition}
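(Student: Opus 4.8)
The plan is to reduce every identity in Table~\ref{table:whisk} to its counterpart in $\Cat{C}$ (Proposition~\ref{prop:whisk}) by exhibiting the two whiskerings on $\UTr(\Cat{C})$ as canonical liftings of those on $\Cat{C}$. Two structural observations drive the argument. First, the right-whiskering functor $\fun{R}_X^{\Cat{C}}\colon \Cat{C}\to\Cat{C}$ is a \emph{strict} symmetric monoidal endofunctor of $(\Cat{C},\piu,\zero)$: right strictness forces $\delta^r$, $\lambda^\bullet$ and $\rho^\bullet$ to be identities, so $\fun{R}_X^{\Cat{C}}$ preserves $\piu$ and $\zero$ on the nose, it is a functor by \eqref{eq:whisk:id}.2 and \eqref{eq:whisk:funct}.2, and it preserves $\symmp$ by \eqref{eq:whisk:symmp}. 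Unwinding Definition~\ref{def:utr-whisk} then gives $\fun{R}_X=\UTr(\fun{R}_X^{\Cat{C}})$, so $\fun{R}_X$ is a functor preserving $\piu$, $\symmp$ and the trace, and naturality of the unit yields $\fun{R}_X\circ\eta_{\Cat{C}}=\eta_{\Cat{C}}\circ\fun{R}_X^{\Cat{C}}$. Second, Definition~\ref{def:utr-whisk} exhibits $\fun{L}_X$ as the conjugate $\fun{L}_X=\Theta^X\cdot\fun{R}_X\cdot(\Theta^X)^{-1}$ of $\fun{R}_X$ by the embedded isomorphism $\Theta^X\defeq\eta_{\Cat{C}}(\symmt{X}{-})$; its naturality is precisely \eqref{eq:LRnatsym}, which holds since $\symmt{X}{Y};\symmt{Y}{X}=\id{}$ in $\Cat{C}$ and $\eta_{\Cat{C}}$ is functorial, so \eqref{eq:LRnatsym} is obtained \emph{by construction}. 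Note that $\fun{L}_X^{\Cat{C}}$ is only \emph{strong} monoidal — its structure maps are the nontrivial distributors $\delta^l$ — which is exactly why it cannot be fed to $\UTr$ directly and the conjugation is required; from the conjugation one also reads off $\fun{L}_X\circ\eta_{\Cat{C}}=\eta_{\Cat{C}}\circ\fun{L}_X^{\Cat{C}}$.

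From these two pillars the bulk of the laws follow formally. Functoriality \eqref{eq:whisk:id} and \eqref{eq:whisk:funct} holds for $\fun{R}_X$ because it is a functor, and for $\fun{L}_X$ because conjugation preserves identities and composites. Laws \eqref{eq:whisk:uno} and \eqref{eq:whisk:zero} follow by specialising $X$ to $\uno$ and $\zero$: $\fun{R}_\uno^{\Cat{C}}=\mathrm{Id}$ and $\fun{R}_\zero^{\Cat{C}}=\const{\zero}$, and for the $\fun{L}$-variants one uses the coherence identities $\symmt{\uno}{Y}=\id{Y}$ and $\symmt{\zero}{Y}=\id{\zero}$ to make the conjugating isomorphisms trivial. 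The laws involving only embedded morphisms — \eqref{eq:whisk:symmp}, \eqref{eq:symmper}, \eqref{eq:whisk:dl}, \eqref{eq:whisk:Ldl} — transport from $\Cat{C}$ along $\eta_{\Cat{C}}$ using $\fun{R}_X\eta_{\Cat{C}}=\eta_{\Cat{C}}\fun{R}_X^{\Cat{C}}$ and $\fun{L}_X\eta_{\Cat{C}}=\eta_{\Cat{C}}\fun{L}_X^{\Cat{C}}$ together with the corresponding $\Cat{C}$-law. Preservation of $\piu$, \eqref{eq:whisk:funct piu}.2, holds because $\fun{R}_X=\UTr(\fun{R}_X^{\Cat{C}})$ is monoidal, and the composition laws \eqref{eq:tape:RR}, \eqref{eq:tape:LL} follow from $\fun{R}_{Y\per X}^{\Cat{C}}=\fun{R}_X^{\Cat{C}}\fun{R}_Y^{\Cat{C}}$ (resp. the $\fun{L}$-version) and preservation of composition by $\UTr$ and conjugation.

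The remaining distributor-twisted laws \eqref{eq:whisk:funct piu}.1, \eqref{eq:whisk:sum} need slightly more: for a single traced morphism $\st{\hat f}{S}$ one pushes the whiskering inside the trace (both whiskerings commute with trace), applies the corresponding $\Cat{C}$-law to $\hat f$, and reorganises the traced state using $S\per(X\piu Y)=(S\per X)\piu(S\per Y)$ and the trace axioms \ref{ax:trace:strength}, \ref{ax:trace:tightening}; the $\fun{L}$-versions then follow by conjugation, invoking the Laplaza coherence identity $\symmt{X}{Y_1\piu Y_2}=\dl{X}{Y_1}{Y_2};(\symmt{X}{Y_1}\piu\symmt{X}{Y_2})$ \cite{laplaza_coherence_1972} transported through $\eta_{\Cat{C}}$. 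Law \eqref{eq:tape:LR} is handled similarly: expand both sides through the conjugation and through $\fun{R}_X\fun{R}_Y=\fun{R}_{Y\per X}$, reducing to an equality between embedded symmetries and $\fun{R}_{?}$-images which is the $\eta_{\Cat{C}}$-transport of \eqref{eq:tape:LR} in $\Cat{C}$.

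The single law that genuinely couples the trace structure is the interchange law \eqref{eq:tape:LexchangeR}, the only one mentioning two \emph{independent} traced morphisms $\st{\hat f_1}{S_1}$ and $\st{\hat f_2}{S_2}$; I expect this to be the main obstacle. Here I would write each side, using that both whiskerings commute with trace, as a single trace of a $\Cat{C}$-morphism assembled from $\RW{}{\hat f_1}$, $\LW{}{\hat f_2}$ and embedded symmetries, and then collapse the nested traces with \ref{ax:trace:joining} and \ref{ax:trace:strength}. The two composites trace out the states $S_2\per X_1$ then $S_1\per Y_2$ on one side and $S_1\per X_2$ then $S_2\per Y_1$ on the other; after applying \eqref{eq:tape:LexchangeR} for $\hat f_1,\hat f_2$ \emph{inside} the trace, the residual discrepancy is a reassociation and reordering of the traced state, which is exactly where the uniformity quotient $\approx_{\Cat{C}}$ (equivalently \ref{ax:trace:sliding}) is needed to identify the two results. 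This bookkeeping of states, rather than any single algebraic step, is the delicate part of the proof.
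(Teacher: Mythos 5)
Your structural reorganisation of the bulk of the proof is sound and, for most of Table~\ref{table:whisk}, cleaner than the paper's case-by-case unfolding: the identification $\fun{R}_X=\UTr(\fun{R}_X^{\Cat{C}})$ (legitimate because right strictness makes $\fun{R}_X^{\Cat{C}}$ a \emph{strict} symmetric monoidal endofunctor of $(\Cat{C},\piu,\zero)$) gives \eqref{eq:whisk:id}.2, \eqref{eq:whisk:funct}.2, \eqref{eq:whisk:funct piu}.2, \eqref{eq:whisk:symmp} and trace-preservation for free, and the conjugation description of $\fun{L}_X$ yields \eqref{eq:LRnatsym} by construction, exactly as in the paper. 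The transport of the purely embedded laws along $\eta_{\Cat{C}}$ and the specialisations to $\uno$ and $\zero$ are also fine.

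The genuine gap is in your treatment of \eqref{eq:tape:LexchangeR}, and specifically in the parenthetical ``uniformity quotient $\approx_{\Cat{C}}$ (equivalently \ref{ax:trace:sliding})'' together with the claim that the residual discrepancy is ``a reassociation and reordering of the traced state.'' Unwind the two sides: $\LW{X_1}{f_2\mid S_2}\dcomp\RW{Y_2}{f_1\mid S_1}$ carries the traced state $(X_1\per S_2)\piu(S_1\per Y_2)$, while $\RW{X_2}{f_1\mid S_1}\dcomp\LW{Y_1}{f_2\mid S_2}$ carries $(S_1\per X_2)\piu(Y_1\per S_2)$. These are not isomorphic objects in general (there is no relation between $X_i$ and $Y_i$), so no permutation, reassociation, or sliding along an invertible $r$ can identify the two traces; sliding is strictly weaker than uniformity and does not suffice here. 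The paper's proof instead invokes the full uniformity implication \eqref{ax:trace:uniformity} \emph{twice}, with comparison morphisms $r$ that are non-invertible and are manufactured from distributors and from the data of $f_1,f_2$ themselves, precisely to bridge between non-isomorphic state objects such as $(S_1\per S_2)\piu(S_1\per Y_2)$ and $(S_1\per S_2)\piu(S_1\per X_2)$ — this is what the Remark following the Proposition means by saying that the proof of \eqref{eq:tape:LexchangeR} ``crucially requires uniformity.'' As written, your plan for this law would stall at the point where the two collapsed traces refuse to line up by bookkeeping alone; you need to set up the two uniformity premises explicitly rather than appeal to sliding.
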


\begin{remark}
It is worth remarking that the proof of \eqref{eq:tape:LexchangeR} --that is equivalent to functoriality of $\per$-- crucially requires uniformity, once more.
\end{remark}

The left and right whiskerings in Definition~\ref{def:utr-whisk} allow us to define another monoidal product on $\UTr(\Cat C)$, which on objects coincides with $\per$ in $\Cat C$ and on arrows $\st{f_1}{S_1} \colon X_1 \to Y_1$, $\st{f_2}{S_2} \colon X_2 \to Y_2$ is defined as:
\begin{equation}\label{eq:utr-per}
    \st{f_1}{S_1} \per \st{f_2}{S_2} \defeq \fun{L}_{X_1}(f_2 \mid S_2) ; \fun{R}_{Y_2}(f_1 \mid S_1).
\end{equation}

One can check that $\per$ makes $\UTr(\Cat C)$ a symmetric monoidal category (see Lemma \ref{lemma:utr-per-monoidal}). Moreover the following key result holds.

\begin{theorem}\label{th:free-uniform-trace-rig}
    The adjunction in \Cref{th:free-uniform-trace} restricts to $\begin{tikzcd}
        \;\; \RIG \;\;
        \arrow[r, "\UTr"{name=F}, bend left] &
        \RIGUTr
        \arrow[l, "\fun{U}_3"{name=U}, bend left]
        \arrow[phantom, from=F, to=U, "\vdash" rotate=90]
    \end{tikzcd}$. 
    \end{theorem}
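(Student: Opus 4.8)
The plan is to upgrade the two symmetric monoidal structures already available on $\UTr(\Cat{C})$ --- the traced structure $(\UTr(\Cat{C}), \piu, \zero)$ coming from \Cref{th:free-uniform-trace} and the structure $(\UTr(\Cat{C}), \per, \uno)$ defined in \eqref{eq:utr-per} and shown monoidal in \Cref{lemma:utr-per-monoidal} --- to a full rig structure, and then to check that $\UTr$, its unit, and the universal property of \Cref{th:free-uniform-trace} all respect this structure. Throughout I use the unit $\eta_{\Cat{C}} \colon \Cat{C} \to \fun{U}_3\UTr(\Cat{C})$, which sends $f$ to the pure morphism $(f \mid \zero)$; by construction it is strict symmetric monoidal for $\piu$, and a short computation using \eqref{eq:utr-per} together with $\RW{X}{(f \mid \zero)} = \eta_{\Cat{C}}(\RW{X}{f})$ and $\LW{X}{(f \mid \zero)} = \eta_{\Cat{C}}(\LW{X}{f})$ (the latter via \eqref{eq:LRnatsym}) shows $\eta_{\Cat{C}}(f_1 \per f_2) = \eta_{\Cat{C}}(f_1) \per \eta_{\Cat{C}}(f_2)$, so $\eta_{\Cat{C}}$ also preserves $\per$ and $\uno$.

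First I would transport the remaining rig data, defining the distributors and annihilators of $\UTr(\Cat{C})$ as the images under $\eta_{\Cat{C}}$ of those of $\Cat{C}$, for instance $\dl{X}{Y}{Z} \defeq \eta_{\Cat{C}}(\dl{X}{Y}{Z})$; since functors preserve isomorphisms these are again isomorphisms. Because $\Cat{C}$ is right strict, $\dr{}{}{}, \annl{}$ and $\annr{}$ are identities in $\Cat{C}$, hence their images are identities in $\UTr(\Cat{C})$; together with strictness of $\piu$ and the fact that the $\per$-associators and unitors are images of identities, this makes $\UTr(\Cat{C})$ right strict. The one genuinely non-formal point is naturality of $\dl{}{}{}$ with respect to the \emph{traced} (non-pure) morphisms of $\UTr(\Cat{C})$: this is exactly where uniformity is needed, but it is already packaged in \Cref{lemma:utr-whisk-algebra}. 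Indeed, the whiskering laws \eqref{eq:whisk:funct piu}.1 and \eqref{eq:whisk:sum}.2 are precisely the naturality equations for $\dl{}{}{}$ in its three arguments, and combined with functoriality of whiskerings \eqref{eq:whisk:funct} they assemble into full naturality of the distributor as a transformation $\fun{L}_{(-)}(- \piu -) \Rightarrow (- \per -) \piu (- \per -)$.

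Next I would verify the Laplaza coherence axioms of \Cref{fig:rigax}. Every morphism occurring in these axioms is a composite of associators, unitors and symmetries for $\piu$ and $\per$ together with distributors, i.e.\ lies entirely in the image of $\eta_{\Cat{C}}$ as a pure morphism. Since $\eta_{\Cat{C}}$ is a functor that is strict for $\piu$ and preserves $\per, \uno, \symmt{}{}$ and $\dl{}{}{}$, it carries each commuting coherence diagram of $\Cat{C}$ to a commuting diagram in $\UTr(\Cat{C})$; as these diagrams commute in the rig category $\Cat{C}$, all the axioms hold in $\UTr(\Cat{C})$. Hence $\UTr(\Cat{C})$ is a right strict rig category, and since $(\UTr(\Cat{C}), \piu, \zero)$ is uniformly traced it is an object of $\RIGUTr$.

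Finally I would restrict the adjunction. Given a right strict rig functor $F$, the functor $\UTr(F)$ is already a morphism of uniformly traced monoidal categories for $\piu$; it preserves $\per$ because it commutes with the whiskerings of \Cref{def:utr-whisk} (as $F$ preserves $\per$ and $\UTr(F)$ preserves trace and composition) and $\per$ is built from whiskerings via \eqref{eq:utr-per}, and it preserves $\dl{}{}{}$ since distributors are images of $\eta_{\Cat{C}}$. As $\eta_{\Cat{C}}$ is a rig functor by the first paragraph, it only remains to check the universal property: a rig functor $F \colon \Cat{C} \to \fun{U}_3\Cat{D}$ into a uniformly traced rig category extends, by \Cref{th:free-uniform-trace}, to a unique uniformly traced monoidal functor $\hat{F} \colon \UTr(\Cat{C}) \to \Cat{D}$ for $\piu$; the same whiskering argument shows $\hat{F}$ preserves $\per$ and $\dl{}{}{}$, so it is a rig functor, while uniqueness as a ut monoidal functor forces uniqueness as a ut rig functor. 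This yields the restricted adjunction $\UTr \dashv \fun{U}_3$ between $\RIG$ and $\RIGUTr$. The main obstacle is the naturality of the distributor against traced morphisms, but this has been isolated into \Cref{lemma:utr-whisk-algebra}, whose proof --- as noted in the remark following it --- is where uniformity is indispensable.
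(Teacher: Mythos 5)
Your proposal is correct and follows essentially the same route as the paper's proof: lift the distributors and annihilators along the unit $(-\mid\zero)$, observe that the Laplaza coherence axioms hold because all structural morphisms are pure and the diagrams already commute in $\Cat{C}$, derive naturality of the distributors from the whiskering algebra of \Cref{lemma:utr-whisk-algebra}, and then check that the functors, unit and counit restrict. The only (harmless) differences are organizational — the paper establishes naturality of $\delta^r$ first and deduces $\delta^l$ via \eqref{eq:rigax1}, and verifies the restriction through the unit/counit rather than the universal property — and one small misattribution: uniformity is really consumed in \eqref{eq:tape:LexchangeR} (functoriality of $\per$), not in the naturality of the distributor itself, though both are packaged in the lemma you cite.
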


\subsection{Rigs, Biproducts and Uniform Traces}\label{ssec:rigbiut}
In the next section we are going to extend the language of tape diagrams for rig categories that are both finite biproduct and uniformly traced. We find thus convenient to introduce the following notion.

\begin{definition}
A \emph{uniformly traced finite biproduct (ut-fb) rig category} is a rig category $(\Cat{C}, \piu, \zero, \per, \uno)$  such that $(\Cat{C}, \piu, \zero)$ is a ut-fb category (see Definition \ref{def:utfbcategory}). A \emph{morphism of ut-fb rig categories} is both a rig functor and a morphisms of ut-fb categories. We write $\FBRIGUtr$ for the category of fb rig categories and their morphisms.
\end{definition}

By combining Corollary \ref{cor:free-uniform-trace-fb} and Theorem \ref{th:free-uniform-trace-rig}, one easily obtains the following.
\begin{proposition}\label{prop:free-uniform-trace-fb-rig}
    The adjunction in \Cref{th:free-uniform-trace} restricts to  $\begin{tikzcd}
        \FBRIG
        \arrow[r, "\UTr"{name=F}, bend left] &
        \FBRIGUtr
        \arrow[l, "\fun{U}_3"{name=U}, bend left]
        \arrow[phantom, from=F, to=U, "\vdash" rotate=90]
    \end{tikzcd}$. 
\end{proposition}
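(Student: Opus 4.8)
The plan is to observe that being a uniformly traced finite biproduct (ut-fb) rig category is, by Definition~\ref{def:utfbcategory} together with the rig axioms, exactly the conjunction of two conditions that have already been handled separately: the additive part $(\Cat{C},\piu,\zero)$ must simultaneously be a uniformly traced monoidal category (as in Theorem~\ref{th:free-uniform-trace-rig}) and a finite biproduct category (as in Corollary~\ref{cor:free-uniform-trace-fb}), while $\per$ distributes over $\piu$. Thus an object of $\FBRIGUtr$ is precisely an object that is at once an object of $\RIGUTr$ and of $\FBUTr$ on the same underlying category, and a morphism of $\FBRIGUtr$ is a functor that is at once a rig functor and a morphism of ut-fb categories. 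The crucial point making the combination painless is that the functor $\UTr$ appearing in both cited results is, in each case, the \emph{same} construction $\UTr(\cdot)$ of Definition~\ref{def:uniform-state-construction} applied to the underlying symmetric monoidal category $(\Cat{C},\piu,\zero)$ and quotiented by the same relation $\approx_{\Cat{C}}$. Hence there is a single functor to analyse.

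I would first verify that, for an fb rig category $\Cat{C}$, the category $\UTr(\Cat{C})$ lands in $\FBRIGUtr$. By Theorem~\ref{th:free-uniform-trace-rig}, $\UTr(\Cat{C})$ is a uniformly traced rig category, with the second monoidal product $\per$ defined as in~\eqref{eq:utr-per}; by Corollary~\ref{cor:free-uniform-trace-fb}, its additive part $(\UTr(\Cat{C}),\piu,\zero)$ is a ut-fb category, with the monoids and comonoids transported along the unit $\eta_{\Cat{C}}$. Because both structures are computed from the very same $(\Cat{C},\piu,\zero)$, the biproduct $(\co)$monoids coincide with the $\piu$-$(\co)$monoids of the rig structure, so the two coexist without any further coherence to impose, and $\UTr(\Cat{C})$ is an object of $\FBRIGUtr$. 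For morphisms, a map in $\FBRIG$ is a functor that is simultaneously a rig functor and a morphism of fb categories; applying $\UTr$ preserves the former by Theorem~\ref{th:free-uniform-trace-rig} and the latter by Corollary~\ref{cor:free-uniform-trace-fb}, so $\UTr$ restricts to a functor $\FBRIG \to \FBRIGUtr$, while $\fun{U}_3 \colon \FBRIGUtr \to \FBRIG$ simply discards the trace.

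Finally I would note that the unit $\eta_{\Cat{C}}$ and counit $\epsilon_{\Cat{D}}$ of the ambient adjunction $\UTr \dashv \fun{U}$ from Theorem~\ref{th:free-uniform-trace} are, by the two cited results, each at once a rig functor and a morphism of fb categories, hence morphisms in the restricted categories; the triangle identities then hold automatically, being inherited from the ambient adjunction. The only point that looks like it could require work is the mutual compatibility of the biproduct structure from Corollary~\ref{cor:free-uniform-trace-fb} with the rig structure from Theorem~\ref{th:free-uniform-trace-rig}, namely that $\per$ as in~\eqref{eq:utr-per} genuinely distributes over $\piu$ on $\UTr(\Cat{C})$. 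This is not a real obstacle: distributivity is already part of the rig structure established in Theorem~\ref{th:free-uniform-trace-rig}, and nothing in Corollary~\ref{cor:free-uniform-trace-fb} alters the additive $\piu$-structure, so the two are compatible by construction. This is exactly why the result follows easily by combining the two preceding statements.
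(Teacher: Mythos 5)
Your proposal is correct and follows essentially the same route as the paper, which likewise obtains the result by combining Corollary~\ref{cor:free-uniform-trace-fb} and Theorem~\ref{th:free-uniform-trace-rig}, noting that both apply the same construction $\UTr(\cdot)$ to the same additive structure $(\Cat{C},\piu,\zero)$ so that the two structures, the unit and the counit all restrict simultaneously. Your extra remark on the compatibility of the transported biproducts with distributivity of $\per$ is a reasonable elaboration of a point the paper leaves implicit.
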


\section{Tape Diagrams with Uniform Traces}\label{sec:tapes}
In this section we introduce tape diagrams for rig categories with finite biproducts and uniform traces. 
The approach is similar to the one in Section~\ref{sc:tape} and it goes through the following adjunction.

\begin{equation}\label{eq:adjunction-utfb}
    \begin{tikzcd}
        \SMC \ar[r,"U_1"] &
        \CAT
        \arrow[r, "F_3"{name=F}, bend left] &
        \FBUTr
        \arrow[l, "U_3"{name=U}, bend left]
        \arrow[phantom, from=F, to=U, "\vdash" rotate=90]
    \end{tikzcd}
\end{equation}

The functors $U_1$ and $U_3$ are the obvious forgetful functors. The functor $F_3$ is the left adjoint to $U_3$ and can be described as follows.

\begin{definition}\label{def:strict ut-fb freely generated by C}
Let $\Cat{C}$ be a category. The  strict ut-fb category freely generated by $\Cat{C}$, hereafter denoted by $F_3(\Cat{C})$, has as objects words of objects of $\Cat{C}$. Arrows are terms inductively generated by the following grammar, where $A,B$ and $c$ range over arbitrary objects and arrows of $\Cat{C}$,
    \begin{equation}
        \begin{array}{rcl}
            f & ::=& \; \id{A} \; \mid \; \id{I} \; \mid \; \tapeFunct{c} \; \mid \; \sigma_{A,B}^{\perG} \; \mid \;   f ; f   \; \mid \;  f \perG f  \; \mid \; \bang{A} \; \mid \; \diag{A} \;  \mid \; \cobang{A}\; \mid \; \codiag{A} \; \mid \; \trace_{A} f\\
            \end{array}
    \end{equation}  
    modulo the axioms in  Tables~\ref{fig:freestricmmoncatax},~\ref{fig:freestrictfbcat},~\ref{tab:trace-axioms} and~\ref{tab:uniformity}. %
\end{definition}

Similarly to~\eqref{eq:adjunction}, the unit of the adjunction $\eta \colon Id_{\CAT} \Rightarrow F_3U_3$ is defined for each category $\Cat{C}$ as the identity-on-objects functor $G \colon \Cat{C} \to U_3F_3(\Cat{C})$ that maps each arrow $c$ in $\Cat{C}$ into the arrow $\tapeFunct{c}$ of $U_3 F_3(\Cat{C})$.

\begin{lemma}\label{lemma:tr-adj}
    $F_3 \colon \CAT \to \FBUTr$ is left adjoint to $U_3 \colon \FBUTr \to \CAT$.
\end{lemma}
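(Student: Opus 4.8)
The plan is to obtain $F_3 \dashv U_3$ by factoring it through the category $\FBC$ of strict finite biproduct categories and composing two adjunctions that are already available. Recall from~\eqref{eq:adjunction} that $F_2 \colon \CAT \to \FBC$ is left adjoint to the forgetful $U_2 \colon \FBC \to \CAT$, and from Corollary~\ref{cor:free-uniform-trace-fb} that $\UTr \colon \FBC \to \FBUTr$ is left adjoint to the forgetful $\fun{U}\colon \FBUTr \to \FBC$. Since a composite of left adjoints is again a left adjoint, the composite $\UTr \circ F_2 \colon \CAT \to \FBUTr$ is left adjoint to $U_2 \circ \fun{U} \colon \FBUTr \to \CAT$. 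As $U_3$ is exactly the forgetful functor discarding the biproduct, the trace and the monoidal structure, we have $U_3 = U_2 \circ \fun{U}$ on the nose, so it remains only to identify the presented functor $F_3$ of Definition~\ref{def:strict ut-fb freely generated by C} with the composite $\UTr \circ F_2$.

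First I would check the identification on objects, which is immediate: by Definition~\ref{def:strict fb freely generated by C} the objects of $F_2(\Cat{C})$ are words of objects of $\Cat{C}$, and $\UTr(\cdot)$ leaves objects unchanged (Definition~\ref{def:uniform-state-construction}), so both $\UTr(F_2(\Cat{C}))$ and $F_3(\Cat{C})$ have words of objects of $\Cat{C}$ as objects. For arrows I would exhibit a pair of mutually inverse identity-on-objects functors. In one direction, I send each generator of the grammar of Definition~\ref{def:strict ut-fb freely generated by C} to its evident counterpart in $\UTr(F_2(\Cat{C}))$: the biproduct generators and symmetries go to the biproduct generators and symmetries (which $\UTr$ preserves by Corollary~\ref{cor:free-uniform-trace-fb}), $\tapeFunct{c}$ goes to the image of $\tape{c}$ under the unit of the $\UTr$-adjunction, and $\trace_A$ goes to the trace of $\UTr(F_2(\Cat{C}))$. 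This assignment respects the defining axioms of $F_3(\Cat{C})$ --- Tables~\ref{fig:freestricmmoncatax}, \ref{fig:freestrictfbcat}, \ref{tab:trace-axioms} and~\ref{tab:uniformity} --- precisely because $\UTr(F_2(\Cat{C}))$ is a uniformly traced finite biproduct category, so it descends to a well-defined functor. In the other direction, an arrow $\st{g}{S}$ of $\UTr(F_2(\Cat{C}))$, with $g \colon S \piu X \to S \piu Y$ in $F_2(\Cat{C})$, is sent to $\trace_S(\bar g)$, where $\bar g$ is the interpretation in $F_3(\Cat{C})$ of the biproduct term $g$ (legitimate since the grammar of $F_3$ extends that of $F_2$).

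The main obstacle lies in checking that this second assignment is well defined and inverse to the first, i.e.\ that the syntactic quotient of $F_3(\Cat{C})$ matches the quotient by yanking and $\approx_{F_2(\Cat{C})}$ used in $\UTr(F_2(\Cat{C}))$ (Definition~\ref{def:uniform-state-construction}). On one side, yanking is itself a trace axiom of Table~\ref{tab:trace-axioms}, and each inference rule generating $\approx_{F_2(\Cat{C})}$ in~\eqref{eq:uniformcong} --- reflexivity, symmetry, transitivity, congruence for $;$ and $\piu$, and the rule $(ut)$ --- is directly validated in $F_3(\Cat{C})$, the last one being exactly the uniformity axiom of Table~\ref{tab:uniformity}; hence $g \approx_{F_2(\Cat{C})} g'$ forces $\trace_S(\bar g) = \trace_S(\bar{g'})$. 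On the other side, to see that the two functors compose to the identity on $F_3(\Cat{C})$ I would use the trace axioms (joining, vanishing, tightening and strength) to rewrite every term of the grammar into a single outermost trace of a trace-free biproduct term --- the same normal form that underlies the pairs $\st{g}{S}$ of the free-trace construction $\freeTr$ recalled before Theorem~\ref{th:free-uniform-trace}. Verifying that this normalisation is compatible with the quotients on both sides is the only genuinely laborious point; once it is in place, the two functors are patently mutually inverse on generators, hence everywhere, giving $F_3 \cong \UTr \circ F_2$ and therefore the claimed adjunction.
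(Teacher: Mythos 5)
Your route is sound but genuinely different from the paper's. The paper proves the adjunction directly and elementarily: given a strict ut-fb category $\Cat{D}$ and a functor $H \colon \Cat{C} \to U_3(\Cat{D})$, it defines the extension $H^\sharp \colon F_3(\Cat{C}) \to \Cat{D}$ by induction on the grammar of Definition~\ref{def:strict ut-fb freely generated by C}, checks that the axioms of Tables~\ref{fig:freestricmmoncatax}, \ref{fig:freestrictfbcat}, \ref{tab:trace-axioms} and~\ref{tab:uniformity} are preserved because they hold in $\Cat{D}$, and observes that $G ; H^\sharp = H$ forces uniqueness. You instead compose the adjunctions $F_2 \dashv U_2$ and $\UTr \dashv \fun{U}$ and reduce the lemma to the identification $F_3 \cong \UTr \circ F_2$, which you propose to establish by a hands-on comparison of the two quotient constructions. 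Note that this identification is precisely the paper's Proposition~\ref{prop:iso}, which the paper obtains \emph{for free} from the present lemma via uniqueness of adjoints; so you have inverted the logical order and placed the entire burden on the one step the paper deliberately avoids doing by hand. Your plan for that step is not circular (you argue syntactically, not via uniqueness of adjoints) and nothing in it would fail, but be aware of where the work actually sits: the relation $\approx$ in Definition~\ref{def:uniform-state-construction} is the uniformity congruence on the traced category $\freeTr(F_2(\Cat{C}))$, i.e.\ a relation on \emph{pairs} $\st{g}{S}$, so the well-definedness of your second functor requires an induction over derivations in~\eqref{eq:uniformcong} between pairs (with the rule $(ut)$ matching the uniformity axiom of Table~\ref{tab:uniformity} only after the inner occurrences of $\approx$ have been discharged by the induction hypothesis), and the round-trip on $F_3(\Cat{C})$ requires the trace normal form you describe. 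What your approach buys is a structural explanation of why $F_3$ decomposes as $\UTr F_2$ without appealing to uniqueness of adjoints; what the paper's approach buys is a much shorter proof and the isomorphism of Proposition~\ref{prop:iso} as a corollary rather than a prerequisite.
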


Recall from Section \ref{sec:monoidal} the category of string diagrams $\CatString$ generated by a monoidal signature $\sign$. Hereafter, we focus on  $F_3U_1(\CatString)$, referred to as $\CatTrTape$.
The set of objects of $\CatTrTape$ is the same of $\CatTape$, i.e., words of words of sorts in $\sort$. For arrows, we extend the two-layer grammar in~\eqref{tapesGrammar} with one production accounting for the trace operation.
\begin{equation}\label{tracedTapesGrammar}
    \begin{tabular}{rc ccccccccccccccccccccc}\setlength{\tabcolsep}{0.0pt}
        $c$  & ::= & $\id{A}$ & $\!\!\! \mid \!\!\!$ & $ \id{\uno} $ & $\!\!\! \mid \!\!\!$ & $ \gen $ & $\!\!\! \mid \!\!\!$ & $ \sigma_{A,B} $ & $\!\!\! \mid \!\!\!$ & $   c ; c   $ & $\!\!\! \mid \!\!\!$ & $  c \per c$ & \multicolumn{8}{c}{\;} \\
        $\t$ & ::= & $\id{U}$ & $\!\!\! \mid \!\!\!$ & $ \id{\zero} $ & $\!\!\! \mid \!\!\!$ & $ \tapeFunct{c} $ & $\!\!\! \mid \!\!\!$ & $ \sigma_{U,V}^{\piu} $ & $\!\!\! \mid \!\!\!$ & $   \t ; \t   $ & $\!\!\! \mid \!\!\!$ & $  \t \piu \t  $ & $\!\!\! \mid \!\!\!$ & $ \bang{U} $ & $\!\!\! \mid \!\!\!$ & $\diag{U}$ & $\!\!\! \mid \!\!\!$ & $\cobang{U}$ & $\!\!\! \mid \!\!\!$ & $\codiag{U}$  &  $\!\!\! \mid \!\!\!$ & $\trace_{U}\t$    
    \end{tabular}
\end{equation}  

The terms of the first row are taken modulo the axioms in Table~\ref{fig:freestricmmoncatax} (after replacing $\perG$ with $\per$).
The terms of the second row are taken modulo the axioms in Tables \ref{fig:freestricmmoncatax},~\ref{fig:freestrictfbcat},~\ref{tab:trace-axioms} and~\ref{tab:uniformity} (after replacing $\perG$ with $\piu$ and $A,B$ with $U,V$).

As for $\CatTape$, the grammar in \eqref{tracedTapesGrammar} can be rendered diagrammatically as follows.
\begin{equation*}\label{tracedTapesDiagGrammar}
    \setlength{\tabcolsep}{2pt}
    \begin{tabular}{rc cccccccccccc}
        $c$  & ::= &  $\wire{A}$ & $\mid$ & $ 
    \InputIfFileExists{empty.tikz}{}{\input{./tikz/empty.tikz}}
 $ & $\mid$ & $ \Cgen{\gen}{A}{B}  $ & $\mid$ & $ \Csymm{A}{B} $ & $\mid$ & $ 
    \InputIfFileExists{seq_compC.tikz}{}{\input{./tikz/seq_compC.tikz}}
   $ & $\mid$ & $  
    \InputIfFileExists{par_compC.tikz}{}{\input{./tikz/par_compC.tikz}}
$ & \\
        $\t$ & ::= & $\Twire{U}$ & $\mid$ & $ 
    \InputIfFileExists{empty.tikz}{}{\input{./tikz/empty.tikz}}
 $ & $\mid$ & $ \Tcirc{c}{U}{V}  $ & $\mid$ & $ \Tsymmp{U}{V} $ & $\mid$ & $ 
    \InputIfFileExists{tapes/seq_comp.tikz}{}{\input{./tikz/tapes/seq_comp.tikz}}
  $ & $\mid$ & $  
    \InputIfFileExists{tapes/par_comp.tikz}{}{\input{./tikz/tapes/par_comp.tikz}}
$ & $\mid$ \\
             &     & \multicolumn{12}{l}{$\Tcounit{U} \; \mid \; \Tcomonoid{U} \; \mid \; \Tunit{U} \; \mid \; \Tmonoid{U} \; \mid \; \TTraceMon{\t}[U][P][Q]$}
    \end{tabular}
\end{equation*}

\begin{table}
\[\begin{array}{ccc}
\begin{array}{rcl}
 \bang{\zero} & \defeq & \id{\zero}  \\ 
 \bang{U \piu P} & \defeq & \bang{U} \piu \bang{P}  \\ 
\end{array}
&
\begin{array}{rcl}
\diag{\zero} & \defeq &  \id{\zero}  \\ 
\diag{U \piu P} & \defeq & (\diag{U} \piu \diag{P}) ; (\id{U} \piu \symm{U}{P} \piu \id{P})\\ 
\end{array}
\\ \\
\begin{array}{rcl}
 \cobang{\zero} & \defeq & \id{\zero}  \\ 
 \cobang{U \piu P} & \defeq & \cobang{U} \piu \cobang{P}  \\ 
\end{array}
&
\begin{array}{rcl}
\codiag{\zero} & \defeq &  \id{\zero}  \\ 
\codiag{U \piu P} & \defeq & (\id{U} \piu \symm{U}{P} \piu \id{P}) ; (\codiag{U} \piu \codiag{P}) \\ 
\end{array}
&
\begin{array}{rcl}
\trace_{\zero}(\t) & \defeq & \t \\
\trace_{U \piu P}(\t) & \defeq & \trace_{P} \trace_U (\t) 
\end{array}
\end{array}\]
\caption{Inductive definitions of $\bang{P}$, $\diag{P}$, $\cobang{P}$,  $\codiag{P}$ and  $\trace_{P}\t$ for all polynomial $P$}\label{tab:inddefutfb}
\end{table}

\begin{figure}[h!]
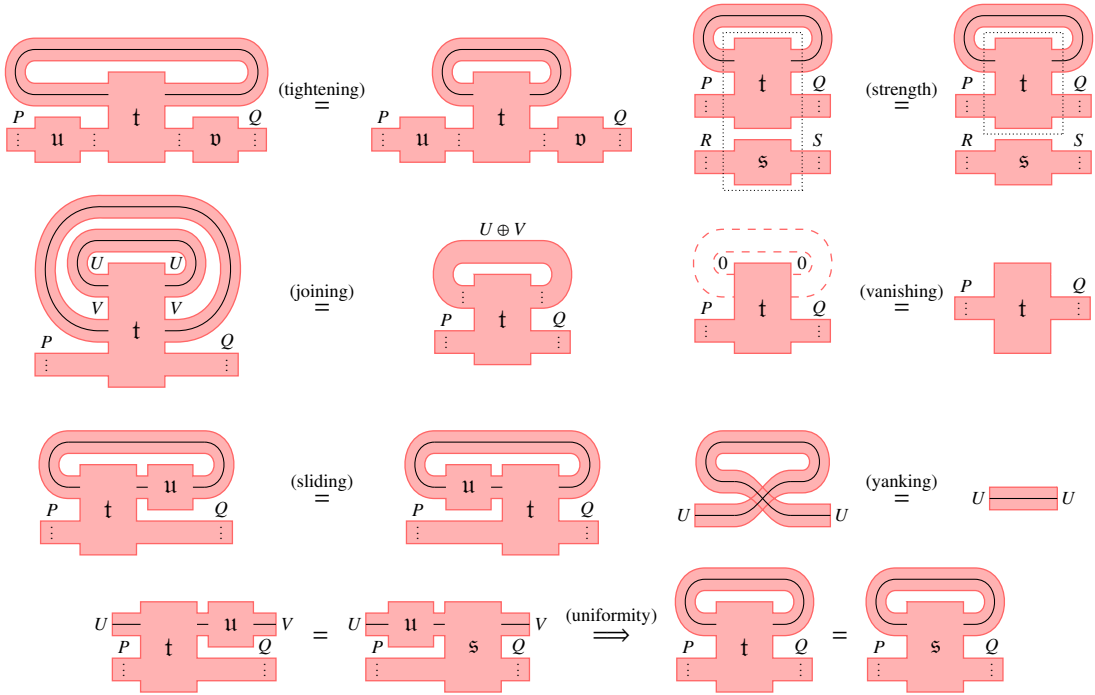

    \mylabel{ax:trace:tape:sliding}{sliding}
    \centering
    \[
        \begin{array}{c@{}c@{}c @{\quad} c@{}c@{}c}
            
    \InputIfFileExists{traceTapeAx/tight/lhs.tikz}{}{\input{./tikz/traceTapeAx/tight/lhs.tikz}}
 &\stackrel{\text{(tightening)}}{=}& 
    \InputIfFileExists{traceTapeAx/tight/rhs.tikz}{}{\input{./tikz/traceTapeAx/tight/rhs.tikz}}

            &
            
    \InputIfFileExists{traceTapeAx/strength/lhs.tikz}{}{\input{./tikz/traceTapeAx/strength/lhs.tikz}}
 &\stackrel{\text{(strength)}}{=}& 
    \InputIfFileExists{traceTapeAx/strength/rhs.tikz}{}{\input{./tikz/traceTapeAx/strength/rhs.tikz}}

            \\[20pt]
            
    \InputIfFileExists{traceTapeAx/join/lhs.tikz}{}{\input{./tikz/traceTapeAx/join/lhs.tikz}}
 &\stackrel{\text{(joining)}}{=}& 
    \InputIfFileExists{traceTapeAx/join/rhs.tikz}{}{\input{./tikz/traceTapeAx/join/rhs.tikz}}

            &
            
    \InputIfFileExists{traceTapeAx/vanish/lhs.tikz}{}{\input{./tikz/traceTapeAx/vanish/lhs.tikz}}
 &\stackrel{\text{(vanishing)}}{=}& 
    \InputIfFileExists{traceTapeAx/vanish/rhs.tikz}{}{\input{./tikz/traceTapeAx/vanish/rhs.tikz}}

            \\[40pt]
            
    \InputIfFileExists{traceTapeAx/slide/lhs.tikz}{}{\input{./tikz/traceTapeAx/slide/lhs.tikz}}
 &\stackrel{\text{(sliding)}}{=}& 
    \InputIfFileExists{traceTapeAx/slide/rhs.tikz}{}{\input{./tikz/traceTapeAx/slide/rhs.tikz}}

            &
            
    \InputIfFileExists{traceTapeAx/yank/lhs.tikz}{}{\input{./tikz/traceTapeAx/yank/lhs.tikz}}
 &\stackrel{\text{(yanking)}}{=}& 
    \InputIfFileExists{traceTapeAx/yank/rhs.tikz}{}{\input{./tikz/traceTapeAx/yank/rhs.tikz}}

            \\[20pt]
            \multicolumn{6}{c}{
                
    \InputIfFileExists{traceTapeAx/uniformity/lhs.tikz}{}{\input{./tikz/traceTapeAx/uniformity/lhs.tikz}}
 = 
    \InputIfFileExists{traceTapeAx/uniformity/rhs.tikz}{}{\input{./tikz/traceTapeAx/uniformity/rhs.tikz}}
 \stackrel{(\text{uniformity})}{\implies} 
    \InputIfFileExists{traceTapeAx/uniformity/TR_lhs.tikz}{}{\input{./tikz/traceTapeAx/uniformity/TR_lhs.tikz}}
 = 
    \InputIfFileExists{traceTapeAx/uniformity/TR_rhs.tikz}{}{\input{./tikz/traceTapeAx/uniformity/TR_rhs.tikz}}

            }
          \end{array}
    \]
    \caption{Uniform trace axioms in tape diagrams.}
    \label{fig:trace-axiom-tapes}
\end{figure}

Observe that (co)monoids and traces are defined for arbitrary monomials $U$, but not for all polynomials $P$. They can easily be defined inductively by means of the coherence axioms for (co)monoids and joining and vanishing for traces: see  
Table \ref{tab:inddefutfb}.

In the same way in which $\CatTape$ is the free $\sort$-sesquistrict fb rig category, $\CatTrTape$ is the free uniformly traced one.
\begin{theorem}\label{thm:freeut-fb}
$\CatTrTape$ is the free sesquistrict ut-fb rig category generated by the monoidal signature $(\sort, \sign)$.
\end{theorem}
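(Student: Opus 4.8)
The plan is to avoid re-developing the rig structure on traced tapes from scratch and instead derive the statement by composing two free constructions that are already in place: the free sesquistrict fb rig category $\CatTape$ of Theorem~\ref{thm:Tapes is free sesquistrict generated by sigma}, and the free uniform trace $\UTr$ of Section~\ref{ssec:tracerig}. Concretely, I would show that $\CatTrTape$ can be built in two equivalent ways — directly as $F_3 U_1(\CatString)$, or by first forming $\CatTape = F_2 U_1(\CatString)$ and then freely adjoining uniform traces — and that the second description exhibits it as the free object we want.

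First I would establish the isomorphism $F_3 \cong \UTr \circ F_2$ of functors $\CAT \to \FBUTr$. By the adjunction~\eqref{eq:adjunction} we have $F_2 \dashv U_2$ with $U_2 \colon \FBC \to \CAT$, and by Corollary~\ref{cor:free-uniform-trace-fb} we have $\UTr \dashv \fun{U}$ with $\fun{U} \colon \FBUTr \to \FBC$. Composing, $\UTr \circ F_2$ is left adjoint to $U_2 \circ \fun{U}$. Since the forgetful functor $U_3 \colon \FBUTr \to \CAT$ of Lemma~\ref{lemma:tr-adj} coincides with $U_2 \circ \fun{U}$ — both first discard the trace and then the finite biproduct structure, landing on the bare underlying category — uniqueness of left adjoints gives $F_3 \cong \UTr \circ F_2$. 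Instantiating at $U_1(\CatString)$ yields an isomorphism of ut-fb categories
\[ \CatTrTape = F_3 U_1(\CatString) \;\cong\; \UTr\big(F_2 U_1(\CatString)\big) = \UTr(\CatTape). \]

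Next I would transport the rig structure and the universal property across this isomorphism. By Theorem~\ref{thm:Tapes is free sesquistrict generated by sigma}, $\CatTape$ is a sesquistrict fb rig category, and Proposition~\ref{prop:free-uniform-trace-fb-rig} (together with Theorem~\ref{th:free-uniform-trace-rig}, whose content is exactly that $\UTr$ preserves the rig structure via the whiskerings of Definition~\ref{def:utr-whisk}) makes $\UTr(\CatTape)$ a ut-fb rig category. Because $\UTr$ is identity-on-objects and the canonical embedding $\CatTape \to \UTr(\CatTape)$ is a strict rig functor, every distributor that is an identity in $\CatTape$ stays an identity in $\UTr(\CatTape)$; hence the embedding of the discrete object category is still sesquistrict, and $\UTr(\CatTape)$ is a \emph{sesquistrict} ut-fb rig category. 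The freeness then follows by composing the two universal properties: given any sesquistrict ut-fb rig category $H' \colon \Cat{M}' \to \Cat{D}'$ with an interpretation of $\sign$, forgetting the trace via $\fun{U}_3$ produces an interpretation in the underlying sesquistrict fb rig category; Theorem~\ref{thm:Tapes is free sesquistrict generated by sigma} yields a unique sesquistrict fb rig functor out of $\CatTape$, and the freeness of $\UTr$ in Proposition~\ref{prop:free-uniform-trace-fb-rig} lifts it to a unique sesquistrict ut-fb rig functor out of $\UTr(\CatTape) \cong \CatTrTape$, extending the given interpretation as required by Definition~\ref{def:freesesqui}.

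The step I expect to require the most care is the bookkeeping around sesquistrictness and the matching of interpretations. One must check that the isomorphism $F_3 \cong \UTr F_2$ is compatible with the interpretation data $(\alpha_\sort, \alpha_\sign)$ used to generate both sides, so that the transported rig structure on $\CatTrTape$ is genuinely the intended one (with $\per$ on traced tapes computed by the whiskerings of Definition~\ref{def:utr-whisk}) rather than merely an abstractly isomorphic one. This is also where one must confirm that the adjunctions of Proposition~\ref{prop:free-uniform-trace-fb-rig} and Theorem~\ref{th:free-uniform-trace-rig} restrict cleanly to the sesquistrict setting, which is the only point where the object-level condition of Definition~\ref{def:sesquistrict rig category} interacts with the trace; the algebra of whiskerings from Proposition~\ref{lemma:utr-whisk-algebra}, and in particular the uniformity-dependent law~\eqref{eq:tape:LexchangeR}, is what underwrites this compatibility.
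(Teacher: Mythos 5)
Your proposal is correct and follows essentially the same route as the paper's own proof: the paper likewise establishes $F_3 \cong \UTr F_2$ by uniqueness of left adjoints (Proposition~\ref{prop:iso}, using $U U_2 = U_3$), checks that sesquistrictness survives because the distributors that are identities in $\CatTape$ remain identities in $\UTr(\CatTape)$ (Proposition~\ref{prop:tracedtacesutfb}), and obtains freeness by composing the universal properties of Theorem~\ref{thm:Tapes is free sesquistrict generated by sigma} and Proposition~\ref{prop:free-uniform-trace-fb-rig} (Theorem~\ref{thm:uniformfree}). The compatibility issue you flag at the end — that the rig structure transported along the isomorphism agrees with the whiskering-based definitions — is exactly the point the paper defers to Appendix~\ref{app:rigtrTape}.
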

One can prove the above theorem by extending the inductive definitions of whiskerings for tapes in Table \ref{tab:producttape} with the cases for traces given in Table \ref{tab:wisktraces} and then extends the proof of \Cref{thm:Tapes is free sesquistrict generated by sigma} by considering this additional inductive case. Hereafter we illustrate a more modular proof that allows to reuse Theorem \ref{thm:Tapes is free sesquistrict generated by sigma} and the free uniform state construction discussed in Section \ref{sec:traced}.

\begin{table}
    \begin{center}
        \hfill
  \[ \begin{array}{rcl|rcl}
    \LW U {\trace_V \t } &\defeq& \trace_{UV}{ \LW{U}{\t}} & \RW U {\trace_V \t} &\defeq& \trace_{VU}{\RW{U}\t} \\
    \end{array} 
     \]
      \hfill
      \caption{Extension of the definition of left and right whiskerings in Table \ref{tab:producttape} with the case of trace}\label{tab:wisktraces}
    \end{center}
\end{table}

\subsection{Proof of Theorem \ref{thm:freeut-fb}}

The adjunction in \eqref{eq:adjunction-utfb} can be decomposed in the following two adjunctions, where the leftmost is the one in~\eqref{eq:adjunction} and the rightmost is the one given by Corollary \ref{cor:free-uniform-trace-fb}.

\begin{equation}
    \begin{tikzcd}
        \SMC \ar[r,"U_1"] &
        \CAT
        \arrow[r, "F_2"{name=F}, bend left] &
        \FBC
        \arrow[r, "\UTr"{name=H}, bend left] 
        \arrow[l, "U_2"{name=U}, bend left]
        \arrow[phantom, from=F, to=U, "\vdash" rotate=90]&
        \FBUTr
        \arrow[l, "U"{name=G}, bend left]
        \arrow[phantom, from=H, to=G, "\vdash" rotate=90]
    \end{tikzcd}
\end{equation}

\begin{proposition}\label{prop:iso}
For all categories $\Cat{C}$, $\UTr F_2 (\Cat{C})$ and $F_3 (\Cat{C})$ are isomorphic as ut-fb-categories.
\end{proposition}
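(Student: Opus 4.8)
The plan is to prove the isomorphism abstractly, avoiding any direct manipulation of tape diagrams, by exploiting the uniqueness of adjoints. The crucial observation is that the forgetful functor $U_3 \colon \FBUTr \to \CAT$ appearing in the adjunction~\eqref{eq:adjunction-utfb} factors on the nose as the composite $\FBUTr \xrightarrow{U} \FBC \xrightarrow{U_2} \CAT$: a ut-fb category is, by Definition~\ref{def:utfbcategory}, a finite biproduct category carrying a compatible uniform trace, and both $U$ and $U_2$ simply discard structure, so $U_3 = U_2 \circ U$ as functors.

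First I would record the two adjunctions already supplied by the excerpt: $F_2 \dashv U_2$ from the adjunction~\eqref{eq:adjunction}, and $\UTr \dashv U$ from Corollary~\ref{cor:free-uniform-trace-fb}. Composing left adjoints with left adjoints (and, dually, the right adjoints in the reverse order) yields at once $\UTr \circ F_2 \dashv U_2 \circ U$. Combined with the factorization $U_3 = U_2 \circ U$, this says precisely that $\UTr F_2 \dashv U_3$. On the other hand, Lemma~\ref{lemma:tr-adj} gives $F_3 \dashv U_3$. Hence $\UTr F_2$ and $F_3$ are both left adjoint to the same functor $U_3 \colon \FBUTr \to \CAT$.

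By uniqueness of left adjoints up to natural isomorphism, there is a natural isomorphism $\UTr F_2 \cong F_3$ of functors $\CAT \to \FBUTr$. Evaluating at an arbitrary category $\Cat{C}$ produces an isomorphism $\UTr F_2(\Cat{C}) \cong F_3(\Cat{C})$ in the category $\FBUTr$; since the morphisms of $\FBUTr$ are by definition morphisms of ut-fb categories, this is exactly an isomorphism of ut-fb categories, which is the assertion of the proposition.

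Because the argument is formal, there is essentially nothing to grind through, and I expect the only genuine obstacle to be a bookkeeping one: one must verify that $U_3 = U_2 \circ U$ as an \emph{equality} of functors, not merely a natural isomorphism. Concretely, this requires that the notions of ut-fb category and of morphism of ut-fb categories used to define $U_3$ coincide with those obtained by first regarding an object of $\FBUTr$ as a finite biproduct category (via $U$) and then as a bare category (via $U_2$). This is immediate from Definition~\ref{def:utfbcategory}, but it is the single place where the abstract argument would break if the categories $\FBC$, $\FBUTr$ and their forgetful functors were set up with any mismatch, so I would state it explicitly before invoking the composition of adjunctions.
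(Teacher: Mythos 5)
Your proof is correct and is essentially identical to the paper's own argument: the paper likewise observes that $U_3$ factors as the composite of the two forgetful functors, composes the adjunctions $F_2 \dashv U_2$ and $\UTr \dashv U$, and concludes by uniqueness of left adjoints. The extra care you take in noting that the factorization of $U_3$ must hold on the nose is a reasonable bookkeeping point, but it does not change the substance of the argument.
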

\begin{proof}
Observe that $U U_2 = U_3$. Since adjoints compose, then $\UTr F_2$ is left-adjoint to $U_3$. By uniqueness of adjoints, $\UTr F_2 (\Cat{C})$ is isomorphic to $F_3 (\Cat{C})$.
\end{proof}
\begin{corollary}
 $\CatTrTape$ and $\UTr(\CatTape)$ are isomorphic as ut-fb categories.
\end{corollary}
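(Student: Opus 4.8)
The plan is to recognise this corollary as a direct instantiation of Proposition~\ref{prop:iso}. First I would recall the two defining equations from this section and from Section~\ref{sc:tape}: the category $\CatTrTape$ is by definition $F_3 U_1(\CatString)$, while $\CatTape$ is by definition $F_2 U_1(\CatString)$. Substituting the latter into the statement gives
\[
\UTr(\CatTape) \;=\; \UTr\, F_2\bigl(U_1(\CatString)\bigr).
\]

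Next, I would apply Proposition~\ref{prop:iso} with the choice $\Cat{C} = U_1(\CatString)$. That proposition produces an isomorphism of ut-fb categories between $\UTr\, F_2(\Cat{C})$ and $F_3(\Cat{C})$ for every category $\Cat{C}$; instantiating it yields an isomorphism between $\UTr\, F_2\bigl(U_1(\CatString)\bigr)$ and $F_3\bigl(U_1(\CatString)\bigr) = \CatTrTape$. Chaining these identifications gives $\CatTrTape \cong \UTr(\CatTape)$ as ut-fb categories, which is exactly the claim.

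There is essentially no obstacle here: all of the substance has already been discharged in Proposition~\ref{prop:iso}, whose argument rests on the identity $U U_2 = U_3$ of forgetful functors together with uniqueness of left adjoints up to isomorphism. The only point worth double-checking is the bookkeeping of the functors $U_1, U_2, U_3, U$ — namely that $U_1(\CatString)$ is precisely the plain category whose images under $F_2$ and (via $\UTr$) under $F_3$ are the ones named $\CatTape$ and $\CatTrTape$ — but this is immediate from the very definitions of $\CatTape$ and $\CatTrTape$, so the corollary follows with no further computation.
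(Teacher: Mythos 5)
Your proposal is correct and is exactly the argument the paper intends: the corollary is the instantiation of Proposition~\ref{prop:iso} at $\Cat{C} = U_1(\CatString)$, using the definitions $\CatTape = F_2U_1(\CatString)$ and $\CatTrTape = F_3U_1(\CatString)$. The paper leaves this instantiation implicit, so your bookkeeping of the functors is a faithful (and slightly more explicit) rendering of the same proof.
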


The above result suggests that to prove that $\CatTrTape$ is the free sesquistrict ut-fb rig category, one could rather prove that $\UTr(\CatTape)$ is the free one. This can be easily achieved by relying on Theorem \ref{thm:Tapes is free sesquistrict generated by sigma} and Proposition \ref{prop:free-uniform-trace-fb-rig}.

\begin{proposition}\label{prop:tracedtacesutfb}
$\UTr(\CatTape)$ is a $\sort$-sesquistrict ut-fb rig category.
\end{proposition}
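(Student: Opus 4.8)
The plan is to deduce the statement from the free-generation result for tapes together with the fact that the free uniform trace preserves both the rig and the finite-biproduct structure; the only genuinely new point to check will be sesquistrictness. First I would recall that, by Theorem~\ref{thm:Tapes is free sesquistrict generated by sigma}, $\CatTape$ is the free sesquistrict fb rig category over $(\sort,\sign)$. In particular its underlying rig category is an fb rig category, and for every basic sort $A\in\sort$ and all polynomials $X,Y$ the left distributor $\dl{A}{X}{Y}$ is an identity morphism (this is precisely what $\sort$-sesquistrictness of $\CatTape$ means).

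Next I would apply Proposition~\ref{prop:free-uniform-trace-fb-rig} to the fb rig category $\CatTape$: this immediately yields that $\UTr(\CatTape)$ is a ut-fb rig category. By the explicit construction of $\UTr$ (Definition~\ref{def:uniform-state-construction}) its objects are exactly those of $\CatTape$, and its rig structure is the one produced by the whiskering definitions of Section~\ref{ssec:tracerig} (Definition~\ref{def:utr-whisk} and~\eqref{eq:utr-per}). Since by convention all rig categories here are right strict, $\UTr(\CatTape)$ is right strict as well, so it remains only to verify that it is \emph{$\sort$-sesquistrict}, i.e.\ that $\dl{A}{X}{Y}$ is an identity in $\UTr(\CatTape)$ for every $A\in\sort$.

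For this last point I would use the unit $\eta_{\CatTape}\colon \CatTape \to \fun{U}_3\UTr(\CatTape)$ of the adjunction of Theorem~\ref{th:free-uniform-trace-rig}. By construction $\eta_{\CatTape}$ sends every arrow $f$ to $\st{f}{\zero}$ and is the identity on objects; moreover, being a morphism of $\RIG$, it is a right strict rig functor and hence preserves left distributors on the nose, so that $\eta_{\CatTape}(\dl{X}{Y}{Z}) = \dl{X}{Y}{Z}$ in $\UTr(\CatTape)$. Taking $X=A\in\sort$ and using $\dl{A}{Y}{Z}=\id{}$ in $\CatTape$, we get $\dl{A}{Y}{Z} = \eta_{\CatTape}(\id{}) = \id{}$ in $\UTr(\CatTape)$. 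Thus the identity-on-objects embedding $\sort \to \UTr(\CatTape)$ (the composite of the embedding $\sort \to \CatTape$ with $\eta_{\CatTape}$) exhibits $\UTr(\CatTape)$ as a $\sort$-sesquistrict ut-fb rig category, as required.

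The one step demanding care is exactly the claim that the left distributor of $\UTr(\CatTape)$ is literally the $\eta$-image of that of $\CatTape$; this rests on the unit being an identity-on-objects right strict rig functor, which is what the restriction of the adjunction to $\RIG$ (Theorem~\ref{th:free-uniform-trace-rig}) guarantees. The hard part is therefore not a computation but ensuring that this restriction really provides a unit in $\RIG$ rather than merely in the underlying monoidal setting; should one wish to avoid relying on it, the same identity can be checked directly by unfolding the whiskering-based definition of $\delta^l$ on $\UTr(\CatTape)$ from Definition~\ref{def:utr-whisk} and~\eqref{eq:utr-per} and evaluating it at a basic sort $A$, where the corresponding distributor of $\CatTape$ is already trivial.
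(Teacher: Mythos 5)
Your proposal is correct and follows essentially the same route as the paper: first invoke Proposition~\ref{prop:free-uniform-trace-fb-rig} to get that $\UTr(\CatTape)$ is a ut-fb rig category, then reduce sesquistrictness to the fact that $\dl{A}{Q}{R}$ is an identity in $\CatTape$ and transfers to $\UTr(\CatTape)$. Your extra care in justifying the transfer step via the identity-on-objects unit $\eta$ (which sends $\dl{A}{Q}{R}$ to $\st{\id{}}{\zero}=\id{}$) is exactly the mechanism the paper leaves implicit, since the distributors of $\UTr(\CatTape)$ are by construction the $\eta$-images of those of $\CatTape$.
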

\begin{proof}
By Proposition \ref{prop:free-uniform-trace-fb-rig}, $\UTr(\CatTape)$ is a ut-fb rig category. One only needs to show that the inclusion  functor $\sort \to \UTr(\CatTape)$ makes $\UTr(\CatTape)$ a $\sort$-sesquistrict rig category according to Definition \ref{def:sesquistrict rig category}. This means that we have to show that for all $A\in \sort$, $\dl{A}{Q}{R}=  \id{(A\per Q)\piu (A\per R)}$. The latter equivalence holds in $\CatTape$ (see e.g. the end of the proof of Theorem 5.10 in \cite{bonchi2023deconstructing}) and thus it also holds in $\UTr(\CatTape)$. %
\end{proof}

\begin{theorem}\label{thm:uniformfree}
    $\UTr(\CatTape)$ is the free sesquistrict ut-fb rig category generated by the monoidal signature $(\sort, \sign)$.
\end{theorem}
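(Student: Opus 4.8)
The plan is to derive the statement purely from the two universal properties already available: the freeness of $\CatTape$ as a sesquistrict fb rig category (Theorem \ref{thm:Tapes is free sesquistrict generated by sigma}) and the adjunction $\UTr \dashv \fun{U}_3$ between $\FBRIG$ and $\FBRIGUtr$ (Proposition \ref{prop:free-uniform-trace-fb-rig}), exploiting that adjoints compose. By Proposition \ref{prop:tracedtacesutfb} we already know that $\UTr(\CatTape)$ is a $\sort$-sesquistrict ut-fb rig category, so the only thing left to establish is its universal property in the sense of Definition \ref{def:freesesqui}.

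First I would fix the interpretation of $\sign$ in $\UTr(\CatTape)$. Writing $(\iota_\sort, \iota_\sign)$ for the interpretation of $\sign$ in $\CatTape$ provided by Theorem \ref{thm:Tapes is free sesquistrict generated by sigma}, and $G \colon \CatTape \to \fun{U}_3\UTr(\CatTape)$ for the unit of the adjunction of Proposition \ref{prop:free-uniform-trace-fb-rig} (which is the identity on objects and leaves the sesquistrict index $\sort$ untouched), I would take as interpretation in $\UTr(\CatTape)$ the pair $(\iota_\sort, \iota_\sign ; G)$: unchanged on sorts, and $G$ applied to the images of the generators on the nose.

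Next, for the universal property, start from an arbitrary sesquistrict ut-fb rig category $H' \colon \Cat{M}' \to \Cat{D}'$ with an interpretation $(\alpha'_\sort, \alpha'_\sign)$ of $\sign$ in it. The key observation is that forgetting the trace turns $H' \colon \Cat{M}' \to \fun{U}_3(\Cat{D}')$ into a sesquistrict fb rig category over the same discrete category $\Cat{M}'$, and $(\alpha'_\sort, \alpha'_\sign)$ stays an interpretation there, since the underlying signature data — domains, codomains and images of generators — do not involve the trace. By freeness of $\CatTape$ there is then a unique sesquistrict fb rig functor $(\alpha, \beta)$, with $\beta \colon \CatTape \to \fun{U}_3(\Cat{D}')$, such that $\iota_\sort ; \alpha = \alpha'_\sort$ and $\iota_\sign ; \beta = \alpha'_\sign$. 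Transposing $\beta$ along $\UTr \dashv \fun{U}_3$ yields a unique ut-fb rig functor $\hat\beta \colon \UTr(\CatTape) \to \Cat{D}'$ with $G ; \fun{U}_3(\hat\beta) = \beta$; as $\UTr$ and $G$ fix objects and the index $\sort$, the pair $(\alpha, \hat\beta)$ is a sesquistrict ut-fb rig functor. Existence of the extension is then immediate: $\iota_\sort ; \alpha = \alpha'_\sort$ holds at the $\CatTape$ level, while on arrows $(\iota_\sign ; G) ; \hat\beta = \iota_\sign ; (G ; \fun{U}_3(\hat\beta)) = \iota_\sign ; \beta = \alpha'_\sign$.

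Finally, uniqueness would follow by running the same two universal properties backwards. Given any sesquistrict ut-fb rig functor $(\alpha'', \gamma)$ extending $(\iota_\sort, \iota_\sign ; G)$, precomposing $\fun{U}_3(\gamma)$ with $G$ produces a sesquistrict fb rig functor into $\fun{U}_3(\Cat{D}')$ extending $(\alpha'_\sort, \alpha'_\sign)$; by uniqueness at the $\CatTape$ level it must coincide with $(\alpha, \beta)$, so in particular $G ; \fun{U}_3(\gamma) = \beta$ and $\alpha'' = \alpha$. The universal property of the transpose then forces $\gamma = \hat\beta$. I expect the only delicate point to be the bookkeeping around the sesquistrict structure — that the forgetful $\FBRIGUtr \to \FBRIG$ refines to one between the sesquistrict variants over the same discrete category, and that $\UTr$ and $G$ respect it — but the single piece of non-formal content, namely that $\UTr(\CatTape)$ is genuinely $\sort$-sesquistrict ut-fb, is already delivered by Proposition \ref{prop:tracedtacesutfb}, leaving the remainder a routine ``adjoints compose'' verification.
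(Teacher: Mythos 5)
Your proposal is correct and follows essentially the same route as the paper: fix the interpretation as the generators' images post-composed with the unit of the adjunction from Proposition \ref{prop:free-uniform-trace-fb-rig}, factor any target interpretation first through the freeness of $\CatTape$ (Theorem \ref{thm:Tapes is free sesquistrict generated by sigma}) and then transpose along the adjunction. You are somewhat more explicit than the paper about the forgetful functor and about uniqueness, but the underlying argument is the same.
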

\begin{proof}
The obvious interpretation of  $(\sign, \sort)$ into $\UTr(\CatTape)$ is $(\id{\sort}, \tape{\cdot} ; \eta)$ where $\eta$ is the unit of the adjunction provided by Proposition \ref{prop:free-uniform-trace-fb-rig} mapping any tape $\t$ in $\CatTape$ into $\st{\t}{\zero}$.

Now, suppose that $M \to D$ is a $\sort$-sesquistrict ut-fb rig category with an interpretation $(\alpha_\sort, \alpha_\sign)$. 

Since $D$ is, in particular, a fb rig category then by Theorem \ref{thm:Tapes is free sesquistrict generated by sigma}, there exists an $\sort$-sesqustrict fb rig functor $(\alpha,\beta)$ with $\alpha\colon \sort \to M$ and $\beta \colon \CatTape \to D$ such that
\begin{equation}\label{eq:one}
\id{\sort}; \alpha = \alpha_\sort \qquad \text{ and } \qquad \tape{\cdot} ; \beta = \alpha_\sign\text{.}
\end{equation}
Since $D$ is a ut-fb rig category, by the adjunction in Proposition \ref{prop:free-uniform-trace-fb-rig}, there exists a unique ut-fb rig functor $\beta^\sharp \colon \UTr(\CatTape) \to D$ such that 
\begin{equation}\label{eq:two}
\eta ; \beta^\sharp = \beta
\end{equation}
From \eqref{eq:one} and \eqref{eq:two}, it immediately follows that $\tape{\cdot}; \eta ; \beta^\sharp = \tape{\cdot}; \beta = \alpha_\sign$. In summary, we have a sesquistrict ut-fb rig functor
$(\alpha\colon \sort \to M, \beta^\sharp \colon \UTr(\CatTape) \to D)$ such that 
\begin{equation}\label{eq:one}
\id{\sort}; \alpha = \alpha_\sort \qquad \text{ and } \qquad \tape{\cdot} ; \eta ; \beta^\sharp = \alpha_\sign\text{.}
\end{equation}
\end{proof}

Thanks to the isomorphism in Proposition \ref{prop:iso}, $\CatTrTape$ inherits the rig structures from $\UTr(\CatTape)$.
Let $H \colon \UTr(\CatTape) \to \CatTrTape$ and $K \colon \CatTrTape \to \UTr(\CatTape)$  be the functors witnessing the isomorphism. Then, one can define $\per$, distributors and symmetries on $\CatTrTape$ as follows:
\begin{equation}\label{eq:rigstruct}
        \t_1 \per \t_2 \defeq H(K(\t_1) \per K(\t_2)) \qquad \dl{P}{Q}{R} \defeq H(\dl{K(P)}{K(Q)}{K(R)} \mid \zero ) \qquad \symmt{P}{Q} \defeq H(\symmt{K(P)}{K(Q)}\mid \zero )
\end{equation}
The above definitions make $\CatTrTape$ and $\UTr(\CatTape)$ isomorphic as ut-fb rig categories. By Theorem \ref{thm:uniformfree}, it follows that  $\CatTrTape$ is the free sesquistrict ut-fb rig category.

\begin{remark}
Observe that the rig structure of $\CatTrTape$ in \eqref{eq:rigstruct} is defined differently than using the whiskerings in Tables \ref{tab:producttape} and \ref{tab:wisktraces} and symmetries and distributors in Table \ref{table:def dl symmt}. 
Since the definition in \eqref{eq:rigstruct} passes through the isomorphism, it is a bit unhandy. The reader can safely use those in Tables \ref{table:def dl symmt}, \ref{tab:producttape} and \ref{tab:wisktraces}, since the two constructions coincide: see Appendix \ref{app:rigtrTape} for a detailed proof.
\end{remark}

\section{Kleene Bicategories}\label{sec:kleene}

In this section we leave the rig structure aside and we consider a special type of categories with finite biproducts and traces that resembles more closely the monoidal category $(\Rel,\piu,\zero)$. 
In the next section, we will enrich such categories with the rig structure and study the corresponding tape diagrams.

\subsection{Finite Biproduct Categories with Idempotent Convolution}\label{ssec:fbic}
In any fb category $\Cat{C}$, the \emph{convolution monoid} is defined for all objects $X,Y$ and arrows $f,g\colon X \to Y$ as %
\begin{equation}\label{eq:covolution}
f+g \defeq \stringConvolution{f}{g}{X}{Y} \quad \text{(i.e., $\diag{X}; f \piu g ;\codiag{Y}$)} \quad\qquad 0 \defeq \stringBottom{X}{Y} \quad \text{(i.e., $\bang{X};\cobang{Y}$).}
\end{equation}
With this definition one can readily see that $\Cat{C}$ is enriched over $\Cat{CMon}$, the category of commutative monoids, namely each homset carries a commutative monoid
\begin{equation}\label{eq:cmon laws}
(f+g)+h = f+(g+h)
\quad
f+g=g+f
\quad
f+0=f
\end{equation}
and such monoid distributes over the composition $;$
\begin{equation}\label{eq:cmon enrichment}
(f+g);h = (f;h + g;h)
\quad
h;(f+g) = (h;f + h;g+)
\quad
f;0 =0= 0;f
\end{equation}

In this section we focus on a special kind of fb category, defined as follows. 
\begin{definition}\label{def:fbidempotent}
  A poset enriched monoidal category \(\Cat{C}\) is a \emph{finite biproduct category with idempotent convolution} iff $\Cat{C}$ has finite biproducts and the monoids $(\codiag{X},\cobang{X})$ are left adjoint to the comonoids $(\diag{X},\bang{X})$, i.e.,
  \[\id{X\piu X} \leq \codiag{X} ; \diag{X} \qquad  \diag{X};\codiag{X} \leq \id{X} \qquad \id{0} \leq \cobang{X};\bang{X} \qquad \bang{X};\cobang{X} \leq \id{X}\]
 \end{definition}
The axioms of adjunction for are illustrated by means of string diagrams in \Cref{fig:adjoint-biproducts}.
  \begin{figure}[h!]
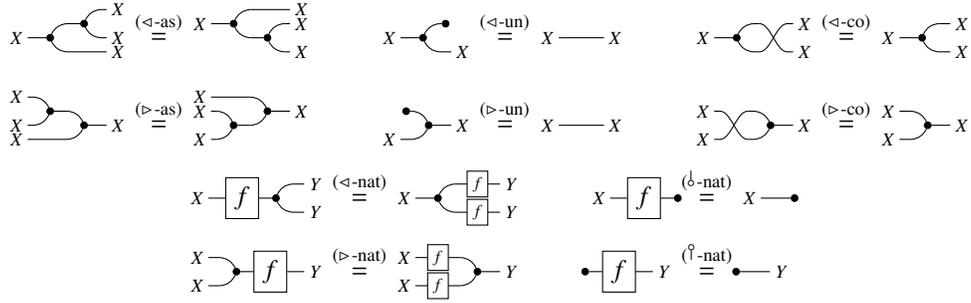

    \centering
    \mylabel{ax:comonoid:assoc}{$\diag{}$-as}
    \mylabel{ax:comonoid:unit}{$\diag{}$-un}
    \mylabel{ax:comonoid:comm}{$\diag{}$-co}
    \mylabel{ax:comonoid:nat:copy}{$\diag{}$-nat}
    \mylabel{ax:comonoid:nat:discard}{$\bang{}$-nat}
    \mylabel{ax:monoid:assoc}{$\codiag{}$-as}
    \mylabel{ax:monoid:unit}{$\codiag{}$-un}
    \mylabel{ax:monoid:comm}{$\codiag{}$-co}
    \mylabel{ax:monoid:nat:copy}{$\codiag{}$-nat}
    \mylabel{ax:monoid:nat:discard}{$\cobang{}$-nat}
    \[
    \begin{array}{c@{}c@{}c @{\qquad} c@{}c@{}c @{\qquad} c@{}c@{}c}
      
    \InputIfFileExists{fbAx/comonoid/assoc/lhs.tikz}{}{\input{./tikz/fbAx/comonoid/assoc/lhs.tikz}}
 & \stackrel{\text{($\diag{}$-as)}}{=} & 
    \InputIfFileExists{fbAx/comonoid/assoc/rhs.tikz}{}{\input{./tikz/fbAx/comonoid/assoc/rhs.tikz}}

      &
      
    \InputIfFileExists{fbAx/comonoid/unit/lhs.tikz}{}{\input{./tikz/fbAx/comonoid/unit/lhs.tikz}}
 & \stackrel{\text{($\diag{}$-un)}}{=} & 
    \InputIfFileExists{fbAx/comonoid/unit/rhs.tikz}{}{\input{./tikz/fbAx/comonoid/unit/rhs.tikz}}

      &
      
    \InputIfFileExists{fbAx/comonoid/comm/lhs.tikz}{}{\input{./tikz/fbAx/comonoid/comm/lhs.tikz}}
 & \stackrel{\text{($\diag{}$-co)}}{=} & 
    \InputIfFileExists{fbAx/comonoid/comm/rhs.tikz}{}{\input{./tikz/fbAx/comonoid/comm/rhs.tikz}}

      \\[10pt]
      
    \InputIfFileExists{fbAx/monoid/assoc/lhs.tikz}{}{\input{./tikz/fbAx/monoid/assoc/lhs.tikz}}
 & \stackrel{\text{($\codiag{}$-as)}}{=} & 
    \InputIfFileExists{fbAx/monoid/assoc/rhs.tikz}{}{\input{./tikz/fbAx/monoid/assoc/rhs.tikz}}

      &
      
    \InputIfFileExists{fbAx/monoid/unit/lhs.tikz}{}{\input{./tikz/fbAx/monoid/unit/lhs.tikz}}
 & \stackrel{\text{($\codiag{}$-un)}}{=}& 
    \InputIfFileExists{fbAx/monoid/unit/rhs.tikz}{}{\input{./tikz/fbAx/monoid/unit/rhs.tikz}}

      &
      
    \InputIfFileExists{fbAx/monoid/comm/lhs.tikz}{}{\input{./tikz/fbAx/monoid/comm/lhs.tikz}}
 & \stackrel{\text{($\codiag{}$-co)}}{=} & 
    \InputIfFileExists{fbAx/monoid/comm/rhs.tikz}{}{\input{./tikz/fbAx/monoid/comm/rhs.tikz}}

      \\[10pt]
      \multicolumn{9}{c}{
        \begin{array}{c@{}c@{}c @{\qquad} c@{}c@{}c}
          
    \InputIfFileExists{fbAx/comonoid/nat/copy/lhs.tikz}{}{\input{./tikz/fbAx/comonoid/nat/copy/lhs.tikz}}
  & \stackrel{\text{($\diag{}$-nat)}}{=} & 
    \InputIfFileExists{fbAx/comonoid/nat/copy/rhs.tikz}{}{\input{./tikz/fbAx/comonoid/nat/copy/rhs.tikz}}

          &
          
    \InputIfFileExists{fbAx/comonoid/nat/discard/lhs.tikz}{}{\input{./tikz/fbAx/comonoid/nat/discard/lhs.tikz}}
  & \stackrel{\text{($\bang{}$-nat)}}{=} & 
    \InputIfFileExists{fbAx/comonoid/nat/discard/rhs.tikz}{}{\input{./tikz/fbAx/comonoid/nat/discard/rhs.tikz}}

          \\[10pt]
          
    \InputIfFileExists{fbAx/monoid/nat/copy/lhs.tikz}{}{\input{./tikz/fbAx/monoid/nat/copy/lhs.tikz}}
  & \stackrel{\text{($\codiag{}$-nat)}}{=} & 
    \InputIfFileExists{fbAx/monoid/nat/copy/rhs.tikz}{}{\input{./tikz/fbAx/monoid/nat/copy/rhs.tikz}}

          &
          
    \InputIfFileExists{fbAx/monoid/nat/discard/lhs.tikz}{}{\input{./tikz/fbAx/monoid/nat/discard/lhs.tikz}}
  & \stackrel{\text{($\cobang{}$-nat)}}{=} & 
    \InputIfFileExists{fbAx/monoid/nat/discard/rhs.tikz}{}{\input{./tikz/fbAx/monoid/nat/discard/rhs.tikz}}

        \end{array}
      }
    \end{array}
    \]
    \caption{Axioms of fb categories in string diagrams.}
    \label{fig:ax-fb-string}
  \end{figure}
  \begin{figure}[h!]
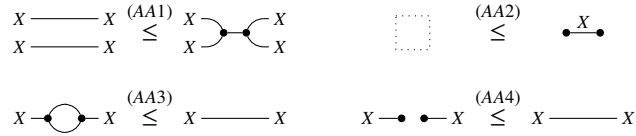

    \centering
    \mylabel{ax:adjbiprod:1}{AA1}
    \mylabel{ax:adjbiprod:2}{AA2}
    \mylabel{ax:adjbiprod:3}{AA3}
    \mylabel{ax:adjbiprod:4}{AA4}
    \[
    \begin{array}{c@{}c@{}c @{\qquad} c@{}c@{}c}
      
    \InputIfFileExists{adjbiprod/AA1_lhs.tikz}{}{\input{./tikz/adjbiprod/AA1_lhs.tikz}}
 & \axsubeq{\ref*{ax:adjbiprod:1}} & 
    \InputIfFileExists{adjbiprod/AA1_rhs.tikz}{}{\input{./tikz/adjbiprod/AA1_rhs.tikz}}
 & 
    \InputIfFileExists{adjbiprod/AA2_lhs.tikz}{}{\input{./tikz/adjbiprod/AA2_lhs.tikz}}
 & \axsubeq{\ref*{ax:adjbiprod:2}} & 
    \InputIfFileExists{adjbiprod/AA2_rhs.tikz}{}{\input{./tikz/adjbiprod/AA2_rhs.tikz}}
 \\[15pt]
      
    \InputIfFileExists{adjbiprod/AA3_lhs.tikz}{}{\input{./tikz/adjbiprod/AA3_lhs.tikz}}
 & \axsubeq{\ref*{ax:adjbiprod:3}} & 
    \InputIfFileExists{adjbiprod/AA3_rhs.tikz}{}{\input{./tikz/adjbiprod/AA3_rhs.tikz}}
 & 
    \InputIfFileExists{adjbiprod/AA4_lhs.tikz}{}{\input{./tikz/adjbiprod/AA4_lhs.tikz}}
 & \axsubeq{\ref*{ax:adjbiprod:4}} & 
    \InputIfFileExists{adjbiprod/AA4_rhs.tikz}{}{\input{./tikz/adjbiprod/AA4_rhs.tikz}}

    \end{array}
    \]
    \caption{Duality between the monoid and comonoid structures.\label{fig:adjoint-biproducts}}
  \end{figure}

As expected, the name refers to the fact that the convolution monoid in \eqref{eq:covolution} turns out to be idempotent, 
\[f+f=f\]
and thus any fb category with idempotent convolution turns out to be enriched over $\Cat{Jsl}$, the category of join semilattices. In particular, the posetal enrichement in the definition above coincides with the one induced by the semilattice structure.
\begin{lemma}\label{lemma:order-adjointness}
  In a fb category with idempotent convolution, $f \leq g$ iff $f+g=g$ for all $f,g \colon X \to Y$.
\end{lemma}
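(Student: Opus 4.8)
The plan is to reduce everything to antisymmetry of the poset enrichment, with the crucial intermediate fact being that $f \le f + g$ holds for every pair of parallel arrows $f,g \colon X \to Y$, where $f+g$ is the convolution of \eqref{eq:covolution}. Granting this fact, both directions are short. For the forward direction, if $f \le g$ then monotonicity of $\piu$ and of composition (available since $\Cat{C}$ is poset enriched) gives $f + g = \diag{X};(f \piu g);\codiag{Y} \le \diag{X};(g \piu g);\codiag{Y} = g + g = g$, the last step being idempotency of convolution; combined with $g \le f+g$ (an instance of the intermediate fact, using $f+g = g+f$), antisymmetry yields $f+g = g$. Conversely, if $f + g = g$, then $f \le f+g = g$ is immediate from the intermediate fact.

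So the real work is to prove $f \le f + g$. First I would record that the coproduct injection $\inj{1} = \id{X} \piu \cobang{X} \colon X \to X \piu X$ satisfies $\inj{1} \le \diag{X}$, and dually that the product projection $\proj{1} = \id{Y} \piu \bang{Y} \colon Y \piu Y \to Y$ satisfies $\proj{1} \le \codiag{Y}$. Both follow from the single adjunction inequality $\bang{};\cobang{} \le \id{}$ of Definition \ref{def:fbidempotent}: using functoriality of $\piu$ and the comonoid counit law $\diag{X};(\id{X} \piu \bang{X}) = \id{X}$, one rewrites $\inj{1} = \diag{X};(\id{X} \piu (\bang{X};\cobang{X}))$, and then monotonicity together with $\bang{X};\cobang{X} \le \id{X}$ bounds this above by $\diag{X};(\id{X} \piu \id{X}) = \diag{X}$. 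The inequality $\proj{1} \le \codiag{Y}$ is obtained symmetrically, rewriting $\proj{1} = (\id{Y} \piu (\bang{Y};\cobang{Y}));\codiag{Y}$ via the monoid unit law and bounding by the same adjunction inequality.

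Next I would compute $\inj{1};(f \piu g);\proj{1} = f$. Here the second summand collapses: postcomposing $\inj{1}$ with $f \piu g$ and using naturality of the unit ($\cobang{X};g = \cobang{Y}$) turns the $g$-strand into $\cobang{Y}$, and then precomposing with $\proj{1}$ together with the coherence axiom $\cobang{Y};\bang{Y} = \id{\zero}$ erases it, leaving $f \piu \id{\zero} = f$. Composing the two inequalities $\inj{1} \le \diag{X}$ and $\proj{1} \le \codiag{Y}$ around the fixed middle $f \piu g$, monotonicity of composition gives $f = \inj{1};(f \piu g);\proj{1} \le \diag{X};(f \piu g);\codiag{Y} = f + g$, which is exactly the intermediate fact.

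I do not expect a serious obstacle: the argument is essentially monotonicity of $;$ and $\piu$ driven by the one adjunction inequality $\bang{};\cobang{} \le \id{}$. The only delicate points are bookkeeping of the implicit strictness identifications $X \piu \zero = X$ that make $\inj{1}$ and $\proj{1}$ typecheck, and making sure the rewriting of $\inj{1}$ and $\proj{1}$ uses only functoriality of $\piu$ together with the (co)unit laws, so that the derivation of $f \le f+g$ does not covertly presuppose the idempotency it is meant to support. Idempotency $f + f = f$ is invoked only once, in the forward direction, and is already established in the text preceding the statement.
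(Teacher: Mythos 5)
Your proof is correct and follows essentially the same route as the paper's: both directions reduce to the two inclusions $f+g\leq g$ (from the hypothesis, monotonicity, and idempotency, the latter being an instance of $\diag{};\codiag{}\leq\id{}$) and $f\leq f+g$ (from $\bang{};\cobang{}\leq\id{}$, the (co)unit laws, and naturality of $\cobang{}$), finished off by antisymmetry. Your packaging of the second inclusion via $\inj{1}\leq\diag{X}$ and $\proj{1}\leq\codiag{Y}$ is only a cosmetic reorganization of the paper's direct insertion of $\bang{};\cobang{}$ into the convolution diagram, so there is nothing substantive to flag.
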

Viceversa, one can also show that in an arbitrary fb category $\Cat{C}$, if $+$ is idempotent then $\Cat{C}$ is poset enriched and the axioms in \Cref{fig:adjoint-biproducts} holds. %
A more useful fact, it is the following normal form.

\begin{proposition}[Matrix normal form]\label{prop:matrixform}
In a fb category $\Cat{C}$, any arrow $f\colon S \piu X \to T \piu Y$ has a normal form %
\[f = \stringMatrix{f}{S}{T}{X}{Y} \]
where $f_{ST}\colon S \to T$,  $f_{SY}\colon S \to Y$, $f_{XT}\colon X \to T$ and  $f_{XY}\colon X \to Y$ are defined as follows.
\begin{equation}\label{eq:components}
\begin{array}{cc}
f_{ST} \defeq (\id{S} \piu \cobang{X}); f ; (\id{T} \piu \cobang{Y}) &  f_{SY} \defeq (\id{S} \piu \cobang{X}); f ; (\cobang{T} \piu \id{Y} )\\
f_{XT} \defeq (\cobang{S} \piu \id{X} ); f ; (\id{T} \piu \cobang{Y}) &  f_{XY} \defeq(\cobang{S} \piu \id{X} ); f ; (\cobang{T} \piu \id{Y} )
\end{array}
\end{equation}
Moreover, if $\Cat{C}$ has idempotent convolution, for all $f,g\colon S \piu X \to T \piu Y$, it holds that $f\leq g$ iff
\[\begin{array}{cccc}
f_{ST} \leq g_{ST}, &  f_{SY} \leq g_{SY}, &
f_{XT} \leq g_{XT} ,&  f_{XY} \leq g_{XY}.
\end{array}
\]
\end{proposition}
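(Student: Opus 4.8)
The plan is to treat the statement as the standard matrix calculus of a biproduct category: I would first prove the existence of the normal form using only the fb-structure, and then read off the order characterisation from the commutative-monoid enrichment of \eqref{eq:cmon laws} and \eqref{eq:cmon enrichment}, so that idempotency of convolution enters only in the second half.

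For the normal form, I would introduce for the domain $S \piu X$ the injections $\inj S \defeq \id S \piu \cobang X$ and $\inj X \defeq \cobang S \piu \id X$ together with the projections $\proj S \defeq \id S \piu \bang X$ and $\proj X \defeq \bang S \piu \id X$, and symmetrically for the codomain $T \piu Y$. From the coherence axiom $\cobang A ; \bang A = \id \zero$, functoriality of $\piu$, and the unit laws of $\zero$, a short computation yields the biproduct equations $\inj i ; \proj i = \id{}$, $\inj i ; \proj k = 0$ for $i \neq k$ (with $0$ the bottom of \eqref{eq:covolution}), and the completeness identity $\proj S ; \inj S + \proj X ; \inj X = \id{S \piu X}$ (and likewise on $T \piu Y$). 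With these in hand, the four maps of \eqref{eq:components} are exactly the components $\inj i ; f ; \proj j$, and the normal form follows from the routine expansion
\[ f = \id{S \piu X} ; f ; \id{T \piu Y} = \textstyle\sum_{i,j} \proj i ; (\inj i ; f ; \proj j) ; \inj j, \]
whose right-hand side, after unfolding each convolution $+$ as $\diag{} ; (- \piu -) ; \codiag{}$, is precisely the diagram $\stringMatrix{f}{S}{T}{X}{Y}$.

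For the order characterisation I would first record two monotonicity facts, both consequences of the enrichment and of Lemma~\ref{lemma:order-adjointness} (under which $h \leq h'$ means $h + h' = h'$). Composition is monotone because $c ; a ; d + c ; a' ; d = c ; (a + a') ; d$ by \eqref{eq:cmon enrichment}; and $+$ is monotone because $(a + b) + (a' + b') = (a + a') + (b + b')$. The forward implication is then immediate: if $f \leq g$ then $f_{ij} = \inj i ; f ; \proj j \leq \inj i ; g ; \proj j = g_{ij}$ by monotonicity of composition. Conversely, assuming $f_{ij} \leq g_{ij}$ for all four pairs, monotonicity of composition gives $\proj i ; f_{ij} ; \inj j \leq \proj i ; g_{ij} ; \inj j$ summand by summand, and monotonicity of $+$ together with the normal form of the first part yields $f \leq g$.

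The step I expect to be the real obstacle is the completeness identity $\proj S ; \inj S + \proj X ; \inj X = \id{S \piu X}$: unlike the other biproduct equations, which are pointwise consequences of the counit law, it genuinely needs the interaction of the monoid and comonoid, i.e.\ the bialgebra laws, together with careful bookkeeping of the symmetries hidden inside $\diag{S \piu X}$ and $\codiag{S \piu X}$ as given by the coherence axioms in Figure~\ref{fig:fbcoherence}. Everything else reduces to enriched-category bookkeeping.
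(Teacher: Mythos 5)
Your proof is correct and follows the same route as the paper: the paper simply cites the matrix normal form as a well-known fact about biproduct categories (which you instead prove in full, correctly, via the injection/projection calculus and the completeness identity $\proj{S};\inj{S}+\proj{X};\inj{X}=\id{S\piu X}$), and then obtains the order characterisation exactly as you do, by monotonicity of composition for the forward direction and by re-assembling the normal form for the converse. The only nitpick is your closing remark: the completeness identity needs the coherence axioms for $\diag{}$ and $\codiag{}$ on sum objects together with the (co)unit laws, not the bialgebra law itself, but this does not affect the validity of the argument.
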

The reader can easily check that $(\Rel,\piu,\zero)$ (Section \ref{sec:2monREL}) is a finite biproduct category with idempotent convolution by checking that the four inequalities in Definition \ref{def:fbidempotent} hold using the definition of monoids and comonoids from \eqref{eq:comonoidsREL}. Moreover one can easily see that the four morphisms defined by \eqref{eq:components} instantiate, in the case of $(\Rel,\piu,\zero)$, to those in \eqref{eq:dec}.

\subsection{Kleene Bicategories are Typed Kleene Algebras}

We can now introduce the main structures of this section: Kleene bicategories. These are fb categories with idempotent convolution equipped with a trace that, intuitively, behaves well w.r.t. the poset enrichement.

\begin{definition}\label{def:kleenebicategory}
    A \emph{Kleene bicategory} is a fb category with idempotent convolution that is traced monoidal such that %
    \begin{enumerate}
    \item the trace satisfies the laws in Figure \ref{fig:ineq-uniformity}: for all $f\colon S\piu X \to S \piu Y$ and $g \colon T\piu X \to T \piu Y$,
    \begin{itemize}
    \item[(AU1)] if $\exists r\colon S \to T$ such that  $f ; (r \piu \id{Y}) \leq (r \piu \id{X}) ; g$, then $\trace_{S}f \leq \trace_{S}g$.
    \item[(AU2)] if $\exists r\colon T \to S$ such that  $(r \piu \id{X})  ; f \leq  g ; (r \piu \id{Y})$, then $\trace_{S}f \leq \trace_{S}g$.
    \end{itemize}
    \item the trace satisfies the axiom in Figure \ref{fig:happy-trace}: $\trace_{X}(\codiag{X};\diag{X}) \leq \id{X}$
    \end{enumerate}
A \emph{morphism of Kleene bicategories} is a poset enriched symmetric monoidal functor preserving (co)monoids and traces. Kleene bicategories and their morphisms form a category \(\KBicat\).
\end{definition}

  \begin{figure}[h!]
    \centering
    \mylabel{ax:posetunif:1}{AU1}
    \mylabel{ax:posetunif:2}{AU2}
    \[
    \begin{array}{c}
      
    \InputIfFileExists{posetunif/AU1_lhs.tikz}{}{\input{./tikz/posetunif/AU1_lhs.tikz}}
 \leq 
    \InputIfFileExists{posetunif/AU1_rhs.tikz}{}{\input{./tikz/posetunif/AU1_rhs.tikz}}
 \stackrel{(\ref*{ax:posetunif:1})}{\implies} 
    \InputIfFileExists{posetunif/AU1TR_lhs.tikz}{}{\input{./tikz/posetunif/AU1TR_lhs.tikz}}
 \leq 
    \InputIfFileExists{posetunif/AU1TR_rhs.tikz}{}{\input{./tikz/posetunif/AU1TR_rhs.tikz}}

      \\[20pt]
      
    \InputIfFileExists{posetunif/AU2_lhs.tikz}{}{\input{./tikz/posetunif/AU2_lhs.tikz}}
 \leq 
    \InputIfFileExists{posetunif/AU2_rhs.tikz}{}{\input{./tikz/posetunif/AU2_rhs.tikz}}
 \stackrel{(\ref*{ax:posetunif:2})}{\implies} 
    \InputIfFileExists{posetunif/AU2TR_lhs.tikz}{}{\input{./tikz/posetunif/AU2TR_lhs.tikz}}
 \leq 
    \InputIfFileExists{posetunif/AU2TR_rhs.tikz}{}{\input{./tikz/posetunif/AU2TR_rhs.tikz}}

    \end{array}
    \]
    \caption{Uniformity axioms for posetal bicategories.\label{fig:ineq-uniformity}}
  \end{figure}
  \begin{figure}[h!]
    \centering
    \mylabel{ax:kb:traceid}{AT1}
    \[ 
    \InputIfFileExists{kb/traceid_lhs.tikz}{}{\input{./tikz/kb/traceid_lhs.tikz}}
 \axsubeq{\ref*{ax:kb:traceid}} 
    \InputIfFileExists{kb/traceid_rhs.tikz}{}{\input{./tikz/kb/traceid_rhs.tikz}}
 \]
    \caption{Repeating the identity.\label{fig:happy-trace}}
  \end{figure}

The axioms in Figure \ref{fig:ineq-uniformity} can be understood as the posetal extension of the uniformity axioms defined in Section \ref{sec:traced}. Note that, by antisymmetry of $\leq$, the axioms in Figure \ref{fig:ineq-uniformity} entail those in Figure \ref{fig:uniformity}. Moreover, (see Lemma \ref{lemma:equivalentUnif} in Appendix \ref{app:Kleene}) the laws (AU1) and (AU2) can equivalently be expressed by the following one.

\begin{equation}\label{eq:equivalentuni1}\tag{AU1'}
\text{If }\exists r_1,r_2\colon S \to T\text{ such that } r_2 \leq r_1\text{ and }f ; (r_1 \piu \id{Y}) \leq (r_2 \piu \id{X}) ; g\text{, then }\trace_{S}f \leq \trace_{T}g\text{;}
\end{equation}
\begin{equation}\label{eq:equivalentuni2}\tag{AU2'}
\text{If }\exists r_1,r_2\colon T \to S\text{ such that } r_2 \leq r_1\text{ and }(r_1 \piu \id{X}) ; f   \leq   g; (r_2 \piu \id{Y})\text{, then }\trace_{S}f \leq \trace_{T}g\text{;}
\end{equation}

It is worth remarking that, while the axiom of uniformity has been widely studied (see e.g. \cite{hasegawa2003uniformity}), its posetal extension in Figure \ref{fig:ineq-uniformity} is, to the best of our knowledge, novel. Instead, the axioms in Figure \ref{fig:happy-trace} already appeared in the literature (see e.g. \cite{lmcs:10963}).

\begin{figure}[h!]
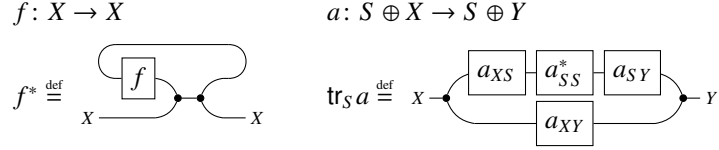

  \centering
  \[
  \begin{array}{l@{\qquad}l}
    f \colon X \to X & a \colon S \piu X \to S \piu Y \\[5pt]
    \kstar{f} \defeq \Crepetition{f}{X}{X} & \trace_S a \defeq \Ctracerep{a\vphantom{\kstar{a}_{SS}}}{S}{X}{Y}
  \end{array}
  \]
  \caption{Repetition from trace and trace from repetition in finite biproduct categories.\label{fig:star-trace}}\end{figure}

Like in any finite biproduct category with trace (see e.g. \cite{cuazuanescu1994feedback}), in a Kleene bicategory one can define for each endomorphism $f\colon X \to X$, a morphism $\kstar{f}\colon X \to X$ as in Figure \ref{fig:star-trace}.
The distinguishing property of Kleene bicategories is that $\kstar{(\cdot)}$ satisfies the laws of Kleene star as axiomatised by Kozen in \cite{Kozen94acompleteness}.

\begin{definition}
A \emph{Kleene star operator} on a category $\Cat{C}$ enriched over join-semi lattices consists of a family of operations \(\kstar{(\cdot)} \colon \Cat{C}(X,X) \to  \Cat{C}(X,X)\)  such that for all $f\colon X\to X$, $r\colon X \to Y$ and $l\colon Y \to X$:\begin{equation}\label{eq:Kllenelaw}
    \begin{array}{c@{\qquad}c}
      \id{X} + f \dcomp \kstar{f} \leq \kstar{f} & f \dcomp r \leq r \implies  \kstar{f} \dcomp r \leq r  \\
      \id{X} + \kstar{f} \dcomp f \leq \kstar{f} & l \dcomp f  \leq l \implies   l \dcomp \kstar{f} \leq l
    \end{array}
\end{equation}
A \emph{typed Kleene algebra} is a category enriched over join-semi lattices that has a Kleene star operator. A \emph{morphism of typed Kleene algebras} is a functor preserving both the structures of join semilattice and Kleene star.
Typed Kleene algebras and their morphism form a category referred as $\TKAlg$.
\end{definition}
\begin{remark}
The notion of typed Kleene algebra has been introduced by Kozen in \cite{kozen98typedkleene} in order to deal with Kleene algebras \cite{Kozen94acompleteness} with multiple sorts. In other words, a Kleene algebra is a typed Kleene algebra with a single object.
\end{remark}

On the one hand, the laws of Kleene bicategories are sufficient for defining a Kleene star operation. On the other, any Kleene star operation gives rise to a trace as in the right of Figure \ref{fig:star-trace} satisfying the laws of Kleene bicategories.

\begin{proposition}\label{prop:trace-star}
  Let $\Cat{C}$ be a fb category with idempotent convolution. $\Cat{C}$ is a Kleene bicategory iff $\Cat{C}$ has a Kleene-star operator.
\end{proposition}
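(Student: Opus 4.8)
The proof is the two directions of the biconditional, linked by the two mutually-inverse constructions of Figure~\ref{fig:star-trace}. The technical backbone for both directions is the matrix normal form of Proposition~\ref{prop:matrixform}. Since $\Cat{C}$ is already a finite biproduct category with idempotent convolution, each homset is a join-semilattice in which composition is bilinear and preserves $0$, so the only datum genuinely at stake is the interaction between iteration (the star) and feedback (the trace). I would first record the elementary block-matrix calculus: for $a\colon S\piu X \to S\piu Y$ and $b\colon S\piu Y \to S\piu Z$, the four corner components (as defined in~\eqref{eq:components}) of $a;b$, of $a+b$, and of $a\piu c$ are computed entrywise as expected, and, by Proposition~\ref{prop:matrixform}, $f\leq g$ reduces to the four componentwise inequalities $f_{SS}\leq g_{SS}$, $f_{SY}\leq g_{SY}$, $f_{XS}\leq g_{XS}$, $f_{XY}\leq g_{XY}$. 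With this in hand, every trace computation reduces to a manipulation of the corners $a_{SS},a_{SY},a_{XS},a_{XY}$.

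For the direction ``Kleene bicategory $\Rightarrow$ Kleene star'' I would take the left-hand definition $\kstar{f} \defeq \trace_X((f\piu \id{X});\codiag{X};\diag{X})$ of Figure~\ref{fig:star-trace} and verify the four laws of~\eqref{eq:Kllenelaw}. The unfolding inequalities $\id{X}+f;\kstar{f}\leq \kstar{f}$ and $\id{X}+\kstar{f};f\leq \kstar{f}$ follow purely from the trace axioms of Table~\ref{tab:trace-axioms} (tightening, strength, vanishing, yanking), which express the fixpoint behaviour of the feedback loop, together with idempotency of $+$; in fact one obtains equalities. The two induction laws are exactly where the posetal uniformity axioms of Figure~\ref{fig:ineq-uniformity} are needed: assuming $f;r\leq r$, I would instantiate~\eqref{eq:equivalentuni2} with the simulating morphism built from $r$ to slide $r$ across the loop and conclude $\kstar{f};r\leq r$, and symmetrically~\eqref{eq:equivalentuni1} for the right law $l;f\leq l\Rightarrow l;\kstar{f}\leq l$.

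For the converse, ``Kleene star $\Rightarrow$ Kleene bicategory'', I would define the trace by the right-hand formula of Figure~\ref{fig:star-trace}, namely $\trace_S a \defeq a_{XS};\kstar{(a_{SS})};a_{SY} + a_{XY}$, and check the clauses of Definition~\ref{def:kleenebicategory}. Tightening, strength and vanishing are immediate from bilinearity and the component formulas~\eqref{eq:components}; yanking reduces to $\kstar{0}=\id{X}$ and $\kstar{\id{X}}=\id{X}$, both derived from~\eqref{eq:Kllenelaw}; and axiom (AT1) of Figure~\ref{fig:happy-trace} computes, since $\codiag{X};\diag{X}$ has all four corners equal to $\id{X}$, to $\id{X}+\kstar{\id{X}} = \id{X}$ using $\kstar{\id{X}}=\id{X}$ and idempotency. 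The posetal uniformity axioms (AU1) and (AU2) are obtained by expanding both traces via the formula and transporting the mediating morphism through the star by a simulation argument that again invokes the induction laws of~\eqref{eq:Kllenelaw}. Once (AU1) and (AU2) hold they entail ordinary uniformity, so by Remark~\ref{rem:uniformity-sliding} the sliding axiom comes for free.

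I expect the single hardest step to be the \emph{joining} axiom $\trace_T\trace_S = \trace_{S\piu T}$ in the converse direction: unfolding both sides through the trace formula turns it into the Bekić-style identity for the star of a $2\times 2$ block matrix, whose verification requires the sum-star and product-star laws of Kleene algebra (themselves consequences of~\eqref{eq:Kllenelaw}) assembled in precisely the right order. The second delicate point, shared by both directions, is the exact matching between Kleene's quasi-equational induction laws and the implicational posetal uniformity axioms of Figure~\ref{fig:ineq-uniformity}; making this correspondence tight is what justifies, a posteriori, the novel posetal strengthening of uniformity adopted in Definition~\ref{def:kleenebicategory}.
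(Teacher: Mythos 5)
Your proposal is correct and its overall skeleton coincides with the paper's: both directions pass through the two mutually inverse constructions of \Cref{fig:star-trace}, the matrix normal form of \Cref{prop:matrixform} is used to translate the posetal uniformity axioms into star-level statements, and the heart of the argument is the equivalence between (\ref*{ax:posetunif:1}), (\ref*{ax:posetunif:2}), (\ref*{ax:kb:traceid}) and Kozen's induction laws (the paper's \Cref{lemma:uniform-star} and \Cref{lemma:uniform-kozen}). The one genuine divergence is in the converse direction: you propose to verify the equational trace axioms by hand --- in particular (\ref*{ax:trace:joining}), which you correctly identify as a Beki\'c-style identity requiring the denesting and sliding laws of Kleene algebra --- whereas the paper outsources this entirely to \Cref{prop:stef}, the Cazanescu--Stefanescu correspondence between traces on fb categories and repetition operators satisfying \eqref{eq:fromstefanescu}, noting only that Kozen's laws \eqref{eq:Kllenelaw} entail \eqref{eq:fromstefanescu} by completeness. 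Your route is self-contained but considerably longer; the paper's is shorter at the cost of a citation. One under-specification in your forward direction: deriving the induction law $f;r\leq r\Rightarrow \kstar{f};r\leq r$ from the posetal uniformity axioms alone only yields $\kstar{f};r\leq r;\kstar{\id{}}$, so you also need the axiom of \Cref{fig:happy-trace} in the form $\kstar{\id{}}\leq \id{}$ (the paper's \eqref{eq:happystar}) to close the argument; this is precisely where (\ref*{ax:kb:traceid}) earns its place in Definition~\ref{def:kleenebicategory}, and your sketch invokes it only in the converse direction. Since that axiom is available to you as part of the Kleene bicategory structure, this is a gap in the write-up rather than in the argument.
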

Since Kleene bicategories are enriched over join semilattices, from the above result we have that

\begin{corollary}\label{cor:kleeneareka}
  All Kleene bicategories are typed Kleene algebras.
\end{corollary}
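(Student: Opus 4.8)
The plan is to obtain the corollary as a direct consequence of the two preceding results, since a typed Kleene algebra is by definition nothing more than a $\Cat{Jsl}$-enriched category equipped with a Kleene star operator satisfying \eqref{eq:Kllenelaw}, and both ingredients are already available for any Kleene bicategory. So the proof should be a short assembly rather than a new construction: first exhibit the join-semilattice enrichment, then extract the star operator from the trace.

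First I would record that a Kleene bicategory is enriched over join semilattices. Being in particular a finite biproduct category, the convolution monoid of \eqref{eq:covolution} equips each homset $\Cat{C}(X,Y)$ with a commutative monoid $(+,0)$ that distributes over composition, by \eqref{eq:cmon laws} and \eqref{eq:cmon enrichment}. The idempotent-convolution hypothesis (Definition \ref{def:fbidempotent}) forces $f+f=f$, so every homset is a join semilattice with join $+$ and bottom $0$, and the distributivity laws \eqref{eq:cmon enrichment} say precisely that composition is bilinear for this structure; hence $\Cat{C}$ is $\Cat{Jsl}$-enriched. I would also note, via Lemma \ref{lemma:order-adjointness}, that the order induced by this enrichment coincides with the poset enrichment $\leq$ used in the trace axioms, so the order appearing in \eqref{eq:Kllenelaw} is the expected one.

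Next I would invoke Proposition \ref{prop:trace-star}. A Kleene bicategory is, by definition, a finite biproduct category with idempotent convolution, so the proposition applies and yields a Kleene star operator $\kstar{(\cdot)}$ — concretely the one built from the trace as in Figure \ref{fig:star-trace} — satisfying the four laws of \eqref{eq:Kllenelaw}. Combining this with the enrichment above, $\Cat{C}$ is a $\Cat{Jsl}$-enriched category carrying a Kleene star operator, i.e.\ a typed Kleene algebra, which is exactly the claim.

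There is no genuine obstacle here: all the substance lives in Proposition \ref{prop:trace-star}, whose ``only if'' direction already verifies the Kleene-star laws from the trace axioms (AU1), (AU2) and (AT1), and which we may assume. The only point requiring care is bookkeeping, namely checking that the ``category enriched over join-semi lattices'' demanded by the definition of a Kleene star operator is the very enrichment produced by idempotent convolution. Since the statement concerns each Kleene bicategory \emph{as an object} and does not yet assert a functorial comparison between $\KBicat$ and $\TKAlg$, no condition on morphisms is needed and the conclusion is immediate.
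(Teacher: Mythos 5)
Your proof is correct and follows the same route as the paper: the paper's own argument is exactly the one-line combination of the $\Cat{Jsl}$-enrichment coming from idempotent convolution (with the order matching $\leq$ by Lemma \ref{lemma:order-adjointness}) and the Kleene star operator supplied by Proposition \ref{prop:trace-star}. Your additional bookkeeping about the coincidence of the two orders is a reasonable elaboration of what the paper leaves implicit.
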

The opposite does not hold: not all typed Kleene algebras are monoidal categories. Nevertheless, from an arbitrary Kleene algebra, one can canonically build a Kleene bicategory by means of the matrix construction, illustrated in the next section.

\subsection{The Matrix Construction}\label{ssec:matrix}
Thanks to Corollary \ref{cor:kleeneareka}, one can easily construct a forgetful functor $U\colon \KBicat \to \TKAlg$: any Kleene bicategory is a typed Kleene algebra and any morphism of Kleene bicategories is a morphism of typed Kleene algebras. To see the latter, observe that preserving $\kstar{(\cdot)}$, as defined in \Cref{fig:star-trace}, and the join-semi lattice, as in \eqref{eq:covolution}, is enough to preserve traces, monoidal product and (co)monoids.

We now illustrate that $U\colon \KBicat \to \TKAlg$ has a left adjoint provided by the \emph{matrix construction}, also known as \emph{biproduct completion} \cite{coecke2017two,mac2013categories}. In \cite[Exercises VIII.2.5-6]{mac2013categories}, it is shown that there exists an adjunction in between $\CMonCat$, the category of $\Cat{CMon}$-enriched categories, and $\FBC$, the category of fb categories.
\begin{equation}\label{eq:matrixadj}
\begin{tikzcd}
        \CMonCat
        \arrow[r, "\MatFun"{name=F}, bend left] &
        \FBC
        \arrow[l, "\fun{U}"{name=U}, bend left]
        \arrow[phantom, from=F, to=U, "\vdash" rotate=90]
    \end{tikzcd}
\end{equation}
The functor $U$ is the obvious forgetful functor: as recalled in Section \ref{ssec:fbic}, every fb category is  $\Cat{CMon}$-enriched. 
Given a $\CMon$-enriched category $\Cat S$, one can form the biproduct completion of $\Cat S$, denoted as $\Mat{\Cat S}$. Its objects are formal sums of objects of $\Cat S$, while a morphism $M \colon \Piu[k=1][n]{A_k} \to \Piu[k=1][m]{B_k}$ is a $m \times n$ matrix where $M_{ji} \in \Cat S[A_i,B_j]$. Composition is given by matrix multiplication, with the addition being the plus operation on the homsets (provided by the enrichment) and multiplication being composition. The identity morphism of $\Piu[k=1][n]{A_k}$ is given by the $n \times n$ matrix $(\delta_{ji})$, where $\delta_{ji} = \id{A_j}$ if $i=j$, while if $i \neq j$, then $\delta_{ji}$ is the zero morphism of $\Cat S[A_i,A_j]$.

\begin{proposition}\label{prop:matrices-kleene-bicategory}
Let  \(\Cat{K}\) be a typed Kleene algebra. Then  \(\Mat{\Cat{K}}\) is a Kleene bicategory.
\end{proposition}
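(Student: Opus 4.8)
The plan is to invoke Proposition \ref{prop:trace-star}, which reduces the claim to exhibiting on $\Mat{\Cat{K}}$ the structure of a finite biproduct category with idempotent convolution together with a Kleene star operator $\kstar{(\cdot)}$: once these are in place, Proposition \ref{prop:trace-star} delivers the Kleene bicategory structure (including the traced monoidal structure and the axioms (AU1)--(AU2) and (AT1)) for free. I would therefore organise the argument in three stages: the finite biproduct structure, the idempotency of convolution, and the star operator.

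For the first stage, recall that a typed Kleene algebra is in particular $\CMon$-enriched (indeed $\Cat{Jsl}$-enriched). Hence $\Cat{K}$ lies in the domain of the matrix functor $\MatFun$ of the adjunction \eqref{eq:matrixadj}, and the category $\Mat{\Cat{K}}$ it produces is automatically a finite biproduct category, with the biproduct given by concatenation of formal sums and with $\diag{}, \bang{}, \codiag{}, \cobang{}$ the evident block matrices built from identities and zero morphisms. For the second stage, I would compute the convolution monoid of \eqref{eq:covolution} in $\Mat{\Cat{K}}$: since composition is matrix multiplication and $\diag{}, \codiag{}$ act as (co)diagonal matrices, the sum $M + N$ is the entrywise join of $M$ and $N$. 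As each homset $\Cat{K}[A,B]$ is a join-semilattice, so is each homset of $\Mat{\Cat{K}}$, whence $M + M = M$; thus $\Mat{\Cat{K}}$ has idempotent convolution, and by Lemma \ref{lemma:order-adjointness} the induced order coincides with the entrywise order described in Proposition \ref{prop:matrixform}.

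The third and decisive stage is to equip $\Mat{\Cat{K}}$ with a Kleene star operator on square matrices, which I would define by induction on the size of the matrix via the classical $2 \times 2$ block (Conway) formula. Writing a square matrix over a biproduct $S \piu X$ in block form, its star is expressed through the stars of the diagonal blocks and the star of the Schur-type complement $a_{SS} + a_{SX};\kstar{(a_{XX})};a_{XS}$, all of which live in strictly smaller matrix algebras and, on a single object, in $\Cat{K}$, where $\kstar{(\cdot)}$ is given; on a $1 \times 1$ matrix the operator is just the star of $\Cat{K}$. This block formula matches the trace expression of Figure \ref{fig:star-trace} under the trace/star correspondence. I would then check that the resulting operator satisfies Kozen's four laws \eqref{eq:Kllenelaw}, and this is the step I expect to be the main obstacle. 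The verification proceeds by induction, reducing each matrix (in)equality to a finite system of (in)equalities between entries, which are discharged using the join-semilattice enrichment from Stage~2 together with the star axioms of $\Cat{K}$; in particular, the two implication laws of \eqref{eq:Kllenelaw} for the base algebra are exactly what handles the off-diagonal feedback introduced by the block decomposition.

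Having shown that $\Mat{\Cat{K}}$ is a finite biproduct category with idempotent convolution carrying a Kleene star operator, Proposition \ref{prop:trace-star} immediately yields that $\Mat{\Cat{K}}$ is a Kleene bicategory, which is the desired conclusion. The trace recovered from $\kstar{(\cdot)}$ through Figure \ref{fig:star-trace} is then, by construction, the uniform trace satisfying (AU1)--(AU2) and (AT1) packaged by that proposition.
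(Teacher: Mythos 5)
Your proposal is correct and follows essentially the same route as the paper: finite biproducts come from the matrix construction \eqref{eq:matrixadj}, the idempotent-convolution/adjoint-biproduct structure is checked elementwise, the Kleene star on matrices is the classical block construction (which the paper simply cites as Lemma 3.3 of Kozen's completeness paper rather than re-deriving), and Proposition \ref{prop:trace-star} then yields the Kleene bicategory structure. The only cosmetic difference is that the paper verifies the four adjointness inequalities of Definition \ref{def:fbidempotent} directly by diagrammatic computation, whereas you deduce them from idempotency of the entrywise join via the remark following that definition; both are valid.
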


More generally, one can show that  the functor \(\MatFun \colon \CMonCat \to \fbCat\) restricts to typed Kleene algebras and Kleene bicategories
and that this gives rise to the left adjoint to $U\colon \KBicat \to \TKAlg$.

\begin{corollary}\label{cor:adjKleene}
The adjunction in \eqref{eq:matrixadj} restricts to 
\[
\begin{tikzcd}
        \TKAlg
        \arrow[r, "\MatFun"{name=F}, bend left] &
        \;\; \KBicat \;\;
        \arrow[l, "\fun{U}"{name=U}, bend left]
        \arrow[phantom, from=F, to=U, "\vdash" rotate=90]
    \end{tikzcd}
\]
\end{corollary}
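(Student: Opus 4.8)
The plan is to verify that the adjunction $\MatFun \dashv \fun{U}$ of \eqref{eq:matrixadj} restricts along the (non-full) inclusions $\TKAlg \hookrightarrow \CMonCat$ and $\KBicat \hookrightarrow \FBC$. I would invoke the standard criterion for restricting an adjunction $F \dashv U$ to subcategories $\mathcal{A}' \subseteq \mathcal{A}$ and $\mathcal{B}' \subseteq \mathcal{B}$: it suffices that (i) $F$ maps $\mathcal{A}'$ into $\mathcal{B}'$ and $U$ maps $\mathcal{B}'$ into $\mathcal{A}'$, (ii) each unit component at an object of $\mathcal{A}'$ is itself a morphism of $\mathcal{A}'$, and (iii) for every $\mathcal{A}'$-morphism $H \colon \Cat{K} \to \fun{U}\Cat{B}$ the adjoint transpose $H^\sharp$ is a morphism of $\mathcal{B}'$; existence of the transpose and its uniqueness are then inherited from the ambient adjunction. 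So I would establish these three points with $\mathcal{A} = \CMonCat$, $\mathcal{B} = \FBC$, $\mathcal{A}' = \TKAlg$, $\mathcal{B}' = \KBicat$.

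For (i) on objects, $\MatFun$ sends a typed Kleene algebra $\Cat{K}$ (which is in particular $\CMon$-enriched) to the Kleene bicategory $\Mat{\Cat{K}}$ by Proposition \ref{prop:matrices-kleene-bicategory}, and $\fun{U}$ sends a Kleene bicategory to a typed Kleene algebra by Corollary \ref{cor:kleeneareka}. For (i) on morphisms, the key observation, which I would isolate as a lemma, is that a morphism of fb categories between Kleene bicategories preserves traces if and only if it preserves the Kleene star on all endomorphisms. This is immediate from Figure \ref{fig:star-trace}: the trace is assembled from the components \eqref{eq:components}, composition, the convolution sum, and $\kstar{(\cdot)}$ applied to the $S$-block, all of which an fb functor respects except possibly the star; conversely $\kstar{f}$ is itself a trace. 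Granting the lemma, a morphism $F$ of typed Kleene algebras induces a morphism $\MatFun(F)$ of fb categories (from the ambient adjunction), so it remains only to see that $\MatFun(F)$ preserves $\kstar{(\cdot)}$ on matrices; dually, a morphism of Kleene bicategories preserves star (being a trace) and the $\CMon$-structure, hence is a morphism of typed Kleene algebras, so $\fun{U}$ restricts.

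The crux is therefore showing that preservation of the entrywise star forces preservation of the star of an arbitrary matrix. I would argue this by induction on the size $n$ of the square matrix, using the classical block decomposition of the star of an $n \times n$ matrix into convolution sums, products, and stars of strictly smaller blocks, with base case $n = 1$ the entrywise star. Since $\MatFun(F)$ acts entrywise and preserves sums, products, zeros, and the fb structure, and since the Kleene star operator is unique (it is pinned down by the inequalities of \eqref{eq:Kllenelaw} together with antisymmetry of the order of Lemma \ref{lemma:order-adjointness}), the value produced by the block formula is the genuine star and $\MatFun(F)$ commutes with it. The same inductive computation settles (iii): the transpose $H^\sharp$ of a typed-Kleene-algebra morphism $H$ is the fb functor that is $H$ entrywise, hence preserves matrix stars and, by the lemma, traces, so it is a morphism of Kleene bicategories.

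For (ii) I would check that the unit $\eta_{\Cat{K}} \colon \Cat{K} \to \fun{U}\Mat{\Cat{K}}$, namely the embedding as $1 \times 1$ matrices, is a morphism of typed Kleene algebras: it visibly preserves the convolution sum and zero, and it preserves the star because the intrinsic star of a $1 \times 1$ matrix is the entrywise star (the $n = 1$ base case above). With (i)--(iii) in hand, the criterion yields that \eqref{eq:matrixadj} restricts to $\MatFun \dashv \fun{U}$ between $\TKAlg$ and $\KBicat$, which is the claim. The main obstacle is precisely the inductive block-star argument that propagates entrywise preservation of $\kstar{(\cdot)}$ to all matrices; everything else reduces to routine bookkeeping about morphisms of fb categories and the trace/star correspondence of Proposition \ref{prop:trace-star} and Figure \ref{fig:star-trace}.
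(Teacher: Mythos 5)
Your proposal is correct and follows essentially the same route as the paper, which treats this as an immediate consequence of Proposition \ref{prop:matrices-kleene-bicategory}, Corollary \ref{cor:kleeneareka}, and the observation (made just before the corollary) that preserving $\kstar{(\cdot)}$ and the join-semilattice structure is interchangeable with preserving traces, biproducts and (co)monoids; the entrywise-to-matrix propagation of the star that you prove by block induction is exactly the content of Kozen's Lemma~3.3, which the paper cites. You merely make explicit the restriction-of-adjunction bookkeeping that the paper leaves implicit.
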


\section{Kleene Tapes}\label{sec:kleene-tapes}
In this section, we combine the structure of Kleene bicategories from Section \ref{sec:kleene} with the one of rig categories from Section \ref{sec:rigcategories}.
We illustrate the corresponding tape diagrams, named Kleene tapes, and the corresponding notion of theories. We begin by introducing the structures of interest.

\begin{definition}\label{def:kcrig}
A poset enriched rig category $\Cat{C}$ is said to be a \emph{Kleene rig category}  if
$(\Cat{C}, \piu, \zero)$ is a Kleene bicategory. A \emph{morphism of Kleene rig categories} is a poset enriched rig functor that is also a Kleene morphism.
\end{definition}

In any Kleene rig category $\per$ distributes over the convolution monoid, or more precisely the join-semi lattice, in \eqref{eq:covolution}.

\begin{lemma}\label{lemma:ditributivityper}
For all $f_1,f_2\colon X \to Y$ and $g\colon S \to T$  in a Kleene rig category, it holds that
\[(f_1+f_2)\per g =(f_1\per g +f_2\per g) \qquad g \per (f_1+f_2) = (f_1\per g + f_2 \per g) \qquad 0 \per g = 0 = g \per 0\text{.}\] 
\end{lemma}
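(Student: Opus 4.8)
The plan is to deduce both distributivity laws from a single fact: for each object $S$ the whiskering functors $\fun{R}_S,\fun{L}_S\colon\Cat{C}\to\Cat{C}$ of Proposition~\ref{prop:whisk} are \emph{additive}, i.e.\ they preserve the convolution monoid $+$ and its unit $0$ of \eqref{eq:covolution}. This reduction is immediate once additivity is known: since $(\Cat{C},\piu,\zero)$ is a Kleene bicategory it is in particular a finite biproduct category, so $\Cat{C}$ is a (right strict) fb rig category and the whiskering calculus of Table~\ref{table:whisk} is available; because $\per$ is a bifunctor we have $(f_1+f_2)\per g = \RW{S}{f_1+f_2}\dcomp\LW{Y}{g}$ and $g\per(f_1+f_2)=\LW{S}{f_1+f_2}\dcomp\RW{Y}{g}$, so using additivity of the outer whiskering together with the $\CMon$-enrichment law $(a+b)\dcomp h = a\dcomp h + b\dcomp h$ from \eqref{eq:cmon enrichment} one obtains both laws (here the second displayed equation of the lemma is meant with right-hand side $g\per f_1 + g\per f_2$). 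The two $0$-cases are handled separately but trivially: by bifunctoriality and the universal property of the zero object $\zero=\zero\per S=S\per\zero$ (right strictness), $0\per g=\bang{X\per S}\dcomp\cobang{Y\per T}=0=g\per 0$.

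It then remains to prove that $\fun{R}_S$ and $\fun{L}_S$ are additive. Expanding $f_1+f_2=\diag{X}\dcomp(f_1\piu f_2)\dcomp\codiag{Y}$ and using functoriality \eqref{eq:whisk:funct} of whiskering, I would compute
\begin{align*}
  \RW{S}{f_1+f_2} &= \RW{S}{\diag{X}}\dcomp\RW{S}{f_1\piu f_2}\dcomp\RW{S}{\codiag{Y}} \\
  &= \diag{X\per S}\dcomp(\RW{S}{f_1}\piu\RW{S}{f_2})\dcomp\codiag{Y\per S} = \RW{S}{f_1}+\RW{S}{f_2},
\end{align*}
where the middle step uses \eqref{eq:whisk:funct piu} for $\fun{R}_S$ together with the two preservation identities $\RW{S}{\diag{X}}=\diag{X\per S}$ and $\RW{S}{\codiag{Y}}=\codiag{Y\per S}$, and the last step is the definition of $+$ in \eqref{eq:covolution}. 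For $\fun{L}_S$ the same computation goes through, with the caveat that \eqref{eq:whisk:funct piu} for $\fun{L}_S$ inserts distributors: these are cancelled exactly by the ones appearing in the left-whiskered identities $\LW{S}{\diag{X}}=\diag{S\per X}\dcomp\Idl{S}{X}{X}$ and $\LW{S}{\codiag{Y}}=\dl{S}{Y}{Y}\dcomp\codiag{S\per Y}$, since $\Idl{S}{X}{X}\dcomp\dl{S}{X}{X}=\id{S\per X}$, leaving $\LW{S}{f_1+f_2}=\LW{S}{f_1}+\LW{S}{f_2}$.

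The heart of the argument, and the step I expect to be most delicate, is therefore the sub-lemma that whiskering preserves the biproduct morphisms. The (co)bang cases $\RW{S}{\bang{X}}=\bang{X\per S}$ and $\RW{S}{\cobang{X}}=\cobang{X\per S}$ (and their left analogues) are free, being the unique arrows to and from the zero object. For the (co)diagonals I would argue by the universal property of the biproduct: composing $\RW{S}{\diag{X}}$ with each projection $\pi_i=\id{X\per S}\piu\bang{X\per S}$ of $(X\per S)\piu(X\per S)$ and using \eqref{eq:whisk:funct}, \eqref{eq:whisk:funct piu}, and the counit law $\diag{X}\dcomp(\id{X}\piu\bang{X})=\id{X}$, one gets $\id{X\per S}$, which forces $\RW{S}{\diag{X}}=\diag{X\per S}$; dually for $\codiag{}$. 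Right whiskering is clean because right strictness identifies $(X\piu X)\per S$ with $(X\per S)\piu(X\per S)$ on the nose. The genuinely fiddly case is left whiskering, where the distributor $\dl{S}{X}{X}$ is a nontrivial iso: the same universal-property computation instead yields $\LW{S}{\diag{X}}\dcomp\dl{S}{X}{X}=\diag{S\per X}$ and $\Idl{S}{Y}{Y}\dcomp\LW{S}{\codiag{Y}}=\codiag{S\per Y}$, and one must carefully thread the $\fun{L}_S$-version of \eqref{eq:whisk:funct piu} together with the coherence identity $\dl{S}{X}{\zero}=\id{S\per X}$ so that all distributors cancel. Once these preservation identities are in place, the additivity computation of the previous paragraph and hence the lemma follow.
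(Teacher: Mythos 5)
Your proof is correct, but it takes a different (and much more explicit) route than the paper: the paper's entire proof is a one-line citation, observing that the three laws only concern the finite-biproduct rig structure and invoking Proposition 6.1 of \cite{bonchi2023deconstructing}, which asserts them for arbitrary fb rig categories. You make the same reduction to the fb rig setting but then actually prove the fact, by showing the whiskerings $\fun{L}_S,\fun{R}_S$ of Proposition~\ref{prop:whisk} are additive for the convolution monoid. The skeleton is sound: the zero cases follow from $\zero$ being a zero object together with right strictness; the preservation identities $\RW{S}{\bang{X}}=\bang{X\per S}$ and $\RW{S}{\cobang{X}}=\cobang{X\per S}$ are free; $\RW{S}{\diag{X}}=\diag{X\per S}$ and its dual follow from the universal property of the (co)product (available since the fb axioms make $\diag{}$ a natural comonoid, hence $\piu$ cartesian), using \eqref{eq:whisk:funct piu} to identify $\RW{S}{\id{X}\piu\bang{X}}$ with the projection; the left-whiskered versions carry the distributor $\dl{S}{X}{X}$, which cancels against the distributors inserted by \eqref{eq:whisk:funct piu}.1 in the additivity computation; and the final step is the $\CMon$-enrichment law \eqref{eq:cmon enrichment}. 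You also correctly flag the typo in the second displayed equation of the statement, whose right-hand side should read $g\per f_1 + g\per f_2$. What your approach buys is a self-contained argument readable inside this paper; what the paper's buys is brevity, at the cost of sending the reader to the external reference for exactly the computation you carried out.
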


The following two results illustrate the interaction of the product $\per$  with the Kleene star and  trace.

\begin{proposition}\label{prop:star-per}
	For all $a \colon X \to X$ and $b \colon Y \to Y$ in a Kleene rig category, it holds that \[\kstar{(a \per b)} \leq \kstar{a} \per \kstar{b}\text{.}\]
\end{proposition}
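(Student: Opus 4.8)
The plan is to derive the inequality directly from the Kleene-star laws in \eqref{eq:Kllenelaw}, using only the \emph{left induction} law together with functoriality and monotonicity of $\per$. Write $c \defeq a \per b \colon X \per Y \to X \per Y$ and $d \defeq \kstar a \per \kstar b$. The left induction law reads $c \dcomp r \leq r \implies \kstar c \dcomp r \leq r$; instantiating $r \defeq d$, it suffices to establish the two facts $\id{X \per Y} \leq d$ and $c \dcomp d \leq d$. Indeed, from these the induction law gives $\kstar c \dcomp d \leq d$, and then $\kstar c = \kstar c \dcomp \id{X \per Y} \leq \kstar c \dcomp d \leq d$, where the middle inequality uses monotonicity of composition in its right argument. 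This is exactly the statement $\kstar{(a \per b)} \leq \kstar a \per \kstar b$.

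For the first ingredient $\id{X \per Y} \leq d$, I would use that $\per$ is a monoidal product, so $\id{X \per Y} = \id X \per \id Y$. The left-unfold laws $\id X + a \dcomp \kstar a \leq \kstar a$ and $\id Y + b \dcomp \kstar b \leq \kstar b$ give $\id X \leq \kstar a$ and $\id Y \leq \kstar b$ (since $\leq$ is the order induced by the convolution join, Lemma \ref{lemma:order-adjointness}), and monotonicity of $\per$ then yields $\id X \per \id Y \leq \kstar a \per \kstar b$. For the second ingredient $c \dcomp d \leq d$, functoriality of $\per$ gives $c \dcomp d = (a \per b) \dcomp (\kstar a \per \kstar b) = (a \dcomp \kstar a) \per (b \dcomp \kstar b)$; the same unfold laws give $a \dcomp \kstar a \leq \kstar a$ and $b \dcomp \kstar b \leq \kstar b$, and monotonicity of $\per$ again delivers the claim.

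The only points requiring care are the monotonicity statements, and these are precisely the ones supplied by the surrounding theory: monotonicity of composition follows from the distributivity of the convolution join over $\dcomp$ in \eqref{eq:cmon enrichment} together with Lemma \ref{lemma:order-adjointness}, while monotonicity of $\per$ in each argument follows in the same way from Lemma \ref{lemma:ditributivityper} (distributivity of $\per$ over the join). I expect no genuine obstacle here: because the goal is an inequality rather than an equality, only one of the two induction laws is needed, and the argument never has to unfold $\kstar{(\cdot)}$ into its trace presentation from Figure \ref{fig:star-trace} or appeal to the matrix construction. The symmetric right-handed variant, using the right induction law and $\kstar c = \id{X \per Y} \dcomp \kstar c$, would work equally well.
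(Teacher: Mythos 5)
Your proposal is correct and follows essentially the same route as the paper's proof: both establish $(a \per b) \dcomp (\kstar{a} \per \kstar{b}) \leq \kstar{a} \per \kstar{b}$ via functoriality of $\per$ and the unfold law, apply the left induction law of \eqref{eq:Kllenelaw} to obtain $\kstar{(a \per b)} \dcomp (\kstar{a} \per \kstar{b}) \leq \kstar{a} \per \kstar{b}$, and conclude using $\id{X} \per \id{Y} \leq \kstar{a} \per \kstar{b}$ together with monotonicity. Your explicit justification of monotonicity of $\per$ via Lemma \ref{lemma:ditributivityper} and Lemma \ref{lemma:order-adjointness} is a detail the paper leaves implicit, but the argument is the same.
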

\begin{proof}
	First observe that the following inequality holds:
	\begin{align*}
		(a \per b) ; (\kstar{a} \per \kstar{b})
		&= (a ; \kstar{a}) \per (b ; \kstar{b}) \tag{\Cref{fig:freestricmmoncatax}} \\ 
		&\leq (\kstar{a} \per \kstar{b}) \tag{\ref{eq:Kllenelaw}} \\
	\end{align*}

	Thus, by~\eqref{eq:Kllenelaw}, it follows that
	
	\begin{equation}\label{eq:A}
		\kstar{(a \per b)} ; (\kstar{a} \per \kstar{b}) \leq \kstar{a} \per \kstar{b}.
	\end{equation}

	To conclude, observe that the following holds:
	\begin{align*}
		\kstar{(a \per b)} 
		&= \kstar{(a \per b)} ; \id{X \per Y} \tag{\Cref{fig:freestricmmoncatax}} \\ 
		&= \kstar{(a \per b)} ; (\id{X} \per \id{Y}) \tag{\Cref{fig:freestricmmoncatax}} \\ 
		&\leq \kstar{(a \per b)} ; (\kstar{a} \per \kstar{b}) \tag{\ref{eq:Kllenelaw}} \\
		&\leq \kstar{a} \per \kstar{b} \tag{\ref{eq:A}}
	\end{align*}
	
\end{proof}

\begin{proposition}\label{prop:trace-per}
	For all $f \colon S \piu X \to S \piu Y$ and $f' \colon S' \piu X' \to S' \piu Y'$ in a Kleene rig category, it holds that 
	\[ 
		\trace_{S \per S'} \begin{psmallmatrix}
			f_{SS} \per f'_{S' S'} & f_{SY} \per f'_{S'Y'} \\
			f_{XS} \per f'_{X' S'} & f_{XY} \per f'_{X'Y'}
		\end{psmallmatrix}
		\; \leq \;
		\trace_{S} f \; \per \; \trace_{S'} f'
	 \]
	 where $\begin{psmallmatrix} f_{SS} & f_{SY} \\ f_{XS} & f_{XY} \end{psmallmatrix}$ and $\begin{psmallmatrix} f'_{S'S'} & f'_{S'Y'} \\ f'_{X'S'} & f'_{X'Y'} \end{psmallmatrix}$ are, respectively, the matrix normal forms of $f$ and $f'$.
\end{proposition}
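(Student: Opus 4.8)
The plan is to reduce both sides to expressions involving the Kleene star via the trace--star correspondence of Figure~\ref{fig:star-trace}, and then to compare them summand by summand using distributivity of $\per$ over the convolution join (Lemma~\ref{lemma:ditributivityper}) together with Proposition~\ref{prop:star-per}. Writing the matrix normal forms as in Proposition~\ref{prop:matrixform}, Figure~\ref{fig:star-trace} gives
\[
\trace_{S} f = f_{XS} ; \kstar{(f_{SS})} ; f_{SY} + f_{XY},
\qquad
\trace_{S'} f' = f'_{X'S'} ; \kstar{(f'_{S'S'})} ; f'_{S'Y'} + f'_{X'Y'}.
\]
Applying the very same formula to the entrywise product matrix on the left-hand side, call it $g \colon (S\per S')\piu(X\per X') \to (S\per S')\piu(Y\per Y')$, whose components are $g_{(S\per S')(S\per S')} = f_{SS}\per f'_{S'S'}$ and so on, yields
\[
\trace_{S\per S'} g = (f_{XS} \per f'_{X'S'}) ; \kstar{(f_{SS} \per f'_{S'S'})} ; (f_{SY} \per f'_{S'Y'}) + (f_{XY} \per f'_{X'Y'}).
\]

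First I would bound the head summand. By Proposition~\ref{prop:star-per}, $\kstar{(f_{SS} \per f'_{S'S'})} \leq \kstar{(f_{SS})} \per \kstar{(f'_{S'S'})}$; since composition is monotone in the poset enrichment, precomposing with $f_{XS} \per f'_{X'S'}$ and postcomposing with $f_{SY} \per f'_{S'Y'}$ preserves this inequality. The resulting upper bound then factors through the interchange law for $\per$ (functoriality, \Cref{fig:freestricmmoncatax}):
\[
(f_{XS} \per f'_{X'S'}) ; (\kstar{(f_{SS})} \per \kstar{(f'_{S'S'})}) ; (f_{SY} \per f'_{S'Y'}) = (f_{XS} ; \kstar{(f_{SS})} ; f_{SY}) \per (f'_{X'S'} ; \kstar{(f'_{S'S'})} ; f'_{S'Y'}).
\]
Thus the head summand of $\trace_{S\per S'} g$ lies below the $\per$ of the two ``loop'' summands of the traces of $f$ and $f'$, while the tail summand $f_{XY} \per f'_{X'Y'}$ matches verbatim.

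Finally I would expand the right-hand side. Using distributivity of $\per$ over $+$ and over $0$ (Lemma~\ref{lemma:ditributivityper}), the product $\trace_S f \per \trace_{S'} f'$ splits into four joinands: the product of the two loop summands (exactly the upper bound obtained above), the two cross terms $(f_{XS};\kstar{(f_{SS})};f_{SY}) \per f'_{X'Y'}$ and $f_{XY} \per (f'_{X'S'};\kstar{(f'_{S'S'})};f'_{S'Y'})$, and the tail $f_{XY} \per f'_{X'Y'}$. Since $+$ is the join of the semilattice enrichment (Lemma~\ref{lemma:order-adjointness}) and joins are monotone, the head and tail of $\trace_{S\per S'} g$ each sit below a joinand of this expansion, so $\trace_{S\per S'} g \leq \trace_S f \per \trace_{S'} f'$. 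The only genuinely nontrivial ingredient is the star inequality $\kstar{(a\per b)} \leq \kstar{a}\per\kstar{b}$ already provided by Proposition~\ref{prop:star-per}; everything else is bookkeeping with distributivity and the interchange law. Note that the inequality (rather than equality) in the conclusion is precisely accounted for by the two surviving cross terms on the right, which have no counterpart in $\trace_{S\per S'} g$.
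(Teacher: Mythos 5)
Your proof is correct and follows essentially the same route as the paper's: both rest on the trace--star correspondence of Figure~\ref{fig:star-trace}, distributivity of $\per$ over $+$ (Lemma~\ref{lemma:ditributivityper}), dropping the two cross terms via the join-semilattice order (Lemma~\ref{lemma:order-adjointness}), the interchange law, and Proposition~\ref{prop:star-per} as the one nontrivial ingredient. The only difference is presentational — you bound the left-hand side upward while the paper chains inequalities downward from $\trace_S f \per \trace_{S'} f'$ — which is immaterial.
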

\begin{proof}
	\begin{align*}
		\trace_{S} f \; \per \; \trace_{S'} f'
		&= (f_{XS} ; \kstar{f_{SS}} ; f_{SY} + f_{XY}) \per (f'_{X'S'} ; \kstar{(f'_{S'S'})} ; f'_{S'Y'} + f'_{X'Y'}) \tag{\Cref{fig:star-trace}} \\
		&= \begin{array}[t]{cl}
			\multicolumn{2}{c}{ ((f_{XS} ; \kstar{f_{SS}} ; f_{SY}) \per (f'_{X'S'} ; \kstar{(f'_{S'S'})} ; f'_{S'Y'})) } \\ 
			+ & ((f_{XS} ; \kstar{f_{SS}} ; f_{SY} + f_{XY}) \per f'_{X'Y'}) \\
			+ & (f_{XY} \per (f'_{X'S'} ; \kstar{(f'_{S'S'})} ; f'_{S'Y'})) \\
			+ & (f_{XY} \per f'_{X'Y'})
		\end{array} \tag{Lemma \ref{lemma:ditributivityper}}  \\
		&\geq { ((f_{XS} ; \kstar{f_{SS}} ; f_{SY}) \per (f'_{X'S'} ; \kstar{(f'_{S'S'})} ; f'_{S'Y'})) \; + \; (f_{XY} \per f'_{X'Y'}) }  \tag{\Cref{lemma:order-adjointness}} \\
		&= (f_{XS} \per f'_{X'S'}) ; (\kstar{f_{SS}} \per \kstar{(f'_{S'S'})}) ; (f_{SY} \per f'_{X'Y'}) + (f_{XY} \per f'_{X'Y'}) \tag{\Cref{fig:freestricmmoncatax}} \\
		&\geq (f_{XS} \per f'_{X'S'}) ; \kstar{(f_{SS} \per f'_{S'S'})} ; (f_{SY} \per f'_{X'Y'}) + (f_{XY} \per f'_{X'Y'}) \tag{\Cref{prop:star-per}} \\
		&= \trace_{S \per S'} \begin{psmallmatrix}
			f_{SS} \per f'_{S' S'} & f_{SY} \per f'_{S'Y'} \\
			f_{XS} \per f'_{X' S'} & f_{XY} \per f'_{X'Y'}
		\end{psmallmatrix} \tag{\Cref{fig:star-trace}}
	\end{align*}
\end{proof}

\subsection{From Traced Tapes to Kleene Tapes}

\begin{figure}
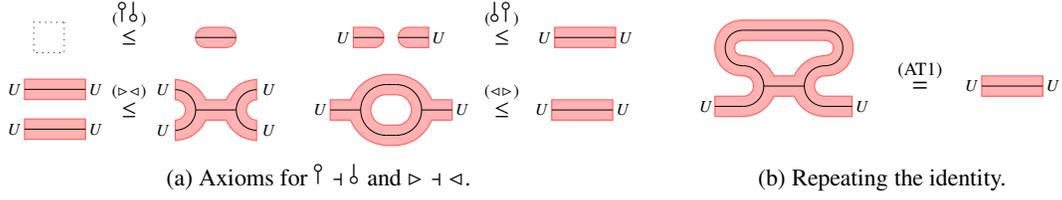
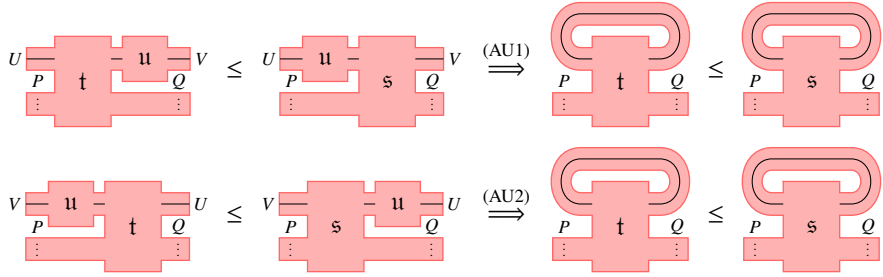

    \begin{subfigure}{0.6\linewidth}
        \mylabel{ax:relCobangBang}{$\cobang{}\bang{}$}
        \mylabel{ax:relBangCobang}{$\bang{}\cobang{}$}
        \mylabel{ax:relCodiagDiag}{$\codiag{}\diag{}$}
        \mylabel{ax:relDiagCodiag}{$\diag{}\codiag{}$}
        \scalebox{0.9}{
            \begin{tabular}{c@{}c@{}c @{\;} c@{}c@{}c}
                
    \InputIfFileExists{empty.tikz}{}{\input{./tikz/empty.tikz}}
 & $\axsubeq{\cobang{}\bang{}}$ & 
    \InputIfFileExists{tapes/whiskered_ax/bialg2_left.tikz}{}{\input{./tikz/tapes/whiskered_ax/bialg2_left.tikz}}
 & \Tcounit{U}\Tunit{U} & $\axsubeq{\bang{}\cobang{}}$ & \Twire{U} \\[2ex]
                $\begin{aligned}\begin{gathered}\Twire{U} \\ \Twire{U} \end{gathered}\end{aligned}$ & $\axsubeq{\codiag{}\diag{}}$ & 
    \InputIfFileExists{tapes/whiskered_ax/bialg1_left.tikz}{}{\input{./tikz/tapes/whiskered_ax/bialg1_left.tikz}}
 & \TRelCodiagDiag{U} & $\axsubeq{\diag{}\codiag{}}$ & \Twire{U}
            \end{tabular}
        }
        \caption{Axioms for  $\cobang{}  \dashv \bang{}$ and $\codiag{} \dashv \diag{}$.}
        \label{fig:rel axioms}
    \end{subfigure}
    \begin{subfigure}{0.38\linewidth}
        \mylabel{ax:trace:tape:kstar-id}{AT1}
        \[
            \scalebox{0.9}{
    \InputIfFileExists{traceTapeAx/rep/lhs.tikz}{}{\input{./tikz/traceTapeAx/rep/lhs.tikz}}
} \stackrel{\text{(AT1)}}{=} \scalebox{0.9}{
    \InputIfFileExists{traceTapeAx/rep/rhs.tikz}{}{\input{./tikz/traceTapeAx/rep/rhs.tikz}}
}
        \]
        \caption{Repeating the identity.}
        \label{fig:id* tape axiom}
    \end{subfigure}
    \\[15pt]
    \begin{subfigure}{\linewidth}
        \mylabel{ax:trace:tape:AU1}{AU1}
        \mylabel{ax:trace:tape:AU2}{AU2}
        \[ 
    \InputIfFileExists{traceTapeAx/posetUnif/1/lhs.tikz}{}{\input{./tikz/traceTapeAx/posetUnif/1/lhs.tikz}}
 \leq 
    \InputIfFileExists{traceTapeAx/posetUnif/1/rhs.tikz}{}{\input{./tikz/traceTapeAx/posetUnif/1/rhs.tikz}}
 \stackrel{(\text{AU1})}{\implies} 
    \InputIfFileExists{traceTapeAx/posetUnif/1/TR_lhs.tikz}{}{\input{./tikz/traceTapeAx/posetUnif/1/TR_lhs.tikz}}
 \leq 
    \InputIfFileExists{traceTapeAx/posetUnif/1/TR_rhs.tikz}{}{\input{./tikz/traceTapeAx/posetUnif/1/TR_rhs.tikz}}
 \]
        \[ 
    \InputIfFileExists{traceTapeAx/posetUnif/2/lhs.tikz}{}{\input{./tikz/traceTapeAx/posetUnif/2/lhs.tikz}}
 \leq 
    \InputIfFileExists{traceTapeAx/posetUnif/2/rhs.tikz}{}{\input{./tikz/traceTapeAx/posetUnif/2/rhs.tikz}}
 \stackrel{(\text{AU2})}{\implies} 
    \InputIfFileExists{traceTapeAx/posetUnif/2/TR_lhs.tikz}{}{\input{./tikz/traceTapeAx/posetUnif/2/TR_lhs.tikz}}
 \leq 
    \InputIfFileExists{traceTapeAx/posetUnif/2/TR_rhs.tikz}{}{\input{./tikz/traceTapeAx/posetUnif/2/TR_rhs.tikz}}
 \]
        \caption{Posetal uniformity axioms in tape diagrams.}
        \label{fif:poset unif tape axioms}
    \end{subfigure}
    \caption{Additional axioms for Kleene Tape Diagrams}\label{fig:add}
\end{figure}

We now introduce \emph{Kleene tapes}, in a nutshell tape diagrams for Kleene rig categories.

Kleene tapes are constructed as traced tape diagrams quotiented by the additional axioms of Kleene bicategories illustrated in the form of tapes in Figure \ref{fig:add}. Since these axioms include the posetal uniformity laws ((AU1) and (AU2) in Definition \ref{def:kleenebicategory}) that are not equational, such quotient needs to be performed with some care.

Let $\basicR$ be a a set of pairs $(\t_1, \t_2)$ of arrows of $\CatTrTape$ with the same domain and codomain. We define $\precongB$ to be the set generated by the following inference system (where $\t \precongB \s$ is a shorthand for $(\t,\s)\in \precongB$).
\begin{equation}\label{eq:uniformprecong}
        \!\!\begin{array}{@{\qquad}c@{\qquad\qquad}c@{\qquad}c@{\qquad}}
            \inferrule*[right=($\basicR$)]{\t_1 \mathbin{\basicR} \t_2}{\t_1 \mathrel{\precongB} \t_2}
            &
            \inferrule*[right=($r$)]{-}{\t \mathrel{\precongB} \t}
            &    
            \inferrule*[right=($t$)]{\t_1 \mathrel{\precongB} \t_2 \quad \t_2 \mathrel{\precongB} \t_3}{\t_1 \mathrel{\precongB} \t_3}
            \\[8pt]
            \inferrule*[right=($;$)]{\t_1 \mathrel{\precongB} \t_2 \quad \s_1 \mathrel{\precongB} \s_2}{\t_1;\s_1 \mathrel{\precongB} \t_2;\s_2}
            &
            \inferrule*[right=($\piu$)]{\t_1 \mathrel{\precongB} \t_2 \quad \s_1 \mathrel{\precongB} \s_2}{\t_1\piu\s_1 \mathrel{\precongB} \t_2 \piu \s_2}
            &
            \inferrule*[right=($\per $)]{\t_1 \mathrel{\precongB} \t_2 \quad \s_1 \mathrel{\precongB} \s_2}{\t_1\per \s_1 \mathrel{\precongB} \t_2 \per \s_2}
            \\[8pt]
            \multicolumn{3}{c}{
            \inferrule*[right=($ut$-1)]{\s_2 \mathrel{\precongB} \s_1 \qquad \t_1 ; (\s_1 \piu \id{}) \mathrel{\precongB} (\s_2 \piu \id{}) ; \t_2}{\trace_{S_1}\t_1 \mathrel{\precongB} \trace_{S_2}\t_2}
            \quad\;\;
            \inferrule*[right=($ut$-2)]{ \s_2 \mathrel{\precongB} \s_1 \qquad (\s_1 \piu \id{}) ; \t_1 \mathrel{\precongB} \t_2 ; (\s_2 \piu \id{})}{\trace_{S_1}\t_1 \mathrel{\precongB} \trace_{S_2}\t_2}
            }
        \end{array}
\end{equation}

We then take $\basicK$ to be the set of pairs of tapes containing those in Figure \ref{fig:rel axioms} and in Figure \ref{fig:id* tape axiom}. More explicitly,

\begin{align*}
 \basicK\defeq & \{(\id{P\piu P}  \, , \, \codiag{P};\diag{P} \mid P \in Ob(\CatTrTape) \} \cup \{( \diag{P}; \codiag{P} \,,\, \id{P} ) \mid P \in Ob(\CatTrTape)\} \cup \\
 & \{(\id{\zero}  \, , \, \cobang{P};\bang{P} \mid P \in Ob(\CatTrTape) \} \cup \{( \bang{P}; \cobang{P} \,,\, \id{P} ) \mid P \in Ob(\CatTrTape)\} \cup \\
 & \{(\trace_{P}(\codiag{P};\diag{P}) \,,\, \id{P})\mid P \in Ob(\CatTrTape)\}\text{.}
\end{align*}
We fix $\sim_\basicK \defeq \leq_{\basicK} \cap \leq_{\basicK}$.

With these definitions we can construct $\CatKTape$, the Kleene rig category of Kleene tapes. Objects are the same of $\CatTrTape$. Arrows are $\sim_\basicK$-equivalence classes of arrows of $\CatTrTape$. Every homset $\CatKTape[P,Q]$ is ordered by $\precongK$. One can easily check that the construction of $\CatKTape$ is well defined and that it gives rise to a sesquistrict Kleene rig category (see Proposition \ref{prop:tapeisKleenerig}). More importantly, $\CatKTape$ is the freely generated one.

\begin{theorem}\label{thm:Kleenetapesfree}
$\CatKTape$ is the free sesquistrict Kleene rig category generated by the monoidal signature $(\sort, \sign)$.
\end{theorem}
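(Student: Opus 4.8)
The plan is to leverage the freeness of $\CatTrTape$ established in Theorem \ref{thm:freeut-fb} together with the observation that every Kleene rig category is, in particular, a sesquistrict ut-fb rig category. Indeed, a Kleene bicategory is uniformly traced: as remarked after Definition \ref{def:kleenebicategory}, by antisymmetry of $\leq$ the posetal uniformity axioms \eqref{eq:equivalentuni1} and \eqref{eq:equivalentuni2} entail the equational uniformity of Figure \ref{fig:uniformity}. Hence, given a sesquistrict Kleene rig category $H\colon \Cat M \to \Cat D$ and an interpretation $(\alpha_\sort,\alpha_\sign)$ of $(\sort,\sign)$ — which is exactly the same data as an interpretation into the underlying sesquistrict ut-fb rig category, since the interpretation does not reference the trace, poset or Kleene structure — Theorem \ref{thm:freeut-fb} yields a unique sesquistrict ut-fb rig functor $(\alpha,\dsem{\cdot})$ with $\dsem{\cdot}\colon \CatTrTape \to \Cat D$ extending it.

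First I would show that $\dsem{\cdot}$ is monotone with respect to $\precongK$, i.e.\ that $\t_1 \precongK \t_2$ implies $\dsem{\t_1} \leq \dsem{\t_2}$ in $\Cat D$, by induction on the derivation in the inference system \eqref{eq:uniformprecong} instantiated at $\basicR = \basicK$. The rule $(r)$ is reflexivity and $(t)$ transitivity of $\leq$. For the generating rule $(\basicK)$, each pair in $\basicK$ is sent by $\dsem{\cdot}$ to one of the defining (in)equalities of a Kleene bicategory: since $\dsem{\cdot}$ preserves identities, $\diag{}$, $\codiag{}$, $\bang{}$, $\cobang{}$ and traces, the four adjointness pairs become the inequalities of Definition \ref{def:fbidempotent}, and the last pair becomes the axiom of Figure \ref{fig:happy-trace}, all of which hold in $\Cat D$. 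The congruence rules $(;)$, $(\piu)$, $(\per)$ are sound because $\Cat D$ is poset-enriched: composition is monotone, $\piu$ is monotone by the componentwise characterisation of $\leq$ in Proposition \ref{prop:matrixform}, and $\per$ is monotone by distributivity over the convolution join (Lemma \ref{lemma:ditributivityper}) combined with Lemma \ref{lemma:order-adjointness}. Finally, the rules $(ut\text{-}1)$ and $(ut\text{-}2)$ are sound by the posetal uniformity axioms \eqref{eq:equivalentuni1} and \eqref{eq:equivalentuni2} of $\Cat D$: applying the induction hypothesis to the two premises and instantiating $r_1 \defeq \dsem{\s_1}$, $r_2 \defeq \dsem{\s_2}$ gives the required inequality between the traces, which coincides with $\dsem{\trace_{S_1}\t_1} \leq \dsem{\trace_{S_2}\t_2}$ because $\dsem{\cdot}$ preserves the trace.

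Consequently $\dsem{\cdot}$ sends $\sim_\basicK$-equivalent arrows to equal morphisms of $\Cat D$, so it factors through the identity-on-objects quotient functor $Q\colon \CatTrTape \to \CatKTape$: there is a unique functor $\dsem{\cdot}^\sharp\colon \CatKTape \to \Cat D$ with $Q ; \dsem{\cdot}^\sharp = \dsem{\cdot}$. Because the rig and Kleene structure of $\CatKTape$ is inherited from $\CatTrTape$ along $Q$ (Proposition \ref{prop:tapeisKleenerig}) and $\dsem{\cdot}$ is a ut-fb rig functor, $\dsem{\cdot}^\sharp$ is a rig functor preserving $\diag{}$, $\codiag{}$, $\bang{}$, $\cobang{}$ and traces; the monotonicity established above makes it poset-enriched, hence a morphism of Kleene rig categories, so $(\alpha,\dsem{\cdot}^\sharp)$ is a sesquistrict Kleene rig functor extending the interpretation. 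For uniqueness, any such functor $G$ precomposed with $Q$ is a sesquistrict ut-fb rig functor $\CatTrTape \to \Cat D$ extending the interpretation, whence $Q ; G = \dsem{\cdot}$ by the uniqueness clause of Theorem \ref{thm:freeut-fb}; since $Q$ is surjective on arrows, $G = \dsem{\cdot}^\sharp$. The main obstacle is the inductive soundness argument — specifically the $(ut)$ cases, which are precisely where the novel posetal uniformity axioms \eqref{eq:equivalentuni1} and \eqref{eq:equivalentuni2} of $\Cat D$ are invoked; the base case $(\basicK)$ and the congruence cases are routine once the monotonicity of $\per$ has been recorded.
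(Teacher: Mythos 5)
Your proposal is correct and follows essentially the same route as the paper: obtain the unique ut-fb rig functor out of $\CatTrTape$ from Theorem \ref{thm:freeut-fb}, show it is monotone with respect to $\precongK$ so that it factors through the quotient, and then transfer the universal property. The only difference is bookkeeping: where you argue soundness by rule induction on derivations in \eqref{eq:uniformprecong}, the paper packages the same induction as ``the pullback order along the functor is a uniform precongruence containing $\basicK$, hence contains $\precongK$ by minimality'' (Lemmas \ref{lemma:smallest} and \ref{lemma:orderingfunctors}), with minimality justified via Scott continuity and the fixed-point theorems --- the two formulations are equivalent.
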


\section{Cartesian Bicategories}\label{sc:cb:background}

In Section~\ref{sec:2monREL} we gave the definition of cartesian bicategory (Definition~\ref{def:cartesian bicategory}). In this section we recall some of its properties that will be useful later on.

\begin{proposition}\label{prop:dagger}
    Let $\Cat{C}$ be a cartesian bicategory. There is an identity on objects isomorphism $\op{(\cdot)} \colon \Cat{C} \to \opcat{\Cat{C}}$ defined for all arrows $f \colon X \to Y$ as 
    \begin{equation}\label{def:dagger}\tag{\dag}
        \op{f} \defeq \stringOp{\scriptstyle f}{X}{Y}.
    \end{equation}
    Moreover, $\op{(\cdot)}$ is an isomorphism of cartesian bicategories, i.e. the laws in Table~\ref{table:re:daggerproperties} hold.
\end{proposition}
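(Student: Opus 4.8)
The plan is to read off from the diagram \eqref{def:dagger} that $\op{(\cdot)}$ is the \emph{converse} operation obtained by bending wires with the cups and caps induced by the Frobenius structure. Concretely, setting $\eta_X \defeq \codischarger{X} ; \copier{X} \colon \unoG \to X \per X$ and $\epsilon_X \defeq \cocopier{X} ; \discharger{X} \colon X \per X \to \unoG$, the definition unfolds to $\op{f} = (\eta_X \per \id{Y}) ; (\id{X} \per f \per \id{Y}) ; (\id{X} \per \epsilon_Y) \colon Y \to X$ for $f \colon X \to Y$. Since this assignment does not touch objects, $\op{(\cdot)}$ is manifestly identity on objects; it remains to show that it is an involutive, poset-enriched, symmetric monoidal functor $\Cat{C} \to \opcat{\Cat{C}}$ that swaps the monoid and comonoid structures, which is precisely what the laws of Table~\ref{table:re:daggerproperties} amount to.

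The technical heart of the argument is the pair of \emph{snake} (zig-zag) equations
$$(\eta_X \per \id{X}) ; (\id{X} \per \epsilon_X) = \id{X} = (\id{X} \per \eta_X) ; (\epsilon_X \per \id{X}).$$
I would establish these purely from conditions (1) and (3) of Definition~\ref{def:cartesian bicategory}: after expanding $\eta_X$ and $\epsilon_X$, the Frobenius law rewrites $(\copier{X} \per \id{X}) ; (\id{X} \per \cocopier{X})$ as $\cocopier{X} ; \copier{X}$, and then the monoid unit law $(\codischarger{X} \per \id{X}) ; \cocopier{X} = \id{X}$ together with the comonoid counit law $\copier{X} ; (\id{X} \per \discharger{X}) = \id{X}$ collapse the expression to $\id{X}$; the second snake is symmetric. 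Notably neither the special law nor the adjunction inequalities of condition (4) are needed here, so the snakes hold as genuine equalities.

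With the snakes in place, the remaining laws follow by routine diagrammatic calculation. Functoriality into $\opcat{\Cat{C}}$, namely $\op{\id{X}} = \id{X}$ and $\op{(f;g)} = \op{g} ; \op{f}$, is a standard yanking argument: bending $f ; g$ and inserting a snake between the two boxes splits the wire into $\op{g}$ followed by $\op{f}$. Involutivity $\op{\op{f}} = f$ is bending twice and straightening via the snakes. Preservation of the monoidal product, $\op{(f \per g)} = \op{f} \per \op{g}$, reduces to the compatibility of the cups and caps with $\per$ (up to the coherence symmetries $\sigma$ rearranging the four legs), and preservation and swapping of the (co)monoids, e.g.\ $\op{\copier{X}} = \cocopier{X}$ and $\op{\discharger{X}} = \codischarger{X}$, follow from the Frobenius and special laws. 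Finally, since $\op{(\cdot)}$ is a composite of the monotone operations $;$ and $\per$ with the fixed components $\eta$ and $\epsilon$, it is monotone, so $f \leq g$ implies $\op{f} \leq \op{g}$; combined with involutivity this makes $\op{(\cdot)}$ its own inverse and hence an isomorphism of poset-enriched cartesian bicategories.

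The main obstacle I anticipate is bookkeeping rather than conceptual: the snake equations require carefully threading together the unit, counit and Frobenius axioms, and the verification of $\op{(f \per g)} = \op{f} \per \op{g}$ must track the symmetries needed to bring $\eta_{X \per X'}$ and $\epsilon_{Y \per Y'}$ into the whiskered form $\eta_X \per \eta_{X'}$ and $\epsilon_Y \per \epsilon_{Y'}$. Once the snakes are secured, every entry of Table~\ref{table:re:daggerproperties} is a short wire-bending computation.
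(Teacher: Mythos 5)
Your proposal is correct, and it is essentially the argument the paper delegates to the literature: the paper's own ``proof'' is just the citation to Theorem 2.4 of Carboni--Walters, and what you give is precisely the standard Frobenius/compact-closed wire-bending argument behind that result --- cups and caps from $\codischarger{}\,;\copier{}$ and $\cocopier{}\,;\discharger{}$, snake equations from the Frobenius and (co)unit laws alone, and then functoriality, involutivity, tensor-preservation and monotonicity by routine yanking. Your observation that the special law and the adjunction inequalities are not needed for the snakes is accurate (indeed the whole converse construction only uses the Frobenius structure and poset-enrichment), so the proposal is a faithful expansion of the cited proof rather than a genuinely different route.
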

\begin{proof}
    See Theorem 2.4 in~\cite{Carboni1987}.
\end{proof}

\begin{table}[ht!]
    \centering
    $
    \begin{array}{@{}c@{\quad}c@{\quad}c@{\quad}c@{}}
        \toprule
        \multicolumn{2}{c}{
            \text{if } f \leq g \text{ then }\op{f} \leq \op{g}
        }
        &
        \multicolumn{2}{c}{
            \op{(\op{f})}= f
        }
        \\[4pt]
    \op{(f ; g)} = \op{g} ; \op{f}
    &\;\;\op{(\id{X})}=\id{X}
    &\;\;\op{(\cocopier{X})}= \copier{X}
    &\;\;\op{(\codischarger{X})}= \discharger{X}
    \\[4pt]
    \op{(f \perG g)} = \op{f} \perG \op{g}
    &\;\;\op{(\sigma^{\odot}_{X,Y})} = \sigma^{\odot}_{Y,X}
    &\;\;\op{(\copier{X})}= \cocopier{X}
    &\;\;\op{(\discharger{X})}= \codischarger{X}
    \\[4pt]
    \bottomrule
    \end{array}
    $
    \caption{Properties of $\op{(\cdot)} \colon \Cat{C} \to \opcat{\Cat{C}}$}\label{table:re:daggerproperties}
\end{table}

\begin{remark}
    From now on, we will depict a morphism $f \colon X \to Y$ as $\stringBox{f}{X}{Y}$, and use $\stringBoxOp{f}{X}{Y}$ as syntactic sugar for $\op{f}$.
\end{remark}

\begin{definition}
    In a cartesian bicategory $\Cat{C}$, an arrow $f \colon X \to Y$ is said to be \emph{single valued} iff satisfies~\eqref{eq:cb:sv}, \emph{total} iff satisfies~\eqref{eq:cb:tot}, \emph{injective} iff satisfies~\eqref{eq:cb:inj} and \emph{surjective} iff satisfies~\eqref{eq:cb:sur}. A \emph{map} is an arrow that is both single valued and total. Similarly, a \emph{comap} is an arrow that is both injective and surjective.
    
    \smallskip

    {
    \noindent\begin{minipage}{0.48\linewidth}
        \begin{equation}\label{eq:cb:sv}\tag{SV}
            \hspace*{-1.7cm}
    \begin{tikzpicture}[scale=1.5]
	\begin{pgfonlayer}{nodelayer}
		\node [style=none] (81) at (-0.75, 0) {};
		\node [style=label] (104) at (-1.25, 0) {$X$};
		\node [style=black] (112) at (0, 0) {};
		\node [style=label] (115) at (2.25, 0.375) {$Y$};
		\node [style=label] (116) at (2.25, -0.375) {$Y$};
		\node [style=none] (117) at (1.75, 0.375) {};
		\node [style=none] (118) at (1.75, -0.375) {};
		\node [style=none] (119) at (0.5, 0.375) {};
		\node [style=none] (120) at (0.5, -0.375) {};
		\node [style=bbox, scale=0.8] (121) at (1, 0.375) {$f$};
		\node [style=bbox, scale=0.8] (122) at (1, -0.375) {$f$};
	\end{pgfonlayer}
	\begin{pgfonlayer}{edgelayer}
		\draw (81.center) to (112);
		\draw [bend left] (112) to (119.center);
		\draw [bend right] (112) to (120.center);
		\draw (120.center) to (122);
		\draw (122) to (118.center);
		\draw (117.center) to (121);
		\draw (121) to (119.center);
	\end{pgfonlayer}
\end{tikzpicture}
}
 \leq 
    \begin{tikzpicture}[scale=1.5]
	\begin{pgfonlayer}{nodelayer}
		\node [style=bbox, scale=0.9] (107) at (0.25, 0) {$f$};
		\node [style=label] (110) at (-1, 0) {$X$};
		\node [style=none] (117) at (-0.5, 0) {};
		\node [style=label] (120) at (2.25, -0.375) {$Y$};
		\node [style=black] (121) at (1, 0) {};
		\node [style=none] (122) at (1.5, -0.375) {};
		\node [style=none] (123) at (1.5, 0.375) {};
		\node [style=label] (124) at (2.25, 0.375) {$Y$};
		\node [style=none] (125) at (1.75, -0.375) {};
		\node [style=none] (126) at (1.75, 0.375) {};
	\end{pgfonlayer}
	\begin{pgfonlayer}{edgelayer}
		\draw [bend right] (123.center) to (121);
		\draw [bend right] (121) to (122.center);
		\draw (107) to (121);
		\draw (123.center) to (126.center);
		\draw (125.center) to (122.center);
		\draw (117.center) to (107);
	\end{pgfonlayer}
\end{tikzpicture}
}

        \end{equation}
    \end{minipage}
    \begin{minipage}{0.48\linewidth}
        \begin{equation}\label{eq:cb:adj:sv}
            \stringSpan{f}{Y} \leq \stringId{Y}
        \end{equation}
    \end{minipage}
    
    \smallskip
    
    \noindent\begin{minipage}{0.48\linewidth}
        \begin{equation}\label{eq:cb:tot}\tag{TOT}
             
    \begin{tikzpicture}[scale=1.5]
	\begin{pgfonlayer}{nodelayer}
		\node [style=none] (114) at (-1, 0.75) {};
		\node [style=none] (115) at (-1, -0.75) {};
		\node [style=none] (117) at (-1, 0) {};
		\node [style=label] (118) at (-1.5, 0) {$X$};
		\node [style=black] (119) at (0.65, 0) {};
		\node [style=label] (121) at (1.25, 0) {};
	\end{pgfonlayer}
	\begin{pgfonlayer}{edgelayer}
		\draw (117.center) to (119);
	\end{pgfonlayer}
\end{tikzpicture}
}
 \leq 
    \begin{tikzpicture}[scale=1.5]
	\begin{pgfonlayer}{nodelayer}
		\node [style=none] (113) at (2, 0.75) {};
		\node [style=none] (114) at (-1, 0.75) {};
		\node [style=none] (115) at (-1, -0.75) {};
		\node [style=none] (116) at (2, -0.75) {};
		\node [style=none] (117) at (-1, 0) {};
		\node [style=label] (118) at (-1.5, 0) {$X$};
		\node [style=black] (119) at (1.15, 0) {};
		\node [style=bbox, scale=0.9] (120) at (0, 0) {$f$};
		\node [style=label] (121) at (2.75, 0) {};
	\end{pgfonlayer}
	\begin{pgfonlayer}{edgelayer}
		\draw (120) to (119);
		\draw (120) to (117.center);
	\end{pgfonlayer}
\end{tikzpicture}
}

        \end{equation}
    \end{minipage}
    \begin{minipage}{0.48\linewidth}
        \begin{equation}\label{eq:cb:adj:tot}
            \stringId{X} \leq \stringCospan{f}{X}
        \end{equation}
    \end{minipage}
    
    \smallskip
    
    \noindent\begin{minipage}{0.48\linewidth}
        \begin{equation}\label{eq:cb:inj}\tag{INJ}
            \hspace*{-1.7cm}
    \begin{tikzpicture}[scale=1.5]
	\begin{pgfonlayer}{nodelayer}
		\node [style=none] (123) at (1.75, 0) {};
		\node [style=label] (124) at (2.25, 0) {$Y$};
		\node [style=black] (125) at (1, 0) {};
		\node [style=label] (126) at (-1.25, 0.375) {$X$};
		\node [style=label] (127) at (-1.25, -0.375) {$X$};
		\node [style=none] (128) at (-0.75, 0.375) {};
		\node [style=none] (129) at (-0.75, -0.375) {};
		\node [style=none] (130) at (0.5, 0.375) {};
		\node [style=none] (131) at (0.5, -0.375) {};
		\node [style=bbox, scale=0.8] (132) at (0, 0.375) {$f$};
		\node [style=bbox, scale=0.8] (133) at (0, -0.375) {$f$};
	\end{pgfonlayer}
	\begin{pgfonlayer}{edgelayer}
		\draw (123.center) to (125);
		\draw [bend right] (125) to (130.center);
		\draw [bend left] (125) to (131.center);
		\draw (131.center) to (133);
		\draw (133) to (129.center);
		\draw (128.center) to (132);
		\draw (132) to (130.center);
	\end{pgfonlayer}
\end{tikzpicture}
}
 \leq 
    \begin{tikzpicture}[scale=1.5]
	\begin{pgfonlayer}{nodelayer}
		\node [style=bbox, scale=0.9] (107) at (0.75, 0) {$f$};
		\node [style=label] (110) at (2, 0) {$Y$};
		\node [style=none] (117) at (1.5, 0) {};
		\node [style=label] (120) at (-1.25, -0.375) {$X$};
		\node [style=black] (121) at (0, 0) {};
		\node [style=none] (122) at (-0.5, -0.375) {};
		\node [style=none] (123) at (-0.5, 0.375) {};
		\node [style=label] (124) at (-1.25, 0.375) {$X$};
		\node [style=none] (125) at (-0.75, -0.375) {};
		\node [style=none] (126) at (-0.75, 0.375) {};
	\end{pgfonlayer}
	\begin{pgfonlayer}{edgelayer}
		\draw [bend left] (123.center) to (121);
		\draw [bend left] (121) to (122.center);
		\draw (107) to (121);
		\draw (123.center) to (126.center);
		\draw (125.center) to (122.center);
		\draw (117.center) to (107);
	\end{pgfonlayer}
\end{tikzpicture}
}

        \end{equation}
    \end{minipage}
    \begin{minipage}{0.48\linewidth}
        \begin{equation}\label{eq:cb:adj:inj}
            \stringCospan{f}{X} \leq \stringId{X}
        \end{equation}
    \end{minipage}
    
    \smallskip
    
    \noindent\begin{minipage}{0.48\linewidth}
        \begin{equation}\label{eq:cb:sur}\tag{SUR}
            \hspace*{-1.4cm}
    \begin{tikzpicture}[scale=1.5]
	\begin{pgfonlayer}{nodelayer}
		\node [style=none] (114) at (2.25, 0.75) {};
		\node [style=none] (115) at (2.25, -0.75) {};
		\node [style=none] (117) at (2.25, 0) {};
		\node [style=label] (118) at (2.75, 0) {$Y$};
		\node [style=black] (119) at (0.6, 0) {};
		\node [style=label] (121) at (0, 0) {};
	\end{pgfonlayer}
	\begin{pgfonlayer}{edgelayer}
		\draw (117.center) to (119);
	\end{pgfonlayer}
\end{tikzpicture}
}
 \leq 
    \begin{tikzpicture}[scale=1.5]
	\begin{pgfonlayer}{nodelayer}
		\node [style=none] (114) at (2.25, 0.75) {};
		\node [style=none] (115) at (2.25, -0.75) {};
		\node [style=none] (117) at (2.25, 0) {};
		\node [style=label] (118) at (2.75, 0) {$Y$};
		\node [style=black] (119) at (0.1, 0) {};
		\node [style=bbox, scale=0.9] (120) at (1.25, 0) {$f$};
		\node [style=label] (121) at (-0.5, 0) {};
	\end{pgfonlayer}
	\begin{pgfonlayer}{edgelayer}
		\draw (120) to (119);
		\draw (120) to (117.center);
	\end{pgfonlayer}
\end{tikzpicture}
}

        \end{equation}
    \end{minipage}
    \begin{minipage}{0.48\linewidth}
        \begin{equation}\label{eq:cb:adj:sur}
            \stringId{Y} \leq \stringSpan{f}{Y}
        \end{equation}
    \end{minipage}
    }
\end{definition}

\begin{lemma}\label{lemma:cb:adjoints}
    In a cartesian bicategory $\Cat{C}$, an arrow $f \colon X \to Y$ is single valued iff~\eqref{eq:cb:adj:sv}, it is total iff~\eqref{eq:cb:adj:tot}, it is injective iff~\eqref{eq:cb:adj:inj} and it is surjective iff~\eqref{eq:cb:adj:sur}. In particular, an arrow is a map iff it has a right adjoint, namely $f \dashv \op{f}$; and it is a comap iff it has a left adjoint, namely $\op{f} \dashv f$.
\end{lemma}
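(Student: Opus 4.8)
The plan is to establish the four biconditionals separately and then read off the final clause by pairing them. The backbone of every argument is the dagger functor $\op{(\cdot)}$ of Proposition~\ref{prop:dagger}, a contravariant, monotone, monoidal involution that exchanges $\copier{X}$ with $\cocopier{X}$ and $\discharger{X}$ with $\codischarger{X}$ (Table~\ref{table:re:daggerproperties}). Applying it to the defining diagrams and using $\op{(\op{f})}=f$ shows that the injectivity condition \eqref{eq:cb:inj} for $f$ is literally the single-valuedness condition \eqref{eq:cb:sv} for $\op{f}$, and that \eqref{eq:cb:sur} for $f$ is \eqref{eq:cb:tot} for $\op{f}$; in words, $f$ is injective iff $\op{f}$ is single valued, and $f$ is surjective iff $\op{f}$ is total. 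On the adjoint side the same observations apply: instantiating \eqref{eq:cb:adj:sv} at $\op{f}$ yields $\op{(\op{f})};\op{f}=f;\op{f}\leq\id{X}$, which is exactly \eqref{eq:cb:adj:inj}, and likewise \eqref{eq:cb:adj:sur} is \eqref{eq:cb:adj:tot} instantiated at $\op{f}$. Consequently it suffices to prove the two equivalences \eqref{eq:cb:sv}$\Leftrightarrow$\eqref{eq:cb:adj:sv} and \eqref{eq:cb:tot}$\Leftrightarrow$\eqref{eq:cb:adj:tot} for an arbitrary arrow; instantiating them at $\op{f}$ then delivers the injective and surjective cases for free.

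First I would prove \eqref{eq:cb:sv}$\Leftrightarrow$\eqref{eq:cb:adj:sv}. Both inequalities express that $f$ does not increase copies, but in two different bent-wire presentations, and the point is that the special Frobenius structure of Definition~\ref{def:cartesian bicategory}(3) together with the adjunction $\copier{}\dashv\cocopier{}$ and $\discharger{}\dashv\codischarger{}$ of Definition~\ref{def:cartesian bicategory}(4) lets one straighten one presentation into the other. Concretely, to pass from \eqref{eq:cb:adj:sv}, i.e.\ $\op{f};f\leq\id{Y}$, to \eqref{eq:cb:sv} I would whisker by the comonoid on $Y$ and slide $f$ across the Frobenius node, introducing the pseudo-cup $\codischarger{}{};\copier{}{}$ and absorbing the pseudo-cap $\cocopier{}{};\discharger{}{}$ via the unit and counit inequalities of Definition~\ref{def:cartesian bicategory}(4); since $\Cat{C}$ is poset enriched, monotonicity of $;$ and $\perG$ preserves the inequality at each step. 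The converse direction reverses these manipulations, invoking the opposite one-sided laws. The equivalence \eqref{eq:cb:tot}$\Leftrightarrow$\eqref{eq:cb:adj:tot} is formally dual, using the discharger/codischarger adjunction in place of the copier/cocopier one to produce $\id{X}\leq f;\op{f}$.

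Finally, the ``in particular'' clause is immediate once the four equivalences are in place. By definition a map is an arrow that is both single valued and total, so $f$ is a map iff $\op{f};f\leq\id{Y}$ and $\id{X}\leq f;\op{f}$; in the locally posetal setting these are precisely the counit and the unit of an adjunction $f\dashv\op{f}$. Dually, a comap is injective and surjective, hence characterised by $f;\op{f}\leq\id{X}$ and $\id{Y}\leq\op{f};f$, which are the counit and unit of $\op{f}\dashv f$.

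I expect the main obstacle to be the core diagrammatic equivalence \eqref{eq:cb:sv}$\Leftrightarrow$\eqref{eq:cb:adj:sv}: one must check that the wire-bending carried out with the Frobenius and (co)unit laws is sound in both directions and that no step covertly appeals to an \emph{equality} rather than to the one-sided inequalities of Definition~\ref{def:cartesian bicategory}(4), since for a general arrow $f$ only laxness is available. Once this is verified carefully, everything else is bookkeeping through the involution $\op{(\cdot)}$.
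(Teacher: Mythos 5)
The paper does not actually prove this lemma: its ``proof'' is a pointer to Lemma~4.4 of the cited reference, so there is no internal argument to compare yours against. Your plan is essentially a reconstruction of that standard proof, and its architecture is sound: the dagger $\op{(\cdot)}$ of Proposition~\ref{prop:dagger} is a monotone contravariant involution swapping $\copier{}$ with $\cocopier{}$ and $\discharger{}$ with $\codischarger{}$, so it really does carry \eqref{eq:cb:inj} to \eqref{eq:cb:sv}-for-$\op{f}$ and \eqref{eq:cb:adj:inj} to \eqref{eq:cb:adj:sv}-for-$\op{f}$ (and likewise for surjective/total), reducing four biconditionals to two; and in a poset-enriched category the pair $\op{f};f\leq\id{Y}$, $\id{X}\leq f;\op{f}$ is by definition the adjunction $f\dashv\op{f}$, so the ``in particular'' clause is indeed immediate.

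The only place where real work remains is the two core equivalences, which you describe at the level of ``slide $f$ across the Frobenius node.'' To make this airtight you should fix the cup $\cup_X\defeq\codischarger{X};\copier{X}$ and cap $\cap_X\defeq\cocopier{X};\discharger{X}$, note that the snake identities $(\cup_X\per\id);(\id\per\cap_X)=\id_X$ are genuine \emph{equalities} (Frobenius plus the unit/counit laws, no laxness involved), and observe that besides the lax comonoid laws of Definition~\ref{def:cartesian bicategory}(2) you also get their daggers for free, e.g.\ $\codischarger{X};f\leq\codischarger{Y}$ and $\cocopier{X};f\leq(f\per f);\cocopier{Y}$, since $\op{(\cdot)}$ is monotone. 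With these, one direction is short (e.g.\ $\copier{X};(f\per f)\leq f;\copier{Y}$ gives $\cup_X;(f\per f)\leq\codischarger{X};f;\copier{Y}\leq\cup_Y$, which unbends to $\op{f};f\leq\id{Y}$ by the snakes), and the converse uses the Frobenius rewriting $\copier{X}=(\id_X\per\cup_X);(\cocopier{X}\per\id_X)$ together with the daggered lax law. So: no wrong step, same route as the cited source, but the diagrammatic calculations that constitute the actual content of the lemma are still owed.
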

\begin{proof}
    See Lemma 4.4 in~\cite{Bonchi2017c}.
\end{proof}

In any cartesian bicategory, one can define a convolution monoid for all objects $X,Y$ and arrows $f,g$ as
\begin{equation}\label{eq:cb:covolution}
    f \sqcap g \defeq \stringCBConvolution{f}{g}{X}{Y} \quad \text{(i.e., $\copier{X}; f \per g ;\cocopier{Y}$)} \quad\qquad \top \defeq \stringBottom{X}{Y} \quad \text{(i.e., $\discharger{X};\codischarger{Y}$).}
\end{equation}

However, unlike the case of fb categories with idempotent convolution, cartesian bicategories are not enriched over $\CMon$. In particular, each homset carries a commutative monoid structure, i.e. the laws in~\eqref{eq:cmon laws} hold; but the laws in \eqref{eq:cmon enrichment} hold only laxly, namely 
\begin{equation}
    (f \sqcap g);h \leq (f;h \sqcap g;h)
    \qquad
    h;(f \sqcap g) \leq (h;f \sqcap h;g)
    \qquad
    f ; \top \leq \top \geq \top ; f
\end{equation}

Given that the structure defined in~\eqref{eq:cb:covolution} is an idempotent monoid, and using the third inequality above, it is easy to see that each homset of a cartesian bicategory form a meet-semilattice with top.
In particular, the following holds for every arrow $f \colon X \to Y$:
\[
    \begin{tikzpicture}[scale=1.5]
        \begin{pgfonlayer}{nodelayer}
            \node [style=none] (117) at (-1, 0) {};
            \node [style=label] (118) at (-1.5, 0) {$X$};
            \node [style=bbox, scale=0.9] (120) at (0, 0) {$f$};
            \node [style=none] (121) at (1, 0) {};
            \node [style=label] (122) at (1.5, 0) {$Y$};
        \end{pgfonlayer}
        \begin{pgfonlayer}{edgelayer}
            \draw (120) to (117.center);
            \draw (120) to (121.center);
        \end{pgfonlayer}
    \end{tikzpicture}
    =
    \begin{tikzpicture}[scale=1.5]
        \begin{pgfonlayer}{nodelayer}
            \node [style=none] (117) at (-1, 0) {};
            \node [style=label] (118) at (-1.5, 0) {$X$};
            \node [style=bbox, scale=0.9] (120) at (0, 0) {$f$};
            \node [style=none] (121) at (2, 0) {};
            \node [style=label] (122) at (2.5, 0) {$Y$};
        \end{pgfonlayer}
        \begin{pgfonlayer}{edgelayer}
            \draw (120) to (117.center);
            \draw (120) to (121.center);
        \end{pgfonlayer}
    \end{tikzpicture}        
    \leq 
    \begin{tikzpicture}[scale=1.5]
        \begin{pgfonlayer}{nodelayer}
            \node [style=none] (117) at (-1.5, 0) {};
            \node [style=label] (118) at (-2, 0) {$X$};
            \node [style=bbox, scale=0.9] (120) at (0, 0) {$f$};
            \node [style=none] (121) at (2.5, 0) {};
            \node [style=label] (122) at (3, 0) {$Y$};
            \node [style=black] (123) at (1, 0) {};
            \node [style=black] (124) at (1.75, 0) {};
        \end{pgfonlayer}
        \begin{pgfonlayer}{edgelayer}
            \draw (120) to (117.center);
            \draw (123) to (120);
            \draw (124) to (121.center);
        \end{pgfonlayer}
    \end{tikzpicture}     
    \leq
    \begin{tikzpicture}[scale=1.5]
        \begin{pgfonlayer}{nodelayer}
            \node [style=none] (117) at (-1.25, 0) {};
            \node [style=label] (118) at (-1.75, 0) {$X$};
            \node [style=none] (121) at (1, 0) {};
            \node [style=label] (122) at (1.5, 0) {$Y$};
            \node [style=black] (128) at (-0.5, 0) {};
            \node [style=black] (129) at (0.25, 0) {};
        \end{pgfonlayer}
        \begin{pgfonlayer}{edgelayer}
            \draw (129) to (121.center);
            \draw (117.center) to (128);
        \end{pgfonlayer}
    \end{tikzpicture}    
\]
and the order on the homsets coincides with the one defined by the semilattice structure.

\begin{lemma}
    In a cartesian bicategory, $f \leq g$ iff $f \sqcap g = f$ for all $f,g \colon X \to Y$.
\end{lemma}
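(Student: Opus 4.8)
The plan is to recognise this as the standard fact that, in a meet-semilattice, the induced order coincides with the ambient one, and to verify the two directions using only the idempotent commutative monoid laws of $\sqcap$ together with the poset enrichment. The two ingredients I would rely on are: (i) that $\sqcap$ is an idempotent, commutative monoid with unit $\top$ on each homset, established just above, namely the laws in~\eqref{eq:cmon laws} together with $f \sqcap f = f$; and (ii) that $h \leq \top$ for every $h \colon X \to Y$, which is exactly the chain of inequalities displayed immediately before the statement. Throughout I would use that $\per$ and $\dcomp$ are monotone with respect to $\leq$, since $\Cat{C}$ is poset enriched, so that $\sqcap = \copier{X} \dcomp (\cdot \per \cdot) \dcomp \cocopier{Y}$ is monotone in both arguments.

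First I would prove the forward implication. Assume $f \leq g$. Applying monotonicity of $\sqcap$ to $f \leq f$ and $f \leq g$ gives $f = f \sqcap f \leq f \sqcap g$, where the equality is idempotency. For the reverse inequality, applying monotonicity to $f \leq f$ and $g \leq \top$ gives $f \sqcap g \leq f \sqcap \top = f$, where the equality is the unit law. By antisymmetry of $\leq$ we conclude $f \sqcap g = f$.

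For the converse, assume $f \sqcap g = f$. Using $f \leq \top$ and monotonicity, $f \sqcap g \leq \top \sqcap g = g$, where the equality uses commutativity and the unit law from~\eqref{eq:cmon laws}. Hence $f = f \sqcap g \leq g$, as required.

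I do not anticipate a genuine obstacle: the argument is purely the semilattice calculation, and every nontrivial inequality it invokes — idempotency of $\sqcap$, the unit law, and $h \leq \top$ — has already been recorded in the paragraph preceding the lemma. The only point requiring care is to make explicit that the poset enrichment of $\Cat{C}$ renders $\sqcap$ monotone in each argument, which is precisely what licenses the two monotonicity steps; this in turn is immediate from monotonicity of $\dcomp$ and $\per$ with respect to $\leq$.
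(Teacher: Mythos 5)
Your proof is correct and complete: it is the standard meet-semilattice argument, and every ingredient you invoke --- idempotency and the commutative-monoid laws for $(\sqcap,\top)$, the inequality $f \leq \top$, and monotonicity of $\sqcap$ in each argument coming from the poset enrichment of $\dcomp$ and $\per$ --- is recorded in the paragraph immediately preceding the lemma. The paper gives no inline argument and simply cites Lemma 4.13 of an external reference, where essentially this same computation is carried out, so your proposal matches the intended proof.
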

\begin{proof}
    See Lemma 4.13 in~\cite{Bonchi2017c}.
\end{proof}

\subsection{Coreflexives in Cartesian Bicategories}

In this section we recall the notion of \emph{coreflexive} morphisms in cartesian bicategories, along with some of their key properties. Recall that a relation $R \subseteq X \times X$ is called reflexive whenever $\id{X} \subseteq R$. Dually, $R$ is said to be coreflexive when $R \subseteq \id{X}$. The concept of coreflexive relation is abstracted in cartesian bicategories as expected.

\begin{definition}
    In a cartesian bicategory, a morphism $f \colon X \to X$ is a \emph{coreflexive} if $f \leq \id{X}$.
\end{definition}

\begin{lemma}\label{lemma:coreflexive properties}
    In a cartesian bicategory, the following hold for all coreflexives $f,g \colon X \to X$:
    \begin{enumerate}
        \item $
    \InputIfFileExists{coreflexive/copyR.tikz}{}{\input{./tikz/coreflexive/copyR.tikz}}
 = 
    \InputIfFileExists{coreflexive/Rcopy.tikz}{}{\input{./tikz/coreflexive/Rcopy.tikz}}
$,
        \item $f ; g = f \sqcap g$,
        \item $f$ is transitive, i.e. $f ; f = f$,
        \item $f$ is symmetric, i.e. $f = \op{f}$,
        \item $f$ is single valued,
        \item $f$ is injective.
    \end{enumerate}
     Moreover, the following hold in any cartesian bicategory \(\Cat{C}\):
     \begin{enumerate}
        \setcounter{enumi}{6}
        \item there is an isomorphism $\mathsf{Corefl}(\Cat{C})[X, X] \cong \Cat{C}[I, X]$, where $\mathsf{Corefl}(\Cat{C})$ is the subcategory of $\Cat{C}$ whose morphisms are all and only the coreflexives,
        \item for all morphims $f \colon I \to X$ and coreflexives $g \colon X \to X$, $f ; g = f \sqcap g'$, where $g' \colon I \to X$ is the morphism corresponding to $g$ via the isomorphism above,
        \item for all morphisms $f \colon X \to X$, if $f$ is transitive, symmetric and single valued, then $f$ is a coreflexive.
    \end{enumerate}
\end{lemma}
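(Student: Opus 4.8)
The plan is to establish the nine items in an order that differs slightly from their listing, since the quickest route first secures the adjoint characterisations. I would begin with \emph{single-valuedness} (5) and \emph{injectivity} (6), because these follow directly from $f \leq \id{X}$ without any of the other items. By \Cref{lemma:cb:adjoints}, $f$ is single valued iff \eqref{eq:cb:adj:sv} holds, i.e.\ $\op{f};f \leq \id{X}$, and injective iff \eqref{eq:cb:adj:inj} holds, i.e.\ $f;\op{f} \leq \id{X}$. Monotonicity of composition together with $f \leq \id{X}$ gives $\op{f};f \leq \op{f};\id{X} = \op{f}$ and $f;\op{f}\leq \id{X};\op{f} = \op{f}$; and since $\op{(\cdot)}$ is monotone with $\op{(\id{X})}=\id{X}$ (Table~\ref{table:re:daggerproperties}), from $f\leq\id{X}$ we get $\op{f}\leq\id{X}$. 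Chaining these yields $\op{f};f \leq \op{f}\leq \id{X}$ and $f;\op{f}\leq \op{f}\leq\id{X}$, proving (5) and (6) at once. The same remark that $f\leq\id{X}$ implies $\op{f}\leq\id{X}$ shows that $\op{f}$ is again coreflexive, which will be reused below.

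The diagrammatic core is item (1): a coreflexive slides through the copier. I would derive it from single-valuedness, which upgrades the always-available lax comonoid law $f;\copier{X}\leq\copier{X};(f\per f)$ to the \emph{equality} $f;\copier{X}=\copier{X};(f\per f)$ (this is exactly the content of \eqref{eq:cb:sv}). Combining this with $f\leq\id{X}$, cocommutativity of $\copier{X}$, and the special Frobenius law $\copier{X};\cocopier{X}=\id{X}$ gives the stated picture equality. With (1) in hand, (2) is a short formal calculation: expanding $f\sqcap g = \copier{X};(f\per g);\cocopier{X}$, sliding $f$ and then $g$ across the copier via (1), and collapsing $\copier{X};\cocopier{X}=\id{X}$ yields $f;g$; hence $f;g=f\sqcap g$. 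Item (3) is then immediate, since $\sqcap$ is idempotent and so $f;f=f\sqcap f=f$. The other genuinely diagrammatic step is symmetry (4): here I would compute $\op{f}$ from the defining formula \eqref{def:dagger} and reduce it to $f$ using (1) and the Frobenius and (co)unit laws; equivalently one shows $f\leq\op{f}$ and applies $\op{(\cdot)}$ together with $\op{(\op{f})}=f$ from Table~\ref{table:re:daggerproperties}. I expect (1) and (4) to be the main obstacle, as they are where the interplay of the adjunction, Frobenius and coreflexivity must be spelled out rather than quoted.

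For the structural items I would exhibit the bijection of (7) explicitly: $\Phi(f)\defeq\codischarger{X};f$ and $\Psi(s)\defeq\copier{X};(\op{s}\per\id{X})$, noting $\Psi(s)\leq\id{X}$ because $s\leq\codischarger{X}=\top$ forces $\op{s}\leq\discharger{X}$, whence $\copier{X};(\op{s}\per\id{X})\leq\copier{X};(\discharger{X}\per\id{X})=\id{X}$ by the counit law. That $\Psi\Phi=\id$ uses (1) (to slide $f$ past $\copier{X}$), the counit law, and crucially (4) to rewrite $\op{f}$ as $f$; that $\Phi\Psi=\id$ is the snake/yanking identity for the Frobenius cup and cap, i.e.\ $\op{(\op{s})}=s$. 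Both maps are monotone, so this is a poset isomorphism. Item (8) then follows by writing the coreflexive $g$ as $\Psi(g')$ via (7), sliding along the dual of (1), applying (4), and using the monoid unit law to collapse $(f\per\codischarger{X});\cocopier{X}=f$, giving $f;g=f\sqcap g'$. Finally (9) is one line: if $f$ is transitive, symmetric and single valued, then $f=f;f=\op{f};f\leq\id{X}$, so $f$ is coreflexive. Thus everything reduces to the two hard equalities (1) and (4), with all remaining items obtained by formal manipulation or a single inequality.
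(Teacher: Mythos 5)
Your overall architecture is sound, and your reordering is a genuine (and arguably cleaner) variation on the paper's route: the paper derives single-valuedness and injectivity of a coreflexive from symmetry (item 4), which in turn rests on item (1), whereas you obtain (5) and (6) immediately from $\op{f};f \leq \op{f} \leq \id{X}$ and $f;\op{f} \leq \op{f} \leq \id{X}$, using only monotonicity of composition and of $\op{(\cdot)}$ via \Cref{lemma:cb:adjoints}. That shortcut is correct and lets you feed single-valuedness \emph{into} item (1) instead of extracting it afterwards. Your treatments of (2), (3), (4), (7), (8) and (9) then match the paper's essentially step for step.

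The gap is in item (1). Single-valuedness does upgrade lax naturality to the equality $f;\copier{X}=\copier{X};(f\per f)$, and with $f\leq\id{X}$ this yields $f;\copier{X}\leq\copier{X};(f\per\id{X})$ --- but only that inclusion. For the converse, $\copier{X};(f\per\id{X})\leq f;\copier{X}$, your listed ingredients do not suffice: $f\leq\id{X}$ gives $\copier{X};(f\per f)\leq\copier{X};(f\per\id{X})$, which bounds the wrong side, and the special law $\copier{X};\cocopier{X}=\id{X}$ plays no role in this direction. What is actually needed is the Frobenius-law consequence $\copier{X};(r\per\id{X})\leq r;\copier{Y};(\id{Y}\per\op{r})$, valid for arbitrary $r$ and imported by the paper as \cite[Lemma 4.3]{Bonchi2017c}, applied together with your earlier observation that $\op{f}$ is again coreflexive so that the trailing $\op{f}$ can be discarded. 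You should either cite that inequality or derive it from \eqref{ax:frob} and the unit $\id{}\leq\copier{};\cocopier{}$; as written, your recipe for (1) establishes only one of the two inclusions, and since your proofs of (2), (4), (7) and (8) all lean on (1), the omission propagates.
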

\begin{proof}
    \begin{enumerate}
        \item We prove the two inclusions separately: 
        \[ 
    \InputIfFileExists{coreflexive/Rcopy.tikz}{}{\input{./tikz/coreflexive/Rcopy.tikz}}
 \stackrel{\eqref{ax:copiernat}}{\leq} 
    \begin{tikzpicture}[scale=1.5]
	\begin{pgfonlayer}{nodelayer}
		\node [style=label] (110) at (-1, 0) {$X$};
		\node [style=none] (117) at (-0.5, 0) {};
		\node [style=label] (120) at (2.25, -0.375) {$X$};
		\node [style=black] (121) at (0, 0) {};
		\node [style=none] (122) at (0.5, -0.375) {};
		\node [style=none] (123) at (0.5, 0.375) {};
		\node [style=label] (124) at (2.25, 0.375) {$X$};
		\node [style=none] (125) at (1.75, -0.375) {};
		\node [style=none] (126) at (1.75, 0.375) {};
		\node [style=bbox, scale=0.9] (127) at (1, 0.425) {$f$};
		\node [style=bbox, scale=0.9] (128) at (1, -0.425) {$f$};
	\end{pgfonlayer}
	\begin{pgfonlayer}{edgelayer}
		\draw [bend right] (123.center) to (121);
		\draw [bend right] (121) to (122.center);
		\draw (123.center) to (126.center);
		\draw (125.center) to (122.center);
		\draw (121) to (117.center);
	\end{pgfonlayer}
\end{tikzpicture}
}
 \stackrel{(\text{coreflexive})}{\leq} 
    \InputIfFileExists{coreflexive/copyR.tikz}{}{\input{./tikz/coreflexive/copyR.tikz}}
 \]
        and
        \[ 
    \InputIfFileExists{coreflexive/copyR.tikz}{}{\input{./tikz/coreflexive/copyR.tikz}}
 \stackrel{\text{\cite[Lemma 4.3]{Bonchi2017c}}}{\leq} 
    \begin{tikzpicture}[scale=1.5]
	\begin{pgfonlayer}{nodelayer}
		\node [style=label] (110) at (-2, 0) {$X$};
		\node [style=none] (117) at (-1.5, 0) {};
		\node [style=label] (120) at (2.25, -0.375) {$X$};
		\node [style=black] (121) at (0, 0) {};
		\node [style=none] (122) at (0.5, -0.375) {};
		\node [style=none] (123) at (0.5, 0.375) {};
		\node [style=label] (124) at (2.25, 0.375) {$X$};
		\node [style=none] (125) at (1.75, -0.375) {};
		\node [style=none] (126) at (1.75, 0.375) {};
		\node [style=bboxOp, scale=0.9] (128) at (1, -0.425) {$f$};
		\node [style=bbox, scale=0.9] (139) at (-0.75, 0) {$f$};
	\end{pgfonlayer}
	\begin{pgfonlayer}{edgelayer}
		\draw [bend right] (123.center) to (121);
		\draw [bend right] (121) to (122.center);
		\draw (123.center) to (126.center);
		\draw (125.center) to (122.center);
		\draw (121) to (117.center);
	\end{pgfonlayer}
\end{tikzpicture}
}
 \stackrel{(\text{coreflexive})}{\leq} 
    \InputIfFileExists{coreflexive/Rcopy.tikz}{}{\input{./tikz/coreflexive/Rcopy.tikz}}
. \]
        \item $
    \begin{tikzpicture}[scale=1.5]
	\begin{pgfonlayer}{nodelayer}
		\node [style=label] (110) at (-1.75, 0) {$X$};
		\node [style=label] (120) at (1.75, 0) {$X$};
		\node [style=none] (122) at (-1.25, 0) {};
		\node [style=none] (125) at (1.25, 0) {};
		\node [style=bbox, scale=0.9] (139) at (-0.5, 0) {$f$};
		\node [style=bbox, scale=0.9] (140) at (0.5, 0) {$g$};
	\end{pgfonlayer}
	\begin{pgfonlayer}{edgelayer}
		\draw (125.center) to (122.center);
	\end{pgfonlayer}
\end{tikzpicture}
}
 \stackrel{\eqref{ax:specfrob}}{=} 
    \begin{tikzpicture}[scale=1.5]
	\begin{pgfonlayer}{nodelayer}
		\node [style=label] (110) at (-1.75, 0) {$X$};
		\node [style=label] (120) at (3.75, 0) {$X$};
		\node [style=none] (122) at (-1.25, 0) {};
		\node [style=none] (125) at (0.75, 0) {};
		\node [style=bbox, scale=0.9] (139) at (-0.5, 0) {$f$};
		\node [style=bbox, scale=0.9] (140) at (0.5, 0) {$g$};
		\node [style=none] (141) at (3.25, 0) {};
		\node [style=black] (142) at (1.25, 0) {};
		\node [style=black] (143) at (2.75, 0) {};
		\node [style=none] (144) at (2, 0.5) {};
		\node [style=none] (145) at (2, -0.5) {};
	\end{pgfonlayer}
	\begin{pgfonlayer}{edgelayer}
		\draw (125.center) to (122.center);
		\draw (125.center) to (142);
		\draw (141.center) to (143);
		\draw [bend left] (142) to (144.center);
		\draw [bend left] (144.center) to (143);
		\draw [bend left] (143) to (145.center);
		\draw [bend left] (145.center) to (142);
	\end{pgfonlayer}
\end{tikzpicture}
}
 \stackrel{(1)}{=} 
    \begin{tikzpicture}[scale=1.5]
	\begin{pgfonlayer}{nodelayer}
		\node [style=label] (110) at (-1.75, 0) {$X$};
		\node [style=label] (120) at (3, 0) {$X$};
		\node [style=none] (122) at (-1.25, 0) {};
		\node [style=none] (125) at (0, 0) {};
		\node [style=bbox, scale=0.9] (139) at (-0.5, 0) {$f$};
		\node [style=bbox, scale=0.9] (140) at (1.125, 0.5) {$g$};
		\node [style=none] (141) at (2.5, 0) {};
		\node [style=black] (142) at (0.25, 0) {};
		\node [style=black] (143) at (2, 0) {};
		\node [style=none] (144) at (1.25, 0.5) {};
		\node [style=none] (145) at (1.25, -0.5) {};
		\node [style=none] (146) at (1, 0.5) {};
		\node [style=none] (147) at (1, -0.5) {};
	\end{pgfonlayer}
	\begin{pgfonlayer}{edgelayer}
		\draw (125.center) to (122.center);
		\draw (125.center) to (142);
		\draw (141.center) to (143);
		\draw [bend left] (144.center) to (143);
		\draw [bend left] (143) to (145.center);
		\draw [bend left] (142) to (146.center);
		\draw [bend right] (142) to (147.center);
		\draw (147.center) to (145.center);
		\draw (144.center) to (146.center);
	\end{pgfonlayer}
\end{tikzpicture}
}
 \stackrel{(\sqcap-\text{comm.})}{=} 
    \begin{tikzpicture}[scale=1.5]
	\begin{pgfonlayer}{nodelayer}
		\node [style=label] (110) at (-1.75, 0) {$X$};
		\node [style=label] (120) at (3, 0) {$X$};
		\node [style=none] (122) at (-1.25, 0) {};
		\node [style=none] (125) at (0, 0) {};
		\node [style=bbox, scale=0.9] (139) at (-0.5, 0) {$f$};
		\node [style=bbox, scale=0.9] (140) at (1.125, -0.5) {$g$};
		\node [style=none] (141) at (2.5, 0) {};
		\node [style=black] (142) at (0.25, 0) {};
		\node [style=black] (143) at (2, 0) {};
		\node [style=none] (144) at (1.25, -0.5) {};
		\node [style=none] (145) at (1.25, 0.5) {};
		\node [style=none] (146) at (1, -0.5) {};
		\node [style=none] (147) at (1, 0.5) {};
	\end{pgfonlayer}
	\begin{pgfonlayer}{edgelayer}
		\draw (125.center) to (122.center);
		\draw (125.center) to (142);
		\draw (141.center) to (143);
		\draw [bend right] (144.center) to (143);
		\draw [bend right] (143) to (145.center);
		\draw [bend right] (142) to (146.center);
		\draw [bend left] (142) to (147.center);
		\draw (147.center) to (145.center);
		\draw (144.center) to (146.center);
	\end{pgfonlayer}
\end{tikzpicture}
}
 \stackrel{(1)}{=} 
    \begin{tikzpicture}[scale=1.5]
	\begin{pgfonlayer}{nodelayer}
		\node [style=label] (110) at (-2, 0) {$X$};
		\node [style=label] (120) at (1.75, 0) {$X$};
		\node [style=none] (122) at (-1.5, 0) {};
		\node [style=none] (125) at (-1, 0) {};
		\node [style=bbox, scale=0.9] (139) at (-0.125, 0.5) {$f$};
		\node [style=bbox, scale=0.9] (140) at (-0.125, -0.5) {$g$};
		\node [style=none] (141) at (1.25, 0) {};
		\node [style=black] (142) at (-1, 0) {};
		\node [style=black] (143) at (0.75, 0) {};
		\node [style=none] (144) at (0, -0.5) {};
		\node [style=none] (145) at (0, 0.5) {};
		\node [style=none] (146) at (-0.25, -0.5) {};
		\node [style=none] (147) at (-0.25, 0.5) {};
	\end{pgfonlayer}
	\begin{pgfonlayer}{edgelayer}
		\draw (125.center) to (122.center);
		\draw (125.center) to (142);
		\draw (141.center) to (143);
		\draw [bend right] (144.center) to (143);
		\draw [bend right] (143) to (145.center);
		\draw [bend right] (142) to (146.center);
		\draw [bend left] (142) to (147.center);
		\draw (147.center) to (145.center);
		\draw (144.center) to (146.center);
	\end{pgfonlayer}
\end{tikzpicture}
}
$.
        \item $
    \begin{tikzpicture}[scale=1.5]
	\begin{pgfonlayer}{nodelayer}
		\node [style=label] (110) at (-1.75, 0) {$X$};
		\node [style=label] (120) at (1.75, 0) {$X$};
		\node [style=none] (122) at (-1.25, 0) {};
		\node [style=none] (125) at (1.25, 0) {};
		\node [style=bbox, scale=0.9] (139) at (-0.5, 0) {$f$};
		\node [style=bbox, scale=0.9] (140) at (0.5, 0) {$f$};
	\end{pgfonlayer}
	\begin{pgfonlayer}{edgelayer}
		\draw (125.center) to (122.center);
	\end{pgfonlayer}
\end{tikzpicture}
}
 \stackrel{(2)}{=} 
    \begin{tikzpicture}[scale=1.5]
	\begin{pgfonlayer}{nodelayer}
		\node [style=label] (110) at (-2, 0) {$X$};
		\node [style=label] (120) at (1.75, 0) {$X$};
		\node [style=none] (122) at (-1.5, 0) {};
		\node [style=none] (125) at (-1, 0) {};
		\node [style=bbox, scale=0.9] (139) at (-0.125, 0.5) {$f$};
		\node [style=bbox, scale=0.9] (140) at (-0.125, -0.5) {$f$};
		\node [style=none] (141) at (1.25, 0) {};
		\node [style=black] (142) at (-1, 0) {};
		\node [style=black] (143) at (0.75, 0) {};
		\node [style=none] (144) at (0, -0.5) {};
		\node [style=none] (145) at (0, 0.5) {};
		\node [style=none] (146) at (-0.25, -0.5) {};
		\node [style=none] (147) at (-0.25, 0.5) {};
	\end{pgfonlayer}
	\begin{pgfonlayer}{edgelayer}
		\draw (125.center) to (122.center);
		\draw (125.center) to (142);
		\draw (141.center) to (143);
		\draw [bend right] (144.center) to (143);
		\draw [bend right] (143) to (145.center);
		\draw [bend right] (142) to (146.center);
		\draw [bend left] (142) to (147.center);
		\draw (147.center) to (145.center);
		\draw (144.center) to (146.center);
	\end{pgfonlayer}
\end{tikzpicture}
}
 \stackrel{(\sqcap-\text{idemp.})}{=} 
    \begin{tikzpicture}[scale=1.5]
	\begin{pgfonlayer}{nodelayer}
		\node [style=label] (110) at (-1.25, 0) {$X$};
		\node [style=label] (120) at (1.25, 0) {$X$};
		\node [style=none] (122) at (-0.75, 0) {};
		\node [style=none] (125) at (0.75, 0) {};
		\node [style=bbox, scale=0.9] (139) at (0, 0) {$f$};
	\end{pgfonlayer}
	\begin{pgfonlayer}{edgelayer}
		\draw (125.center) to (122.center);
	\end{pgfonlayer}
\end{tikzpicture}
}
$.
        \item $
    \begin{tikzpicture}[scale=1.5]
	\begin{pgfonlayer}{nodelayer}
		\node [style=label] (110) at (-1.25, 0) {$X$};
		\node [style=label] (120) at (1.25, 0) {$X$};
		\node [style=none] (122) at (-0.75, 0) {};
		\node [style=none] (125) at (0.75, 0) {};
		\node [style=bboxOp, scale=0.9] (139) at (0, 0) {$f$};
	\end{pgfonlayer}
	\begin{pgfonlayer}{edgelayer}
		\draw (125.center) to (122.center);
	\end{pgfonlayer}
\end{tikzpicture}
}
 \stackrel{\eqref{def:dagger}}{=} 
    \begin{tikzpicture}
	\begin{pgfonlayer}{nodelayer}
		\node [style=bbox, scale=0.9] (107) at (0, 0) {$f$};
		\node [style=label] (120) at (2.5, 1) {$X$};
		\node [style=label] (124) at (-2.5, -1) {$X$};
		\node [style=none] (126) at (0.5, 0) {};
		\node [style=none] (127) at (-0.5, 0) {};
		\node [style=none] (128) at (0.5, -1) {};
		\node [style=none] (129) at (-0.5, 1) {};
		\node [style=black] (130) at (1.25, -0.5) {};
		\node [style=black] (131) at (2, -0.5) {};
		\node [style=black] (132) at (-1.25, 0.5) {};
		\node [style=black] (133) at (-2, 0.5) {};
		\node [style=none] (134) at (2, 1) {};
		\node [style=none] (135) at (-2, -1) {};
	\end{pgfonlayer}
	\begin{pgfonlayer}{edgelayer}
		\draw (127.center) to (107);
		\draw (107) to (126.center);
		\draw (133) to (132);
		\draw [bend left] (132) to (129.center);
		\draw [bend right] (132) to (127.center);
		\draw [bend right] (128.center) to (130);
		\draw [bend left] (126.center) to (130);
		\draw (130) to (131);
		\draw (129.center) to (134.center);
		\draw (128.center) to (135.center);
	\end{pgfonlayer}
\end{tikzpicture}
}
 \stackrel{(1)}{=} 
    \begin{tikzpicture}
	\begin{pgfonlayer}{nodelayer}
		\node [style=label] (120) at (2.5, 1) {$X$};
		\node [style=label] (124) at (-2.75, -1) {$X$};
		\node [style=none] (126) at (0.5, 0) {};
		\node [style=none] (127) at (0.5, 0) {};
		\node [style=none] (128) at (0.5, -1) {};
		\node [style=none] (129) at (0.5, 1) {};
		\node [style=black] (130) at (1.25, -0.5) {};
		\node [style=black] (131) at (2, -0.5) {};
		\node [style=black] (132) at (-0.25, 0.5) {};
		\node [style=black] (133) at (-2.25, 0.5) {};
		\node [style=none] (134) at (2, 1) {};
		\node [style=none] (135) at (-2.25, -1) {};
		\node [style=bbox, scale=0.9] (136) at (-1.25, 0.5) {$f$};
	\end{pgfonlayer}
	\begin{pgfonlayer}{edgelayer}
		\draw (133) to (132);
		\draw [bend left] (132) to (129.center);
		\draw [bend right] (132) to (127.center);
		\draw [bend right] (128.center) to (130);
		\draw [bend left] (126.center) to (130);
		\draw (130) to (131);
		\draw (129.center) to (134.center);
		\draw (128.center) to (135.center);
		\draw (127.center) to (126.center);
	\end{pgfonlayer}
\end{tikzpicture}
}
 \stackrel{(1)}{=} 
    \begin{tikzpicture}
	\begin{pgfonlayer}{nodelayer}
		\node [style=label] (120) at (1.75, 1) {$X$};
		\node [style=label] (124) at (-2.75, -1) {$X$};
		\node [style=none] (126) at (-0.25, 0) {};
		\node [style=none] (127) at (-0.75, 0) {};
		\node [style=none] (128) at (-0.25, -1) {};
		\node [style=none] (129) at (-0.75, 1) {};
		\node [style=black] (130) at (0.5, -0.5) {};
		\node [style=black] (131) at (1.25, -0.5) {};
		\node [style=black] (132) at (-1.5, 0.5) {};
		\node [style=black] (133) at (-2.25, 0.5) {};
		\node [style=none] (134) at (1.25, 1) {};
		\node [style=none] (135) at (-2.25, -1) {};
		\node [style=bbox, scale=0.9] (136) at (0, 1) {$f$};
	\end{pgfonlayer}
	\begin{pgfonlayer}{edgelayer}
		\draw (133) to (132);
		\draw [bend left] (132) to (129.center);
		\draw [bend right] (132) to (127.center);
		\draw [bend right] (128.center) to (130);
		\draw [bend left] (126.center) to (130);
		\draw (130) to (131);
		\draw (129.center) to (134.center);
		\draw (128.center) to (135.center);
		\draw (127.center) to (126.center);
	\end{pgfonlayer}
\end{tikzpicture}
}
 \stackrel{\eqref{ax:frob}}{=} 
    \begin{tikzpicture}
	\begin{pgfonlayer}{nodelayer}
		\node [style=label] (137) at (2.5, 0.5) {$X$};
		\node [style=label] (138) at (-3, -0.5) {$X$};
		\node [style=none] (139) at (-1, 0) {};
		\node [style=none] (140) at (-1, 0) {};
		\node [style=none] (141) at (-1, -1) {};
		\node [style=none] (142) at (-1, 1) {};
		\node [style=black] (143) at (-1.75, -0.5) {};
		\node [style=black] (144) at (-0.25, -1) {};
		\node [style=black] (145) at (-0.25, 0.5) {};
		\node [style=black] (146) at (-1.75, 1) {};
		\node [style=none] (147) at (2, 0.5) {};
		\node [style=none] (148) at (-2.5, -0.5) {};
		\node [style=bbox, scale=0.9] (149) at (1, 0.5) {$f$};
	\end{pgfonlayer}
	\begin{pgfonlayer}{edgelayer}
		\draw [bend right] (145) to (142.center);
		\draw [bend left] (145) to (140.center);
		\draw [bend left] (141.center) to (143);
		\draw [bend right] (139.center) to (143);
		\draw (140.center) to (139.center);
		\draw (146) to (142.center);
		\draw (144) to (141.center);
		\draw (148.center) to (143);
		\draw (145) to (149);
		\draw (149) to (147.center);
	\end{pgfonlayer}
\end{tikzpicture}
}
 \stackrel{\eqref{ax:copierun}, \eqref{ax:cocopierun}}{=} 
    }
$.
        \item $
    \begin{tikzpicture}[scale=1.5]
	\begin{pgfonlayer}{nodelayer}
		\node [style=label] (110) at (-1.75, 0) {$X$};
		\node [style=label] (120) at (1.75, 0) {$X$};
		\node [style=none] (122) at (-1.25, 0) {};
		\node [style=none] (125) at (1.25, 0) {};
		\node [style=bboxOp, scale=0.9] (139) at (-0.5, 0) {$f$};
		\node [style=bbox, scale=0.9] (140) at (0.5, 0) {$f$};
	\end{pgfonlayer}
	\begin{pgfonlayer}{edgelayer}
		\draw (125.center) to (122.center);
	\end{pgfonlayer}
\end{tikzpicture}
}
 \stackrel{(4)}{=} 
    \begin{tikzpicture}[scale=1.5]
	\begin{pgfonlayer}{nodelayer}
		\node [style=label] (110) at (-1.75, 0) {$X$};
		\node [style=label] (120) at (1.75, 0) {$X$};
		\node [style=none] (122) at (-1.25, 0) {};
		\node [style=none] (125) at (1.25, 0) {};
		\node [style=bbox, scale=0.9] (139) at (-0.5, 0) {$f$};
		\node [style=bbox, scale=0.9] (140) at (0.5, 0) {$f$};
	\end{pgfonlayer}
	\begin{pgfonlayer}{edgelayer}
		\draw (125.center) to (122.center);
	\end{pgfonlayer}
\end{tikzpicture}
}
 \!\!\stackrel{(\text{coreflexive})}{\leq}\!\! 
    \begin{tikzpicture}[scale=1.5]
	\begin{pgfonlayer}{nodelayer}
		\node [style=label] (110) at (-1.5, 0) {$X$};
		\node [style=label] (120) at (1.5, 0) {$X$};
		\node [style=none] (122) at (-1, 0) {};
		\node [style=none] (125) at (1, 0) {};
	\end{pgfonlayer}
	\begin{pgfonlayer}{edgelayer}
		\draw (125.center) to (122.center);
	\end{pgfonlayer}
\end{tikzpicture}
}
$. Thus $f$ is single valued by means of \Cref{lemma:cb:adjoints}.
        \item $
    \begin{tikzpicture}[scale=1.5]
	\begin{pgfonlayer}{nodelayer}
		\node [style=label] (110) at (-1.75, 0) {$X$};
		\node [style=label] (120) at (1.75, 0) {$X$};
		\node [style=none] (122) at (-1.25, 0) {};
		\node [style=none] (125) at (1.25, 0) {};
		\node [style=bbox, scale=0.9] (139) at (-0.5, 0) {$f$};
		\node [style=bboxOp, scale=0.9] (140) at (0.5, 0) {$f$};
	\end{pgfonlayer}
	\begin{pgfonlayer}{edgelayer}
		\draw (125.center) to (122.center);
	\end{pgfonlayer}
\end{tikzpicture}
}
 \stackrel{(4)}{=} 
    }
 \!\!\stackrel{(\text{coreflexive})}{\leq} \!\! 
    }
$. Thus $f$ is injective by means of \Cref{lemma:cb:adjoints}.
        \item Consider the functions $i \colon \mathsf{Corefl}(\Cat{C})[X, X] \to \Cat{C}[I, X]$ and $c \colon \Cat{C}[I, X] \to \mathsf{Corefl}(\Cat{C})[X, X]$ defined as follows: %
            \[ i(
    \InputIfFileExists{coreflexive/7/f.tikz}{}{\input{./tikz/coreflexive/7/f.tikz}}
) \defeq 
    \begin{tikzpicture}[scale=1.5]
	\begin{pgfonlayer}{nodelayer}
		\node [style=label] (120) at (1.25, 0) {$X$};
		\node [style=black] (122) at (-1, 0) {};
		\node [style=none] (125) at (0.75, 0) {};
		\node [style=bbox, scale=0.9] (139) at (0, 0) {$f$};
	\end{pgfonlayer}
	\begin{pgfonlayer}{edgelayer}
		\draw (125.center) to (122);
	\end{pgfonlayer}
\end{tikzpicture}
}
 \qquad\qquad c(\;
    \begin{tikzpicture}[scale=1.5]
	\begin{pgfonlayer}{nodelayer}
		\node [style=label] (120) at (1.25, 0) {$X$};
		\node [style=none] (125) at (0.75, 0) {};
		\node [style=bbox, scale=0.9] (139) at (0, 0) {$g$};
	\end{pgfonlayer}
	\begin{pgfonlayer}{edgelayer}
		\draw (139) to (125.center);
	\end{pgfonlayer}
\end{tikzpicture}
}
) \defeq 
    \InputIfFileExists{coreflexive/7/Cg.tikz}{}{\input{./tikz/coreflexive/7/Cg.tikz}}
  \]
        and observe that $c(\, 
    }
 \!)$ is a coreflexive, i.e.
        \[ c( \, 
    }
 \!) = 
    \InputIfFileExists{coreflexive/7/Cg.tikz}{}{\input{./tikz/coreflexive/7/Cg.tikz}}
 \stackrel{\eqref{ax:dischargernat}}{\leq} 
    \InputIfFileExists{coreflexive/7/step1.tikz}{}{\input{./tikz/coreflexive/7/step1.tikz}}
 \stackrel{\eqref{ax:cocopierun}}{=} 
    }
.  \]
        To conclude, observe that $i$ and $c$ are inverse to each other:
        \[ i(c(\, 
    }
 \!)) = i(
    \InputIfFileExists{coreflexive/7/Cg.tikz}{}{\input{./tikz/coreflexive/7/Cg.tikz}}
) = 
    \InputIfFileExists{coreflexive/7/step2.tikz}{}{\input{./tikz/coreflexive/7/step2.tikz}}
 \stackrel{\eqref{ax:cocopierun}}{=} 
    }
 \]
        and
        \[ c(i(\! 
    \InputIfFileExists{coreflexive/7/f.tikz}{}{\input{./tikz/coreflexive/7/f.tikz}}
 \!)) = c(\, 
    }
 \!) = 
    \InputIfFileExists{coreflexive/7/step3.tikz}{}{\input{./tikz/coreflexive/7/step3.tikz}}
 \stackrel{(1)}{=} 
    \InputIfFileExists{coreflexive/7/step4.tikz}{}{\input{./tikz/coreflexive/7/step4.tikz}}
 \stackrel{\eqref{ax:cocopierun}}{=} 
    \InputIfFileExists{coreflexive/7/f.tikz}{}{\input{./tikz/coreflexive/7/f.tikz}}
. \]
        \item $\begin{tikzpicture}[scale=1.5]
            \begin{pgfonlayer}{nodelayer}
                \node [style=label] (110) at (2.25, 0) {$X$};
                \node [style=none] (117) at (1.75, 0) {};
                \node [style=black] (121) at (1.25, 0) {};
                \node [style=none] (122) at (0.75, -0.375) {};
                \node [style=none] (123) at (0.75, 0.375) {};
                \node [style=none] (125) at (0.25, -0.375) {};
                \node [style=bbox, scale=0.9] (127) at (0.25, 0.375) {$f$};
                \node [style=bbox, scale=0.9] (128) at (0.25, -0.375) {$g'$};
            \end{pgfonlayer}
            \begin{pgfonlayer}{edgelayer}
                \draw [bend left] (123.center) to (121);
                \draw [bend left] (121) to (122.center);
                \draw (125.center) to (122.center);
                \draw (121) to (117.center);
                \draw (123.center) to (127);
            \end{pgfonlayer}
        \end{tikzpicture}
        \stackrel{(7)}{=}
        \begin{tikzpicture}[scale=1.5]
            \begin{pgfonlayer}{nodelayer}
                \node [style=label] (110) at (2.25, 0) {$X$};
                \node [style=none] (117) at (1.75, 0) {};
                \node [style=black] (121) at (1.25, 0) {};
                \node [style=none] (122) at (0.75, -0.375) {};
                \node [style=none] (123) at (0.75, 0.375) {};
                \node [style=none] (125) at (-0.5, -0.375) {};
                \node [style=bbox, scale=0.9] (127) at (0.25, 0.375) {$f$};
                \node [style=bbox, scale=0.9] (128) at (0.25, -0.375) {$g$};
                \node [style=black] (129) at (-0.5, -0.375) {};
            \end{pgfonlayer}
            \begin{pgfonlayer}{edgelayer}
                \draw [bend left] (123.center) to (121);
                \draw [bend left] (121) to (122.center);
                \draw (125.center) to (122.center);
                \draw (121) to (117.center);
                \draw (123.center) to (127);
            \end{pgfonlayer}
        \end{tikzpicture}        
        \stackrel{(1)}{=}
        \begin{tikzpicture}[scale=1.5]
            \begin{pgfonlayer}{nodelayer}
                \node [style=label] (110) at (3.25, 0) {$X$};
                \node [style=none] (117) at (2.75, 0) {};
                \node [style=black] (121) at (1.25, 0) {};
                \node [style=none] (122) at (0.75, -0.375) {};
                \node [style=none] (123) at (0.75, 0.375) {};
                \node [style=none] (125) at (-0.5, -0.375) {};
                \node [style=bbox, scale=0.9] (127) at (0.25, 0.375) {$f$};
                \node [style=bbox, scale=0.9] (128) at (2, 0) {$g$};
                \node [style=black] (129) at (-0.5, -0.375) {};
            \end{pgfonlayer}
            \begin{pgfonlayer}{edgelayer}
                \draw [bend left] (123.center) to (121);
                \draw [bend left] (121) to (122.center);
                \draw (125.center) to (122.center);
                \draw (121) to (117.center);
                \draw (123.center) to (127);
            \end{pgfonlayer}
        \end{tikzpicture}
        \stackrel{\eqref{ax:cocopierun}}{=}
        \begin{tikzpicture}[scale=1.5]
            \begin{pgfonlayer}{nodelayer}
                \node [style=label] (120) at (1.75, 0) {$X$};
                \node [style=none] (122) at (-0.5, 0) {};
                \node [style=none] (125) at (1.25, 0) {};
                \node [style=bbox, scale=0.9] (139) at (-0.5, 0) {$f$};
                \node [style=bbox, scale=0.9] (140) at (0.5, 0) {$g$};
            \end{pgfonlayer}
            \begin{pgfonlayer}{edgelayer}
                \draw (125.center) to (122.center);
            \end{pgfonlayer}
        \end{tikzpicture}                
        $. %
        \item $
    }
 \stackrel{(\text{transitive})}{=} 
    }
 \stackrel{(\text{symmetric})}{=} 
    }
 \stackrel{(\text{single valued})}{\leq} 
    }
$.
    \end{enumerate}
\end{proof}

\begin{remark}\label{rem:notation-coreflexives}
    From now on we will use $\stringCorefl{f}{X}$ to depict coreflexive morphisms. This graphical representation is, in some sense, orientation agnostic, and it reflects the fact that coreflexives are symmetric, as stated by \Cref{lemma:coreflexive properties}.4.
\end{remark}

\section{Kleene-Cartesian Tape Diagrams}\label{sec:cb}

\begin{figure}[ht!]
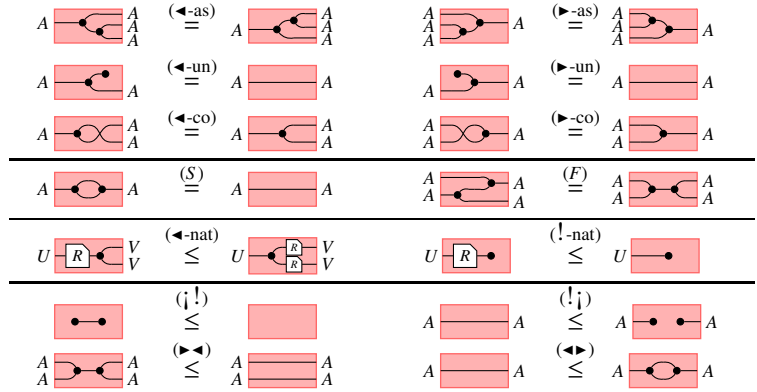

    \mylabel{ax:symmtinv}{$\sigma^{\per}$-inv}
    \mylabel{ax:symmtnat}{$\sigma^{\per}$-nat}
    \mylabel{ax:copieras}{$\copier{}$-as}
    \mylabel{ax:copierun}{$\copier{}$-un}
    \mylabel{ax:copierco}{$\copier{}$-co}
    \mylabel{ax:cocopieras}{$\cocopier{}$-as}
    \mylabel{ax:cocopierun}{$\cocopier{}$-un}
    \mylabel{ax:cocopierco}{$\cocopier{}$-co}
    \mylabel{ax:specfrob}{S}
    \mylabel{ax:frob}{F}
    \mylabel{ax:copiernat}{$\copier{}$-nat}
    \mylabel{ax:dischargernat}{$\discharger{}$-nat}
    \mylabel{ax:dischargeradj1}{$\codischarger{}\discharger{}$}
    \mylabel{ax:dischargeradj2}{$\discharger{}\codischarger{}$}
    \mylabel{ax:copieradj1}{$\cocopier{}\copier{}$}
    \mylabel{ax:copieradj2}{$\copier{}\cocopier{}$}
	\centering
	\begin{tabular}{c@{}c@{}c @{\qquad} c@{}c@{}c}
        
    \InputIfFileExists{cb/monoid_assoc_left.tikz}{}{\input{./tikz/cb/monoid_assoc_left.tikz}}
 & $\axeq{\copier{}\text{-as}}$ & 
    \InputIfFileExists{cb/monoid_assoc_right.tikz}{}{\input{./tikz/cb/monoid_assoc_right.tikz}}
 & 
    \InputIfFileExists{cb/comonoid_assoc_left.tikz}{}{\input{./tikz/cb/comonoid_assoc_left.tikz}}
 & $\axeq{\cocopier{}\text{-as}}$ & 
    \InputIfFileExists{cb/comonoid_assoc_right.tikz}{}{\input{./tikz/cb/comonoid_assoc_right.tikz}}
 \\
        
    \InputIfFileExists{cb/monoid_unit_left.tikz}{}{\input{./tikz/cb/monoid_unit_left.tikz}}
 & $\axeq{\copier{}\text{-un}}$ & 
    \InputIfFileExists{cb/monoid_unit_right.tikz}{}{\input{./tikz/cb/monoid_unit_right.tikz}}
 & 
    \InputIfFileExists{cb/comonoid_unit_left.tikz}{}{\input{./tikz/cb/comonoid_unit_left.tikz}}
 & $\axeq{\cocopier{}\text{-un}}$ & 
    \InputIfFileExists{cb/comonoid_unit_right.tikz}{}{\input{./tikz/cb/comonoid_unit_right.tikz}}
 \\
        
    \InputIfFileExists{cb/monoid_comm_left.tikz}{}{\input{./tikz/cb/monoid_comm_left.tikz}}
 & $\axeq{\copier{}\text{-co}}$ & 
    \InputIfFileExists{cb/monoid_comm_right.tikz}{}{\input{./tikz/cb/monoid_comm_right.tikz}}
 & 
    \InputIfFileExists{cb/comonoid_comm_left.tikz}{}{\input{./tikz/cb/comonoid_comm_left.tikz}}
 & $\axeq{\cocopier{}\text{-co}}$ & 
    \InputIfFileExists{cb/comonoid_comm_right.tikz}{}{\input{./tikz/cb/comonoid_comm_right.tikz}}
 \\
        \hline
        
    \InputIfFileExists{cb/spec_frob_left.tikz}{}{\input{./tikz/cb/spec_frob_left.tikz}}
 & $\axeq{S}$ & 
    \InputIfFileExists{cb/monoid_unit_right.tikz}{}{\input{./tikz/cb/monoid_unit_right.tikz}}
 & 
    \InputIfFileExists{cb/frob_left.tikz}{}{\input{./tikz/cb/frob_left.tikz}}
 & $\axeq{F}$ & 
    \InputIfFileExists{cb/frob_center.tikz}{}{\input{./tikz/cb/frob_center.tikz}}
 \\
        \hline
        
    \InputIfFileExists{cb/copier_nat_left.tikz}{}{\input{./tikz/cb/copier_nat_left.tikz}}
 & $\axsubeq{\copier{}\text{-nat}}$ & 
    \InputIfFileExists{cb/copier_nat_right.tikz}{}{\input{./tikz/cb/copier_nat_right.tikz}}
 & 
    \InputIfFileExists{cb/discharger_nat_left.tikz}{}{\input{./tikz/cb/discharger_nat_left.tikz}}
 & $\axsubeq{\discharger{}\text{-nat}}$ & 
    \InputIfFileExists{cb/discharger_nat_right.tikz}{}{\input{./tikz/cb/discharger_nat_right.tikz}}
 \\
        \hline
        
    \InputIfFileExists{cb/adjoint1_bangs_left.tikz}{}{\input{./tikz/cb/adjoint1_bangs_left.tikz}}
 & $\axsubeq{\codischarger{}\discharger{}}$ & 
    \InputIfFileExists{cb/empty.tikz}{}{\input{./tikz/cb/empty.tikz}}
 & 
    \InputIfFileExists{cb/monoid_unit_right.tikz}{}{\input{./tikz/cb/monoid_unit_right.tikz}}
 & $\axsubeq{\discharger{}\codischarger{}}$ & 
    \InputIfFileExists{cb/adjoint2_bangs_right.tikz}{}{\input{./tikz/cb/adjoint2_bangs_right.tikz}}
\\
        
    \InputIfFileExists{cb/frob_center.tikz}{}{\input{./tikz/cb/frob_center.tikz}}
 & $\axsubeq{\cocopier{}\copier{}}$ & 
    \InputIfFileExists{cb/adjoint1_diags_right.tikz}{}{\input{./tikz/cb/adjoint1_diags_right.tikz}}
 & 
    \InputIfFileExists{cb/monoid_unit_right.tikz}{}{\input{./tikz/cb/monoid_unit_right.tikz}}
 & $\axsubeq{\copier{}\cocopier{}}$ & 
    \InputIfFileExists{cb/spec_frob_left.tikz}{}{\input{./tikz/cb/spec_frob_left.tikz}}

	\end{tabular}
    \caption{Axioms of cartesian bicategories}
    \label{fig:cb axioms}
\end{figure}

In Section \ref{sec:rigcategories}  we recalled from \cite{bonchi2023deconstructing} tape diagrams for rig categories with finite biproducts.
In Section \ref{sec:tapes}, we extended tape diagrams with uniform trace and then, in Section \ref{sec:kleene-tapes}, we imposed to such diagrams the laws of Kleene bicategories.
In this section we illustrate our last step: we illustrate tape diagrams for rig categories where $(\Cat{C}, \piu, \zero)$ is a Kleene bicategory and $(\Cat{C}, \per, \uno)$ is a cartesian bicategory.
More precisely, we are interested in the following structures.

\begin{definition}\label{def:kcbrig}
A \emph{Kleene-Cartesian rig category} (shortly, kc rig) is a poset enriched rig category $\Cat{C}$ such that 
\begin{enumerate}
\item $(\Cat{C}, \piu, \zero)$ is a Kleene bicategory;
\item $(\Cat{C}, \per, \uno)$ is a cartesian bicategory;
\item the (co)monoids of both  $(\Cat{C}, \piu, \zero)$ and $(\Cat{C}, \per, \uno)$ satisfy the following coherence conditions. %
\end{enumerate}
    \begin{equation}\label{eq:fbcbcoherence}
        \begin{array}{rcl C{0.5cm} rcl}
            \copier{X \piu Y} &=& (\Tcopier{X} \piu  \copier{Y}) ; (\id{XX} \piu \cobang{XY} \piu \cobang{YX} \piu \id{YY}) ; (\Idl{X}{X}{Y} \piu \Idl{Y}{X}{Y}) && \discharger{X \piu Y} &=& (\Tdischarger{X} \piu \discharger{Y}) ; \codiag{\uno} \\
            \cocopier{X \piu Y} &=& (\Tcocopier{X} \piu  \cocopier{Y}) ; (\id{XX} \piu \bang{XY} \piu \bang{YX} \piu \id{YY}) ; (\dl{X}{X}{Y} \piu \dl{Y}{X}{Y}) && \codischarger{X \piu Y} &=& \diag{\uno} ; (\Tcodischarger{X} \piu \codischarger{Y})
        \end{array}
\end{equation}

A \emph{morphism of Kleene-Cartesian rig-categories} is a poset enriched rig functor that is a morphism of both Kleene and cartesian bicategories.
\end{definition}
We have already seen in Section~\ref{sec:2monREL} that $(\Rel,\piu, \zero)$ is an Kleene bicategory and $(\Rel,\per, \uno)$ is a cartesian bicategory. To conclude that $\Rel$ is an kc rig category is enough to check that coherence conditions: this is trivial by using the definitions of the two (co)monoids of $\Rel$ in~\eqref{eq:comonoidsREL}.

\begin{table}
	\[
		\begin{array}{@{}c@{}}
			\toprule
			\begin{array}{@{}c|c@{}}
				\begin{array}{@{}cc@{}}
					\multicolumn{2}{c}{
						\text{Interaction of $\op{(\cdot)}$ with $(\piu, \diag{}, \bang{}, \codiag{}, \cobang{})$}
					}
					\\
					\midrule
					\op{(f \piu g)} = \op{f} \piu \op{g} & \op{(\sigma^{\piu}_{X,Y})} = \sigma^{\piu}_{Y,X}
					\\
					\op{(\diag{X})} = \codiag{X} & \op{(\bang{X})} = \cobang{X}
					\\
					\op{(\codiag{X})} = \diag{X} & \op{(\cobang{X})} = \bang{X}
				\end{array}
				&
				\begin{array}{@{}cc@{}}
					\multicolumn{2}{@{}c@{}}{
						\text{Interaction of $\op{(\cdot)}$ and $\kstar{(\cdot)}$ with $(\sqcap, \top)$ and $(+, \bot)$}
					}
					\\
					\midrule
					\begin{array}{@{}cc@{}}
						\op{(f \sqcap g)} = \op{f} \sqcap \op{g} & \op{\top} = \top
						\\
						\op{(f + g)} = \op{f} + \op{g} & \op{\bot} = \bot
					\end{array}
					&
					\begin{array}{@{}cc@{}}
						\kstar{(f \sqcap g)} \leq \kstar{f} \sqcap \kstar{g} & \kstar{\top} = \top
						\\
						 \kstar{f} + \kstar{g} \leq \kstar{(f + g)} & \kstar{\bot} = \id{} 
					\end{array}
					\\
					\multicolumn{2}{@{}c@{}}{
						\kstar{(\op{f})} = \op{(\kstar{f})}
					}
				\end{array}
			\end{array}
			\\
			\midrule
			\text{Interaction of $(\sqcap, \top)$ and $(+, \bot)$}
			\\
			\midrule
			\begin{array}{@{}c @{\qquad} c@{}}
				f \sqcap (g + h) = (\, f \sqcap g \, ) + (\, f \sqcap h  \,)
				&
				f \sqcap \bot = \bot
				\\
				f + (g \sqcap h) = (\, f + g \, ) \sqcap (\, f + h  \,)
				&
				f + \top = \top
			\end{array}
			\\
			\bottomrule
		\end{array}
	\]
	\caption{Derived laws in kc rig categories.}
	\label{table:kc rig derived laws}
\end{table}

\begin{proposition}\label{prop:kc rig laws}
	The laws in \Cref{table:kc rig derived laws} hold in any kc rig category. Moreover, the distributivity of $\op{(\cdot)}$ over $+$, together with the commutativity of $\op{(\cdot)}$ with $\kstar{(\cdot)}$ yield that a kc rig category is also a typed Kleene algebra with converse~\cite{brunet2014kleene}; while the laws at the bottom state that the homsets of a kc rig category are distributive lattices.
\end{proposition}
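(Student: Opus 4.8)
The plan is to split Table~\ref{table:kc rig derived laws} into three blocks and reduce each to results already available, the only genuinely new work being the reconciliation of the two self-dualities carried by a kc rig category.

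\emph{First block (interaction of $\op{(\cdot)}$ with the $\piu$-structure).} The core step is to show that $\op{(\cdot)}$ is a morphism of the biproduct monoidal structure, i.e.\ $\op{(f\piu g)}=\op f\piu\op g$ and $\op{(\symmp{X}{Y})}=\symmp{Y}{X}$. Since $\op{(\cdot)}$ is defined (equation~\eqref{def:dagger}) by bending wires with the $\per$-cups and caps of the cartesian Frobenius structure, the equality $\op{(f\piu g)}=\op f\piu\op g$ amounts to showing that the cup and cap on a sum $X\piu Y$ decompose into those on $X$ and on $Y$; this is exactly what the coherence conditions~\eqref{eq:fbcbcoherence} govern, so a diagrammatic computation pushing $f\piu g$ through the decomposed cups and caps yields the claim, and $\symmp{}{}$ is analogous. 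Given this, the flipping of the biproduct (co)monoids follows by an adjointness argument: I would first check, again using~\eqref{eq:fbcbcoherence}, that $\codiag{X}$ and $\cobang{X}$ are maps in the cartesian bicategory, so that by \Cref{lemma:cb:adjoints} each is left adjoint to its own dagger, $\codiag{X}\dashv\op{(\codiag{X})}$ and $\cobang{X}\dashv\op{(\cobang{X})}$. On the other hand \Cref{def:fbidempotent} states $\codiag{X}\dashv\diag{X}$ and $\cobang{X}\dashv\bang{X}$; by uniqueness of right adjoints in a poset-enriched category we conclude $\op{(\codiag{X})}=\diag{X}$ and $\op{(\cobang{X})}=\bang{X}$, and the remaining two equalities come from $\op{(\op f)}=f$.

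\emph{Second block (interaction with $\sqcap,\top,+,\bot$).} The four equalities that only involve $\op{(\cdot)}$ are immediate: unfolding the definitions~\eqref{eq:covolution} and~\eqref{eq:cb:covolution} and applying the laws of Table~\ref{table:re:daggerproperties} (which give $\op{(a;b)}=\op b;\op a$, $\op{(a\per b)}=\op a\per\op b$ and the flips of the $\per$-(co)monoids), together with the first block for the $\piu$-(co)monoids, turns $\op{(f\sqcap g)}$ into $\op f\sqcap\op g$ and $\op{(f+g)}$ into $\op f+\op g$, and likewise for $\top,\bot$. For the star laws I would first derive monotonicity of $\kstar{(\cdot)}$ from~\eqref{eq:Kllenelaw}; then $\kstar{(f\sqcap g)}\leq\kstar f\sqcap\kstar g$ and $\kstar f+\kstar g\leq\kstar{(f+g)}$ follow from $f\sqcap g\leq f,g$ and $f,g\leq f+g$. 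The equalities $\kstar\top=\top$ and $\kstar\bot=\id{}$ are short fixpoint computations (for instance $\kstar\top\geq\top;\kstar\top\geq\top;\id{}=\top$ by monotonicity and $\id{}\leq\kstar\top$, while $\kstar\top\leq\top$ by maximality; and $0;\id{}\leq\id{}$ gives $\kstar\bot\leq\id{}\leq\kstar\bot$). Finally $\kstar{(\op f)}=\op{(\kstar f)}$ holds because $\kstar{(\op f)}$ is the least prefixpoint of $x\mapsto\id{}+\op f;x$; applying $\op{(\cdot)}$ to the unfold and induction laws of $f$ (using that $\op{(\cdot)}$ reverses $;$, preserves $+$ and $\id{}$, and is monotone) shows $\op{(\kstar f)}$ satisfies exactly these laws, and uniqueness of the Kleene star gives the equality.

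\emph{Third block and the ``moreover'' claims.} I would first record that $+$ and $\sqcap$ are respectively the join and the meet of one and the same poset enrichment (\Cref{lemma:order-adjointness} on the $+$ side and the corresponding lemma for $\sqcap$), with $\bot=0$ as bottom and $\top$ as top; then $f\sqcap\bot=\bot$ and $f+\top=\top$ are immediate. The only substantive law is distributivity: $f\sqcap(g+h)=\copier{X};(f\per(g+h));\cocopier{Y}$ equals $\copier{X};((f\per g)+(f\per h));\cocopier{Y}$ by \Cref{lemma:ditributivityper}, and this equals $(f\sqcap g)+(f\sqcap h)$ because $;$ distributes over $+$ by the $\CMon$-enrichment~\eqref{eq:cmon enrichment}; the dual law $f+(g\sqcap h)=(f+g)\sqcap(f+h)$ is then free, since a lattice satisfying one distributive law satisfies both. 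The ``moreover'' claims are now consequences: \Cref{cor:kleeneareka} already provides a typed Kleene algebra, and the first block together with $\op{(f+g)}=\op f+\op g$, $\op{(\kstar f)}=\kstar{(\op f)}$, $\op{(\op f)}=f$ and $\op{(f;g)}=\op g;\op f$ are precisely the axioms of a typed Kleene algebra with converse~\cite{brunet2014kleene}, while the distributivity just proved makes every homset a distributive lattice. The main obstacle is the very first step, $\op{(f\piu g)}=\op f\piu\op g$ together with the verification that $\codiag{X}$ and $\cobang{X}$ are maps: these are the only points where the cartesian Frobenius self-duality defining $\op{(\cdot)}$ and the biproduct structure defining $\piu,\diag{},\bang{}$ must be bridged, and that bridge rests entirely on the coherence conditions~\eqref{eq:fbcbcoherence}; everything else is a formal consequence of the already-established monoid, lattice and Kleene-star axioms.
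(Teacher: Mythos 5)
Your proof is correct, but it takes a noticeably different route from the paper's in two places. For the first block, the paper simply invokes the completeness theorem for fb-cb rig categories of \cite{bonchi2023deconstructing}, whereas you re-derive $\op{(\codiag{X})}=\diag{X}$ and $\op{(\cobang{X})}=\bang{X}$ by showing that $\codiag{X}$ and $\cobang{X}$ are maps (so that \Cref{lemma:cb:adjoints} gives $\codiag{X}\dashv\op{(\codiag{X})}$) and then appealing to uniqueness of right adjoints in a poset-enriched category against $\codiag{X}\dashv\diag{X}$ from \Cref{def:fbidempotent}. This is an elegant, self-contained argument; its cost is that the burden shifts to the two diagrammatic facts you only sketch --- that the $\per$-cups and caps on $X\piu Y$ decompose via \eqref{eq:fbcbcoherence}, and that $\codiag{X},\cobang{X}$ are single-valued and total --- which do go through (naturality of the $\piu$-(co)monoids turns the lax comonoid-homomorphism inequalities for $\codiag{X}$ into equalities once \eqref{eq:fbcbcoherence} is unfolded), but these are precisely the computations the completeness theorem spares the paper. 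Second, for $\kstar{(f\sqcap g)}\leq\kstar{f}\sqcap\kstar{g}$ and $\kstar{f}+\kstar{g}\leq\kstar{(f+g)}$ you use monotonicity of $\kstar{(\cdot)}$ (a standard consequence of \eqref{eq:Kllenelaw}) together with the fact that $+$ and $\sqcap$ are the join and meet of one and the same enrichment order; the paper instead gives explicit string-diagrammatic derivations, routing the first inequality through $\kstar{(a\per b)}\leq\kstar{a}\per\kstar{b}$ (\Cref{prop:star-per}) and the Frobenius and adjunction axioms. Your order-theoretic arguments are shorter and equally valid; the paper's buy a reusable diagrammatic technique. The remaining items --- $\kstar{\top}=\top$ and $\kstar{\bot}=\id{}$ via fixpoint and induction laws, $\op{(\kstar{f})}=\kstar{(\op{f})}$ via the least-prefixpoint characterisation, and the distributive-lattice laws via \Cref{lemma:ditributivityper} and lattice self-duality --- match the paper's proof essentially step for step.
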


\subsection{From Kleene to Kleene-Cartesian Tapes}
Now we are going to construct the tape diagrams for kc rig categories.

For  a monoidal signature $(\sort, \sign)$, we fix 
\[\Gamma \defeq \{ \cocopier{A} \colon A \perG A \to A, \;\; \codischarger{A} \colon \unoG \to A, \;\; \copier{A} \colon A \to A \perG A, \;\; \discharger{A} \colon A \to \unoG \;\;\mid \;\; A \in \sort \}\]
and consider the signature obtained as the disjoint union of $\sign$ and $\Gamma$, that is $(\sort, \sign + \Gamma)$. Then consider the corresponding category of Kleene tapes: $\CatKTapeC$.
We now define a preorder on this category using the same recipe of $\precongK$ in Section \ref{sec:kleene-tapes}: we take $\basicCB$ to be the set of pairs of tapes containing all and only the pairs in  Figure \ref{fig:cb axioms}.
We fix $\basicKC \defeq \basicCB \cup \basicK$ and define $\leq_{\basicKC}$ according to the rules in \eqref{eq:uniformprecong}. Analogously to Section \ref{sec:kleene-tapes}, $\sim_{\basicKC} \defeq \leq_{\basicKC} \cap \geq_{\basicKC}$.

With these definitions we can construct the category of Kleene cartesian tapes $\KTCB$: Objects are the same of $\CatKTapeC$. Arrows are $\sim_\basicKC$-equivalence classes of arrows of $\CatKTapeC$. Every homset $\KTCB[P,Q]$ is ordered by $\precongKC$. In a nustshell, objects of $\KTCB$ are polynomials in $(\sort^\star)^\star$. Arrows are  $\sim_{\basicKC}$-equivalence classes of the tape generated by the following grammar where $A\in \sort$, $U\in \sort^\star$ and $s \in \sign$.
\begin{equation}\label{tracedTapesGrammar}
    \begin{tabular}{rc ccccccccccccccccccccc}\setlength{\tabcolsep}{0.0pt}
        $c$  & ::= & $\id{A}$ & $\!\!\! \mid \!\!\!$ & $ \id{\uno} $ & $\!\!\! \mid \!\!\!$ & $ \gen $ & $\!\!\! \mid \!\!\!$ & $ \sigma_{A,B} $ & $\!\!\! \mid \!\!\!$ & $   c ; c   $ & $\!\!\! \mid \!\!\!$ & $  c \per c$ & $\!\!\! \mid \!\!\!$ & $\discharger{A}$ & $\!\!\! \mid \!\!\!$ & $\copier{A}$ & $\!\!\! \mid \!\!\!$ & $\codischarger{A}$ & $\!\!\! \mid \!\!\!$ & $\cocopier{A} $ \\
        $\t$ & ::= & $\id{U}$ & $\!\!\! \mid \!\!\!$ & $ \id{\zero} $ & $\!\!\! \mid \!\!\!$ & $ \tapeFunct{c} $ & $\!\!\! \mid \!\!\!$ & $ \sigma_{U,V}^{\piu} $ & $\!\!\! \mid \!\!\!$ & $   \t ; \t   $ & $\!\!\! \mid \!\!\!$ & $  \t \piu \t  $ & $\!\!\! \mid \!\!\!$ & $ \bang{U} $ & $\!\!\! \mid \!\!\!$ & $\diag{U}$ & $\!\!\! \mid \!\!\!$ & $\cobang{U}$ & $\!\!\! \mid \!\!\!$ & $\codiag{U}$  &  $\!\!\! \mid \!\!\!$ & $\trace_{U}\t$    
    \end{tabular}
\end{equation}  

Recall from Section \ref{sec:tapes}, that traces and $\piu$-(co)monoids for arbitrary polynomials are defined as in Table \ref{tab:inddefutfb}; the monoidal product $\per$ is defined for arbitrary tapes as in Tables \ref{tab:producttape} and \ref{tab:wisktraces};     left distributors $\delta^l$ and symmetries $\symmt$ as in Table \ref{table:def dl symmt}. In  $\KTCB$,  by means of the coherence conditions in \eqref{eq:fbcbcoherence}, one can inductively define  $\per$-(co)monoids for arbitrary polynomials: see Table \ref{table:defmonoidper}.  For instance, $\copier{A \piu B} \colon A \piu B \to (A \piu B)\per(A \piu B) = AA \piu AB \piu BA \piu BB$ and $\discharger{A \piu B} \colon A \piu B \to \uno$ are
    \[\copier{A \piu B} = 
    \begin{tikzpicture}
	\begin{pgfonlayer}{nodelayer}
		\node [style=label] (76) at (-2.25, 2.625) {$A$};
		\node [style=none] (77) at (-1.75, 1.875) {};
		\node [style=none] (78) at (-1.75, 3.375) {};
		\node [style=black] (80) at (-0.5, 2.625) {};
		\node [style=none] (82) at (0.225, 3.025) {};
		\node [style=none] (83) at (0.225, 2.225) {};
		\node [style=none] (84) at (-1.75, 2.625) {};
		\node [style=none] (87) at (0.75, 1.875) {};
		\node [style=none] (88) at (0.75, 3.375) {};
		\node [style=label] (146) at (1.25, 3.025) {$A$};
		\node [style=label] (147) at (1.25, 2.225) {$A$};
		\node [style=none] (164) at (0.75, 1.175) {};
		\node [style=none] (165) at (0.75, 0.575) {};
		\node [style=none] (168) at (-0.825, 1.175) {};
		\node [style=none] (169) at (-0.825, 0.575) {};
		\node [style=none] (181) at (0.75, 1.625) {};
		\node [style=none] (182) at (0, 1.625) {};
		\node [style=none] (183) at (0, 0.125) {};
		\node [style=none] (184) at (0.75, 0.125) {};
		\node [style=none] (189) at (0.75, 3.025) {};
		\node [style=none] (190) at (0.75, 2.225) {};
		\node [style=label] (191) at (-2.25, -2.625) {$B$};
		\node [style=none] (192) at (-1.75, -1.875) {};
		\node [style=none] (193) at (-1.75, -3.375) {};
		\node [style=none] (198) at (0.75, -1.875) {};
		\node [style=none] (199) at (0.75, -3.375) {};
		\node [style=label] (200) at (1.25, -3.025) {$B$};
		\node [style=label] (201) at (1.25, -2.225) {$B$};
		\node [style=none] (202) at (0.75, -1.175) {};
		\node [style=none] (203) at (0.75, -0.575) {};
		\node [style=none] (206) at (-0.825, -1.175) {};
		\node [style=none] (207) at (-0.825, -0.575) {};
		\node [style=none] (208) at (0.75, -1.625) {};
		\node [style=none] (209) at (0, -1.625) {};
		\node [style=none] (210) at (0, -0.125) {};
		\node [style=none] (211) at (0.75, -0.125) {};
		\node [style=black] (212) at (-0.5, -2.625) {};
		\node [style=none] (213) at (0.225, -2.225) {};
		\node [style=none] (214) at (0.225, -3.025) {};
		\node [style=none] (215) at (-1.75, -2.625) {};
		\node [style=none] (216) at (0.75, -2.225) {};
		\node [style=none] (217) at (0.75, -3.025) {};
		\node [style=label] (218) at (1.25, 1.275) {$A$};
		\node [style=label] (219) at (1.25, 0.475) {$B$};
		\node [style=label] (220) at (1.25, -0.475) {$B$};
		\node [style=label] (221) at (1.25, -1.275) {$A$};
	\end{pgfonlayer}
	\begin{pgfonlayer}{edgelayer}
		\draw [style=tape] (88.center)
			 to (87.center)
			 to (77.center)
			 to (78.center)
			 to cycle;
		\draw [style=tape] (184.center)
			 to (181.center)
			 to (182.center)
			 to [bend right=90, looseness=2.00] (183.center)
			 to cycle;
		\draw (84.center) to (80);
		\draw [bend left] (80) to (82.center);
		\draw [bend right=330] (83.center) to (80);
		\draw (168.center) to (164.center);
		\draw (165.center) to (169.center);
		\draw (82.center) to (189.center);
		\draw (190.center) to (83.center);
		\draw [style=tape] (199.center)
			 to (198.center)
			 to (192.center)
			 to (193.center)
			 to cycle;
		\draw [style=tape] (211.center)
			 to (208.center)
			 to (209.center)
			 to [bend left=90, looseness=2.00] (210.center)
			 to cycle;
		\draw (206.center) to (202.center);
		\draw (203.center) to (207.center);
		\draw (215.center) to (212);
		\draw [bend left] (212) to (213.center);
		\draw [bend right=330] (214.center) to (212);
		\draw (213.center) to (216.center);
		\draw (217.center) to (214.center);
	\end{pgfonlayer}
\end{tikzpicture}
}
 \qquad \discharger{A \piu B} = 
    \begin{tikzpicture}
	\begin{pgfonlayer}{nodelayer}
		\node [style=none] (224) at (3.5, 0.75) {};
		\node [style=none] (225) at (3.5, -0.75) {};
		\node [style=none] (231) at (2.5, -0.75) {};
		\node [style=none] (232) at (2.5, 0.75) {};
		\node [style=label] (233) at (-1, 1.25) {$A$};
		\node [style=none] (234) at (-0.5, 2) {};
		\node [style=none] (235) at (-0.5, 0.5) {};
		\node [style=none] (236) at (0.75, 0.5) {};
		\node [style=none] (237) at (0.75, 2) {};
		\node [style=label] (238) at (-1, -1.25) {$B$};
		\node [style=none] (239) at (-0.5, -0.5) {};
		\node [style=none] (240) at (-0.5, -2) {};
		\node [style=none] (241) at (0.75, -2) {};
		\node [style=none] (242) at (0.75, -0.5) {};
		\node [style=black] (243) at (0.75, 1.25) {};
		\node [style=none] (244) at (-0.5, 1.25) {};
		\node [style=black] (245) at (0.75, -1.25) {};
		\node [style=none] (246) at (-0.5, -1.25) {};
	\end{pgfonlayer}
	\begin{pgfonlayer}{edgelayer}
		\draw [tape] (237.center)
			 to [bend left] (232.center)
			 to (224.center)
			 to (225.center)
			 to (231.center)
			 to [bend left] (241.center)
			 to (240.center)
			 to (239.center)
			 to (242.center)
			 to [bend right=90, looseness=2.00] (236.center)
			 to (235.center)
			 to (234.center)
			 to cycle;
		\draw (244.center) to (243);
		\draw (246.center) to (245);
	\end{pgfonlayer}
\end{tikzpicture}
}
\]

\begin{table}
    \begin{equation}\label{eq:copierind}
        \begin{array}{rcl C{0.5cm} rcl}
            \copier{\zero} &\defeq& \id{\zero} && \discharger{\zero} &\defeq& \cobang{\uno}  \\
            \copier{U \piu P'} &\defeq& \Tcopier{U} \piu \cobang{UP'} \piu ((\cobang{P'U} \piu \copier{P'}) ; \Idl{P'}{U}{P'}) && \discharger{U \piu P'} &\defeq& (\Tdischarger{U} \piu \discharger{P'}) ; \codiag{\uno}
        \end{array}
    \end{equation}
    \begin{equation}\label{eq:cocopierind}
        \begin{array}{rcl C{0.5cm} rcl}
            \cocopier{\zero} &\defeq& \id{\zero} && \codischarger{\zero} &\defeq& \bang{\uno}  \\
            \cocopier{U \piu P'} &\defeq& \Tcocopier{U} \piu \bang{UP'} \piu (\dl{P'}{U}{P'} ; (\bang{P'U} \piu \cocopier{P'})) && \codischarger{U \piu P'} &\defeq& \diag{\uno} ; (\Tcodischarger{U} \piu \codischarger{P'})
        \end{array}
    \end{equation}
        \caption{Inductive definitions of $\discharger{P}$, $\copier{P}$, $\codischarger{P}$ and $\cocopier{P}$}
        \label{table:defmonoidper}
\end{table}
These structures make $\KTCB$  a kc rig category.

\begin{theorem}\label{theorem:KTCB is kleene-cartesian}
	$\KTCB$ is a Kleene-Cartesian rig category.
\end{theorem}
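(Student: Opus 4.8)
The plan is to verify, one by one, the three defining conditions of a Kleene-Cartesian rig category (Definition~\ref{def:kcbrig}) for $\KTCB$, reusing as much as possible the fact that $\CatKTapeC$ is already the free sesquistrict Kleene rig category generated by $(\sort, \sign + \Gamma)$ (Theorem~\ref{thm:Kleenetapesfree}). First I would record that $\precongKC$ is a congruence with respect to $;$, $\piu$, $\per$ and $\trace$, since it is generated by the inference system \eqref{eq:uniformprecong}, which contains precisely the corresponding congruence rules; hence all the rig operations of $\CatKTapeC$ descend to the quotient $\KTCB$, and, $\sim_{\basicKC}$ being symmetric by construction, the homsets become genuine posets under $\precongKC$. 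This makes $\KTCB$ a poset-enriched rig category. Because $\basicK \subseteq \basicKC$, every Kleene-bicategory axiom (the laws of Figures~\ref{fig:rel axioms}, \ref{fig:id* tape axiom} and~\ref{fif:poset unif tape axioms}) imposed on $\CatKTapeC$ is still valid in $\KTCB$, and the trace, the biproduct (co)monoids and the idempotent convolution order are all preserved by the quotient; this yields condition~1, that $(\KTCB, \piu, \zero)$ is a Kleene bicategory, and in particular that $\KTCB$ is a Kleene rig category (so that Lemma~\ref{lemma:ditributivityper} is available).

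The core of the argument is condition~2: that $(\KTCB, \per, \uno)$ is a cartesian bicategory in the sense of Definition~\ref{def:cartesian bicategory}. The $\per$-(co)monoids are given on sorts by the generators in $\Gamma$, and on arbitrary polynomials by the inductive clauses of Table~\ref{table:defmonoidper}; I would first extend them from sorts to monomials $U \in \sort^\star$ by the standard ``copy each wire'' pattern built from the sort-level (co)monoids, symmetries and $\per$. The cartesian-bicategory axioms of Figure~\ref{fig:cb axioms} hold for sorts essentially by construction, since exactly those pairs lie in $\basicCB$ and are therefore related in $\KTCB$. The task is then to lift each axiom — associativity, unitality and commutativity of $(\copier{}, \discharger{})$ and $(\cocopier{}, \codischarger{})$, the special Frobenius and Frobenius laws, lax naturality of $\copier{}$ and $\discharger{}$, and the four adjunction inequalities — first from sorts to monomials by induction on the length of the word (using functoriality and the exchange law~\eqref{eq:tape:LexchangeR} of whiskerings), and then from monomials to polynomials by induction on the number of summands, unfolding the recursive definitions \eqref{eq:copierind}–\eqref{eq:cocopierind} and cancelling the cross-terms through the annihilation laws of $\basicK$ (relating $\cobang{}$ and $\bang{}$) together with the bialgebra laws of the $\piu$-structure.

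Finally, condition~3 requires the coherence equations \eqref{eq:fbcbcoherence} between the two (co)monoid families for an \emph{arbitrary} pair $X \piu Y$, whereas Table~\ref{table:defmonoidper} only gives the single-summand recursion. These I would obtain by a further induction decomposing $X = U \piu P'$ and repeatedly applying the distributor bookkeeping: the point is that \eqref{eq:fbcbcoherence} is just the ``global'' form of the ``local'' recursive clause, so the two agree once associativity of $\piu$ and coherence of $\dl{}{}{}$ are taken into account. The compatibility with $\Rel$ noted after Definition~\ref{def:kcbrig} serves as a sanity check on the shape of these equations.

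The main obstacle will be the polynomial-level verification of the non-equational axioms of Figure~\ref{fig:cb axioms}, especially special Frobenius and the four adjunction inequalities: here $\copier{P}$ and $\cocopier{P}$ are large composites interleaving $\piu$-(co)monoids, distributors and the $\per$-(co)monoids of the summands, so the inductive step does not reduce to a single diagram rewrite but forces one to juggle the interaction of the two monoidal products at once. Several of the ``derived laws'' of Table~\ref{table:kc rig derived laws} — notably the distributivity of $\per$ over the convolution $+$ and the behaviour of $\op{(\cdot)}$ on the $\piu$-(co)monoids — are needed in these computations; since Proposition~\ref{prop:kc rig laws} (and the construction of $\op{(\cdot)}$ in Proposition~\ref{prop:dagger}) presuppose the very kc-rig structure we are establishing, I would prove the particular instances required for the induction directly in $\KTCB$ as auxiliary lemmas, feeding them into the verification rather than invoking Proposition~\ref{prop:kc rig laws} circularly.
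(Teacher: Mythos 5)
Your treatment of condition~1 and of the quotient/poset-enrichment bookkeeping matches the paper, and your plan for the \emph{structural} axioms of the cartesian bicategory (associativity, Frobenius, the four adjunction inequalities) by induction on monomials and then polynomials is viable — the paper simply outsources exactly that part to Theorem~7.3 of the tape-diagrams paper \cite{bonchi2023deconstructing} rather than redoing it. But there is a genuine gap at the heart of condition~2: lax naturality of $\copier{}$ and $\discharger{}$ is a statement about \emph{every arrow} $\t \colon P \to Q$, namely $\t ; \copier{Q} \leq \copier{P} ; (\t \per \t)$ and $\t ; \discharger{Q} \leq \discharger{P}$, and your proposed induction is entirely over the \emph{objects} (length of words, number of $\piu$-summands). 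That induction constructs the polynomial-level (co)monoids and verifies equations among them, but it says nothing about why an arbitrary tape — in particular one containing $\trace_{U}$ — is a lax comonoid homomorphism. The pairs in $\basicCB$ only give you the axioms for the circuit-level generators; the cited prior work extends them to all \emph{trace-free} tapes; the new content of this theorem, and the part your proposal never addresses, is the trace case.

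Concretely, you would need to show that if $\s$ is a lax comonoid homomorphism then so is $\trace_{U}\s$, and this does not follow from any of the ingredients you list. The paper handles it by writing $\t = \trace_{S}\t'$ with $\t'$ trace-free, passing to the matrix normal form of $\t'$ (Proposition~\ref{prop:matrixform}), and then combining the adjunction inequalities \eqref{ax:copieradj1}--\eqref{ax:copieradj2} and \eqref{ax:dischargeradj1}--\eqref{ax:dischargeradj2}, sliding, naturality of the $\piu$-(co)diagonals, the axiom \eqref{ax:trace:tape:kstar-id}, and — crucially — Proposition~\ref{prop:trace-per}, which bounds $\trace_{S}f \per \trace_{S'}f'$ from below by a single trace over $S \per S'$ (itself resting on Proposition~\ref{prop:star-per}). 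Your proposal invokes neither of these propositions, and without something playing their role the inequality $\trace_{S}\t' ; \copier{Q} \leq \copier{P} ; (\trace_{S}\t' \per \trace_{S}\t')$ cannot be closed: after pushing $\copier{}$ through the trace-free body you are left with two copies of the state space $S$ that must be re-assembled into two independent traces, and that is exactly what $\trace_{S \per S'}(\cdot) \leq \trace_{S}(\cdot) \per \trace_{S'}(\cdot)$ provides. Your worry about circularity with Proposition~\ref{prop:kc rig laws} is, by contrast, unfounded: the argument only needs $\KTCB$ to be a Kleene \emph{rig} category (which holds by construction) for Lemma~\ref{lemma:ditributivityper} and Propositions~\ref{prop:star-per}--\ref{prop:trace-per} to apply; the dagger and the derived laws of Table~\ref{table:kc rig derived laws} are not used in this proof.
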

\begin{proof}
By construction $\KTCB$ is a Kleene rig category. In order to prove that $(\KTCB, \per, \uno)$ is a cartesian bicategory we widely rely on the proof of \cite[Theorem 7.3]{bonchi2023deconstructing}.
By \cite[Theorem 7.3]{bonchi2023deconstructing},  $\copier{P}, \discharger{P}, \cocopier{P}$ and $\codischarger{P}$ satisfy the axioms of special Frobenius algebras and the comonoid $(\copier{P}, \discharger{P})$ is left adjoint to the monoid $(\cocopier{P}, \discharger{P})$. Moreover, every trace-free tape diagram $\t \colon P \to Q$ is a lax comonoid homomorphism, i.e. 
	\begin{equation}\label{eq:trace free lax naturality}
		\t ; \copier{Q} \; \leq \; \copier{P} ; (\t \per \t) 	\quad \text{ and } \quad \t ; \discharger{Q} \; \leq \; \discharger{P}.
	\end{equation}
	To conclude, we need to show that the inequalities above hold for \emph{every} tape diagram $\t \colon P \to Q$ in $\KTCB$. 

	By the normal form of traced monoidal categories, there exists a trace-free tape diagram $\t' \colon S \piu P \to S \piu Q$, such that $\trace_{S} \t' = \t$. Now, let $\begin{psmallmatrix} \t'_{SS} & \t'_{SQ} \\ \t'_{PS} & \t'_{PQ} \end{psmallmatrix}$ be the matrix normal form of $\t'$ and observe that the following holds.

	\begingroup
	\allowdisplaybreaks
		\begin{align*}
			\t ; \copier{Q} 
			&= 
			\trace_{S} \t' ; \copier{Q}  
			= 
			\scalebox{0.7}{
    \InputIfFileExists{laxnaturality/copier/step1.tikz}{}{\input{./tikz/laxnaturality/copier/step1.tikz}}
} 
			&&\!\!\!\!\!\!\!\!\!\!\!\!\!\leftstackrel{\eqref{ax:copieradj2}}{=}
			\scalebox{0.7}{
    \InputIfFileExists{laxnaturality/copier/step2.tikz}{}{\input{./tikz/laxnaturality/copier/step2.tikz}}
} \\
			&\leftstackrel{\eqref{ax:trace:sliding}}{=}
			\scalebox{0.7}{
    \InputIfFileExists{laxnaturality/copier/step3.tikz}{}{\input{./tikz/laxnaturality/copier/step3.tikz}}
}
			&&\!\!\!\!\!\!\!\!\!\!\!\!\!\leftstackrel{\eqref{ax:diagnat}, \eqref{ax:codiagnat}}{=}
			\scalebox{0.7}{
    \InputIfFileExists{laxnaturality/copier/step4.tikz}{}{\input{./tikz/laxnaturality/copier/step4.tikz}}
} \\
			&\leftstackrel{\eqref{eq:trace free lax naturality}}{\leq}
			\scalebox{0.7}{
    \InputIfFileExists{laxnaturality/copier/step5.tikz}{}{\input{./tikz/laxnaturality/copier/step5.tikz}}
} 
			&&\!\!\!\!\!\!\!\!\!\!\!\!\!\leftstackrel{\eqref{ax:copieradj1}}{\leq}
			\scalebox{0.7}{
    \InputIfFileExists{laxnaturality/copier/step6.tikz}{}{\input{./tikz/laxnaturality/copier/step6.tikz}}
} \\
			&\leftstackrel{\eqref{ax:diagnat}}{=}
			\scalebox{0.7}{
    \InputIfFileExists{laxnaturality/copier/step7.tikz}{}{\input{./tikz/laxnaturality/copier/step7.tikz}}
}
			&&\!\!\!\!\!\!\!\!\!\!\!\!\!\leftstackrel{\text{(\Cref{prop:trace-per})}}{\leq}
			\copier{P} ; (\trace_{S} \t' \per \trace_{S} \t')
			= 
			\copier{P} ; (\t \per \t)
		\end{align*}
	\endgroup

	To prove the other inequality we exploit again the matrix normal form.

	\begingroup
	\allowdisplaybreaks
		\begin{align*}
			\t ; \discharger{Q} 
			= 
			\trace_{S} \t' ; \discharger{Q}
			&= 
			\scalebox{0.7}{
    \InputIfFileExists{laxnaturality/discard/step1.tikz}{}{\input{./tikz/laxnaturality/discard/step1.tikz}}
}
			\stackrel{\eqref{ax:dischargeradj2}}{\leq}
			\scalebox{0.7}{
    \InputIfFileExists{laxnaturality/discard/step2.tikz}{}{\input{./tikz/laxnaturality/discard/step2.tikz}}
}
			\stackrel{\eqref{ax:trace:sliding}}{=}
			\scalebox{0.7}{
    \InputIfFileExists{laxnaturality/discard/step3.tikz}{}{\input{./tikz/laxnaturality/discard/step3.tikz}}
} \\
			&\leftstackrel{\eqref{ax:diagnat}, \eqref{ax:codiagnat}}{=}
			\scalebox{0.7}{
    \InputIfFileExists{laxnaturality/discard/step4.tikz}{}{\input{./tikz/laxnaturality/discard/step4.tikz}}
} 
			\stackrel{\eqref{eq:trace free lax naturality}}{\leq}
			\scalebox{0.7}{
    \InputIfFileExists{laxnaturality/discard/step5.tikz}{}{\input{./tikz/laxnaturality/discard/step5.tikz}}
}
			\stackrel{\eqref{ax:dischargeradj1}}{\leq}
			\scalebox{0.7}{
    \InputIfFileExists{laxnaturality/discard/step6.tikz}{}{\input{./tikz/laxnaturality/discard/step6.tikz}}
} \\
			&\leftstackrel{\eqref{ax:diagnat}}{=}
			\scalebox{0.7}{
    \InputIfFileExists{laxnaturality/discard/step7.tikz}{}{\input{./tikz/laxnaturality/discard/step7.tikz}}
}
			\stackrel{\eqref{ax:bi}}{=}
			\scalebox{0.7}{
    \InputIfFileExists{laxnaturality/discard/step8.tikz}{}{\input{./tikz/laxnaturality/discard/step8.tikz}}
}
			\stackrel{\eqref{ax:trace:tape:kstar-id}}{=}
			\scalebox{0.7}{
    \InputIfFileExists{laxnaturality/discard/step9.tikz}{}{\input{./tikz/laxnaturality/discard/step9.tikz}}
}
		\end{align*}
	\endgroup

\end{proof}

Given a monoidal signature $(\sort,\sign)$, a (sesquistrict) kc rig category $\Cat{C}$ and an interpretation $\mathcal{I}=(\alpha_\sort,\alpha_\sign)$ of $\sign$ in $\Cat{C}$, 
the inductive extension of $\mathcal{I}$, hereafter referred as $\CBdsem{\cdot}_{\interpretation} \colon \KTCB \to \Cat{C}$, is defined as follows. %
\small{
\renewcommand{\arraystretch}{1.5}
\begin{tabular}{lllll}
$\CBdsem{s}_{\interpretation} = \alpha_{\sign}(s) $&$ \CBdsem{\copier{A}}_{\interpretation} = \copier{\alpha_\sort(A)}  $&$ \CBdsem{\discharger{A}}_{\interpretation} = \discharger{\alpha_\sort(A)}  $&$ \CBdsem{\cocopier{A}}_{\interpretation} = \cocopier{\alpha_\sort(A)}  $&$ \CBdsem{\codischarger{A}}_{\interpretation} = \codischarger{\alpha_\sort(A)}$\\
$\CBdsem{\id{A}}_{\interpretation}= \id{\alpha_\sort(A)} $&$ \CBdsem{\id{1}}_{\interpretation}= \id{1} $&$ \CBdsem{\symmt{A}{B}}_{\interpretation} = \symmt{\alpha_\sort(A)}{\alpha_\sort(B)}  $&$ \CBdsem{c;d}_{\interpretation} = \CBdsem{c}_{\interpretation}; \CBdsem{d}_{\interpretation}  $ & $ \CBdsem{c\per d}_{\interpretation} = \CBdsem{c}_{\interpretation} \per \CBdsem{d}_{\interpretation}$\\
$\CBdsem{\, \tapeFunct{c} \,}_{\interpretation}= \CBdsem{c}_{\interpretation} $&$ \CBdsem{\diag{U}}_{\interpretation}= \diag{\alpha^\sharp_\sort(U)} $&$ \CBdsem{\bang{U}}_{\interpretation} = \bang{\alpha^\sharp_\sort(U)}$ & $\CBdsem{\codiag{U}}_{\interpretation}= \codiag{\alpha^\sharp_\sort(U)} $&$ \CBdsem{\cobang{U}}_{\interpretation} = \cobang{\alpha^\sharp_\sort(U)}$  \\
$\CBdsem{\id{U}}_{\interpretation}= \id{\alpha^\sharp_\sort(U)} $&$ \CBdsem{\id{\zero}}_{\interpretation} = \id{\zero}  $&$ \CBdsem{\symmp{U}{V}}_{\interpretation}= \symmp{\alpha^\sharp_\sort(U)}{\alpha^\sharp_\sort(V)} $&$ \CBdsem{\s;\t}_{\interpretation} = \CBdsem{\s}_{\interpretation} ; \CBdsem{\t}_{\interpretation} $&$ \CBdsem{\s \piu \t}_{\interpretation} = \CBdsem{\s}_{\interpretation} \piu \CBdsem{\t}_{\interpretation} $\\
\multicolumn{5}{c}{
	$\CBdsem{\trace_{U}\t}_{\interpretation} = \trace_{\alpha^\sharp_\sort(U)} \CBdsem{\t}_{\interpretation}$
}
\end{tabular}
}

It turns out that  $\CBdsem{\cdot}_{\interpretation}$ is a kc rig morphism and it is actually the unique one  respecting the interpretation $\mathcal{I}$.

\begin{theorem}\label{thm:KleeneCartesiantapesfree}
$\KTCB$ is the free sesquistrict kc rig category generated by the monoidal signature $(\sort, \sign)$.
\end{theorem}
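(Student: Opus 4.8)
The plan is to reuse the modular strategy already applied to $\CatTrTape$ and $\CatKTape$, exploiting that $\KTCB$ is by construction a quotient of a \emph{free} Kleene rig category over the enlarged signature $\sign + \Gamma$. Indeed, applying Theorem~\ref{thm:Kleenetapesfree} to $(\sort, \sign + \Gamma)$ shows that $\CatKTapeC$ is the free sesquistrict Kleene rig category on $\sign + \Gamma$; and $\KTCB$ is the quotient of $\CatTrTape$ (over $\sign+\Gamma$) by $\sim_{\basicKC}$, where $\basicKC \defeq \basicCB \cup \basicK$. Since the inference rules of~\eqref{eq:uniformprecong} are insensitive to the order in which base pairs are added, this is equivalently the quotient of $\CatKTapeC$ by the preorder $\precongCB$ generated from the cartesian bicategory axioms of Figure~\ref{fig:cb axioms}, with $\sim_{\basicCB} \defeq {\leq_{\basicCB}} \cap {\geq_{\basicCB}}$. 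As Theorem~\ref{theorem:KTCB is kleene-cartesian} already supplies the kc rig structure, what remains is to establish the universal property, i.e.\ the kc rig instance of Definition~\ref{def:freesesqui}.

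First I would fix the canonical interpretation of $\sign$ in $\KTCB$, sending each sort $A$ to itself and each generator $s$ to its own $\sim_{\basicKC}$-class. Then, given an arbitrary (sesquistrict) kc rig category $\Cat{C}$ with an interpretation $\interpretation = (\alpha_\sort, \alpha_\sign)$ of $\sign$, I would extend $\interpretation$ to an interpretation $\interpretation^{+}$ of $\sign + \Gamma$ by interpreting the new generators through the cartesian bicategory structure of $(\Cat{C}, \per, \uno)$, namely $\copier{A} \mapsto \copier{\alpha_\sort(A)}$, $\discharger{A} \mapsto \discharger{\alpha_\sort(A)}$, and dually for $\cocopier{A}, \codischarger{A}$. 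Because $\Cat{C}$ is in particular a Kleene rig category (condition~1 of Definition~\ref{def:kcbrig}), Theorem~\ref{thm:Kleenetapesfree} yields a unique sesquistrict Kleene rig morphism $F \colon \CatKTapeC \to \Cat{C}$ extending $\interpretation^{+}$; unfolding its inductive definition shows that it agrees clause-by-clause with the explicit $\CBdsem{\cdot}_{\interpretation}$ tabulated before the statement.

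The crux is to show that $F$ descends to the quotient, i.e.\ that it respects $\precongCB$. Since $F$ is a morphism of Kleene rig categories it is monotone and preserves $;$, $\piu$, $\per$ and $\trace$; hence it preserves every inference rule generating $\precongCB$ in~\eqref{eq:uniformprecong}, including the posetal uniformity rules $(ut\text{-}1)$ and $(ut\text{-}2)$, which go through because $\Cat{C}$ satisfies (AU1) and (AU2). It therefore suffices to check that $F$ maps each base pair of $\basicCB$ to a valid (in)equality in $\Cat{C}$, and this is immediate: $(\Cat{C}, \per, \uno)$ is a genuine cartesian bicategory, so it satisfies every axiom of Figure~\ref{fig:cb axioms} once the $\per$-(co)monoid generators are read off via $\interpretation^{+}$. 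Consequently $\t_1 \precongCB \t_2$ implies $F(\t_1) \leq F(\t_2)$, and $F$ factors through the quotient functor $Q \colon \CatKTapeC \to \KTCB$ as a monotone functor $\CBdsem{\cdot}_{\interpretation} \colon \KTCB \to \Cat{C}$. I expect this factorisation to be the main obstacle, since it is the only place where the inequational (rather than equational) character of the cartesian axioms and of posetal uniformity must be reconciled with the generated preorder.

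It then remains to verify that $\CBdsem{\cdot}_{\interpretation}$ is a kc rig morphism and is unique. Preservation of the Kleene rig structure is inherited from $F$; preservation of the cartesian bicategory structure holds on basic sorts by definition of $\interpretation^{+}$ and extends to arbitrary polynomials by induction, using that the $\per$-(co)monoids $\copier{P}, \discharger{P}, \cocopier{P}, \codischarger{P}$ are assembled from the sort-level generators through the coherence conditions~\eqref{eq:fbcbcoherence} (Table~\ref{table:defmonoidper}), which hold verbatim in $\Cat{C}$. For uniqueness, any kc rig morphism $G \colon \KTCB \to \Cat{C}$ respecting $\interpretation$ must, being a morphism of cartesian bicategories, send $\copier{A}, \discharger{A}, \cocopier{A}, \codischarger{A}$ to the corresponding structure in $\Cat{C}$, so $Q \dcomp G$ is a Kleene rig morphism extending $\interpretation^{+}$; by the uniqueness clause of Theorem~\ref{thm:Kleenetapesfree} we get $Q \dcomp G = F = Q \dcomp \CBdsem{\cdot}_{\interpretation}$, and since $Q$ is identity-on-objects and full this forces $G = \CBdsem{\cdot}_{\interpretation}$, completing the argument.
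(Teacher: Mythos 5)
Your proposal follows essentially the same route as the paper: extend the interpretation to $\sign+\Gamma$ via the cartesian structure of $\Cat{C}$, invoke freeness of $\CatKTapeC$ from Theorem~\ref{thm:Kleenetapesfree}, show the induced Kleene rig morphism respects $\precongKC$ (the paper isolates exactly this as Lemma~\ref{lemma:orderingkc}, proved by exhibiting the pullback order as a uniform precongruence containing $\basicKC$), and descend to the quotient. The only cosmetic difference is that the paper establishes $F(\copier{P})=\copier{F(P)}$ (and dually) for \emph{all} polynomials $P$ --- your induction via the coherence conditions~\eqref{eq:fbcbcoherence} --- \emph{before} checking the $\basicCB$ pairs, since that fact is already needed to see that those pairs land in valid inequalities of $\Cat{C}$, not only for the final verification of the cartesian structure.
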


\subsection{Functorial Semantics}\label{ssec:funsem}

The usual way of reasoning through string diagrams is based on monoidal theories, namely a signature plus a set of axioms: either equations or inequations. Similarly a \emph{kc tape theory} is a pair $(\sign, \basicR)$ where $\sign$ is a monoidal signature and $\basicR$ is a set of pairs $(\t_1, \t_2)$ of arrows in $\KTCB$ with same domain and codomain. Hereafter, we think of each pair $(\t_1, \t_2)$ as an inequation $\t_1 \leq \t_2$, but the results that we develop in this section trivially hold  also for equations: it is enough to add in $\basicR$ a pair $(\t_2, \t_1)$ for each $(\t_1, \t_2) \in \basicR$.

Given a kc tape theory $(\sign, \basicR)$, we will write $\precongB$ for $\precongR{\basicKC \cup \basicR}$ where the latter is generated from $\basicKC \cup \basicR$ by the rules  in \eqref{eq:uniformprecong}. Since the pairs of arrows in $\basicKC$
express the laws of kc rig categories, it is safe to always keep $\basicKC$ implicit.

\begin{remark}
Derivations using $\precongB$ might ends up to be non entirely graphical because they may rely on  the decomposition via $\per$: see, e.g., Example 5.14 in \cite{bonchi2023deconstructing}.
The solution devised in \cite{bonchi2023deconstructing} also works for kc tapes: take the kc tape theory $(\sign, \wiskbasicR)$, where $\wiskbasicR = \{ (\t_1 \per \id{U}, \t_2 \per \id{U}) \mid (\t_1, \t_2) \in \basicR \text{ and } U\in \sort^\star \}$ and define $\WprecongBA$ as in \eqref{eq:uniformprecong} but without the rules for $\per$, i.e., with the following rules. 
\[
    \begin{array}{ccccc}
        \inferrule*[right=($\wiskbasicR$)]{\t_1 \mathbin{\wiskbasicR} \t_2}{\t_1 \mathrel{\WprecongBA} \t_2}
        &
        \inferrule*[right=($r$)]{-}{\t \mathrel{\WprecongBA} \t}
        &    
        \inferrule*[right=($t$)]{\t_1 \mathrel{\WprecongBA} \t_2 \quad \t_2 \mathrel{\WprecongBA} \t_3}{\t_1 \mathrel{\WprecongBA} \t_3}
        &
        \inferrule*[right=($;$)]{\t_1 \mathrel{\WprecongBA} \t_2 \quad \s_1 \mathrel{\WprecongBA} \s_2}{\t_1;\s_1 \mathrel{\WprecongBA} \t_2;\s_2}
        &
        \inferrule*[right=($\piu$)]{\t_1 \mathrel{\WprecongBA} \t_2 \quad \s_1 \mathrel{\WprecongBA} \s_2}{\t_1\piu\s_1 \mathrel{\WprecongBA} \t_2 \piu \s_2}
        \\[8pt]
        \multicolumn{5}{c}{
        \inferrule*[right=($ut$-1)]{\s_2 \mathrel{\WprecongBA} \s_1 \qquad \t_1 ; (\s_1 \piu \id{}) \mathrel{\WprecongBA} (\s_2 \piu \id{}) ; \t_2}{\trace_{S_1}\t_1 \mathrel{\WprecongBA} \trace_{S_2}\t_2}
        \qquad
        \inferrule*[right=($ut$-2)]{ \s_2 \mathrel{\WprecongBA} \s_1 \qquad (\s_1 \piu \id{}) ; \t_1 \mathrel{\WprecongBA} \t_2 ; (\s_2 \piu \id{})}{\trace_{S_1}\t_1 \mathrel{\WprecongBA} \trace_{S_2}\t_2}
        }
    \end{array}
\]
Then, by  the same proof of Theorem 5.15 in~\cite{bonchi2023deconstructing}, it holds that  $\t_1 \precongB \t_2$ if and only if $\t_1 \WprecongBA \t_2$.
\end{remark}

Recall that an interpretation $\mathcal{I}=(\alpha_{\sort}, \alpha_{\sign})$ of a monoidal signature $(\sort,\sign)$ in a sesquistrict rig category  $\Cat{C}$ consists of $\alpha_{\sort}\colon \sort \to Ob(\Cat{C})$ and $\alpha_{\sign}\colon \sign \to Ar(\Cat{C})$ preserving (co)arities of symbols $s\in \sign$. Whenever $\Cat{C}$ is a kc rig category, $\mathcal{I}$ gives rises uniquely, by freeness of $\KTCB$, to the morphisms of kc rig categories $\TCBdsem{\cdot}_{\interpretation} \colon \KTCB \to \Cat{C}$. We say that an intepretation $\interpretation$ of $\sign$ is \emph{a model of the theory} $(\sign, \basicR)$ whenever $\TCBdsem{\cdot}_{\interpretation}$ preserves $\precongB$: if $\t_1 \precongB \t_2$, then $\TCBdsem{\t_1}_{\interpretation} $ is below $\TCBdsem{\t_2}_{\interpretation}$ in $\Cat{C}$.

\begin{example}\label{ex:totalorder}
	Consider the signature $(\sort, \sign)$ where $\sort$ contains a single sort $A$ and $\sign = \{ R \colon A \to A \}$. Take as $\basicR$ the set consisting of the following inequalities:
	\mylabel{R:refl}{refl}
	\mylabel{R:tr}{tr}
	\mylabel{R:anti}{anti}
	\mylabel{R:lin}{lin}
	\[
		\begin{array}{c@{}c@{}c c@{}c@{}c}
			
    \InputIfFileExists{linOrd/id.tikz}{}{\input{./tikz/linOrd/id.tikz}}
 &\stackrel{(\text{refl})}{\leq}& 
    \InputIfFileExists{linOrd/R.tikz}{}{\input{./tikz/linOrd/R.tikz}}
 & 
    \InputIfFileExists{linOrd/RR.tikz}{}{\input{./tikz/linOrd/RR.tikz}}
 &\stackrel{(\text{tr})}{\leq}& 
    \InputIfFileExists{linOrd/R.tikz}{}{\input{./tikz/linOrd/R.tikz}}

			\\[10pt]
			
    \InputIfFileExists{linOrd/and.tikz}{}{\input{./tikz/linOrd/and.tikz}}
 &\stackrel{(\text{anti})}{\leq}& 
    \InputIfFileExists{linOrd/id.tikz}{}{\input{./tikz/linOrd/id.tikz}}
 & 
    \InputIfFileExists{linOrd/top.tikz}{}{\input{./tikz/linOrd/top.tikz}}
 &\stackrel{(\text{lin})}{\leq}& 
    \InputIfFileExists{linOrd/or.tikz}{}{\input{./tikz/linOrd/or.tikz}}
		
		\end{array}
	\]
	An interpretation of $(\sign, \basicR)$ in the kc rig category $\Rel$ is a set $X$, together with a relation $R \subseteq X \times X$. It is a model iff $R$ is a linear order, i.e. it is reflexive, transitive, antisymmetric and linear.
\end{example}

Models enjoy a beautiful characterisation provided by Proposition \ref{funct:sem} below. Let $\KTCBI$ be the category having the same objects as $\KTCB$ and arrows $\sim_{\basicR}$-equivalence classes of arrows of $\KTCB$ ordered by $\precongB$.
Since $\KTCB$ is a kc rig category, thus so is also $\KTCBI$.

\begin{proposition}\label{funct:sem}
	Let $(\sign, \basicR)$ be a kc tape theory and $\Cat{C}$ a sesquistrict kc rig category.  Models of $(\sign, \basicR)$ are in bijective correspondence with  morphisms of sesquistrict kc rig categories from $\KTCBI$ to $\Cat{C}$.
\end{proposition}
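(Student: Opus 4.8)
The plan is to factor the statement through the evident quotient morphism and then invoke freeness of $\KTCB$. Write $Q \colon \KTCB \to \KTCBI$ for the identity-on-objects map sending each tape to its $\sim_{\basicR}$-equivalence class. By the way $\KTCBI$ is defined, its kc rig structure and its homset orders are inherited from $\KTCB$ along $Q$, so $Q$ is a poset enriched morphism of kc rig categories that is bijective on objects and full (every arrow of $\KTCBI$ is the class of some arrow of $\KTCB$). By \Cref{thm:KleeneCartesiantapesfree}, interpretations $\interpretation$ of $\sign$ in $\Cat{C}$ are in bijection with sesquistrict kc rig morphisms $\TCBdsem{\cdot}_{\interpretation} \colon \KTCB \to \Cat{C}$. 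Hence it suffices to show that $\interpretation$ is a model of $(\sign, \basicR)$ if and only if $\TCBdsem{\cdot}_{\interpretation}$ factors through $Q$, and that such factorisations correspond bijectively to morphisms $\KTCBI \to \Cat{C}$.

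The technical core is the following universal property of $Q$: a poset enriched kc rig morphism $\beta \colon \KTCB \to \Cat{C}$ satisfies $\beta(\t_1) \leq \beta(\t_2)$ whenever $\t_1 \precongB \t_2$ if and only if $\beta$ sends every pair in $\basicR$ to a $\leq$-related pair. The forward implication is immediate since every pair of $\basicR$ is $\precongB$-related. For the converse I would argue by induction on the derivation of $\t_1 \precongB \t_2$ according to the rules in \eqref{eq:uniformprecong}. The base cases $(\basicR)$ and the implicit $(\basicKC)$ pairs hold respectively by hypothesis and because the pairs in $\basicKC$ encode the kc rig laws, which $\beta$ preserves; reflexivity $(r)$ is trivial. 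The closure rules $(t)$, $(;)$, $(\piu)$ and $(\per)$ follow from transitivity of $\leq$ together with functoriality and monotonicity of composition and of the two monoidal products of $\Cat{C}$. The delicate cases are the trace rules $(ut\text{-}1)$ and $(ut\text{-}2)$: applying $\beta$ to their premises and using that $\Cat{C}$, being a Kleene bicategory, satisfies the posetal uniformity axioms (AU1) and (AU2) of \Cref{def:kleenebicategory} in the equivalent forms \eqref{eq:equivalentuni1} and \eqref{eq:equivalentuni2}, yields exactly the desired inequality between the images of the traced tapes.

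With this universal property in hand the bijection is routine. Given a model $\interpretation$, the morphism $\TCBdsem{\cdot}_{\interpretation}$ respects $\precongB$, so in particular it identifies $\sim_{\basicR}$-equivalent tapes and is monotone for $\precongB$; since $Q$ is identity on objects and full, $\TCBdsem{\cdot}_{\interpretation}$ descends to a unique poset enriched functor $F_{\interpretation} \colon \KTCBI \to \Cat{C}$ with $F_{\interpretation} \circ Q = \TCBdsem{\cdot}_{\interpretation}$, and $F_{\interpretation}$ is a sesquistrict kc rig morphism because the structure maps of $\KTCBI$ are $Q$-images of those of $\KTCB$. Conversely, any sesquistrict kc rig morphism $F \colon \KTCBI \to \Cat{C}$ yields the kc rig morphism $F \circ Q \colon \KTCB \to \Cat{C}$, hence by \Cref{thm:KleeneCartesiantapesfree} a unique interpretation $\interpretation$ with $\TCBdsem{\cdot}_{\interpretation} = F \circ Q$; this $\interpretation$ is a model because $F$ is poset enriched and $Q$ respects $\precongB$. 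The two assignments are mutually inverse: $F \mapsto F_{\interpretation}$ recovers $F$ by uniqueness of factorisation through the full functor $Q$, and the round trip on interpretations is the identity by freeness of $\KTCB$.

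I expect the main obstacle to be precisely the inductive step for the trace rules $(ut\text{-}1)$ and $(ut\text{-}2)$, which are not formal consequences of functoriality but genuinely require the target $\Cat{C}$ to satisfy posetal uniformity; this is exactly why (AU1) and (AU2) are built into the definition of Kleene bicategory. A secondary point requiring care is verifying that the descended map $F_{\interpretation}$ preserves the full kc rig structure rather than being a mere poset enriched functor, which relies on the kc rig structure of $\KTCBI$ being defined so that $Q$ strictly preserves it.
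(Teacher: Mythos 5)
Your proof is correct and follows essentially the same route as the paper's: both factor through the identity-on-objects quotient $\KTCB \to \KTCBI$, use freeness of $\KTCB$ (\Cref{thm:KleeneCartesiantapesfree}) to pass between interpretations and kc rig morphisms, and check that descent along the quotient and precomposition with it are mutually inverse. Your ``technical core'' (that preserving $\precongB$ is equivalent to sending just the generating pairs of $\basicR$ to $\leq$-related pairs) is true and mirrors the paper's uniform-precongruence lemmas in the appendix, but it is not actually needed for this proposition, since the paper defines a model directly as an interpretation whose induced morphism preserves all of $\precongB$.
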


\section{The Kleene-Cartesian Tape Theory of Peano}\label{sec:peano}
In Section \ref{ssec:funsem}, we introduced kc tape theories and in Example \ref{ex:totalorder} we illustrated the theory of linear orders.
In this section we illustrate a further example of theory that fully exploits the expressive power of Kleene-Cartesian rig categories: Peano's axiomatisation of natural numbers. This  theory is not expressible in first order logic \cite{harsanyi1983mathematics}
but it can be succintly expressed by a kc tape theory.

As expected we begin by fixing  the  signature:
\[\sort\defeq \{A\} \qquad \text{and} \qquad \sign \defeq \{ \; 
    \begin{tikzpicture}
	\begin{pgfonlayer}{nodelayer}
		\node [style=bbox] (69) at (0, 0) {$0$};
		\node [style=none] (71) at (1.5, 0) {};
		\node [style=label] (72) at (2, 0) {$A$};
	\end{pgfonlayer}
	\begin{pgfonlayer}{edgelayer}
		\draw (69) to (71.center);
	\end{pgfonlayer}
\end{tikzpicture}
}
, 
    \begin{tikzpicture}
	\begin{pgfonlayer}{nodelayer}
		\node [style=bbox] (69) at (0, 0) {$s$};
		\node [style=none] (70) at (-1.5, 0) {};
		\node [style=none] (71) at (1.5, 0) {};
		\node [style=label] (72) at (2, 0) {$A$};
		\node [style=label] (73) at (-2, 0) {$A$};
	\end{pgfonlayer}
	\begin{pgfonlayer}{edgelayer}
		\draw (70.center) to (69);
		\draw (69) to (71.center);
	\end{pgfonlayer}
\end{tikzpicture}
}
  \}\text{.} \]
An interpretation of $\sign$ in $\Rel$ consists of 
\begin{center}
\begin{tabular}{rl}
a set $X$ &(i.e., $\alpha_\sort(A)$), \\
a relation $0 \subseteq 1 \times X$ & (i.e., $\alpha_\sign( \, 
    }
 )$) and \\
a relation $s\subseteq  X \times X$ & (i.e., $\alpha_\sign(  
    }
 )$)\text{.}
\end{tabular}
\end{center}

The set of axioms $\mathbb{P}$ contains those illustrated in Figure \ref{fig:tape-theory-natural-numbers}.
The axioms at the top force the copairing of $0$ and $s$, that is $[0,s]  \defeq (0 \piu s) ; \codiag{A}$ to be an isomorphism of type $1+X \to X$.
The axiom on the left states that $\op{[0,s]}; [0,s] = \id{A}$; the one on the right that $[0,s];\op{[0,s]}= \id{A \piu A}$.
The axiom at the bottom of Figure \ref{fig:tape-theory-natural-numbers} is the induction axiom. Intuitively, it states that every element of $X$ should be contained in $0;s^* \colon 1\to X$.
Thus, overall an interpretation in $\Rel$ is a model of the theory $(\sign, \mathbb{P})$ iff $X$ is isomorphic to $X+1$ and equal to $0;s^*$.

An example of a model is obviously given by the set of natural numbers $\N$, equipped with the element zero $0\colon 1 \to \N$ and the successor function $s\colon \N \to \N$. 
We will see soon that  this is the unique --up-to isomorphism-- model of $(\sign, \mathbb{P})$.

\begin{figure}[H]
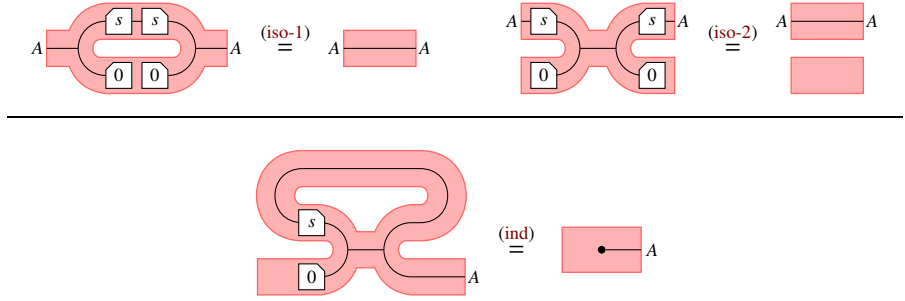

    \mylabel{ax:peanoT:zero:sv}{$0$-sv}
    \mylabel{ax:peanoT:zero:tot}{$0$-tot}
    \mylabel{ax:peanoT:succ:sv}{$s$-sv}
    \mylabel{ax:peanoT:succ:tot}{$s$-tot}
    \mylabel{ax:peanoT:ind}{ind}
    \mylabel{ax:peanoT:iso:1}{iso-1}
    \mylabel{ax:peanoT:iso:2}{iso-2}
    \[
        \begin{array}{c@{\qquad}c}
                
    \InputIfFileExists{peano/ax1/iso1/lhs.tikz}{}{\input{./tikz/peano/ax1/iso1/lhs.tikz}}
 \stackrel{\eqref{ax:peanoT:iso:1}}{=} 
    \InputIfFileExists{peano/ax1/iso1/rhs.tikz}{}{\input{./tikz/peano/ax1/iso1/rhs.tikz}}

                &
                
    \InputIfFileExists{peano/ax1/iso2/lhs.tikz}{}{\input{./tikz/peano/ax1/iso2/lhs.tikz}}
 \stackrel{\eqref{ax:peanoT:iso:2}}{=} 
    \InputIfFileExists{peano/ax1/iso2/rhs.tikz}{}{\input{./tikz/peano/ax1/iso2/rhs.tikz}}

                \\[20pt]
                \hline
                \multicolumn{2}{c}{
                    \begin{array}{c}
                        \\
                        
    \InputIfFileExists{peano/ax1/ind/lhs.tikz}{}{\input{./tikz/peano/ax1/ind/lhs.tikz}}
 \stackrel{\eqref{ax:peanoT:ind}}{=} 
    \InputIfFileExists{peano/ax1/ind/rhs.tikz}{}{\input{./tikz/peano/ax1/ind/rhs.tikz}}

                    \end{array}
                }
        \end{array}
    \]
    \caption{Tape theory of the natural numbers.}
    \label{fig:tape-theory-natural-numbers}
\end{figure}

First we show that $(\sign, \mathbb{P})$ is equivalent to Peano's axiomatisation of natural numbers. Possibly, the most interesting axiom is the principle of induction that, as illustrated below, follows easily from uniformity and \eqref{ax:peanoT:ind}.

\begin{theorem}[Principle of Induction]\label{thm:induction}
    For all morphisms $P \colon \uno \to A$ in $\KTCBP$,
    \[ 
        \text{if } \quad
        
    \begin{tikzpicture}
	\begin{pgfonlayer}{nodelayer}
		\node [style=bbox] (69) at (0, 0) {$0$};
		\node [style=label] (72) at (2.25, 0) {$A$};
		\node [style=none] (76) at (1.75, 0) {};
		\node [style=none] (77) at (1.75, 1) {};
		\node [style=none] (78) at (-1.75, 1) {};
		\node [style=none] (79) at (-1.75, -1) {};
		\node [style=none] (80) at (1.75, -1) {};
	\end{pgfonlayer}
	\begin{pgfonlayer}{edgelayer}
		\draw [tape] (79.center)
			 to (80.center)
			 to (77.center)
			 to (78.center)
			 to cycle;
		\draw (69) to (76.center);
	\end{pgfonlayer}
\end{tikzpicture}
}
 \leq \;
    \begin{tikzpicture}
	\begin{pgfonlayer}{nodelayer}
		\node [style=bbox] (69) at (0, 0) {$P$};
		\node [style=label] (72) at (2.25, 0) {$A$};
		\node [style=none] (76) at (1.75, 0) {};
		\node [style=none] (77) at (1.75, 1) {};
		\node [style=none] (78) at (-1.75, 1) {};
		\node [style=none] (79) at (-1.75, -1) {};
		\node [style=none] (80) at (1.75, -1) {};
	\end{pgfonlayer}
	\begin{pgfonlayer}{edgelayer}
		\draw [tape] (79.center)
			 to (80.center)
			 to (77.center)
			 to (78.center)
			 to cycle;
		\draw (69) to (76.center);
	\end{pgfonlayer}
\end{tikzpicture}
}
 
        \quad \text{ and } \quad 
        
    \begin{tikzpicture}
	\begin{pgfonlayer}{nodelayer}
		\node [style=bbox] (69) at (0.75, 0) {$s$};
		\node [style=label] (72) at (2.25, 0) {$A$};
		\node [style=none] (76) at (1.75, 0) {};
		\node [style=none] (77) at (1.75, 1) {};
		\node [style=none] (78) at (-1.75, 1) {};
		\node [style=none] (79) at (-1.75, -1) {};
		\node [style=none] (80) at (1.75, -1) {};
		\node [style=bbox] (81) at (-0.75, 0) {$P$};
	\end{pgfonlayer}
	\begin{pgfonlayer}{edgelayer}
		\draw [tape] (79.center)
			 to (80.center)
			 to (77.center)
			 to (78.center)
			 to cycle;
		\draw (69) to (76.center);
		\draw (69) to (81);
	\end{pgfonlayer}
\end{tikzpicture}
}
 \leq \;
    \begin{tikzpicture}
	\begin{pgfonlayer}{nodelayer}
		\node [style=bbox] (69) at (0, 0) {$P$};
		\node [style=label] (72) at (2.25, 0) {$A$};
		\node [style=none] (76) at (1.75, 0) {};
		\node [style=none] (77) at (1.75, 1) {};
		\node [style=none] (78) at (-1.75, 1) {};
		\node [style=none] (79) at (-1.75, -1) {};
		\node [style=none] (80) at (1.75, -1) {};
	\end{pgfonlayer}
	\begin{pgfonlayer}{edgelayer}
		\draw [tape] (79.center)
			 to (80.center)
			 to (77.center)
			 to (78.center)
			 to cycle;
		\draw (69) to (76.center);
	\end{pgfonlayer}
\end{tikzpicture}
}
,
        \quad \text{ then } \quad
        
    \begin{tikzpicture}
	\begin{pgfonlayer}{nodelayer}
		\node [style=bbox] (69) at (0, 0) {$P$};
		\node [style=label] (72) at (2.25, 0) {$A$};
		\node [style=none] (76) at (1.75, 0) {};
		\node [style=none] (77) at (1.75, 1) {};
		\node [style=none] (78) at (-1.75, 1) {};
		\node [style=none] (79) at (-1.75, -1) {};
		\node [style=none] (80) at (1.75, -1) {};
	\end{pgfonlayer}
	\begin{pgfonlayer}{edgelayer}
		\draw [tape] (79.center)
			 to (80.center)
			 to (77.center)
			 to (78.center)
			 to cycle;
		\draw (69) to (76.center);
	\end{pgfonlayer}
\end{tikzpicture}
}
 = 
    \begin{tikzpicture}
	\begin{pgfonlayer}{nodelayer}
		\node [style=black] (71) at (-0.5, 0) {};
		\node [style=none] (77) at (-2.25, 1) {};
		\node [style=none] (78) at (1.25, 1) {};
		\node [style=none] (79) at (1.25, -1) {};
		\node [style=none] (80) at (-2.25, -1) {};
		\node [style=none] (82) at (1.25, 0) {};
		\node [style=label] (83) at (1.75, 0) {$A$};
		\node [style=label] (84) at (-2.75, 0) {};
	\end{pgfonlayer}
	\begin{pgfonlayer}{edgelayer}
		\draw [tape] (79.center)
			 to (80.center)
			 to (77.center)
			 to (78.center)
			 to cycle;
		\draw (82.center) to (71);
	\end{pgfonlayer}
\end{tikzpicture}
}
 
    \]
\end{theorem}
\begin{proof} %
    Observe that the following holds:
    \[
        
    \begin{tikzpicture}[scale=0.8]
	\begin{pgfonlayer}{nodelayer}
		\node [style=none] (71) at (-2.5, 1.5) {};
		\node [style=none] (77) at (-2.5, 2.5) {};
		\node [style=none] (80) at (-2.25, 0.5) {};
		\node [style=none] (82) at (2.5, 1.5) {};
		\node [style=none] (85) at (-2.5, -1.5) {};
		\node [style=none] (86) at (-2.25, -0.5) {};
		\node [style=none] (89) at (-2.5, -2.5) {};
		\node [style=none] (90) at (2.5, -1.5) {};
		\node [style=label] (91) at (3.5, 1.5) {$A$};
		\node [style=none] (95) at (-0.5, 1) {};
		\node [style=none] (96) at (-0.5, -1) {};
		\node [style=none] (99) at (2.5, 2.5) {};
		\node [style=none] (100) at (2.25, 0.5) {};
		\node [style=none] (102) at (2.25, -0.5) {};
		\node [style=none] (103) at (2.5, -2.5) {};
		\node [style=none] (106) at (0.5, 1) {};
		\node [style=none] (107) at (0.5, -1) {};
		\node [style=none] (109) at (-1, 0) {};
		\node [style=none] (110) at (1, 0) {};
		\node [style=bbox] (112) at (-3, -1.5) {$0$};
		\node [style=none] (113) at (3, -0.5) {};
		\node [style=none] (114) at (3, -2.5) {};
		\node [style=none] (115) at (3, 2.5) {};
		\node [style=none] (116) at (3, 0.5) {};
		\node [style=none] (117) at (-6.25, -0.5) {};
		\node [style=none] (118) at (-6.25, -2.5) {};
		\node [style=none] (119) at (-6.25, 2.5) {};
		\node [style=none] (120) at (-6.25, 0.5) {};
		\node [style=none] (122) at (3, 1.5) {};
		\node [style=none] (123) at (3, -1.5) {};
		\node [style=label] (124) at (3.5, -1.5) {$A$};
		\node [style=bbox] (133) at (-3, 1.5) {$s$};
		\node [style=bbox] (134) at (-5, 1.5) {$P$};
	\end{pgfonlayer}
	\begin{pgfonlayer}{edgelayer}
		\draw [tape] (80.center)
			 to [bend left=90, looseness=1.50] (86.center)
			 to (117.center)
			 to (118.center)
			 to (89.center)
			 to [bend right] (96.center)
			 to (107.center)
			 to [bend right] (103.center)
			 to (114.center)
			 to (113.center)
			 to (102.center)
			 to [bend left=90, looseness=1.50] (100.center)
			 to (116.center)
			 to (115.center)
			 to (99.center)
			 to [bend right] (106.center)
			 to (95.center)
			 to [bend right] (77.center)
			 to (119.center)
			 to (120.center)
			 to cycle;
		\draw [bend left=45] (110.center) to (82.center);
		\draw [bend right=45] (110.center) to (90.center);
		\draw [bend right=45] (85.center) to (109.center);
		\draw [bend right=45] (109.center) to (71.center);
		\draw (85.center) to (112);
		\draw (109.center) to (110.center);
		\draw (122.center) to (82.center);
		\draw (90.center) to (123.center);
		\draw (133) to (134);
		\draw (133) to (71.center);
	\end{pgfonlayer}
\end{tikzpicture}
}
 \stackrel{\text{(Hypothesis)}}{\leq} \; 
    \begin{tikzpicture}[scale=0.8]
	\begin{pgfonlayer}{nodelayer}
		\node [style=none] (71) at (-2.5, 1.5) {};
		\node [style=none] (77) at (-2.5, 2.5) {};
		\node [style=none] (80) at (-2.25, 0.5) {};
		\node [style=none] (82) at (2.5, 1.5) {};
		\node [style=none] (85) at (-2.5, -1.5) {};
		\node [style=none] (86) at (-2.25, -0.5) {};
		\node [style=none] (89) at (-2.5, -2.5) {};
		\node [style=none] (90) at (2.5, -1.5) {};
		\node [style=label] (91) at (3.5, 1.5) {$A$};
		\node [style=none] (95) at (-0.5, 1) {};
		\node [style=none] (96) at (-0.5, -1) {};
		\node [style=none] (99) at (2.5, 2.5) {};
		\node [style=none] (100) at (2.25, 0.5) {};
		\node [style=none] (102) at (2.25, -0.5) {};
		\node [style=none] (103) at (2.5, -2.5) {};
		\node [style=none] (106) at (0.5, 1) {};
		\node [style=none] (107) at (0.5, -1) {};
		\node [style=none] (109) at (-1, 0) {};
		\node [style=none] (110) at (1, 0) {};
		\node [style=bbox] (112) at (-3, -1.5) {$P$};
		\node [style=none] (113) at (3, -0.5) {};
		\node [style=none] (114) at (3, -2.5) {};
		\node [style=none] (115) at (3, 2.5) {};
		\node [style=none] (116) at (3, 0.5) {};
		\node [style=none] (117) at (-4.25, -0.5) {};
		\node [style=none] (118) at (-4.25, -2.5) {};
		\node [style=none] (119) at (-4.25, 2.5) {};
		\node [style=none] (120) at (-4.25, 0.5) {};
		\node [style=none] (122) at (3, 1.5) {};
		\node [style=none] (123) at (3, -1.5) {};
		\node [style=label] (124) at (3.5, -1.5) {$A$};
		\node [style=bbox] (134) at (-3, 1.5) {$P$};
	\end{pgfonlayer}
	\begin{pgfonlayer}{edgelayer}
		\draw [tape] (80.center)
			 to [bend left=90, looseness=1.50] (86.center)
			 to (117.center)
			 to (118.center)
			 to (89.center)
			 to [bend right] (96.center)
			 to (107.center)
			 to [bend right] (103.center)
			 to (114.center)
			 to (113.center)
			 to (102.center)
			 to [bend left=90, looseness=1.50] (100.center)
			 to (116.center)
			 to (115.center)
			 to (99.center)
			 to [bend right] (106.center)
			 to (95.center)
			 to [bend right] (77.center)
			 to (119.center)
			 to (120.center)
			 to cycle;
		\draw [bend left=45] (110.center) to (82.center);
		\draw [bend right=45] (110.center) to (90.center);
		\draw [bend right=45] (85.center) to (109.center);
		\draw [bend right=45] (109.center) to (71.center);
		\draw (85.center) to (112);
		\draw (109.center) to (110.center);
		\draw (122.center) to (82.center);
		\draw (90.center) to (123.center);
		\draw (134) to (71.center);
	\end{pgfonlayer}
\end{tikzpicture}
}
 \stackrel{\eqref{ax:diagnat}, \eqref{ax:codiagnat}}{=} \; 
    \begin{tikzpicture}[scale=0.8]
	\begin{pgfonlayer}{nodelayer}
		\node [style=none] (77) at (-2.5, 2.5) {};
		\node [style=none] (80) at (-2.25, 0.5) {};
		\node [style=none] (86) at (-2.25, -0.5) {};
		\node [style=none] (89) at (-2.5, -2.5) {};
		\node [style=label] (91) at (5.25, 1.5) {$A$};
		\node [style=none] (95) at (-0.5, 1) {};
		\node [style=none] (96) at (-0.5, -1) {};
		\node [style=none] (99) at (2.5, 2.5) {};
		\node [style=none] (100) at (2.25, 0.5) {};
		\node [style=none] (102) at (2.25, -0.5) {};
		\node [style=none] (103) at (2.5, -2.5) {};
		\node [style=none] (106) at (0.5, 1) {};
		\node [style=none] (107) at (0.5, -1) {};
		\node [style=none] (113) at (4.75, -0.5) {};
		\node [style=none] (114) at (4.75, -2.5) {};
		\node [style=none] (115) at (4.75, 2.5) {};
		\node [style=none] (116) at (4.75, 0.5) {};
		\node [style=none] (117) at (-4, -0.5) {};
		\node [style=none] (118) at (-4, -2.5) {};
		\node [style=none] (119) at (-4, 2.5) {};
		\node [style=none] (120) at (-4, 0.5) {};
		\node [style=none] (122) at (4.75, 1.5) {};
		\node [style=none] (123) at (4.75, -1.5) {};
		\node [style=label] (124) at (5.25, -1.5) {$A$};
		\node [style=bbox] (135) at (3.25, -1.5) {$P$};
		\node [style=bbox] (136) at (3.25, 1.5) {$P$};
	\end{pgfonlayer}
	\begin{pgfonlayer}{edgelayer}
		\draw [tape] (80.center)
			 to [bend left=90, looseness=1.50] (86.center)
			 to (117.center)
			 to (118.center)
			 to (89.center)
			 to [bend right] (96.center)
			 to (107.center)
			 to [bend right] (103.center)
			 to (114.center)
			 to (113.center)
			 to (102.center)
			 to [bend left=90, looseness=1.50] (100.center)
			 to (116.center)
			 to (115.center)
			 to (99.center)
			 to [bend right] (106.center)
			 to (95.center)
			 to [bend right] (77.center)
			 to (119.center)
			 to (120.center)
			 to cycle;
		\draw (136) to (122.center);
		\draw (123.center) to (135);
	\end{pgfonlayer}
\end{tikzpicture}
}
.
    \]
    Thus, by~\eqref{ax:trace:tape:AU2} the inclusion below holds and the derivation concludes the proof.
    \[
        
    \begin{tikzpicture}
	\begin{pgfonlayer}{nodelayer}
		\node [style=black] (71) at (-0.5, 0) {};
		\node [style=none] (77) at (-2.25, 1) {};
		\node [style=none] (78) at (1.25, 1) {};
		\node [style=none] (79) at (1.25, -1) {};
		\node [style=none] (80) at (-2.25, -1) {};
		\node [style=none] (82) at (1.25, 0) {};
		\node [style=label] (83) at (1.75, 0) {$A$};
		\node [style=label] (84) at (-2.75, 0) {};
	\end{pgfonlayer}
	\begin{pgfonlayer}{edgelayer}
		\draw [tape] (79.center)
			 to (80.center)
			 to (77.center)
			 to (78.center)
			 to cycle;
		\draw (82.center) to (71);
	\end{pgfonlayer}
\end{tikzpicture}
}
 \stackrel{\eqref{ax:peanoT:ind}}{=} 
    \begin{tikzpicture}[scale=0.8]
	\begin{pgfonlayer}{nodelayer}
		\node [style=none] (71) at (-2.5, 1.5) {};
		\node [style=none] (77) at (-2.5, 2.5) {};
		\node [style=none] (80) at (-2.25, 0.5) {};
		\node [style=none] (82) at (2.5, 1.5) {};
		\node [style=none] (85) at (-2.5, -1.5) {};
		\node [style=none] (86) at (-2.25, -0.5) {};
		\node [style=none] (89) at (-2.5, -2.5) {};
		\node [style=none] (90) at (2.5, -1.5) {};
		\node [style=label] (91) at (6, -1.5) {$A$};
		\node [style=none] (95) at (-0.5, 1) {};
		\node [style=none] (96) at (-0.5, -1) {};
		\node [style=none] (99) at (2.5, 2.5) {};
		\node [style=none] (100) at (2.25, 0.5) {};
		\node [style=none] (102) at (2.25, -0.5) {};
		\node [style=none] (103) at (2.5, -2.5) {};
		\node [style=none] (106) at (0.5, 1) {};
		\node [style=none] (107) at (0.5, -1) {};
		\node [style=none] (109) at (-1, 0) {};
		\node [style=none] (110) at (1, 0) {};
		\node [style=bbox] (111) at (-3, 1.5) {$s$};
		\node [style=bbox] (112) at (-3, -1.5) {$0$};
		\node [style=none] (113) at (5.5, -0.5) {};
		\node [style=none] (114) at (5.5, -2.5) {};
		\node [style=none] (115) at (3, 2.5) {};
		\node [style=none] (116) at (3, 0.5) {};
		\node [style=none] (117) at (-6, -0.5) {};
		\node [style=none] (118) at (-6, -2.5) {};
		\node [style=none] (119) at (-3.5, 2.5) {};
		\node [style=none] (120) at (-3.5, 0.5) {};
		\node [style=none] (121) at (-3.5, 1.5) {};
		\node [style=none] (122) at (3, 1.5) {};
		\node [style=none] (123) at (5.5, -1.5) {};
		\node [style=none] (124) at (3, 5.5) {};
		\node [style=none] (125) at (3, 3.5) {};
		\node [style=none] (126) at (3, 4.5) {};
		\node [style=none] (127) at (-3.5, 5.5) {};
		\node [style=none] (128) at (-3.5, 3.5) {};
		\node [style=none] (129) at (-3.5, 4.5) {};
	\end{pgfonlayer}
	\begin{pgfonlayer}{edgelayer}
		\draw [tape] (116.center)
			 to [bend right=90, looseness=1.75] (124.center)
			 to (127.center)
			 to [bend left=270, looseness=1.75] (120.center)
			 to (80.center)
			 to [bend left=90, looseness=1.50] (86.center)
			 to (117.center)
			 to (118.center)
			 to (89.center)
			 to [bend right] (96.center)
			 to (107.center)
			 to [bend right] (103.center)
			 to (114.center)
			 to (113.center)
			 to (102.center)
			 to [bend left=90, looseness=1.50] (100.center)
			 to cycle;
		\draw [tape, fill=white] (99.center)
			 to [bend right] (106.center)
			 to (95.center)
			 to [bend right] (77.center)
			 to (119.center)
			 to [bend left=90, looseness=1.50] (128.center)
			 to (125.center)
			 to [bend left=90, looseness=1.50] (115.center)
			 to cycle;
		\draw [bend left=45] (110.center) to (82.center);
		\draw [bend right=45] (110.center) to (90.center);
		\draw [bend right=45] (85.center) to (109.center);
		\draw [bend right=45] (109.center) to (71.center);
		\draw (85.center) to (112);
		\draw (71.center) to (111);
		\draw (121.center) to (111);
		\draw (109.center) to (110.center);
		\draw (82.center) to (122.center);
		\draw (90.center) to (123.center);
		\draw [bend right=90, looseness=1.75] (129.center) to (121.center);
		\draw [bend right=90, looseness=1.75] (122.center) to (126.center);
		\draw (129.center) to (126.center);
	\end{pgfonlayer}
\end{tikzpicture}
}
 \leq 
    \begin{tikzpicture}[scale=0.8]
	\begin{pgfonlayer}{nodelayer}
		\node [style=none] (77) at (-2.5, 2.5) {};
		\node [style=none] (80) at (-2.25, 0.5) {};
		\node [style=none] (86) at (-2.25, -0.5) {};
		\node [style=none] (89) at (-2.5, -2.5) {};
		\node [style=none] (90) at (3.5, -1.5) {};
		\node [style=label] (91) at (6, -1.5) {$A$};
		\node [style=none] (95) at (-0.5, 1) {};
		\node [style=none] (96) at (-0.5, -1) {};
		\node [style=none] (99) at (2.5, 2.5) {};
		\node [style=none] (100) at (2.25, 0.5) {};
		\node [style=none] (102) at (2.25, -0.5) {};
		\node [style=none] (103) at (2.5, -2.5) {};
		\node [style=none] (106) at (0.5, 1) {};
		\node [style=none] (107) at (0.5, -1) {};
		\node [style=bbox] (112) at (3.5, -1.5) {$P$};
		\node [style=none] (113) at (5.5, -0.5) {};
		\node [style=none] (114) at (5.5, -2.5) {};
		\node [style=none] (115) at (3, 2.5) {};
		\node [style=none] (116) at (3, 0.5) {};
		\node [style=none] (117) at (-6, -0.5) {};
		\node [style=none] (118) at (-6, -2.5) {};
		\node [style=none] (119) at (-3.5, 2.5) {};
		\node [style=none] (120) at (-3.5, 0.5) {};
		\node [style=none] (123) at (5.5, -1.5) {};
		\node [style=none] (124) at (3, 5.5) {};
		\node [style=none] (125) at (3, 3.5) {};
		\node [style=none] (127) at (-3.5, 5.5) {};
		\node [style=none] (128) at (-3.5, 3.5) {};
	\end{pgfonlayer}
	\begin{pgfonlayer}{edgelayer}
		\draw [tape] (116.center)
			 to [bend right=90, looseness=1.75] (124.center)
			 to (127.center)
			 to [bend left=270, looseness=1.75] (120.center)
			 to (80.center)
			 to [bend left=90, looseness=1.50] (86.center)
			 to (117.center)
			 to (118.center)
			 to (89.center)
			 to [bend right] (96.center)
			 to (107.center)
			 to [bend right] (103.center)
			 to (114.center)
			 to (113.center)
			 to (102.center)
			 to [bend left=90, looseness=1.50] (100.center)
			 to cycle;
		\draw [tape, fill=white] (99.center)
			 to [bend right] (106.center)
			 to (95.center)
			 to [bend right] (77.center)
			 to (119.center)
			 to [bend left=90, looseness=1.50] (128.center)
			 to (125.center)
			 to [bend left=90, looseness=1.50] (115.center)
			 to cycle;
		\draw (90.center) to (123.center);
	\end{pgfonlayer}
\end{tikzpicture}
}
 \stackrel{\eqref{ax:trace:tape:kstar-id}}{=} 
    }
.
    \]
\end{proof}

The common form of Peano's axioms state that $0$ is a natural number, $s$ is an injective function and that $0$ is not the successor of any natural number. These are illustrated by means of tape diagrams in Figure \ref{fig:peano-theory-natural-numbers}, where we use the characterisation of total, single valued and injective relations provided by \Cref{lemma:cb:adjoints} in Section \ref{sc:cb:background}. Observe that the axiom \eqref{ax:peano:bottom} states that $\{x\in X \mid (x,0)\in s \} \subseteq \emptyset$.
At the bottom of Figure \ref{fig:peano-theory-natural-numbers}, there is the induction principle as expressed by Peano. Note that, since \eqref{ax:peano:ind} is an implication this is not a kc tape theory. Nevertheless, one can see that this set of laws is equivalent to $\mathbb{P}$.

\begin{figure}[H]
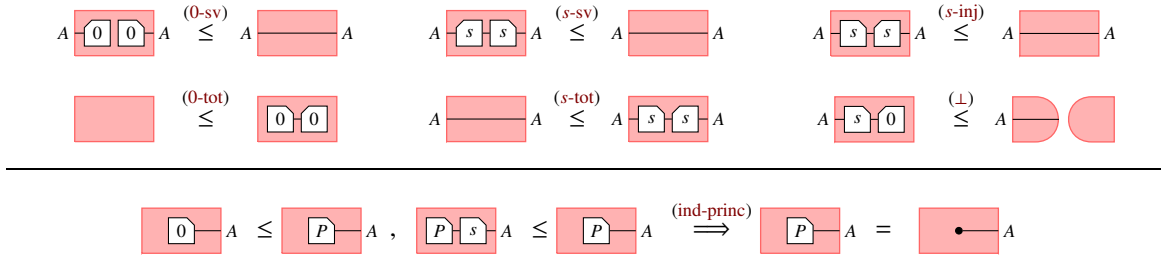

    \mylabel{ax:peano:zero:sv}{$0$-sv}
    \mylabel{ax:peano:zero:tot}{$0$-tot}
    \mylabel{ax:peano:succ:sv}{$s$-sv}
    \mylabel{ax:peano:succ:tot}{$s$-tot}
    \mylabel{ax:peano:succ:inj}{$s$-inj}
    \mylabel{ax:peano:bottom}{$\bot$}
    \mylabel{ax:peano:ind}{ind-princ}
    \[
        \begin{array}{cc}
            \begin{array}{cc}
                \begin{array}[c]{c@{}c@{}c}
                    
    \InputIfFileExists{peano/ax1/zeroSV/lhs.tikz}{}{\input{./tikz/peano/ax1/zeroSV/lhs.tikz}}
 &\stackrel{\eqref{ax:peano:zero:sv}}{\leq}& 
    \InputIfFileExists{peano/ax1/zeroSV/rhs.tikz}{}{\input{./tikz/peano/ax1/zeroSV/rhs.tikz}}
 
                    \\[15pt]
                    
    \InputIfFileExists{peano/ax1/zeroTOT/lhs.tikz}{}{\input{./tikz/peano/ax1/zeroTOT/lhs.tikz}}
 &\stackrel{\eqref{ax:peano:zero:tot}}{\leq}& 
    \InputIfFileExists{peano/ax1/zeroTOT/rhs.tikz}{}{\input{./tikz/peano/ax1/zeroTOT/rhs.tikz}}
  
                \end{array}
                &
                \begin{array}[c]{c@{}c@{}c}
                    
    \InputIfFileExists{peano/ax1/succSV/lhs.tikz}{}{\input{./tikz/peano/ax1/succSV/lhs.tikz}}
 &\stackrel{\eqref{ax:peano:succ:sv}}{\leq}& 
    \InputIfFileExists{peano/ax1/succSV/rhs.tikz}{}{\input{./tikz/peano/ax1/succSV/rhs.tikz}}
 
                    \\[15pt]
                    
    \InputIfFileExists{peano/ax1/succTOT/lhs.tikz}{}{\input{./tikz/peano/ax1/succTOT/lhs.tikz}}
 &\stackrel{\eqref{ax:peano:succ:tot}}{\leq}& 
    \InputIfFileExists{peano/ax1/succTOT/rhs.tikz}{}{\input{./tikz/peano/ax1/succTOT/rhs.tikz}}

                \end{array}        
            \end{array}
            &
            \begin{array}[c]{c@{}c@{}c}
                
    \InputIfFileExists{peano/ax2/succINJ/lhs.tikz}{}{\input{./tikz/peano/ax2/succINJ/lhs.tikz}}
 &\stackrel{\eqref{ax:peano:succ:inj}}{\leq}& 
    \InputIfFileExists{peano/ax2/succINJ/rhs.tikz}{}{\input{./tikz/peano/ax2/succINJ/rhs.tikz}}

                \\[15pt]
                
    \InputIfFileExists{peano/ax2/bottom/lhs.tikz}{}{\input{./tikz/peano/ax2/bottom/lhs.tikz}}
 &\stackrel{\eqref{ax:peano:bottom}}{\leq}& 
    \InputIfFileExists{peano/ax2/bottom/rhs.tikz}{}{\input{./tikz/peano/ax2/bottom/rhs.tikz}}

            \end{array}
            \\[30pt]
            \hline
            \multicolumn{2}{c}{
                \begin{array}{c}
                    \\
                    
    \InputIfFileExists{peano/ax2/ind/base/lhs.tikz}{}{\input{./tikz/peano/ax2/ind/base/lhs.tikz}}
 \leq 
    \InputIfFileExists{peano/ax2/ind/base/rhs.tikz}{}{\input{./tikz/peano/ax2/ind/base/rhs.tikz}}
, \;\; 
    \InputIfFileExists{peano/ax2/ind/hp/lhs.tikz}{}{\input{./tikz/peano/ax2/ind/hp/lhs.tikz}}
 \leq 
    \InputIfFileExists{peano/ax2/ind/hp/rhs.tikz}{}{\input{./tikz/peano/ax2/ind/hp/rhs.tikz}}
 
                    \stackrel{\eqref{ax:peano:ind}}{\implies }
                    
    \InputIfFileExists{peano/ax2/ind/concl/lhs.tikz}{}{\input{./tikz/peano/ax2/ind/concl/lhs.tikz}}
 = 
    \InputIfFileExists{peano/ax2/ind/concl/rhs.tikz}{}{\input{./tikz/peano/ax2/ind/concl/rhs.tikz}}

                \end{array}
            }
        \end{array}
    \]
    \caption{Peano's theory of the natural numbers.}
    \label{fig:peano-theory-natural-numbers}
\end{figure}

\begin{lemma}\label{lemma:Peano}
    The laws in Figures~\ref{fig:tape-theory-natural-numbers} and \ref{fig:peano-theory-natural-numbers} are equivalent. 
\end{lemma}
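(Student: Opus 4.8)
The plan is to prove the two implications separately, exploiting the fact that the copairing $[0,s] \defeq (0 \piu s);\codiag{A} \colon \uno \piu A \to A$ has $0$ and $s$ as its two components. Writing $\iota_1 \defeq \id{\uno}\piu\cobang{A}$ and $\iota_2 \defeq \cobang{\uno}\piu\id{A}$ for the biproduct injections, one checks $\iota_1;[0,s]=0$ and $\iota_2;[0,s]=s$, and both $\iota_1,\iota_2$ are maps that are moreover injective as arrows of a cartesian bicategory. Throughout I would use \Cref{lemma:cb:adjoints} to read single-valuedness, totality, injectivity and surjectivity as the adjunction inequalities $\op f;f \leq \id{}$, $\id{} \leq f;\op f$, $f;\op f\leq\id{}$ and $\id{}\leq \op f;f$; the matrix normal form of \Cref{prop:matrixform}; and the identity $\op{[0,s]};[0,s] = (\op 0;0) + (\op s;s)$, obtained by pushing $\op{(\cdot)}$ through $\piu$ and $\codiag{}$ (\Cref{table:kc rig derived laws}) and recognising the convolution sum of \eqref{eq:covolution}.

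For the direction from \Cref{fig:tape-theory-natural-numbers} to \Cref{fig:peano-theory-natural-numbers}, note that \eqref{ax:peanoT:iso:1} and \eqref{ax:peanoT:iso:2} together say that $[0,s]$ is an isomorphism with inverse $\op{[0,s]}$, hence simultaneously single-valued, total, injective and surjective. The single-valuedness, totality and injectivity of $0$ and $s$ (axioms \eqref{ax:peano:zero:sv}--\eqref{ax:peano:succ:inj}) then follow by composing with $\iota_1,\iota_2$: for instance $\op s;s = \op{[0,s]};\op{\iota_2};\iota_2;[0,s] \leq \op{[0,s]};[0,s] \leq \id{A}$ since $\iota_2$ is single-valued, and dually for totality and injectivity. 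The axiom \eqref{ax:peano:bottom} is exactly the off-diagonal block of \eqref{ax:peanoT:iso:2}: reading the matrix normal form of $[0,s];\op{[0,s]}$ (\Cref{prop:matrixform}) against that of $\id{\uno\piu A}$ forces $s;\op 0 = \bot$. Finally the induction principle \eqref{ax:peano:ind} is precisely the content of \Cref{thm:induction}, whose proof derives it from \eqref{ax:peanoT:ind} via uniformity \eqref{ax:trace:tape:AU2} and \eqref{ax:trace:tape:kstar-id}.

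For the converse, I would first re-establish \eqref{ax:peanoT:iso:2} by computing the four blocks of $[0,s];\op{[0,s]}$: the diagonal blocks $0;\op 0$ and $s;\op s$ equal $\id{\uno}$ and $\id{A}$ by totality (\eqref{ax:peano:zero:tot}, \eqref{ax:peano:succ:tot}), the triviality $0;\op 0\leq \id{\uno}=\top$, and injectivity \eqref{ax:peano:succ:inj}; the off-diagonal blocks $0;\op s$ and $s;\op 0$ vanish by \eqref{ax:peano:bottom}. For \eqref{ax:peanoT:iso:1}, single-valuedness of $0$ and $s$ gives $\op{[0,s]};[0,s] = (\op 0;0)+(\op s;s) \leq \id{A}$, since a join of two coreflexives is again a coreflexive (\Cref{lemma:order-adjointness}); the reverse inequality $\id{A}\leq (\op 0;0)+(\op s;s)$ is surjectivity of $[0,s]$, i.e.\ that every element is zero or a successor, and this is where induction enters: applying \eqref{ax:peano:ind} to the point $P\colon\uno\to A$ corresponding (\Cref{lemma:coreflexive properties}) to the coreflexive $(\op 0;0)+(\op s;s)$, whose base and step hypotheses hold by construction, yields $P=\top$, hence the coreflexive equals $\id{A}$. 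The remaining axiom \eqref{ax:peanoT:ind} is obtained by applying \eqref{ax:peano:ind} to $P \defeq 0;\kstar{s}$, whose base case $0\leq 0;\kstar{s}$ and inductive step $0;\kstar{s};s\leq 0;\kstar{s}$ are immediate from the Kleene laws \eqref{eq:Kllenelaw}, giving $0;\kstar{s}=\top$.

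The main obstacle is the equivalence of the two formulations of induction: \eqref{ax:peanoT:ind} is an equation between tapes, whereas \eqref{ax:peano:ind} is an inference rule. Bridging them requires, in one direction, \Cref{thm:induction} (and with it uniformity and \eqref{ax:trace:tape:kstar-id}), and in the other a careful use of the coreflexive--point correspondence of \Cref{lemma:coreflexive properties} together with the unfolding laws of $\kstar{(\cdot)}$. By contrast, the transfer of map and comap properties through the injections $\iota_1,\iota_2$ and the reading of the matrix blocks are routine once \Cref{prop:matrixform} and \Cref{lemma:cb:adjoints} are in hand.
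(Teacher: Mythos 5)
Your proof is correct and follows essentially the same route as the paper: the same splitting of the two directions, the same reading of \eqref{ax:peanoT:iso:2} as a matrix-block identity via \Cref{prop:matrixform}, the same derivation of the single-valuedness axioms from \eqref{ax:peanoT:iso:1}, the same appeal to \Cref{thm:induction} for the induction rule, and the same witness $P \defeq 0;\kstar{s}$ for \eqref{ax:peanoT:ind}. The one point where you genuinely diverge is the surjectivity half of \eqref{ax:peanoT:iso:1} in the converse direction: the paper derives it from the already-established tape axiom \eqref{ax:peanoT:ind} together with the fixpoint law of \Cref{prop:star-fixpoint} (unfolding $\kstar{s} = \id{} + \kstar{s};s$ once inside $0;\kstar{s} = \top$), whereas you apply the inference rule \eqref{ax:peano:ind} directly to the point corresponding to the coreflexive $(\op{0};0)+(\op{s};s)$ via \Cref{lemma:coreflexive properties}. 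Your variant is sound and arguably more symmetric with the rest of the argument, but the base and step hypotheses you dismiss as holding ``by construction'' do require a short computation using totality of $0$ and $s$ (e.g.\ $\codischarger{A};\op{s} = \codischarger{A}$), so that step should be spelled out rather than asserted.
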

\begin{proof}
    First, we prove that the axioms in Figure~\ref{fig:tape-theory-natural-numbers} entail those in Figure~\ref{fig:peano-theory-natural-numbers}.
    \begin{itemize}
        \item \eqref{ax:peano:succ:sv},\eqref{ax:peano:zero:sv} follow from \eqref{ax:peanoT:iso:1}, i.e.
        \[
            
    % [inline block 1: 17 envs, 45451 chars -> data_tex | \begin{tikzpicture} 	\begin{pgfonlayer}{nodelayer}...]

}
 \\
                                                            &\leftstackrel{\text{(Proposition~\ref{prop:star-fixpoint})}}{=} 
    }
 \stackrel{\eqref{ax:peanoT:ind}}{=} 
    }

        \end{align*}
        \item \eqref{ax:peanoT:iso:2} follow from \eqref{ax:peano:succ:tot}, \eqref{ax:peano:succ:inj}, \eqref{ax:peano:zero:tot}, \eqref{ax:peano:bottom} as shown in \eqref{eq:peano:iso2-equivalence}.
    \end{itemize}

\end{proof}

In \cite{dedekind1888nature}, Dedekind showed that any two models of Peano's axioms are isomorphic, and thus any model of $(\sign, \mathbb{P})$ is isomorphic to the one on natural numbers.

\subsection{First Steps in Arithmetic}
To give to the reader a taste of how one can program with tapes, we now illustrate how to start to encode arithmetic within  $(\sign, \mathbb{P})$.
The tape for addition is illustrated below.
\begin{equation}\label{eq:add}
    
    \InputIfFileExists{addition/addCirc.tikz}{}{\input{./tikz/addition/addCirc.tikz}}
 \defeq 
    \InputIfFileExists{addition/additionDef.tikz}{}{\input{./tikz/addition/additionDef.tikz}}

\end{equation}
As it will be clearer later, this tape can be thought as a simple imperative program:
\begin{center}
\texttt{add(x,y) = while (x>0) \{ x:=x-1; y:=y+1 \}; return y}
\end{center}
The variable \texttt{x} corresponds to the top wire in \eqref{eq:add}, while \texttt{y} to the bottom one. At any iteration, the program checks whether 
 \texttt{x} is $0$, in which case it returns \texttt{x}, or the successor of some number, in which case \texttt{x} takes such number, while \texttt{y} takes its own successor.

One can easily prove that $
    \InputIfFileExists{addition/addCirc.tikz}{}{\input{./tikz/addition/addCirc.tikz}}
$ satisfies the usual inductive definition of addition in Peano's arithmetics.
\begin{lemma}
    The following hold in $\KTCBP$:
    \begin{enumerate}
        \item $
    \InputIfFileExists{addition/addZero.tikz}{}{\input{./tikz/addition/addZero.tikz}}
 = 
    \InputIfFileExists{addition/addZeroRhs.tikz}{}{\input{./tikz/addition/addZeroRhs.tikz}}
 \qquad \textnormal{(\texttt{ add(0,y) = y })}$ 
        \item $
    \InputIfFileExists{addition/addSuccLhs.tikz}{}{\input{./tikz/addition/addSuccLhs.tikz}}
 = 
    \InputIfFileExists{addition/addSuccRhs.tikz}{}{\input{./tikz/addition/addSuccRhs.tikz}}
 \qquad \textnormal{(\texttt{ add(succ(x),y) = succ(add(x,y)) })}$
    \end{enumerate}
\end{lemma}
\begin{proof}
    First, recall that by Proposition~\ref{prop:star-fixpoint},
    \begin{equation}\label{eq:add-fixpoint}
        
    \InputIfFileExists{addition/additionDef.tikz}{}{\input{./tikz/addition/additionDef.tikz}}
 = 
    \InputIfFileExists{addition/addStar.tikz}{}{\input{./tikz/addition/addStar.tikz}}
. 
    \end{equation}
    Then, observe that for $1.$ the following holds:
    \begin{align*}
        
    \InputIfFileExists{addition/addZero.tikz}{}{\input{./tikz/addition/addZero.tikz}}
 &\leftstackrel{\eqref{eq:add-fixpoint}}{=} 
    \InputIfFileExists{addition/partialEval/zero/step1.tikz}{}{\input{./tikz/addition/partialEval/zero/step1.tikz}}
 \stackrel{\eqref{ax:diagnat}, \eqref{ax:codiagnat}}{=} 
    \InputIfFileExists{addition/partialEval/zero/step2.tikz}{}{\input{./tikz/addition/partialEval/zero/step2.tikz}}
 \\
                                   &\leftstackrel{\eqref{ax:peano:bottom}}{=} 
    \InputIfFileExists{addition/partialEval/zero/step3.tikz}{}{\input{./tikz/addition/partialEval/zero/step3.tikz}}
 \stackrel{\eqref{ax:cobangnat}}{=} 
    \InputIfFileExists{addition/partialEval/zero/step4.tikz}{}{\input{./tikz/addition/partialEval/zero/step4.tikz}}
 \stackrel{\eqref{ax:diagun}, \eqref{ax:codiagun}}{=} 
    \InputIfFileExists{addition/partialEval/zero/step5.tikz}{}{\input{./tikz/addition/partialEval/zero/step5.tikz}}
 \\
                                   &\leftstackrel{\eqref{ax:peano:zero:tot}}{=} 
    \InputIfFileExists{addition/partialEval/zero/step6.tikz}{}{\input{./tikz/addition/partialEval/zero/step6.tikz}}
. \\
    \end{align*}
    And for $2.$ the following holds:
    \begingroup
    \allowdisplaybreaks
    \begin{align*}
        
    \InputIfFileExists{addition/addSuccLhs.tikz}{}{\input{./tikz/addition/addSuccLhs.tikz}}
 &\leftstackrel{\eqref{eq:add-fixpoint}}{=} 
    \InputIfFileExists{addition/partialEval/succ/step1.tikz}{}{\input{./tikz/addition/partialEval/succ/step1.tikz}}
 \stackrel{\eqref{ax:diagnat}, \eqref{ax:codiagnat}}{=} 
    \InputIfFileExists{addition/partialEval/succ/step2.tikz}{}{\input{./tikz/addition/partialEval/succ/step2.tikz}}
 \\
                                      &\leftstackrel{\eqref{ax:peano:succ:inj}, \eqref{ax:peano:succ:tot}}{=} 
    \InputIfFileExists{addition/partialEval/succ/step3.tikz}{}{\input{./tikz/addition/partialEval/succ/step3.tikz}}
 \stackrel{\eqref{ax:peano:bottom}}{=} 
    \InputIfFileExists{addition/partialEval/succ/step4.tikz}{}{\input{./tikz/addition/partialEval/succ/step4.tikz}}
 \\
                                      &\leftstackrel{\eqref{ax:diagun}, \eqref{ax:codiagun}}{=} 
    \InputIfFileExists{addition/partialEval/succ/step5.tikz}{}{\input{./tikz/addition/partialEval/succ/step5.tikz}}
 \stackrel{\eqref{ax:trace:tape:sliding}}{=} 
    \InputIfFileExists{addition/partialEval/succ/step6.tikz}{}{\input{./tikz/addition/partialEval/succ/step6.tikz}}
  \stackrel{\eqref{ax:codiagnat}, \eqref{ax:diagnat}}{=} 
    \InputIfFileExists{addition/partialEval/succ/step7.tikz}{}{\input{./tikz/addition/partialEval/succ/step7.tikz}}
 \\
                                      &\leftstackrel{\eqref{eq:add-fixpoint}}{=} 
    \InputIfFileExists{addition/addSuccRhs.tikz}{}{\input{./tikz/addition/addSuccRhs.tikz}}
.
    \end{align*}
    \endgroup
\end{proof}

While, it is straightforward that $
    \InputIfFileExists{addition/addCirc.tikz}{}{\input{./tikz/addition/addCirc.tikz}}
$ terminates with all possible inputs, it is interesting to see how this can be proved within the kc tape theory $\mathbb{P}$.

\begin{lemma}
    The function $
    \InputIfFileExists{addition/addCirc.tikz}{}{\input{./tikz/addition/addCirc.tikz}}
$ is total, i.e. the following equality holds:
    \[ 
    \begin{tikzpicture}[scale=0.8]
	\begin{pgfonlayer}{nodelayer}
		\node [style=none] (71) at (2.5, 3.375) {};
		\node [style=none] (77) at (2.5, 4) {};
		\node [style=none] (80) at (2.25, 2) {};
		\node [style=none] (82) at (-2.5, 3.375) {};
		\node [style=none] (85) at (2.5, 0.375) {};
		\node [style=none] (86) at (2.25, 1) {};
		\node [style=none] (89) at (2.5, -1) {};
		\node [style=none] (90) at (-2.5, 0.375) {};
		\node [style=label] (91) at (-6, 0.375) {$A$};
		\node [style=none] (95) at (0.5, 2.5) {};
		\node [style=none] (96) at (0.5, 0.5) {};
		\node [style=none] (99) at (-2.5, 4) {};
		\node [style=none] (100) at (-2.25, 2) {};
		\node [style=none] (102) at (-2.25, 1) {};
		\node [style=none] (103) at (-2.5, -1) {};
		\node [style=none] (106) at (-0.5, 2.5) {};
		\node [style=none] (107) at (-0.5, 0.5) {};
		\node [style=none] (109) at (1, 1.875) {};
		\node [style=none] (110) at (-1, 1.875) {};
		\node [style=bboxOp, scale=0.6] (112) at (4, 0.375) {$0$};
		\node [style=none] (113) at (-5.5, 1) {};
		\node [style=none] (114) at (-5.5, -1) {};
		\node [style=none] (115) at (-3, 4) {};
		\node [style=none] (116) at (-3, 2) {};
		\node [style=none] (117) at (6, 1) {};
		\node [style=none] (118) at (6, -1) {};
		\node [style=none] (119) at (3.5, 4) {};
		\node [style=none] (120) at (3.5, 2) {};
		\node [style=none] (121) at (3.5, 3.375) {};
		\node [style=none] (122) at (-3, 3.375) {};
		\node [style=none] (123) at (-5.5, 0.375) {};
		\node [style=none] (124) at (-3, 7) {};
		\node [style=none] (125) at (-3, 5) {};
		\node [style=none] (126) at (-3, 5.625) {};
		\node [style=none] (127) at (3.5, 7) {};
		\node [style=none] (128) at (3.5, 5) {};
		\node [style=none] (129) at (3.5, 5.625) {};
		\node [style=none] (130) at (2.5, 2.6) {};
		\node [style=none] (131) at (-2.5, 2.6) {};
		\node [style=none] (132) at (2.5, -0.4) {};
		\node [style=none] (133) at (-2.5, -0.4) {};
		\node [style=none] (134) at (1, 1.1) {};
		\node [style=none] (135) at (-1, 1.1) {};
		\node [style=black] (137) at (5, -0.4) {};
		\node [style=none] (138) at (3.5, 2.6) {};
		\node [style=none] (139) at (-3, 2.6) {};
		\node [style=none] (140) at (-5.5, -0.4) {};
		\node [style=none] (141) at (-3, 6.35) {};
		\node [style=none] (142) at (3.5, 6.35) {};
		\node [style=label] (143) at (-6, -0.375) {$A$};
		\node [style=bbox, scale=0.6] (145) at (3, 2.525) {$s$};
		\node [style=bboxOp, scale=0.6] (146) at (3, 3.475) {$s$};
	\end{pgfonlayer}
	\begin{pgfonlayer}{edgelayer}
		\draw [tape] (116.center)
			 to [bend left=90, looseness=1.75] (124.center)
			 to (127.center)
			 to [bend right=270, looseness=1.75] (120.center)
			 to (80.center)
			 to [bend right=90, looseness=1.50] (86.center)
			 to (117.center)
			 to (118.center)
			 to (89.center)
			 to [bend left] (96.center)
			 to (107.center)
			 to [bend left] (103.center)
			 to (114.center)
			 to (113.center)
			 to (102.center)
			 to [bend right=90, looseness=1.50] (100.center)
			 to cycle;
		\draw [tape, fill=white] (99.center)
			 to [bend left] (106.center)
			 to (95.center)
			 to [bend left] (77.center)
			 to (119.center)
			 to [bend right=90, looseness=1.50] (128.center)
			 to (125.center)
			 to [bend right=90, looseness=1.50] (115.center)
			 to cycle;
		\draw [bend right=45] (110.center) to (82.center);
		\draw [bend left=45] (110.center) to (90.center);
		\draw [bend left=45] (85.center) to (109.center);
		\draw [bend left=45] (109.center) to (71.center);
		\draw (85.center) to (112);
		\draw (109.center) to (110.center);
		\draw (82.center) to (122.center);
		\draw (90.center) to (123.center);
		\draw [bend left=90, looseness=1.75] (129.center) to (121.center);
		\draw [bend left=90, looseness=1.75] (122.center) to (126.center);
		\draw (129.center) to (126.center);
		\draw [bend right=45] (135.center) to (131.center);
		\draw [bend left=45] (135.center) to (133.center);
		\draw [bend left=45] (132.center) to (134.center);
		\draw [bend left=45] (134.center) to (130.center);
		\draw (132.center) to (137);
		\draw (134.center) to (135.center);
		\draw (131.center) to (139.center);
		\draw (133.center) to (140.center);
		\draw [bend left=90, looseness=1.75] (142.center) to (138.center);
		\draw [bend left=90, looseness=1.75] (139.center) to (141.center);
		\draw (142.center) to (141.center);
		\draw (71.center) to (121.center);
		\draw (138.center) to (130.center);
	\end{pgfonlayer}
\end{tikzpicture}
}
 = 
    \begin{tikzpicture}[scale=0.8]
	\begin{pgfonlayer}{nodelayer}
		\node [style=black] (90) at (0.25, 0.375) {};
		\node [style=label] (91) at (-2, 0.375) {$A$};
		\node [style=none] (102) at (1.5, 1) {};
		\node [style=none] (103) at (1.5, -1) {};
		\node [style=none] (113) at (-1.5, 1) {};
		\node [style=none] (114) at (-1.5, -1) {};
		\node [style=none] (123) at (-1.5, 0.375) {};
		\node [style=black] (133) at (0.25, -0.4) {};
		\node [style=none] (140) at (-1.5, -0.4) {};
		\node [style=label] (143) at (-2, -0.375) {$A$};
	\end{pgfonlayer}
	\begin{pgfonlayer}{edgelayer}
		\draw[tape] (102.center)
			 to (103.center)
			 to (114.center)
			 to (113.center)
			 to cycle;
		\draw (90) to (123.center);
		\draw (133) to (140.center);
	\end{pgfonlayer}
\end{tikzpicture}
}
 \]
\end{lemma}
\begin{proof}
    First observe that the following holds:
    \begin{align*}
        
    \InputIfFileExists{addition/terminates/derivation/step1.tikz}{}{\input{./tikz/addition/terminates/derivation/step1.tikz}}
 &\leftstackrel{\eqref{ax:dischargeradj2}}{\leq} 
    \InputIfFileExists{addition/terminates/derivation/step2.tikz}{}{\input{./tikz/addition/terminates/derivation/step2.tikz}}
 \stackrel{\eqref{ax:codiagnat}, \eqref{ax:diagnat}}{=} 
    \InputIfFileExists{addition/terminates/derivation/step3.tikz}{}{\input{./tikz/addition/terminates/derivation/step3.tikz}}
 \stackrel{\eqref{ax:dischargeradj1}}{\leq} 
    \InputIfFileExists{addition/terminates/derivation/step4.tikz}{}{\input{./tikz/addition/terminates/derivation/step4.tikz}}
 \\
                                                       &\leftstackrel{\eqref{ax:peano:succ:sv}, \eqref{ax:peano:succ:tot}}{=} 
    \InputIfFileExists{addition/terminates/derivation/step5.tikz}{}{\input{./tikz/addition/terminates/derivation/step5.tikz}}
.
    \end{align*}
    Then, by~\eqref{ax:trace:tape:AU1}, the inequality below holds and the derivation concludes the proof.
    \[
        
    }
  \stackrel{\eqref{ax:peanoT:ind}}{=} 
    \InputIfFileExists{addition/terminates/derivation/unif_lhs.tikz}{}{\input{./tikz/addition/terminates/derivation/unif_lhs.tikz}}
 \leq 
    \InputIfFileExists{addition/terminates/derivation/unif_rhs.tikz}{}{\input{./tikz/addition/terminates/derivation/unif_rhs.tikz}}

    \]
\end{proof}

\section{Diagrammatic Hoare Logic}\label{sec:hoare}

In this section we illustrate how Kleene Cartesian tapes can provide a comfortable setting to reason about imperative programs. For the sake of generality, we avoid to fix basic types and operations and, rather, we work parametrically with respect to a triple $(\sort,\mathcal{F},\mathcal{P})$. $\sort$ is a set of sorts, ranging over $A,B,C, \dots$, representing basic types; $\mathcal{F}$ is a set of function symbols, ranging over $f,g,h,\dots$, equipped with an arity in $\sort^\star$ and a coarity in $\sort$. As usual, we write $f\colon U \to A$ to mean that $f$ has arity $U$ and coarity $A$;  $\mathcal{P}$ is a set of predicate symbols, ranging over $P,Q,R,\dots$, equipped only with an arity in $\sort^\star$. The coarity, for all predicate symbols, is fixed to be $1$.

For any predicate symbol $P$, we consider an extra symbol $\bar{P}$, with the same arity of $P$, that will represent its negation. We fix $\bar{\mathcal{P}} \defeq \{\bar{P} \mid P\in \mathcal{P}\}$.
We consider the monoidal signature $\sign \defeq \mathcal{F} \cup \mathcal{P} \cup \bar{\mathcal{P}}$. %
The kc tape theory $\basicR$ contains the following equations for all $f\colon U \to A$ in $\mathcal{F}$
\begin{equation}\label{eq:deterministic-total}
  
    \begin{tikzpicture}[scale=1.5]
	\begin{pgfonlayer}{nodelayer}
		\node [style=bbox, scale=0.9] (107) at (0, 0) {$f$};
		\node [style=label] (110) at (-1.25, 0) {$U$};
		\node [style=none] (117) at (-0.75, 0) {};
		\node [style=label] (120) at (2, -0.375) {$A$};
		\node [style=black] (121) at (0.75, 0) {};
		\node [style=none] (122) at (1.25, -0.375) {};
		\node [style=none] (123) at (1.25, 0.375) {};
		\node [style=label] (124) at (2, 0.375) {$A$};
		\node [style=none] (125) at (1.5, -0.375) {};
		\node [style=none] (126) at (1.5, 0.375) {};
		\node [style=none] (127) at (-0.75, 0.75) {};
		\node [style=none] (128) at (1.5, 0.75) {};
		\node [style=none] (129) at (1.5, -0.75) {};
		\node [style=none] (130) at (-0.75, -0.75) {};
	\end{pgfonlayer}
	\begin{pgfonlayer}{edgelayer}
    \draw [tape] (129.center)
			 to (128.center)
			 to (127.center)
			 to (130.center)
			 to cycle;
		\draw [bend right] (123.center) to (121);
		\draw [bend right] (121) to (122.center);
		\draw (107) to (121);
		\draw (123.center) to (126.center);
		\draw (125.center) to (122.center);
		\draw (117.center) to (107);
	\end{pgfonlayer}
\end{tikzpicture}
}
 = 
    \begin{tikzpicture}[scale=1.5]
	\begin{pgfonlayer}{nodelayer}
		\node [style=label] (110) at (-1.25, 0) {$U$};
		\node [style=label] (120) at (2.25, -0.375) {$A$};
		\node [style=label] (124) at (2.25, 0.375) {$A$};
		\node [style=none] (127) at (-0.75, 0.75) {};
		\node [style=none] (128) at (1.75, 0.75) {};
		\node [style=none] (129) at (1.75, -0.75) {};
		\node [style=none] (130) at (-0.75, -0.75) {};
		\node [style=none] (131) at (-0.75, 0) {};
		\node [style=black] (133) at (0, 0) {};
		\node [style=none] (136) at (1.75, 0.375) {};
		\node [style=none] (137) at (1.75, -0.375) {};
		\node [style=none] (138) at (0.5, 0.375) {};
		\node [style=none] (139) at (0.5, -0.375) {};
		\node [style=bbox, scale=0.8] (140) at (1, 0.375) {$f$};
		\node [style=bbox, scale=0.8] (141) at (1, -0.375) {$f$};
	\end{pgfonlayer}
	\begin{pgfonlayer}{edgelayer}
		\draw [tape] (129.center)
			 to (128.center)
			 to (127.center)
			 to (130.center)
			 to cycle;
		\draw (131.center) to (133);
		\draw [bend left] (133) to (138.center);
		\draw [bend right] (133) to (139.center);
		\draw (139.center) to (141);
		\draw (141) to (137.center);
		\draw (136.center) to (140);
		\draw (140) to (138.center);
	\end{pgfonlayer}
\end{tikzpicture}
}
 \qquad 
    \begin{tikzpicture}[scale=1.5]
	\begin{pgfonlayer}{nodelayer}
		\node [style=bbox, scale=0.9] (107) at (0, 0) {$f$};
		\node [style=label] (110) at (-1.25, 0) {$U$};
		\node [style=none] (117) at (-0.75, 0) {};
		\node [style=black] (121) at (0.75, 0) {};
		\node [style=none] (127) at (-0.75, 0.75) {};
		\node [style=none] (128) at (1.5, 0.75) {};
		\node [style=none] (129) at (1.5, -0.75) {};
		\node [style=none] (130) at (-0.75, -0.75) {};
	\end{pgfonlayer}
	\begin{pgfonlayer}{edgelayer}
		\draw [tape] (129.center)
			 to (128.center)
			 to (127.center)
			 to (130.center)
			 to cycle;
		\draw (107) to (121);
		\draw (117.center) to (107);
	\end{pgfonlayer}
\end{tikzpicture}
}
 = 
    \begin{tikzpicture}[scale=1.5]
	\begin{pgfonlayer}{nodelayer}
		\node [style=label] (110) at (-1.25, 0) {$U$};
		\node [style=none] (127) at (-0.75, 0.75) {};
		\node [style=none] (128) at (1.75, 0.75) {};
		\node [style=none] (129) at (1.75, -0.75) {};
		\node [style=none] (130) at (-0.75, -0.75) {};
		\node [style=none] (131) at (-0.75, 0) {};
		\node [style=black] (133) at (0.5, 0) {};
	\end{pgfonlayer}
	\begin{pgfonlayer}{edgelayer}
		\draw [tape] (129.center)
			 to (128.center)
			 to (127.center)
			 to (130.center)
			 to cycle;
		\draw (131.center) to (133);
	\end{pgfonlayer}
\end{tikzpicture}
}

\end{equation}
and for all $P\in \mathcal{P}$,\\
\[
  \!\!\!\!\!\!\!\!\!
    \begin{tikzpicture}
	\begin{pgfonlayer}{nodelayer}
		\node [style=bbox] (69) at (0, 1.5) {$P$};
		\node [style=none] (77) at (0.5, 2.5) {};
		\node [style=none] (80) at (0.75, 0.5) {};
		\node [style=none] (82) at (-0.5, 1.5) {};
		\node [style=label] (83) at (-4.5, 0) {$U$};
		\node [style=none] (86) at (0.75, -0.5) {};
		\node [style=none] (89) at (0.5, -2.5) {};
		\node [style=none] (90) at (-0.5, -1.5) {};
		\node [style=none] (94) at (3.75, 1.15) {};
		\node [style=none] (95) at (2.5, 1.15) {};
		\node [style=none] (96) at (2.5, -1.15) {};
		\node [style=none] (97) at (3.75, -1.15) {};
		\node [style=none] (99) at (-0.5, 2.5) {};
		\node [style=none] (100) at (-0.75, 0.5) {};
		\node [style=none] (102) at (-0.75, -0.5) {};
		\node [style=none] (103) at (-0.5, -2.5) {};
		\node [style=none] (104) at (-3.75, 0) {};
		\node [style=none] (105) at (-3.75, 1.15) {};
		\node [style=none] (106) at (-2.5, 1.15) {};
		\node [style=none] (107) at (-2.5, -1.15) {};
		\node [style=none] (108) at (-3.75, -1.15) {};
		\node [style=none] (110) at (-2, 0) {};
		\node [style=bbox] (111) at (0, -1.5) {$\bar P$};
	\end{pgfonlayer}
	\begin{pgfonlayer}{edgelayer}
		\draw [tape] (106.center)
			 to (105.center)
			 to (108.center)
			 to (107.center)
			 to [bend right] (103.center)
			 to (89.center)
			 to [bend right] (96.center)
			 to (97.center)
			 to (94.center)
			 to (95.center)
			 to [bend right] (77.center)
			 to (99.center)
			 to [bend right] cycle;
		\draw [tape, fill=white] (100.center)
			 to (80.center)
			 to [bend left=90, looseness=1.50] (86.center)
			 to (102.center)
			 to [bend left=90, looseness=1.50] cycle;
		\draw (82.center) to (69);
		\draw (104.center) to (110.center);
		\draw [bend left=45] (110.center) to (82.center);
		\draw [bend right=45] (110.center) to (90.center);
		\draw (90.center) to (111);
	\end{pgfonlayer}
\end{tikzpicture}
}
 = 
    \begin{tikzpicture}[scale=1.5]
	\begin{pgfonlayer}{nodelayer}
		\node [style=label] (110) at (-1.25, 0) {$U$};
		\node [style=none] (127) at (-0.75, 0.75) {};
		\node [style=none] (128) at (1.75, 0.75) {};
		\node [style=none] (129) at (1.75, -0.75) {};
		\node [style=none] (130) at (-0.75, -0.75) {};
		\node [style=none] (131) at (-0.75, 0) {};
		\node [style=black] (133) at (0.5, 0) {};
	\end{pgfonlayer}
	\begin{pgfonlayer}{edgelayer}
		\draw [tape] (129.center)
			 to (128.center)
			 to (127.center)
			 to (130.center)
			 to cycle;
		\draw (131.center) to (133);
	\end{pgfonlayer}
\end{tikzpicture}
}
 \quad\qquad 
    \begin{tikzpicture}[scale=1.5]
	\begin{pgfonlayer}{nodelayer}
		\node [style=label] (110) at (-1.25, 0) {$U$};
		\node [style=none] (127) at (-0.75, 0.75) {};
		\node [style=none] (128) at (1.75, 0.75) {};
		\node [style=none] (129) at (1.75, -0.75) {};
		\node [style=none] (130) at (-0.75, -0.75) {};
		\node [style=none] (131) at (-0.75, 0) {};
		\node [style=black] (133) at (0, 0) {};
		\node [style=none] (138) at (0.5, 0.375) {};
		\node [style=none] (139) at (0.5, -0.375) {};
		\node [style=bbox, scale=0.8] (140) at (1, 0.375) {$P$};
		\node [style=bbox, scale=0.8] (141) at (1, -0.375) {$\bar{P}$};
	\end{pgfonlayer}
	\begin{pgfonlayer}{edgelayer}
		\draw [tape] (129.center)
			 to (128.center)
			 to (127.center)
			 to (130.center)
			 to cycle;
		\draw (131.center) to (133);
		\draw [bend left] (133) to (138.center);
		\draw [bend right] (133) to (139.center);
		\draw (139.center) to (141);
		\draw (140) to (138.center);
	\end{pgfonlayer}
\end{tikzpicture}
}
 = 
    \begin{tikzpicture}[scale=1.5]
	\begin{pgfonlayer}{nodelayer}
		\node [style=label] (110) at (-1.25, 0) {$U$};
		\node [style=none] (127) at (-0.75, 0.75) {};
		\node [style=none] (128) at (0, 0.75) {};
		\node [style=none] (129) at (0, -0.75) {};
		\node [style=none] (130) at (-0.75, -0.75) {};
		\node [style=none] (131) at (-0.75, 0) {};
		\node [style=none] (132) at (2.75, 0.75) {};
		\node [style=none] (133) at (2, 0.75) {};
		\node [style=none] (134) at (2, -0.75) {};
		\node [style=none] (135) at (2.75, -0.75) {};
		\node [style=none] (137) at (0.775, 0) {};
	\end{pgfonlayer}
	\begin{pgfonlayer}{edgelayer}
		\draw [tape] (129.center)
			 to [bend right=90, looseness=1.75] (128.center)
			 to (127.center)
			 to (130.center)
			 to cycle;
		\draw [tape] (134.center)
			 to [bend left=90, looseness=1.75] (133.center)
			 to (132.center)
			 to (135.center)
			 to cycle;
		\draw (131.center) to (137.center);
	\end{pgfonlayer}
\end{tikzpicture}
}

\]
This axioms force any model of the theory to interpret symbols in $\mathcal{F}$ as functions and $\bar{P}$ as the complement of $P$.

\begin{remark}
Notice that, in some program logics, function symbols are not forced to be total in order to deal with errors and exceptions like, for instance, division by $0$. One can easily allow symbols in $\mathcal{F}$ to be partial functions by dropping the rightmost axiom in \Cref{eq:deterministic-total}, but we would loose in elegance. In particular, the equality in Lemma \ref{lemma:encodingsubst} will become an inequality and, consequently, the proof of the assignment rule will require some extra work.
\end{remark}

\subsection{Expressions}
As usual, expressions are defined by the following grammar
\[e \Coloneqq x \mid f(e_{1}, \ldots, e_{n})\]
where $f$ is a function symbol in $\mathcal{F}$ and $x$ is a variable taken from some fixed set of variables.

In order to encode expressions into diagrams, we need to make copying and discarding of variables explicit; this is usually kept implicit by traditional syntax. For this reason we define an elementary type systems with judgement of the form
\[\Gamma \vdash e \colon A \]
where $e$ is an expression, $A$ is a sort in $\sort$ and $\Gamma$ is a \emph{typing context}, namely, an ordered sequence $x_1\colon A_1, \dots x_n \colon A_n$ where, all the $x_i$ are distinct variables and $A_i \in \sort$.
The type system consists of the following two rules
\[ \inferrule*[right=(var)]{-}{\Gamma, x \colon A, \Delta \vdash x \colon A} \qquad \inferrule*[right=(op)]{\Gamma \vdash e_{i} \colon A_{i} \quad f \colon A_{1} \per \cdots \per A_{n} \to A}{\Gamma \vdash f(e_{1}, \ldots, e_{n}) \colon A}\]
where $\Gamma$ and $\Delta$ are arbitrary typing contexts.

To encode well typed expressions into diagrams, we first deal with  typing contexts. We fix $\encoding{x_1\colon A_1, \dots , x_n \colon A_n} \defeq A_1 \per \dots  \per A_n$. For any well typed expressions $\Gamma \vdash e \colon A$, $\encoding{\cdot}$ is defined by induction on the two rules above:
\[ \encoding{\Gamma, x \colon A, \Delta \vdash x \colon A} \defeq \discharger{\encoding{\Gamma}} \per \id{A} \per \discharger{\encoding{\Delta}}\qquad \encoding{\Gamma \vdash f(e_{1}, \ldots, e_{n}) \colon A} \defeq \copier{\encoding{\Gamma}}^n ; (\encoding{\Gamma \vdash e_1} \per \dots \per \encoding{\Gamma \vdash e_n} ) ; f\]
where $\copier{U}^n \colon U \to U^n$ is the diagram defined inductively as expected: $\copier{\encoding{\Gamma}}^0 \defeq \discharger{U}$ and $\copier{\encoding{\Gamma}}^{n+1} \defeq \copier{U} ; (\copier{U}^n \per \id{U})$.

Later on, the notion of \emph{substitution} will be crucial. Given two expression $t$ and $e$ and a variable $x$, the expression $e[t / x]$ is defined inductively as follows, where $y$ is a variable different from $x$.
\[x[t / x] \defeq t \qquad y[ t / x]\defeq y \qquad f(e_1, \dots, e_n)[t/x] \defeq f(e_1[t/x], \dots, e_n[t/x] )\]

A simple inductive argument confirms that substitutions types well.

\begin{lemma}
Let $\Gamma ' = \Gamma, x\colon A, \Delta$ for some typing contexts $\Gamma$ and $\Delta$. If $\Gamma' \vdash e\colon B$ and $\Gamma ' \vdash t\colon A$, then $\Gamma' \vdash e[t /x] \colon B$.
\end{lemma}
\begin{proof}
We proceed by induction on $\Gamma' \vdash e\colon B$.

If $e$ is the variable $x$, then by definition of substitution $e[t / x] = t$ and by hypothesis we know that $\Gamma ' \vdash t\colon A$. Observe that by the typing rule (VAR), $B$ is forced to be the type of $x$, i.e., $A$. Thus, $\Gamma' \vdash x[t /x] \colon B$.

If $e$ is a variable $y$ different from $x$, then by definition of substitution $e[t / x] = y$ and by hypothesis we know that $\Gamma ' \vdash y\colon B$.

If $e=f(e_1, \dots, e_n)$, then by definition of substitution $e[t / x] = f(e_1 [t/x], \dots , e_n [t/x])$; by typing rule (OP), we know that $f\colon A_1 \per \dots A_n \to B$ and $\Gamma' \vdash e_i \colon A_i$. From the latter and  induction hypothesis, $\Gamma' \vdash e_i [t/x] \colon A_i$. Again by the rule (OP), we have that $\Gamma' \vdash f(e_1 [t/x], \dots , e_n [t/x])\colon B$.
\end{proof}

The following result will be useful for the assignment rule below.

\begin{lemma}\label{lemma:encodingsubst}
Let $\Gamma ' = \Gamma, x\colon A, \Delta$ for some typing contexts $\Gamma$ and $\Delta$. If $\Gamma' \vdash e\colon B$ and $\Gamma ' \vdash t\colon A$, then
\[\encoding{ \Gamma' \vdash e[t /x] \colon B} = 
    \InputIfFileExists{hoare/subs/step1.tikz}{}{\input{./tikz/hoare/subs/step1.tikz}}
\] %
\end{lemma}
\begin{proof}
The proof proceeds by induction on $\Gamma' \vdash e\colon B$.

If $e$ is the variable $x$, then by the rule (VAR) $A=B$. Moreover, by definition of $\encoding{\cdot}$, $\encoding{\Gamma, x\colon A, \Delta \vdash x \colon A} =  \discharger{\encoding{\Gamma}} \per \id{A} \per \discharger{\encoding{\Delta}}$. 
Thus
\[ 
    \InputIfFileExists{hoare/subs/step1.tikz}{}{\input{./tikz/hoare/subs/step1.tikz}}
 = 
    \InputIfFileExists{hoare/subs/step2.tikz}{}{\input{./tikz/hoare/subs/step2.tikz}}
 \stackrel{\eqref{ax:copierun}}{=} 
    \InputIfFileExists{hoare/subs/step3.tikz}{}{\input{./tikz/hoare/subs/step3.tikz}}
 = \encoding{\Gamma ' \vdash x [t / x]\colon A} \]

If $e$ is a variable $y$, different from $x$, then by the rule (VAR), there are two possible cases: either $\Gamma = \Gamma_1, y\colon B , \Gamma_2$ for some typing contexts $\Gamma_1$ and $\Gamma_2$ or  $\Delta = \Delta_1, y\colon B , \Delta_2$.
We consider the first case, the second is symmetrical. Observe that, by definition of $\encoding{\cdot}$,

\[\encoding{\Gamma_1, y\colon B, \Gamma_2, x \colon A, \Delta \vdash y \colon B} =  \discharger{\encoding{\Gamma_1}} \per \id{B} \per  \discharger{\encoding{\Gamma_2}} \per  \discharger{A} \per  \discharger{\encoding{\Delta}} \]

Thus,
\[ 
    \InputIfFileExists{hoare/subs/step1.tikz}{}{\input{./tikz/hoare/subs/step1.tikz}}
 = 
    \InputIfFileExists{hoare/subs/step4.tikz}{}{\input{./tikz/hoare/subs/step4.tikz}}
 \stackrel{\eqref{eq:deterministic-total}}{=} 
    \InputIfFileExists{hoare/subs/step5.tikz}{}{\input{./tikz/hoare/subs/step5.tikz}}
 \stackrel{\eqref{ax:copierun}}{=} 
    \InputIfFileExists{hoare/subs/step6.tikz}{}{\input{./tikz/hoare/subs/step6.tikz}}
 = \encoding{\Gamma ' \vdash y [t / x]\colon A} \]

If \(e\) is an application, $e=f(e_1, \dots , e_n)$, by definition of \(\encoding{\cdot}\) on operations, \(\encoding{\Gamma' \vdash f(e_{1}, \ldots, e_{n}) \colon A} \defeq \copier{\encoding{\Gamma'}}^n ; (\encoding{\Gamma' \vdash e_1} \per \dots \per \encoding{\Gamma' \vdash e_n} ) ; f\).
By naturality of copy, we obtain
\begin{align*}
  
    \InputIfFileExists{hoare/subs/step1.tikz}{}{\input{./tikz/hoare/subs/step1.tikz}}
 &= 
    \InputIfFileExists{hoare/subs/step7.tikz}{}{\input{./tikz/hoare/subs/step7.tikz}}
 \stackrel{\eqref{eq:deterministic-total}}{=} 
    \InputIfFileExists{hoare/subs/step8.tikz}{}{\input{./tikz/hoare/subs/step8.tikz}}
 \\
  &= \encoding{\Gamma' \vdash f(e_{1} [t / x],\ldots ,e_{n} [t / x])\colon A} = \encoding{\Gamma' \vdash f(e_{1},\ldots ,e_{n}) [t / x]\colon A}
\end{align*}
\end{proof}

\subsection{Predicates}
We now turn to predicates, defined by the following grammar 
\[P \Coloneqq R(e_1, \dots, e_n) \mid \bar{R}(e_1, \dots, e_n) \mid \top \mid \bot  \mid P \lor P \mid P \land P  \] %
where $R$ is a symbol in $\mathcal{P}$ and $e_i$ are expressions. Observe that negation $\lnot P$ can be easily defined as expected:
$$\begin{array}{c@{\qquad}c@{\qquad}c}
\neg R(e_1, \dots, e_n) \defeq \bar{R}(e_1, \dots, e_n) &   \neg \top \defeq \bot &  \neg(P \lor P) \defeq \neg P \land \neg Q \\
\neg \bar{R}(e_1, \dots, e_n) \defeq R(e_1, \dots, e_n) & \neg \bot \defeq \top & \neg(P \land P) \defeq \neg P \lor \neg Q
\end{array}
$$
As in the case of expressions, we consider a simple type system for predicates.
  \begin{equation*}
  \begin{array}{c@{\qquad}c@{\qquad}c@{\qquad}c}
    \inferrule*[right=(top)]{ }{\Gamma \vdash \top \colon 1} &  \inferrule*[right=(and)]{\Gamma \vdash P \colon 1 \quad \Gamma \vdash Q \colon 1}{\Gamma \vdash (P \land Q) \colon 1} & \inferrule*[right=($\bar{\textnormal{R}}$)]{\Gamma\vdash e_i\colon A_i \quad R \colon A_1 \per \dots \per A_n \to 1}{\Gamma \vdash \bar{R}(e_1, \dots e_n) \colon 1} \\[12pt]
    \inferrule*[right=(bot)]{ }{\Gamma \vdash \bot \colon 1} & \inferrule*[right=(or)]{\Gamma \vdash P \colon 1 \quad \Gamma \vdash Q \colon 1}{\Gamma \vdash (P \lor Q) \colon 1} & \inferrule*[right=(R)]{\Gamma\vdash e_i\colon A_i \quad R \colon A_1 \per \dots \per A_n \to 1}{\Gamma \vdash R(e_1, \dots e_n) \colon 1}
  \end{array}
  \end{equation*}
The encoding $\encoding{\cdot}$ maps well typed predicates $\Gamma \vdash P \colon 1$ into diagrams of type $\encoding{\Gamma} \to 1$.
$$\begin{array}{rcl}
\encoding{\Gamma \vdash R(e_1, \dots e_n)\colon 1} &\defeq& \copier{\encoding{\Gamma}}^n ; (\encoding{\Gamma \vdash e_1 \colon 1} \per \dots \per \encoding{\Gamma \vdash e_n \colon 1}) ; R \\
\encoding{\Gamma \vdash \bar{R}(e_1, \dots e_n)\colon 1} &\defeq& \copier{\encoding{\Gamma}}^n ; (\encoding{\Gamma \vdash e_1 \colon 1} \per \dots \per \encoding{\Gamma \vdash e_n \colon 1}) ; \bar{R} \\
\encoding{\Gamma \vdash \top\colon 1} &\defeq& \discharger{\encoding{\Gamma}}\\
\encoding{\Gamma \vdash \bot \colon 1} &\defeq& \bang{\encoding{\Gamma}} \dcomp \cobang{1}\\
\encoding{\Gamma \vdash P \lor  Q \colon 1} &\defeq& \diag{\encoding{\Gamma}} ; (\encoding{\Gamma \vdash P\colon 1} \piu \encoding{\Gamma \vdash P\colon 1} ) ; \codiag{1}\\ %
\encoding{\Gamma \vdash P \land Q \colon 1} &\defeq& \copier{\encoding{\Gamma}} ;  (\encoding{\Gamma \vdash P\colon 1} \per \encoding{\Gamma \vdash P\colon 1} )    %
\end{array}
$$
Similarly to the case of expressions, one defines substitution on a variable $x$ of a term $t$ in a predicate $P$ inductively:
\[ R(e_1, \dots, e_n) [t/x] \defeq R(e_1[t/x], \dots, e_n[t/x]) \qquad \bar{R}(e_1, \dots, e_n) [t/x] \defeq \bar{R}(e_1[t/x], \dots, e_n[t/x]) \]
\[ \top [t/x] \defeq \top \quad \bot [t/x] \defeq \bot \quad (P \lor Q)[t/x]\defeq P[t/x] \lor Q[t/x] \quad (P \land Q)[t/x]\defeq P[t/x] \land Q[t/x] \]
A simple inductive arguments confirm that substituion are well-typed. Moreover
\begin{lemma}\label{lemma:encodingsubstPed}
Let $\Gamma ' = \Gamma, x\colon A, \Delta$ for some typing contexts $\Gamma$ and $\Delta$. If $\Gamma' \vdash P\colon 1$ and $\Gamma ' \vdash t\colon A$, then
\[\encoding{ \Gamma' \vdash P[t /x] \colon 1} = 
    \InputIfFileExists{hoare/pred/step1.tikz}{}{\input{./tikz/hoare/pred/step1.tikz}}
 \]%
\end{lemma}
\begin{proof}
  Proceed by induction on the typing rules for predicates.

  If \(P\) is \(\top\), then
  \[
    
    \InputIfFileExists{hoare/pred/step1.tikz}{}{\input{./tikz/hoare/pred/step1.tikz}}
 =  
    \InputIfFileExists{hoare/pred/step2.tikz}{}{\input{./tikz/hoare/pred/step2.tikz}}
 \stackrel{\eqref{eq:deterministic-total}}{=} 
    \InputIfFileExists{hoare/pred/step3.tikz}{}{\input{./tikz/hoare/pred/step3.tikz}}
 \stackrel{\eqref{ax:copierun}}{=} 
    \InputIfFileExists{hoare/pred/step4.tikz}{}{\input{./tikz/hoare/pred/step4.tikz}}
  = \encoding{\Gamma' \vdash \top \colon 1} = \encoding{\Gamma' \vdash \top[t /x] \colon 1}.
  \]

  If \(P\) is \(\bot\), then
  \[  
    
    \InputIfFileExists{hoare/pred/step1.tikz}{}{\input{./tikz/hoare/pred/step1.tikz}}
 = 
    \InputIfFileExists{hoare/pred/step5.tikz}{}{\input{./tikz/hoare/pred/step5.tikz}}
  \stackrel{\eqref{ax:bangnat}}{=} 
    \InputIfFileExists{hoare/pred/step6.tikz}{}{\input{./tikz/hoare/pred/step6.tikz}}
 = \encoding{\Gamma' \vdash \bot \colon 1} = \encoding{\Gamma' \vdash \bot[t /x] \colon 1}.
  \]

  If \(P\) is a predicate symbol \(R\), then
  \begin{align*}
    
    \InputIfFileExists{hoare/pred/step1.tikz}{}{\input{./tikz/hoare/pred/step1.tikz}}
 &= 
    \InputIfFileExists{hoare/pred/step7.tikz}{}{\input{./tikz/hoare/pred/step7.tikz}}
 \stackrel{\eqref{eq:deterministic-total}}{=} 
    \InputIfFileExists{hoare/pred/step8.tikz}{}{\input{./tikz/hoare/pred/step8.tikz}}
 \\
    &= \encoding{\Gamma' \vdash R(e_{1} [t / x],\ldots ,e_{n} [t / x])\colon A} = \encoding{\Gamma' \vdash R(e_{1},\ldots ,e_{n}) [t / x]\colon A}.
  \end{align*}

  If \(P\) is a negated predicate symbol \(\bar{R}\), then
  \begin{align*}
    
    \InputIfFileExists{hoare/pred/step1.tikz}{}{\input{./tikz/hoare/pred/step1.tikz}}
 &= 
    \InputIfFileExists{hoare/pred/step9.tikz}{}{\input{./tikz/hoare/pred/step9.tikz}}
 \stackrel{\eqref{eq:deterministic-total}}{=} 
    \InputIfFileExists{hoare/pred/step10.tikz}{}{\input{./tikz/hoare/pred/step10.tikz}}
 \\
    &= \encoding{\Gamma' \vdash \bar{R}(e_{1} [t / x],\ldots ,e_{n} [t / x])\colon A} = \encoding{\Gamma' \vdash \bar{R}(e_{1},\ldots ,e_{n}) [t / x]\colon A}.
  \end{align*}

  For the conjunction case, \(P = Q \land R\),%
  \begin{align*}
    
    \InputIfFileExists{hoare/pred/step1.tikz}{}{\input{./tikz/hoare/pred/step1.tikz}}
 &= 
    \InputIfFileExists{hoare/pred/step11.tikz}{}{\input{./tikz/hoare/pred/step11.tikz}}
 \stackrel{\eqref{eq:deterministic-total}}{=} 
    \InputIfFileExists{hoare/pred/step12.tikz}{}{\input{./tikz/hoare/pred/step12.tikz}}
 \\
    &= \encoding{\Gamma' \vdash Q[t/x] \colon 1} \land \encoding{\Gamma' \vdash R[t/x] \colon 1} \\
    &= \encoding{\Gamma' \vdash Q[t/x] \land R[t/x] \colon 1} = \encoding{\Gamma' \vdash (Q \land R)[t/x] \colon 1}.
  \end{align*}

  For the disjunction case, \(P = Q \lor R\),%
  \begin{align*}
    
    \InputIfFileExists{hoare/pred/step1.tikz}{}{\input{./tikz/hoare/pred/step1.tikz}}
 &= 
    \InputIfFileExists{hoare/pred/step13.tikz}{}{\input{./tikz/hoare/pred/step13.tikz}}
 \stackrel{\eqref{ax:diagnat}}{=} 
    \InputIfFileExists{hoare/pred/step14.tikz}{}{\input{./tikz/hoare/pred/step14.tikz}}
 \\
    &= \encoding{\Gamma' \vdash Q[t/x] \colon 1} \lor \encoding{\Gamma' \vdash R[t/x] \colon 1} \\
    &= \encoding{\Gamma' \vdash Q[t/x] \lor R[t/x] \colon 1} = \encoding{\Gamma' \vdash (Q \lor R)[t/x] \colon 1}.
  \end{align*}
\end{proof}

\subsection{Commands}
Commands are defined by the following grammar
\[ C = \mathsf{abort} \mid \mathsf{skip} \mid \mathop{\mathsf{if}} P \mathop{\mathsf{then}} C \mathop{\mathsf{else}} D \mid \mathop{\mathsf{while}} P \mathop{\mathsf{do}} C \mid C ; D \mid x \coloneqq e\]
where $P$ is a predicate and $e$ an expression. The type system is rather simple: the only non straightforward rule is the one for the assignment where one has to ensure that the type of $e$ is the same to the one of $x$.
\begin{equation*}
\begin{array}{c}
  \begin{array}{ccc}
    \inferrule*[right=(abort)]{ }{\Gamma \vdash \mathsf{abort} }&
\inferrule*[right=(skip)]{ }{\Gamma \vdash \mathsf{skip} }&
    \inferrule*[right=(assn)]{\Gamma = \Gamma',x \colon A,\Delta' \qquad \Gamma \vdash e \colon A}{\Gamma \vdash x \coloneqq e }
    \end{array}\\
    \begin{array}{ccc}
    \inferrule*[right=(;)]{\Gamma \vdash C   \quad \Gamma \vdash D }{\Gamma \vdash C ; D }&
    \inferrule*[right=(if)]{\Gamma \vdash P \colon 1 \quad \Gamma \vdash C   \quad \Gamma \vdash D  }{\Gamma \vdash \mathop{\mathsf{if}} P \mathop{\mathsf{then}} C \mathop{\mathsf{else}} D }&
\inferrule*[right=(while)]{\Gamma \vdash P \colon 1 \quad \Gamma \vdash C  }{ \Gamma \vdash \mathop{\mathsf{while}} P \mathop{\mathsf{do}} C }
\end{array}
\end{array}
\end{equation*}

The predicates occurring in a command will be regarded as the corresponding coreflexive: we fix
\[ c(P) \defeq 
    \begin{tikzpicture}[scale=1.5]
	\begin{pgfonlayer}{nodelayer}
		\node [style=none] (127) at (1.75, 1) {};
		\node [style=none] (128) at (-1, 1) {};
		\node [style=none] (129) at (-1, -0.75) {};
		\node [style=none] (130) at (1.75, -0.75) {};
		\node [style=label] (131) at (-1.5, 0) {$U$};
		\node [style=none] (132) at (-1, 0) {};
		\node [style=label] (133) at (2.25, -0.375) {$U$};
		\node [style=black] (134) at (-0.25, 0) {};
		\node [style=none] (135) at (0.25, -0.375) {};
		\node [style=none] (136) at (0.25, 0.375) {};
		\node [style=none] (137) at (1.75, -0.375) {};
		\node [style=bbox, scale=0.9] (138) at (0.75, 0.375) {$\encoding{P}$};
	\end{pgfonlayer}
	\begin{pgfonlayer}{edgelayer}
		\draw [tape] (129.center)
			 to (128.center)
			 to (127.center)
			 to (130.center)
			 to cycle;
		\draw [bend right] (136.center) to (134);
		\draw [bend right] (134) to (135.center);
		\draw (137.center) to (135.center);
		\draw (134) to (132.center);
		\draw (136.center) to (138);
	\end{pgfonlayer}
\end{tikzpicture}
}
 = 
    \begin{tikzpicture}[scale=1.5]
	\begin{pgfonlayer}{nodelayer}
		\node [style=none] (127) at (-1.375, 0.875) {};
		\node [style=none] (128) at (1.375, 0.875) {};
		\node [style=none] (129) at (1.375, -0.875) {};
		\node [style=none] (130) at (-1.375, -0.875) {};
		\node [style=label] (131) at (1.875, 0) {$U$};
		\node [style=none] (141) at (-1.375, 0) {};
		\node [style=none] (142) at (1.375, 0) {};
		\node [style=corefl, scale=0.7] (143) at (0, 0) {$\encoding{P}$};
		\node [style=label] (144) at (-1.875, 0) {$U$};
	\end{pgfonlayer}
	\begin{pgfonlayer}{edgelayer}
		\draw [tape] (129.center)
			 to (128.center)
			 to (127.center)
			 to (130.center)
			 to cycle;
		\draw (142.center) to (141.center);
	\end{pgfonlayer}
\end{tikzpicture}
}
 \]
which, according to the notation in \Cref{rem:notation-coreflexives}, will be drawn as a circle.

The encoding maps any well typed command $\Gamma \vdash C $ into a diagram of type $\encoding{\Gamma} \to \encoding{\Gamma}$.
$$\begin{array}{rcl}
\encoding{\Gamma \vdash \mathsf{abort}}  & \defeq & \bang{\encoding{\Gamma}} ; \cobang{\encoding{\Gamma}}\\
\encoding{\Gamma \vdash  \mathsf{skip} }  & \defeq & \id{\encoding{\Gamma}} \\
\encoding{\Gamma \vdash  C;D }  & \defeq & \encoding{\Gamma \vdash C} ; \encoding{\Gamma \vdash D}  \\
\encoding{\Gamma \vdash \mathop{\mathsf{if}} P \mathop{\mathsf{then}} C \mathop{\mathsf{else}} D}  & \defeq & (c(P); \encoding{\Gamma \vdash C} ) + (c(\neg P); \encoding{\Gamma \vdash D}) \\
\encoding{\Gamma \vdash\mathop{\mathsf{while}} P \mathop{\mathsf{do}} C}  & \defeq & \kstar{(c(P); \encoding{\Gamma \vdash C })} ; c(\neg P)  \\
\interpret{\Gamma \vdash x \coloneqq e}  & \defeq & (\copier{\encoding{\Gamma'}} \per \id{A} \per \copier{\encoding{\Delta'}}) ; (\id{\encoding{\Gamma'}} \per \encoding{\Gamma \vdash e\colon A}  \per \id{\encoding{\Delta'}} )     %
\end{array}$$
Apart from the assignment the encodings of the other commands is pretty standard, see for instance Kleene Algebra with tests \cite{kozen1997kleene}. The assignment instead crucially exploits the structure of cartesian bicategories to properly model data flow.
For convenience of the reader we draw the corresponding tape diagram below. %
\[ 
    \begin{tikzpicture}
	\begin{pgfonlayer}{nodelayer}
		\node [style=label] (110) at (-1.75, -0.75) {$\Delta'$};
		\node [style=none] (127) at (-1.25, 1.5) {};
		\node [style=none] (128) at (3.5, 1.5) {};
		\node [style=none] (129) at (3.5, -1.5) {};
		\node [style=none] (130) at (-1.25, -1.5) {};
		\node [style=none] (131) at (-1.25, -0.75) {};
		\node [style=black] (133) at (-0.5, -0.75) {};
		\node [style=none] (138) at (0, -0.375) {};
		\node [style=none] (139) at (0, -1.125) {};
		\node [style=bbox, minimum height=1.8em] (142) at (1.25, 0) {$\encoding{e}$};
		\node [style=label] (143) at (-1.75, 0.75) {$\Gamma'$};
		\node [style=none] (144) at (-1.25, 0.75) {};
		\node [style=black] (145) at (-0.5, 0.75) {};
		\node [style=none] (146) at (0, 1.125) {};
		\node [style=none] (147) at (0, 0.375) {};
		\node [style=none] (148) at (-1.25, 0) {};
		\node [style=none] (150) at (3.5, 0) {};
		\node [style=none] (151) at (1.5, -1.125) {};
		\node [style=none] (152) at (1.5, 1.125) {};
		\node [style=none] (153) at (1.25, -0.375) {};
		\node [style=none] (154) at (1.25, 0.375) {};
		\node [style=none] (157) at (2, 1.125) {};
		\node [style=none] (158) at (2, -1.125) {};
		\node [style=label] (160) at (-1.75, 0) {$A$};
		\node [style=none] (161) at (3.5, 0.75) {};
		\node [style=none] (162) at (3.5, -0.75) {};
		\node [style=label] (163) at (4, -0.75) {$\Delta'$};
		\node [style=label] (164) at (4, 0.75) {$\Gamma'$};
		\node [style=label] (165) at (4, 0) {$A$};
	\end{pgfonlayer}
	\begin{pgfonlayer}{edgelayer}
		\draw [tape] (129.center)
			 to (128.center)
			 to (127.center)
			 to (130.center)
			 to cycle;
		\draw (131.center) to (133);
		\draw [bend left] (133) to (138.center);
		\draw [bend right] (133) to (139.center);
		\draw (144.center) to (145);
		\draw [bend left] (145) to (146.center);
		\draw [bend right] (145) to (147.center);
		\draw (148.center) to (142);
		\draw (142) to (150.center);
		\draw (147.center) to (154.center);
		\draw (138.center) to (153.center);
		\draw (139.center) to (151.center);
		\draw (146.center) to (152.center);
		\draw [in=180, out=0, looseness=1.25] (151.center) to (158.center);
		\draw [in=0, out=180, looseness=1.25] (157.center) to (152.center);
		\draw [in=360, out=-180, looseness=0.75] (161.center) to (157.center);
		\draw [in=0, out=-180, looseness=0.75] (162.center) to (158.center);
	\end{pgfonlayer}
\end{tikzpicture}
}
 \]

\subsection{Hoare Triples}
Hoare logic \cite{hoare1969axiomatic} is one of the most influential language to reason about imperative programs. Its rules --in the version appearing in \cite{winskel1993formal}-- are reported in Figure \ref{fig:hoare-rules}. 
In partial correctness, the triple \(\{P\}C\{Q\}\) asserts that if the command $C$ is executed from any state that satisfies the precondition $P$, and if the execution terminates, the resulting state will satisfy the postcondition $Q$.
The following result shows that if a triple \(\{P\}C\{Q\}\) is derivable within the Hoare logic, one can prove \(\op{\encoding{P}} \dcomp \encoding{C} \leq \op{\encoding{Q}}\) within tape diagrams.

\begin{figure}
  \begin{equation*}
  \begin{array}{c@{\quad}c@{\quad}c}
    \inferrule*[right=($\mathsf{skip}$)]{ }{\{P\}\mathsf{skip}\{P\}} & \inferrule*[right=(sub)]{ }{\{P[E/x]\}x \coloneqq E\{P\}} & \inferrule*[right=(\(\subseteq\))]{P_1 \subseteq P_2 \quad \{P_2\}C\{Q_2\} \quad Q_2 \subseteq Q_1}{\{P_1\}C\{Q_1\}} \\[12pt]
    \inferrule*[right=(seq)]{\{P\}C\{Q\} \quad \{Q\}D\{R\}}{\{P\}C ; D\{R\}} & \inferrule*[right=(\(\mathsf{if-else}\))]{\{P \land B\}C\{Q\} \quad \{P \land \lnot B \}D\{Q\}}{\{P\}\mathop{\mathsf{if}} B \mathop{\mathsf{then}} C \mathop{\mathsf{else}} D\{Q\}} & \inferrule*[right=(\(\mathsf{while}\))]{\{P \land B\}C\{P\}}{\{P\}\mathop{\mathsf{while}} B \mathop{\mathsf{do}} C\{P \land \lnot B\}} %
  \end{array}
  \end{equation*}
  \caption{Hoare derivation rules\label{fig:hoare-rules}}
\end{figure}

\begin{proposition}\label{prop:Hoare}
  If a Hoare triple \(\{P\}C\{Q\}\) is derivable with the rules in \Cref{fig:hoare-rules}, then \(\op{\encoding{P}} \dcomp \encoding{C} \leq \op{\encoding{Q}}\) in \(\KTCB\).
\end{proposition}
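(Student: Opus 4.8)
The plan is to prove Proposition~\ref{prop:Hoare} by induction on the derivation of the Hoare triple $\{P\}C\{Q\}$, establishing for each of the six rules in Figure~\ref{fig:hoare-rules} the corresponding inequality $\op{\encoding{P}} \dcomp \encoding{C} \leq \op{\encoding{Q}}$ in $\KTCB$. The key observation that makes this work is that, via the isomorphism $\mathsf{Corefl}(\Cat{C})[X,X] \cong \Cat{C}[I,X]$ of Lemma~\ref{lemma:coreflexive properties}.7, the coreflexive $c(P)$ and the state $\op{\encoding{P}}\colon X \to I$ (equivalently $\encoding{P}\colon I \to X$, up to $\op{(\cdot)}$) are two presentations of the same predicate; I expect to move freely between them using Lemma~\ref{lemma:coreflexive properties}.8, which states $f ; g = f \sqcap g'$ for $f\colon I \to X$ and coreflexive $g$. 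Concretely, the postcondition inequality $\op{\encoding{P}}\dcomp\encoding{C} \leq \op{\encoding{Q}}$ says that running $C$ from any $P$-state lands in a $Q$-state, and the coreflexive reformulation $c(P);\encoding{C} \leq \encoding{C};c(Q)$ will be the more convenient diagrammatic form for the compositional rules.

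First I would dispatch the structural rules. The $(\mathsf{skip})$ rule is immediate since $\encoding{\mathsf{skip}} = \id{\encoding{\Gamma}}$, giving $\op{\encoding{P}};\id{} = \op{\encoding{P}}$. The consequence rule $(\subseteq)$ follows by transitivity of $\leq$ together with monotonicity of composition and of $\op{(\cdot)}$ (Table~\ref{table:re:daggerproperties}). The sequencing rule $(\mathsf{seq})$ chains the two induction hypotheses: from $\op{\encoding{P}};\encoding{C} \leq \op{\encoding{Q}}$ and $\op{\encoding{Q}};\encoding{D} \leq \op{\encoding{R}}$, postcomposing the first with $\encoding{D}$ and using functoriality of $;$ and monotonicity gives $\op{\encoding{P}};\encoding{C};\encoding{D} \leq \op{\encoding{Q}};\encoding{D} \leq \op{\encoding{R}}$, which is exactly the claim since $\encoding{C;D} = \encoding{C};\encoding{D}$. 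The $(\mathsf{if\text{-}else})$ rule unfolds $\encoding{\mathop{\mathsf{if}} B \mathop{\mathsf{then}} C \mathop{\mathsf{else}} D} = (c(B);\encoding{C}) + (c(\neg B);\encoding{D})$ and uses distributivity of $\op{\encoding{P}};(-)$ over $+$ (the convolution monoid, from the $\Cat{CMon}$-enrichment) together with the interaction of the precondition state with the coreflexives $c(B)$ and $c(\neg B)$: precomposing $\op{\encoding{P}}$ with $c(B)$ realises the conjunction $P\land B$ via Lemma~\ref{lemma:coreflexive properties}.8, so each summand reduces to the corresponding induction hypothesis on $\{P\land B\}C\{Q\}$ and $\{P\land\neg B\}D\{Q\}$, and the join of two things below $\op{\encoding{Q}}$ is below $\op{\encoding{Q}}$.

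The two genuinely substantive steps are the assignment rule $(\mathsf{sub})$ and the $(\mathsf{while})$ rule, and I expect the while rule to be the main obstacle. For $(\mathsf{sub})$, the goal $\op{\encoding{P[E/x]}};\interpret{x\coloneqq e} \leq \op{\encoding{P}}$ should in fact be an equality, and the crux is Lemma~\ref{lemma:encodingsubstPed}, which computes $\encoding{P[t/x]}$ as $\encoding{P}$ precomposed with the copy-and-substitute diagram; matching this against the explicit tape for the assignment $\interpret{\Gamma \vdash x\coloneqq e}$ and using the special Frobenius and naturality-of-copy laws together with totality of $e$ (the rightmost axiom in \eqref{eq:deterministic-total}) should close the case diagrammatically. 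For the while rule, the semantics is $\encoding{\mathop{\mathsf{while}} B \mathop{\mathsf{do}} C} = \kstar{(c(B);\encoding{C})};c(\neg B)$, so I need to show $c(P);\kstar{(c(B);\encoding{C})};c(\neg B) \leq c(P\land\neg B)$ from the loop-invariant hypothesis $c(P\land B);\encoding{C} \leq \encoding{C};c(P)$ (equivalently $c(P);c(B);\encoding{C} \leq \cdots ; c(P)$). The hard part is the Kleene-star induction: I would show $c(P);\kstar{(c(B);\encoding{C})} \leq \kstar{(c(B);\encoding{C})};c(P)$ by the right-handed star law $f\dcomp r \leq r \implies \kstar{f}\dcomp r \leq r$ of \eqref{eq:Kllenelaw} applied to a suitable $r$, after rephrasing the invariant so that $c(P)$ can be pushed through the loop body; then postcomposing with $c(\neg B)$ and using $c(P);c(\neg B) = c(P\land\neg B)$ (Lemma~\ref{lemma:coreflexive properties}.2, coreflexives compose as meets) finishes the argument. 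The delicate point will be correctly orienting the star-induction law and verifying that the invariant hypothesis, translated into coreflexive form, exactly matches the premise $f\dcomp r \leq r$; I would handle this carefully using the adjointness $c(B) = \op{c(B)}$ of Lemma~\ref{lemma:coreflexive properties}.4 to pass between the state formulation $\op{\encoding{P}}$ and the coreflexive formulation $c(P)$ as needed.
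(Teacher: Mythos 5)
Your induction over the six rules is the same skeleton as the paper's proof, and the structural cases (skip, consequence, sequencing, if-else) match it exactly. Where you diverge is in the two substantive cases. For \((\mathsf{while})\) the paper stays in the trace presentation: it verifies the premise of the posetal uniformity axiom \eqref{ax:trace:tape:AU2} using the induction hypothesis and naturality of the \(\piu\)-(co)monoids, applies (AU2) to the traced tape, and finishes with \eqref{ax:trace:tape:kstar-id}. You instead invoke the Kleene-star induction law of \eqref{eq:Kllenelaw}. That is a legitimate alternative, since \(\KTCB\) is a Kleene bicategory and Lemma~\ref{lemma:uniform-kozen} / Proposition~\ref{prop:trace-star} make the two mechanisms interderivable; your route is the classical KAT-style argument, the paper's keeps everything diagrammatic at the level of traces. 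Note, though, that the cleanest instance of your own idea is to take \(l \defeq \op{\encoding{P}}\) and \(f \defeq c(B);\encoding{C}\) and apply \(l\dcomp f \leq l \implies l\dcomp\kstar{f}\leq l\) directly to the \emph{state}; your stated plan of first commuting the coreflexive \(c(P)\) past the star is strictly more work and forces you to justify \(c(P);c(B);\encoding{C}\leq c(B);\encoding{C};c(P)\) via idempotence and commutation of coreflexives. Relatedly, your proposal to recast all triples in the coreflexive form \(c(P);\encoding{C}\leq\encoding{C};c(Q)\) introduces a translation obligation (interderivability with \(\op{\encoding{P}}\dcomp\encoding{C}\leq\op{\encoding{Q}}\)) that the paper simply never incurs, because it works with states throughout.

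Two smaller corrections. First, in the \((\mathsf{sub})\) case the inequality is \emph{not} an equality in general (take \(P\) to be \(x=0\) and \(E\) the constant \(1\): the left side collapses to \(\bot\) while \(\op{\encoding{P}}\) does not); only \(\leq\) holds, which is all you need. Second, the cartesian-bicategory facts that close that case are the comonoid adjunction \eqref{ax:copieradj1} together with \emph{single-valuedness} of the encoded expression (\eqref{eq:cb:adj:sv}, coming from the left equation in \eqref{eq:deterministic-total}), not totality; totality of function symbols is what makes Lemma~\ref{lemma:encodingsubstPed} an equality rather than an inequality, but it is not the axiom that discharges the triple. With these adjustments your plan goes through.
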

\begin{proof}
  By induction on the deduction rules in \Cref{fig:hoare-rules}.\\
  
  \medskip

  \noindent\((\mathsf{skip})\). \; \(\encoding{P} \dcomp \encoding{\mathsf{skip}} \stackrel{\text{(Def. of $\encoding{\cdot}$)}}{=} \encoding{P} \dcomp \id{} = \encoding{P}\).\\

  \medskip

  \noindent(SUB). \;
  \[ 
    \begin{tikzpicture}
	\begin{pgfonlayer}{nodelayer}
		\node [style=none] (127) at (-8, 1.5) {};
		\node [style=none] (128) at (3.5, 1.5) {};
		\node [style=none] (129) at (3.5, -1.5) {};
		\node [style=none] (130) at (-8, -1.5) {};
		\node [style=none] (131) at (-1, -0.75) {};
		\node [style=black] (133) at (-0.5, -0.75) {};
		\node [style=none] (138) at (0, -0.375) {};
		\node [style=none] (139) at (0, -1.125) {};
		\node [style=bbox, minimum height=1.8em] (142) at (1.25, 0) {$\encoding{e}$};
		\node [style=none] (144) at (-1, 0.75) {};
		\node [style=black] (145) at (-0.5, 0.75) {};
		\node [style=none] (146) at (0, 1.125) {};
		\node [style=none] (147) at (0, 0.375) {};
		\node [style=none] (148) at (-1, 0) {};
		\node [style=none] (150) at (3.5, 0) {};
		\node [style=none] (151) at (1.5, -1.125) {};
		\node [style=none] (152) at (1.5, 1.125) {};
		\node [style=none] (153) at (1.25, -0.375) {};
		\node [style=none] (154) at (1.25, 0.375) {};
		\node [style=none] (157) at (2, 1.125) {};
		\node [style=none] (158) at (2, -1.125) {};
		\node [style=none] (161) at (3.5, 0.75) {};
		\node [style=none] (162) at (3.5, -0.75) {};
		\node [style=label] (163) at (4, -0.75) {$\Delta$};
		\node [style=label] (164) at (4, 0.75) {$\Gamma$};
		\node [style=label] (165) at (4, 0) {$A$};
		\node [style=none] (166) at (-1, -0.75) {};
		\node [style=black] (167) at (-1.5, -0.75) {};
		\node [style=none] (168) at (-2, -0.375) {};
		\node [style=none] (169) at (-2, -1.125) {};
		\node [style=bboxOp, minimum height=1.8em] (170) at (-3.25, 0) {$\encoding{e}$};
		\node [style=none] (171) at (-1, 0.75) {};
		\node [style=black] (172) at (-1.5, 0.75) {};
		\node [style=none] (173) at (-2, 1.125) {};
		\node [style=none] (174) at (-2, 0.375) {};
		\node [style=none] (175) at (-1, 0) {};
		\node [style=none] (176) at (-6.25, 0) {};
		\node [style=none] (177) at (-3.5, -1.125) {};
		\node [style=none] (178) at (-3.5, 1.125) {};
		\node [style=none] (179) at (-3.25, -0.375) {};
		\node [style=none] (180) at (-3.25, 0.375) {};
		\node [style=none] (181) at (-4, 1.125) {};
		\node [style=none] (182) at (-4, -1.125) {};
		\node [style=none] (183) at (-5.5, 0.5) {};
		\node [style=none] (184) at (-5.5, -0.5) {};
		\node [style=bboxOp, minimum height=1.8em] (185) at (-6.5, 0) {$\encoding{P}$};
		\node [style=none] (186) at (-6.5, 0.5) {};
		\node [style=none] (187) at (-6.5, -0.5) {};
	\end{pgfonlayer}
	\begin{pgfonlayer}{edgelayer}
		\draw [tape] (129.center)
			 to (128.center)
			 to (127.center)
			 to (130.center)
			 to cycle;
		\draw (131.center) to (133);
		\draw [bend left] (133) to (138.center);
		\draw [bend right] (133) to (139.center);
		\draw (144.center) to (145);
		\draw [bend left] (145) to (146.center);
		\draw [bend right] (145) to (147.center);
		\draw (148.center) to (142);
		\draw (142) to (150.center);
		\draw (147.center) to (154.center);
		\draw (138.center) to (153.center);
		\draw (139.center) to (151.center);
		\draw (146.center) to (152.center);
		\draw [in=180, out=0, looseness=1.25] (151.center) to (158.center);
		\draw [in=0, out=180, looseness=1.25] (157.center) to (152.center);
		\draw [in=360, out=-180, looseness=0.75] (161.center) to (157.center);
		\draw [in=0, out=-180, looseness=0.75] (162.center) to (158.center);
		\draw (166.center) to (167);
		\draw [bend right] (167) to (168.center);
		\draw [bend left] (167) to (169.center);
		\draw (171.center) to (172);
		\draw [bend right] (172) to (173.center);
		\draw [bend left] (172) to (174.center);
		\draw (175.center) to (170);
		\draw (170) to (176.center);
		\draw (174.center) to (180.center);
		\draw (168.center) to (179.center);
		\draw (169.center) to (177.center);
		\draw (173.center) to (178.center);
		\draw [in=0, out=180, looseness=1.25] (177.center) to (182.center);
		\draw [in=180, out=0, looseness=1.25] (181.center) to (178.center);
		\draw [in=-180, out=0, looseness=0.75] (183.center) to (181.center);
		\draw [in=180, out=0, looseness=0.75] (184.center) to (182.center);
		\draw (186.center) to (183.center);
		\draw (187.center) to (184.center);
	\end{pgfonlayer}
\end{tikzpicture}
}
 \stackrel{\eqref{ax:copieradj1}}{\leq} 
    \begin{tikzpicture}
	\begin{pgfonlayer}{nodelayer}
		\node [style=none] (127) at (-6.75, 1.5) {};
		\node [style=none] (128) at (2.75, 1.5) {};
		\node [style=none] (129) at (2.75, -1.5) {};
		\node [style=none] (130) at (-6.75, -1.5) {};
		\node [style=none] (138) at (0.5, -0.45) {};
		\node [style=none] (139) at (0.5, -1.125) {};
		\node [style=bbox, minimum height=1.8em] (142) at (0.5, 0) {$\encoding{e}$};
		\node [style=none] (146) at (0.5, 1.125) {};
		\node [style=none] (147) at (0.5, 0.45) {};
		\node [style=none] (150) at (2.75, 0) {};
		\node [style=none] (151) at (0.75, -1.125) {};
		\node [style=none] (152) at (0.75, 1.125) {};
		\node [style=none] (153) at (0.5, -0.45) {};
		\node [style=none] (154) at (0.5, 0.45) {};
		\node [style=none] (157) at (1.25, 1.125) {};
		\node [style=none] (158) at (1.25, -1.125) {};
		\node [style=none] (161) at (2.75, 0.75) {};
		\node [style=none] (162) at (2.75, -0.75) {};
		\node [style=label] (163) at (3.25, -0.75) {$\Delta$};
		\node [style=label] (164) at (3.25, 0.75) {$\Gamma$};
		\node [style=label] (165) at (3.25, 0) {$A$};
		\node [style=none] (168) at (-2, -0.45) {};
		\node [style=none] (169) at (-2, -1.125) {};
		\node [style=bboxOp, minimum height=1.8em] (170) at (-2, 0) {$\encoding{e}$};
		\node [style=none] (173) at (-2, 1.125) {};
		\node [style=none] (174) at (-2, 0.45) {};
		\node [style=none] (176) at (-5, 0) {};
		\node [style=none] (177) at (-2.25, -1.125) {};
		\node [style=none] (178) at (-2.25, 1.125) {};
		\node [style=none] (179) at (-2, -0.45) {};
		\node [style=none] (180) at (-2, 0.45) {};
		\node [style=none] (181) at (-2.75, 1.125) {};
		\node [style=none] (182) at (-2.75, -1.125) {};
		\node [style=none] (183) at (-4.25, 0.5) {};
		\node [style=none] (184) at (-4.25, -0.5) {};
		\node [style=bboxOp, minimum height=1.8em] (185) at (-5.25, 0) {$\encoding{P}$};
		\node [style=none] (186) at (-5.25, 0.5) {};
		\node [style=none] (187) at (-5.25, -0.5) {};
	\end{pgfonlayer}
	\begin{pgfonlayer}{edgelayer}
		\draw [tape] (129.center)
			 to (128.center)
			 to (127.center)
			 to (130.center)
			 to cycle;
		\draw (142) to (150.center);
		\draw (147.center) to (154.center);
		\draw (138.center) to (153.center);
		\draw (139.center) to (151.center);
		\draw (146.center) to (152.center);
		\draw [in=180, out=0, looseness=1.25] (151.center) to (158.center);
		\draw [in=0, out=180, looseness=1.25] (157.center) to (152.center);
		\draw [in=360, out=-180, looseness=0.75] (161.center) to (157.center);
		\draw [in=0, out=-180, looseness=0.75] (162.center) to (158.center);
		\draw (170) to (176.center);
		\draw (174.center) to (180.center);
		\draw (168.center) to (179.center);
		\draw (169.center) to (177.center);
		\draw (173.center) to (178.center);
		\draw [in=0, out=180, looseness=1.25] (177.center) to (182.center);
		\draw [in=180, out=0, looseness=1.25] (181.center) to (178.center);
		\draw [in=-180, out=0, looseness=0.75] (183.center) to (181.center);
		\draw [in=180, out=0, looseness=0.75] (184.center) to (182.center);
		\draw (186.center) to (183.center);
		\draw (187.center) to (184.center);
		\draw (173.center) to (146.center);
		\draw (147.center) to (174.center);
		\draw (168.center) to (138.center);
		\draw (139.center) to (169.center);
		\draw (170) to (142);
	\end{pgfonlayer}
\end{tikzpicture}
}
 \stackrel{\eqref{eq:cb:adj:sv}}{\leq} 
    \begin{tikzpicture}
	\begin{pgfonlayer}{nodelayer}
		\node [style=none] (127) at (0, 1.5) {};
		\node [style=none] (128) at (3.5, 1.5) {};
		\node [style=none] (129) at (3.5, -1.5) {};
		\node [style=none] (130) at (0, -1.5) {};
		\node [style=none] (150) at (3.5, 0) {};
		\node [style=none] (161) at (3.5, 0.575) {};
		\node [style=none] (162) at (3.5, -0.575) {};
		\node [style=label] (163) at (4, -0.75) {$\Delta$};
		\node [style=label] (164) at (4, 0.75) {$\Gamma$};
		\node [style=label] (165) at (4, 0) {$A$};
		\node [style=none] (176) at (1.75, 0) {};
		\node [style=bboxOp, minimum height=1.8em] (185) at (1.5, 0) {$\encoding{P}$};
		\node [style=none] (186) at (1.5, 0.575) {};
		\node [style=none] (187) at (1.5, -0.575) {};
	\end{pgfonlayer}
	\begin{pgfonlayer}{edgelayer}
		\draw [tape] (129.center)
			 to (128.center)
			 to (127.center)
			 to (130.center)
			 to cycle;
		\draw (186.center) to (161.center);
		\draw (150.center) to (176.center);
		\draw (187.center) to (162.center);
	\end{pgfonlayer}
\end{tikzpicture}
}
. \] 

  \medskip

  \noindent\((\subseteq)\). \; \(\encoding{P_{1}} \dcomp \encoding{C} \stackrel{(P_1 \subseteq P_2)}{\leq} \encoding{P_{2}} \dcomp  \encoding{C} \stackrel{\text{(Ind. hyp.)}}{\leq} \encoding{Q_{2}} \stackrel{(Q_1 \subseteq Q_2)}{\leq} \encoding{P_{2}} \encoding{Q_{1}}\).\\

  \medskip

  \noindent(SEQ). \; \(\encoding{P} \dcomp \encoding{C ; D} \stackrel{\text{(Def. of $\encoding{\cdot}$)}}{=} \encoding{P} \dcomp \encoding{C} \dcomp \encoding{D} \stackrel{\text{(Ind. hyp.)}}{\leq} \encoding{Q} \dcomp \encoding{D} \stackrel{\text{(Ind. hyp.)}}{\leq} \encoding{R}\).\\

  \medskip

  \noindent\((\mathsf{if-else})\). \;\\
  \begin{align*}
    
    % [inline block 2: 10 envs, 22050 chars -> data_tex | \begin{tikzpicture} 	\begin{pgfonlayer}{nodelayer}...]

}
. \]
\end{proof}

\subsection{Other Program Logics}
While the calculus of relations, Kleene Algebra and Kleene algebra with tests allow to express binary relations, tape diagrams are able to express relations with arbitrary source and target. For instance a tape $\t\colon U \to 1$ represents a predicate over $U$, typically intended as the set of all memories. This ability allows tape diagrams to easy express triples occurring in different programs logics, such as incorrectness logic \cite{o2019incorrectness}, sufficient incorrectness \cite{DBLP:journals/corr/abs-2310-18156} and necessary \cite{DBLP:conf/vmcai/CousotCFL13}. The correspondence between the triples of this logic and inequality in tapes is illustrated in \Cref{tab:triples-inequalities}.
\begin{table}[H]
  \centering
  \renewcommand{\arraystretch}{1.7}
  \begin{tabular}{ c @{\qquad} c @{\qquad} c }
    \toprule
    Logic & Triple & Inequality \\
    \midrule
    Hoare & \(\{P\}C\{Q\}\) & $
    \InputIfFileExists{hoare/inclusions/hoare/lhs.tikz}{}{\input{./tikz/hoare/inclusions/hoare/lhs.tikz}}
 \leq 
    \InputIfFileExists{hoare/inclusions/hoare/rhs.tikz}{}{\input{./tikz/hoare/inclusions/hoare/rhs.tikz}}
$\\ %
    Incorrectness & \([P]C[Q]\) & $
    \InputIfFileExists{hoare/inclusions/hoare/lhs.tikz}{}{\input{./tikz/hoare/inclusions/hoare/lhs.tikz}}
 \geq 
    \InputIfFileExists{hoare/inclusions/hoare/rhs.tikz}{}{\input{./tikz/hoare/inclusions/hoare/rhs.tikz}}
$\\ %
    Sufficient incorrectness & \(\lAngle P \rAngle C \lAngle Q \rAngle\) & $
    \InputIfFileExists{hoare/inclusions/suffInc/lhs.tikz}{}{\input{./tikz/hoare/inclusions/suffInc/lhs.tikz}}
 \leq 
    \InputIfFileExists{hoare/inclusions/suffInc/rhs.tikz}{}{\input{./tikz/hoare/inclusions/suffInc/rhs.tikz}}
$\\ %
    Necessary & \((P)C(Q)\) & $
    \InputIfFileExists{hoare/inclusions/suffInc/lhs.tikz}{}{\input{./tikz/hoare/inclusions/suffInc/lhs.tikz}}
 \geq 
    \InputIfFileExists{hoare/inclusions/suffInc/rhs.tikz}{}{\input{./tikz/hoare/inclusions/suffInc/rhs.tikz}}
$\\[4pt] %
    \bottomrule
  \end{tabular}
  \caption{Correspondence between triples and inequalities.\label{tab:triples-inequalities}}
\end{table}

\bibliography{references}

\appendix

\section{Coherence Axioms}\label{app:coherence axioms}

In this Appendix we collect together various Figures, listing the coherence axioms required by the definition of the algebraic structures we consider in the article.

 \begin{figure}[H]
     \begin{equation} \label{ax:monoidaltriangle}
         \input{tikz-cd/monoidal_triangle.tikz} \tag{M1}
     \end{equation}
     \begin{equation}\label{ax:monoidalpentagone}
         \input{tikz-cd/monoidal_pentagon.tikz} \tag{M2}
     \end{equation}
     \caption{Coherence axioms of monoidal categories}
     \label{fig:moncatax}
 \end{figure}

\begin{figure}[H]
   \begin{minipage}[t]{0.48\textwidth}
       \begin{equation}\label{eq:symmax1}
           \input{tikz-cd/symmax1.tikz} \tag{S1}
       \end{equation}
   \end{minipage}
   \hfill
   \begin{minipage}[t]{0.48\textwidth}
        \begin{equation}\label{eq:symmax2}
            \input{tikz-cd/symmax2.tikz} \tag{S2}
        \end{equation}    
   \end{minipage}
    \begin{equation}\label{eq:symmax3}
        \input{tikz-cd/symmax3.tikz} \tag{S3}
    \end{equation}
    \caption{Coherence axioms of symmetric monoidal categories}
    \label{fig:symmmoncatax}
\end{figure}

\begin{figure}[H]
    \begin{equation}\label{eq:coherence diag}\tag{FP1}
        \begin{tikzcd}[column sep=4.5em,baseline=(current  bounding  box.center)]
            X \perG Y \ar[r,"\diag{X \perG Y}"] \ar[d,"\diag X \perG \diag Y"'] & (X \perG Y) \perG (X \perG Y) \\
            (X \perG Y) \perG (Y \perG Y) \ar[d,"\assoc X X {Y \perG Y}"'] \\
            X \perG (X \perG (Y \perG Y)) \ar[d,"\id X \perG \Iassoc X Y Y"'] & X \perG (Y \perG (X \perG Y)) \ar[uu,"\Iassoc X Y {X \perG Y}"'] \\
            X \perG ((X \perG Y) \perG Y) \ar[r,"\id X \perG (\symm{X}{Y}^{\perG} \perG \id Y)"] & X \perG ((Y \perG X) \perG Y) \ar[u,"\id X \perG \assoc Y X Y"'] 
        \end{tikzcd}
    \end{equation}
    \\
	\begin{minipage}[b]{0.33\textwidth}
	 	\begin{equation}\label{eq:coherence bang}\tag{FP2}
            \begin{tikzcd}[baseline=(current  bounding  box.center)]
                X \perG Y \ar[r,"\bang{X \perG Y}"] \ar[d,"\bang X \perG \bang Y"'] & \unoG \\
                \unoG \perG \unoG \ar[ur,"\lunit \unoG"']
            \end{tikzcd}
	    \end{equation} 
	\end{minipage}
	\hfill
	\begin{minipage}[b]{0.26\textwidth}
		\begin{equation}\tag{FP3}
		\begin{tikzcd}
		I \ar[r,shift left=2,"\diag I"] \ar[r,shift right=2,"\Ilunit I"'] & I \perG I
		\end{tikzcd}
		\end{equation}
		\end{minipage}
	\hfill
	\begin{minipage}[b]{0.26\textwidth}
		\begin{equation}\label{eq:bang I = id I}\tag{FP4}
            \begin{tikzcd}
            I \ar[r,shift left=2,"\bang I"] \ar[r,shift right=2,"\id I"'] & I
            \end{tikzcd}
		\end{equation}
	\end{minipage}

    \begin{equation}\label{eq:coherence codiag}\tag{FC1}
        \begin{tikzcd}[column sep=4.5em,baseline=(current  bounding  box.center)]
        (X \perG Y) \perG (X \perG Y) \ar[dd,"\assoc X Y {X \perG Y}"'] \ar[r,"\codiag{X \perG Y}"] & X \perG Y \\
        & (X \perG X) \perG (Y \perG Y) \ar[u,"\codiag X \perG \codiag Y"']\\
            X \perG (Y \perG (X \perG Y)) \ar[d,"\id X \perG \Iassoc Y X Y"'] & X \perG (X \perG (Y \perG Y)) \ar[u,"\Iassoc X X {Y \perG Y}"']  \\
        X \perG ((Y \perG X) \perG Y) \ar[r,"\id X \perG ( \symm{Y}{X}^{\perG} \perG \id Y)"] & X \perG (( X \perG Y) \perG Y) \ar[u,"\id X \perG \assoc X Y Y"']
        \end{tikzcd}
    \end{equation}
    \\
\begin{minipage}[b]{0.33\textwidth}
	\begin{equation}\label{eq:coherence cobang}\tag{FC2}
        \begin{tikzcd}[baseline=(current  bounding  box.center)]
        \unoG \ar[r,"\cobang{X \perG Y}"] \ar[d,"\Ilunit \unoG"']  & X \perG Y \\
        \unoG \perG \unoG \ar[ur,"\cobang X \perG \cobang Y"']  
        \end{tikzcd}
	\end{equation}
\end{minipage}
\hfill
\begin{minipage}[b]{0.26\textwidth}
	\begin{equation}\tag{FC3}
        \begin{tikzcd}
        I \perG I \ar[r,shift left=2,"\codiag I"] \ar[r,shift right=2,"\lunit I"'] &  I
        \end{tikzcd}
	\end{equation}
\end{minipage}
\hfill
\begin{minipage}[b]{0.26\textwidth}
	\begin{equation}\label{eq:cobang I = id I}\tag{FC4}
        \begin{tikzcd}
        I \ar[r,shift left=2,"\cobang I"] \ar[r,shift right=2,"\id I"'] & I
        \end{tikzcd}
	\end{equation}
\end{minipage}
\caption{Coherence axioms for (co)commutative (co)monoids}
\label{fig:fbcoherence}
\end{figure}

\begin{figure}[H]
    \begin{minipage}[t]{0.45\textwidth}
        \begin{equation}
            \label{eq:rigax1}\tag{R1}
            \scalebox{0.9}{$\input{tikz-cd/rigax1.tikz}$}
        \end{equation}    
    \end{minipage}
    \hfill
    \begin{minipage}[t]{0.45\textwidth}
        \begin{equation}
            \label{eq:rigax2}\tag{R2}
            \scalebox{0.9}{$\input{tikz-cd/rigax2.tikz}$}
        \end{equation}
    \end{minipage}
    
    \begin{equation}
        \label{eq:rigax3}\tag{R3}
        \scalebox{0.9}{$\input{tikz-cd/rigax3.tikz}$}
    \end{equation}
    \begin{equation}
        \label{eq:rigax4}\tag{R4}
        \scalebox{0.9}{$\input{tikz-cd/rigax4.tikz}$}
    \end{equation}
    \begin{equation}
        \label{eq:rigax5}\tag{R5}
        \scalebox{0.9}{$\input{tikz-cd/rigax5.tikz}$}
    \end{equation}

    \begin{minipage}[t]{0.25\textwidth}
        \begin{equation}
            \label{eq:rigax6}\tag{R6}
            \scalebox{0.9}{$\input{tikz-cd/rigax6.tikz}$}
        \end{equation}
    \end{minipage}
    \hfill
    \begin{minipage}[t]{0.45\textwidth}
        \begin{equation}
            \label{eq:rigax7}\tag{R7}
            \scalebox{0.9}{$\input{tikz-cd/rigax7.tikz}$}
        \end{equation}
    \end{minipage}
    \hfill
    \begin{minipage}[t]{0.25\textwidth}
        \begin{equation}
            \label{eq:rigax8}\tag{R8}
            \scalebox{0.9}{$\input{tikz-cd/rigax8.tikz}$}
        \end{equation}
    \end{minipage}
    \\
    \begin{minipage}[t]{0.48\textwidth}
        \begin{equation}
            \label{eq:rigax9}\tag{R9}
            \scalebox{0.9}{$\input{tikz-cd/rigax9.tikz}$}
        \end{equation}    
    \end{minipage}
    \hfill
    \begin{minipage}[t]{0.48\textwidth}
        \begin{equation}
            \label{eq:rigax10}\tag{R10}
            \scalebox{0.9}{$\input{tikz-cd/rigax10.tikz}$}
        \end{equation}
    \end{minipage}
    \\
    \begin{minipage}[t]{0.50\textwidth}
        \begin{equation}
            \label{eq:rigax11}\tag{R11}
            \scalebox{0.9}{$\input{tikz-cd/rigax11.tikz}$}
        \end{equation}    
    \end{minipage}
    \hfill
    \begin{minipage}[t]{0.46\textwidth}
        \begin{equation}
            \label{eq:rigax12}\tag{R12}
            \scalebox{0.9}{$\input{tikz-cd/rigax12.tikz}$}
        \end{equation}
    \end{minipage}
    \caption{Coherence Axioms of symmetric rig categories}
    \label{fig:rigax}
\end{figure}

\begin{figure}[H]
    \begin{equation}
        \label{eq:dl1}
        \input{tikz-cd/dl1.tikz}
    \end{equation}
    \begin{equation}
        \label{eq:dl2}
        \input{tikz-cd/dl2.tikz}
    \end{equation}
    \begin{equation}
        \label{eq:dl3}
        \input{tikz-cd/dl3.tikz}
    \end{equation}
    \begin{minipage}[t]{0.44\textwidth}
    \begin{equation}
        \label{eq:dl4}
        \input{tikz-cd/dl4.tikz}
    \end{equation}
    \end{minipage}
    \hfill
    \begin{minipage}[t]{0.54\textwidth}
    \begin{equation}
        \label{eq:dl5}
        \input{tikz-cd/dl5.tikz}
    \end{equation}
    \end{minipage}
    \\
    \begin{minipage}[t]{0.48\textwidth}
        \begin{equation}
            \label{eq:dl7}
            \input{tikz-cd/dl7.tikz}
        \end{equation}
        \end{minipage}
        \hfill
        \begin{minipage}[t]{0.50\textwidth}
        \begin{equation}
            \label{eq:dl8}
            \input{tikz-cd/dl8.tikz}
        \end{equation}
    \end{minipage}
    \begin{equation}
        \label{eq:dl6}
        \input{tikz-cd/dl6.tikz}
    \end{equation}
    \caption{Derived laws of symmetric rig categories}
    \label{fig:dlaw}
\end{figure}

\section{Appendix to Section~\ref{sec:traced}}

\subsection{Proofs of Section \ref{sec:UTConstruction-products}}

\begin{proof}[Proof of Corollary \ref{cor:free-uniform-trace-fb}]
  By \Cref{prop:free-uniform-trace-cartesian}, the adjunction in \Cref{th:free-uniform-trace} restricts to finite product categories.

  Suppose that \(\Cat{B}\) is a finite coproduct category.
  Then, \(\opcat{\Cat{B}}\) is a finite product category and so is \(\UTr(\opcat{\Cat{B}})\) by \Cref{prop:free-uniform-trace-cartesian}.
  This means that \(\opcat{\UTr(\opcat{\Cat{B}})} = \UTr(\Cat{B})\) is a finite coproduct category because the uniformity relation is symmetric.

  Then, if \(\Cat{B}\) has biproducts, \(\UTr(\Cat{B})\) also has them.
  The unit and counit restrict to finite biproduct categories for the same reason they restrict to finite product categories.
\end{proof}

\subsection{Proofs of Section \ref{ssec:tracerig}}

\begin{proof}[Proof of Proposition \ref{lemma:utr-whisk-algebra}] 
  
  \textsc{Equation}~\eqref{eq:whisk:id}.2.
    \begin{align*}
      \RW{X}{\id{Y} \mid \zero}
      &=  \st{\RW{X}{\id{Y}}}{\zero \per X} \tag{Definition~\ref{def:utr-whisk}} \\
      &=  \st{\id{Y \per X}}{\zero \per X} \tag{\ref{eq:whisk:id} in $\Cat{C}$} \\
      &=  \st{\id{Y \per X}}{\zero} \tag{Table~\ref{tab:equationsonobject}}
    \end{align*}

  \textsc{Equation}~\eqref{eq:whisk:id}.1.
    \begin{align*}
      \LW{X}{\id{Y} \mid \zero}
      &=  \st{\symmt{X}{Y}}{\zero} ; \RW{X}{\id{Y} \mid \zero} ; \st{\symmt{Y}{X}}{\zero} \tag{Definition~\ref{def:utr-whisk}} \\
      &=  \st{\symmt{X}{Y}}{\zero} ; \st{\id{Y \per X}}{\zero} ; \st{\symmt{Y}{X}}{\zero} \tag{\ref{eq:whisk:id}.2} \\
      &=  \st{\symmt{X}{Y} ; \id{Y \per X} ; \symmt{Y}{X}}{\zero} \tag{\ref{eq:seq in Utr}} \\
      &=  \st{\symmt{X}{Y} ; \symmt{Y}{X}}{\zero} \tag{Table~\ref{fig:freestricmmoncatax}} \\
      &=  \st{\id{X \per Y}}{\zero} \tag{Table~\ref{fig:freestricmmoncatax}}
    \end{align*}

  \textsc{Equation}~\eqref{eq:whisk:funct}.2. Let $(f \mid S) \colon Y \to Z, (g \mid T) \colon Z \to W$, then the following holds:
  \begin{align*}
      & \RW{X}{\st{f}{S} ; \st{g}{T}} \\
    = \; & \RW{X}{(\symmp{S}{T} \piu \id{Y}) ; (\id{T} \piu f) ; (\symmp{T}{S} \piu \id{Z}) ; (\id{S} \piu g) \mid S \piu T} \tag{\ref{eq:seq in Utr}} \\
    = \; & \st{\RW{X}{(\symmp{S}{T} \piu \id{Y}) ; (\id{T} \piu f) ; (\symmp{T}{S} \piu \id{Z}) ; (\id{S} \piu g)}}{(S \piu T) \per X} \tag{Definition~\ref{def:utr-whisk}} \\
    = \; & \st{\RW{X}{(\symmp{S}{T} \piu \id{Y}) ; (\id{T} \piu f) ; (\symmp{T}{S} \piu \id{Z}) ; (\id{S} \piu g)}}{(S \per X) \piu (T \per X)} \tag{Table~\ref{tab:equationsonobject}} \\
    = \; & \st{\RW{X}{\symmp{S}{T} \piu \id{Y}} ; \RW{X}{\id{T} \piu f} ; \RW{X}{\symmp{T}{S} \piu \id{Z}} ; \RW{X}{\id{S} \piu g}}{(S \per X) \piu (T \per X)} \tag{\ref{eq:whisk:funct} in $\Cat{C}$} \\
    = \; & \st{(\symmp{S\per X}{T \per X} \piu \id{Y \per X}) ; (\id{T \per X} \piu \RW{X}{f}) ; (\symmp{T \per X}{S \per X} \piu \id{Z \per X}) ; (\id{S \per X} \piu \RW{X}{g})}{(S \per X) \piu (T \per X)} \tag{\ref{eq:whisk:id}, \ref{eq:whisk:funct piu}, \ref{eq:whisk:symmp} in $\Cat{C}$} \\
    = \; & \st{\RW{X}{f}}{S \per X} ; \st{\RW{X}{g}}{T \per X} \tag{\ref{eq:seq in Utr}} \\
    = \; & \RW{X}{f \mid S} ; \RW{X}{g \mid T} \tag{Definition~\ref{def:utr-whisk}}
  \end{align*}

  \textsc{Equation}~\eqref{eq:whisk:funct}.1. Let $(f \mid S) \colon Y \to Z, (g \mid T) \colon Z \to W$, then the following holds:
  \begin{align*}
    \LW{X}{\st{f}{S} ; \st{g}{T}}
    &= \st{\symmt{X}{Y}}{\zero} ; \RW{X}{\st{f}{S} ; \st{g}{T}} ; \st{\symmt{W}{X}}{\zero} \tag{Definition~\ref{def:utr-whisk}} \\
    &= \st{\symmt{X}{Y}}{\zero} ; \RW{X}{f \mid S} ; \RW{X}{g \mid T} ; \st{\symmt{W}{X}}{\zero} \tag{\ref{eq:whisk:funct}.2} \\
    &= \st{\symmt{X}{Y}}{\zero} ; \RW{X}{f \mid S} ; \st{\symmt{Z}{X}}{\zero} ; \st{\symmt{X}{Z}}{\zero} ; \RW{X}{g \mid T} ; \st{\symmt{W}{X}}{\zero} \tag{Table~\ref{fig:freestricmmoncatax}} \\
    &= \LW{X}{f \mid S} ; \LW{X}{g \mid T} \tag{Definition~\ref{def:utr-whisk}}
  \end{align*}

  \textsc{Equation}~\eqref{eq:whisk:uno}.2. Let $(f \mid S) \colon X \to Y$, then the following holds:
  \begin{align*}
    \RW{\uno}{f \mid S} 
    &= \st{\RW{\uno}{f}}{S \per \uno} \tag{Definition~\ref{def:utr-whisk}} \\
    &= \st{f}{S \per \uno} \tag{\ref{eq:whisk:uno}.2 in $\Cat{C}$} \\
    &= \st{f}{S} \tag{Table~\ref{tab:equationsonobject}}
  \end{align*}

  \textsc{Equation}~\eqref{eq:whisk:uno}.1. Let $(f \mid S) \colon X \to Y$, then the following holds:
  \begin{align*}
    \LW{\uno}{f \mid S} 
    &= \st{\symmt{\uno}{X}}{\zero} ; \LW{\uno}{f \mid S}  ; \st{\symmt{Y}{\uno}}{\zero} \tag{Definition~\ref{def:utr-whisk}} \\
    &= \st{\symmt{\uno}{X}}{\zero} ; \st{f}{S}  ; \st{\symmt{Y}{\uno}}{\zero} \tag{\ref{eq:whisk:uno}.2} \\
    &= \st{\id{X}}{\zero} ; \st{f}{S}  ; \st{\id{Y}}{\zero} \tag{\ref{eq:symmax2}} \\
    &= \st{f}{S} \tag{Table~\ref{fig:freestricmmoncatax}}
  \end{align*}

  \textsc{Equation}~\eqref{eq:whisk:zero}.2. Let $(f \mid S) \colon X \to Y$, then the following holds:
  \begin{align*}
    \RW{\zero}{f \mid S} 
    &= \st{\RW{\zero}{f}}{S \per \zero} \tag{Definition~\ref{def:utr-whisk}} \\
    &= \st{\id{\zero}}{S \per \zero} \tag{\ref{eq:whisk:zero}.2 in $\Cat{C}$} \\
    &= \st{\id{\zero}}{\zero} \tag{Table~\ref{tab:equationsonobject}}
  \end{align*}

  \textsc{Equation}~\eqref{eq:whisk:zero}.1. Let $(f \mid S) \colon X \to Y$, then the following holds:
  \begin{align*}
    \LW{\zero}{f \mid S} 
    &= \st{\symmt{X}{\zero}}{\zero} ; \RW{\zero}{f \mid S} ; \st{\symmt{\zero}{Y}}{\zero} \tag{Definition~\ref{def:utr-whisk}} \\
    &= \st{\symmt{X}{\zero}}{\zero} ; \st{\id{\zero}}{\zero} ; \st{\symmt{\zero}{Y}}{\zero} \tag{\ref{eq:whisk:zero}.2} \\
    &= \st{\id{\zero}}{\zero} ; \st{\id{\zero}}{\zero} ; \st{\id{\zero}}{\zero} \tag{\ref{eq:rigax9}} \\
    &= \st{\id{\zero}}{\zero} \tag{Table~\ref{fig:freestricmmoncatax}}
  \end{align*}

  \textsc{Equation}~\eqref{eq:whisk:funct piu}.2. Let $(f_1 \mid S_1) \colon X_1 \to Y_1, (f_2 \mid S_2) \colon X_2 \to Y_2$, then the following holds:
  \begin{align*}
     & \RW{X}{\st{f_1}{S_1} \piu \st{f_2}{S_2}} \\
    = \; & \RW{X}{ (\id{S_1} \piu \symmp{S_2}{X_1} \piu \id{X_2}) ; (f_1 \piu f_2) ; (\id{S_1} \piu \symmp{Y_1}{S_2} \piu \id{Y_2}) \mid S_1 \piu S_2 } \tag{\ref{eq:per in Utr}} \\
    = \; & \st{\RW{X}{(\id{S_1} \piu \symmp{S_2}{X_1} \piu \id{X_2}) ; (f_1 \piu f_2) ; (\id{S_1} \piu \symmp{Y_1}{S_2} \piu \id{Y_2})}}{(S_1 \piu S_2) \per X} \tag{Definition~\ref{def:utr-whisk}} \\
    = \; & \st{(\id{S_1 \per X} \piu \symmp{S_2 \per X}{X_1 \per X} \piu \id{X_2 \per X}) ; (\RW{X}{f_1} \piu \RW{X}{f_2}) ; (\id{S_1 \per X} \piu \symmp{Y_1 \per X}{S_2 \per X} \piu \id{Y_2 \per X})}{(S_1 \piu S_2) \per X} \tag{\ref{eq:whisk:id}, \ref{eq:whisk:funct}, \ref{eq:whisk:funct piu}, \ref{eq:whisk:symmp} in $\Cat{C}$} \\
    = \; & \st{(\id{S_1 \per X} \piu \symmp{S_2 \per X}{X_1 \per X} \piu \id{X_2 \per X}) ; (\RW{X}{f_1} \piu \RW{X}{f_2}) ; (\id{S_1 \per X} \piu \symmp{Y_1 \per X}{S_2 \per X} \piu \id{Y_2 \per X})}{(S_1 \per X) \piu (S_2 \per X)} \tag{Table~\ref{tab:equationsonobject}} \\
    = \; & \RW{X}{f_1 \mid S_1} \piu \RW{X}{f_2 \mid S_2} \tag{\ref{eq:per in Utr}}
  \end{align*}

  \textsc{Equation}~\eqref{eq:whisk:funct piu}.1. Let $(f_1 \mid S_1) \colon X_1 \to Y_1, (f_2 \mid S_2) \colon X_2 \to Y_2$, then the following holds:
  \begin{align*}
     & \LW{X}{\st{f_1}{S_1} \piu \st{f_2}{S_2}} \\
    = \; & \st{\symmt{X}{X_1 \piu X_2}}{\zero} ; \RW{X}{\st{f_1}{S_1} \piu \st{f_2}{S_2}} ; \st{\symmt{Y_1 \piu Y_2}{X}}{\zero} \tag{Definition~\ref{def:utr-whisk}} \\
    = \; & \st{\symmt{X}{X_1 \piu X_2}}{\zero} ; (\RW{X}{f_1 \mid S_1} \piu \RW{X}{f_2 \mid S_2}) ; \st{\symmt{Y_1 \piu Y_2}{X}}{\zero} \tag{\ref{eq:whisk:funct piu}.2} \\
    = \; & \st{\dl{X}{X_1}{X_2} ; (\symmt{X}{X_1} \piu \symmt{X}{X_2})}{\zero} ; (\RW{X}{f_1 \mid S_1} \piu \RW{X}{f_2 \mid S_2}) ; \st{ (\symmt{Y_1}{X} \piu \symmt{Y_2}{X}) ; \Idl{X}{Y_1}{Y_2} }{\zero} \tag{\ref{eq:rigax1}} \\
    = \; & \st{\dl{X}{X_1}{X_2}}{\zero} ; (\st{\symmt{X}{X_1}}{\zero} \piu \st{\symmt{X}{X_2}}{\zero}) ; (\RW{X}{f_1 \mid S_1} \piu \RW{X}{f_2 \mid S_2}) ; (\st{\symmt{Y_1}{X}}{\zero} \piu \st{\symmt{Y_2}{X}}{\zero}) ; \st{\Idl{X}{Y_1}{Y_2}}{\zero} \tag{\ref{eq:seq in Utr}, \ref{eq:per in Utr}} \\
    = \; & \st{\dl{X}{X_1}{X_2}}{\zero} ; ( ( \st{\symmt{X}{X_1}}{\zero} ; \RW{X}{f_1 \mid S_1} ; \st{\symmt{Y_1}{X}}{\zero} ) \piu (\st{\symmt{X}{X_2}}{\zero} ; \RW{X}{f_2 \mid S_2} ; \st{\symmt{Y_2}{X}}{\zero} ) ) ; \st{\Idl{X}{Y_1}{Y_2}}{\zero} \tag{Table~\ref{fig:freestricmmoncatax}} \\
    = \; & \st{\dl{X}{X_1}{X_2}}{\zero} ; ( \LW{X}{f_1 \mid S_1} \piu \LW{X}{f_2 \mid S_2} ) ; \st{\Idl{X}{Y_1}{Y_2}}{\zero} \tag{Definition~\ref{def:utr-whisk}} \\
 \end{align*}

 \textsc{Equation}~\eqref{eq:whisk:sum}.2. Let $(f \mid S) \colon Z \to W$, then the following holds:
 \begin{align*}
   & \RW{X \piu Y}{f \mid S} \\
  = \; & \st{\RW{X \piu Y}{f}}{S \per (X \piu Y)} \tag{Definition~\ref{def:utr-whisk}} \\
  = \; & \st{ \dl{S \piu Z}{X}{Y} ; (\RW{X}{f} \piu \RW{Y}{f}); \Idl{S \piu W}{X}{Y} }{S \per (X \piu Y)} \tag{\ref{eq:whisk:sum}.2 in $\Cat{C}$} \\
  = \; & { \scriptstyle \st{ (\dl{S}{X}{Y} \piu \dl{Z}{X}{Y}) ; (\id{S \per X} \piu \symmp{S \per Y}{Z \per X} \piu \id{Z \per Y}) ; (\RW{X}{f} \piu \RW{Y}{f}); (\id{S \per X} \piu \symmp{S \per Y}{W \per X} \piu \id{W \per Y}) ; (\Idl{S}{X}{Y} \piu \Idl{W}{X}{Y})}{S \per (X \piu Y)} } \tag{\ref{eq:rigax5}}  \\
  = \; & { \scriptstyle \st{ (\id{S \per X} \piu \id{S \per Y} \piu \dl{Z}{X}{Y}) ; (\id{S \per X} \piu \symmp{S \per Y}{Z \per X} \piu \id{Z \per Y}) ; (\RW{X}{f} \piu \RW{Y}{f}); (\id{S \per X} \piu \symmp{S \per Y}{W \per X} \piu \id{W \per Y}) ; (\id{S \per X} \piu \id{S \per Y} \piu \Idl{W}{X}{Y})}{(S \per X) \piu (S \per Y)} } \tag{\ref{ax:trace:sliding}} \\
  = \; & { \scriptstyle \st{\dl{Z}{X}{Y}}{\zero} ; \st{ (\id{S \per X} \piu \symmp{S \per Y}{Z \per X} \piu \id{Z \per Y}) ; (\RW{X}{f} \piu \RW{Y}{f}); (\id{S \per X} \piu \symmp{S \per Y}{W \per X} \piu \id{W \per Y})}{(S \per X) \piu (S \per Y)} ; \st{\Idl{W}{X}{Y}}{\zero} } \tag{\ref{eq:seq in Utr}} \\
  = \; & { \st{\dl{Z}{X}{Y}}{\zero} ;  ( \st{\RW{X}{f}}{S \per X} \piu \st{\RW{Y}{f}}{S \per Y} )  ; \st{\Idl{W}{X}{Y}}{\zero} } \tag{\ref{eq:per in Utr}} \\
  = \; & { \st{\dl{Z}{X}{Y}}{\zero} ;  ( \RW{X}{f \mid S} \piu \RW{Y}{f \mid S} )  ; \st{\Idl{W}{X}{Y}}{\zero} } \tag{Definition~\ref{def:utr-whisk}} \\
\end{align*}

\textsc{Equation}~\eqref{eq:whisk:sum}.1. Let $(f \mid S) \colon Z \to W$, then the following holds:
\begin{align*}
   & \LW{X \piu Y}{f \mid S} \\
  = \; & \st{\symmt{X \piu Y}{Z}}{\zero} ; \RW{X \piu Y}{f \mid S} ; \st{\symmt{W}{X \piu Y}}{\zero} \tag{Definition~\ref{def:utr-whisk}} \\
  = \; & \st{\symmt{X \piu Y}{Z}}{\zero} ; ( \st{\dl{Z}{X}{Y}}{\zero} ;  ( \RW{X}{f \mid S} \piu \RW{Y}{f \mid S} )  ; \st{\Idl{W}{X}{Y}}{\zero} ) ; \st{\symmt{W}{X \piu Y}}{\zero} \tag{\ref{eq:whisk:sum}.2} \\
  = \; & \st{ (\symmt{X}{Z} \piu \symmt{Y}{Z}) ; \Idl{Z}{X}{Y} }{\zero} ; ( \st{\dl{Z}{X}{Y}}{\zero} ;  ( \RW{X}{f \mid S} \piu \RW{Y}{f \mid S} )  ; \st{\Idl{W}{X}{Y}}{\zero} ) ; \st{\dl{W}{X}{Y} ; (\symmt{W}{X} \piu \symmt{W}{Y})}{\zero} \tag{\ref{eq:rigax1}} \\
  = \; & \st{ (\symmt{X}{Z} \piu \symmt{Y}{Z}) ; \Idl{Z}{X}{Y} ; \dl{Z}{X}{Y} }{\zero} ; ( ( \RW{X}{f \mid S} \piu \RW{Y}{f \mid S} ) ) ; \st{\Idl{W}{X}{Y} ; \dl{W}{X}{Y} ; (\symmt{W}{X} \piu \symmt{W}{Y})}{\zero} \tag{\ref{eq:seq in Utr}} \\
  = \; & \st{ \symmt{X}{Z} \piu \symmt{Y}{Z}}{\zero} ; ( \RW{X}{f \mid S} \piu \RW{Y}{f \mid S} ) ; \st{\symmt{W}{X} \piu \symmt{W}{Y}}{\zero} \tag{$\delta^l$ iso} \\
  = \; & ( \st{\symmt{X}{Z}}{\zero} \piu \st{\symmt{Y}{Z}}{\zero} ) ; ( \RW{X}{f \mid S} \piu \RW{Y}{f \mid S} ) ; ( \st{\symmt{W}{X}}{\zero} \piu \st{\symmt{W}{Y}}{\zero} ) \tag{\ref{eq:per in Utr}} \\
  = \; & ( \st{\symmt{X}{Z}}{\zero} ; \RW{X}{f \mid S} ; \st{\symmt{W}{X}}{\zero} ) \piu ( \st{\symmt{Y}{Z}}{\zero} ; \RW{Y}{f \mid S} ; \st{\symmt{W}{Y}}{\zero} ) \tag{Table~\ref{fig:freestricmmoncatax}} \\
  = \; & \LW{X}{f \mid S} \piu \LW{Y}{f \mid S} \tag{Definition~\ref{def:utr-whisk}}
\end{align*}

\textsc{Equation}~\eqref{eq:tape:LexchangeR}. 
Let $f_1 \colon S_1 \piu X_1 \to S_1 \piu Y_1$ and $f_2 \colon S_2 \piu X_2 \to S_2 \piu Y_2$ be morphisms in $\Cat{C}$ and observe that the following holds by \eqref{eq:tape:LexchangeR} in $\Cat{C}$:
\[ \st{\LW{S_1 \piu X_1}{f_2} ; \RW{S_2 \piu Y_2}{f_1}}{\zero} = \st{\RW{S_2 \piu X_2}{f_1} ; \LW{S_1 \piu Y_1}{f_2}}{\zero}. \]
Using string diagrams for $(\Cat{C}, \piu, \zero)$, the equality above translates into the equality between diagrams below:
\begin{align*}
  \st{\scalebox{0.9}{
    % [inline block 3: 9 envs, 38469 chars -> data_tex | \begin{tikzpicture} 	\begin{pgfonlayer}{nodelayer}...]
\]
which, by \eqref{eq:seq in Utr} and Definition~\ref{def:utr-whisk}, corresponds to the equality below:
\[
  \st{\delta^l_{X_1, S_2, X_2} ; \LW{X_1}{f_2} ; \delta^l_{X_1,S_2,Y_2}}{X_1 \per S_2} ; \RW{Y_2}{f_1 \mid S_1} = \RW{X_2}{f_1 \mid S_1} ; \st{\delta^{-l}_{Y_1, S_2, X_2} ; \LW{Y_1}{f_2} ; \delta^l_{Y_1,S_2,Y_2}}{Y_1 \per S_2}.
\]
To conclude the proof, observe that for every $(f \mid S) \colon Y \to Z$ it holds that
\begin{equation}\label{eq:LRaux}
  \LW{X}{f \mid S} = \st{\Idl{X}{S}{Y} ; \LW{X}{f} ; \dl{X}{S}{Z}}{X \per S}
\end{equation}
as shown below:
\begin{align*}
  \LW{X}{f \mid S}
  &= \st{\symmt{X}{Y}}{\zero} ; \RW{X}{f \mid S} ; \st{\symmt{Z}{X}}{\zero} \tag{Definition~\ref{def:utr-whisk}} \\
  &= \st{\symmt{X}{Y}}{\zero} ; \st{\RW{X}{f}}{S \per X} ; \st{\symmt{Z}{X}}{\zero} \tag{Definition~\ref{def:utr-whisk}} \\
  &= \st{ (\id{S \per X} \piu \symmt{X}{Y}) ; \RW{X}{f} ; (\id{S \per X} \piu \symmt{Z}{X}) }{S \per X}  \tag{\ref{eq:seq in Utr}} \\
  &= \st{ (\symmt{X}{S} \piu \symmt{X}{Y}) ; \RW{X}{f} ; (\symmt{S}{X} \piu \symmt{Z}{X}) }{X \per S}  \tag{\ref{ax:trace:sliding}} \\
  &= \st{ (\symmt{X}{S} \piu \symmt{X}{Y}) ; \symmt{S \piu Y}{X} ; \LW{X}{f} ; \symmt{X}{S \piu Z} ; (\symmt{S}{X} \piu \symmt{Z}{X}) }{X \per S}  \tag{\ref{eq:LRnatsym} in $\Cat{C}$} \\
  &= \st{\Idl{X}{S}{Y} ; \LW{X}{f} ; \dl{X}{S}{Z}}{X \per S} \tag{\ref{eq:rigax1}}
\end{align*}

\textsc{Equation}~\eqref{eq:whisk:symmp}.
\begin{align*}
  \RW{X}{\symmp{Y}{Z} \mid \zero} 
  &= \st{\RW{X}{\symmp{Y}{Z}}}{\zero \per X} \tag{Definition~\ref{def:utr-whisk}} \\
  &= \st{ \symmp{Y \per X}{Z \per X} }{\zero \per X} \tag{\ref{eq:whisk:symmp} in $\Cat{C}$} \\
  &= \st{ \symmp{Y \per X}{Z \per X} }{\zero} \tag{Table~\ref{tab:equationsonobject}}
\end{align*}

\textsc{Equation}~\eqref{eq:symmper}. 
\begin{align*}
  & \LW{X}{\symmp{Y}{Z} \mid \zero} ; \RW{Y}{\symmp{X}{Z} \mid \zero} \\
  = \; & \st{\symmt{X}{Y \per Z}}{\zero} ; \RW{X}{\symmp{Y}{Z} \mid \zero} ; \st{\symmt{Y \per Z}{X}}{\zero} ; \RW{Y}{\symmp{X}{Z} \mid \zero} \tag{Definition~\ref{def:utr-whisk}} \\
  = \; & \st{\symmt{X}{Y \per Z}}{\zero} ; \st{\RW{X}{\symmp{Y}{Z}}}{\zero \per X} ; \st{\symmt{Y \per Z}{X}}{\zero} ; \st{\RW{Y}{\symmp{X}{Z}}}{\zero \per Y} \tag{Definition~\ref{def:utr-whisk}} \\
  = \; & \st{\symmt{X}{Y \per Z}}{\zero} ; \st{\symmp{Y}{Z} \per \id{X}}{\zero \per X} ; \st{\symmt{Y \per Z}{X}}{\zero} ; \st{\symmp{X}{Z} \per \id{Y}}{\zero \per Y} \tag{Proposition~\ref{prop:whisk}} \\
  = \; & \st{\symmt{X}{Y \per Z}}{\zero} ; \st{\symmp{Y}{Z} \per \id{X}}{\zero} ; \st{\symmt{Y \per Z}{X}}{\zero} ; \st{\symmp{X}{Z} \per \id{Y}}{\zero} \tag{Table~\ref{tab:equationsonobject}} \\
  = \; & \st{\symmt{X}{Y \per Z} ; (\symmp{Y}{Z} \per \id{X}) ; \symmt{Y \per Z}{X} ; (\symmp{X}{Z} \per \id{Y})}{\zero} \tag{\ref{eq:seq in Utr}} \\
  = \; & \st{\symmt{X \per Y}{Z}}{\zero} \tag{Table~\ref{tab:equationsonobject}}
\end{align*}

\textsc{Equation}~\eqref{eq:LRnatsym}. Let $(f \mid S) \colon Y \to Z$, then the following holds:
\begin{align*}
  \RW{X}{f \mid S} ; \st{\symmt{Z}{X}}{\zero} 
  &= \st{\symmt{Y}{X}}{\zero} ; \st{\symmt{X}{Y}}{\zero} ; \RW{X}{f \mid S} ; \st{\symmt{Z}{X}}{\zero} \tag{Table~\ref{fig:freestricmmoncatax}} \\
  &= \st{\symmt{Y}{X}}{\zero} ; \LW{X}{f \mid S} \tag{Definition~\ref{def:utr-whisk}}
\end{align*}

\textsc{Equation}~\eqref{eq:tape:RR}. Let $(f \mid S) \colon Z \to W$, then the following holds:
\begin{align*}
  \RW{X}{\RW{Y}{f \mid S}}
  &= \RW{X}{\st{\RW{Y}{f}}{S \per Y}} \tag{Definition~\ref{def:utr-whisk}} \\
  &= \st{\RW{X}{\RW{Y}{f}}}{S \per Y \per X} \tag{Definition~\ref{def:utr-whisk}} \\
  &= \st{\RW{Y \per X}{f}}{S \per Y \per X} \tag{\ref{eq:tape:RR} in $\Cat{C}$}
\end{align*}

\textsc{Equation}~\eqref{eq:tape:LR}. Let $(f \mid S) \colon Z \to W$, then the following holds:
\begin{align*}
   &  \LW{X}{\RW{Y}{f \mid S}} \\
  =\; & \st{\symmt{X}{Z \per Y}}{\zero} ; \RW{X}{\RW{Y}{f \mid S}} ; \st{\symmt{W \per Y}{X}}{\zero} \tag{Definition~\ref{def:utr-whisk}} \\
  =\; & \st{\symmt{X}{Z \per Y}}{\zero} ; \RW{X}{\RW{Y}{f} \mid S \per Y} ; \st{\symmt{W \per Y}{X}}{\zero} \tag{Definition~\ref{def:utr-whisk}} \\
  =\; & \st{\symmt{X}{Z \per Y}}{\zero} ; \st{\RW{X}{\RW{Y}{f}}}{S \per Y \per X} ; \st{\symmt{W \per Y}{X}}{\zero} \tag{Definition~\ref{def:utr-whisk}} \\
  =\; & \st{\symmt{X}{Z \per Y}}{\zero} ; \st{ \symmt{(S \piu Z)\per Y}{X} ; \LW{X}{\RW{Y}{f}} ; \symmt{X}{(S \piu W) \per Y} }{S \per Y \per X} ; \st{\symmt{W \per Y}{X}}{\zero} \tag{\ref{eq:LRnatsym} in $\Cat{C}$} \\
  =\; & \st{\symmt{X}{Z \per Y}}{\zero} ; \st{ \symmt{(S \piu Z)\per Y}{X} ; \RW{Y}{\LW{X}{f}} ; \symmt{X}{(S \piu W) \per Y} }{S \per Y \per X} ; \st{\symmt{W \per Y}{X}}{\zero} \tag{\ref{eq:tape:LR} in $\Cat{C}$} \\
  =\; & { \scriptstyle \st{\symmt{X}{Z \per Y}}{\zero} ; \st{ (\symmt{S \per Y}{X} \piu \symmt{Z \per Y}{X}) ; \Idl{X}{S \per Y}{Z \per Y} ; \RW{Y}{\LW{X}{f}} ; \dl{X}{S \per Y}{W \per Y} ; (\symmt{X}{S \per Y} \piu \symmt{X}{W \per Y}) }{S \per Y \per X} ; \st{\symmt{W \per Y}{X}}{\zero} } \tag{\ref{eq:rigax1}} \\
  =\; & { \scriptstyle \st{ (\id{S \per Y \per X} \piu \symmt{X}{Z \per Y}) ; (\symmt{S \per Y}{X} \piu \symmt{Z \per Y}{X}) ; \Idl{X}{S \per Y}{Z \per Y} ; \RW{Y}{\LW{X}{f}} ; \dl{X}{S \per Y}{W \per Y} ; (\symmt{X}{S \per Y} \piu \symmt{X}{W \per Y}) ; (\id{S \per Y \per X} \piu \symmt{W \per Y}{X}) }{S \per Y \per X} } \tag{\ref{eq:seq in Utr}} \\
  =\; & { \st{ (\symmt{S \per Y}{X} \piu \id{X \per Z \per Y}) ; \Idl{X}{S \per Y}{Z \per Y} ; \RW{Y}{\LW{X}{f}} ; \dl{X}{S \per Y}{W \per Y} ; ( \symmt{X}{S \per Y} \piu \id{X \per W \per Y} ) }{S \per Y \per X} } \tag{Table~\ref{fig:freestricmmoncatax}} \\
  =\; & { \st{ \Idl{X}{S \per Y}{Z \per Y} ; \RW{Y}{\LW{X}{f}} ; \dl{X}{S \per Y}{W \per Y} }{X \per S \per Y} } \tag{\ref{ax:trace:sliding}} \\
  =\; & { \st{ \RW{Y}{\Idl{X}{S}{Z}} ; \RW{Y}{\LW{X}{f}} ; \RW{Y}{\dl{X}{S}{W}} }{X \per S \per Y} } \tag{\ref{eq:whisk:dl} in $\Cat{C}$} \\
  =\; & { \st{ \RW{Y}{\Idl{X}{S}{Z} ; \LW{X}{f} ; \dl{X}{S}{W}} }{X \per S \per Y} } \tag{\ref{eq:whisk:funct} in $\Cat{C}$} \\
  =\; & \RW{Y}{\Idl{X}{S}{Z} ; \LW{X}{f} ; \dl{X}{S}{W} \mid X \per S} \tag{Definition~\ref{def:utr-whisk}} \\
  =\; & \RW{Y}{\LW{X}{f \mid S}} \tag{\ref{eq:LRaux}}
\end{align*}

\textsc{Equation}~\eqref{eq:tape:LL}. Let $(f \mid S) \colon Z \to W$, then the following holds:
\begin{align*}
  \LW{X}{\LW{Y}{f \mid S}}
  &=  \st{\symmt{X}{Y \per Z}}{\zero} ; \RW{X}{\LW{Y}{f \mid S}} ; \st{\symmt{Y \per W}{X}}{\zero} \tag{Definition~\ref{def:utr-whisk}} \\
  &=  \st{\symmt{X}{Y \per Z}}{\zero} ; \LW{Y}{\RW{X}{f \mid S}} ; \st{\symmt{Y \per W}{X}}{\zero} \tag{\ref{eq:tape:LR}} \\
  &=  \st{\symmt{X}{Y \per Z}}{\zero} ; \st{\symmt{Y}{Z \per X}}{\zero} ; \RW{Y}{\RW{X}{f \mid S}} ; \st{\symmt{W \per X}{Y}}{\zero} ; \st{\symmt{Y \per W}{X}}{\zero} \tag{Definition~\ref{def:utr-whisk}} \\
  &=  \st{\symmt{X}{Y \per Z}}{\zero} ; \st{\symmt{Y}{Z \per X}}{\zero} ; \RW{X \per Y}{f \mid S} ; \st{\symmt{W \per X}{Y}}{\zero} ; \st{\symmt{Y \per W}{X}}{\zero} \tag{\ref{eq:tape:RR}} \\
  &=  \st{\symmt{X \per Y}{Z}}{\zero} ; \RW{X \per Y}{f \mid S} ; \st{\symmt{W}{X \per Y}}{\zero} \tag{Table~\ref{fig:freestricmmoncatax}} \\
  &=  \LW{X \per Y}{f \mid S} \tag{Definition~\ref{def:utr-whisk}} \\
\end{align*}

\textsc{Equation}~\eqref{eq:whisk:dl}. 
\begin{align*}
  \RW{X}{\dl{Y}{Z}{W} \mid \zero}
  &= \st{\RW{X}{\dl{Y}{Z}{W}}}{\zero \per X} \tag{Definition~\ref{def:utr-whisk}} \\
  &= \st{\dl{Y}{Z \per X}{W \per X}}{\zero \per X} \tag{\ref{eq:whisk:dl} in $\Cat{C}$} \\
  &= \st{\dl{Y}{Z \per X}{W \per X}}{\zero} \tag{Table~\ref{tab:equationsonobject}}
\end{align*}

\textsc{Equation}~\eqref{eq:whisk:Ldl}.
\begin{align*}
  \LW{X}{\dl{Y}{Z}{W} \mid \zero}
  &= \st{\symmt{X}{Y \per (Z \piu W)}}{\zero} ; \RW{X}{\dl{Y}{Z}{W} \mid \zero} ; \st{\symmt{(Y  \per Z) \piu (Y \per W)}{X}}{\zero} \tag{Definition~\ref{def:utr-whisk}} \\
  &= \st{\symmt{X}{Y \per (Z \piu W)}}{\zero} ; \st{\RW{X}{\dl{Y}{Z}{W}}}{\zero \per X} ; \st{\symmt{(Y  \per Z) \piu (Y \per W)}{X}}{\zero} \tag{Definition~\ref{def:utr-whisk}} \\
  &= \st{\symmt{X}{Y \per (Z \piu W)}}{\zero} ; \st{\RW{X}{\dl{Y}{Z \per X}{W \per X}}}{\zero} ; \st{\symmt{(Y  \per Z) \piu (Y \per W)}{X}}{\zero} \tag{Table~\ref{tab:equationsonobject}} \\
  &= \st{\symmt{X}{Y \per (Z \piu W)} ; \RW{X}{\dl{Y}{Z \per X}{W \per X}} ; \symmt{(Y  \per Z) \piu (Y \per W)}{X}}{\zero} \tag{\ref{eq:seq in Utr}} \\
  &= \st{\LW{X}{\dl{Y}{Z \per X}{W \per X}} ; \symmt{X}{(Y  \per Z) \piu (Y \per W)} ; \symmt{(Y  \per Z) \piu (Y \per W)}{X} }{\zero} \tag{\ref{eq:LRnatsym} in $\Cat{C}$} \\
  &= \st{\LW{X}{\dl{Y}{Z \per X}{W \per X}} }{\zero} \tag{Table~\ref{fig:freestricmmoncatax}} \\
  &= \st{\dl{X \per Y}{Z}{W} ; \Idl{Y}{X \per Z}{X \per W}}{\zero} \tag{\ref{eq:whisk:Ldl} in $\Cat{C}$} \\
  &= \st{\dl{X \per Y}{Z}{W}}{\zero} ; \st{\Idl{Y}{X \per Z}{X \per W}}{\zero} \tag{\ref{eq:seq in Utr}}
\end{align*}

\end{proof}

\begin{lemma}\label{lemma:utr-per-monoidal}
    $(\UTr(\Cat C), \per, \uno, \symmt)$ is a strict symmetric monoidal category.
\end{lemma}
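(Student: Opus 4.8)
The plan is to reduce every monoidal axiom to the \emph{algebra of whiskerings} established in Proposition~\ref{lemma:utr-whisk-algebra}, exactly as one derives the $\per$-structure of a rig category from its whiskerings (cf.\ the bottom row of Table~\ref{tab:producttape} and the proof of Theorem~\ref{thm:Tapes is free sesquistrict generated by sigma}). Since $\per$ on $\UTr(\Cat C)$ is defined by \eqref{eq:utr-per} purely in terms of $\fun{L}_X$ and $\fun{R}_X$, and since these functors satisfy all the laws \eqref{eq:whisk:id}--\eqref{eq:whisk:Ldl} of Table~\ref{table:whisk}, the whole verification becomes a formal manipulation of those laws, with no further appeal to the concrete state construction.

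I would first observe that strictness on objects is inherited for free: the objects of $\UTr(\Cat C)$ are those of $\Cat C$, $\per$ acts on them as in $\Cat C$, and $(\Cat C, \per, \uno)$ is strict by right strictness of the rig category $\Cat C$, so $(X \per Y)\per Z = X \per (Y \per Z)$ and $\uno \per X = X = X \per \uno$ hold on the nose. For functoriality, preservation of identities is the equation $\LW{X_1}{\id{}};\RW{Y_2}{\id{}} = \id{}$, immediate from \eqref{eq:whisk:id}. Preservation of composition, for $f_i \colon X_i \to Y_i$ and $g_i \colon Y_i \to Z_i$, unfolds $(f_1 \per f_2);(g_1 \per g_2)$ into $\LW{X_1}{f_2};\RW{Y_2}{f_1};\LW{Y_1}{g_2};\RW{Z_2}{g_1}$; rewriting the central pair $\RW{Y_2}{f_1};\LW{Y_1}{g_2}$ by the interchange law \eqref{eq:tape:LexchangeR} and then collecting whiskerings with \eqref{eq:whisk:funct} yields $\LW{X_1}{f_2;g_2};\RW{Z_2}{f_1;g_1} = (f_1;g_1)\per(f_2;g_2)$. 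The unit laws on arrows follow from \eqref{eq:whisk:uno} together with \eqref{eq:whisk:id}, e.g.\ $\id{\uno} \per f = \LW{\uno}{f};\RW{Y}{\id{\uno}} = f;\id{Y} = f$. Associativity on arrows reduces, after expanding both $(f_1 \per f_2)\per f_3$ and $f_1 \per (f_2 \per f_3)$ into strings of whiskerings, to the laws \eqref{eq:tape:LL}, \eqref{eq:tape:RR} and \eqref{eq:tape:LR} governing iterated and commuted whiskerings, once more aided by \eqref{eq:tape:LexchangeR}.

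For the symmetry I would take $\symmt{X}{Y} \defeq \st{\symmt{X}{Y}}{\zero}$ and verify the coherence axioms of Figure~\ref{fig:symmmoncatax} and the required naturality in $\UTr(\Cat C)$. The coherence axioms are inherited from $\Cat C$ through the symmetric monoidal unit functor $\st{-}{\zero}$, while naturality of $\symmt$ against arbitrary $\st{f_1}{S_1}\per\st{f_2}{S_2}$ (with nonempty states) uses \eqref{eq:LRnatsym}, \eqref{eq:whisk:symmp} and \eqref{eq:symmper}, precisely the whisker laws encoding that $\symmt$ commutes past whiskerings.

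The only genuinely delicate ingredient is the interchange law \eqref{eq:tape:LexchangeR}, which is equivalent to functoriality of $\per$ and, as noted in the Remark following Proposition~\ref{lemma:utr-whisk-algebra}, crucially rests on uniformity; but it is already discharged in that proposition. Consequently the work here is purely bookkeeping, and I expect the main risk to be clerical --- keeping track of which whiskering carries which object index --- rather than any conceptual obstacle.
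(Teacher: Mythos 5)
Your proposal is correct and follows essentially the same route as the paper's proof: functoriality of $\per$ via \eqref{eq:whisk:id}, \eqref{eq:whisk:funct} and the interchange law \eqref{eq:tape:LexchangeR}, unit and associativity laws via \eqref{eq:whisk:uno}, \eqref{eq:tape:LL}, \eqref{eq:tape:LR}, \eqref{eq:tape:RR}, the structural morphisms taken as liftings $\st{-}{\zero}$ (identities by right strictness), coherence inherited from $\Cat C$ because it only involves morphisms with trivial state, and $\symmt$-naturality via \eqref{eq:LRnatsym}. The only cosmetic difference is that the paper checks naturality of $\symmt$ against $f \per \id{}$ only (sufficient by functoriality), whereas you sketch the general case; both are fine.
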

\begin{proof}
    First we show that $\per$, as defined in~\eqref{eq:utr-per}, is a functor, i.e. that it preserves identities and composition.

    \medskip

    \textsc{$\per$-functoriality}.
    \begin{align*}
      \st{\id{X}}{\zero} \per \st{\id{Y}}{\zero} 
      &= \LW{X}{\id{Y} \mid \zero} ; \RW{Y}{\id{X} \mid \zero} \tag{\ref{eq:utr-per}} \\
      &= \st{\id{X \per Y}}{\zero} ; \st{\id{X \per Y}}{\zero} \tag{\ref{eq:whisk:id}} \\
      &= \st{\id{X \per Y}}{\zero} \tag{Table~\ref{fig:freestricmmoncatax}}
    \end{align*}

    \begin{align*}
      (\st{f_1}{S_1} ; \st{f_2}{S_2}) \per (\st{f_3}{S_3} ; \st{f_4}{S_4})
      &= \LW{X}{\st{f_3}{S_3} ; \st{f_4}{S_4}} ; \RW{Z'}{\st{f_1}{S_1} ; \st{f_2}{S_2}} \tag{\ref{eq:utr-per}} \\
      &= \LW{X}{f_3 \mid S_3} ; \LW{X}{f_4 \mid S_4} ; \RW{Z'}{f_1 \mid S_1} ; \RW{Z'}{f_2 \mid S_2} \tag{\ref{eq:whisk:funct}} \\
      &= \LW{X}{f_3 \mid S_3} ; \RW{Y'}{f_1 \mid S_1} ; \LW{Y}{f_4 \mid S_4} ; \RW{Z'}{f_2 \mid S_2} \tag{\ref{eq:tape:LexchangeR}} \\
      &= (\st{f_1}{S_1} \per \st{f_3}{S_3}) ; (\st{f_2}{S_2} \per \st{f_4}{S_4}) \tag{\ref{eq:utr-per}}
    \end{align*}

    Now we define the associator, left and right unitors and symmetries as the obvious lifitng of those in $\Cat{C}$, i.e. as $\st{\alpha^\per_{X,Y,Z}}{\zero}, \st{\lambda^\per_{X}}{\zero}, \st{\rho^\per_{X}}{\zero}, \st{\symmt{X}{Y}}{\zero}$. Observe that since $\Cat{C}$ is right strict,  $\st{\alpha^\per_{X,Y,Z}}{\zero}, \st{\lambda^\per_{X}}{\zero}$ and $\st{\rho^\per_{X}}{\zero}$ are identities. What is left to prove is that they are components of natural transformations.

    \textsc{Left-unitality}.
    \begin{align*}
      \st{\id{\uno}}{\zero} \per \st{f}{S} 
      &= \LW{\uno}{f \mid S} ; \RW{Y}{\id{\uno} \mid \zero} \tag{\ref{eq:utr-per}} \\
      &= \st{f}{S} ; \st{\id{\uno \per Y}}{\zero} \tag{\ref{eq:whisk:uno}, \ref{eq:whisk:id}} \\
      &= \st{f}{S} ; \st{\id{Y}}{\zero} \tag{Table~\ref{tab:equationsonobject}} \\
      &= \st{f}{S} \tag{Table~\ref{fig:freestricmmoncatax}}
    \end{align*}

    \textsc{Right-unitality}.
    \begin{align*}
      \st{f}{S} \per \st{\id{\uno}}{\zero}
      &= \LW{X}{\id{\uno} \mid \zero} ; \RW{\uno}{f \mid S} \tag{\ref{eq:utr-per}} \\
      &= \st{\id{X \per \uno}}{\zero} ; \st{f}{S} \tag{\ref{eq:whisk:uno}, \ref{eq:whisk:id}} \\
      &= \st{\id{X}}{\zero} ; \st{f}{S} \tag{Table~\ref{tab:equationsonobject}} \\
      &= \st{f}{S} \tag{Table~\ref{fig:freestricmmoncatax}}
    \end{align*}

    \textsc{Associativity}.
    \begin{align*}
      (\st{f_1}{S_1} \per \st{f_2}{S_2}) \per \st{f_3}{S_3}
      &= \LW{X_1 \per X_2}{f_3 \mid S_3} ; \RW{Y_3}{\st{f_1}{S_1} \per \st{f_2}{S_2}} \tag{\ref{eq:utr-per}} \\
      &= \LW{X_1 \per X_2}{f_3 \mid S_3} ; \RW{Y_3}{\LW{X_1}{f_2 \mid S_2} ; \RW{Y_2}{f_1 \mid S_1}} \tag{\ref{eq:utr-per}} \\
      &= \LW{X_1 \per X_2}{f_3 \mid S_3} ; \RW{Y_3}{\LW{X_1}{f_2 \mid S_2}} ; \RW{Y_3}{\RW{Y_2}{f_1 \mid S_1}} \tag{\ref{eq:whisk:funct}} \\
      &= \LW{X_1}{\LW{X_2}{f_3 \mid S_3}} ; \LW{X_1}{\RW{Y_3}{f_2 \mid S_2}} ; \RW{Y_2 \per Y_3}{f_1 \mid S_1} \tag{\ref{eq:tape:LL}, \ref{eq:tape:LR}, \ref{eq:tape:RR}} \\
      &= \LW{X_1}{\LW{X_2}{f_3 \mid S_3} ; \RW{Y_3}{f_2 \mid S_2}} ; \RW{Y_2 \per Y_3}{f_1 \mid S_1} \tag{\ref{eq:whisk:funct}} \\
      &= \LW{X_1}{\st{f_2}{S_2} \per \st{f_3}{f_3}} ; \RW{Y_2 \per Y_3}{f_1 \mid S_1} \tag{\ref{eq:utr-per}} \\
      &= \st{f_1}{S_1} \per (\st{f_2}{S_2} \per \st{f_3}{S_3}) \tag{\ref{eq:utr-per}}
    \end{align*}

    \textsc{$\symmt$-naturality}.
    \begin{align*}
      (\st{f}{S} \per \st{\id{Z}}{\zero}) ; \st{\symmt{Y}{Z}}{\zero}
      &= \LW{X}{\id{Z} \mid \zero} ; \RW{Z}{f \mid S} ; \st{\symmt{Y}{Z}}{\zero} \tag{\ref{eq:utr-per}} \\
      &= \st{\id{X \per Z}}{\zero} ; \RW{Z}{f \mid S} ; \st{\symmt{Y}{Z}}{\zero} \tag{\ref{eq:whisk:id}} \\
      &= \RW{Z}{f \mid S} ; \st{\symmt{Y}{Z}}{\zero} \tag{Table~\ref{fig:freestricmmoncatax}} \\
      &=  \st{\symmt{X}{Z}}{\zero} ; \LW{Z}{f \mid S} \tag{\ref{eq:LRnatsym}} \\
      &=  \st{\symmt{X}{Z}}{\zero} ; \LW{Z}{f \mid S} ; \st{\id{Z \per Y}}{\zero} \tag{Table~\ref{fig:freestricmmoncatax}} \\
      &=  \st{\symmt{X}{Z}}{\zero} ; \LW{Z}{f \mid S} ; \RW{Y}{\id{Z} \mid \zero} \tag{\ref{eq:whisk:id}} \\
      &=  \st{\symmt{X}{Z}}{\zero} ; (\st{\id{Z}}{\zero} \per \st{f}{S}) \tag{\ref{eq:utr-per}}
    \end{align*}

    \textsc{$\symmt$-inverses}.
    \begin{align*}
      \st{\symmt{X}{Y}}{\zero} ; \st{\symmt{Y}{X}}{\zero}
      &= \st{\symmt{X}{Y} ; \symmt{Y}{X}}{\zero} \tag{\ref{eq:seq in Utr}} \\
      &= \st{\id{X \per Y}}{\zero} \tag{Table~\ref{fig:freestricmmoncatax}}
    \end{align*}

    \textsc{Coherence}.
    The coherence axioms~\eqref{ax:monoidaltriangle}, \eqref{ax:monoidalpentagone}, \eqref{eq:symmax1}, \eqref{eq:symmax2} and \eqref{eq:symmax3} hold because they hold in $\Cat{C}$. This is due to the fact that these axioms involve only morphisms of the form $\st{f}{\zero}$, thus their composition in $\UTr{(\Cat{C})}$ amounts to their composition in $\Cat{C}$.
\end{proof}

\begin{proof}[Proof of Theorem \ref{th:free-uniform-trace-rig}]
    By Lemma~\ref{lemma:utr-per-monoidal} we know that there is another symmetric monoidal structure on $\UTr(\Cat C)$. To prove that $\UTr(\Cat C)$ is also a rig category, we need to show that:
    \begin{enumerate}
        \item there are left and right natural distributors and annihilators, and;
        \item the coherence axioms in Figure~\ref{fig:rigax} are satisfied.
    \end{enumerate}
    We define the left and right distributors and annihilators as $\st{\dl{X}{Y}{Z}}{\zero}, \st{\dr{X}{Y}{Z}}{\zero}, \st{\annl{X}}{\zero}$ and $\st{\annr{X}}{\zero}$. 
    To prove that these are natural it is convenient to prove coherence first.

    \textsc{Coherence}. The structural morphisms in $\UTr(\Cat{C})$ are defined as the obvious lifting of the corresponding morphisms in $\Cat{C}$. Thus, the axioms  \eqref{eq:rigax1}-\eqref{eq:rigax12} in Figure~\ref{fig:rigax} hold in $\UTr(\Cat{C})$ because they hold in $\Cat{C}$.

    \textsc{Naturality of $\lambda^\bullet$}.
    \begin{align*}
      \st{\id{\zero}}{\zero} \per \st{f}{S}
      &= \LW{\zero}{f \mid S} ; \RW{Y}{\id{\zero} \mid \zero} \tag{\ref{eq:utr-per}} \\
      &= \st{\id{\zero}}{\zero} ; \RW{Y}{\id{\zero} \mid \zero} \tag{\ref{eq:whisk:zero}} \\
      &= \st{\id{\zero}}{\zero} ; \st{\id{\zero \per Y}}{\zero \per Y} \tag{\ref{eq:whisk:id}} \\
      &= \st{\id{\zero}}{\zero} ; \st{\id{\zero}}{\zero} \tag{Table~\ref{tab:equationsonobject}} \\
      &= \st{\id{\zero}}{\zero} \tag{Table~\ref{fig:freestricmmoncatax}}
    \end{align*}

    \textsc{Naturality of $\rho^\bullet$}.
    \begin{align*}
      \st{f}{S} \per \st{\id{\zero}}{\zero}
      &= \LW{X}{\id{\zero} \mid \zero} ; \RW{\zero}{f \mid S} \tag{\ref{eq:utr-per}} \\
      &= \LW{X}{\id{\zero} \mid \zero} ; \st{\id{\zero}}{\zero} \tag{\ref{eq:whisk:zero}} \\
      &= \st{\id{Y \per \zero}}{Y \per \zero} ; \st{\id{\zero}}{\zero} \tag{\ref{eq:whisk:id}} \\
      &= \st{\id{\zero}}{\zero} ; \st{\id{\zero}}{\zero} \tag{Table~\ref{tab:equationsonobject}} \\
      &= \st{\id{\zero}}{\zero} \tag{Table~\ref{fig:freestricmmoncatax}}
    \end{align*}

    \textsc{Naturality of $\delta^r$}.
    \begin{align*}
      (\st{f_1}{S_1} \piu \st{f_2}{S_2}) \per \st{f_3}{S_3}
      &= \LW{X_1 \piu X_2}{f_3 \mid S_3} ; \RW{Y_3}{f_1 \piu f_2} \tag{\ref{eq:utr-per}} \\
      &= (\LW{X_1}{f_3 \mid S_3} \piu \LW{X_2}{f_3 \mid S_3}) ; (\RW{Y_3}{f_1 \mid S_1} \piu \RW{Y_3}{f_2 \mid S_2}) \tag{\ref{eq:whisk:funct piu}, \ref{eq:whisk:sum}} \\
      &= (\LW{X_1}{f_3 \mid S_3} ; \RW{Y_3}{f_1 \mid S_1}) \piu (\LW{X_2}{f_3 \mid S_3};\RW{Y_3}{f_2 \mid S_2}) \tag{Table~\ref{fig:freestricmmoncatax}} \\
      &= (\st{f_1}{S_1} \per \st{f_3}{S_3}) \piu (\st{f_2}{S_2} \per \st{f_3}{S_3}) \tag{\ref{eq:utr-per}}
    \end{align*}

    \textsc{Naturality of $\delta^l$}. It follows from the fact that $\delta^r$ and $\sigma^\per$ are natural and axiom \eqref{eq:rigax1}.

    This proves that \(\UTr(\Cat{C})\) is a rig category.
    For a rig functor \(\fun{F} \colon \Cat{B} \to \Cat{C}\) (see~\cite[Definition 5.1.1]{johnsonYau}), define the rig coherence structure for \(\UTr(\fun{F}) \colon \UTr(\Cat{B}) \to \UTr(\Cat{C})\) via the identity-on-objects functor \(\eta_{\Cat{C}} \colon \Cat{C} \to \UTr(\Cat{C})\).
    This makes \(\UTr(\fun{F})\) a symmetric monoidal functor for the multiplicative structure and a rig functor because all the coherence structure is lifted from that of \(\fun{F}\).
    The components of the unit \(\eta_{\Cat{C}} \colon \Cat{C} \to \fun{U}(\UTr(\Cat{C}))\) and counit \(\epsilon_{\Cat{D}} \colon \UTr(\fun{U}(\Cat{D})) \to \Cat{D}\) of the adjunction are rig functors because they are both identity-on-objects.
    This concludes the proof that both the left and right adjoints restrict and both the unit and counit restrict.
\end{proof}

\subsection{Proofs of Section \ref{ssec:rigbiut}}

\begin{proof}[Proof of Proposition \ref{prop:free-uniform-trace-fb-rig}]
  By \Cref{cor:free-uniform-trace-fb}, the adjunction in the statement restricts to finite biproduct categories and, by \Cref{th:free-uniform-trace-rig}, to rig categories.
  This means that, \emph{(i)} for a finite biproduct rig category $\Cat{C}$, $\UTr(\Cat{C})$ is both a uniformly traced finite biproduct category and a rig category;
  \emph{(ii)} for a finite biproduct rig functor \(\fun{F} \colon \Cat{B} \to \Cat{C}\), \(\UTr(\fun{F}) \colon \UTr(\Cat{B}) \to \UTr(\Cat{C})\) is a uniformly traced finite biproduct rig functor;
  \emph{(iii)} the components of the unit \(\eta_{\Cat{C}}\) are finite biproduct rig functors and the components of the counit \(\epsilon_{\Cat{D}}\) are uniformly traced finite biproduct rig functors.
\end{proof}

\section{Appendix to Section \ref{sec:tapes}}

\begin{proof}[Proof of Lemma \ref{lemma:tr-adj}]
    Given a strict ut-fb category $\Cat{D}$ and a functor $H \colon \Cat{C} \to U_3(\Cat{D})$, one can define the ut-fb functor $H^\sharp \colon F_3(\Cat{C}) \to \Cat{D}$ inductively on objects of $F_3(\Cat{C})$ as 
    \[ H^\sharp(I) \defeq I \qquad \qquad H^\sharp (AP) \defeq H(A) \perG H^\sharp(P) \]
    and on arrows as 
    \[H^{\sharp}(\id\unoG) \defeq{\id{\unoG}} \qquad H^{\sharp}(\id{A})\defeq H(\id{A}) \qquad H^{\sharp}(\tape{c})\defeq H(c) \]
	\[H^{\sharp}(f; g)\defeq H^{\sharp}(f); H^{\sharp}(g) \qquad  H^{\sharp}(f \perG g)\defeq H^{\sharp}(f) \perG H^{\sharp}(g)  \qquad  H^{\sharp}(\sigma^\perG_{A,B})\defeq\sigma^\perG_{H(A),H(B)} \]
	\[ H^{\sharp}(\bang{A})\defeq\bang{H(A)} \qquad  H^{\sharp}(\diag{A})\defeq\diag{H(A)} \qquad  H^{\sharp}(\cobang{A})\defeq\cobang{H(A)} \qquad H^{\sharp}(\codiag{A})\defeq\codiag{H(A)} \qquad H^\sharp(\trace_A f) \defeq \trace_{H(A)} H^\sharp(f)\]
    Observe that $H^{\sharp}$ is well-defined:
    
	\begin{minipage}{0.45\linewidth}
		\begin{align*}
			H^{\sharp}(\tape{\id{A}}) & =\tag{Def. $H^{\sharp}$}  H(\id{A}) \\
& =\tag{Def. $H^{\sharp}$} H^{\sharp}(\id{A})
		\end{align*}
	\end{minipage}
	\hfill
	\begin{minipage}{0.45\linewidth}
		\begin{align*}
			H^{\sharp}(\tape{c;d}) & =\tag{Def. $H^{\sharp}$}  H(c;d) \\
			& =\tag{Fun. $H$} H(c); H(d)\\
			& =\tag{Def. $H^{\sharp}$} H^{\sharp}(\tape{c}); H^{\sharp}(\tape{d}) \\
			& =\tag{Fun. $H^{\sharp}$} H^{\sharp}(\tape{c}; \tape{d}) \\
		\end{align*}
	\end{minipage}
	
	The axioms in Tables ~\ref{fig:freestricmmoncatax},~\ref{fig:freestrictfbcat},~\ref{tab:trace-axioms} and~\ref{tab:uniformity} are preserved by $H^{\sharp}$, since they hold in $\Cat{D}$.
	By definition, $G ; H^{\sharp}= H$. Moreover $ H^{\sharp}$ is the unique strict ut-fb functor satisfying this equation. Thus indeed $F_3 \dashv U_3$.
\end{proof}

\subsection{The rig structure of $\CatTrTape$}\label{app:rigtrTape} %
In the proof of Theorem \ref{thm:freeut-fb}, in particular in \eqref{eq:rigstruct}, we have defined the rig structure of $\CatTrTape$ by means of the isomorphisms $H \colon \UTr(\CatTape) \to \CatTrTape$ and $K \colon \CatTrTape \to \UTr(\CatTape)$.
Since this definition is a bit uncomfortable, we illustrate in this appendix that such construction coincides with the one in   Tables \ref{tab:producttape}, \ref{tab:wisktraces} and Table \ref{table:def dl symmt},

One can readily check that both $H$ and $K$ are identity on objects; $H$ maps an arbitrary arrow $(\t \mid P)$ into $\trace_{P}(\t)$, while $K$ is defined inductively as prescribed by the proof of  Lemma \ref{lemma:tr-adj}:
    \[K(\id\zero) \defeq{(\id{\zero}\mid \zero)} \qquad K(\id{A})\defeq (\id{A}\mid \zero) \qquad K(\tape{c})\defeq (\tape{c} \mid \zero) \]
	\[K(f; g)\defeq K(f); K(g) \qquad  K(f \piu g)\defeq K(f) \piu K(g)  \qquad  K(\sigma^\piu_{A,B})\defeq (\sigma^\piu_{A,B} \mid \zero) \]
	\[ K(\bang{A})\defeq (\bang{A} \mid \zero) \qquad  K(\diag{A})\defeq (\diag{A}\mid \zero) \qquad  K(\cobang{A})\defeq (\cobang{A}\mid \zero) \qquad K(\codiag{A})\defeq (\codiag{A}\mid \zero) \qquad K(\trace_A f) \defeq \trace_{A} K(f)\]
With these definitions is immediate to check that $\dl{P}{Q}{R}$ and $\symmt{P}{Q}$ in $\CatTrTape$ are defined exactly as in $\CatTape$: see Table \ref{table:def dl symmt}.
For $\t_1\colon P_1 \to Q_1$ and $\t_2\colon P_2 \to Q_2$, $\t_1 \per \t_2$ in $\CatTrTape$ can be characterised as
\begin{equation}\label{eq:blabla}\t_1 \per \t_2 = \LWI {P_1}{\t_1}  ; \RWI {Q_2}{\t_2} \end{equation}
where $\LWI{P_1}{\cdot}$ and $\RWI{P_2}{\cdot}$ are the left and right whiskerings defined by extending those in $\CatTape$ (Table \ref{tab:producttape}) with the cases for traces illustrated in Table \ref{tab:wisktraces}.
In order to prove \eqref{eq:blabla}, we rely on the following three lemmas.

\begin{lemma}\label{lemma:functRWI} For all traced tapes $\t, \t_1, \t_2$, monomials $U$ and polynomials $P,Q,S$, the following holds: %
    \begin{enumerate}
        \item $\RWI{U}{\t_1 ; \t_2} = \RWI{U}{\t_1} ; \RWI{U}{\t_2}$
        \item $\RWI{U}{\t_1 \piu \t_2} = \RWI{U}{t_1} \piu \RWI{U}{\t_2}$
        \item $\RWI{U}{\trace_Q \t} = \trace_{Q \per U} \RWI{U}{\t}$
        \item $\RWI{P}{\t} ; \symmt{Q}{P} = \symmt{S}{P} ; \LWI{S}{\t}$
    \end{enumerate}    
\end{lemma}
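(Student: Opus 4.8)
The plan is to prove the four statements by unfolding the inductive definition of the whiskerings, treating (1)--(3) as essentially mechanical and isolating (4) as the real content. Recall that for a monomial $U$ the operator $\RWI{U}{\cdot}$ is defined clause-by-clause on the syntax of tapes (Table~\ref{tab:producttape}, extended by the trace clause of Table~\ref{tab:wisktraces}), and that for a trace-free tape the primed whiskering $\RWI{U}{\cdot}$ coincides with the whiskering $\RW{U}{\cdot}$ of the rig category $\CatTape$. Statements (1) and (2) are then immediate: the clauses $\RWI{U}{\t_1 ; \t_2} \defeq \RWI{U}{\t_1} ; \RWI{U}{\t_2}$ and $\RWI{U}{\t_1 \piu \t_2} \defeq \RWI{U}{\t_1} \piu \RWI{U}{\t_2}$ are literally the defining equations, so nothing remains beyond checking that the right-hand sides are well typed.

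Statement (3) extends the monomial trace clause $\RWI{U}{\trace_V \t} = \trace_{VU}\RWI{U}{\t}$ (valid for a monomial $V$ by Table~\ref{tab:wisktraces}) to an arbitrary polynomial $Q$. I would argue by induction on $Q$ regarded as a word of words, using the inductive definition of $\trace_{Q}$ over polynomials (Table~\ref{tab:inddefutfb}) together with the distributive identity $(V \piu Q')\per U = VU \piu (Q' \per U)$ obtained from~\eqref{def:productPolynomials}. The base case $Q = \zero$ uses $\trace_{\zero}\t = \t$ and $\zero \per U = \zero$. For $Q = V \piu Q'$ one computes
\[
\RWI{U}{\trace_{V \piu Q'}\t} = \RWI{U}{\trace_{Q'}\trace_V \t} = \trace_{Q' \per U}\RWI{U}{\trace_V\t} = \trace_{Q'\per U}\,\trace_{VU}\,\RWI{U}{\t},
\]
where the middle equality is the induction hypothesis and the last is the monomial clause; since $\trace_{Q'\per U}\trace_{VU} = \trace_{VU \piu (Q' \per U)} = \trace_{(V\piu Q')\per U}$ by the same inductive definition of the polynomial trace, this is the desired right-hand side.

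The genuinely substantial statement is (4): taking $\t \colon S \to Q$, it reads $\RWI{P}{\t} ; \symmt{Q}{P} = \symmt{S}{P} ; \LWI{P}{\t}$, i.e.\ naturality of the multiplicative symmetry $\symmt$ with respect to whiskering, the tape-level counterpart of the rig-category law~\eqref{eq:LRnatsym} of Table~\ref{table:whisk}. I would first prove it for $P$ a monomial by structural induction on $\t$. All the trace-free cases --- the lifted circuits $\tape{c}$, the biproduct generators $\diag{}, \bang{}, \codiag{}, \cobang{}$, the symmetry $\symmp{}{}$, the identities, and the binary cases $;$ and $\piu$ (dispatched via (1) and (2)) --- reduce to the instance of~\eqref{eq:LRnatsym} that already holds in $\CatTape$, since $\CatTape$ is a rig category by Theorem~\ref{thm:Tapes is free sesquistrict generated by sigma} and its whiskerings obey the algebra of Table~\ref{table:whisk} by Proposition~\ref{prop:whisk}, and the primed whiskerings agree with the unprimed ones on trace-free tapes. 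The only new case is $\t = \trace_V \t'$: here $\RWI{P}{\trace_V \t'} = \trace_{VP}\RWI{P}{\t'}$, and one must slide $\symmt$ through the trace using the induction hypothesis on $\t'$, statement (3), and the trace axioms (sliding and tightening) to reassemble $\symmt{S}{P} ; \LWI{P}{\trace_V \t'}$. Finally I would lift to an arbitrary polynomial $P$ through the inductive definition of polynomial whiskering, $\RWI{W \piu S'}{\t} = \dl{P}{W}{S'};(\RWI{W}{\t} \piu \RWI{S'}{\t});\Idl{Q}{W}{S'}$, combined with the coherence relating $\symmt$ to the distributors (Table~\ref{table:def dl symmt}).

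The main obstacle I anticipate is exactly the trace case of (4): the trace is taken over the additive ($\piu$) structure whereas the symmetry and whiskerings live in the multiplicative ($\per$) structure, so relating them forces a careful interleaving of the trace axioms with the $\symmt$--$\delta^l$ coherence that mediates the polynomial whiskerings. A secondary point of care is well-definedness: the whiskerings are defined on raw tape terms, so throughout one must tacitly verify that each clause respects the quotient by the monoidal, finite-biproduct, trace and uniformity axioms before applying the induction hypothesis to equivalence classes --- routine, but it cannot be skipped.
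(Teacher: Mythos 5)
Your proposal is correct and follows essentially the same route as the paper: item (3) is proved by exactly the induction on $Q$ you describe (base case via vanishing, inductive step via the monomial clause of Table~\ref{tab:wisktraces} and the polynomial trace of Table~\ref{tab:inddefutfb}), while the paper dispatches (1), (2) and (4) by observing that they are instances of~\eqref{eq:whisk:funct}, \eqref{eq:whisk:funct piu} and~\eqref{eq:LRnatsym} and deferring to the corresponding lemma of~\cite{bonchi2023deconstructing}, which is precisely the structural-induction argument you spell out. Your explicit treatment of the new $\trace_V\t'$ case in (4) and of well-definedness on the quotient supplies detail the paper leaves implicit in that citation, and your reading of the right-hand side of (4) as $\symmt{S}{P};\LWI{P}{\t}$ (correcting the statement's $\LWI{S}{\t}$) matches the intended law~\eqref{eq:LRnatsym}.
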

\begin{proof}Observe that $1.$, $2.$ and $4.$ correspond to the laws \eqref{eq:whisk:funct}, \eqref{eq:whisk:funct piu} and \eqref{eq:LRnatsym} in Table~\ref{table:whisk}. These are proved exactly as in~\cite[Lemma 5.9]{bonchi2023deconstructing}.

The proof of $3.$ proceeds by induction on $Q$. For the base case $\zero$, observe that both sides of the equation amount to $\RWI{U}{\t}$ by means of axiom \eqref{ax:trace:vanishing}. For the inductive case $V \piu Q$, the following holds
    \begin{align*}
        \RWI{U}{\trace_{V \piu Q}\t}
        &= \RWI{U}{\trace_{Q}\trace_V\t} \tag{Table~\ref{tab:inddefutfb}} \\
        &= \trace_{Q \per U}\RWI{U}{\trace_V\t} \tag{Induction hypothesis} \\
        &= \trace_{Q \per U}\trace_{V \per U}{\RWI{U}{\t}} \tag{Table~\ref{tab:wisktraces}} \\
        &= \trace_{(V \per U) \piu (Q \per U)}{\RWI{U}{\t}} \tag{Table~\ref{tab:inddefutfb}} \\
        &= \trace_{(V \piu Q) \per U}{\RWI{U}{\t}} \tag{Table~\ref{tab:equationsonobject}}
    \end{align*}
\end{proof}

\begin{lemma}\label{lemma:monomialwiskeringprime} 
For all traced tapes $\t$ and monomials $U$, $H(\RW {U}{K\t})= \RWI {U}{\t}$.
\end{lemma}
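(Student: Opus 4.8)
The plan is to prove the identity by structural induction on the traced tape $\t$, following the two-layer grammar for traced tapes. The observation that makes the induction tractable is that, by Definition~\ref{def:uniform-state-construction}, every morphism of $\UTr(\CatTape)$ is a pair $\st{\n}{P}$ of a trace-free tape $\n$ of $\CatTape$ and a polynomial $P$; that $H$ sends such a pair to $\trace_{P}(\n)$; and that, by Definition~\ref{def:utr-whisk}, the right whiskering on $\UTr(\CatTape)$ is simply $\RW{U}{\n \mid P} = \st{\RW{U}{\n}}{P \per U}$, where on the right $\RW{U}{\cdot}$ is the whiskering of $\CatTape$ from Table~\ref{tab:producttape}. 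Hence, writing $K\t = \st{\n}{P}$, I have $H(\RW{U}{K\t}) = \trace_{P \per U}(\RW{U}{\n})$, and the goal becomes to match this with $\RWI{U}{\t}$ as computed through Tables~\ref{tab:producttape} and~\ref{tab:wisktraces}.

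For the trace-free generators ($\id{U}$, $\id{\zero}$, $\tapeFunct{c}$, $\symmp{U}{V}$, and the $\piu$-(co)monoid and (co)unit generators) the functor $K$ returns $\st{g}{\zero}$, so $\RW{U}{K g} = \st{\RW{U}{g}}{\zero \per U} = \st{\RW{U}{g}}{\zero}$ by annihilation $\zero \per U = \zero$ (Table~\ref{table:eq objects fsr}); applying $H$ and the \eqref{ax:trace:vanishing} axiom yields $\RW{U}{g}$, which is exactly $\RWI{U}{g}$ since the two whiskering tables agree on trace-free generators. For $\t = \t_1 ; \t_2$ and $\t = \t_1 \piu \t_2$ I would use that $K$, $H$ and $\RW{U}{\cdot}$ all preserve $;$ and $\piu$ — functoriality of $K$ and of $H$, together with \eqref{eq:whisk:funct}.2 and \eqref{eq:whisk:funct piu}.2 for $\RW{U}{\cdot}$ (Proposition~\ref{lemma:utr-whisk-algebra}) — to reduce to the inductive hypotheses, then close with parts 1 and 2 of Lemma~\ref{lemma:functRWI}, which state the corresponding compatibility of $\RWI{U}{\cdot}$ with $;$ and $\piu$.

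The only genuinely interesting case is the trace $\t = \trace_V \s$, with $V$ a monomial as dictated by the grammar. Here $K\t = \trace_V K\s$; writing $K\s = \st{\n}{P}$ and using that the $\piu$-trace on $\UTr(\CatTape)$ accumulates the state, $\trace_V \st{\n}{P} = \st{\n}{P \piu V}$ by \eqref{eq:trace in Utr}. Thus $\RW{U}{K\t} = \st{\RW{U}{\n}}{(P \piu V) \per U}$, and the object-level computation $(P \piu V) \per U = (P \per U) \piu (VU)$ from \eqref{def:productPolynomials} gives $H(\RW{U}{K\t}) = \trace_{(P \per U) \piu (VU)}(\RW{U}{\n})$. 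On the other side, Table~\ref{tab:wisktraces} yields $\RWI{U}{\t} = \trace_{VU}(\RWI{U}{\s})$, while the inductive hypothesis rewrites $\RWI{U}{\s} = H(\RW{U}{K\s}) = \trace_{P \per U}(\RW{U}{\n})$; one final application of the \eqref{ax:trace:joining} axiom collapses $\trace_{VU}\trace_{P \per U}$ into $\trace_{(P \per U) \piu (VU)}$, so the two sides coincide. I expect this trace case — and specifically the bookkeeping of state objects under the distributivity of $\per$ over $\piu$ combined with the joining axiom — to be the main obstacle; once the preservation properties of $H$, $K$ and of the two whiskerings are in hand, the remaining cases are routine.
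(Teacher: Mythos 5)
Your proposal is correct and follows essentially the same route as the paper: structural induction on $\t$, with the trace-free generators handled via $K(b)=\st{b}{\zero}$, Definition~\ref{def:utr-whisk}, the annihilation $\zero\per U=\zero$ and \eqref{ax:trace:vanishing}; the $;$ and $\piu$ cases closed by functoriality together with Lemma~\ref{lemma:functRWI}.1--2; and the trace case resolved by tracking the state object through \eqref{eq:trace in Utr} and collapsing the iterated trace (the paper cites Lemma~\ref{lemma:functRWI}.3, which amounts to the same joining/distributivity bookkeeping you carry out explicitly). No gaps.
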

\begin{proof}
The proof goes by induction on $\t$. For the base cases $b\in \{\id{\zero}, \id{A},\tape{c},\sigma^\piu_{A,B},\bang{A} , \diag{A}, \cobang{A}, \codiag{A} \}$:
\begin{align*}
H(\RW {U}{Kb}) & = H(\RW {U}{(b \mid \zero)}) \tag{def. of $K$}\\
& = H({(\RW {U}{b} \mid \zero \per U)}) \tag{Definition \ref{def:utr-whisk}}\\
& = H({(\RW {U}{b} \mid \zero )})  \tag{Table~\ref{tab:equationsonobject}}\\
& = \trace_{\zero}{(\RW {U}{b} )} \tag{Definition of $H$}\\
& = \RW {U}{b}  \tag{\ref{ax:trace:vanishing}}\\
& = \RWI {U}{b}  \tag{Def. of $\RWI{U}{\cdot}$}\\
\end{align*}

For the inductive case $\t_1;\t_2$, one simply exploits functoriality and the induction hypothesis. 
\begin{align*}
H(\RW {U}{K (\t_1 ;\t_2)}) & = H(\RW {U}{K (\t_1) ; K(\t_2)}) \tag{funct. of $K$}\\
& = H(\RW {U}{K (\t_1)} ; \RW {U}{K(\t_2)}) \tag{funct. of $\RW{U}{-}$}\\
& = H(\RW {U}{K (\t_1)}) ; H(\RW {U}{K\t_2}) \tag{funct. of $H$}\\
& = \RWI {U}{\t_1} ; \RWI{U}{\t_2} \tag{Induction Hypothesis}\\
& = \RWI {U}{\t_1; \t_2}   \tag{Lemma \ref{lemma:functRWI}.1}\\
\end{align*}

The case  $\t_1 \piu \t_2$ is analogous, but one uses Lemma \ref{lemma:functRWI}.2. For $\trace_A \t$, one uses Lemma \ref{lemma:functRWI}.3. 
\end{proof}

\begin{lemma}\label{lemma:wiskeringprime} 
For all tapes $\t$ and polynomial $P$, $H(\LW {P}{K\t})= \LWI {P}{\t}$ and $H(\RW {P}{K\t})= \RWI {P}{\t}$.
\end{lemma}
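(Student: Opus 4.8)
The plan is to prove the two equalities in sequence: first the right-whiskering identity $H(\RW{P}{K\t}) = \RWI{P}{\t}$ for arbitrary polynomials $P$, by induction on the structure of $P$, and then to deduce the left-whiskering identity from it. Throughout, the two structural facts I rely on are that $H$ is an isomorphism of ut-fb categories — hence preserves $;$, $\piu$ and $\trace$ — and that, by the very definition of the rig structure on $\CatTrTape$ in \eqref{eq:rigstruct}, $H$ sends each lifted distributor $\st{\dl{P}{Q}{R}}{\zero}$ to the distributor $\dl{P}{Q}{R}$ of $\CatTrTape$ and each lifted symmetry $\st{\symmt{P}{Q}}{\zero}$ to $\symmt{P}{Q}$ (recall that $K$, and hence $H$, is the identity on objects).

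For the right whiskering I would induct on $P$. In the base case $P = \zero$, the law \eqref{eq:whisk:zero} gives $\RW{\zero}{K\t} = \st{\id{\zero}}{\zero}$, and applying $H$ yields $\trace_{\zero}\id{\zero} = \id{\zero}$ by \eqref{ax:trace:vanishing}, which is exactly $\RWI{\zero}{\t}$. For the inductive step $P = U \piu P'$ with $U$ a monomial and $\t \colon P_1 \to Q_1$, I would expand $\RW{U \piu P'}{K\t}$ using \eqref{eq:whisk:sum}.2 (valid in $\UTr(\CatTape)$ by Proposition~\ref{lemma:utr-whisk-algebra}) as the composite $\st{\dl{P_1}{U}{P'}}{\zero} ; (\RW{U}{K\t} \piu \RW{P'}{K\t}) ; \st{\Idl{Q_1}{U}{P'}}{\zero}$. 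Pushing $H$ through this composite — using functoriality, preservation of $\piu$, and preservation of the distributors — reduces the goal to $H(\RW{U}{K\t}) = \RWI{U}{\t}$ together with $H(\RW{P'}{K\t}) = \RWI{P'}{\t}$; the first is Lemma~\ref{lemma:monomialwiskeringprime} and the second is the induction hypothesis. The resulting expression is precisely the inductive definition of $\RWI{U \piu P'}{\t}$ in Table~\ref{tab:producttape}, which closes the induction.

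For the left whiskering I would avoid a second induction and instead use Definition~\ref{def:utr-whisk}, which for $\t \colon P_1 \to Q_1$ gives $\LW{P}{K\t} = \st{\symmt{P}{P_1}}{\zero} ; \RW{P}{K\t} ; \st{\symmt{Q_1}{P}}{\zero}$. Applying $H$, using functoriality, the preservation of symmetries, and the right-whiskering case just established, I obtain $H(\LW{P}{K\t}) = \symmt{P}{P_1} ; \RWI{P}{\t} ; \symmt{Q_1}{P}$. It then remains to recognise this right-hand side as $\LWI{P}{\t}$, which follows from Lemma~\ref{lemma:functRWI}.4 (the naturality law \eqref{eq:LRnatsym}) together with the fact that $\symmt{P}{P_1}$ and $\symmt{P_1}{P}$ are mutually inverse.

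I expect no deep obstacle: the argument is essentially a bookkeeping induction that transports the combinatorial definitions of Table~\ref{tab:producttape} across the isomorphism $H$. The one point that needs care is keeping the source and target polynomials straight, so that the distributors and symmetries produced after applying $H$ are literally the structural morphisms of $\CatTrTape$ fixed in \eqref{eq:rigstruct}, rather than some other composite; this is exactly what lets functoriality of $H$ collapse each inductive step onto the intended definition. A secondary point is that the symmetry-based reduction for $\LW{}{\cdot}$ must hold for arbitrary polynomials and not merely monomials, which is fine because both Definition~\ref{def:utr-whisk} and Lemma~\ref{lemma:functRWI}.4 are already stated at that generality.
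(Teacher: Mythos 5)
Your proposal is correct and follows essentially the same route as the paper: an induction on $P$ for the right whiskering (base case via \eqref{eq:whisk:zero}, inductive step via \eqref{eq:whisk:sum}.2, Lemma~\ref{lemma:monomialwiskeringprime} and the induction hypothesis), followed by deriving the left whiskering from Definition~\ref{def:utr-whisk} and Lemma~\ref{lemma:functRWI}.4. The only cosmetic difference is that you spell out the intermediate step $\trace_{\zero}\id{\zero}=\id{\zero}$ and the cancellation of the two symmetries, which the paper leaves implicit.
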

\begin{proof}
We prove by induction on $P$ that $H(\RW {P}{K\t})= \RWI {P}{\t}$.

We consider the base case of $\zero$: %
\begin{align*}
H(\RW {\zero}{K\t}) &= H(\id{\zero} \mid \zero) \tag{\ref{eq:whisk:zero}.2}\\
& = \id{\zero} \tag{def. of $H$}\\
& =  \RWI{\zero}{\t} \tag{def. of $\RWI{\zero}{-}$}
\end{align*}

We consider the inductive case $U \piu P$.
\begin{align*}
H(\RW {U \piu P}{K\t}) &= H(\, (\dl{Z}{A}{U}\mid \zero) ; (\RW{U}{K\t} \piu \RW{P}{K\t}) ; (\Idl{W}{A}{U}\mid \zero) \,) \tag{\ref{eq:whisk:sum}.2}\\
& = H (\dl{Z}{A}{U}\mid \zero) ; (H\RW{U}{K\t} \piu H\RW{P}{K\t}) ; H(\Idl{W}{A}{U}\mid \zero) \tag{funct. of $H$}\\
& = \dl{Z}{A}{U} ; (H\RW{U}{K\t} \piu H\RW{P}{K\t}) ; \Idl{W}{A}{U} \tag{def. of $H$}\\
& = \dl{Z}{A}{U} ; (\RWI{U}{\t} \piu H\RW{P}{K\t}) ; \Idl{W}{A}{U} \tag{Lemma \ref{lemma:monomialwiskeringprime}}\\
& = \dl{Z}{A}{U} ; (\RWI{U}{\t} \piu \RWI{P}{\t}) ; \Idl{W}{A}{U} \tag{Induction hypothesis}\\
& = \RWI{U\piu P}{\t}  \tag{def. of $\RWI{\zero}{-}$ (Table \ref{tab:producttape})}\\
\end{align*}

We now face $\LW {P}{\cdot}$:
\begin{align*}
H(\LW {P}{K\t}) &= H(\, (\symmt{P}{Y} \mid \zero) ; \RW {P}{K\t} ; (\symmt{Z}{P} \mid \zero) \,)  \tag{Definition \ref{def:utr-whisk}}\\
&= H (\symmt{P}{Y} \mid \zero) ; H \RW {P}{K\t} ; H(\symmt{Z}{P} \mid \zero)   \tag{funct of $H$}\\
&= \symmt{P}{Y}  ; H \RW {P}{K\t} ; \symmt{Z}{P}    \tag{def. of $H$}\\
&= \symmt{P}{Y}  ; \RWI {P}{\t} ; \symmt{Z}{P}    \tag{previous point}\\
&= \LWI {P}{\t}     \tag{Lemma \ref{lemma:functRWI}.4}\\
\end{align*} 
\end{proof}

The proof of \eqref{eq:blabla} is concluded by the following derivation.
\begin{align*}
  \t_1 \per \t_2 & =   H(K(\t_1) \per K(\t_2)) \tag{\ref{eq:rigstruct}}\\
  &= H( \, \LW {P_1}{K\t_1}  ; \RW {Q_2}{K\t_2} \,) \tag{def. $\per$}\\
  & = H( \LW {P_1}{K\t_1})  ; H( \RW {Q_2}{K\t_2} ) \tag{funct. of $H$}  \\
  & = \LWI {P_1}{\t_1}; \RWI {Q_2}{\t_2} \tag{Lemma \ref{lemma:wiskeringprime}}
\end{align*}

\section{Appendix to Section \ref{sec:kleene}}\label{app:Kleene} %

\begin{proof}[Proof of Lemma \ref{lemma:order-adjointness}]
  For the $(\implies)$ direction we assume $f \leq g$ and prove separtely the following two inclusions.

  \noindent\begin{minipage}{0.48\textwidth}
    \begin{align*}
      
    \begin{tikzpicture}
	\begin{pgfonlayer}{nodelayer}
		\node [style=label] (105) at (-2.75, 0) {$X$};
		\node [style=black] (107) at (1.5, 0) {};
		\node [style=none] (108) at (0.5, -1) {};
		\node [style=none] (109) at (0.5, 1) {};
		\node [style=label] (120) at (2.75, 0) {$Y$};
		\node [style=black] (121) at (-1.5, 0) {};
		\node [style=none] (122) at (-0.5, -1) {};
		\node [style=none] (123) at (-0.5, 1) {};
		\node [style=none] (124) at (2.25, 0) {};
		\node [style=none] (125) at (-2.25, 0) {};
		\node [style=stringbox] (126) at (0, 1) {$f$};
		\node [style=stringbox] (127) at (0, -1) {$g$};
	\end{pgfonlayer}
	\begin{pgfonlayer}{edgelayer}
		\draw [bend left=45] (109.center) to (107);
		\draw [bend left=45] (107) to (108.center);
		\draw [bend right=45] (123.center) to (121);
		\draw [bend right=45] (121) to (122.center);
		\draw (125.center) to (121);
		\draw (124.center) to (107);
		\draw (123.center) to (126);
		\draw (109.center) to (126);
		\draw (122.center) to (127);
		\draw (127) to (108.center);
	\end{pgfonlayer}
\end{tikzpicture}
}

      &\leq 
    \begin{tikzpicture}
	\begin{pgfonlayer}{nodelayer}
		\node [style=label] (105) at (-2.75, 0) {$X$};
		\node [style=black] (107) at (1.5, 0) {};
		\node [style=none] (108) at (0.5, -1) {};
		\node [style=none] (109) at (0.5, 1) {};
		\node [style=label] (120) at (2.75, 0) {$Y$};
		\node [style=black] (121) at (-1.5, 0) {};
		\node [style=none] (122) at (-0.5, -1) {};
		\node [style=none] (123) at (-0.5, 1) {};
		\node [style=none] (124) at (2.25, 0) {};
		\node [style=none] (125) at (-2.25, 0) {};
		\node [style=stringbox] (126) at (0, 1) {$g$};
		\node [style=stringbox] (127) at (0, -1) {$g$};
	\end{pgfonlayer}
	\begin{pgfonlayer}{edgelayer}
		\draw [bend left=45] (109.center) to (107);
		\draw [bend left=45] (107) to (108.center);
		\draw [bend right=45] (123.center) to (121);
		\draw [bend right=45] (121) to (122.center);
		\draw (125.center) to (121);
		\draw (124.center) to (107);
		\draw (123.center) to (126);
		\draw (109.center) to (126);
		\draw (122.center) to (127);
		\draw (127) to (108.center);
	\end{pgfonlayer}
\end{tikzpicture}
}
 \tag{Hyp.}  \\
      &= 
    \begin{tikzpicture}
	\begin{pgfonlayer}{nodelayer}
		\node [style=label] (105) at (-5, 0) {$X$};
		\node [style=black] (107) at (0.75, 0) {};
		\node [style=none] (108) at (-0.25, -1) {};
		\node [style=none] (109) at (-0.25, 1) {};
		\node [style=label] (120) at (2, 0) {$Y$};
		\node [style=black] (121) at (-1.5, 0) {};
		\node [style=none] (122) at (-0.5, -1) {};
		\node [style=none] (123) at (-0.5, 1) {};
		\node [style=none] (124) at (1.5, 0) {};
		\node [style=stringbox] (126) at (-3, 0) {$g$};
		\node [style=none] (127) at (-4.5, 0) {};
	\end{pgfonlayer}
	\begin{pgfonlayer}{edgelayer}
		\draw [bend left=45] (109.center) to (107);
		\draw [bend left=45] (107) to (108.center);
		\draw [bend right=45] (123.center) to (121);
		\draw [bend right=45] (121) to (122.center);
		\draw (124.center) to (107);
		\draw (123.center) to (109.center);
		\draw (127.center) to (126);
		\draw (122.center) to (108.center);
		\draw (126) to (121);
	\end{pgfonlayer}
\end{tikzpicture}
}
  \tag{\ref{ax:comonoid:nat:copy}}\\
      &\leq 
    \begin{tikzpicture}
	\begin{pgfonlayer}{nodelayer}
		\node [style=label] (105) at (-2, 0) {$X$};
		\node [style=label] (120) at (2, 0) {$Y$};
		\node [style=none] (124) at (1.5, 0) {};
		\node [style=stringbox] (126) at (0, 0) {$g$};
		\node [style=none] (127) at (-1.5, 0) {};
	\end{pgfonlayer}
	\begin{pgfonlayer}{edgelayer}
		\draw (127.center) to (126);
		\draw (126) to (124.center);
	\end{pgfonlayer}
\end{tikzpicture}
}
 \tag{\ref{ax:adjbiprod:3}}
    \end{align*}  
  \end{minipage}
  \quad
  \vline
  \begin{minipage}{0.48\textwidth}
    \begin{align*}
      
    \begin{tikzpicture}
	\begin{pgfonlayer}{nodelayer}
		\node [style=label] (105) at (-2, 0) {$X$};
		\node [style=label] (120) at (2, 0) {$Y$};
		\node [style=none] (124) at (1.5, 0) {};
		\node [style=stringbox] (126) at (0, 0) {$g$};
		\node [style=none] (127) at (-1.5, 0) {};
	\end{pgfonlayer}
	\begin{pgfonlayer}{edgelayer}
		\draw (127.center) to (126);
		\draw (126) to (124.center);
	\end{pgfonlayer}
\end{tikzpicture}
}
 
      &= 
    \begin{tikzpicture}
	\begin{pgfonlayer}{nodelayer}
		\node [style=label] (105) at (-2.75, 0) {$X$};
		\node [style=black] (107) at (1.5, 0) {};
		\node [style=none] (108) at (0.5, -1) {};
		\node [style=none] (109) at (0.5, 1) {};
		\node [style=label] (120) at (2.75, 0) {$Y$};
		\node [style=black] (121) at (-1.5, 0) {};
		\node [style=none] (122) at (-0.5, -1) {};
		\node [style=none] (123) at (-0.5, 1) {};
		\node [style=none] (124) at (2.25, 0) {};
		\node [style=none] (125) at (-2.25, 0) {};
		\node [style=stringbox] (127) at (0, -1) {$g$};
		\node [style=black] (128) at (-0.5, 1) {};
		\node [style=black] (129) at (0.5, 1) {};
	\end{pgfonlayer}
	\begin{pgfonlayer}{edgelayer}
		\draw [bend left=45] (109.center) to (107);
		\draw [bend left=45] (107) to (108.center);
		\draw [bend right=45] (123.center) to (121);
		\draw [bend right=45] (121) to (122.center);
		\draw (125.center) to (121);
		\draw (124.center) to (107);
		\draw (122.center) to (127);
		\draw (127) to (108.center);
		\draw (128) to (123.center);
		\draw (129) to (109.center);
	\end{pgfonlayer}
\end{tikzpicture}
}
 \tag{\ref{ax:comonoid:unit}, \ref{ax:monoid:unit}}\\
      &= 
    \begin{tikzpicture}
	\begin{pgfonlayer}{nodelayer}
		\node [style=label] (105) at (-2.75, 0) {$X$};
		\node [style=black] (107) at (3, 0) {};
		\node [style=none] (108) at (2, -1) {};
		\node [style=none] (109) at (2, 1) {};
		\node [style=label] (120) at (4.25, 0) {$Y$};
		\node [style=black] (121) at (-1.5, 0) {};
		\node [style=none] (122) at (-0.5, -1) {};
		\node [style=none] (123) at (-0.5, 1) {};
		\node [style=none] (124) at (3.75, 0) {};
		\node [style=none] (125) at (-2.25, 0) {};
		\node [style=stringbox] (126) at (1.5, 1) {$f$};
		\node [style=stringbox] (127) at (1.5, -1) {$g$};
		\node [style=black] (128) at (-0.5, 1) {};
		\node [style=black] (129) at (0.25, 1) {};
	\end{pgfonlayer}
	\begin{pgfonlayer}{edgelayer}
		\draw [bend left=45] (109.center) to (107);
		\draw [bend left=45] (107) to (108.center);
		\draw [bend right=45] (123.center) to (121);
		\draw [bend right=45] (121) to (122.center);
		\draw (125.center) to (121);
		\draw (124.center) to (107);
		\draw (109.center) to (126);
		\draw (122.center) to (127);
		\draw (127) to (108.center);
		\draw (129) to (126);
		\draw (128) to (123.center);
	\end{pgfonlayer}
\end{tikzpicture}
}
 \tag{\ref{ax:monoid:nat:discard}}\\
      &\leq 
    \begin{tikzpicture}
	\begin{pgfonlayer}{nodelayer}
		\node [style=label] (105) at (-2.75, 0) {$X$};
		\node [style=black] (107) at (1.5, 0) {};
		\node [style=none] (108) at (0.5, -1) {};
		\node [style=none] (109) at (0.5, 1) {};
		\node [style=label] (120) at (2.75, 0) {$Y$};
		\node [style=black] (121) at (-1.5, 0) {};
		\node [style=none] (122) at (-0.5, -1) {};
		\node [style=none] (123) at (-0.5, 1) {};
		\node [style=none] (124) at (2.25, 0) {};
		\node [style=none] (125) at (-2.25, 0) {};
		\node [style=stringbox] (126) at (0, 1) {$f$};
		\node [style=stringbox] (127) at (0, -1) {$g$};
	\end{pgfonlayer}
	\begin{pgfonlayer}{edgelayer}
		\draw [bend left=45] (109.center) to (107);
		\draw [bend left=45] (107) to (108.center);
		\draw [bend right=45] (123.center) to (121);
		\draw [bend right=45] (121) to (122.center);
		\draw (125.center) to (121);
		\draw (124.center) to (107);
		\draw (123.center) to (126);
		\draw (109.center) to (126);
		\draw (122.center) to (127);
		\draw (127) to (108.center);
	\end{pgfonlayer}
\end{tikzpicture}
}
 \tag{\ref{ax:adjbiprod:4}}
    \end{align*}
  \end{minipage}

  \medskip

  For the $(\impliedby)$ direction we assume $f+g = g$ and prove the following inclusion.
  \begin{equation*}
    
    \begin{tikzpicture}
	\begin{pgfonlayer}{nodelayer}
		\node [style=label] (105) at (-2, 0) {$X$};
		\node [style=label] (120) at (2, 0) {$Y$};
		\node [style=none] (124) at (1.5, 0) {};
		\node [style=stringbox] (126) at (0, 0) {$f$};
		\node [style=none] (127) at (-1.5, 0) {};
	\end{pgfonlayer}
	\begin{pgfonlayer}{edgelayer}
		\draw (127.center) to (126);
		\draw (126) to (124.center);
	\end{pgfonlayer}
\end{tikzpicture}
}
 
    \axeq{\text{\ref{ax:comonoid:unit}, \ref{ax:monoid:unit}}} 
    
    \begin{tikzpicture}
	\begin{pgfonlayer}{nodelayer}
		\node [style=label] (105) at (-2.75, 0) {$X$};
		\node [style=black] (107) at (1.5, 0) {};
		\node [style=none] (108) at (0.5, 1) {};
		\node [style=none] (109) at (0.5, -1) {};
		\node [style=label] (120) at (2.75, 0) {$Y$};
		\node [style=black] (121) at (-1.5, 0) {};
		\node [style=none] (122) at (-0.5, 1) {};
		\node [style=none] (123) at (-0.5, -1) {};
		\node [style=none] (124) at (2.25, 0) {};
		\node [style=none] (125) at (-2.25, 0) {};
		\node [style=stringbox] (127) at (0, 1) {$f$};
		\node [style=black] (128) at (-0.5, -1) {};
		\node [style=black] (129) at (0.5, -1) {};
	\end{pgfonlayer}
	\begin{pgfonlayer}{edgelayer}
		\draw [bend right=45] (109.center) to (107);
		\draw [bend right=45] (107) to (108.center);
		\draw [bend left=45] (123.center) to (121);
		\draw [bend left=45] (121) to (122.center);
		\draw (125.center) to (121);
		\draw (124.center) to (107);
		\draw (122.center) to (127);
		\draw (127) to (108.center);
		\draw (128) to (123.center);
		\draw (129) to (109.center);
	\end{pgfonlayer}
\end{tikzpicture}
}
 
    \axeq{\text{\ref{ax:monoid:nat:discard}}} 
    
    \begin{tikzpicture}
	\begin{pgfonlayer}{nodelayer}
		\node [style=label] (105) at (-2.75, 0) {$X$};
		\node [style=black] (107) at (3, 0) {};
		\node [style=none] (108) at (2, 1) {};
		\node [style=none] (109) at (2, -1) {};
		\node [style=label] (120) at (4.25, 0) {$Y$};
		\node [style=black] (121) at (-1.5, 0) {};
		\node [style=none] (122) at (-0.5, 1) {};
		\node [style=none] (123) at (-0.5, -1) {};
		\node [style=none] (124) at (3.75, 0) {};
		\node [style=none] (125) at (-2.25, 0) {};
		\node [style=stringbox] (126) at (1.5, -1) {$g$};
		\node [style=stringbox] (127) at (1.5, 1) {$f$};
		\node [style=black] (128) at (-0.5, -1) {};
		\node [style=black] (129) at (0.25, -1) {};
	\end{pgfonlayer}
	\begin{pgfonlayer}{edgelayer}
		\draw [bend right=45] (109.center) to (107);
		\draw [bend right=45] (107) to (108.center);
		\draw [bend left=45] (123.center) to (121);
		\draw [bend left=45] (121) to (122.center);
		\draw (125.center) to (121);
		\draw (124.center) to (107);
		\draw (109.center) to (126);
		\draw (122.center) to (127);
		\draw (127) to (108.center);
		\draw (129) to (126);
		\draw (128) to (123.center);
	\end{pgfonlayer}
\end{tikzpicture}
}
 
    \axsubeq{\ref{ax:adjbiprod:4}}
    
    \begin{tikzpicture}
	\begin{pgfonlayer}{nodelayer}
		\node [style=label] (105) at (-2.75, 0) {$X$};
		\node [style=black] (107) at (1.5, 0) {};
		\node [style=none] (108) at (0.5, -1) {};
		\node [style=none] (109) at (0.5, 1) {};
		\node [style=label] (120) at (2.75, 0) {$Y$};
		\node [style=black] (121) at (-1.5, 0) {};
		\node [style=none] (122) at (-0.5, -1) {};
		\node [style=none] (123) at (-0.5, 1) {};
		\node [style=none] (124) at (2.25, 0) {};
		\node [style=none] (125) at (-2.25, 0) {};
		\node [style=stringbox] (126) at (0, 1) {$f$};
		\node [style=stringbox] (127) at (0, -1) {$g$};
	\end{pgfonlayer}
	\begin{pgfonlayer}{edgelayer}
		\draw [bend left=45] (109.center) to (107);
		\draw [bend left=45] (107) to (108.center);
		\draw [bend right=45] (123.center) to (121);
		\draw [bend right=45] (121) to (122.center);
		\draw (125.center) to (121);
		\draw (124.center) to (107);
		\draw (123.center) to (126);
		\draw (109.center) to (126);
		\draw (122.center) to (127);
		\draw (127) to (108.center);
	\end{pgfonlayer}
\end{tikzpicture}
}
 
    \axeq{\text{Hyp.}}
    
    \begin{tikzpicture}
	\begin{pgfonlayer}{nodelayer}
		\node [style=label] (105) at (-2, 0) {$X$};
		\node [style=label] (120) at (2, 0) {$Y$};
		\node [style=none] (124) at (1.5, 0) {};
		\node [style=stringbox] (126) at (0, 0) {$g$};
		\node [style=none] (127) at (-1.5, 0) {};
	\end{pgfonlayer}
	\begin{pgfonlayer}{edgelayer}
		\draw (127.center) to (126);
		\draw (126) to (124.center);
	\end{pgfonlayer}
\end{tikzpicture}
}

  \end{equation*}
\end{proof}

\begin{proof}[Proof of Proposition \ref{prop:matrixform}]
The normal form of fb category is well known: see e.g. \cite[Proposition 2.7]{harding2008orthomodularity}.

For the ordering, observe that if $f \leq g$ then, since $\Cat{C}$ is poset enriched,  
\[(\id{S} \piu \cobang{X}); f ; (\id{T} \piu \cobang{Y}) \leq  (\id{S} \piu \cobang{X}); g ; (\id{T} \piu \cobang{Y})\]
that is $f_{ST} \leq g_{ST}$. Similarly for the others.

Viceversa from $f_{ST} \leq g_{ST}$ $f_{SY} \leq g_{SY}$
$f_{XT} \leq g_{XT}$ and $ f_{XY} \leq g_{XY}$, one can use the formal form to deduce immediately that $f\leq g$.
\end{proof}

\begin{lemma}\label{lemma:equivalentUnif}
The following hold:
\begin{enumerate}
\item $(AU1)$ iff $(AU1')$; 
\item   $(AU2)$ iff $(AU2')$.
\end{enumerate}
\end{lemma}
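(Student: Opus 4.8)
Lemma~\ref{lemma:equivalentUnif} asserts the equivalence of the two forms of each posetal uniformity axiom: (AU1) $\iff$ (AU1') and (AU2) $\iff$ (AU2'). Recall that (AU1) uses a single $r$ with $f;(r\piu\id{})\leq(r\piu\id{});g$, whereas (AU1') uses two morphisms $r_1,r_2$ with $r_2\leq r_1$ and $f;(r_1\piu\id{})\leq(r_2\piu\id{});g$; dually for (AU2)/(AU2'). By symmetry of the two cases (obtained by reversing the direction of $r$ and swapping the roles of pre/post-composition), I expect the proof of part~2 to be entirely dual to part~1, so I would state that explicitly and only carry out part~1 in detail.

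\textbf{The easy direction.} For (AU1') $\implies$ (AU1) I would simply instantiate $r_1 = r_2 = r$. The hypothesis $r_2\leq r_1$ becomes reflexivity of $\leq$, and $f;(r_1\piu\id{})\leq(r_2\piu\id{});g$ becomes exactly the hypothesis of (AU1); the conclusion $\trace_{S}f\leq\trace_{T}g$ then follows directly. Thus this direction is immediate.

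\textbf{The substantial direction.} For (AU1) $\implies$ (AU1') I would assume $r_1,r_2\colon S\to T$ with $r_2\leq r_1$ and $f;(r_1\piu\id{})\leq(r_2\piu\id{});g$, and aim to produce a single $r$ satisfying the premise of (AU1). The natural candidate exploits the idempotent convolution monoid $+$ of the Kleene bicategory together with the characterization of $\leq$ from Lemma~\ref{lemma:order-adjointness}: since $r_2\leq r_1$ we have $r_1+r_2=r_1$, and more generally $+$ interacts well with composition via the enrichment laws \eqref{eq:cmon enrichment}. The plan is to chase the inequality $f;(r_1\piu\id{})\leq(r_2\piu\id{});g$ using $r_2\leq r_1$ to replace occurrences of the two distinct morphisms by a common one, thereby landing in the single-$r$ form. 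Concretely, I would try $r\defeq r_1$ (or $r_2$) and verify $f;(r\piu\id{})\leq(r\piu\id{});g$ by combining the given inequality with monotonicity of whiskering/composition and the inequality $r_2\leq r_1$; the key algebraic fact is that $(r_2\piu\id{});g \leq (r_1\piu\id{});g$ by $+$-monotonicity of precomposition, and symmetrically $f;(r_2\piu\id{})\leq f;(r_1\piu\id{})$.

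\textbf{Main obstacle.} The delicate point is that (AU1) has a \emph{single} $r$ on both sides of the inner inequality, so I cannot freely pick different morphisms for the left and right occurrence. The argument must show that the two-morphism premise can always be ``collapsed'' to a one-morphism premise \emph{without weakening the resulting traced inequality}. I expect the resolution to rely on the adjointness axioms \eqref{ax:adjbiprod:1}--\eqref{ax:adjbiprod:4} of Figure~\ref{fig:adjoint-biproducts} (equivalently, that $(\codiag{},\cobang{})\dashv(\diag{},\bang{})$) together with the posetal strength of the trace, so that adding a ``slack'' coreflexive-like term converts the mismatched $r_1,r_2$ into a matched pair. If a direct collapse fails, the fallback is to apply (AU1) to an auxiliary morphism built from $f$, $g$ and the convolution $r_1+r_2=r_1$, and then use the trace axioms (tightening, sliding) to recover $\trace_{T}g$. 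Once part~1 is settled, part~2 follows by the evident duality, reversing $r$ and exchanging the two sides of the whiskered composition.
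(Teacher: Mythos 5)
Your proposal is correct and takes essentially the same route as the paper: the easy direction is the instantiation $r_1=r_2=r$, and for the substantial direction the paper simply takes $r\defeq r_2$ and chains $f;(r_2\piu\id{})\leq f;(r_1\piu\id{})\leq (r_2\piu\id{});g$ using $r_2\leq r_1$ and monotonicity of composition --- which is exactly the verification you sketch (and $r\defeq r_1$ works symmetrically). The ``main obstacle'' you anticipate does not arise: that two-step chain already yields the premise of (AU1) verbatim, so none of the adjointness axioms, the convolution $+$, or an auxiliary morphism is needed.
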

\begin{proof}
We prove the first point. The second is analogous,

Since the conclusion of  $(AU1)$ and $(AU1')$ are identical, its enough to prove the equivalence of the premises of the two laws.
\begin{itemize}
\item We prove that the premises of $(AU1')$ entail $(AU1)$. Assume that $\exists r_1,r_2$ such that (a) $r_2 \leq r_1$ and (b) $f ; (r_1 \piu \id{}) \leq (r_2 \piu \id{}) ; g$. Thus:
\[ f ; (r_2 \piu \id{}) \stackrel{(a)}{\leq} f ; (r_1 \piu \id{})\stackrel{(b)}{\leq} (r_2 \piu \id{}) ; g\]
Observe that by replacing $r_2$ by $r$ in the above, one obtains exactly the premise of $(AU1)$.
\item We prove that $(AU1)$ entails $(AU1')$. Assume that $(AU1)$ holds. Then $(AU1)'$ holds by taking both $r_1$ and $r_2$ to be $r$.
\end{itemize}
\end{proof}

\subsection{Proof of Proposition \ref{prop:trace-star}}

In order to prove Proposition \ref{prop:trace-star}, we  rely on a result from \cite{cuazuanescu1994feedback}, see also \cite{selinger2011survey}.

\begin{proposition}[From \cite{cuazuanescu1994feedback}]\label{prop:stef} In a category $\Cat{C}$ with finite biproducts, giving a trace is equivalent to giving a repetition operation, i.e., a family of operators \(\kstar{(\cdot)} \colon \Cat{C}(S,S) \to  \Cat{C}(S,S)\) satisfying the following three laws.
\begin{equation}\label{eq:fromstefanescu}
\kstar{f}=\id{} +f;\kstar{f} \qquad \kstar{(f+g)}= \kstar{(\kstar{f};g)};\kstar{f} \qquad \kstar{(f;g)};f = f;\kstar{(g;f)}
\end{equation}
\end{proposition}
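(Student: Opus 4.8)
The plan is to exhibit the two passages of Figure~\ref{fig:star-trace} as mutually inverse bijections between trace operators on $\Cat{C}$ and repetition operators satisfying~\eqref{eq:fromstefanescu}, using throughout that a finite biproduct category is $\CMon$-enriched, so that composition distributes over the convolution sum $+$ of~\eqref{eq:covolution}, and that every $a\colon S\piu X\to S\piu Y$ admits the matrix normal form of Proposition~\ref{prop:matrixform} with components as in~\eqref{eq:components}. Concretely, from a trace I set $\kstar{f}$ to be the traced feedback loop drawn on the left of Figure~\ref{fig:star-trace}, and from a repetition operator I set $\trace_S a\defeq a_{XS};\kstar{(a_{SS})};a_{SY}+a_{XY}$ as on the right. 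The proof then splits into four verifications: that a trace induces a repetition operator satisfying the three laws, that such an operator induces a trace, and that the two assignments are mutually inverse.

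First I would check that a trace induces a repetition operator satisfying~\eqref{eq:fromstefanescu}. The unfolding law $\kstar{f}=\id{}+f;\kstar{f}$ is obtained by unwinding one step of the feedback loop, sliding $f$ out of the trace via the tightening and strength axioms and re-expressing the result through the biproduct (co)monoid laws; the $\CMon$-enrichment turns the two resulting summands into the convolution $\id{}+f;\kstar{f}$. The sliding law $\kstar{(f;g)};f=f;\kstar{(g;f)}$ is essentially a direct translation of the sliding axiom of traces. The denesting law $\kstar{(f+g)}=\kstar{(\kstar{f};g)};\kstar{f}$ is the subtlest of the three: it corresponds to replacing a single feedback wire carrying the sum $f+g$ by two nested feedback wires, and I would prove it by combining the joining axiom (to split a trace over a doubled loop into an iterated trace) with strength and $\CMon$-enrichment.

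Conversely, I would verify that the matrix formula defines a genuine trace. Tightening and strength follow almost formally from $\CMon$-enrichment and naturality of the biproduct injections and projections; vanishing is immediate; and yanking is a one-line computation applying the unfolding law to the identity. Sliding is the image of the third law of~\eqref{eq:fromstefanescu}. The main obstacle, which I expect to be the most calculation-heavy step, is the joining axiom $\trace_T\trace_S=\trace_{S\piu T}$: one must expand both sides over the decomposition $S\piu T$ and reconcile the two nested stars with the single star of the $(S\piu T)$-block, a Bekić-style fixed-point identity in which the denesting law does the real work on the diagonal and the sliding law handles the off-diagonal blocks. Finally, for the mutual inverse: passing from a repetition operator to the induced trace and back, the matrix formula applied to the endomorphism defining $\kstar{}$ collapses to $\kstar{}$ by unfolding; and passing from a trace to $\kstar{}$ and back, I would prove a representation lemma showing that any trace on a biproduct category necessarily agrees with the matrix formula on its components—obtained by decomposing $a$ through Proposition~\ref{prop:matrixform}, tracing each block with tightening, strength and vanishing, and summing—so that reconstruction returns the original operator. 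Assembling these verifications yields the claimed equivalence.
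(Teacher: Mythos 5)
You should first be aware that the paper contains no internal proof of Proposition~\ref{prop:stef} to compare against: it is imported from \cite{cuazuanescu1994feedback}, with Figure~\ref{fig:star-trace} recording the two translations, and the only fragment the paper establishes itself is Proposition~\ref{prop:star-fixpoint}, the unfolding law $\kstar{f}=\id{}+\kstar{f}\dcomp f$, proved via sliding, naturality of the biproduct (co)monoids, and yanking. Your sketch reconstructs the standard argument of that reference, and its skeleton is sound: your unfolding verification coincides with the paper's own derivation in Proposition~\ref{prop:star-fixpoint}; the law $\kstar{(f\dcomp g)}\dcomp f=f\dcomp\kstar{(g\dcomp f)}$ is indeed direct conjugation along the loop via the sliding axiom; denesting, and in the converse direction the joining axiom, are correctly identified as the Beki\'c-style crux, handled by splitting the feedback wire with $\diag{}$ and $\codiag{}$, sliding them around the loop, and using joining to nest the traces; and the round trip star $\mapsto$ trace $\mapsto$ star collapses by unfolding exactly as you say, since the loop morphism has diagonal block $f$ and identity off-diagonal blocks, so the matrix formula returns $f\dcomp\kstar{f}+\id{}=\kstar{f}$.

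One step as written would fail if taken literally. In your representation lemma you propose to recover an arbitrary trace from the matrix components of Proposition~\ref{prop:matrixform} by ``tracing each block with tightening, strength and vanishing, and summing''. But the trace is \emph{not} additive over the convolution $+$ of \eqref{eq:covolution}: if it were, $\trace_{S}(a)$ would be a sum of four block traces and the term $a_{XS}\dcomp\kstar{(a_{SS})}\dcomp a_{SY}$, with its star, could never arise (already in $(\Rel,\piu,\zero)$ one has $\kstar{(b+c)}\neq\kstar{b}+\kstar{c}$ in general). The matrix normal form is a composite $\diag{}\dcomp(\text{blocks})\dcomp\codiag{}$, not a sum the trace distributes over, so the representation lemma requires precisely the sliding-plus-joining manipulation you deploy for the joining axiom --- slide $\codiag{S}$ and $\diag{S}$ around the loop, split $\trace_{S\piu S}$ into nested traces, and identify the inner trace with $\kstar{(a_{SS})}$ --- rather than tightening, strength and vanishing alone. (A cosmetic slip in the same spirit: yanking reduces to $\kstar{0}=\id{}$, i.e.\ unfolding applied to the zero morphism, not to the identity, since the diagonal block of $\symmp{S}{S}$ is zero.) With that step repaired, your plan goes through and agrees with the cited proof.
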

The correspondence between traces and $\kstar{(\cdot)}$ is illustrated in Figure \ref{fig:star-trace}.

\begin{proposition}\label{prop:star-fixpoint}
  Let \(\Cat{C}\) be a monoidal category with finite biproducts and trace. For each $f\colon X \to X$ define $\kstar{f}$ as in \Cref{fig:star-trace}. %
  Then, 
  \[\kstar{f}=\id{X} +\kstar{f};f\]
\end{proposition}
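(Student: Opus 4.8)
The plan is to reduce the statement to the three axioms for the repetition operation recorded in Proposition \ref{prop:stef}. By that proposition, the operation $\kstar{(\cdot)}$ that Figure \ref{fig:star-trace} builds from the trace is exactly a repetition operation, and hence it satisfies the laws in \eqref{eq:fromstefanescu}. The first of these laws already supplies the \emph{left} unfolding $\kstar{f} = \id{X} + f;\kstar{f}$; what remains is to convert this into the desired \emph{right} unfolding $\kstar{f} = \id{X} + \kstar{f};f$.

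The key step is to establish that $f$ commutes with its own star, i.e.\ that $f;\kstar{f} = \kstar{f};f$. This follows immediately from the third law in \eqref{eq:fromstefanescu}, namely $\kstar{(f;g)};f = f;\kstar{(g;f)}$, by instantiating $g \defeq \id{X}$: since $f;\id{X} = f = \id{X};f$, that law collapses to
\[ \kstar{f};f = f;\kstar{f}. \]

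Combining the two observations then concludes the argument. Starting from the first Ştefănescu law and substituting the commutation identity just obtained,
\[ \kstar{f} = \id{X} + f;\kstar{f} = \id{X} + \kstar{f};f, \]
which is precisely the claimed equality.

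I do not expect a substantial obstacle here; the only point requiring care is the appeal to Proposition \ref{prop:stef}, which is what licenses applying \eqref{eq:fromstefanescu} to the trace-defined $\kstar{(\cdot)}$ of Figure \ref{fig:star-trace}. I would therefore state explicitly that the repetition operation exhibited by that equivalence coincides with the one depicted in Figure \ref{fig:star-trace}. An alternative, fully diagrammatic route would derive the commutation $f;\kstar{f} = \kstar{f};f$ directly inside the trace, using the sliding axiom together with the naturality of $\diag{}$ and $\codiag{}$, and then read off the fixpoint equation; but the algebraic argument above is shorter and avoids unfolding the trace, so I would favour it.
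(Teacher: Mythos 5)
Your proof is correct, but it takes a genuinely different route from the paper's. The paper establishes the identity by a self-contained string-diagrammatic computation: it unfolds the definitions of $+$ and of $\kstar{f}$ from \eqref{eq:covolution} and \Cref{fig:star-trace}, and then transforms $\id{X}+\kstar{f};f$ into $\kstar{f}$ using only the sliding and yanking axioms of the trace together with naturality of $\diag{}$ and $\codiag{}$. You instead reduce everything to the cited Căzănescu--Ştefănescu equivalence (\Cref{prop:stef}): the trace-induced star is a repetition operator, the first law of \eqref{eq:fromstefanescu} gives the left unfolding, and the third law specialised at $g=\id{X}$ gives the commutation $\kstar{f};f=f;\kstar{f}$, from which the right unfolding follows. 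Both instantiations are correct, and there is no circularity since \Cref{prop:stef} is imported from the literature rather than proved via the present statement. What your route buys is brevity and a purely equational argument; what it costs is that the whole burden rests on the black-box citation — in particular on the direction of the equivalence asserting that the star \emph{derived from a trace} satisfies all three laws of \eqref{eq:fromstefanescu}, a point you rightly flag as needing to be made explicit, since the paper only "illustrates" the correspondence in \Cref{fig:star-trace} without proving it. The paper's direct computation avoids that dependency entirely, which is presumably why the authors chose it; your sketched diagrammatic alternative (commutation via sliding and naturality) is essentially a reorganisation of their derivation.
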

\begin{proof} 
  \begingroup
  \allowdisplaybreaks
  \begin{align*}
   \id{} +\kstar{f};f
    &= 
    \InputIfFileExists{propStarFix/step1.tikz}{}{\input{./tikz/propStarFix/step1.tikz}}
 \tag{\eqref{eq:covolution}} \\
    &= 
    \InputIfFileExists{propStarFix/step1slide.tikz}{}{\input{./tikz/propStarFix/step1slide.tikz}}
 \tag{\Cref{fig:star-trace}}  \\
    &= 
    \InputIfFileExists{propStarFix/step2.tikz}{}{\input{./tikz/propStarFix/step2.tikz}}
 \tag{\ref{ax:trace:sliding}}  \\
    &= 
    \InputIfFileExists{propStarFix/step3.tikz}{}{\input{./tikz/propStarFix/step3.tikz}}
 \tag{\ref{ax:monoid:nat:copy}} \\
    &= 
    \InputIfFileExists{propStarFix/step4.tikz}{}{\input{./tikz/propStarFix/step4.tikz}}
 \tag{\ref{ax:trace:sliding}} \\
    &= 
    \InputIfFileExists{propStarFix/step5.tikz}{}{\input{./tikz/propStarFix/step5.tikz}}
 \tag{\ref{ax:trace:yanking}} \\
    &= 
    \InputIfFileExists{propStarFix/step6.tikz}{}{\input{./tikz/propStarFix/step6.tikz}}
 \tag{\ref{ax:comonoid:nat:copy}} \\
    &= \kstar{f} \tag{\Cref{fig:star-trace}}
  \end{align*}
  \endgroup
\end{proof}

\begin{corollary}\label{cor:bounds-star}
   Let \(\Cat{C}\) be a fb category with idempotent convolution and trace. %
  The following inequalities hold for all $f\colon X \to X$:
\[      \begin{array}{c}
      \id{X} + f \dcomp \kstar{f} \leq \kstar{f}  \\
      \id{X} + \kstar{f} \dcomp f \leq \kstar{f} 
    \end{array}
  \]
In particular,   $f \dcomp \kstar{f} \leq \kstar{f} $, $ \kstar{f}; f \leq \kstar{f} $ and $\id{X} \leq \kstar{f}$.
\end{corollary}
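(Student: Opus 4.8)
The final statement to prove is Corollary \ref{cor:bounds-star}, which asserts four inequalities in a fb category with idempotent convolution and trace. The plan is to derive these directly from the two fixpoint equations for $\kstar{f}$ now available: the equation $\kstar{f} = \id{X} + f \dcomp \kstar{f}$ (the defining unfolding of \Cref{fig:star-trace}) and the newly established $\kstar{f} = \id{X} + \kstar{f} \dcomp f$ (Proposition \ref{prop:star-fixpoint}). The crucial observation is that in a fb category with idempotent convolution, the order is exactly the one induced by the convolution semilattice, so by Lemma \ref{lemma:order-adjointness} we have $h \leq k$ iff $h + k = k$; equivalently, $h + k = k$ means $h$ is a summand of $k$ and hence below it.

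First I would prove the two displayed inequalities. Since $\kstar{f} = \id{X} + f \dcomp \kstar{f}$ is an \emph{equality}, the quantity $\id{X} + f \dcomp \kstar{f}$ equals $\kstar{f}$, so in particular $\id{X} + f \dcomp \kstar{f} \leq \kstar{f}$ holds trivially (indeed with equality). The same argument applied to Proposition \ref{prop:star-fixpoint} gives $\id{X} + \kstar{f} \dcomp f \leq \kstar{f}$. So both displayed bounds are immediate consequences of the fixpoint identities; no real computation is needed beyond invoking reflexivity of $\leq$.

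Next I would extract the three ``in particular'' consequences. For each of the two fixpoint equations, I use the fact that the convolution monoid is idempotent and that $h \leq h + k$ always holds in a semilattice (this follows from Lemma \ref{lemma:order-adjointness}: $h + (h+k) = (h+h) + k = h + k$, using idempotence and associativity, so $h \leq h+k$). Applying this to $\kstar{f} = \id{X} + f \dcomp \kstar{f}$ yields both $\id{X} \leq \kstar{f}$ and $f \dcomp \kstar{f} \leq \kstar{f}$, since each of the two summands is below their join, which equals $\kstar{f}$. Symmetrically, from $\kstar{f} = \id{X} + \kstar{f} \dcomp f$ I obtain $\kstar{f} \dcomp f \leq \kstar{f}$ (and again $\id{X} \leq \kstar{f}$, consistently). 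This covers all three claims.

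There is no substantial obstacle here: the corollary is a packaging result that simply reads off order-theoretic consequences of the two fixpoint equalities, so the only ingredients are Proposition \ref{prop:star-fixpoint}, the defining equation of $\kstar{(\cdot)}$, idempotence of convolution, and Lemma \ref{lemma:order-adjointness}. The one point requiring a moment's care is justifying ``each summand of a join is below the join'' in this setting; I would state it once as a small semilattice fact (via $h + (h+k) = h+k$ and Lemma \ref{lemma:order-adjointness}) and then reuse it for all four items. Everything else is a direct substitution, making the proof essentially a two-line argument per inequality.
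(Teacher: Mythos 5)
Your proposal is correct and follows essentially the same route as the paper, whose proof is literally ``By Lemma~\ref{lemma:order-adjointness}, Proposition~\ref{prop:stef} and Proposition~\ref{prop:star-fixpoint}'': the two displayed bounds are instances of the fixpoint equalities $\kstar{f}=\id{}+f;\kstar{f}$ (from Proposition~\ref{prop:stef}) and $\kstar{f}=\id{}+\kstar{f};f$ (Proposition~\ref{prop:star-fixpoint}), and the ``in particular'' claims follow because each summand lies below the join, via idempotence and Lemma~\ref{lemma:order-adjointness}. Your explicit justification of the summand-below-join fact ($h+(h+k)=h+k$) is exactly the detail the paper leaves implicit.
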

\begin{proof}
  By \Cref{lemma:order-adjointness}, \Cref{prop:stef} and \Cref{prop:star-fixpoint}.
\end{proof}

\begin{lemma}\label{lemma:uniform-star}
  Let $\Cat{C}$ be a poset enriched monoidal category with finite biproducts and trace. $\Cat{C}$ satisfies the axioms in \Cref{fig:ineq-uniformity} iff it satisfies those in \Cref{fig:uniform-star}.
  \begin{figure}[ht!]
    \centering
    \[
    \begin{array}{c}
      
    \InputIfFileExists{equivunif/EU1_lhs.tikz}{}{\input{./tikz/equivunif/EU1_lhs.tikz}}
 \leq 
    \InputIfFileExists{equivunif/EU1_rhs.tikz}{}{\input{./tikz/equivunif/EU1_rhs.tikz}}
 \implies 
    \InputIfFileExists{equivunif/EU1TR_lhs.tikz}{}{\input{./tikz/equivunif/EU1TR_lhs.tikz}}
 \leq 
    \InputIfFileExists{equivunif/EU1TR_rhs.tikz}{}{\input{./tikz/equivunif/EU1TR_rhs.tikz}}

      \\[20pt]
      
    \InputIfFileExists{equivunif/EU1_lhs.tikz}{}{\input{./tikz/equivunif/EU1_lhs.tikz}}
 \geq 
    \InputIfFileExists{equivunif/EU1_rhs.tikz}{}{\input{./tikz/equivunif/EU1_rhs.tikz}}
 \implies 
    \InputIfFileExists{equivunif/EU1TR_lhs.tikz}{}{\input{./tikz/equivunif/EU1TR_lhs.tikz}}
 \geq 
    \InputIfFileExists{equivunif/EU1TR_rhs.tikz}{}{\input{./tikz/equivunif/EU1TR_rhs.tikz}}

    \end{array}
    \]
    \caption{Equivalent uniformity axioms.\label{fig:uniform-star}}
  \end{figure}
\end{lemma}
\begin{proof}
  The poset enirched monoidal category obtained by inverting the 2-cells also has biproducts and trace.
  Thus, we show the first of the implications in \Cref{fig:uniform-star} and \Cref{fig:ineq-uniformity}, while the other ones follow by this observation.

  For one direction, suppose that the trace satisfy the axioms in \Cref{fig:ineq-uniformity}, and consider \(f \colon X \to X\), \(g \colon Y \to Y\) and \(r \colon X \to Y\) in \(\Cat{C}\) such that \(f \dcomp r \leq r \dcomp g\).
Observe that %
  \begin{align*}
    
    % [inline block 4: 14 envs, 20204 chars -> data_tex | \begin{tikzpicture} 	\begin{pgfonlayer}{nodelayer}...]
 \right. \tag{Hypothesis} \\
  \end{align*}
  \endgroup
  With these inequalities, we show the inequality between the traces.
  \begingroup
  \allowdisplaybreaks
  \begin{align*}
    
    \InputIfFileExists{lemmaUnifEquiv/LTR/traceEquiv/step1.tikz}{}{\input{./tikz/lemmaUnifEquiv/LTR/traceEquiv/step1.tikz}}
 
    &= 
    \InputIfFileExists{lemmaUnifEquiv/LTR/traceEquiv/step2.tikz}{}{\input{./tikz/lemmaUnifEquiv/LTR/traceEquiv/step2.tikz}}
 \tag{Proposition \ref{prop:matrixform}} \\
    &= 
    \InputIfFileExists{lemmaUnifEquiv/LTR/traceEquiv/step3.tikz}{}{\input{./tikz/lemmaUnifEquiv/LTR/traceEquiv/step3.tikz}}
 \tag{trace axioms} \\
    &\leq 
    \InputIfFileExists{lemmaUnifEquiv/LTR/traceEquiv/step4.tikz}{}{\input{./tikz/lemmaUnifEquiv/LTR/traceEquiv/step4.tikz}}
 \tag{ii} \\
    &\leq 
    \InputIfFileExists{lemmaUnifEquiv/LTR/traceEquiv/step5.tikz}{}{\input{./tikz/lemmaUnifEquiv/LTR/traceEquiv/step5.tikz}}
 \tag{i} \\
    &\leq 
    \InputIfFileExists{lemmaUnifEquiv/LTR/traceEquiv/step6.tikz}{}{\input{./tikz/lemmaUnifEquiv/LTR/traceEquiv/step6.tikz}}
 \tag{iii} \\
    &\leq 
    \InputIfFileExists{lemmaUnifEquiv/LTR/traceEquiv/step6.tikz}{}{\input{./tikz/lemmaUnifEquiv/LTR/traceEquiv/step6.tikz}}
 \tag{iv} \\
    &= 
    \InputIfFileExists{lemmaUnifEquiv/LTR/traceEquiv/step8.tikz}{}{\input{./tikz/lemmaUnifEquiv/LTR/traceEquiv/step8.tikz}}
 \tag{trace axioms} \\
    &= 
    \InputIfFileExists{lemmaUnifEquiv/LTR/traceEquiv/step9.tikz}{}{\input{./tikz/lemmaUnifEquiv/LTR/traceEquiv/step9.tikz}}
 \tag{Proposition \ref{prop:matrixform}}
  \end{align*}
  \endgroup
\end{proof}

The above results can be rephrased in terms of $\kstar{(\cdot)}$ as defined in Figure \ref{fig:star-trace}: 
$\Cat{C}$ satisfies the axioms in \Cref{fig:ineq-uniformity} iff $\kstar{(\cdot)}$ satisfies
\begin{equation}\label{eq:unifstar}
\begin{array}{r@{}c@{}l }
     f \dcomp r \leq r \dcomp g & \implies & \kstar{f} \dcomp r \leq r \dcomp \kstar{g} \\
    f \dcomp r \geq r \dcomp g & \implies & \kstar{f} \dcomp r \geq r \dcomp \kstar{g}
\end{array}
\end{equation}

\begin{remark}
It is worth to remark that in \cite{cuazuanescu1994feedback}, it was proved that the implications obtained by replacing $\leq$ by $=$ in \eqref{eq:unifstar} are equivalent to the standard uniformity axioms in Figure \ref{fig:uniformity}.
\end{remark}

It is also immediate to see that the axiom in \Cref{fig:happy-trace} is equivalent to the following.
\begin{equation}\label{eq:happystar}
\kstar{\id{}} \leq \id{}
\end{equation}

\begin{lemma}\label{lemma:uniform-kozen}
 Let $\Cat{C}$ be a fb category with idempotent comvolution and trace. 
 $\Cat{C}$ satisfies the axioms in \Cref{fig:ineq-uniformity} and in \Cref{fig:happy-trace} iff $\kstar{(\cdot)}$ as defined in Figure \ref{fig:star-trace} satisfies the following:
\begin{equation}\label{eq:stroingstar}     
\begin{array}{r@{}c@{}l}
 f \dcomp r \leq r & \implies & \kstar{f} \dcomp r \leq r \\
 l \dcomp f \leq l & \implies & l \dcomp \kstar{f}  \leq l  
    \end{array}
\end{equation}

\end{lemma}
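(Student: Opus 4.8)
The plan is to avoid reasoning with the trace directly and instead pass through the star operator $\kstar{(\cdot)}$ of \Cref{fig:star-trace}, exploiting the two rephrasings recorded immediately before the statement: the posetal uniformity axioms of \Cref{fig:ineq-uniformity} are equivalent to the pair of implications \eqref{eq:unifstar}, and the repeating-the-identity axiom of \Cref{fig:happy-trace} is equivalent to \eqref{eq:happystar}, i.e.\ $\kstar{\id{}} \leq \id{}$. Thus the statement reduces to the purely algebraic claim that, for $\kstar{(\cdot)}$ defined as in \Cref{fig:star-trace}, the conjunction of \eqref{eq:unifstar} and \eqref{eq:happystar} is equivalent to the Kleene induction laws \eqref{eq:stroingstar}. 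Throughout I will use that composition is monotone for $\leq$: this follows at once from \Cref{lemma:order-adjointness} together with the $\CMon$-enrichment law \eqref{eq:cmon enrichment}, and I will freely use the bounds $f \dcomp \kstar{f} \leq \kstar{f}$, $\kstar{f} \dcomp f \leq \kstar{f}$ and $\id{} \leq \kstar{f}$ of \Cref{cor:bounds-star}.

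For the forward implication, first note that \eqref{eq:happystar} and $\id{} \leq \kstar{\id{}}$ (\Cref{cor:bounds-star}) give $\kstar{\id{}} = \id{}$. To obtain the first law of \eqref{eq:stroingstar}, assume $f \dcomp r \leq r$ and read it as $f \dcomp r \leq r \dcomp \id{}$; the first implication of \eqref{eq:unifstar} with $g \defeq \id{}$ yields $\kstar{f} \dcomp r \leq r \dcomp \kstar{\id{}} = r$. Dually, to obtain the second law, assume $l \dcomp f \leq l$ and read it as $\id{} \dcomp l \geq l \dcomp f$; applying the second (reversed) implication of \eqref{eq:unifstar} with $\id{}$ playing the left role and $f$ the right role gives $\kstar{\id{}} \dcomp l \geq l \dcomp \kstar{f}$, and since $\kstar{\id{}} \dcomp l = l$ we conclude $l \dcomp \kstar{f} \leq l$.

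For the backward implication, \eqref{eq:happystar} is immediate: instantiate the first law of \eqref{eq:stroingstar} with $f \defeq \id{}$ and $r \defeq \id{}$, whose hypothesis $\id{} \dcomp \id{} \leq \id{}$ holds, to get $\kstar{\id{}} \leq \id{}$. The substantial part is deriving the two implications of \eqref{eq:unifstar}, which is the familiar argument that a Kleene star respects simulations. For the first, assume $f \dcomp r \leq r \dcomp g$; monotonicity and $g \dcomp \kstar{g} \leq \kstar{g}$ give $f \dcomp (r \dcomp \kstar{g}) \leq r \dcomp g \dcomp \kstar{g} \leq r \dcomp \kstar{g}$, so the first law of \eqref{eq:stroingstar} applied to $r \dcomp \kstar{g}$ yields $\kstar{f} \dcomp r \dcomp \kstar{g} \leq r \dcomp \kstar{g}$, and since $\id{} \leq \kstar{g}$ we obtain $\kstar{f} \dcomp r \leq \kstar{f} \dcomp r \dcomp \kstar{g} \leq r \dcomp \kstar{g}$. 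The second implication is symmetric: assuming $r \dcomp g \leq f \dcomp r$, monotonicity and $\kstar{f} \dcomp f \leq \kstar{f}$ give $(\kstar{f} \dcomp r) \dcomp g \leq \kstar{f} \dcomp r$, so the second law of \eqref{eq:stroingstar} applied to $\kstar{f} \dcomp r$ produces $\kstar{f} \dcomp r \dcomp \kstar{g} \leq \kstar{f} \dcomp r$, whence $r \dcomp \kstar{g} \leq \kstar{f} \dcomp r \dcomp \kstar{g} \leq \kstar{f} \dcomp r$.

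The main obstacle is the bookkeeping in the backward direction: one must invoke the correct one of the two induction laws for each implication and apply it to the right composite ($r \dcomp \kstar{g}$ for the $\leq$ case, $\kstar{f} \dcomp r$ for the $\geq$ case), and every step silently relies on monotonicity of composition, so it is worth isolating that monotonicity fact once at the outset. By contrast, the forward direction is routine once $\kstar{\id{}} = \id{}$ is in hand.
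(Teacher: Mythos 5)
Your proposal is correct and follows essentially the same route as the paper's proof: both reduce via the equivalences \eqref{eq:unifstar} and \eqref{eq:happystar}, derive \eqref{eq:stroingstar} in the forward direction by instantiating the uniformity implications with $g=\id{}$, and recover \eqref{eq:unifstar} in the backward direction by applying the induction laws to the composites $r\dcomp\kstar{g}$ and $\kstar{f}\dcomp r$ together with the bounds of \Cref{cor:bounds-star}. The only cosmetic difference is that you isolate $\kstar{\id{}}=\id{}$ up front, where the paper just uses $\kstar{\id{}}\leq\id{}$ inline.
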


\begin{proof}

We prove that \eqref{eq:unifstar} and \eqref{eq:happystar} hold iff \eqref{eq:stroingstar} holds.

  For one direction, assume that \eqref{eq:unifstar} and \eqref{eq:happystar} hold. To prove that the first implication in \eqref{eq:stroingstar} holds, 
 consider \(f \colon X \to X\) and \(r \colon X \to Y\) such that \(f \dcomp r \leq r\).
  Then, \(f \dcomp r \leq r \dcomp \id{Y}\) and,
  \begin{align*}
    & \kstar{f} \dcomp r &&\\
    & \leq r \dcomp \kstar{\id{Y}} &&\text{\eqref{eq:unifstar}}\\
    & \leq r \dcomp \id{Y} && \text{\eqref{eq:happystar}}\\
    & = r
  \end{align*}
  The second implication follows the symmetric argument. %
  
  For the other direction, assume that \eqref{eq:stroingstar} holds.  To prove \eqref{eq:happystar}, observe that \(\id{} \dcomp \id{} \leq \id{}\).   By \eqref{eq:stroingstar}, \(\kstar{\id{}} = \kstar{\id{}} \dcomp \id{} \leq \id{}\).
  
  To prove the first implication in  \eqref{eq:unifstar}, consider \(f \colon X \to X\), \(g \colon Y \to Y\) and \(r \colon X \to Y\) such that \(f \dcomp r \leq r \dcomp g\).
  Then \(f \dcomp r \dcomp \kstar{g} \leq r \dcomp g \dcomp \kstar{g} \leq r \dcomp \kstar{g}\), where the latter inequality holds by \Cref{cor:bounds-star}.
  By \eqref{eq:stroingstar}, \(\kstar{f} \dcomp r \dcomp \kstar{g} \leq r \dcomp \kstar{g}\).
  By \Cref{cor:bounds-star}, \(\kstar{f} \dcomp r \leq \kstar{f} \dcomp r \dcomp \kstar{g}\), which gives  \(\kstar{f} \dcomp r \leq r \dcomp \kstar{g}\).
  
  To prove the second implication in  \eqref{eq:unifstar}, we proceed similarly: assume that  \(  r \dcomp g \leq f \dcomp r \).
  Then \(\kstar{f} \dcomp r \dcomp g \leq   \kstar{f} \dcomp f \dcomp r \leq  \kstar{f} \dcomp r\), where the latter inequality holds by \Cref{cor:bounds-star}.
  By the second implication in \eqref{eq:stroingstar}, \(\kstar{f} \dcomp r \dcomp \kstar{g} \leq \kstar{f}\dcomp r\).
  By \Cref{cor:bounds-star}, \(  r \dcomp \kstar{g} \leq \kstar{f} \dcomp r \dcomp \kstar{g}\), which gives  \( r \dcomp \kstar{g} \leq \kstar{f} \dcomp r \).
  
\end{proof}

We have now all the ingredients to prove Proposition \ref{prop:trace-star}.

\begin{proof}[Proof of Proposition \ref{prop:trace-star}]
  Suppose that \(\Cat{C}\) is a Kleene bicategory. Then one can define a $\kstar{(\cdot)}$ as in \Cref{fig:star-trace}. By Corollary \ref{cor:bounds-star} and
 \Cref{lemma:uniform-kozen}, $\kstar{(\cdot)}$ satisfies the laws in \eqref{eq:Kllenelaw}. Thus, it is a Kleene star.

  Conversely, suppose that \(\Cat{C}\) has a Kleene star operator $\kstar{(\cdot)}$. One can easily show (e.g., by using completeness of Kozen axiomatisation in \cite{Kozen94acompleteness}) that the laws of Kleene star in \eqref{eq:Kllenelaw} entail those in \eqref{eq:fromstefanescu}. Thus, by Proposition \ref{prop:stef}, $\kstar{(\cdot)}$ gives us a trace as defined in the right of  \Cref{fig:star-trace}.
 By \Cref{lemma:uniform-kozen}, this trace satisfies the laws in \Cref{fig:ineq-uniformity} and in \Cref{fig:happy-trace}. Thus $\Cat{C}$ is a Kleene bicategory.
\end{proof}

\subsection{Proofs of Section \ref{ssec:matrix}}

\begin{proof}[Proof of Proposition \ref{prop:matrices-kleene-bicategory}] %
  By \eqref{eq:matrixadj}, \(\Mat{\Cat{K}}\) has finite biproducts.
  The posetal structure is defined element-wise from the posetal structure of \(\Cat{K}\). 
  We check that it gives adjoint biproducts. The following two derivations prove $\codiag{} \dashv \diag{}$.

  \noindent\begin{minipage}{0.44\textwidth}
    \begin{align*}
      
    \begin{tikzpicture}
	\begin{pgfonlayer}{nodelayer}
		\node [style=label] (105) at (-1.5, -0.625) {$X$};
		\node [style=label] (110) at (-1.5, 0.625) {$X$};
		\node [style=none] (117) at (-1, -0.625) {};
		\node [style=none] (118) at (-1, 0.625) {};
		\node [style=label] (120) at (2.5, -0.625) {$X$};
		\node [style=label] (124) at (2.5, 0.625) {$X$};
		\node [style=none] (125) at (2, -0.625) {};
		\node [style=none] (126) at (2, 0.625) {};
	\end{pgfonlayer}
	\begin{pgfonlayer}{edgelayer}
		\draw (118.center) to (126.center);
		\draw (125.center) to (117.center);
	\end{pgfonlayer}
\end{tikzpicture}
}
 
      &= 
    \begin{tikzpicture}
	\begin{pgfonlayer}{nodelayer}
		\node [style=label] (105) at (-2.25, -0.875) {$X$};
		\node [style=label] (110) at (-2.25, 0.875) {$X$};
		\node [style=none] (117) at (-1.75, -0.875) {};
		\node [style=none] (118) at (-1.75, 0.875) {};
		\node [style=label] (120) at (2.75, -0.875) {$X$};
		\node [style=label] (124) at (2.75, 0.875) {$X$};
		\node [style=none] (125) at (2.25, -0.875) {};
		\node [style=none] (126) at (2.25, 0.875) {};
		\node [style=black] (127) at (1.5, 0.875) {};
		\node [style=none] (128) at (0.75, 0.25) {};
		\node [style=none] (129) at (0.75, 1.5) {};
		\node [style=black] (130) at (-1, 0.875) {};
		\node [style=black] (131) at (-0.25, 0.25) {};
		\node [style=none] (132) at (-0.25, 1.5) {};
		\node [style=black] (134) at (1.5, -0.875) {};
		\node [style=none] (135) at (0.75, -1.5) {};
		\node [style=none] (136) at (0.75, -0.25) {};
		\node [style=black] (137) at (-1, -0.875) {};
		\node [style=none] (138) at (-0.25, -1.5) {};
		\node [style=black] (139) at (-0.25, -0.25) {};
		\node [style=black] (142) at (0.75, 0.25) {};
		\node [style=black] (143) at (0.75, -0.25) {};
	\end{pgfonlayer}
	\begin{pgfonlayer}{edgelayer}
		\draw [bend left] (129.center) to (127);
		\draw [bend left] (127) to (128.center);
		\draw [bend right] (132.center) to (130);
		\draw [bend right] (130) to (131);
		\draw (118.center) to (130);
		\draw (127) to (126.center);
		\draw [bend left] (136.center) to (134);
		\draw [bend left] (134) to (135.center);
		\draw [bend right] (139) to (137);
		\draw [bend right] (137) to (138.center);
		\draw (134) to (125.center);
		\draw (137) to (117.center);
		\draw (132.center) to (129.center);
		\draw (135.center) to (138.center);
		\draw (128.center) to (142);
		\draw (143) to (136.center);
	\end{pgfonlayer}
\end{tikzpicture}
}
 \tag{\ref{ax:comonoid:unit}, \ref{ax:monoid:unit}} \\
      &= 
    \begin{tikzpicture}
	\begin{pgfonlayer}{nodelayer}
		\node [style=label] (105) at (-2.25, -0.875) {$X$};
		\node [style=label] (110) at (-2.25, 0.875) {$X$};
		\node [style=none] (117) at (-1.75, -0.875) {};
		\node [style=none] (118) at (-1.75, 0.875) {};
		\node [style=label] (120) at (3.75, -0.875) {$X$};
		\node [style=label] (124) at (3.75, 0.875) {$X$};
		\node [style=none] (125) at (3.25, -0.875) {};
		\node [style=none] (126) at (3.25, 0.875) {};
		\node [style=black] (127) at (2.5, 0.875) {};
		\node [style=none] (128) at (1.75, 0.25) {};
		\node [style=none] (129) at (1.75, 1.5) {};
		\node [style=black] (130) at (-1, 0.875) {};
		\node [style=none] (131) at (-0.25, 0.25) {};
		\node [style=none] (132) at (-0.25, 1.5) {};
		\node [style=black] (134) at (2.5, -0.875) {};
		\node [style=none] (135) at (1.75, -1.5) {};
		\node [style=none] (136) at (1.75, -0.25) {};
		\node [style=black] (137) at (-1, -0.875) {};
		\node [style=none] (138) at (-0.25, -1.5) {};
		\node [style=none] (139) at (-0.25, -0.25) {};
		\node [style=black] (140) at (1.25, 0.25) {};
		\node [style=black] (141) at (1.25, -0.25) {};
		\node [style=black] (142) at (0.75, 0.25) {};
		\node [style=black] (143) at (0.75, -0.25) {};
	\end{pgfonlayer}
	\begin{pgfonlayer}{edgelayer}
		\draw [bend left] (129.center) to (127);
		\draw [bend left] (127) to (128.center);
		\draw [bend right] (132.center) to (130);
		\draw [bend right] (130) to (131.center);
		\draw (118.center) to (130);
		\draw (127) to (126.center);
		\draw [bend left] (136.center) to (134);
		\draw [bend left] (134) to (135.center);
		\draw [bend right] (139.center) to (137);
		\draw [bend right] (137) to (138.center);
		\draw (134) to (125.center);
		\draw (137) to (117.center);
		\draw (132.center) to (129.center);
		\draw (135.center) to (138.center);
		\draw (136.center) to (141);
		\draw [in=-180, out=0, looseness=1.25] (131.center) to (143);
		\draw [in=180, out=0] (139.center) to (142);
		\draw (140) to (128.center);
	\end{pgfonlayer}
\end{tikzpicture}
}
 \tag{\Cref{fig:freestricmmoncatax}} \\
      &\leq 
    \begin{tikzpicture}
	\begin{pgfonlayer}{nodelayer}
		\node [style=label] (105) at (-2.25, -0.875) {$X$};
		\node [style=label] (110) at (-2.25, 0.875) {$X$};
		\node [style=none] (117) at (-1.75, -0.875) {};
		\node [style=none] (118) at (-1.75, 0.875) {};
		\node [style=label] (120) at (2.75, -0.875) {$X$};
		\node [style=label] (124) at (2.75, 0.875) {$X$};
		\node [style=none] (125) at (2.25, -0.875) {};
		\node [style=none] (126) at (2.25, 0.875) {};
		\node [style=black] (127) at (1.5, 0.875) {};
		\node [style=none] (128) at (0.75, 0.25) {};
		\node [style=none] (129) at (0.75, 1.5) {};
		\node [style=black] (130) at (-1, 0.875) {};
		\node [style=none] (131) at (-0.25, 0.25) {};
		\node [style=none] (132) at (-0.25, 1.5) {};
		\node [style=black] (134) at (1.5, -0.875) {};
		\node [style=none] (135) at (0.75, -1.5) {};
		\node [style=none] (136) at (0.75, -0.25) {};
		\node [style=black] (137) at (-1, -0.875) {};
		\node [style=none] (138) at (-0.25, -1.5) {};
		\node [style=none] (139) at (-0.25, -0.25) {};
		\node [style=none] (142) at (0.75, 0.25) {};
		\node [style=none] (143) at (0.75, -0.25) {};
	\end{pgfonlayer}
	\begin{pgfonlayer}{edgelayer}
		\draw [bend left] (129.center) to (127);
		\draw [bend left] (127) to (128.center);
		\draw [bend right] (132.center) to (130);
		\draw [bend right] (130) to (131.center);
		\draw (118.center) to (130);
		\draw (127) to (126.center);
		\draw [bend left] (136.center) to (134);
		\draw [bend left] (134) to (135.center);
		\draw [bend right] (139.center) to (137);
		\draw [bend right] (137) to (138.center);
		\draw (134) to (125.center);
		\draw (137) to (117.center);
		\draw (132.center) to (129.center);
		\draw (135.center) to (138.center);
		\draw [in=-180, out=0, looseness=1.25] (131.center) to (143.center);
		\draw [in=180, out=0] (139.center) to (142.center);
		\draw (128.center) to (142.center);
		\draw (143.center) to (136.center);
	\end{pgfonlayer}
\end{tikzpicture}
}
 \tag{$0_{X,X} \leq ( \, \id{X} \, )$} \\
      &= 
    \begin{tikzpicture}
	\begin{pgfonlayer}{nodelayer}
		\node [style=label] (105) at (-1.5, -0.625) {$X$};
		\node [style=black] (107) at (0, 0) {};
		\node [style=none] (108) at (-0.75, -0.625) {};
		\node [style=none] (109) at (-0.75, 0.625) {};
		\node [style=label] (110) at (-1.5, 0.625) {$X$};
		\node [style=none] (117) at (-1, -0.625) {};
		\node [style=none] (118) at (-1, 0.625) {};
		\node [style=label] (120) at (2.5, -0.625) {$X$};
		\node [style=black] (121) at (1, 0) {};
		\node [style=none] (122) at (1.75, -0.625) {};
		\node [style=none] (123) at (1.75, 0.625) {};
		\node [style=label] (124) at (2.5, 0.625) {$X$};
		\node [style=none] (125) at (2, -0.625) {};
		\node [style=none] (126) at (2, 0.625) {};
	\end{pgfonlayer}
	\begin{pgfonlayer}{edgelayer}
		\draw [bend left] (109.center) to (107);
		\draw [bend left] (107) to (108.center);
		\draw [bend right] (123.center) to (121);
		\draw [bend right] (121) to (122.center);
		\draw (107) to (121);
		\draw (123.center) to (126.center);
		\draw (125.center) to (122.center);
		\draw (108.center) to (117.center);
		\draw (118.center) to (109.center);
	\end{pgfonlayer}
\end{tikzpicture}
}
 \tag{\ref{ax:comonoid:nat:copy}}
    \end{align*}
  \end{minipage}
  \qquad
  \vline
  \!\!\!\begin{minipage}{0.55\textwidth}
    \begin{align*}
      
    \begin{tikzpicture}
	\begin{pgfonlayer}{nodelayer}
		\node [style=label] (105) at (-2, 0) {$X$};
		\node [style=black] (107) at (0.75, 0) {};
		\node [style=none] (108) at (0, -0.625) {};
		\node [style=none] (109) at (0, 0.625) {};
		\node [style=label] (120) at (2, 0) {$X$};
		\node [style=black] (121) at (-0.75, 0) {};
		\node [style=none] (122) at (0, -0.625) {};
		\node [style=none] (123) at (0, 0.625) {};
		\node [style=none] (124) at (1.5, 0) {};
		\node [style=none] (125) at (-1.5, 0) {};
	\end{pgfonlayer}
	\begin{pgfonlayer}{edgelayer}
		\draw [bend left] (109.center) to (107);
		\draw [bend left] (107) to (108.center);
		\draw [bend right] (123.center) to (121);
		\draw [bend right] (121) to (122.center);
		\draw (123.center) to (109.center);
		\draw (108.center) to (122.center);
		\draw (125.center) to (121);
		\draw (124.center) to (107);
	\end{pgfonlayer}
\end{tikzpicture}
}
 
      = 
    \begin{tikzpicture}
	\begin{pgfonlayer}{nodelayer}
		\node [style=label] (110) at (-1.5, 0) {$X$};
		\node [style=none] (118) at (-1, 0) {};
		\node [style=label] (124) at (2.5, 0) {$X$};
		\node [style=none] (126) at (2, 0) {};
	\end{pgfonlayer}
	\begin{pgfonlayer}{edgelayer}
		\draw (118.center) to (126.center);
	\end{pgfonlayer}
\end{tikzpicture}}
 \tag{$(\, \id{X} \,) + (\, \id{X} \,) = (\, \id{X} \,)$}
    \end{align*}
  \end{minipage}

  \bigskip

  The following two derivations prove $\cobang{} \dashv \bang{}$.

  \noindent\begin{minipage}{0.44\textwidth}
    \begin{align*}
      
    \begin{tikzpicture}
	\begin{pgfonlayer}{nodelayer}
		\node [style=none] (113) at (0.75, 0.75) {};
		\node [style=none] (114) at (-0.75, 0.75) {};
		\node [style=none] (115) at (-0.75, -0.75) {};
		\node [style=none] (116) at (0.75, -0.75) {};
	\end{pgfonlayer}
	\begin{pgfonlayer}{edgelayer}
		\draw [dotted] (114.center)
			 to (113.center)
			 to (116.center)
			 to (115.center)
			 to cycle;
	\end{pgfonlayer}
\end{tikzpicture}
}
 
      = 
    \begin{tikzpicture}
	\begin{pgfonlayer}{nodelayer}
		\node [style=black] (107) at (0.75, 0) {};
		\node [style=black] (119) at (-0.75, 0) {};
		\node [style=label] (120) at (0, 0.5) {$X$};
	\end{pgfonlayer}
	\begin{pgfonlayer}{edgelayer}
		\draw (119) to (107);
	\end{pgfonlayer}
\end{tikzpicture}
}
 \tag{$I$ is both initial and terminal}
    \end{align*}
  \end{minipage}
  \qquad
  \vline
  \!\!\!\!\!\!\!\!\!\!\!\!\!\begin{minipage}{0.55\textwidth}
    \begin{align*}
      
    \begin{tikzpicture}
	\begin{pgfonlayer}{nodelayer}
		\node [style=label] (105) at (-2, 0) {$X$};
		\node [style=black] (107) at (0.5, 0) {};
		\node [style=label] (120) at (2, 0) {$X$};
		\node [style=black] (121) at (-0.5, 0) {};
		\node [style=none] (124) at (1.5, 0) {};
		\node [style=none] (125) at (-1.5, 0) {};
	\end{pgfonlayer}
	\begin{pgfonlayer}{edgelayer}
		\draw (125.center) to (121);
		\draw (124.center) to (107);
	\end{pgfonlayer}
\end{tikzpicture}
}
 
      \leq 
    \begin{tikzpicture}
	\begin{pgfonlayer}{nodelayer}
		\node [style=label] (105) at (-2, 0) {$X$};
		\node [style=none] (117) at (-1.5, 0) {};
		\node [style=label] (120) at (2, 0) {$X$};
		\node [style=none] (125) at (1.5, 0) {};
	\end{pgfonlayer}
	\begin{pgfonlayer}{edgelayer}
		\draw (117.center) to (125.center);
	\end{pgfonlayer}
\end{tikzpicture}
}
 \tag{$0_{X,X} \leq ( \, \id{X} \, )$}
    \end{align*}
  \end{minipage}

  \bigskip
  
  By Lemma 3.3 in \cite{Kozen94acompleteness}, \(\Mat{\Cat{K}}\) has a Kleene star operator. Thus, by \Cref{prop:trace-star}, $\Mat{\Cat{K}}$ is a Kleene bicategory.
\end{proof}

\section{Appendix to Section \ref{sec:traced}}

\subsection{Appendix to Section \ref{sec:2trREL}}

Given a category traced monoidal category $\Cat{C}$, we call a \emph{well typed relation} a set of pairs $(f,g)$ of arrows of $\Cat{C}$ with the same domain and codomain. We write $\WTrelC$ for the set of all well typed relations over $\Cat{C}$. Observe that $\WTrelC$ is a complete lattice with the ordering given by set inclusion. 

\begin{definition}
Let $\Cat{C}$ be a traced monoidal category and $\basicR$ a well typed relation. The functions \[\basicR,r,t,s,;,\perG,ut, ut1,ut2 \colon \WTrelC \to \WTrelC\] are defined for all $\mathbb{R}\in \WTrelC$ as follows:
\begin{itemize}
\item $r(\mathbb{R}) \defeq \{(f,f) \mid f\in \Cat{C}[X,Y]\}$; %
\item $t(\mathbb{R}) \defeq \{(f,h) \mid \exists g\in \Cat{C}[X,Y] \text{ such that } (f,g)\in \mathbb{R} \text{ and } (g,h)\in \mathbb{R}\}$; %
\item $s(\mathbb{R}) \defeq \{(g,f) \mid  (f,g)\in \mathbb{R} \}$; %
\item $\basicR(\mathbb{R}) \defeq \basicR$ 
\item $;(\mathbb{R}) \defeq \{(f_1;g_1\,,\, f_2;g_2) \mid (f_1,g_1)\in \mathbb{R} \text{ and } (f_2,g_2)\in \mathbb{R} \}$; 
\item $\perG(\mathbb{R}) \defeq \{(f_1 \perG g_1\,,\, f_2 \perG g_2) \mid (f_1,g_1)\in \mathbb{R} \text{ and } (f_2,g_2)\in \mathbb{R} \}$; 
\item $ut(\mathbb{R}) \defeq \{(  \trace_{S}f \,,\, \trace_{T}g ) \mid \exists r_1,r_2 \text{ such that } (r_1,r_2) \in  \mathbb{R} \text{ and }(\,f ; (r_1 \piu \id{Y}) ,\; (r_2 \piu \id{X}) ; g\,) \in \mathbb{R}\}$;
\item $ut1(\mathbb{R}) \defeq \{(  \trace_{S}f \,,\, \trace_{T}g ) \mid \exists r_1,r_2 \text{ such that } (r_2,r_1) \in  \mathbb{R} \text{ and }(\,f ; (r_1 \piu \id{Y}) ,\; (r_2 \piu \id{X}) ; g\,) \in \mathbb{R}\}$;
\item $ut2(\mathbb{R}) \defeq \{(  \trace_{S}f \,,\, \trace_{T}g ) \mid \exists r_1,r_2 \text{ such that } (r_2,r_1) \in  \mathbb{R} \text{ and }(\, (r_1 \piu \id{X}) ; f  ,\;   g ; (r_2 \piu \id{Y})\,) \in \mathbb{R}\}$;
\end{itemize}
\end{definition}
The reader should not care for the time being to the rules $ut1$ and $ut2$: their relevance will become clear in Appendix \ref{app:kleene-tapes}.

We define  $\mathsf{uc} \colon \WTrelC \to \WTrelC$ as
\begin{equation*}
\mathsf{uc} \defeq ( r \cup t \cup s \cup  ; \cup  \perG \cup ut)
\end{equation*}
Note that each of the functions in the above correspond to one rule in \eqref{eq:uniformprecong}. More precisely, it holds that
\begin{equation}\label{eq:blablablaUC}
\congB = (\mathsf{uc} \cup \basicR)^\omega
\end{equation}
where, as usual, $f^\omega$ stands for $\bigcup_n f^n$. 

We call a well typed relation $\mathbb{R}$ a \emph{uniform congruence} iff $\mathsf{upc}(\mathbb{R}) \subseteq \mathbb{R}$.

\begin{lemma}\label{lemma:ScottUC}
$\mathsf{uc} \colon \WTrelC \to \WTrelC$ is Scott-continuous.
\end{lemma}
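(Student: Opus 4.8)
The plan is to reduce Scott-continuity of $\mathsf{uc}$ to Scott-continuity of each of its six constituent operations, and then invoke the fact that a finite pointwise union of Scott-continuous maps on a complete lattice is again Scott-continuous. Recall that $\WTrelC$, ordered by inclusion, is a complete lattice, and that a monotone map $F \colon \WTrelC \to \WTrelC$ is Scott-continuous exactly when $F(\bigcup_{i} \mathbb{R}_i) = \bigcup_i F(\mathbb{R}_i)$ for every directed family $\{\mathbb{R}_i\}_{i\in I}$. For any monotone $F$ the inclusion $\bigcup_i F(\mathbb{R}_i) \subseteq F(\bigcup_i \mathbb{R}_i)$ is automatic, so in each case only the reverse inclusion must be checked.

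First I would dispatch the two \emph{structural} cases. The map $r$ is constant (its value does not depend on $\mathbb{R}$), hence Scott-continuous since directed families are nonempty. The map $s$ commutes with arbitrary unions --- a pair $(g,f)$ lies in $s(\bigcup_i \mathbb{R}_i)$ iff $(f,g)$ lies in some $\mathbb{R}_i$ --- so it is Scott-continuous \emph{a fortiori}.

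Next I would treat the four \emph{finitary} operations $t$, $;$, $\perG$ and $ut$ uniformly. Each of these is defined by a condition asserting membership in $\mathbb{R}$ of finitely many witnessing pairs (two pairs for $t$, $;$, $\perG$; and, for $ut$, the pair $(r_1,r_2)$ together with the pair $(f;(r_1\piu\id{}),\,(r_2\piu\id{});g)$). All four are plainly monotone. For the reverse inclusion, suppose an element of $F(\bigcup_i \mathbb{R}_i)$ is witnessed by finitely many pairs $p_1, \dots, p_k$, with $p_\ell \in \mathbb{R}_{i_\ell}$. By directedness of the family there is a single index $j$ with $\mathbb{R}_{i_\ell} \subseteq \mathbb{R}_j$ for all $\ell$; then every witness lies in $\mathbb{R}_j$, so the element already belongs to $F(\mathbb{R}_j) \subseteq \bigcup_i F(\mathbb{R}_i)$. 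This is the one place where directedness is essential: for non-directed unions these operations genuinely fail to commute with $\bigcup$ (e.g.\ $t$ could compose a pair drawn from $\mathbb{R}_1$ with a pair drawn from $\mathbb{R}_2$), and I expect verifying this directed-witness argument carefully --- in particular checking that the existential witnesses $r_1,r_2$ in the $ut$ clause can also be absorbed into the common index $\mathbb{R}_j$ --- to be the only genuinely load-bearing step.

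Finally I would conclude by combining the pieces: since $\mathsf{uc} = r \cup t \cup s \cup {;} \cup \perG \cup ut$ is a finite pointwise union, and finite unions of Scott-continuous maps are Scott-continuous (because the $\bigcup$ over the finitely many components commutes with the directed $\bigcup$ over $I$), $\mathsf{uc}$ is Scott-continuous. Explicitly, writing $F$ for a metavariable ranging over the six component operations, $\mathsf{uc}(\bigcup_i \mathbb{R}_i) = \bigcup_{F} F(\bigcup_i \mathbb{R}_i) = \bigcup_F \bigcup_i F(\mathbb{R}_i) = \bigcup_i \bigcup_F F(\mathbb{R}_i) = \bigcup_i \mathsf{uc}(\mathbb{R}_i)$, which is the required identity.
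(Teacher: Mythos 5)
Your proof is correct and follows essentially the same route as the paper's: decompose $\mathsf{uc}$ into its six components, show each is Scott-continuous (with the directedness of the family used to absorb the finitely many witnessing pairs — including the existential witnesses $r_1,r_2$ in the $ut$ clause — into a single member $\mathbb{R}_j$), and conclude by the standard fact that a finite pointwise union of Scott-continuous maps is Scott-continuous. The paper only writes out the $ut$ case in detail and cites the rest as well known, so your treatment is if anything slightly more explicit, but it is the same argument.
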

\begin{proof}
One can proceed modularly and prove separately that $id$, $r$, $t$ $;$, $\piu$, $ \per$ , $ut1$, and $ut2$ are Scott-continuous, to then deduce (from standard modularity results) that $\mathsf{uc}$ is Scott-continuous.
The fact that $r$, $t$, $s$ $;$, $\perG$ are Scott continuous is well known. We illustrate below the proof for $ut$. 

Monotonicity of $ut$ is obvious. 
Thus, we only need to prove that \[ut(\bigcup_n\mathbb{R}_n) \subseteq \bigcup_n ut (\mathbb{R}_n) \]
for all directed families $\{\mathbb{R}_n\}_{n\in \N}$ of well typed relations.

Let $(f,g)\in ut(\bigcup_n\mathbb{R}_n) $. Then there exists $f'$ and $g'$ such that $f=\trace{f'}$ and $g=\trace{f'}$. Moreover, there exist $n,m\in \N$ such that 
\[(r_1,r_2) \in  \mathbb{R}_n \text{ and }(\, (r_1 \piu \id{X}) ; f'  ,\;   g' ; (r_2 \piu \id{Y})\,) \in \mathbb{R}_m\]
Since $\{\mathbb{R}_n\}_{n\in \N}$ is directed, there exists some $o\in \N$ such that $\mathbb{R}_n \subseteq \mathbb{R}_o \supseteq \mathbb{R}_m$. Thus  
\[(r_1,r_2) \in  \mathbb{R}_o \text{ and }(\, (r_1 \piu \id{X}) ; f'  ,\;   g' ; (r_2 \piu \id{Y})\,) \in \mathbb{R}_o\]
and thus, by definition of $ut$,
\[(f,g)\in ut(\mathbb{R}_o)\subseteq \bigcup_n ut(\mathbb{R}_n) \text{.} \]
\end{proof}

\begin{lemma}\label{lemma:uniformpreUC}
$\congB$ is a uniform congruence.
\end{lemma}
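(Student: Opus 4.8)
The plan is to read the result directly off the fixed-point characterisation of $\congB$ recorded in \eqref{eq:blablablaUC}, namely $\congB = (\mathsf{uc}\cup\basicR)^\omega$, where I interpret $(\mathsf{uc}\cup\basicR)^\omega$ as $\bigcup_n (\mathsf{uc}\cup\basicR)^n(\emptyset)$ computed in the complete lattice $\WTrelC$ (ordered by inclusion, with least element $\emptyset$). Write $G$ for the endofunction $\mathbb{R} \mapsto \mathsf{uc}(\mathbb{R}) \cup \basicR(\mathbb{R}) = \mathsf{uc}(\mathbb{R}) \cup \basicR$, recalling that $\basicR(\cdot)$ is the constant map with value $\basicR$. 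Since being a uniform congruence means precisely $\mathsf{uc}(\congB) \subseteq \congB$, it suffices to establish this single inclusion.

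First I would check that $G$ is Scott-continuous. By Lemma \ref{lemma:ScottUC} the map $\mathsf{uc}$ is Scott-continuous; the constant map $\basicR$ is trivially Scott-continuous (it sends every directed family to the singleton $\{\basicR\}$, whose join is $\basicR$); and the pointwise union of two Scott-continuous maps into a complete lattice is again Scott-continuous. Hence $G$ is Scott-continuous and, in particular, monotone.

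Then, by Kleene's fixed-point theorem applied to the Scott-continuous endofunction $G$ on $\WTrelC$, the chain-union $\bigcup_n G^n(\emptyset)$ is the least fixed point of $G$. Combining this with \eqref{eq:blablablaUC} yields $\congB = \bigcup_n G^n(\emptyset)$ and, in particular, the fixed-point equation $G(\congB) = \congB$. Unfolding it gives $\mathsf{uc}(\congB) \cup \basicR = \congB$, whence $\mathsf{uc}(\congB) \subseteq \congB$, which is exactly the statement that $\congB$ is a uniform congruence.

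There is essentially no obstacle here, since the substance has already been isolated in Lemma \ref{lemma:ScottUC} and in the identity \eqref{eq:blablablaUC}: the argument is a one-line consequence of Kleene's theorem. The only point requiring a little care is the notational bookkeeping around $f^\omega$ — confirming that it denotes the least-fixed-point iteration $\bigcup_n f^n(\emptyset)$, and that forming the union with the constant map $\basicR$ does not disturb Scott-continuity — but these are routine order-theoretic facts that I would dispatch in passing.
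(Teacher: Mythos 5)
Your proof is correct and takes exactly the same route as the paper: both invoke Lemma~\ref{lemma:ScottUC} together with the Kleene fixed-point theorem to identify $\bigcup_n(\mathsf{uc}\cup\basicR)^n$ with the least fixed point of $\mathsf{uc}\cup\basicR$, and then read off $\mathsf{uc}(\congB)\subseteq\congB$ from the fixed-point equation via \eqref{eq:blablablaUC}. Your extra remarks on the Scott-continuity of the constant map and of the pointwise union are routine bookkeeping the paper leaves implicit.
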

\begin{proof}
By Lemma \ref{lemma:ScottUC}, one can use the Kleene fixed point theorem to deduce that the least fixed point of $\mathsf{upc} \cup \basicR$ is $\bigcup_n (\mathsf{uc} \cup \basicR)^n$ that by \eqref{eq:blablablaUC} is 
exactly $\congB$.
\end{proof}

\begin{lemma}\label{lemma:smallestUC}
$\congB$ is the smallest uniform congruence including $\basicR$. That is, if $\mathbb{R}$ is a uniform congruence and $\basicR \subseteq \mathbb{R}$, then $\congB \subseteq \mathbb{R}$.
\end{lemma}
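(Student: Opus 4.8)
The plan is to recognise $\congB$ as the least fixed point of the Scott-continuous endofunction $F \colon \WTrelC \to \WTrelC$ given by $F(\mathbb{R}) \defeq \mathsf{uc}(\mathbb{R}) \cup \basicR$, and then to observe that any uniform congruence containing $\basicR$ is exactly a \emph{pre-fixed point} of $F$. Since, in a complete lattice, the least fixed point of a monotone map sits below every pre-fixed point, the desired inclusion $\congB \subseteq \mathbb{R}$ follows at once.

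First I would unfold the definitions to identify the pre-fixed point condition. A well typed relation $\mathbb{R}$ is a uniform congruence precisely when $\mathsf{uc}(\mathbb{R}) \subseteq \mathbb{R}$, and it includes $\basicR$ precisely when $\basicR \subseteq \mathbb{R}$. Taken together, these two conditions are equivalent to $F(\mathbb{R}) = \mathsf{uc}(\mathbb{R}) \cup \basicR \subseteq \mathbb{R}$, i.e.\ to $\mathbb{R}$ being a pre-fixed point of $F$. Thus the hypotheses of the lemma on $\mathbb{R}$ are exactly $F(\mathbb{R}) \subseteq \mathbb{R}$.

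Next, recall from \eqref{eq:blablablaUC} and the proof of Lemma~\ref{lemma:uniformpreUC} that $\congB = \bigcup_n F^n(\emptyset)$, the ascending Kleene chain of $F$ started at the bottom element $\emptyset$ of the complete lattice $\WTrelC$. I would then prove by induction on $n$ that $F^n(\emptyset) \subseteq \mathbb{R}$ for every pre-fixed point $\mathbb{R}$ of $F$. The base case $F^0(\emptyset) = \emptyset \subseteq \mathbb{R}$ is immediate. For the inductive step, using that $\mathsf{uc}$ is monotone (a consequence of its Scott-continuity, Lemma~\ref{lemma:ScottUC}) together with the induction hypothesis $F^n(\emptyset) \subseteq \mathbb{R}$, one gets $\mathsf{uc}(F^n(\emptyset)) \subseteq \mathsf{uc}(\mathbb{R}) \subseteq \mathbb{R}$, where the last inclusion holds because $\mathbb{R}$ is a uniform congruence; combining this with $\basicR \subseteq \mathbb{R}$ yields $F^{n+1}(\emptyset) = \mathsf{uc}(F^n(\emptyset)) \cup \basicR \subseteq \mathbb{R}$. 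Taking the union over all $n$ then gives $\congB = \bigcup_n F^n(\emptyset) \subseteq \mathbb{R}$, as required.

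There is essentially no serious obstacle here: the statement is the familiar fact that a Kleene-constructed least fixed point is the least pre-fixed point, specialised to $F$. The only points requiring care are to invoke \emph{monotonicity} of $\mathsf{uc}$ (rather than full continuity) in the inductive step, and to ensure the characterisation $\congB = \bigcup_n F^n(\emptyset)$ is applied with the correct starting point $\emptyset$. Alternatively, one may bypass the induction entirely and cite the Knaster--Tarski theorem, which directly states that the least fixed point of the monotone $F$ is contained in every pre-fixed point, immediately delivering $\congB \subseteq \mathbb{R}$.
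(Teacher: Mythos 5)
Your proposal is correct and follows essentially the same route as the paper: both identify $\congB$ as the least fixed point of $\mathbb{R} \mapsto \mathsf{uc}(\mathbb{R}) \cup \basicR$ (via the Kleene chain established for Lemma~\ref{lemma:uniformpreUC}) and observe that a uniform congruence containing $\basicR$ is exactly a pre-fixed point, so the inclusion follows from the least-fixed-point-below-every-pre-fixed-point principle. The paper simply cites Knaster--Tarski where you spell out the induction on the chain, which is an equally valid way to discharge the same step.
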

\begin{proof}
As observed in the proof above $\congB$ is the least fixed point of $\mathsf{uc} \cup \basicR$. By Knaster-Tarski fixed point theorem, if $\mathbb{R}$ is a well-typed relation such that 
\[\mathsf{uc} \cup \basicR (\mathbb{R}) \subseteq \mathbb{R}\text{,}\]
namely, a uniform congruence including $\basicR$, then $\congB \subseteq \mathbb{R}$.
\end{proof}

\begin{corollary}\label{cor:smallestUC}
Let $\Cat{C}$ be a traced monoidal category, then $\approx_\Cat{C}$ is the smallest uniform congruence on the arrow of $\Cat{C}$.
\end{corollary}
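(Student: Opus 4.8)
The plan is to obtain the corollary as the instantiation of Lemma \ref{lemma:smallestUC} at the empty relation $\basicR = \emptyset$. Recall from the notational convention introduced after the inference system in \eqref{eq:uniformcong} that $\approx_{\Cat{C}}$ is exactly $\congR{\emptyset}$, the uniform congruence generated from the empty set of pairs. Hence the entire task reduces to unwinding what Lemma \ref{lemma:smallestUC} asserts in this special case.

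First I would record that, by Lemma \ref{lemma:uniformpreUC} applied with $\basicR = \emptyset$, the relation $\approx_{\Cat{C}} = \congR{\emptyset}$ is itself a uniform congruence; this supplies one half of the claim, namely that $\approx_{\Cat{C}}$ is a member of the class of relations over which we are minimising. Next I would invoke Lemma \ref{lemma:smallestUC} with $\basicR = \emptyset$: it states that $\congR{\emptyset}$ is the smallest uniform congruence \emph{including} $\emptyset$. The decisive (and trivial) observation is that every well-typed relation $\mathbb{R} \in \WTrelC$ contains the empty relation, so the side-condition $\emptyset \subseteq \mathbb{R}$ is vacuously met by every uniform congruence $\mathbb{R}$. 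Therefore the qualifier ``including $\emptyset$'' imposes no constraint whatsoever, and ``smallest uniform congruence including $\emptyset$'' coincides with ``smallest uniform congruence''. Combining this with the first observation yields that $\approx_{\Cat{C}}$ is the smallest uniform congruence on the arrows of $\Cat{C}$, as required.

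There is essentially no obstacle to surmount: all the genuine content has already been discharged in the proofs of Lemmas \ref{lemma:uniformpreUC} and \ref{lemma:smallestUC}, which themselves rest on the Scott-continuity of $\mathsf{uc}$ established in Lemma \ref{lemma:ScottUC} together with the Kleene and Knaster--Tarski fixed point theorems. The sole remaining check is the purely order-theoretic bookkeeping that $\emptyset$ lies below every element of $\WTrelC$ in the inclusion ordering, which is immediate; consequently the corollary follows by a one-line specialisation rather than any new argument.
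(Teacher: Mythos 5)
Your proposal is correct and is precisely the paper's own argument: instantiate Lemma \ref{lemma:smallestUC} at $\basicR=\emptyset$, observing that the inclusion condition $\emptyset\subseteq\mathbb{R}$ is vacuous, with Lemma \ref{lemma:uniformpreUC} supplying that $\approx_{\Cat{C}}$ is itself a uniform congruence. No differences worth noting.
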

\begin{proof}
In the above lemma replace $\basicR$ by the empty set $\emptyset$.
\end{proof}

\begin{proof}[Proof of Proposition \ref{prop:uniformCat}]
Recall that $\Unif(\Cat{C})$ has the same objects of $\Cat{C}$ and that arrows are $\approx$-equivalence classes $[f]\colon X \to Y$ of arrows of $\Cat{C}$.
Composition and monoidal product are defined as expected: $[f];[g]\defeq [f;g]$ and $[f] \perG [g] \defeq [f \perG g]$. Observe that these operations are well defined by the rules $(;)$ and $(\perG)$: if $f \approx f'$ and $g \approx g'$, then $f;g \approx f';g'$ and  $f \perG g \approx f'\perG g'$. Similarly $ \trace_{S}[f] \defeq [\trace_{S}f] $ is well defined by $(ut)$. Since $\Cat{C}$ is a traced monoidal category, one can deduce immediately that also $\Unif(\Cat{C})$ is trace monoidal one.

We need to show that $\Unif(\Cat{C})$ also respect uniformity. Assume that there exists an arrow $[r]$ in $\Unif(\Cat{C})$ such that 
\[ [f] ; ([r] \piu [\id{Y}]) = ( [r] \piu [\id{X}]) ; [g] \text{.}\]
By definition of composition and monoidal product, the above means that there are arrows in $\Cat{C}$, $i_1, f_1,r_1,r_2,g_2,i_2$ such that 
\[i_1 \approx \id{Y} \quad f_1 \approx f \quad r_1 \approx r \approx r_2 \qquad g_2 \approx g \quad i_2 \approx \id{X}\]
and
\[ f_1 ; (r_1 \piu i_1) \approx ( r_2 \piu i_2) ; g_2 \text{.}\]
By transitivity
\[ f ; (r_1 \piu \id{Y}) \approx ( r_2 \piu \id{X}) ; g \]
 and thus, by (ut),
 \[\trace_{S}f \approx \trace_T{g}\]
that is $\trace_{S}[f] = \trace_{T}[g]$.
\end{proof}

\begin{lemma}\label{lemma:traced-monoidal-functors-preserve-equivalence}
  Traced monoidal functors preserve uniformity equivalence.
  Explicitly, for a traced monoidal functor \(\fun{F} \colon \Cat{B} \to \Cat{C}\) and two morphisms \(f,g \colon X \to Y\) in \(\Cat{B}\), if \(f \unifeq_{\Cat{B}} g\), then \(\fun{F}(f) \unifeq_{\Cat{C}} \fun{F}(g)\).
\end{lemma}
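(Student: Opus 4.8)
The plan is to exploit the characterisation of $\unifeq_{\Cat{B}}$ as the smallest uniform congruence on the arrows of $\Cat{B}$, provided by Corollary \ref{cor:smallestUC}. Concretely, I would introduce the \emph{pullback relation} $\mathbb{R}$ on the arrows of $\Cat{B}$, defined by $(f,g) \in \mathbb{R}$ iff $\fun{F}(f) \unifeq_{\Cat{C}} \fun{F}(g)$. This is a well typed relation, since $f$ and $g$ share domain and codomain and $\fun{F}$ preserves these, so $\fun{F}(f)$ and $\fun{F}(g)$ do as well. The statement of the lemma is then exactly the inclusion $\unifeq_{\Cat{B}} \subseteq \mathbb{R}$, which by minimality of $\unifeq_{\Cat{B}}$ (Corollary \ref{cor:smallestUC}) reduces to showing that $\mathbb{R}$ is a uniform congruence, i.e.\ that $\mathsf{uc}(\mathbb{R}) \subseteq \mathbb{R}$, or equivalently that $\mathbb{R}$ is closed under each rule of the inference system \eqref{eq:uniformcong} with $\basicR = \emptyset$.

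The bulk of the argument is then a routine case analysis, one rule at a time. Closure under $(r)$, $(s)$ and $(t)$ is immediate because $\unifeq_{\Cat{C}}$ is itself reflexive, symmetric and transitive. For $(\dcomp)$ and $(\perG)$ I would use that $\fun{F}$ is a strict monoidal functor: from $(f_1,g_1),(f_2,g_2) \in \mathbb{R}$ we get $\fun{F}(f_1) \unifeq_{\Cat{C}} \fun{F}(g_1)$ and $\fun{F}(f_2) \unifeq_{\Cat{C}} \fun{F}(g_2)$; applying the same rule of \eqref{eq:uniformcong} inside $\Cat{C}$ yields $\fun{F}(f_1) \dcomp \fun{F}(f_2) \unifeq_{\Cat{C}} \fun{F}(g_1) \dcomp \fun{F}(g_2)$ (respectively the $\perG$-version), and functoriality and monoidality rewrite these composites as $\fun{F}(f_1 \dcomp f_2)$ and $\fun{F}(g_1 \dcomp g_2)$ (respectively $\fun{F}(f_1 \perG f_2)$ and $\fun{F}(g_1 \perG g_2)$), so the resulting pair lies in $\mathbb{R}$.

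The only step requiring care is $(ut)$, which is where all three defining properties of a traced monoidal functor are used simultaneously. Suppose $(u,v) \in \mathbb{R}$ and $(f \dcomp (u \perG \id{}),\, (v \perG \id{}) \dcomp g) \in \mathbb{R}$. Unfolding $\mathbb{R}$ gives $\fun{F}(u) \unifeq_{\Cat{C}} \fun{F}(v)$ together with $\fun{F}(f \dcomp (u \perG \id{})) \unifeq_{\Cat{C}} \fun{F}((v \perG \id{}) \dcomp g)$; functoriality, preservation of $\perG$ and of identities turn the latter into $\fun{F}(f) \dcomp (\fun{F}(u) \perG \id{}) \unifeq_{\Cat{C}} (\fun{F}(v) \perG \id{}) \dcomp \fun{F}(g)$. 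These are precisely the premises of rule $(ut)$ in $\Cat{C}$, which yields $\trace_{\fun{F}(S)} \fun{F}(f) \unifeq_{\Cat{C}} \trace_{\fun{F}(T)} \fun{F}(g)$; finally, preservation of the trace, $\fun{F}(\trace_S f) = \trace_{\fun{F}(S)} \fun{F}(f)$ and $\fun{F}(\trace_T g) = \trace_{\fun{F}(T)} \fun{F}(g)$, lets me conclude $(\trace_S f, \trace_T g) \in \mathbb{R}$. Having verified closure under every generating rule, Corollary \ref{cor:smallestUC} delivers $\unifeq_{\Cat{B}} \subseteq \mathbb{R}$, i.e.\ the lemma. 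The main (and essentially only) obstacle is the bookkeeping in this $(ut)$ case: ensuring that the monoidal-functor coherence correctly pushes $\fun{F}$ through the whiskered composites both before applying the rule and when re-assembling the conclusion, and that the trace-preservation identity is invoked on the right objects $\fun{F}(S)$ and $\fun{F}(T)$. Everything else is formal.
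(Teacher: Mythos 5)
Your proposal is correct and follows essentially the same route as the paper: the paper also defines the pullback relation ($f \approx' g$ iff $\fun{F}(f) \unifeq_{\Cat{C}} \fun{F}(g)$), verifies rule-by-rule that it is a uniform congruence (with the $(ut)$ case handled exactly as you describe, via functoriality, monoidality and trace preservation), and concludes by the minimality of $\unifeq_{\Cat{B}}$ from Corollary \ref{cor:smallestUC}.
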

\begin{proof}%
We define $\approx'$ on $\Cat{B}$ as $f \approx' g$ iff $Ff \approx_{\Cat{C}}Fg$.

We prove that $\approx'$ is a uniform congruence, namely that $\mathsf{uc}(\approx') \subseteq \approx'$. 
Since $\approx_{\Cat{C}}$ is an equivalence relation, then $r(\approx') \subseteq \approx'$, $t(\approx') \subseteq \approx'$ and $s(\approx') \subseteq \approx'$.
For the monotone maps $;$ and $\perG$ one uses the fact that $F$ is a  morphism of traced monoidal categories
and that $\approx_\Cat{C}$ is closed by these operations. For instance to prove $;(\approx') \subseteq \approx'$,

\begin{align*}
f_1 \approx' f_2 \text{ and } g_1 \approx' g_2 & \Longleftrightarrow Ff_1 \approx_{\Cat{C}} Ff_2 \text{ and } Fg_1 \approx_{\Cat{C}} Fg_2 \tag{def} \\
&  \Longrightarrow Ff_1; Fg_1 \approx_{\Cat{C}} Ff_2; Fg_2 \tag{$\approx_\Cat{C}$ is closed by $;$}\\
& \Longrightarrow F (f_1; g_1) \approx_{\Cat{C}} F(f_2; g_2) \tag{$F$ functor}\\
& \Longleftrightarrow f_1; g_1 \approx' f_2; g_2 \tag{def} \\
\end{align*}

We illustrate below the proof for $ut(\approx') \subseteq \approx'$. %
\begin{align*}
 & \exists r_1,r_2\colon S \to T\text{  such that  }r_1 \approx' r_2\text{  and }f ; (r_1 \piu \id{Y}) \approx' (r_2 \piu \id{X}) ; g \\ 
 \Longrightarrow & Fr_1  \approx_{\Cat{C}} Fr_2\text{  and }F(\,f ; (r_1 \piu \id{Y})\,)  \approx_{\Cat{C}} F(\,(r_2 \piu \id{X}) ; g) \tag{def}\\
 \Longleftrightarrow & Fr_1  \approx_{\Cat{C}} Fr_2\text{  and }Ff ; (Fr_1 \piu F\id{Y})  \approx_{\Cat{C}} (Fr_2 \piu F\id{X}) ; Fg \tag{$F$ functor}\\
  \Longrightarrow & \trace_{S}Ff \leq_{\Cat{C}} \trace_{T}Fg \tag{$\approx_\Cat{C}$ is closed by $ut$} \\
    \Longleftrightarrow & F(\trace_{S}f) \approx_{\Cat{C}} F(\trace_{T}g) \tag{$F$ preserves traces} \\
    \Longleftrightarrow & \trace_{S}f \approx' \trace_{T}g \tag{def} 
\end{align*}

This concludes the proof that $\mathsf{uc}(\approx') \subseteq \approx'$, namely that $\approx'$ is a uniform congruence. By Corollary \ref{cor:smallestUC}, we have that $\approx_{\Cat{B}} \subseteq \approx'$.
This means that if $f \approx_\Cat{B} g$ then $Ff \approx_{\Cat{C}}Fg$.
\end{proof}

\begin{proof}[Proof of Lemma \ref{lemma:uniformity-quotient-functor}]
  By Proposition \ref{prop:uniformCat}, thus \(\Unif(\Cat{B})\) is a uniformly traced monoidal category.
  This gives the action of \(\Unif\) on objects.

  On morphisms, \(\Unif\) assigns to a traced monoidal functor \(\fun{F} \colon \Cat{B} \to \Cat{C}\) the corresponding functor on equivalence classes:
  for \(f \in \Cat{B}(X,Y)\), \(\Unif(\fun{F})([f]) \defeq [\fun{F}(f)]\), where \([h]\) denotes the $\approx$-equivalence class of \(h\).
  By \Cref{lemma:traced-monoidal-functors-preserve-equivalence}, if \([f] = [g]\), then \([\fun{F}(f)] = [\fun{F}(g)]\), so \(\Unif(\fun{F})\) is well-defined on equivalence classes.
  Finally, \(\Unif(\fun{F})\) inherits the monoidal structure from \(\fun{F}\) and it preserves the trace.
  \[ \Unif(\fun{F})(\trace_{S}[f]) = \Unif(\fun{F})([\trace_{S}f]) = [\fun{F}(\trace_{S}f)] = [\trace_{\fun{F}S}(\fun{F}f)] = \trace_{\fun{F}S}[(\fun{F}f)] = \trace_{\fun{F}S}(\Unif(\fun{F})[f])\]
\end{proof}

\begin{lemma}\label{rem:uniformity-quotient-idempotent}
Let $\Cat{C}$ be a monoidal category. If \(\Cat{C}\) is uniformly traced, then for all arrows $f,g$, it holds that if $f\approx_{\Cat{C}}g$, then $f=g$.
\end{lemma}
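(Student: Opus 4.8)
The claim to prove is Lemma \ref{rem:uniformity-quotient-idempotent}: if $\Cat{C}$ is already uniformly traced, then the relation $\approx_{\Cat{C}}$ collapses to equality, i.e.\ $f \approx_{\Cat{C}} g$ implies $f = g$.

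The plan is to exploit the characterisation of $\approx_{\Cat{C}}$ established in the appendix, namely Corollary \ref{cor:smallestUC}, which states that $\approx_{\Cat{C}}$ is the \emph{smallest} uniform congruence on the arrows of $\Cat{C}$ (this is the case $\basicR = \emptyset$ of Lemma \ref{lemma:smallestUC}). A uniform congruence is by definition a well typed relation $\mathbb{R}$ with $\mathsf{uc}(\mathbb{R}) \subseteq \mathbb{R}$, that is, a relation closed under the operations $r$, $t$, $s$, $;$, $\perG$ and $ut$. So the strategy is to show that the equality relation $\mathsf{Eq}_{\Cat{C}} \defeq \{(f,f) \mid f \text{ an arrow of } \Cat{C}\}$ is itself a uniform congruence; minimality of $\approx_{\Cat{C}}$ then forces $\approx_{\Cat{C}} \subseteq \mathsf{Eq}_{\Cat{C}}$, which is exactly the statement $f \approx_{\Cat{C}} g \implies f = g$ (the reverse inclusion being the reflexivity rule).

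First I would verify that $\mathsf{Eq}_{\Cat{C}}$ is closed under each generating operation. Closure under $r$, $t$, $s$ is immediate since equality is reflexive, transitive and symmetric. Closure under $;$ and $\perG$ follows from functoriality of composition and of the monoidal product: if $f_1 = g_1$ and $f_2 = g_2$ then $f_1;f_2 = g_1;g_2$ and $f_1 \perG f_2 = g_1 \perG g_2$. The one interesting case is closure under $ut$, and this is precisely where the hypothesis that $\Cat{C}$ is \emph{uniformly} traced enters. Suppose $(\trace_{S}f, \trace_{T}g) \in ut(\mathsf{Eq}_{\Cat{C}})$; unfolding the definition of $ut$, there exist $r_1, r_2$ with $(r_1,r_2) \in \mathsf{Eq}_{\Cat{C}}$, i.e.\ $r_1 = r_2 =: r$, and $(f;(r_1 \piu \id{Y}),\, (r_2 \piu \id{X});g) \in \mathsf{Eq}_{\Cat{C}}$, i.e.\ $f;(r \piu \id{Y}) = (r \piu \id{X});g$. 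But this last equation is exactly the premise of the uniformity axiom (\ref{ax:trace:uniformity}) of Table \ref{tab:uniformity}, and since $\Cat{C}$ is uniformly traced its conclusion gives $\trace_{S}f = \trace_{T}g$, so the pair lies in $\mathsf{Eq}_{\Cat{C}}$. Hence $\mathsf{uc}(\mathsf{Eq}_{\Cat{C}}) \subseteq \mathsf{Eq}_{\Cat{C}}$.

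Having shown $\mathsf{Eq}_{\Cat{C}}$ is a uniform congruence, I conclude by Corollary \ref{cor:smallestUC} that $\approx_{\Cat{C}} \subseteq \mathsf{Eq}_{\Cat{C}}$, which is the desired implication. The main (and really only) obstacle is correctly matching the data appearing in the definition of the operation $ut$ against the premise of the uniformity axiom; once the bookkeeping of which arrows must be equal is unwound, the genuine content is the single application of (\ref{ax:trace:uniformity}). Everything else is formal closure bookkeeping, so I would keep that part brief and foreground the $ut$ case.
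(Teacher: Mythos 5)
Your proof is correct and follows essentially the same route as the paper: the paper's proof likewise shows that the identity relation $\mathbb{ID}$ is closed under $r$, $t$, $s$, $;$, $\perG$ and (using uniformity of $\Cat{C}$) under $ut$, and then concludes $\approx_{\Cat{C}} \subseteq \mathbb{ID}$ by iterating $\mathsf{uc}$ from $\emptyset$ — which is the same minimality argument you invoke via Corollary \ref{cor:smallestUC}. Your identification of the $ut$ case as the only place where the uniformity axiom is needed matches the paper exactly.
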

\begin{proof}
Let $\mathbb{ID}\defeq\{(f,f) \mid f \in Cat{C}[X,Y]\}$ be the well typed identity relation on the arrow of $\Cat{C}$.

Observe that if $\Cat{C}$ is uniformly traced, then $ut(\mathbb{ID}) \subseteq \mathbb{ID}$.

Moreover one can immeditaely check that, for any $\Cat{C}$, the followings hold:
\[r(\mathbb{ID}) \subseteq \mathbb{ID} \qquad t(\mathbb{ID})\subseteq \mathbb{ID} \quad s(\mathbb{ID})\subseteq \mathbb{ID} \quad ;(\mathbb{ID})\subseteq \mathbb{ID} \quad \perG(\mathbb{ID})\subseteq \mathbb{ID} \]
Thus $\mathsf{uc}(\emptyset) \subseteq \mathsf{uc}(\mathbb{ID}) \subseteq \mathbb{ID}$ and thus for all $n\in \N$,
\[\mathsf{uc}^n(\emptyset)\subseteq \mathbb{ID}\text{,}\]
namely $\approx_\Cat{C}\subseteq \mathbb{ID}$.
\end{proof}

\begin{proof}[Proof of Proposition \ref{prop:free-uniform}]
  By \Cref{lemma:uniformity-quotient-functor}, \(\Unif\) is a functor.
  We show that it is a left adjoint by defining the unit of the adjunction and checking the universal property.
  The components of the unit are traced monoidal functors \(\eta_{\Cat{B}} \colon \Cat{B} \to \fun{U}(\Unif(\Cat{B}))\).
  We define them to be identity-on-objects, \(\eta_{\Cat{B}}(X) \defeq X\), and to map a morphism to its uniformity equivalence class \(\eta_{\Cat{B}}(f) \defeq [f]\).
  By \Cref{prop:uniformCat}, uniformity equivalence classes respect compositions, monoidal products and trace, which makes \(\eta_{\Cat{B}}\) a functor.
  Naturality follows from the definitions of \(\eta\) and \(\Unif\).
  \begin{align*}
    & \fun{U}(\Unif(\fun{F}))(\eta_{\Cat{B}}(X)) && \fun{U}(\Unif(\fun{F}))(\eta_{\Cat{B}}(f)) \\
    & = \fun{U}(\Unif(\fun{F}))(X) &    & = \fun{U}(\Unif(\fun{F}))([f]) \\
    & = \fun{F}(X) &    & = [\fun{F}(f)] \\
    & = \eta_{\Cat{C}}(\fun{F}(X)) &    & = \eta_{\Cat{C}}(\fun{F}(f))
  \end{align*}

  Let \(\fun{G} \colon \Cat{B} \to \fun{U}(\Cat{D})\) be a traced monoidal functor and define \(\hat{\fun{G}} \colon \Unif(\Cat{B}) \to \Cat{C}\) as \(\hat{\fun{G}}(X) \defeq \fun{G}(X)\) and \(\hat{\fun{G}}([f]) \defeq \fun{G}(f)\).
  By \Cref{lemma:traced-monoidal-functors-preserve-equivalence}, if \(f \unifeq g\), then \(\fun{G}(f) \unifeq \fun{G}(g)\).
  Since \(\Cat{C}\) is uniformly traced, by \Cref{rem:uniformity-quotient-idempotent} this shows that \(\fun{G}(f) = \fun{G}(g)\) and that \(\hat{\fun{G}}\) is well-defined.
  Since \(\fun{G}\) is a traced monoidal functor, so is \(\hat{\fun{G}}\).
  Finally, \(\hat{\fun{G}}\) is the only possible functor satisfying \(\fun{U}(\hat{\fun{G}})(\eta_{\Cat{B}}(f)) = \fun{G}(f)\).
\end{proof}

\begin{proof}[Proof of Theorem \ref{th:free-uniform-trace}]
  The results in~\cite{katis2002feedback} construct an adjunction \(\freeTr \colon \SMC \leftrightarrows \TSMC \co{\colon} \fun{U}\) that gives the free traced monoidal category over a symmetric monoidal category.
  By \Cref{prop:free-uniform}, there is an adjunction that quotients by uniformity, \(\Unif \colon \TSMC \leftrightarrows \UTSMC \co{\colon} \fun{U}\).
  By composing these two adjunctions, we obtain the desired adjunction.

  The unit and counit of this adjunction are compositions of the units and counits of the smaller adjunctions.
  We describe them explicitly.
  The components of the unit are identity-on-objects symmetric monoidal functors \(\eta_{\Cat{B}} \colon \Cat{B} \to \fun{U}(\UTr(\Cat{B}))\).
  A morphism \(f \colon X \to Y\) in \(\Cat{B}\) is mapped to the uniformity equivalence class of \(f\) with monoidal unit state space, \(\eta_{\Cat{B}}(f) = [(f \mid I)]\).
  The components of the counit are identity-on-objects traced monoidal functors \(\epsilon_{\Cat{C}} \colon \UTr(\fun{U}(\Cat{C})) \to \Cat{C}\).
  A morphism \([(f \mid S)] \colon X \to Y\) in \(\UTr(\fun{U}(\Cat{C}))\) is mapped to the trace of \(f\) on \(S\), \(\epsilon_{\Cat{C}}([(f \mid S)]) = \trace_{S}f\).
\end{proof}

\section{Appendix to Section~\ref{sec:kleene-tapes}}\label{app:kleene-tapes}

\begin{proof}[Proof of Lemma \ref{lemma:ditributivityper}]
The laws holds in an fb-rig category by Proposition 6.1 in \cite{bonchi2023deconstructing}. Thus, in particular, they hold in any Kleene rig category.
\end{proof}

\subsection{Proof of Theorem \ref{thm:Kleenetapesfree}} %
Theorem \ref{thm:Kleenetapesfree} follows almost trivially by freeness of $\CatTrTape$ (Theorem \ref{thm:freeut-fb}). However, since the definition of $\precongK$ in \eqref{eq:uniformprecong} involves the uniformity laws ($ut$-1) and ($ut$-2), the proof of Theorem \ref{thm:Kleenetapesfree} requires some extra care. To stay on the safe side, we are going to be a little pedantic and illustrate all details.

\smallskip

Given a category $\Cat{C}$, we call a \emph{well typed relation} a set of pairs $(f,g)$ of arrows of $\Cat{C}$ with the same domain and codomain. We write $\WTrelC$ for the set of all well typed relations over $\Cat{C}$. Observe that $\WTrelC$ is a complete lattice with the ordering given by set inclusion. 

Whenever $\Cat{C}$ has enough structure, one can define the maps $\mathbb{I},r,t,;,\piu,\per,ut1,ut2 \colon \WTrelC \to \WTrelC$ as follows: for all $\mathbb{R}\in \WTrelC$
\begin{itemize}
\item $\basicR(\mathbb{R}) \defeq \basicR$ ($\basicR$ is some element in $\WTrelC$); %
\item $r(\mathbb{R}) \defeq \{(f,f) \mid f\in \Cat{C}[X,Y]\}$; %
\item $t(\mathbb{R}) \defeq \{(f,h) \mid \exists g\in \Cat{C}[X,Y] \text{ such that } (f,g)\in \mathbb{R} \text{ and } (g,h)\in \mathbb{R}\}$; %
\item $;(\mathbb{R}) \defeq \{(f_1;g_1\,,\, f_2;g_2) \mid (f_1,g_1)\in \mathbb{R} \text{ and } (f_2,g_2)\in \mathbb{R} \}$; 
\item $\piu(\mathbb{R}) \defeq \{(f_1 \piu g_1\,,\, f_2 \piu g_2) \mid (f_1,g_1)\in \mathbb{R} \text{ and } (f_2,g_2)\in \mathbb{R} \}$; 
\item $\per(\mathbb{R}) \defeq \{(f_1 \per g_1\,,\, f_2 \per g_2) \mid (f_1,g_1)\in \mathbb{R} \text{ and } (f_2,g_2)\in \mathbb{R} \}$; 
\item $ut1(\mathbb{R}) \defeq \{(  \trace_{S}f \,,\, \trace_{T}g ) \mid \exists r_1,r_2 \text{ such that } (r_2,r_1) \in  \mathbb{R} \text{ and }(\,f ; (r_1 \piu \id{Y}) ,\; (r_2 \piu \id{X}) ; g\,) \in \mathbb{R}\}$;
\item $ut2(\mathbb{R}) \defeq \{(  \trace_{S}f \,,\, \trace_{T}g ) \mid \exists r_1,r_2 \text{ such that } (r_2,r_1) \in  \mathbb{R} \text{ and }(\, (r_1 \piu \id{X}) ; f  ,\;   g ; (r_2 \piu \id{Y})\,) \in \mathbb{R}\}$;
\end{itemize}
and $\mathsf{upc} \colon \WTrelC \to \WTrelC$ as
\begin{equation*}
\mathsf{upc} \defeq ( r \cup t \cup ; \cup \piu \cup \per \cup ut1 \cup ut2)
\end{equation*}
Note that each of the function defined above correspond to a rule in \eqref{eq:uniformprecong}. More precisely, it holds that
\begin{equation}\label{eq:blablabla}
\precongB = (\mathsf{upc} \cup \basicR)^\omega
\end{equation}
where, as expected, $f^\omega$ stands for $\bigcup_n f^n$.

\begin{remark}
It is worth to be precise and explain that in \eqref{eq:uniformprecong} we took $\basicR$ to be a well typed relation over $\CatTrTape$, while in \eqref{eq:blablabla} $\basicR$ is defined for an arbitrary category $\Cat{C}$ with enough structure. Below, we will first illustrate some result at this higher level of generality and then we will focus on $\basicK$ over  $\CatTrTape$.
\end{remark}

\begin{lemma}\label{lemma:Scott}
$\mathsf{upc} \colon \WTrelC \to \WTrelC$ is Scott-continuous.
\end{lemma}
\begin{proof}
One can proceed modularly and prove separetly that $id$, $r$, $t$ $;$, $\piu$, $ \per$ , $ut1$, and $ut2$ are Scott-continuous, to then deduce (from standard modularity results) that $\mathsf{upc}$ is Scott-continuous.
The fact that $id$, $r$, $t$ $;$, $\piu$, $ \per$ are Scott continuous is well known. We illustrate below the proof for $ut1$. The one for $ut2$ is similar.

Monotonicity of $ut1$ is obvious. 
Thus, we only need to prove that \[ut1(\bigcup_n\mathbb{R}_n) \subseteq \bigcup_n ut1 (\mathbb{R}_n) \]
for all directed families $\{\mathbb{R}_n\}_{n\in \N}$ of well typed relations.

Let $(f,g)\in ut1(\bigcup_n\mathbb{R}_n) $. Then there exists $f'$ and $g'$ such that $f=\trace{f'}$ and $g=\trace{f'}$. Moreover, there exist $n,m\in \N$ such that 
\[(r_2,r_1) \in  \mathbb{R}_n \text{ and }(\, (r_1 \piu \id{X}) ; f'  ,\;   g' ; (r_2 \piu \id{Y})\,) \in \mathbb{R}_m\]
Since $\{\mathbb{R}_n\}_{n\in \N}$ is directed, there exists some $o\in \N$ such that $\mathbb{R}_n \subseteq \mathbb{R}_o \supseteq \mathbb{R}_m$. Thus  
\[(r_2,r_1) \in  \mathbb{R}_o \text{ and }(\, (r_1 \piu \id{X}) ; f'  ,\;   g' ; (r_2 \piu \id{Y})\,) \in \mathbb{R}_o\]
and thus, by definition of $ut1$,
\[(f,g)\in ut1(\mathbb{R}_o)\subseteq \bigcup_n ut1(\mathbb{R}_n) \text{.} \]
\end{proof}

Hereafter we call a well typed relation $\mathbb{R}$ a \emph{uniform precongruence} iff $\mathsf{upc}(\mathbb{R}) \subseteq \mathbb{R}$.

\begin{lemma}\label{lemma:uniformpre}
$\precongB$ is a uniform precongruence.
\end{lemma}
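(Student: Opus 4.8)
The plan is to reproduce, in the posetal setting, the one-line fixed-point argument already used for Lemma~\ref{lemma:uniformpreUC}: here the monotone map of interest is $\mathsf{upc}$ (which additionally carries the two posetal uniformity operators $ut1$ and $ut2$) rather than $\mathsf{uc}$, but the logical skeleton is identical. The entire statement reduces to a single application of the Kleene fixed-point theorem on the complete lattice $(\WTrelC,\subseteq)$, whose bottom element is the empty relation $\emptyset$.

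First I would record that $\mathsf{upc}\cup\basicR\colon\WTrelC\to\WTrelC$ is Scott-continuous: $\mathsf{upc}$ is Scott-continuous by Lemma~\ref{lemma:Scott}, the constant map returning $\basicR$ is trivially Scott-continuous, and pointwise unions of Scott-continuous maps are Scott-continuous. The Kleene fixed-point theorem then identifies the least fixed point of $\mathsf{upc}\cup\basicR$ with the $\omega$-iterate $\bigcup_n(\mathsf{upc}\cup\basicR)^n(\emptyset)$, which by \eqref{eq:blablabla} is exactly $\precongB$. In particular $\precongB$ is a fixed point, so $\mathsf{upc}(\precongB)\cup\basicR=\precongB$, and reading off the inclusion $\mathsf{upc}(\precongB)\subseteq\precongB$ is precisely the assertion that $\precongB$ is a uniform precongruence.

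I expect no genuine obstacle, since the only nontrivial ingredient --- Scott-continuity of $\mathsf{upc}$, and in particular of its posetal components $ut1$ and $ut2$ --- has already been discharged in Lemma~\ref{lemma:Scott}. The single point requiring care is purely notational: one must confirm that the symbol $(\mathsf{upc}\cup\basicR)^\omega$ of \eqref{eq:blablabla} denotes $\bigcup_n(\mathsf{upc}\cup\basicR)^n(\emptyset)$ and hence coincides with the least fixed point supplied by the theorem, exactly as $\congB$ was matched with $(\mathsf{uc}\cup\basicR)^\omega$ in the proof of Lemma~\ref{lemma:uniformpreUC}.
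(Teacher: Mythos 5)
Your argument is exactly the paper's: invoke the Scott-continuity of $\mathsf{upc}$ from Lemma~\ref{lemma:Scott}, apply the Kleene fixed-point theorem to $\mathsf{upc}\cup\basicR$ on $(\WTrelC,\subseteq)$, and identify the least fixed point with $(\mathsf{upc}\cup\basicR)^\omega=\precongB$ via \eqref{eq:blablabla}, from which $\mathsf{upc}(\precongB)\subseteq\precongB$ is immediate. Correct and identical in approach.
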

\begin{proof}
By Lemma \ref{lemma:Scott}, one can use the Kleene fixed point theorem to deduce that the least fixed point of $\mathsf{upc} \cup \basicR$ is $\bigcup_n (\mathsf{upc} \cup \basicR)^n$ that by \eqref{eq:blablabla} is 
exactly $\precongB$.
\end{proof}

\begin{lemma}\label{lemma:smallest}
$\precongB$ is the smallest uniform precongruence including $\basicR$. That is, if $\mathbb{R}$ is a uniform precongruence and $\basicR \subseteq \mathbb{R}$, then $\precongB \subseteq \mathbb{R}$.
\end{lemma}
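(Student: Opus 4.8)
The plan is to recognise $\precongB$ as the least fixed point of the monotone operator $F \defeq \mathsf{upc} \cup \basicR$ on the complete lattice $\WTrelC$ and then to invoke the Knaster--Tarski characterisation of least fixed points as least pre-fixed points. First I would record that $F$ sends $\mathbb{R}$ to $\mathsf{upc}(\mathbb{R}) \cup \basicR$ and is monotone: monotonicity of $\mathsf{upc}$ is already contained in Lemma \ref{lemma:Scott} (Scott-continuity implies monotonicity), and union with the fixed set $\basicR$ preserves inclusions, so $F$ is monotone on $(\WTrelC, \subseteq)$. By \eqref{eq:blablabla} together with the Kleene fixed point argument used in the proof of Lemma \ref{lemma:uniformpre}, $\precongB = \bigcup_n F^n(\emptyset)$ is precisely the least fixed point of $F$.

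Next I would identify the pre-fixed points of $F$ with the uniform precongruences containing $\basicR$. Indeed, $F(\mathbb{R}) \subseteq \mathbb{R}$ unfolds to $\mathsf{upc}(\mathbb{R}) \cup \basicR \subseteq \mathbb{R}$, which is equivalent to the conjunction of $\mathsf{upc}(\mathbb{R}) \subseteq \mathbb{R}$ and $\basicR \subseteq \mathbb{R}$. By definition the first conjunct says exactly that $\mathbb{R}$ is a uniform precongruence, and the second that it includes $\basicR$. Hence ``uniform precongruence including $\basicR$'' and ``pre-fixed point of $F$'' are literally the same notion.

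Finally I would apply the Knaster--Tarski fixed point theorem: on a complete lattice, the least fixed point of a monotone map lies below every pre-fixed point (equivalently, equals the infimum of all pre-fixed points). Since $\precongB$ is the least fixed point of $F$ and any uniform precongruence $\mathbb{R}$ with $\basicR \subseteq \mathbb{R}$ is a pre-fixed point of $F$, we conclude $\precongB \subseteq \mathbb{R}$, which is exactly the minimality claim.

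I do not expect a genuine obstacle here, as the argument is an instance of a standard lattice-theoretic fact; the work has already been done in Lemmas \ref{lemma:Scott} and \ref{lemma:uniformpre}. The only points requiring a little care are the bookkeeping that the constant map $\basicR$ is folded into $F$ (so that its pre-fixed points match ``including $\basicR$''), and that the Kleene construction in \eqref{eq:blablabla} genuinely yields the \emph{least} fixed point rather than merely some fixed point, which is guaranteed by the Scott-continuity of $\mathsf{upc}$ from Lemma \ref{lemma:Scott}. Should one prefer to avoid citing the identification of the least fixed point with the least pre-fixed point, the same conclusion can be reached by an elementary induction: assuming $F(\mathbb{R}) \subseteq \mathbb{R}$, one shows $F^n(\emptyset) \subseteq \mathbb{R}$ for all $n$ using monotonicity of $F$ and the base case $\emptyset \subseteq \mathbb{R}$, and then takes the union over $n$ to obtain $\precongB \subseteq \mathbb{R}$.
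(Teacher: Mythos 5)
Your proposal is correct and follows essentially the same route as the paper: both identify $\precongB$ as the least fixed point of $\mathsf{upc} \cup \basicR$ (via Lemmas \ref{lemma:Scott} and \ref{lemma:uniformpre}), observe that pre-fixed points of this operator are exactly the uniform precongruences containing $\basicR$, and conclude by Knaster--Tarski. Your extra remarks on the bookkeeping and the elementary induction alternative are sound but not needed.
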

\begin{proof}
As observed in the proof above $\precongB$ is the least fixed point of $\mathsf{upc} \cup \basicR$. By Knaster-Tarski fixed point theorem, if $\mathbb{R}$ is a well-typed relation such that 
\[\mathsf{upc} \cup \basicR (\mathbb{R}) \subseteq \mathbb{R}\text{,}\]
namely, a uniform precongruence including $\basicR$, then $\precongB \subseteq \mathbb{R}$.
\end{proof}

\begin{proposition}\label{prop:tapeisKleenerig}
$\CatKTape$ is a $\sort$-sesquistrict Kleene rig category.
\end{proposition}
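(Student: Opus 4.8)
The plan is to realise $\CatKTape$ as a quotient of the free sesquistrict ut-fb rig category $\CatTrTape$ (Theorem~\ref{thm:freeut-fb}) and then to check that every piece of structure descends to the quotient, so that the three requirements of Definition~\ref{def:kcrig} together with $\sort$-sesquistrictness are met. First I would record, via Lemma~\ref{lemma:uniformpre}, that $\precongK=\precongR{\basicK}$ generated by the rules in \eqref{eq:uniformprecong} is a uniform precongruence: it is reflexive and transitive, monotone for $;$, $\piu$ and $\per$ by rules $(;)$, $(\piu)$, $(\per)$, and closed under the trace via $(ut\text{-}1)$ and $(ut\text{-}2)$. Consequently $\sim_\basicK={\leq_\basicK}\cap{\geq_\basicK}$ is an equivalence relation that is in fact a congruence for all the operations of $\CatTrTape$ — sequential composition, $\piu$, $\per$, trace, the two families of biproduct (co)monoids, symmetries and distributors — so the quotient $\CatKTape$ is well defined and inherits the rig structure, the biproduct structure on $(\piu,\zero)$ and the uniform trace. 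Each homset $\CatKTape[P,Q]$ becomes a poset under the order induced by $\precongK$ (antisymmetry is exactly the definition of $\sim_\basicK$), and monotonicity of $;$, $\piu$, $\per$ is precisely rules $(;)$, $(\piu)$, $(\per)$; hence $\CatKTape$ is a poset-enriched rig category with finite biproducts and a uniform trace on $(\piu,\zero)$.

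Next I would verify that $(\CatKTape,\piu,\zero)$ is a Kleene bicategory in the sense of Definition~\ref{def:kleenebicategory}. The first four families of pairs collected in $\basicK$ impose exactly the four adjointness inequalities $\id{P\piu P}\precongK\codiag{P};\diag{P}$, $\diag{P};\codiag{P}\precongK\id{P}$, $\id{\zero}\precongK\cobang{P};\bang{P}$ and $\bang{P};\cobang{P}\precongK\id{P}$ of Definition~\ref{def:fbidempotent}, so $(\CatKTape,\piu,\zero)$ is a finite biproduct category with idempotent convolution. The posetal uniformity axioms (AU1) and (AU2) hold because the generation rules $(ut\text{-}1)$ and $(ut\text{-}2)$ are literally the implications \eqref{eq:equivalentuni1} and \eqref{eq:equivalentuni2}, which are equivalent to (AU1) and (AU2) by Lemma~\ref{lemma:equivalentUnif}: concretely, given representatives with $\t_1;(r\piu\id{})\precongK(r\piu\id{});\t_2$, one applies $(ut\text{-}1)$ with $\s_1=\s_2=r$ (using reflexivity $r\precongK r$) to conclude $\trace_{S}\t_1\precongK\trace_{S}\t_2$, and symmetrically for (AU2). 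Finally, the last family of pairs in $\basicK$ yields $\trace_{P}(\codiag{P};\diag{P})\precongK\id{P}$, which is axiom (AT1). All clauses of Definition~\ref{def:kleenebicategory} are thus satisfied, and by Definition~\ref{def:kcrig} $\CatKTape$ is a Kleene rig category.

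It then remains to check $\sort$-sesquistrictness. Since $\CatTrTape$ is $\sort$-sesquistrict (Proposition~\ref{prop:tracedtacesutfb}), the distributor $\dl{A}{Q}{R}$ is the identity morphism for every $A\in\sort$; this is an equation of $\CatTrTape$, hence its two sides are the same arrow and by reflexivity of $\precongK$ they remain identified by $\sim_\basicK$ in the quotient. Therefore $\CatKTape$ is a $\sort$-sesquistrict Kleene rig category.

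The step I expect to require genuine care is the well-definedness of the quotient, namely confirming that $\sim_\basicK$ is compatible with \emph{all} the derived operations and not only with the primitive constructors appearing in rules $(;)$, $(\piu)$, $(\per)$. In particular one must check compatibility with the inductively defined polynomial (co)monoids and traces of Table~\ref{tab:inddefutfb}, which I would reduce to the closure properties of the uniform precongruence $\precongK$ established in the first paragraph (for instance, monotonicity of $\trace_{P}$ for a polynomial $P$ follows by induction on $P$ from the $(ut)$-closure together with vanishing and joining). Everything else is a direct reading-off of the axioms encoded in $\basicK$ against Definitions~\ref{def:fbidempotent}, \ref{def:kleenebicategory} and~\ref{def:kcrig}.
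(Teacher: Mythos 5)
Your proposal is correct and follows essentially the same route as the paper: quotient the free sesquistrict ut-fb rig category $\CatTrTape$ by $\sim_\basicK$, use Lemma~\ref{lemma:uniformpre} (that $\precongK$ is a uniform precongruence) to obtain well-definedness of the quotient together with the posetal uniformity laws via the $(ut)$ rules and Lemma~\ref{lemma:equivalentUnif}, and read off the adjointness inequalities of Definition~\ref{def:fbidempotent} and axiom (AT1) directly from the generating pairs in $\basicK$. The paper's proof performs the same unfolding of operations and ordering on equivalence classes that you sketch, so no further comparison is needed.
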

\begin{proof}
By Propositions \ref{prop:iso} and \ref{prop:tracedtacesutfb}, $\CatTrTape$ is a $\sort$-sesquistrict traced fb rig category. Since $\CatKTape$ is obtained by quotienting $\CatTrTape$, then $\CatKTape$  is a traced fb rig category. By definition of $\basicK$, the axioms in Figure \ref{fig:adjoint-biproducts} hold and thus $\CatKTape$  is a fb category with idempotent convolution. By definition of $\basicK$ also the axioms in Figure \ref{fig:happy-trace} hold. To conclude that it is a Kleene rig category is enough to show the laws in Figure \ref{fig:ineq-uniformity} or, equivalently, the laws in \eqref{eq:equivalentuni1} and \eqref{eq:equivalentuni2}. 

We illustrate the proof for \eqref{eq:equivalentuni1}. The one for \eqref{eq:equivalentuni2} is identical.
Recall that arrows of $\CatKTape$ are equivalence classes of arrows of $\CatTrTape$ w.r.t. $\sim_{\basicK}  \defeq  \leq_{\basicK} \cap \geq_{\basicK}$. All the operations, such as compostion and monoidal products, are defined on equivalence classes in the expected way, e.g. $[f];[g]=[f;g]$. The ordering is the expected one: $[f] \precongK [g]$ iff $f \precongK g$.
We have to prove that 
\[
\text{If }\exists [r_1],[r_2]\colon S \to T\text{ such that } [r_2] \leq_{\basicK} [r_1]\text{ and }[f] ; ([r_1] \piu [\id{Y}]) \leq_{\basicK} ([r_2] \piu [\id{X}]) ;[ g]\text{, then }\trace_{S}[f] \leq_{\basicK} \trace_{T}[g]
\]
which, by definition of the operations, is equivalent to 
\[
\text{If }\exists [r_1],[r_2]\colon S \to T\text{ such that } [r_2] \leq_{\basicK} [r_1]\text{ and }[f ; (r_1 \piu \id{Y})] \leq_{\basicK} [(r_2 \piu \id{X}) ; g]\text{, then } [\trace_{S}f] \leq_{\basicK} [\trace_{T}g]
\]
which, by definition of the ordering is equivalent to
\[
\text{If }\exists r_1,r_2\colon S \to T\text{ such that } r_2 \leq_{\basicK} r_1\text{ and }f ; (r_1 \piu \id{Y}) \leq_{\basicK} (r_2 \piu \id{X}) ; g\text{, then } \trace_{S}f \leq_{\basicK} \trace_{T}g\text{;}
\]
The latter holds, since by Lemma \ref{lemma:uniformpre}, $\precongK$ is a uniform precongruence.

\end{proof}

\begin{lemma}\label{lemma:orderingfunctors}
Let $\Cat{C}$ be a $\sort$-sesquistrict Kleene rig category with ordering $\leq_{\Cat{C}}$. Let $F \colon \CatTrTape \to \Cat{C}$ be a ut-fb rig morphism. For all traced tapes $\t_1,\t_2\colon P \to Q$, if $\t_1 \precongK \t_2$ then
$F\t_1 \leq_{\Cat{C}}F\t_2$.
\end{lemma}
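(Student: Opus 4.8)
The plan is to recognise the statement as an instance of the ``smallest uniform precongruence'' characterisation from Lemma~\ref{lemma:smallest}. I would introduce the well typed relation on $\CatTrTape$
\[ \mathbb{R} \defeq \{(\t_1, \t_2) \mid F\t_1 \leq_{\Cat{C}} F\t_2 \}, \]
which is well typed because $\t_1, \t_2 \colon P \to Q$ implies $F\t_1, F\t_2 \colon FP \to FQ$, so the two images lie in the same homset of $\Cat{C}$, ordered by $\leq_{\Cat{C}}$. Since $\precongK = \precongR{\basicK}$ is, by Lemma~\ref{lemma:smallest}, the smallest uniform precongruence containing $\basicK$, it suffices to show \emph{(i)} $\basicK \subseteq \mathbb{R}$ and \emph{(ii)} that $\mathbb{R}$ is a uniform precongruence, i.e.\ $\mathsf{upc}(\mathbb{R}) \subseteq \mathbb{R}$. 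Then $\precongK \subseteq \mathbb{R}$, which is exactly the claim.

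For \emph{(i)} I would check the five families of generators of $\basicK$ one by one. Because $F$ is a ut-fb rig morphism it preserves $\piu$, identities, traces, and both the $\piu$-monoids and $\piu$-comonoids; hence each generator is transported to the corresponding structural (in)equality of a Kleene bicategory in $\Cat{C}$. For example, $(\id{P\piu P}, \codiag{P};\diag{P})$ is sent to $(\id{FP\piu FP}, \codiag{FP};\diag{FP})$, and $\id{FP\piu FP} \leq_{\Cat{C}} \codiag{FP};\diag{FP}$ holds by the adjointness axioms of Definition~\ref{def:fbidempotent}; the three remaining adjointness pairs are analogous. The generator $(\trace_{P}(\codiag{P};\diag{P}), \id{P})$ is sent to $(\trace_{FP}(\codiag{FP};\diag{FP}), \id{FP})$, and the required inequality is precisely axiom~(AT1) of Figure~\ref{fig:happy-trace}, valid since $(\Cat{C}, \piu, \zero)$ is a Kleene bicategory.

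For \emph{(ii)} I would verify $g(\mathbb{R}) \subseteq \mathbb{R}$ for each constituent of $\mathsf{upc} = r \cup t \cup \mathord; \cup \piu \cup \per \cup ut1 \cup ut2$. The cases $r$ and $t$ are just reflexivity and transitivity of $\leq_{\Cat{C}}$. For $\mathord;$, $\piu$ and $\per$ I would combine functoriality of $F$ (turning $F(\t_1;\s_1)$, $F(\t_1\piu\s_1)$, $F(\t_1\per\s_1)$ into the matching composites of the images) with monotonicity of composition, $\piu$ and $\per$ in the poset enriched rig category $\Cat{C}$ (Definition~\ref{def:kcrig}). The only substantial cases are $ut1$ and $ut2$: a pair $(\trace_S\t, \trace_T\s) \in ut1(\mathbb{R})$ comes with $r_1,r_2$ satisfying $Fr_2 \leq_{\Cat{C}} Fr_1$ and $F(\t;(r_1\piu\id{})) \leq_{\Cat{C}} F((r_2\piu\id{});\s)$; preservation of $\per$, $\piu$, identities and traces rewrites these as the two hypotheses of the posetal uniformity law~\eqref{eq:equivalentuni1} in $\Cat{C}$, whose conclusion is $\trace_{FS}(F\t) \leq_{\Cat{C}} \trace_{FT}(F\s)$, i.e.\ $F(\trace_S\t) \leq_{\Cat{C}} F(\trace_T\s)$. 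The case $ut2$ is identical via~\eqref{eq:equivalentuni2}.

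The main obstacle is exactly these last two cases: one must check that, after transporting everything along $F$, the inference rules $ut1$ and $ut2$ line up with the reformulated axioms (AU1') and (AU2'), which are available in $\Cat{C}$ by Lemma~\ref{lemma:equivalentUnif} since $(\Cat{C}, \piu, \zero)$ is a Kleene bicategory. This is where preservation of the trace by the ut-fb rig morphism $F$ is essential; the remaining verifications are routine bookkeeping confirming that the $F$-image of each rule is an admissible manipulation of $\leq_{\Cat{C}}$.
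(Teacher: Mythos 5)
Your proposal is correct and follows essentially the same route as the paper's own proof: both define the relation $\{(\t_1,\t_2)\mid F\t_1\leq_{\Cat{C}}F\t_2\}$, verify that it is a uniform precongruence containing $\basicK$ (with the $ut1$/$ut2$ cases handled via the posetal uniformity laws \eqref{eq:equivalentuni1} and \eqref{eq:equivalentuni2} in $\Cat{C}$), and conclude by the minimality of $\precongK$ from Lemma~\ref{lemma:smallest}.
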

\begin{proof}
Define $\leq'$ on $\CatTrTape$ as $\t_1 \leq' \t_2$ iff $F\t_1 \leq_{\Cat{C}}F\t_2$.

We first prove that $\leq'$ is a uniform precongruence, namely that $\mathsf{upc}(\leq') \subseteq \leq'$. 
Since $\leq_{\Cat{C}}$ is a poset, then $r(\leq') \subseteq \leq'$ and $t(\leq') \subseteq \leq'$.
For the monotone maps $;$, $\piu$ and $\per$ one uses the fact that $F$ is a morphism
and that $\Cat{C}$ is poset enriched. For instance to prove $;(\leq') \subseteq \leq'$,
\begin{align*}
f_1 \leq' f_2 \text{ and } g_1 \leq' g_2 & \Longleftrightarrow Ff_1 \leq_{\Cat{C}} Ff_2 \text{ and } Fg_1 \leq_{\Cat{C}} Fg_2 \tag{def} \\
&  \Longrightarrow Ff_1; Fg_1 \leq_{\Cat{C}} Ff_2; Fg_2 \tag{$\Cat{C}$ is poset enrichmed}\\
& \Longrightarrow F (f_1; g_1) \leq_{\Cat{C}} F(f_2; g_2) \tag{$F$ functor}\\
& \Longleftrightarrow f_1; g_1 \leq' f_2; g_2 \tag{def} \\
\end{align*}
We illustrate below the proof for $ut1(\leq') \subseteq \leq'$. The one for $ut2$ is similar.
\begin{align*}
 & \exists r_1,r_2\colon S \to T\text{  such that  }r_2 \leq' r_1\text{  and }f ; (r_1 \piu \id{Y}) \leq' (r_2 \piu \id{X}) ; g \\ 
 \Longrightarrow & Fr_2  \leq_{\Cat{C}} Fr_1\text{  and }F(\,f ; (r_1 \piu \id{Y})\,)  \leq_{\Cat{C}} F(\,(r_2 \piu \id{X}) ; g) \tag{def}\\
 \Longleftrightarrow & Fr_2  \leq_{\Cat{C}} Fr_1\text{  and }Ff ; (Fr_1 \piu F\id{Y})  \leq_{\Cat{C}} (Fr_2 \piu F\id{X}) ; Fg \tag{$F$ functor}\\
  \Longrightarrow & \trace_{S}Ff \leq_{\Cat{C}} \trace_{T}Fg \tag{$\Cat{C}$ is a Kleene rig category and \eqref{eq:equivalentuni1}} \\
    \Longleftrightarrow & F(\trace_{S}f) \leq_{\Cat{C}} F(\trace_{T}g) \tag{$F$ preserves traces} \\
    \Longleftrightarrow & \trace_{S}f \leq' \trace_{T}g \tag{def} 
\end{align*}
Next, we observe that 
\begin{align*}
 \basicK\defeq & \{(\id{P\piu P}  \, , \, \codiag{P};\diag{P} \mid P \in Ob(\CatTrTape) \} \cup \{( \diag{P}; \codiag{P} \,,\, \id{P} ) \mid P \in Ob(\CatTrTape)\} \cup \\
 & \{(\id{\zero}  \, , \, \cobang{P};\bang{P} \mid P \in Ob(\CatTrTape) \} \cup \{( \bang{P}; \cobang{P} \,,\, \id{P} ) \mid P \in Ob(\CatTrTape)\} \cup \\
 & \{(\trace_{P}(\codiag{P};\diag{P}) \,,\, \id{P})\mid P \in Ob(\CatTrTape)\}
\end{align*}
is included into $\leq'$. The proof proceeds by cases and again it relies on the fact that $\Cat{C}$ has the structure of a Kleene bicategory and that $F$ preserves such structure. For instance, to prove that 
 $\{(\id{P\piu P}  \, , \, \codiag{P};\diag{P} \mid P \in Ob(\CatTrTape) \} \subseteq \leq' $, one shows that
\begin{align*}
\id{P\piu P} \leq' \codiag{P};\diag{P} & \Longleftrightarrow F\id{P\piu P} \leq_{\Cat{C}} F(\codiag{P};\diag{P}) \tag{def}\\
& \Longleftrightarrow \id{FP\piu FP} \leq_{\Cat{C}} \codiag{FP};\diag{FP} \tag{$F$ morphism fb-categories} 
\end{align*}
and conclude by observing that the latter holds since $\Cat{C}$ is a Kleene bicategory.

\medskip

Now, since $\leq'$ is a uniform precongruence and since $\basicK\subseteq \leq'$ then, by Lemma \ref{lemma:smallest}, one has that  $\precongK\subseteq \leq'$. This means that if 
$\t_1 \precongK \t_2$ then $F\t_1 \leq_{\Cat{C}}F\t_2$.
\end{proof}

Now, the proof of Theorem \ref{thm:Kleenetapesfree}  amounts to properly use the above result and Theorem \ref{thm:freeut-fb}.

\begin{proof}[Proof of Theorem \ref{thm:Kleenetapesfree}]
Recall that by Theorem \ref{thm:freeut-fb}, $\sort \to \CatTrTape$ is a free $\sort$-sesquistrict ut-fb rig category generated by $(\sort, \sign)$. This means that (Definition \ref{def:freesesqui}) there exists an interpretation $(\alpha_\sort, \alpha_\sign)$ with $\alpha_\sort\colon \sort \to \sort$ and $\alpha_\sign\colon \sort \to Ar(\CatTrTape)$ such that for any $\sort$-sesquistrict ut-fb rig category $\Cat{S} \to \Cat{C}$ and any interpretation $(\alpha_\sort', \alpha_\sign')$ with $\alpha_\sort'\colon \sort \to Ob(\Cat{S})$ and $\alpha_\sign'\colon \sort \to Ar(\Cat{C})$, there exists a unique sesquistrict rig functor $(\alpha, \beta)$ with $\alpha \colon \sort \to \Cat{S}$ and $\beta\colon \CatTrTape \to \Cat{C} $ such that $\alpha_\sort ;  \alpha = \alpha_\sort '$ and  $\alpha_\sign ;  \beta = \alpha_\sign '$.

We need to show that the same property hold for $\sort \to \CatKTape$ when replacing ut-fb rig category by Kleene rig category.

First, observe that there is a ut-fb morphism \[\eta \colon \CatTrTape \to \CatKTape\] that is the identity on object, i.e., $\eta(P)\defeq P$, and maps an arrows $\t\colon P \to Q$ into the $\sim$-equivalence classes $[\t]\colon P \to Q$.
We can thus fix as interpretation $(\tilde{\alpha_\sort}, \tilde{\alpha_\sign})$ as (a) $\tilde{\alpha_\sort} \defeq \alpha_\sort$ and (b) $\tilde{\alpha_\sign}\defeq \alpha_\sign ; \eta$.

\medskip

Now take $\Cat{S} \to \Cat{C}$ to be any $\sort$-sesquistrict Kleene rig category with an interpretation $(\alpha_\sort', \alpha_\sign')$. Since it is a Kleene rig category, it is in particular a ut-fb rig category and thus, by the freeness of $\CatTrTape$ there exists a unique sesquistrict ut-fb rig functor $(\alpha, \beta)$ with \[\alpha \colon \sort \to \Cat{S} \text{ and }\beta\colon \CatTrTape \to \Cat{C} \] such that  (c) $\alpha_\sort ;  \alpha = \alpha_\sort '$ and  (d) $\alpha_\sign ;  \beta = \alpha_\sign '$.

Now define $\tilde{\beta} \colon \CatKTape \to \Cat{C}$ as 
\[\tilde{\beta}(P)\defeq \beta(P) \text{ and } \tilde{\beta}(\,[f]\,) \defeq \beta(f) \]
for all objects $P$ and arrows $[f]$ of $\CatKTape$. Observe that this is well defined thanks to Lemma \ref{lemma:orderingfunctors}: if $f\sim_\basicK g$, namely $f \precongK g$ and $g \precongK f$, then $\beta(f)=\beta(g)$.
Moreover $\tilde{\beta}$ preserves the ordering $\precongK$ of $\CatKTape$, again thanks to Lemma \ref{lemma:orderingfunctors}. Thus  $\tilde{\beta}$ is a Kleene rig morphism.

Observe that, by definition, (e) $\eta; \tilde{\beta} = \beta$. Thus, $\tilde{\alpha_\sign} ;  \tilde{\beta} \stackrel{(b)}{=} \alpha_\sign ; \eta ;  \tilde{\beta} \stackrel{(e)}{=} \alpha_\sign ; \beta \stackrel{(d)}{=} \alpha_\sign '$.

Next define $\tilde{\alpha}\colon \sort \to \Cat{S}$ as $\alpha$. Thus $\tilde{\alpha_\sort} ; \tilde{\alpha}  \stackrel{(a)}{=} \alpha_\sort ; \alpha  \stackrel{(c)}{=}  \alpha_\sort '$.

Finally, the fact that $(\tilde{\alpha},\tilde{\beta})$ is a $\sort$-sesquistricty Kleene rig morphism from $\sort \to \CatTrTape$ to $\sort \to \CatKTape$ follows from the fact that t $(\alpha,\beta)$ is a $\sort$-sesquistrict ut-fb rig morphism.
\end{proof}

\section{Appendix to Section~\ref{sec:cb}}\label{app:kcb}

\begin{proof}[Proof of \Cref{prop:kc rig laws}]
     The laws in the top-left group can be seen to hold via the completeness theorem for fb-cb rig categories in~\cite{bonchi2023deconstructing}. 
     
     The first two laws in the top-right group hold in any cartesian bicategory. The remaining laws are proved below. When it is convenient we use string diagrams to depict the $\piu$ monoidal structure of the rig category.

	$[ \kstar{(f \sqcap g)} \leq \kstar{f} \sqcap \kstar{g} ]$. The following holds for all $f,g \colon X \to Y$:
	\begin{align*}
		\kstar{f} \sqcap \kstar{g} &= \;\; \copier{X} ; (\kstar{f} \per \kstar{g}) ; \cocopier{X} \tag{\ref{eq:cb:covolution}} \\
		&\geq \;\; \copier{X} ; \kstar{(f \per g)} ; \cocopier{X} \tag{\Cref{prop:star-per}} \\
		&= 
    \begin{tikzpicture}
	\begin{pgfonlayer}{nodelayer}
		\node [style=label] (105) at (-3.75, -1.225) {$X$};
		\node [style=black] (107) at (0.5, -0.25) {};
		\node [style=none] (108) at (-0.75, -1.225) {};
		\node [style=none] (109) at (-0.75, 0.725) {};
		\node [style=none] (117) at (-3.25, -1.225) {};
		\node [style=none] (118) at (-2.75, 0.725) {};
		\node [style=label] (120) at (5.25, -1.225) {$X$};
		\node [style=black] (121) at (1.5, -0.25) {};
		\node [style=none] (122) at (2.75, -1.225) {};
		\node [style=none] (123) at (2.75, 0.725) {};
		\node [style=none] (125) at (4.75, -1.225) {};
		\node [style=none] (127) at (-2.75, 2.225) {};
		\node [style=none] (128) at (2.75, 2.225) {};
		\node [style=stringbox] (129) at (-1.75, 0.725) {$\scriptstyle{f \per g}$};
		\node [style=stringbox] (130) at (3.25, -1.225) {$\scriptstyle \cocopier{X}$};
		\node [style=none] (131) at (3.25, 0.725) {};
		\node [style=none] (132) at (3.25, 0.725) {};
		\node [style=none] (133) at (3.25, 2.225) {};
		\node [style=stringbox] (134) at (-1.75, -1.225) {$\scriptstyle \copier{X}$};
	\end{pgfonlayer}
	\begin{pgfonlayer}{edgelayer}
		\draw [bend left] (109.center) to (107);
		\draw [bend left] (107) to (108.center);
		\draw [bend right] (123.center) to (121);
		\draw [bend right] (121) to (122.center);
		\draw (107) to (121);
		\draw (108.center) to (117.center);
		\draw [bend right=90, looseness=1.75] (127.center) to (118.center);
		\draw (127.center) to (128.center);
		\draw (118.center) to (129);
		\draw (129) to (109.center);
		\draw (125.center) to (130);
		\draw (130) to (122.center);
		\draw (131.center) to (132.center);
		\draw (128.center) to (133.center);
		\draw [bend right=90, looseness=1.75] (132.center) to (133.center);
		\draw (123.center) to (132.center);
	\end{pgfonlayer}
\end{tikzpicture}
}
 \\
		&\geq 
    \begin{tikzpicture}
	\begin{pgfonlayer}{nodelayer}
		\node [style=label] (105) at (-3.75, -1.225) {$X$};
		\node [style=black] (107) at (0.5, -0.25) {};
		\node [style=none] (108) at (-0.75, -1.225) {};
		\node [style=none] (109) at (-0.75, 0.725) {};
		\node [style=none] (117) at (-3.25, -1.225) {};
		\node [style=none] (118) at (-2.75, 0.725) {};
		\node [style=label] (120) at (7.25, -1.225) {$X$};
		\node [style=black] (121) at (1.5, -0.25) {};
		\node [style=none] (122) at (2.75, -1.225) {};
		\node [style=none] (123) at (2.75, 0.725) {};
		\node [style=none] (125) at (6.75, -1.225) {};
		\node [style=none] (127) at (-2.75, 2.225) {};
		\node [style=none] (128) at (2.75, 2.225) {};
		\node [style=stringbox] (129) at (-1.75, 0.725) {$\scriptstyle{f \per g}$};
		\node [style=stringbox] (130) at (3.25, -1.225) {$\scriptstyle \cocopier{X}$};
		\node [style=none] (131) at (6.25, 0.725) {};
		\node [style=none] (132) at (6.25, 0.725) {};
		\node [style=none] (133) at (6.25, 2.225) {};
		\node [style=stringbox] (134) at (-1.75, -1.225) {$\scriptstyle \copier{X}$};
		\node [style=stringbox] (135) at (3.25, 0.725) {$\scriptstyle \cocopier{X}$};
		\node [style=stringbox] (136) at (5.5, 0.725) {$\scriptstyle \copier{X}$};
	\end{pgfonlayer}
	\begin{pgfonlayer}{edgelayer}
		\draw [bend left] (109.center) to (107);
		\draw [bend left] (107) to (108.center);
		\draw [bend right] (123.center) to (121);
		\draw [bend right] (121) to (122.center);
		\draw (107) to (121);
		\draw (108.center) to (117.center);
		\draw [bend right=90, looseness=1.75] (127.center) to (118.center);
		\draw (127.center) to (128.center);
		\draw (118.center) to (129);
		\draw (129) to (109.center);
		\draw (125.center) to (130);
		\draw (130) to (122.center);
		\draw (131.center) to (132.center);
		\draw (128.center) to (133.center);
		\draw [bend right=90, looseness=1.75] (132.center) to (133.center);
		\draw (123.center) to (132.center);
	\end{pgfonlayer}
\end{tikzpicture}
}
 \tag{\ref{ax:copieradj1}} \\
		&= 
    \begin{tikzpicture}
	\begin{pgfonlayer}{nodelayer}
		\node [style=label] (105) at (-6, -1.225) {$X$};
		\node [style=black] (107) at (0.5, -0.25) {};
		\node [style=none] (108) at (-0.75, -1.225) {};
		\node [style=none] (109) at (-0.75, 0.725) {};
		\node [style=none] (117) at (-5.5, -1.225) {};
		\node [style=none] (118) at (-5, 0.725) {};
		\node [style=label] (120) at (5.5, -1.225) {$X$};
		\node [style=black] (121) at (1.5, -0.25) {};
		\node [style=none] (122) at (2.75, -1.225) {};
		\node [style=none] (123) at (2.75, 0.725) {};
		\node [style=none] (125) at (5, -1.225) {};
		\node [style=none] (127) at (-5, 2.225) {};
		\node [style=none] (128) at (2.75, 2.225) {};
		\node [style=stringbox] (129) at (-1.75, 0.725) {$\scriptstyle{f \per g}$};
		\node [style=stringbox] (130) at (3.25, -1.225) {$\scriptstyle \cocopier{X}$};
		\node [style=none] (131) at (4.5, 0.725) {};
		\node [style=none] (132) at (4.5, 0.725) {};
		\node [style=none] (133) at (4.5, 2.225) {};
		\node [style=stringbox] (134) at (-1.75, -1.225) {$\scriptstyle \copier{X}$};
		\node [style=stringbox] (135) at (3.25, 0.725) {$\scriptstyle \cocopier{X}$};
		\node [style=stringbox] (136) at (-4.25, 0.725) {$\scriptstyle \copier{X}$};
	\end{pgfonlayer}
	\begin{pgfonlayer}{edgelayer}
		\draw [bend left] (109.center) to (107);
		\draw [bend left] (107) to (108.center);
		\draw [bend right] (123.center) to (121);
		\draw [bend right] (121) to (122.center);
		\draw (107) to (121);
		\draw (108.center) to (117.center);
		\draw [bend right=90, looseness=1.75] (127.center) to (118.center);
		\draw (127.center) to (128.center);
		\draw (118.center) to (129);
		\draw (129) to (109.center);
		\draw (125.center) to (130);
		\draw (130) to (122.center);
		\draw (131.center) to (132.center);
		\draw (128.center) to (133.center);
		\draw [bend right=90, looseness=1.75] (132.center) to (133.center);
		\draw (123.center) to (132.center);
	\end{pgfonlayer}
\end{tikzpicture}
}
 \tag{\ref{ax:trace:sliding}} \\
		&= 
    \begin{tikzpicture}
	\begin{pgfonlayer}{nodelayer}
		\node [style=label] (105) at (-6, -1.225) {$X$};
		\node [style=black] (107) at (2.75, -0.25) {};
		\node [style=none] (108) at (1.5, -1.225) {};
		\node [style=none] (109) at (1.5, 0.725) {};
		\node [style=none] (117) at (-5.5, -1.225) {};
		\node [style=none] (118) at (-5, 0.725) {};
		\node [style=label] (120) at (6, -1.225) {$X$};
		\node [style=black] (121) at (3.75, -0.25) {};
		\node [style=none] (122) at (5, -1.225) {};
		\node [style=none] (123) at (5, 0.725) {};
		\node [style=none] (125) at (5.5, -1.225) {};
		\node [style=none] (127) at (-5, 2.225) {};
		\node [style=none] (128) at (5, 2.225) {};
		\node [style=stringbox] (129) at (-1.75, 0.725) {$\scriptstyle{f \per g}$};
		\node [style=none] (131) at (5, 0.725) {};
		\node [style=none] (132) at (5, 0.725) {};
		\node [style=none] (133) at (5, 2.225) {};
		\node [style=stringbox] (134) at (-1.75, -1.225) {$\scriptstyle \copier{X}$};
		\node [style=stringbox] (136) at (-4.25, 0.725) {$\scriptstyle \copier{X}$};
		\node [style=stringbox] (137) at (0.75, -1.225) {$\scriptstyle \cocopier{X}$};
		\node [style=stringbox] (138) at (0.75, 0.725) {$\scriptstyle \cocopier{X}$};
	\end{pgfonlayer}
	\begin{pgfonlayer}{edgelayer}
		\draw [bend left] (109.center) to (107);
		\draw [bend left] (107) to (108.center);
		\draw [bend right] (123.center) to (121);
		\draw [bend right] (121) to (122.center);
		\draw (107) to (121);
		\draw (108.center) to (117.center);
		\draw [bend right=90, looseness=1.75] (127.center) to (118.center);
		\draw (127.center) to (128.center);
		\draw (118.center) to (129);
		\draw (129) to (109.center);
		\draw (131.center) to (132.center);
		\draw (128.center) to (133.center);
		\draw [bend right=90, looseness=1.75] (132.center) to (133.center);
		\draw (123.center) to (132.center);
		\draw (122.center) to (125.center);
	\end{pgfonlayer}
\end{tikzpicture}
}
 \tag{\ref{ax:specfrob}} \\
		&= 
    \begin{tikzpicture}
	\begin{pgfonlayer}{nodelayer}
		\node [style=label] (105) at (-6, -1.225) {$X$};
		\node [style=black] (107) at (2.75, -0.25) {};
		\node [style=none] (108) at (1.5, -1.225) {};
		\node [style=none] (109) at (1.5, 0.725) {};
		\node [style=none] (117) at (-5.5, -1.225) {};
		\node [style=none] (118) at (-5, 0.725) {};
		\node [style=label] (120) at (6, -1.225) {$X$};
		\node [style=black] (121) at (3.75, -0.25) {};
		\node [style=none] (122) at (5, -1.225) {};
		\node [style=none] (123) at (5, 0.725) {};
		\node [style=none] (125) at (5.5, -1.225) {};
		\node [style=none] (127) at (-5, 2.225) {};
		\node [style=none] (128) at (5, 2.225) {};
		\node [style=stringbox] (129) at (-1.75, 0.725) {$\scriptstyle{f \per g}$};
		\node [style=none] (131) at (5, 0.725) {};
		\node [style=none] (132) at (5, 0.725) {};
		\node [style=none] (133) at (5, 2.225) {};
		\node [style=stringbox] (136) at (-4.25, 0.725) {$\scriptstyle \copier{X}$};
		\node [style=stringbox] (138) at (0.75, 0.725) {$\scriptstyle \cocopier{X}$};
	\end{pgfonlayer}
	\begin{pgfonlayer}{edgelayer}
		\draw [bend left] (109.center) to (107);
		\draw [bend left] (107) to (108.center);
		\draw [bend right] (123.center) to (121);
		\draw [bend right] (121) to (122.center);
		\draw (107) to (121);
		\draw (108.center) to (117.center);
		\draw [bend right=90, looseness=1.75] (127.center) to (118.center);
		\draw (127.center) to (128.center);
		\draw (118.center) to (129);
		\draw (129) to (109.center);
		\draw (131.center) to (132.center);
		\draw (128.center) to (133.center);
		\draw [bend right=90, looseness=1.75] (132.center) to (133.center);
		\draw (123.center) to (132.center);
		\draw (122.center) to (125.center);
	\end{pgfonlayer}
\end{tikzpicture}
}
 \tag{\ref{eq:cb:covolution}} \\
		&= \kstar{(f \sqcap g)}.
	\end{align*}

	\bigskip

	$[\kstar{\top} = \top]$. First we prove that $\top ; \kstar{\top} = \top$. The left-to-right inclusion trivially holds since $\top$ is the top element of the meet-semilattice $\Cat{C}[X,X]$. For the other inclusion the following holds:
	\begin{align*}
		\top ; \kstar{\top} &= 
    \begin{tikzpicture}
	\begin{pgfonlayer}{nodelayer}
		\node [style=label] (105) at (-3.5, -1.225) {$X$};
		\node [style=black] (107) at (2.75, -0.25) {};
		\node [style=none] (108) at (1.5, -1.225) {};
		\node [style=none] (109) at (1.5, 0.725) {};
		\node [style=none] (117) at (-3, -1.225) {};
		\node [style=none] (118) at (-2.5, 0.725) {};
		\node [style=label] (120) at (6, -1.225) {$X$};
		\node [style=black] (121) at (3.75, -0.25) {};
		\node [style=none] (122) at (5, -1.225) {};
		\node [style=none] (123) at (5, 0.725) {};
		\node [style=none] (125) at (5.5, -1.225) {};
		\node [style=none] (127) at (-2.5, 2.225) {};
		\node [style=none] (128) at (5, 2.225) {};
		\node [style=none] (131) at (5, 0.725) {};
		\node [style=none] (132) at (5, 0.725) {};
		\node [style=none] (133) at (5, 2.225) {};
		\node [style=stringbox] (134) at (-1.75, -1.225) {$\scriptstyle \discharger{X}$};
		\node [style=stringbox] (136) at (-1.75, 0.725) {$\scriptstyle \discharger{X}$};
		\node [style=stringbox] (137) at (0.75, -1.225) {$\scriptstyle \codischarger{X}$};
		\node [style=stringbox] (138) at (0.75, 0.725) {$\scriptstyle \codischarger{X}$};
	\end{pgfonlayer}
	\begin{pgfonlayer}{edgelayer}
		\draw [bend left] (109.center) to (107);
		\draw [bend left] (107) to (108.center);
		\draw [bend right] (123.center) to (121);
		\draw [bend right] (121) to (122.center);
		\draw (107) to (121);
		\draw (108.center) to (117.center);
		\draw [bend right=90, looseness=1.75] (127.center) to (118.center);
		\draw (127.center) to (128.center);
		\draw (131.center) to (132.center);
		\draw (128.center) to (133.center);
		\draw [bend right=90, looseness=1.75] (132.center) to (133.center);
		\draw (123.center) to (132.center);
		\draw (122.center) to (125.center);
		\draw (118.center) to (109.center);
	\end{pgfonlayer}
\end{tikzpicture}
}
 \tag{\ref{eq:cb:covolution}} \\
		&= 
    \begin{tikzpicture}
	\begin{pgfonlayer}{nodelayer}
		\node [style=label] (105) at (-4.75, -1.225) {$X$};
		\node [style=black] (107) at (-0.5, -0.25) {};
		\node [style=none] (108) at (-1.75, -1.225) {};
		\node [style=none] (109) at (-1.75, 0.725) {};
		\node [style=none] (117) at (-4.25, -1.225) {};
		\node [style=none] (118) at (-3.75, 0.725) {};
		\node [style=label] (120) at (4.75, -1.225) {$X$};
		\node [style=black] (121) at (0.5, -0.25) {};
		\node [style=none] (122) at (1.75, -1.225) {};
		\node [style=none] (123) at (1.75, 0.725) {};
		\node [style=none] (125) at (4.25, -1.225) {};
		\node [style=none] (127) at (-3.75, 2.225) {};
		\node [style=none] (128) at (3.75, 2.225) {};
		\node [style=none] (131) at (1.75, 0.725) {};
		\node [style=none] (132) at (3.75, 0.725) {};
		\node [style=none] (133) at (3.75, 2.225) {};
		\node [style=stringbox] (134) at (-3, -1.225) {$\scriptstyle \discharger{X}$};
		\node [style=stringbox] (136) at (-3, 0.725) {$\scriptstyle \discharger{X}$};
		\node [style=stringbox] (137) at (3, -1.225) {$\scriptstyle \codischarger{X}$};
		\node [style=stringbox] (138) at (3, 0.725) {$\scriptstyle \codischarger{X}$};
	\end{pgfonlayer}
	\begin{pgfonlayer}{edgelayer}
		\draw [bend left] (109.center) to (107);
		\draw [bend left] (107) to (108.center);
		\draw [bend right] (123.center) to (121);
		\draw [bend right] (121) to (122.center);
		\draw (107) to (121);
		\draw (108.center) to (117.center);
		\draw [bend right=90, looseness=1.75] (127.center) to (118.center);
		\draw (127.center) to (128.center);
		\draw (131.center) to (132.center);
		\draw (128.center) to (133.center);
		\draw [bend right=90, looseness=1.75] (132.center) to (133.center);
		\draw (123.center) to (132.center);
		\draw (122.center) to (125.center);
		\draw (118.center) to (109.center);
	\end{pgfonlayer}
\end{tikzpicture}
}
 \tag{\eqref{ax:comonoid:nat:copy}, \eqref{ax:monoid:nat:copy}} \\
		&= 
    \begin{tikzpicture}
	\begin{pgfonlayer}{nodelayer}
		\node [style=label] (105) at (-7.25, -1.225) {$X$};
		\node [style=black] (107) at (-0.5, -0.25) {};
		\node [style=none] (108) at (-1.75, -1.225) {};
		\node [style=none] (109) at (-1.75, 0.725) {};
		\node [style=none] (117) at (-6.75, -1.225) {};
		\node [style=none] (118) at (-6.25, 0.725) {};
		\node [style=label] (120) at (4.75, -1.225) {$X$};
		\node [style=black] (121) at (0.5, -0.25) {};
		\node [style=none] (122) at (1.75, -1.225) {};
		\node [style=none] (123) at (1.75, 0.725) {};
		\node [style=none] (125) at (4.25, -1.225) {};
		\node [style=none] (127) at (-6.25, 2.225) {};
		\node [style=none] (128) at (3.75, 2.225) {};
		\node [style=none] (131) at (1.75, 0.725) {};
		\node [style=none] (132) at (3.75, 0.725) {};
		\node [style=none] (133) at (3.75, 2.225) {};
		\node [style=stringbox] (134) at (-3, -1.225) {$\scriptstyle \discharger{X}$};
		\node [style=stringbox] (136) at (-3, 0.725) {$\scriptstyle \discharger{X}$};
		\node [style=stringbox] (137) at (3, -1.225) {$\scriptstyle \codischarger{X}$};
		\node [style=stringbox] (138) at (-5, 0.725) {$\scriptstyle \codischarger{X}$};
	\end{pgfonlayer}
	\begin{pgfonlayer}{edgelayer}
		\draw [bend left] (109.center) to (107);
		\draw [bend left] (107) to (108.center);
		\draw [bend right] (123.center) to (121);
		\draw [bend right] (121) to (122.center);
		\draw (107) to (121);
		\draw (108.center) to (117.center);
		\draw [bend right=90, looseness=1.75] (127.center) to (118.center);
		\draw (127.center) to (128.center);
		\draw (131.center) to (132.center);
		\draw (128.center) to (133.center);
		\draw [bend right=90, looseness=1.75] (132.center) to (133.center);
		\draw (123.center) to (132.center);
		\draw (122.center) to (125.center);
		\draw (118.center) to (109.center);
	\end{pgfonlayer}
\end{tikzpicture}
}
 \tag{\ref{ax:trace:sliding}} \\
		&\geq 
    \begin{tikzpicture}
	\begin{pgfonlayer}{nodelayer}
		\node [style=label] (105) at (-4.75, -1.225) {$X$};
		\node [style=black] (107) at (-0.5, -0.25) {};
		\node [style=none] (108) at (-1.75, -1.225) {};
		\node [style=none] (109) at (-1.75, 0.725) {};
		\node [style=none] (117) at (-4.25, -1.225) {};
		\node [style=none] (118) at (-2, 0.725) {};
		\node [style=label] (120) at (4.75, -1.225) {$X$};
		\node [style=black] (121) at (0.5, -0.25) {};
		\node [style=none] (122) at (1.75, -1.225) {};
		\node [style=none] (123) at (1.75, 0.725) {};
		\node [style=none] (125) at (4.25, -1.225) {};
		\node [style=none] (127) at (-2, 2.225) {};
		\node [style=none] (128) at (2, 2.225) {};
		\node [style=none] (131) at (1.75, 0.725) {};
		\node [style=none] (132) at (2, 0.725) {};
		\node [style=none] (133) at (2, 2.225) {};
		\node [style=stringbox] (134) at (-3, -1.225) {$\scriptstyle \discharger{X}$};
		\node [style=stringbox] (137) at (3, -1.225) {$\scriptstyle \codischarger{X}$};
	\end{pgfonlayer}
	\begin{pgfonlayer}{edgelayer}
		\draw [bend left] (109.center) to (107);
		\draw [bend left] (107) to (108.center);
		\draw [bend right] (123.center) to (121);
		\draw [bend right] (121) to (122.center);
		\draw (107) to (121);
		\draw (108.center) to (117.center);
		\draw [bend right=90, looseness=1.75] (127.center) to (118.center);
		\draw (127.center) to (128.center);
		\draw (131.center) to (132.center);
		\draw (128.center) to (133.center);
		\draw [bend right=90, looseness=1.75] (132.center) to (133.center);
		\draw (123.center) to (132.center);
		\draw (122.center) to (125.center);
		\draw (118.center) to (109.center);
	\end{pgfonlayer}
\end{tikzpicture}
}
 \tag{\ref{ax:dischargeradj1}} \\
		&= 
    \begin{tikzpicture}
	\begin{pgfonlayer}{nodelayer}
		\node [style=label] (105) at (-2.75, 0.025) {$X$};
		\node [style=none] (108) at (0, 0.025) {};
		\node [style=none] (117) at (-2.25, 0.025) {};
		\node [style=label] (120) at (2.75, 0.025) {$X$};
		\node [style=none] (122) at (0, 0.025) {};
		\node [style=none] (125) at (2.25, 0.025) {};
		\node [style=stringbox] (134) at (-1, 0.025) {$\scriptstyle \discharger{X}$};
		\node [style=stringbox] (137) at (1, 0.025) {$\scriptstyle \codischarger{X}$};
	\end{pgfonlayer}
	\begin{pgfonlayer}{edgelayer}
		\draw (108.center) to (117.center);
		\draw (122.center) to (125.center);
	\end{pgfonlayer}
\end{tikzpicture}
}
 \tag{\ref{ax:kb:traceid}} \\
		&= \top. \tag{\ref{eq:cb:covolution}}
	\end{align*}
	To conclude, observe that:
	\[ \kstar{\top} \stackrel{\text{(\Cref{prop:stef})}}{=} \id{X} + \top ; \kstar{\top} = \id{X} + \top = \top. \]

	\bigskip

	$[\op{(f + g)} = \op{f} + \op{g}]$. The following hold for all $f,g \colon X \to Y$:
	\begin{align*}
		\op{(f + g)} &= \op{(\diag{X} ; (f \piu g) ; \codiag{Y})} \tag{\ref{eq:covolution}} \\
				   &= \op{\codiag{Y}} ; \op{(f \piu g)} ; \op{\diag{X}} \tag{\Cref{table:re:daggerproperties}} \\
				   &= \op{\codiag{Y}} ; (\op{f} \piu \op{g}) ; \op{\diag{X}} \tag{\Cref{table:kc rig derived laws}} \\
				   &= \diag{Y} ; (\op{f} \piu \op{g}) ; \codiag{X} \tag{\Cref{table:kc rig derived laws}} \\
				   &= \op{f} + \op{g}. \tag{\ref{eq:covolution}}
	\end{align*}

	\bigskip

	$[\op{\bot} = \bot]$. \begin{align*}
		\op{\bot} \stackrel{\eqref{eq:covolution}}{=} \op{(\bang{X}; \cobang{Y})} \stackrel{\text{(\Cref{table:re:daggerproperties})}}{=} \op{\cobang{Y}} ; \op{\bang{X}} \stackrel{\text{(\Cref{table:kc rig derived laws})}}{=} \bang{Y} ; \cobang{X} \stackrel{\eqref{eq:covolution}}{=} \bot.
	\end{align*}

	\bigskip

	$[\kstar{f} + \kstar{g} \leq \kstar{(f + g)}]$. The following holds for all $f,g \colon X \to X$: 
	\begin{align*}
		\kstar{f} + \kstar{g} &= 
    % [inline block 5: 8 envs, 21313 chars -> data_tex | \begin{tikzpicture} 	\begin{pgfonlayer}{nodelayer}...]

}
 \tag{\eqref{ax:monoid:unit}, \eqref{ax:comonoid:unit}} \\
		&= \;\; \id{X}.
	\end{align*}

	\bigskip

	$[\kstar{(\op{f})} = \op{(\kstar{f})}]$. First, note that the following law is equivalent to the first law in~\eqref{eq:stroingstar} (see e.g. \cite{Kozen94acompleteness}):
	\begin{equation}\label{eq:kozen-equiv-axiom}
		g + f;r \leq r \implies \kstar{f};g \leq r.
	\end{equation}
	Then observe that the following holds for all $f \colon X \to X$:
	\[ \id{X} + \op{f} ; \op{(\kstar{f})} \stackrel{(\text{\Cref{table:re:daggerproperties}})}{=} \id{X} + \op{(\kstar{f} ; f)} = \op{(\id{X} + \kstar{f} ; f)} \stackrel{(\text{\Cref{prop:star-fixpoint}})}{=} \op{(\kstar{f})}. \]
	Thus, by \eqref{eq:kozen-equiv-axiom} the inequality below holds:
	\begin{equation}\label{eq:star-dagger-inclusion}
		\kstar{(\op{f})} = \kstar{(\op{f})} ; \id{X} \leq \op{(\kstar{f})}.
	\end{equation}
	For the other inclusion we exploit~\eqref{eq:star-dagger-inclusion} and the fact that $\op{(\cdot)}$ is involutive:
	\[ \op{(\kstar{f})} \stackrel{(\text{\Cref{table:re:daggerproperties}})}{=} \op{(\kstar{(f^{\dag \dag})})} \stackrel{\eqref{eq:star-dagger-inclusion}}{\leq} (\kstar{(\op{f})})^{\dag \dag} \stackrel{(\text{\Cref{table:re:daggerproperties}})}{=} \kstar{(\op{f})}. \]

	Finally, we prove the laws of distributive lattices at the bottom of the table.

	$[f \sqcap (g + h) = (\, f \sqcap g \, ) + (\, f \sqcap h  \,)]$. The following holds for all $f,g,h \colon X \to Y$:
	\begin{align*}
		f \sqcap (g + h) &= \copier{X} ; (\, f \otimes (g + h) \,) ; \cocopier{Y}  \tag{\ref{eq:cb:covolution}} \\
										   &= \copier{X} ; (\, (\,(f \otimes g) + (f \otimes h)\,) \,) ; \cocopier{Y}  \tag{\Cref{lemma:ditributivityper}} \\
										   &=  (\, (\, \copier{X} ; (f \otimes g) \, ) + (\,\copier{X} ; (f \otimes h)\,) \,) ; \cocopier{Y} \tag{\ref{eq:cmon enrichment}} \\
										   &=  (\, \copier{X} ; (f \otimes g) ; \cocopier{Y} \, ) + (\,\copier{X} ; (f \otimes h) ; \cocopier{Y}\,) \tag{\ref{eq:cmon enrichment}} \\
										   &=  (\, f \sqcap g \, ) + (\, f \sqcap h  \,).  \tag{\ref{eq:cb:covolution}}
	\end{align*}

	\bigskip
				
	$[f \sqcap \bot = \bot]$. The following holds for all $f \colon X \to Y$:
	\begin{align*}
		f \sqcap \bot &\leq \top \sqcap \bot \\
					  &= \bot.
	\end{align*}
	As usual, the other inclusion holds since $\bot$ is the bottom element.

	\bigskip
				
	$[f + (g \sqcap h) = (\, f + g \, ) \sqcap (\, f + h  \,)]$ and $[f + \top = \top]$. These two equations hold in every lattice satisfying the dual equations proved above (see e.g. \cite{birkhoff1940lattice}).
\end{proof}

\subsection{Proofs for Theorem \ref{thm:KleeneCartesiantapesfree}}

\begin{lemma}\label{lemma:orderingkc}
Let $\Cat{C}$ be a $\sort$-sesquistrict kc rig category with ordering $\leq_{\Cat{C}}$. Let $F \colon \CatKTapeC \to \Cat{C}$ be a morphism of Kleene rig category such that
 \begin{equation}\label{eq:monoidF}
F (\tape{\cocopier{P}})  =   \cocopier{F(P)} \qquad
F (\tape{\codischarger{P}})  =   \codischarger{F(P)} \qquad
F (\tape{\copier{P}})  =   \copier{{F(P)}} \qquad
F (\tape{\discharger{P}})  =   \discharger{{F(P)}} 
\end{equation}
for all $P\in Ob(\CatKTapeC)$.
For all Kleene tapes $\t_1,\t_2\colon P \to Q$, if $\t_1 \precongKC \t_2$ then
$F\t_1 \leq_{\Cat{C}}F\t_2$.
\end{lemma}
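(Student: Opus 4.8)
The plan is to mirror closely the proof of \Cref{lemma:orderingfunctors}, replacing the ambient uniform precongruence machinery for $\CatTrTape$ by the one for $\CatKTapeC$ and adjoining the cartesian-bicategory axioms. Concretely, I would define a well typed relation $\leq'$ on $\CatKTapeC$ by setting $\t_1 \leq' \t_2$ if and only if $F\t_1 \leq_{\Cat{C}} F\t_2$, and then show two things: that $\leq'$ is a uniform precongruence (i.e.\ $\mathsf{upc}(\leq') \subseteq \leq'$, with $\mathsf{upc}$ now including the rule for $\per$) and that $\basicKC = \basicCB \cup \basicK \subseteq \leq'$. Once both are established, \Cref{lemma:smallest} (which identifies $\precongKC$ as the smallest uniform precongruence containing $\basicKC$) immediately yields $\precongKC \subseteq \leq'$, which is exactly the claim $\t_1 \precongKC \t_2 \implies F\t_1 \leq_{\Cat{C}} F\t_2$.

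The verification that $\leq'$ is a uniform precongruence is essentially identical to the corresponding step in \Cref{lemma:orderingfunctors}. Reflexivity and transitivity follow because $\leq_{\Cat{C}}$ is a poset; closure under the monotone maps $;$, $\piu$ and $\per$ follows from $F$ being a functor that preserves these operations together with the poset-enrichment of $\Cat{C}$; and closure under the two trace rules $ut1$ and $ut2$ follows because $F$ preserves traces and $(\Cat{C},\piu,\zero)$ is a Kleene bicategory, so it satisfies the equivalent posetal uniformity laws \eqref{eq:equivalentuni1} and \eqref{eq:equivalentuni2}. I would carry these out by the same implication-chain computations as before, substituting $\leq_{\Cat{C}}$ for the generic order.

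The genuinely new part is showing $\basicKC \subseteq \leq'$, which splits into $\basicK \subseteq \leq'$ and $\basicCB \subseteq \leq'$. The inclusion $\basicK \subseteq \leq'$ is handled exactly as in \Cref{lemma:orderingfunctors}: each defining pair encodes an axiom of Kleene bicategories (adjointness of the $\piu$-(co)monoids and $\trace_{P}(\codiag{P};\diag{P}) \leq \id{P}$), these hold in $\Cat{C}$ since $(\Cat{C},\piu,\zero)$ is a Kleene bicategory, and $F$ preserves the $\piu$-(co)monoids and the trace because it is a morphism of Kleene rig categories. For $\basicCB$, I would argue analogously: the pairs are the cartesian-bicategory axioms of \Cref{fig:cb axioms}, which hold in $\Cat{C}$ because $(\Cat{C},\per,\uno)$ is a cartesian bicategory (Definition \ref{def:kcbrig}). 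The crucial point is that $F$ must send the cartesian generators $\copier{},\discharger{},\cocopier{},\codischarger{}$ to their $\Cat{C}$-counterparts — this is not automatic from $F$ being merely a Kleene rig morphism, and it is exactly what the hypothesis \eqref{eq:monoidF} supplies. Since the axioms in \Cref{fig:cb axioms} are phrased at the level of basic sorts, \eqref{eq:monoidF} (instantiated at the relevant polynomials, in particular on sorts $A$) guarantees that $F$ maps each left-hand tape to the left-hand side of the corresponding $\Cat{C}$-inequality and likewise for the right-hand sides, so each pair lands in $\leq'$.

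The main obstacle I anticipate is precisely the $\basicCB$ step: one must ensure that $F$ respects the cartesian structure not only on generators but coherently through the inductive definitions of $\copier{P},\discharger{P},\cocopier{P},\codischarger{P}$ for arbitrary polynomials (Table \ref{table:defmonoidper}) and through the coherence conditions \eqref{eq:fbcbcoherence}. The hypothesis \eqref{eq:monoidF} is stated for all $P\in Ob(\CatKTapeC)$ exactly so that this compatibility is available uniformly, and I would use it directly rather than re-deriving it by induction; the remaining bookkeeping (that $F$ commutes with $\tape{\cdot}$ and with $\per$ on the circuit layer) is routine and already guaranteed by $F$ being a rig morphism.
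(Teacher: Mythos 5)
Your proposal matches the paper's own proof essentially step for step: define $\leq'$ by $\t_1 \leq' \t_2 \iff F\t_1 \leq_{\Cat{C}} F\t_2$, reuse the argument of \Cref{lemma:orderingfunctors} to show that $\leq'$ is a uniform precongruence containing $\basicK$, check $\basicCB \subseteq \leq'$ case by case using the hypothesis \eqref{eq:monoidF} to transport the cartesian-bicategory axioms of $\Cat{C}$, and conclude via \Cref{lemma:smallest}. The point you flag about needing \eqref{eq:monoidF} at arbitrary polynomials rather than just sorts is handled the same way in the paper, namely by invoking the hypothesis directly, so there is nothing to add.
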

\begin{proof}
Define $\leq'$ on $\CatTrTape$ as $\t_1 \leq' \t_2$ iff $F\t_1 \leq_{\Cat{C}}F\t_2$.
By using exactly the same proof of Lemma \ref{lemma:orderingfunctors}, one can show that $\leq'$ is a uniform precongruence, namely that $\mathsf{upc}(\leq') \subseteq \leq'$, and that $\basicK\subseteq \leq'$.

Next, we observe that $\basicCB \subseteq \leq'$.
The proof proceeds by cases and  it relies on the fact that $\Cat{C}$ has the structure of a cartesian bicategory and that, thanks to \eqref{eq:monoidF}, $F$ preserves such structure. For instance, to prove that 
 $\{( \cocopier{P};\copier{P}    \, , \, \id{P\per P} \mid P \in Ob(\CatTrTape) \} \subseteq \leq' $, one shows that
\begin{align*}
 \cocopier{P};\copier{P}  \leq' \id{P\per P}& \Longleftrightarrow   F(\cocopier{P};\copier{P})  \leq_{\Cat{C}}  F\id{P\per P} \tag{def}\\
& \Longleftrightarrow   \cocopier{FP};\copier{FP} \leq_{\Cat{C}} \id{FP\per FP} \tag{\eqref{eq:monoidF} and $F$ rig functor} 
\end{align*}
and conclude by observing that the latter holds since $\Cat{C}$ is a cartesian bicategory.

\medskip

Now, since $\leq'$ is a uniform precongruence and since $\basicK \cup \basicCB\subseteq \leq'$ then, by Lemma \ref{lemma:smallest}, one has that  $\precongKC\subseteq \leq'$. This means that if 
$\t_1 \precongKC \t_2$ then $F\t_1 \leq_{\Cat{C}}F\t_2$.
\end{proof}

\begin{proof}[Proof of Theorem \ref{thm:KleeneCartesiantapesfree}]
Recall that, by Theorem \ref{thm:Kleenetapesfree}, $I \colon \sort \to \CatKTapeC$ is a free $\sort$-sesquistrict Kleene rig category generated by $(\sort, \sign+\Gamma)$. 
Here, $I$ is the obvious embedding mapping each sort $A\in \sort$ into the polynomial $A\in (\sort^*)^*$.
The intepretation of $(\sort, \sign+\Gamma)$ into $I \colon \sort \to \CatKTapeC$ consists of the functions
\[\alpha_{\sort}\colon \sort \to \sort \qquad \alpha_{\sign}\colon \sign \to Ar(\CatKTapeC) \quad \alpha_{\Gamma}\colon \Gamma \to Ar(\CatKTapeC) \]
defined as expected: $\alpha_{\sort}(A)\defeq A$, $\alpha_\sign(\sigma)=\tape{\sigma}$ and $\alpha_{\Gamma}(\gamma)=\tape{\gamma}$ for all $A\in \sort$, $\sigma\in \sign$ and $\gamma\in \Gamma$.

Now, observe that there is a Kleene rig morphism \[\eta \colon \CatKTapeC \to \KTCB\] that is the identity on object, i.e., $\eta(P)\defeq P$, and maps an arrows $\t\colon P \to Q$ into the $\sim_{\basicKC}$-equivalence classes $[\t]\colon P \to Q$.
We can thus fix as interpretation $(\tilde{\alpha_\sort}, \tilde{\alpha_\sign})$ of $(\sort, \sign)$ into $\KTCB$ as follows \[\text{(a) }\tilde{\alpha_\sort} \defeq \alpha_\sort \text{ and (b) }\tilde{\alpha_\sign}\defeq \alpha_\sign ; \eta\text{.}\]

Let $H\colon \Cat{S} \to \Cat{C}$ be a $\sort$-sesquistrict kc rig category and $(\alpha_\sort', \alpha_{\sign}')$ be an interpretation of $(\sort, \sign)$ into $\Cat{S} \to \Cat{C}$.
Recall that, by definition of interpretation $\alpha_\sort'\colon \sort \to \Cat{S}$ and $\alpha_{\sign}' \colon \sign \to Ar(\Cat{C})$.

Since $\Cat{C}$ is a cartesian bicategory, there are (co)monoids for each object in $Oc(\Cat{C})$. Thus one can define $\alpha_\Gamma'\colon \Gamma \to Ar(\Cat{C})$ as
\[
\alpha_{\Gamma}' (\cocopier{A})  \defeq   \cocopier{H{\alpha_{\sort}'(A)}}{} \qquad
\alpha_{\Gamma}' (\codischarger{A})  \defeq   \codischarger{H{\alpha_{\sort}'(A)}}{} \qquad
\alpha_{\Gamma}' (\copier{A})  \defeq   \copier{H{\alpha_{\sort}'(A)}}{} \qquad
\alpha_{\Gamma}' (\discharger{A})  \defeq   \discharger{H{\alpha_{\sort}'(A)}}{}
\]
We can thus take the copairing of $\alpha_{\sign}'$ and $\alpha_{\Gamma}'$, hereafter denoted as $[\alpha_{\sign}',\alpha_{\Gamma}'] \colon \sign+\Gamma \to Ar(\Cat{C})$ to have an interpretation $(\alpha_\sort', [\alpha_{\sign}',\alpha_{\Gamma}'])$ of $(\sort,\sign+\Gamma)$ into $H\colon \Cat{S} \to \Cat{C}$.

By freeness of  $\sort \to \CatKTapeC$, one has a unique sesquistrict Kleene rig functor
$(\alpha, \beta)$ with $\alpha \colon \sort \to \Cat{S}$ and $\beta\colon \CatKTapeC \to \Cat{C} $ such that \[\text{(c) }\alpha_\sort ;  \alpha = \alpha_\sort '  \qquad \text{(d) }\alpha_{\sign} ; \beta = \alpha_{\sign}' \qquad \text{(e) }\alpha_\Gamma; \beta =\alpha_{\Gamma}' \]
Since $(\alpha, \beta)$ is a $\sort$-sesquistrict functor from $I \colon \sort \to \CatKTapeC$ to $H\colon \Cat{S} \to \Cat{C}$ then, by definition,
(f) $\alpha ; H = I ; \beta$ and thus  $\alpha_{\sort}' ; H \stackrel{(c)}{=} \alpha_\sort ;  \alpha ; H \stackrel{(f)}{=} \alpha_\sort ; I;\beta$.
The latter, together with (e) and the definition of $\alpha_\sort$ and $\alpha_\Gamma$, gives us the following facts:
\begin{equation*}
\beta (\tape{\cocopier{A}})  =   \cocopier{\beta(A)} \qquad
\beta (\tape{\codischarger{A}})  =   \codischarger{\beta(A)} \qquad
\beta (\tape{\copier{A}})  =   \copier{{\beta(A)}} \qquad
\beta (\tape{\discharger{A}})  =   \discharger{{\beta(A)}} 
\end{equation*}
A simple inductive arguments, exploiting in the base case the above equivalences, and in the inductive case the inductive definitions in \eqref{eq:copierind} and \eqref{eq:cocopierind} and the coherences conditions in \eqref{eq:fbcbcoherence}, confirms that the followings hold for all $P\in Ob(\CatKTapeC)$.
 \begin{equation}\label{eq:monoidbeta}
\beta (\tape{\cocopier{P}})  =   \cocopier{\beta(P)} \qquad
\beta (\tape{\codischarger{P}})  =   \codischarger{\beta(P)} \qquad
\beta (\tape{\copier{P}})  =   \copier{{\beta(P)}} \qquad
\beta (\tape{\discharger{P}})  =   \discharger{{\beta(P)}} 
\end{equation}

Now define $\tilde{\beta} \colon \KTCB \to \Cat{C}$ as 
\[\tilde{\beta}(P)\defeq \beta(P) \text{ and } \tilde{\beta}(\,[f]\,) \defeq \beta(f) \]
for all objects $P$ and arrows $[f]$ of $\KTCB$. Observe that this is well defined thanks to Lemma \ref{lemma:orderingkc}: if $f\sim_\basicKC g$, namely $f \precongKC g$ and $g \precongKC f$, then $\beta(f)=\beta(g)$.
Moreover $\tilde{\beta}$ preserves the ordering $\precongKC$ of $\KTCB$, again thanks to Lemma \ref{lemma:orderingkc}. To conclude that $\tilde{\beta}$  is a morphism of kc rig categories, it only remains to show that it is a morphism of 
Cartesian bicategories but this is trivial by \eqref{eq:monoidbeta}.

Observe that, by definition, (g) $\eta; \tilde{\beta} = \beta$. Thus, $\tilde{\alpha_\sign} ;  \tilde{\beta} \stackrel{(b)}{=} \alpha_\sign ; \eta ;  \tilde{\beta} \stackrel{(g)}{=} \alpha_\sign ; \beta \stackrel{(d)}{=} \alpha_\sign '$.

Next take $\tilde{\alpha}\colon \sort \to \Cat{S}$ as $\alpha$. Thus $\tilde{\alpha_\sort} ; \tilde{\alpha}  \stackrel{(a)}{=} \alpha_\sort ; \alpha  \stackrel{(c)}{=}  \alpha_\sort '$.

Finally, the fact that $(\tilde{\alpha},\tilde{\beta})$ is a morphism of $\sort$-sesquistrict kc rig categories from $I \colon \sort \to \KTCB$ to $H \colon \sort \to \Cat{C}$, namely that $I;\tilde{\beta}=\tilde{\alpha};H$ follows  immediately from (f).
\end{proof}

\subsection{Proofs of other results}

\begin{proof}[Proof of \Cref{funct:sem}]
    Observe that there exists a kc rig morphism $\eta \colon \KTCB \to \KTCBI$ defined as the identity on objects and mapping tapes $\t\colon P \to Q$ into $\sim_{\basicR}$-equivalence classes $[\t]\colon P \to Q$.
    
    Let $\mathcal{I}=(\alpha_\sort, \alpha_\sign)$ be a model of $(\sort, \sign)$ in $\Cat{S} \to \Cat{C}$ and let $\alpha_\sign^\sharp\colon \KTCB \to \Cat{C}$ be the morphism induced by freeness of $\KTCB$. 
    Define $\tilde{\alpha}_\sign^\sharp \colon \KTCBI \to \Cat{C}$ as $\tilde{\alpha}_\sign^\sharp(P)\defeq\alpha_\sign(P)$ for all objects $P$ and  $\tilde{\alpha}_\sign^\sharp([\t])\defeq\alpha_\sign(\t)$  for $\sim_{\basicR}$-equivalence classes $[\t]\colon P \to Q$. 
    Since $(\alpha_\sort, \alpha_\sign)$, then $\alpha_\sign^\sharp$ preserves $\precongB$ and thus $\tilde{\alpha}_\sign^\sharp$ is well defined. Checking that it is a kc rig morphism it is immediate from the fact that $\alpha_\sign^\sharp\colon \KTCB \to \Cat{C}$ is a kc rig morphism.
    
    Viceversa, from a morphism $\beta \colon \KTCBI \to \Cat{C}$ one can construct an interpretation  $\mathcal{I}$ of $(\sort, \sign)$ in $\Cat{S} \to \Cat{C}$ by precomposing first with $\eta$ and then with the trivial interpretation of  $(\sort, \sign)$ in $\sort \to \KTCB$. The unique sesquistrict kc rig morphism induced by $\mathcal{I}$ is exactly $\eta; \beta$. Since $\eta;\beta$ factors through $\KTCBI$, it obviously preserves $\precongB$ and thus $\mathcal{I}$ is a model of $(\sign, \basicR)$.
    
    To conclude that the correspondene is bijective, it is enough to observe that $\alpha_\sign^\sharp = \eta ; \tilde{\alpha}_\sign^\sharp$
\end{proof}

\end{document}